\newcommand{\zeq}{\setcounter{equation}{0}}
\newtheorem{teo}{Theorem}[chapter]
\newtheorem{lem}{Lemma}[chapter]
\newtheorem{pro}{Proposition}[chapter]
\newtheorem{defi}{Definition}[chapter]
\newtheorem{cor}{Corollary}[chapter]
\newtheorem{conj}{Conjecture}[chapter]
\newcommand{\card}[1]{\left|#1\right|}
\begin{document}

\def\ind#1{\1_{\{#1\}}}
\def\1{\rlap{\mbox{\small\rm 1}}\kern.15em 1}
\def\trp{\mathbb{T}}
\def\bydef{:=}
\def\qed{ \square}
\def\proof{\noindent{\bf Proof. }}


\let\a=\alpha \let\b=\beta \let\c=\chi \let\d=\delta \let\e=\varepsilon
\let\f=\varphi \let\g=\gamma \let\h=\eta    \let\k=\kappa \let\l=\lambda
\let\m=\mu \let\n=\nu \let\om=\omega    \let\p=\pi \let\ph=\varphi
\let\r=\rho \let\s=\sigma \let\t=\tau \let\th=\vartheta
\let\y=\upsilon \let\x=\xi \let\z=\zeta
\let\D=\Delta \let\F=\Phi \let\G=\Gamma \let\L=\Lambda \let\Th=\Theta
\let\O=\Omega \let\P=\Pi \let\Ps=\Psi \let\Si=\Sigma \let\X=\Xi
\let\Y=\Upsilon

\global\newcount\numsec\global\newcount\numfor
\gdef\profonditastruttura{\dp\strutbox}
\def\senondefinito#1{\expandafter\ifx\csname#1\endcsname\relax}
\def\SIA #1,#2,#3 {\senondefinito{#1#2}
\expandafter\xdef\csname #1#2\endcsname{#3} \else \write16{???? il
simbolo #2 e' gia' stato definito !!!!} \fi}
\def\etichetta(#1){(\veroparagrafo.\veraformula)
\SIA e,#1,(\veroparagrafo.\veraformula)
 \global\advance\numfor by 1
 \write16{ EQ \equ(#1) ha simbolo #1 }}
\def\etichettaa(#1){(A\veroparagrafo.\veraformula)
 \SIA e,#1,(A\veroparagrafo.\veraformula)
 \global\advance\numfor by 1\write16{ EQ \equ(#1) ha simbolo #1 }}
\def\BOZZA{\def\alato(##1){
 {\vtop to \profonditastruttura{\baselineskip
 \profonditastruttura\vss
 \rlap{\kern-\hsize\kern-1.2truecm{$\scriptstyle##1$}}}}}}
\def\alato(#1){}
\def\veroparagrafo{\number\numsec}\def\veraformula{\number\numfor}
\def\Eq(#1){\eqno{\etichetta(#1)\alato(#1)}}
\def\eq(#1){\etichetta(#1)\alato(#1)}
\def\Eqa(#1){\eqno{\etichettaa(#1)\alato(#1)}}
\def\eqa(#1){\etichettaa(#1)\alato(#1)}
\def\equ(#1){\senondefinito{e#1}$\clubsuit$#1\else\csname e#1\endcsname\fi}
\let\EQ=\Eq


\def\sqr#1#2{{\vcenter{\vbox{\hrule height.#2pt
\hbox{\vrule width.#2pt height#1pt \kern#1pt \vrule
width.#2pt}\hrule height.#2pt}}}}
\def\square{\mathchoice\sqr34\sqr34\sqr{2.1}3\sqr{1.5}3}

\let\nin\noindent
\def\fiat{{}}
\def\pagina{{\vfill\eject}} \def\\{\noindent}
\def\bra#1{{\langle#1|}} \def\ket#1{{|#1\rangle}}
\def\media#1{{\langle#1\rangle}} \def\ie{\hbox{\it i.e.\ }}
\let\ii=\int \let\ig=\int \let\io=\infty \let\i=\infty

\let\dpr=\partial \def\V#1{\vec#1} \def\Dp{\V\dpr}
\def\oo{{\V\om}} \def\OO{{\V\O}} \def\uu{{\V\y}} \def\xxi{{\V \xi}}
\def\xx{{\V x}} \def\yy{y} \def\kk{{\bf k}} \def\zz{{\V z}}
\def\rr{{\V r}} \def\zz{{\V z}} \def\ww{{\V w}}
\def\Fi{{\V \phi}}

\let\Rar=\Rightarrow
\let\rar=\rightarrow
\let\LRar=\Longrightarrow

\def\lh{\hat\l} \def\vh{\hat v}

\def\ul#1{\underline#1}
\def\ol#1{\overline#1}

\def\ps#1#2{\psi^{#1}_{#2}} \def\pst#1#2{\tilde\psi^{#1}_{#2}}
\def\pb{\bar\psi} \def\pt{\tilde\psi}

\def\E#1{{\cal E}_{(#1)}} \def\ET#1{{\cal E}^T_{(#1)}}
\def\LL{{\cal L}}\def\RR{{\cal R}}\def\SS{{\cal S}} \def\NN{{\cal N}}
\def\HH{{\cal H}}\def\GG{{\cal G}}\def\PP{{\cal P}} \def\AA{{\cal A}}
\def\BB{{\cal B}}\def\FF{{\cal F}}
\def\QQ{{\mathcal Q}}

\def\tende#1{\vtop{\ialign{##\crcr\rightarrowfill\crcr
              \noalign{\kern-1pt\nointerlineskip}
              \hskip3.pt${\scriptstyle #1}$\hskip3.pt\crcr}}}
\def\otto{{\kern-1.truept\leftarrow\kern-5.truept\to\kern-1.truept}}
\def\arm{{}}
\def\pp{{p}}
\def\xx{{x}}
\def\vv{{\bm v}}
\def\Z{\mathbb{Z}}
\def\I{{\rm I}}
\def\E{{\rm E}}

\def\La{\Lambda}
\def\TT{\mathcal{T}}
\def\bea{\begin{eqnarray}}
\def\eea{\end{eqnarray}}
\def\<{\langle}
\def\>{\rangle}
\def\0{\emptyset}
\def\vv{\vskip.2cm}
\def\N{{}\mathbb{N}}

\def\begn{\begin{aligned}}
\def\egn{\end{aligned}}

\title{Cluster expansion methods in rigorous statistical mechanics}
\author{Aldo Procacci}
\maketitle
\tableofcontents


\numsec=1\numfor=1
\part{Classical Continuous Systems}
\chapter{Ensembles in Continuous systems}
\vglue1.0cm

\section{The Hamiltonian and the equations of motion}
\vglue.2cm
\\
In this first part of the book we will deal with a system made by a large number $N$ of continuous
particles enclosed in a  box $\L\subset \mathbb{R}^d$ (we will
assume $\L$ to be cube  and denote $|\L|$ its volume)
performing a motion according to the laws of classical mechanics.
``Large number of particles" in physics lingo means typically $N\approx 10^{23}$.

\\We will restrict our discussion to  systems composed of identical particles with
no internal structure, i.e. just ``point" particles with a given
mass $m$. The position of the $i^{th}$ particle in the box $\L$ at
a given time $t$ is given  by a $d$-component coordinate vector, denoted by ${x_i}=x_i(t)$
with respect  to some system of orthogonal axis.

\\The momentum of the $i^{th}$ particle at time $t$ is also given by
a $d$-component vector denoted by ${p_i}=p_i(t)$ which is directly related to the velocity of the
particles, i.e. if $m$ is the mass of the particle, then $p_i
=m(dx_i/dt)$.

\\In principle, the laws of mechanics permit to know the evolution of such a system in
time, i.e. by these laws one should be able to determine which positions
$x_i=x_i(t)$  and momenta $p_i=p_i(t)$ the particles in the
system will have in the future and had in the past, provided one
knows the position of the particles  $x^0_i=x_i(t_0)$ and the
momenta of $p^0_i=p_i(t_0)$ at a given time $t_0$.

\\As a matter of fact, the time evolution of such system is described by a real valued function
$H(p_1, \dots ,p_N,x_1 ,\dots ,x_N)$
of particle positions and momenta (hence a function of $2dN$ real
variables) called the {\it Hamiltonian}.  In the case of isolated
systems, this function $H$ is assumed to have the form
$$
H(p_1, \dots ,p_N,x_1 ,\dots ,x_N)= \sum_{i=1}^N
{p_i^2\over 2m}+ U(x_1,\dots x_N)\Eq(1.1)
$$
The term $\sum_{i=1}^N {p_i^2\over 2m}$ is called the
kinetic energy of the system, while the term $ U(x_1,\dots
x_N)$ is the potential energy. Since we are assuming
that particles have to be enclosed in a box $\L$ we still have to
restrict $x_i\in \L$ for all $i=1,2,\dots , N$, while no
restriction is imposed on $p_i$, i.e. the momentum (and hence the velocity) of particles
can be arbitrarily large.

Once the Hamiltonian
\index{Hamiltonian} of the
system is given, one could in principle solve the
system of $2dN$ differential equations
$$
\begin{aligned}
{dx_i\over dt} & = {\partial H\over \partial  p_i}\\
{dp_i\over dt}& = -{\partial H\over \partial
x_i}
\end{aligned}~~~~~~~~~i=1,\dots, N\Eq(1.2)
$$
where ${\partial / \partial  x_i}$ and ${\partial /\partial  p_i}$ are  $d$-dimensional gradients.

\\This is  a system of $2dN$ first order differential equations.
The solution are $2dN$ functions $x_i(t)$, $p_i(t)$, $i=1,\dots,N$, given the positions $x_i(t_0)$ and momenta  $p_i(t_0)$ at some
initial time $t=t_0$.

\\It is convenient to introduce a $2dN$-dimensional space $\G_N(\L)$,
called the phase space \index{phase space} of the system (with $N$ particles) whose
points are determined by the coordinates $(\bm q,\bm p)$ with
$\bm q=(x_1,\dots ,x_N)$ and $\bm p=(p_1,\dots ,p_N)$ ($\bm q$ and $\bm p$
are both $dN$ vectors!) with the further condition that $x_i\in
\L$ for all $i=1,2,\dots ,N$. A point $(\bm q,\bm p)$ in the phase space
of the system is called a {\it microstate} of the system. With
these notations the evolution of the system during time can be
interpreted as the  evolution of  a point in the ``plane" $\bm q,\bm p$
(actually a $2dN$ dimension space, the phase space $\G_N(\L)$).

\\We finally want to remark that, by  \equ(1.1) (isolated system),
the value of the Hamiltonian $H$ is constant during the time
evolution of the system governed  by the Hamilton equations \equ(1.2). Namely $H(\bm q(t),\bm p(t))=
H(\bm q(0),\bm p(0))=E$. To see this, just calculate the total derivative of $H$ respect
to the time using equations \equ(1.2). This constant $E$ of motion is
called {\it energy} of the system. Thus the trajectory of the point
$\bm q, \bm p$ in the phase space $\G_N(\L)$ occurs in the surface $H(\bm p,\bm q)=E$.
\vskip.3cm
\section{ Gibbsian ensembles}\index{ensembles}
\vskip.1cm
\\It is substantially meaningless to look for the solution of  system \equ(1.2).

\\First, there is a technical reason.  Namely, the system contains an enormous
number of equations ($\approx 10^{23}$) which in general are
coupled (depending on the structure of $U$), so that it  is practically an impossible task
to find the
solution.

\\However, even supposing that some very powerful entity would  give us the solution,
this extremely detailed  description (a microscopic description)
would not be useful to describe the macroscopic properties of such
a system. Macroscopic properties which appear to us as the laws of
thermodynamic are due presumably by some mean effects of such
large systems and can in general be described in terms of  very
few parameters, e.g. temperature, volume, pressure, etc.. Hence we
have no means and also no desire to know the microscopic state of
the system at every instant (i.e.  to know the functions $\bm q(t),
\bm p(t)$). We thus shall adopt a statistical point of view in order
to describe the system.

\\We know  a priori some macroscopic properties of the system, e.g.
an isolated system occupies the volume $\L$, has $N$ particles and
has a fixed energy $E$.

\\We further know that macroscopic systems, if not perturbated from the exterior,
tend to stay in a situation of macroscopic (or thermodynamic)
equilibrium (i.e. a ``static"  situation), in which the values of
some thermodynamic parameters (e.g. pressure, temperature, etc.)
are well defined and fixed and do not change in time. Of course, in a system at the
thermodynamic equilibrium, the situation at the microscopic level
is desperately far form a static one. Particles in a gas at the
equilibrium do in general complicated and crazy  motions all the
time, nevertheless nothing seems to happen as time goes by at the
macroscopic level. So the thermodynamic equilibrium of a system
must be the effect of some mean behavior at the microscopic level.
This ``static" mean macroscopic behavior of systems composed by a
large number of particles must be produced in some way by the
microscopic interactions between particles and by the law of
mechanics.

\\Adopting the {statistical mechanics} point of view to describe
a macroscopic system at equilibrium means that we renounce to
understand how and why a system reach the thermodynamic
equilibrium starting from the microscopic level, and we just
assume that, at the thermodynamic equilibrium (characterized by
some thermodynamic parameters), the system could be found in {\it
any} microscopic state within a certain suitable set of
microstates  compatible with the fixed thermodynamic parameters. In other words, over long periods, the time spent by a system in some region of the phase space of microstates with
the same energy is proportional to the volume of this region, i.e., that all accessible microstates are equiprobable over a long period of time.
This is the so called {\it ergodic hypothesis}\index{ergodic hypothesis}.  Of course, in
order to have some hope that such a point of view will work, we
need to treat really ``macroscopic systems". So values such $N$ and
$V$  must be always thought as very large values (i.e. close to
$\infty$).

\\The statistical description of the macroscopic properties of the
system at equilibrium  (and in particular the laws of
thermodynamic) is done in two steps. \vskip.3cm
\\
{Step 1: \it fixing the Gibbsian ensemble (or the space of
configurations)}. We choose the phase space $\G_e$  and we assume that the system can be found  in any microstate
$(\bm q,\bm p)\in \G_e$. This set  $\G_e$ has to interpreted as  the set
of all microstates accessible by the system and it is called the
{\it Gibbsian ensemble} or {\it the space of configurations} of
the system. We will see later that several
choices are  possible for $\G_e$. We will then  think not on a single
system, but in an infinite number of mental copies of the same system, one copy for each element of $\G_e$.

\vskip.2cm

\\
{Step 2: \it fixing the Gibbs measure \index{Gibbs measure} in the Gibbisan ensemble}.

\\
We choose a function $\r(\bm p,\bm q)$ in $\G_e$ which will represent the
probability density in the Gibbsian ensemble. Namely, $\r(\bm p,\bm q)$ is
a function such that
$$
\int_{\G_e}\r(\bm p,\bm q)d\bm p d\bm q=1
$$
and
$d\m(\bm p,\bm q)=\r(\bm p,\bm q)d\bm pd\bm q$ represents the probability to find the
system in a microstate (or in the configuration) contained in an
infinitesimal volume $d\bm pd\bm q$ around the point $(\bm p,\bm q)\in \G_e$ where
$d\bm p~d\bm q$ is the usual Lebesgue measure in $\mathbb{R}^{2dN}$. The
\index{Gibbs measure} measure $\m(\bm p,\bm q)$ defined in $\G_e$ is called the {\it Gibbs
measure} of the system.

\vskip.4cm
\\Once a Gibbsian ensemble and a Gibbs measure are established, one can
begin to do statistic in order to describe the macroscopic state
of the system. When we look at a macroscopic system described via
certain Gibbs ensemble we do not know in which microstate the
system is at a given instant. All we know is that its microscopic
state must  be one of the microstates of the space configuration
$\G_e$ with probability density given by the Gibbs measure $d\m$.

\\For example, suppose
that $f(\bm p,\bm q)$ is a measurable function respect to the Gibbs
measure $d\m$, such as the energy, the kinetic energy per
particle, potential energy etc. Then we can calculate its mean
value in the Gibbsian ensemble that we have chosen by the formula:
$$
\<f\>~=~ \int  f(\bm p,\bm q)\r(\bm p,\bm q)d\bm p d\bm q
$$

\\We also recall the concept of mean relative square fluctuation of $f$
(a.k.a. standard deviation) denoted by
$\s_f$. This quantity measures  how spread is the probability
distribution of $f(\bm p,\bm q)$ around its mean value. It is  defined as
$$
\s_f~~=~~{\<(f-\<f\>)^2\>\over \<f\>^2}~~=~~{\<f^2\>-\<f\>^2\over
\<f\>^2}\Eq(1.4)
$$

\section{The Micro-Canonical ensemble}

\\
There are different possible choices for the ensembles, depending
on the different macroscopic situation of the system.

\\We start defining the micro-Canonical ensemble
\index{ensembles!Micro canonical}which is used to describe perfectly isolated
systems. Hence we suppose that our system is totally isolated from the
outside,  the $N$ particles are constrained to stay in the box
$\L$, and they do not exchange energy with the outside, so that
the system has a fixed energy $E$, occupies a fixed volume $|\L|$
and has a fixed number of particles $N$.

\\Thus, for such a system, we  can naturally say that the space of configuration $\G_{mc}$
is the set of points $\bm p,\bm q$ (with $\bm p\in \mathbb{R}^{dN}$ and $\bm q\in \L^{N}$)  with energy between
a given value $E$ and $E+\D E$ (where $\D E$ can  be interpreted
as the experimental error in the measure of the energy $E$). We have
$$
\G_{mc}(E,\L,N)~=~ \{ (\bm p,\bm q):\bm p\in \mathbb{R}^{dN}, ~ \bm q\in \L^{N}~{\rm and }~E< H(\bm p,\bm q)< E+\D
E\}
$$
We now choose the probability measure  in such way that any microstate
in the set of configurations above is equally probable, i.e. there
is no reason to assign different probabilities  to different
microstates.  This quite drastic hypothesis is  the so-called {\it
postulate of equal a priori distribution}, which is just a different formulation of the
ergodic hypothesis. \index{ergodic hypothesis} Hence
$$
\r(\bm p,\bm q)~~=~~\begin{cases}~ [\Psi_\L(E,N)]^{-1}& ~{\rm if}~ (\bm q,\bm p)\in \G_{mc}\\ 0 &
~{\rm otherwise}
\end{cases}
$$
where\index{partition function}
$$
\Psi_\L(E,N)~=~\int_{E< H(\bm p,\bm q)< E+\D E} d\bm p \,d\bm q\Eq(1.5)
$$is the normalization constant, which in this case coincides with  the $2dN$-dimensional ``volume" in the phase space occupied by
the space of configuration of the micro-canonical ensemble.
$\Psi_\L(E,N)$ is generally called the {\it partition function} of
the system in the Micro-Canonical ensemble.
\index{partition function!microcanonical}
\\The link between the Micro-Canonical ensemble and the thermodynamic is obtained via the
definition of the thermodynamic \index{entropy} entropy  of the system by
$$
S_\L(E,N)~~=~~k\ln\left[{1\over |\d|} \Psi_\L(E,N)\right]\Eq(1.5b)
$$
where $k$ is the Boltzmann's constant and $|\d|$  is  the
volume of some elementary phase cell $\d$  in  the phase space, so that
the pure number ${|\d|}^{-1}\Psi_\L(E,N)$ is the number of such
cells in the configuration space.

\\It may seem that the value of this constant $|\d|$
can be somewhat arbitrary, since it depends on our measure
instruments, and it could be done as small as we please by
improving our measure techniques. But the experimental measures tell us that this
constant must be fixed at the value $h^{dN}$ where $h$ is the Plank
constant.\index{Plank constant} So hereafter we will assume, unless differently
specificated,  that $|\d|$ is set at the value of the Plank constant.
It is important to stress that the presence of this constant in
the definition of the entropy has a very deep physical meaning.
Actually, it is a first clue of the quantum mechanics nature of
particle systems: things go as if the position $\bm p,\bm q$ of a
micro-state in the phase space could not be known exactly and one
can just say that the micro-state is in a small cube $d \bm p d \bm q$
centered at $\bm p,\bm q$ of the space phase with volume $h^{dN}$. This is
actually the Heisenberg indetermination principle.

\\The definition \equ(1.5b) gives a  beautiful probabilistic interpretation of the
second law of thermodynamic. A macroscopic system at the
equilibrium  will tend to stay in a state of maximal entropy, namely,
by \equ(1.5b), in the most probable macroscopic
state, i.e. a macroscopic state with thermodynamics parameters
fixed in such way that this state corresponds to the largest
number of microstates.  Namely, at equilibrium, the quantity
$\Psi_\L(E,N)$ should expect to reach a maximum value. Thus the
entropy (which by definition is just the logarithm of the number
of micro-states of a macro-state) is also expected to be maximum at
equilibrium. So the second law of thermodynamics stating that the
entropy of an isolated system always increases means in term of
statistical mechanics that the systems tends to evolve to
macrostates which are more probable, i.e. those with maximum
number of microstates.

\\It is also interesting to check that entropy  \equ(1.5b) is an ``extensive"
quantity. Namely, if the macroscopic system is composed by two
macroscopic subsystems whose entropies are, respectively $S_1$ and
$S_2$, the entropy of the total system must be $S_1+S_2$.

\section{The entropy is additive. An euristic discussion}

\\Suppose thus to consider a system made with two subsystems, one living in a phase space
$\G_1$ with coordinates $\bm p_1,\bm q_1$ occupying the volume $\L_1$ and
described  by the Hamiltonian $H_1(\bm p_1,\bm q_1)$ and the other in a
phase space $\G_2$ with coordinates $\bm p_2,\bm q_2$, occupying the volume
$\L_2$ and  described  by the Hamiltonian $H_2(\bm p_2,\bm q_2)$. We also
suppose that systems are isolated from each other. Consider first
the micro-canonical ensemble for each subsystem taken alone. The
energy of the first system will stay in an interval say $(E_1,
E_1+\D )$ while the second system will have an energy in $(E_2,
E_2+\D )$. The entropies of the subsystems will be respectively
$S(E_1)=k\ln \Psi_1(E_1)$ and $S(E_2)=k\ln \Psi_2(E_2)$, where
$\Psi_1(E_1)$ and $\Psi_2(E_2)$ are the volumes occupied by the two
ensembles in their respective phase spaces $\G_1$ and $\G_2$.
Consider now the micro-canonical ensemble of the total system made
by the two subsystems, The composite system lives in a phase space
$\G_1\times \G_2$ with coordinates $p_1,q_1, p_2,q_2$ occupying
the volume $V_1+V_2$ and it is described  by the Hamiltonian
$H_1(p_1,q_1)+H_2(p_2,q_2)$ (system are supposed isolated one from
each other). Let the total energy be in the interval say $(E,
E+2\D )$  ($\D\ll E$). This ensemble contains all the micro-states
of the composite system such that:

\\ a) $N_1$ particles with momenta and coordinates $p_1,q_1$ are in the volume $V_1$

\\ b)  $N_2$ particles with momenta and coordinates $p_2,q_2$ are in the volume $V_2$

\\ c) The energy $E_1$ and $E_2$ of the subsystems have values satisfying the condition
$$
E<E_1+E_2<E+2\D\Eq(1.13)
$$
\\We want to calculate the partition function $\Psi(E)$ of the composite system.
Clearly $\Psi_1(E_1)\Psi_2(E_2)$ is  the volume in the composite phase space $\G$ with
coordinate $(\bm p_1,\bm p_2, \bm q_1, \bm q_2)$ that corresponds to conditions a)
and b) with first system at energy $E_1$ and second system at
energy $E_2$ such that $E_1+E_2\in (E,E+2\D)$. Then
$$
\Psi(E)~=~\sum_{E_1,E_2\atop E<E_1+E_2<E+2\D}\Psi_1(E_1)\Psi_2(E_2)
$$
Since $E_1$ and $E_2$ are possible values of $H(\bm p_1,\bm q_1)$ and
$H(\bm p_2,\bm q_2)$ , suppose that $H(p_1,q_1)$ and $H(\bm p_2,\bm q_2)$ are
bounded below (as it will be always the case, see later) and for
simplicity let the joint lower bound be equal to $0$. Hence $E_1$ and $E_2$
both varies in the interval $[0,E]$. Suppose also that $E_1$ and
$E_2$ take discrete values $E_i=0,\D, 2\D,...$ so that in the interval
$(0,E)$ there are $E/\D$ of such intervals. Then
$$
\Psi(E)~=~\sum_{i=1}^{E/\D}\Psi_1(E_i)\Psi_2(E-E_i)\Eq(1.14)
$$
The entropy of the total system of $N=N_1+N_2$ particles, volume
$\L~=~\L_1\cup \L_2$ and energy $E$ is given by
$$
S_\L(E,N)~~=~~k\ln\left[\sum_{i=1}^{E/\D}\Psi_1(E_i)\Psi_2(E-E_i)\right]
$$
As subsystems are supposed macroscopic  ($N_1\to\infty$ and
$N_2\to\infty$) is easy to see that a single term in the sum
\equ(1.14) will dominate. Sum \equ(1.14) is a sum of positive
terms, let the largest of such terms be $\Psi_1(\bar E_1)\Psi_2(\bar
E_2)$ with $\bar E_1+\bar E_2~=~E$. Then we have the obvious
inequalities
$$
\Psi_1(\bar E_1)\Psi_2(\bar E_2)~\le~ \Psi(E)~\le~ {E\over \D}\Psi_1(\bar
E_1)\Psi_2(\bar E_2)
$$
or
$$
k\ln\left[\Psi_1(\bar E_1)\Psi_2(\bar E_2)\right] ~\le ~S_\L(E,N) ~\le~
k\ln\left[\Psi_1(\bar E_1)\Psi_2(\bar E_2)\right]+k\ln(E/\D)\Eq(1.16)
$$
We expect, as $N_1\to\infty$ and $N_2\to\infty$, that $\Psi_1\sim
C^{N_1}$ and  $\Psi_2\sim C^{N_2}$, thus  $\ln \Psi_1\propto N_1$ and
$\ln Z_2\propto N_2$. and also $E\sim N_1+N_2$. Hence factor
$\ln(E/\D)$ goes like $\ln N$ and can be neglected. Namely by this
discussion (just a counting argument) we get
$$
S_\L(E,N)~=~S_{\L_1}(\bar E_1,  N_1)+ S_{\L_2}(\bar E_2,  N_2)+O(\ln
N)\Eq(1.200)
$$
In other words the entropy is extensive, modulo terms of order
$\ln N$. Note that \equ(1.200) also means that the two subsystems
has a definite values $\bar E_1$ and $\bar E_2$ for the energy.
Namely $\bar E_1$ and $\bar E_2$ are the values that maximize the
number
$$
\Psi_1(E_1)\Psi_2(E_2)
$$
under the condition $E_1+E_2~=~E$. Using Lagrange multiplier is easy
to check that
$$
\left.{\partial \ln \Psi_1(E_1)\over\partial E_1}\right|_{E_1=\bar
E_1}~=~ \left.{\partial \ln \Psi_2(E_2)\over\partial
E_2}\right|_{E_2=\bar E_2}
$$
or
$$
\left.{\partial S(E_1)\over\partial E_1}\right|_{E_1=\bar E_1}~=~
\left.{\partial S(E_2)\over\partial E_2}\right|_{E_2=\bar E_2}
$$
Since thermodynamics tells us that ${\partial S(E,V)\over\partial
E}~=~{1\over T}$, we conclude that the two subsystems choose energy
$\bar E_1$ and $\bar E_2$ in such way to have the same
temperature. Thus the temperature in a macroscopic system can be
seen as the parameter governing the equilibrium between one part
of the system and the other.

\section{Entropy of the ideal gas}

\index{ideal gas}
If one can calculate the partition function of the Micro-canonical
ensemble, then it is possible to derive the thermodynamic
properties of the system. The Micro-Canonical ensemble is
difficult to be treated mathematically. As a matter of fact, a
direct calculation of the integral in r.h.s. of  \equ(1.5) is
generally very difficult, since involves integration over
complicated surfaces in high dimensions. According to elementary
calculus in $\mathbb{R}^n$ we can use a volume integral to
calculate the integral \equ(1.5). Volume integrals are easier to
deal with than surface integrals. Let $\om_\L(E,N)$ be the volume of
the phase space surrounded by the surface $H(\bm p,\bm q)=E$ (with of
course the further condition that particles are constrained to
stay in $\L$). Then
$$
\om_\L(E,N)~=~ \int_{H(\bm p,\bm q)\le E}d\bm p~d\bm q\Eq(1.6)
$$
Then, for small $\D E$
$$
\Psi_\L(E,N)~=~\om_\L(E+\D E,N)-\om_\L(E,N)\approx {\partial\om_\L(E,N)\over\partial
E}\D E \Eq(1.7)
$$
Let for example calculate the Micro-Canonical partition function
of an ideal gas, i.e. a gas of non interacting particles. with
Hamiltonian
$$
H~=~\sum_{i=1}^N{ p_i^2\over 2m}\Eq(1.8)
$$
We thus calculate $\om(N,\L,E)$ when $H(\bm  p,\bm q)$ is given by
\equ(1.8). In this case it is absolutely elementary to calculate
the integral $\int_\L dq$ which gives just $|\L|$ i.e. the volume
occupied by $\L$, and  therefore we get
$$
\begin{aligned}
\om_\L(E,N) & = \int_{H(\bm p,\bm q)\le E}d\bm p~d\bm q\\
& =|\L|^{N} \int_{\sum_{i=1}^N{
p_i^2\over 2m}\le E}dp_1\dots dp_N\\
& =|\L|^{N}\int_{\sum_{i=1}^N{p_i^2}\le 2mE}dp_1\dots dp_N
\end{aligned}
$$
the last integral is just the volume of a $dN$ dimensional sphere
of radius $\sqrt{2mE}$. Let us thus face this geometric problem.
The volume of a sphere of given radius $R$ in a $n$ dimensional
space is the integral
$$
V_n(R)~=~\int_{\sum_{i=1}^nx_1^2\le R^2}dx_1 \dots dx_n~=~C_n R^n
$$
In order to find $C_n$ consider the following integral
$$
\begin{aligned}
\int_{\mathbb{R}^n}e^{-(x_1^2+\dots +x_n^2)}dx_1 \dots dx_n & =
\int_{-\infty}^{+\infty}dx_1\dots \int_{-\infty}^{+\infty}dx_n
e^{-(x_1^2+\dots +x_n^2)}\\
&=\left(\int_{-\infty}^{+\infty}dx
e^{-x^2}\right)^n\\
&=\p^{n\over2}
\end{aligned}
$$
on the other hand, noting that the integrand above depends only on
$r=(x_1^2+\dots x_n^2)^{1/2}$, by a transformation to polar
coordinates in $n$ dimensions we can express the volume element
$dx_1\dots dx_n$ by spherical shells $dV_n(r)=dC_n
r^n=nC_nr^{n-1}dr$. Then integral above can also be calculated as

$$
\begin{aligned}
\int_{\mathbb{R}^n}e^{-(x_1^2+\dots +x_n^2)}dx_1 \dots dx_n & =
\int_0^\infty e^{-r^2}dV_n(r)\\\\
& =nC_n\int_0^\infty r^{n-1}e^{-r^2}dr\\\\
&={nC_n\over 2}\int_0^\infty x^{n/2-1}e^{-x}dx
\end{aligned}
\Eq(1.9)
$$
Recall now the definition of the gamma  function: for any $z>0$
$$
\G(z)~~=~~\int_0^\infty x^{z-1}e^{-x}dx
$$
Among properties of the gamma function we recall
$$
\G(n)~=~(n-1)!~~~~~~~~~~~~~~~~~~{\rm n~  positive~  integer}
$$
and
$$
z\G(z)~=~\G(z+1)~~~~~~~~~~~~~~ {\rm z\in \mathbb{R}^+}
$$
So Gamma function  is an extension of factorial in the whole
positive real axis.

\\Equation \equ(1.9) thus becomes
$$
{nC_n\over 2}\G({n/2})~=~ \pi^{n/2}
$$
and consequently of the volume $V_n(R)$ of a sphere on radius $R$
in $n$ dimensions.
$$
C_n~=~{\p^{n/2}\over \G({n\over 2}+1)},~~~~~~~~~~
V_n(R)~=~{\p^{n/2}\over \G({n\over 2}+1)}R^n
$$
Hence we get
$$
\om_\L(E,N)~=~{\p^{3N/2}\over {3N\over 2}\G({3N\over
2})}(2mE)^{3N\over 2}|\L|^N
$$
and, by \equ(1.7)
$$
\Psi_\L(E,N)~=~{\partial\om_\L(E,N)\over\partial E}\D E~=~\D E ~|\L|^N~
{\p^{3N/2}\over \G({3N\over 2})}(2m)^{3N/2}E^{{3N\over 2}-1}
$$
So, recalling definition \equ(1.5b), the entropy of an ideal gas
is given by \index{entropy!entropy of the ideal gas}
$$
S_\L(E,N)~=~k\ln\left[{\D E} ~|\L|^N~ {\p^{3N/2}\over
\G({3N\over 2})}(2m/h^2)^{3N/2}E^{{{3N\over 2}-1}}\right]
$$
Since $N$ is a very big number, we may write, for
$N\gg1$,
$$
E^{{3N\over 2}-1}\approx E^{3N\over 2},~~~~
\ln\left[\G\left({3N\over 2}\right)\right]\approx {3N\over 2}
\left(\ln  (3N/2)-1\right),~~~~ k\ln(\D E)=O(1)\approx 0
$$we used  the Stirling approximation for the
factorial: $n!\approx {n^n\over e^n}$ for large $n$) thus
$$
S_\L(E,N)~=~kN\left\{{3\over 2}+\ln\left[ ~|\L|~ \left({4m\pi E\over
3Nh^2}\right)^{3/2}\right]\right\}\Eq(1.10)
$$
This equation leads to the correct equation of state for a perfect
gas. In fact, according to the laws of thermodynamics, the inverse temperature of the system
is the derivative of the entropy respect to the energy, and the
pressure is the derivative of the entropy respect to the volume
$V=|\L|$ times the temperature, i.e.
$$
{1\over T}~=~{\partial S\over \partial E}~=~{3\over 2}{Nk\over
E}~~~~~~{\rm or} ~~~~~~ E~=~{3N\over 2}k T
$$
$$
{p\over T}~=~{\partial S\over \partial V}~=~{Nk\over V}~~~~~~{\rm or}
~~~~~~ P V~=~N k T\Eq(1.11e)
$$
Nevertheless \equ(1.10) cannot be the correct expression for the
entropy of an ideal gas. One can just observe that l.h.s. of
\equ(1.10) is not a purely extensive quantity, as the
thermodynamic entropy should be. There is some deep mistake in the
calculation of the entropy. Indeed, by accepting that   \equ(1.10) is the the correct expression of the thermodynamic entropy, one obtains the called Gibbs paradox:
an ideal gas confined in a box  consisting initially of two adjacent
volumes $V_A$ and $V_B$  separated by a  removable wall at the thermodynamic equilibrium has entropy $S$  (calulated via \equ(1.10)) and when the wall  is removed the entropy $S$ increases.

\section{ The Gibbs paradox}
\index{Gibbs paradox}
\\In this section we set $V=|\L|$. Consider thus the entropy of an ideal gas as a function
of the temperature $T$, the volume  $V$ and the numer of particles $N$. By \equ(1.10) and \equ(1.11e) we get
$$
S(T,V,N)~=~kN\left\{{3\over 2}+\ln\left[ ~V~ \left({2m\pi kT\over
h^2}\right)^{3/2}\right]\right\}\Eq(1.11)
$$
where, using \equ(1.11e), we have posed that  $E/N={3\over 2}kT$.
Consider now a closed system consisting initially of two adjacent
volumes $V_A$ and $V_B$  separated by a  wall. The volume $A$
contains an ideal gas with $N_A$ particles, and the volume $V_B$
contains another ideal gas with $N_B$ particles. The two subsystems
are kept at the same pressure $P$ and at the same temperature $T$.
The entropy of such system, according to \equ(1.11) is
$$
S^i_{\rm total}~=~S(T, V_A, N_A)+S(T, V_B, N_B)
$$
If we now remove the wall, the two ideal gases will mix, each
occupying the volume $V_A+V_B$. In the new equilibrium situation
the entropy is now
$$
S^f_{\rm total}~=~S(T, V_A+V_B, N_A)+S(T, V_A+V_B, N_B)
$$
The entropy difference between the initial state $i$ and the final
state $f$ is, according to \equ(1.11)
$$
\D S ~=~ S^f_{\rm total}-S^i_{\rm total}~=~N_A k \ln(1+V_B/V_A)+N_B k
\ln(1+V_A/V_B)\Eq(dS)
$$
So far everything seems to be fine, since $\D S>0$ as one should
expect for this irreversible process (the mixing on two ideal
gases). But let us now suppose that the two ideal gases are
actually identical. We could repeat the argument and again we will
find that the variation of the entropy is $\D S>0$. However this
cannot be correct since, after the removal of the wall, no
macroscopic changes happens at all in the system. We could put
back the wall ad we will return to the initial macroscopic
situation. This paradox is clearly related to the fact that we are
assuming the particles of the ideal gas as distinguishable. If
particle are considered as distinguishable, then also the
situation in which two identical perfect gases mixes is a
irreversible process. If we put back the wall particles in the
volume $1$ are not the same particles of the initial situation.
The paradox was resolved by Gibbs supposing that identical
particles are not distinguishable. With this hypothesis the number
of microstates involving $N$ particles should be reduced by a
factor $N!$, since there are exactly $N!$ ways to enumerate $N$
identical particles. Hence the correct definition of, e,g. $\om_\L(E,N)$ should be
$$
\om_\L(E,N)~=~ {1\over N!}\int_{H(\bm p,\bm q)\le E}dp~dq\Eq(1.6b)
$$
Thus,  instead of \equ(1.10), the correct entropy of a perfect gas
is
$$
S_\L(E,N)~=~kN\left\{{3\over 2}+\ln\left[ ~|\L|~ \left({2m\pi k T\over
h^2}\right)^{3/2}\right]\right\}-k\ln N!\Eq(1.10b)
$$
First observe that this new definition of the entropy does not
affect the equation of state of the perfect gas, since it differs
from \equ(1.10) by a term independent on $E$ and $V$, so
\equ(1.10b) leads to the same equations \equ(1.11e). and, for
$N\gg 1$,  by Stirling's formula $\ln N!\approx N\ln N-N$
$$
S_\L(E,N)~=~kN\left\{{5\over 2}+\ln\left[ ~{|\L|\over N}~
\left({2m\pi kT\over h^2}\right)^{3/2}\right]\right\}\Eq(1.10c)
$$
Thus entropy defined by \equ(1.10c) of a perfect gas is indeed a
purely extensive quantity. Let us also check that \equ(1.10c)
solve the Gibbs paradox.

Let us consider again the argumentation which leads us to the Gibbs
paradox with the new temperature dependent entropy (just using
that $E~=~3NKT/2$)
$$
S_\L(E,N)~=~kN\left\{{5\over 2}+\ln\left[ ~{|\L|\over N}~
\left({2m\pi kT\over h^2}\right)^{3/2}\right]\right\}\Eq(1.10d)
$$
The new variation in entropy is now
$$
\D S~=~ k(N_A+N_B)\Bigg\{{5\over 2}+\ln\Bigg[ ~{V_A+V_B\over
N_A+N_B}~ \left({2m\pi kT\over h^2}\right)^{3/2}\Bigg]\Bigg\}-
$$
$$
- kN_A\Bigg\{{5\over 2}+\ln\Bigg[ ~{V_A\over N_A}~ \Big({2m\pi
kT\over h^2}\Big)^{3/2}\Bigg]\Bigg\}- kN_B\Bigg\{{5\over
2}+\ln\Bigg[ ~{V_B\over N_B}~ \Big({2m\pi kT\over
h^2}\Big)^{3/2}\Bigg]\Bigg\}=
$$
$$
= kN_A\ln\left[{{V_A+V_B\over N_A+N_B}\over {V_A\over N_A}}\right]+ kN_B\ln\left[{{V_A+V_B\over N_A+N_B}\over {V_B\over N_B}}\right]\Eq(1.11f)
$$
For two different gases this formula gives something similar to \equ(dS). But if
gas $A$ and gas $B$ are identical, then, since in the initial
state and in the final state temperature and pressure ar
unchanged, we must have, by \equ(1.11e)
$$
{V_A\over N_A}~=~{V_B\over N_B}~=~{V_A+V_B\over N_A+N_B}~=~{kP\over T},
~~~~{\rm if ~gas ~A~ and~ gas~ B ~are~ identical}\Eq(dSi)
$$
(of course also for different gases ${V_A\over N_A}~=~{V_B\over
N_B}~=~{kT\over P}$ but ${V_A+V_B\over N_A+N_B}\neq {kT\over P}$).

Inserting formula \equ(dSi) in \equ(1.11f) we obtain
$$
\D S~=~0 ~~~~~~~~{\rm if ~gas ~A~ and~ gas~ B ~are~ identical}
$$
The necessity to divide by the factor $N!$ to escape from the
Gibbs paradox is a new symptom that classical mechanics is not
adequate.

\section{ The Canonical Ensemble}\index{ensembles!canonical}
The Micro Canonical Ensemble is suited for isolated systems where natural macroscopic
variables are the volume $|\L|$, the number of particles $N$ and the energy $E$.  We now define a new ensemble
which is appropriate to describe a system which is  not isolated,
but it is in thermal equilibrium with a larger system (the heat
reservoir), e.g. a gas kept in a box made by heat conducting walls
which is fully immersed in a larger box containing some other gas
at a fixed temperature $T$. Hence this system is constrained to
stay in a box $\L$ with a fixed volume $|\L|$, a fixed number of
particles $N$, at a fixed temperature $T$, but its energy is no
longer fixed, since the system is now allowed to exchange energy with
the heat reservoir through the walls.

We define the Canonical Ensemble for such a system as follows. The
space of configuration of the Canonical Ensemble is

$$
\G_{c}(\L,N)~=~ \{ (\bm p,\bm q):\bm p\in \mathbb{R}^{dN},~~  \bm q\in \L^{N}\}
$$

The probability measure of the Canonical Ensemble is
$$
d\m_c(\bm p,\bm q)~=~ {1\over Z_\L(\b,N)}{1\over N!} e^{ -\b H(\bm p,\bm q)} {d\bm pd\bm q\over h^{3N}}\Eq(1.25a)
$$
where $\b~=~(k T)^{-1}$ is a constant proportional to the inverse
temperature of the system ($k$ is again the Boltzmann constant) and the normalization constant
$$
Z_\L(\b,N)~=~{1\over N!} \int e^{ -\b H(\bm p,\bm q)} {d\bm pd\bm q\over
h^{3N}}\Eq(1.25b)
$$
is the called {\it the partition function of the system in the canonical ensemble}.\index{partition function!canonical}

\\Thermodynamics is recovered by the following definition.
The thermodynamic function called  free energy of the system is
obtained in the canonical ensemble by the formula
$$
F_\L(\b,N)~=~ -k T\ln Z_\L(\b,N)\Eq(1.20)
$$

\\We now remark that in  \equ(1.25b) we can perform for free the integration
over momenta.
$$
\begin{aligned}
Z_\L(\b,N) & ={1 \over h^{3N} N!}\int \!\!dp_1\dots \int \!\!dp_N
\int_{\L}\!\!dx_1\dots\int_{\L} \!\!dx_N
 e ^{-\b H(p_1,\dots, p_N,x_1,\dots ,x_N)}  \\\\
& = {1 \over h^{3N} N!}\int d\pp_1\dots \int d\pp_N
\int_{\L}d\xx_1\dots\int_{\L} d\xx_N
 e ^{-\b(\sum_{i=1}^N {\pp_i^2\over 2m}+ U(\xx_1,\dots \xx_N)) }\\\\
&= {1 \over h^{3N} N!}\int e^{ -\b {\pp_1^2\over 2m}} d\pp_1\dots
\int e^{ -\b {\pp_N^2\over 2m}}d\pp_N
\int_{\L}d\xx_1\dots\int_{\L} d\xx_N
 e ^{-\b U(\xx_1,\dots \xx_N) } \\\\
&= {\left[\int_{-\infty}^{+\infty}e^{-\b {x^2/ 2m}}dx \right]^{3N}
\over h^{3N} N!} \int_{\L}d\xx_1\dots\int_{\L} d\xx_N
 e ^{-\b U(\xx_1,\dots \xx_N) } \\\\
&= {\left[(2m\pi/\b h^2)^{3/2} \right]^{N} \over N!}
\int_{\L}d\xx_1\dots\int_{\L} d\xx_N
 e ^{-\b U(\xx_1,\dots \xx_N) }
 \end{aligned}
$$

\\The integral

$$
{1\over N!}\int_{\L}d\xx_1\dots\int_{\L} d\xx_N
 e ^{-\b U(\xx_1,\dots \xx_N) }
 $$ is called the ``configurational'' partition function
 of the system. Generally, it is very difficult to calculate explicitly this function for
 a real gas. But in the case of an ideal gas the situation is again immediate.
So we can now  give a justification a posteriori for the
definition \equ(1.25a) by considering again the case of the ideal
gas. Indeed, in the case of the  ideal gas
we have that $U(\xx_1,\dots \xx_N)=0$ and thus (recall that we have set $V=|\L|$)
$$
Z^{\rm ideal\, gas}_\L(\b,N)
~=~ {V^N\over N!}\left({2m\p\over h^2\b}\right)
^{3N/2}~~~~~~~~~~~~~~~~~~~~~~~~~~~~~~~~~~~~~~
$$
Hence, since $\b^{-1}~=~k T$ and using also Stirling
approximation for $\ln N!$
$$
F_\L(\b,N)~=~-kT\ln Z_\L(\b,N)~=~-k T N \left\{1+\ln \left[{V\over
N}\left({2\pi m kT\over h^2}\right)^{3/2}\right]\right\}
$$
From free energy we can calculate all thermodynamic quantities.
E.g.
$$
P~=~-{\partial F\over \partial V}~=~{NkT\over V}  ~~~~\Rightarrow ~~~~
P V~=~ NkT
$$
$$
S~=~-{\partial F\over \partial T}~=~ N k\left[ {5\over 2}+ \ln
\left[{V\over N}\left({2\pi m kT\over
h^2}\right)^{3/2}\right]\right]
$$
\index{entropy!entropy of the ideal gas}
$$
E~=~F+TS~=~{3\over 2}NkT
$$
Results are identical to the case of Micro Canonical Ensemble!

\\Note also that the energy $E$ in the canonical ensemble is not fixed and hence $E$ has to be interpreted as mean energy.
This suggests that it could be also calculated directly by the
formula
$$
E~=~\<H(\bm p,\bm q)\>~=~ Z^{-1}_\L(\b,N) {1\over N!} \int e^{ -\b H(\bm p,\bm q)}
H(\bm p,\bm q) {d\bm pd\bm q\over h^{3N}} =- {\partial\over \partial\b}\ln Z_\L(\b,N)
$$

\\{\bf Exercise}. Show that
$\<H(\bm p,\bm q)\>~=~{3\over 2}NkT$ as soon as $H(\bm p,\bm q)~=~\sum_{i=1}^N {\pp_i^2\over 2m}$.
\vglue.3cm

\section{Canonical
Ensemble: Energy fluctuations}
A system in the Canonical Ensemble can have in principle
microstates of all possible energies. This means that the energy fluctuates around its mean value $E=\<H(\bm p,\bm q)\>$. Let us thus check the
fluctuations of the energy in the Canonical Ensemble.
The mean energy in the Canonical Ensemble is given by
$$
E~=~\<H(\bm p,\bm q)\> ~=~ {\int d\bm pd\bm q H(\bm p,\bm q)e ^{-\b H(\bm p,\bm q)}\over \int d\bm pd\bm q e ^{-\b H(\bm p,\bm q)}}\Eq(i)
$$
Differentiating both side of \equ(i) respect to $\b$ we get
$$
\begin{aligned}
{\dpr E\over \dpr \b} & = -~{\int d\bm pd\bm q H^2(\bm p,\bm q)e ^{-\b H(\bm p,\bm q)}
 \int d\bm pd\bm q e ^{-\b H(\bm p,\bm q)}
\over (\int d\bm pd\bm q e ^{-\b H(\bm p,\bm q)})^2}\\
& +~{\int d\bm pd\bm q H(\bm p,\bm q) e
^{-\b H(\bm p,\bm q)} \int d\bm pd\bm q H(\bm p,\bm q) e ^{-\b H(\bm p,\bm q)}\over (\int d\bm pd\bm q e
^{-\b H(\bm p,\bm q)})^2}\\
&~=~ \left[{\int d\bm pd\bm q H(\bm p,\bm q) e ^{-\b H(\bm p,\bm q)}\over \int d\bm pd\bm q e
^{-\b H(\bm p,\bm q)}}\right]^2~  -~ {\int d\bm pd\bm q H^2(\bm p,\bm q)e ^{-\b H(\bm p,\bm q)}
\over \int d\bm pd\bm q e ^{-\b
H(\bm p,\bm q)}}
\end{aligned}
$$
Hence we get the relation, for the standard deviation of
$E~=~\<H(\bm p,\bm q)\>$
$$
\<H^2 (\bm p,\bm q)\>-\<H(\bm p,\bm q)\>^2~=~ - {\dpr E\over \dpr \b}~=~kT^2 {\dpr
E\over \dpr T}
$$
From thermodynamics ${\dpr E\over \dpr T}=C_V$ where $C_V$ is the
heat capacity. In general  $C_V\propto N$ as also $E\propto N$
(see e.g. the case of the perfect gas where $C_V={3\over 2}Nk$ and
$E={3\over 2}NkT$). Hence
$$
{\sqrt{\<H^2 (\bm p,\bm q)\>-\<H(\bm p,\bm q)\>^2}\over \<H(\bm p,\bm q)\>}  ~=~{\sqrt{ kT^2
{\dpr E\over \dpr T}} \over E}\approx {1\over \sqrt N}\ll 1
$$
Thus the (relative) fluctuation of the energy around its mean value in the canonical ensemble are
``macroscopically'' small (in the sense that they are of the order
of $1/\sqrt{N}$ with $N$ being a very large number). This means
that in the canonical Ensemble it is highly probable to find the
system in microstates with energy equal or very close to the mean energy $E~=~
\<H(\bm p,\bm q)\>$. So Canonical Ensemble is ``nearly" a micro-canonical
Ensemble. The energy is not exactly fixed, but it can fluctuate
around a fixed value with relative  fluctuations of order
$N^{-1/2}\approx 10^{ -12}$. \vskip.5cm

\section{The Grand Canonical Ensemble}
\index{ensembles!grand canonical}
The Micro-Canonical Ensemble applies to isolated systems with
fixed $N$, $V$ and $E$, while the Canonical Ensemble describes
systems with fixed $N$, $V$ and $T$ and energy variable (e.g.
systems in heat bath). The Canonical Ensemble appears more
realistic than the Micro Canonical Ensemble. It is very difficult
to construct a perfectly isolated system, as demanded in the Micro
Canonical Ensemble. So systems whose energy is not known exactly
(hence not perfectly isolated) are easier to construct
experimentally.

\\On the other hand, in the canonical ensemble is still  demanded a severe ``microscopic"
condition:  the number of particles must fixed, i.e. system
confined in $\L$ cannot exchange matter with the outside. This is
also a very difficult situation to create experimentally. We
generally deal with systems where, besides the energy, also the
number of particles is not known exactly. We now thus define the
Ensemble suitable to describe systems in thermodynamic equilibrium
in which matter and energy can be exchanged with the exterior. The
fixed thermodynamics parameters for such a system are the volume
$V$, the temperature $T$ and the chemical potential $\m$.

\\Hence the configuration space of the Grand canonical Ensemble is
$$
\G_{GC}(\L)~=~ \bigcup_{N\ge 0}\G_c(\L, N)
$$
(by convention $\G_0$ represent the single micro-state in which no
particle is present in the volume $\L$). The (restriction to $\G_c(\L, N)$
of the) probability measure of the Grand Canonical Ensemble is
$$
d\m_{GC}(\bm p,\bm q)~=~ {1\over \Xi_\L (\b,\m)}{e^{N\b\m}\over N!\, h^{3N}} e^{
-\b H(\bm p,\bm q)} {d\bm p \,d\bm q}\Eq(1.25ca)
$$
where again  $\b=1/kT$ is the inverse temperature, $\mu$ is the chemical potential and the normalization constant
\index{partition function!grand canonical}
$$
\Xi_\L (\b,\m)=\sum_{N=0}^{\infty}{z^N \over h^{3N} N!}\int\!\!
d\pp_1\dots \int\!\! d\pp_N \int_{\L}\!\!d\xx_1\dots\int_{\L} \!\!d\xx_N e
^{-\b H(\pp_1,\dots, \pp_N,\xx_1,\dots ,\xx_N)}\Eq(GC1)
$$
is called the {\it partition function in the grand canonical ensemble}. Here above $z~\doteq~e^{\b\m }$ is called "activity" or "fugacity" \index{activity}of the
system. The term $N=0$ is put conventionally $=1$ in
the sum above while for the term $N=1$ we have
$H(\pp_1,\xx_1)={\pp_1^2/ 2m}$.

\\Hence $d\m_{GC}(\bm p,\bm q)$ is the probability\index{Gibbs measure!grand canonical} to find the system in
a micro-state with exactly $N$ particles, with momenta and
positions in the small volume ${d\bm p \,d\bm q}$ centered at $(\bm p,\bm q)
$ of the phase space $\G_N$. By convention
$$
d\m(\G_0)~=~{1\over \Xi (T,\L,\m)}
$$
is the probability to find the system in the micro-state where no
particle in $\L$ is present.

\\Again, since $H(\pp_1,\dots, \pp_N,\xx_1,\dots ,\xx_N)=\sum_{i=1}^N \frac{p_i^2}{2m}+U(x_1,\dots, x_N)$,  the integration
over momenta can be done explicitly and one gets
$$
\Xi_\L(\b,\l)~=~\sum_{N=0}^{\infty}{\l^N \over N!}
\int_{\L}d\xx_1\dots\int_{\L} d\xx_{N} ~e ^{-\b U(\xx_1,\dots
,\xx_N)} \Eq(GC2)
$$
where
$$
\l~=~e^{\b\m }\left({2\pi m\over \b
h^2}\right)^{3/2}\Eq(acty)
$$

\\The parameter $\l$ is called configurational activity (or simply activity when
it will be clear from the context).

\\The connection with thermodynamic in the Grand-canonical ensemble is defined via the formula
$$
\b P_\L(\b,\l) ~=~{1\over |\L|}\ln \Xi_\L(\b,\l)\Eq(pvf)
$$
and the function $P_\L(\b,\l)$ is identified with the thermodynamical {\it pressure}
\index{pressure} of the system.
Another important quantity that can be calculated  from the Grand
Canonical partition function is the {\it mean density}. The mean density is obtained by computing,
at fixed volume $|\L|$, temperature $T$ and chemical potential $\m$
the mean number of particles in the system.
$$
\begin{aligned}
\<N\> & =~{1\over  \Xi_\L(\b,\l)}\sum_{N=0}^{\infty}{N\l^N \over N!}
\int_{\L}d\xx_1\dots\int_{\L} d\xx_{N} ~e ^{-\b U(\xx_1,\dots
,\xx_N)}\\
&=~{\l}{\partial\over \dpr \l}{\ln \Xi_\L(\b,\l)}
\end{aligned}
\Eq(1.30)
$$
Hence calling $\r_\L(\b,\l)~=~{\<N\>\over|\L|}$
$$
\r_\L(\b,\l)~=~{1\over |\L|}{\l}{\partial\over \dpr \l}{\ln
\Xi_\L(\b,\l)}\Eq(1.30b)
$$

\section{The ideal gas in the Grand Canonical Ensemble}
To conclude this brief introduction let us consider the case of perfect gas in the
Grand Canonical Ensemble. It is very easy to calculate the Grand
Canonical partition function in this case, where $U(\xx_1,\dots
,\xx_N)=0$. E.g., by \equ(GC2) we get \index{ideal gas}
$$
\Xi^{ideal\, gas}_\L(\b,\l)~=~ \sum_{N=0}^{\infty}{\l^N \over N!} ~
\int_{\L}d\xx_1\dots\int_{\L} d\xx_{N}~=~ ~\sum_{N=0}^{\infty}{(\l
|\L|)^N \over N!} ~=~ e^{\l |\L|}
$$
Hence, \equ(pvf) and \equ(1.30b) become
$$
\b P^{ideal\, gas}~=~\l\Eq(1.31p)
$$
$$
\r^{ideal\, gas}~=~\l\Eq(1.32p)
$$
In particular \equ(1.32p) says that the activity $\l$ of a perfect
gas coincides with the its density $\<N\>/|\L|$. Putting \equ(1.32p) in
\equ(1.31p) we get
$$
\b P~=~\r
$$
which is again the equation of state of a perfect gas.

\\{\it Exercise:} Calculate the fluctuation $\<N^2\>-\<N\>^2$ in
the case of the perfect gas and show that it is of order $\<N\>$.

\section{The Thermodynamic limit}
The dependence of the density $\r_\L(\b,\l)$ from the volume $\L$ in \equ(1.30b)
must be a residual one. In fact we  expect that, by increasing  the
volume $\L$ of our system but  keeping fixed the value of the
chemical potential $\m$ and the inverse temperature $\b$, the density of the system does not vary in a
sensible way. Values of $|\L|$ that one can take in thermodynamics
are macroscopic, hence very large. We thus may think that the
volume is arbitrarily large (which is the rigorous formalization
of "macroscopically" large) and define
$$
\b P(\b,\l) ~=~\lim_{\L\to \infty}{1\over |\L|}\ln \Xi_\L(\b,\l)\Eq(1.31)
$$
$$
\r(\b,\l)~=~\lim_{\L\to\infty}{1\over |\L|}{\l}{\partial\over \dpr
\l}{\ln \Xi_\L(\b,\l)}\Eq(1.32)
$$
The limit $\L\to\infty$ (where the way in which $\L$ goes to
infinity has to be specified in a precise sense) is called the
{\it thermodynamic limit} of the Grand Canonical Ensemble. The
exact thermodynamic behavior of the system is recovered at the
thermodynamic limit. This limit can be understood in the physical
sense as ``the volume macroscopically large".

\\Note that, by \equ(1.32) it is possible to express the activity of the system
$\l$ as a function of the density $\r$ and of the (inverse)
temperature $\b$. So the pressure of the system can be expressed
in terms $\r$ and $\b$. This is very easy to do in the case of the ideal gas. When real gases
are concerned (i.e. gases for which the potential energy $U$ is non-zero), the
formula giving the pressure of the system in terms of the density is called the
{\it virial equation of state}.

\\Thermodynamic limit can also be done in the Micro Canonical and Canonical Ensemble.
In this case one has to fix a given density $\r=N/|\L|$ for the
system and then take the limit $\L\to\infty$, $N\to\infty$ in such
way that $N/|\L|$ is kept constant at the value $\r$.

\\In the Micro Canonical Ensemble, in place of \equ(1.5b) one define
$$
S(\r,E)~=~\lim_{\L\to\infty, N\to \infty,\atop N/|\L|,~E/|\L| ~{\rm fixed}
}{1\over |\L|} k\ln\left[{1\over h^{3N}} \Psi_\L(E,N)\right]\Eq(1.5B)
$$
where $S(\r,E)$ is the infinite-volume specific   entropy (i.e. entropy per unit volume)
which is an intensive quantity. In the Canonical Ensemble one can
consider in place of \equ(1.20)
$$
F(\r,\b)~=~ -\lim_{\L\to\infty, N\to \infty,\atop N/|\L|~=~\r}{1\over |\L|}k
T\ln Z_\L(\b,N)\Eq(1.20B)
$$
where $F(\r,\b)$ is the infinite volume Gibbs free energy per unit volume.

\\In principle the three ensembles that we have considered are equivalent only at the thermodynamic
limit, when the effect of the boundary are removed. Thus equations
of the thermodynamics are exactly recovered at the thermodynamic
limit.

\\Typical mathematical problems in statistical
mechanics are thus to show the existence of limits \equ(1.20B),
\equ(1.5B) and \equ(1.32) and to show that they produce the same
thermodynamic (otherwise something would be seriously wrong in the
picture of the statistical mechanics).

\\In the following we will  focus our attention mainly on the Grand Canonical Ensemble and we
will investigate the existence of the limit \equ(1.32) and the
property of this limit as a function of $\b$ and $z$.

\chapter{The  Grand Canonical Ensemble}
\numsec=2\numfor=1
\section{Conditions on the potential energy}
A system of point particles in a volume $\L$ in the Grand
Canonical Ensemble is described by a probability measure on $\cup
_{N} \G_N(\L)$ where
$$
\G_{N}(\L)~=~\{(\xx_1,\dots \xx_N)\in
\mathbb{R}^{dN}: \xx_i\in\L\}.
$$
The restriction of this
probability measure to $\G_N(\L)$ is  called the
{\it configurational Gibbs measure} \index{Gibbs measure!configurational}(we have already integrated over
momenta)\index{ensembles!grand canonical}
$$
d\m(\xx_1,\dots,\xx_N)~=~ {1\over \Xi_\L (\b,\l)}{\l^{N}\over N!}
e^{ -\b U(\xx_1,\dots ,\xx_N)} d\xx_1\dots d\xx_N \Eq(2.1)
$$
where $\b~=~(kT)^{-1}$ with $T$ absolute temperature and $k$
Boltzmann constant, while the activity $\l$ is given in
\equ(acty). $d\m(\xx_1,\dots,\xx_N)$ is the probability to find
the system in the micro-state in which exactly $N$ particles are
present and, for $i~=~1,2,\dots , N$,  the  $i^{th}$ particle  is in
the small volume $d^3\xx_i$ centered at the point $\xx_i\in \L$.
The normalization constant  $\Xi_\L(\b,\l)$ is called the Grand
Canonical partition function of the system and it is given by
$$
\Xi_\L (\b,\l)~=~1+ \sum_{N~=~1}^{\infty}{\l^N \over N!}
\int_{\L}d\xx_1\dots\int_{\L} d\xx_{N} ~e ^{-\b U(\xx_1,\dots
,\xx_N)} \Eq(2.2)
$$
\index{partition function!grand canonical}
The factor $1$ in the sum above corresponds to the micro-state in which
no particle is present, which hence can occur with probability
$\Xi^{-1} (\b,\L,\l)$.
\index{pair potential}
\\The potential energy $U(\xx_1,\dots ,\xx_N)$ is assumed to be a function
$U: (\mathbb{R}^d)^N\to (\mathbb{R}\cup\{+\infty\})$. We will
suppose from now on that  $U(\xx_1,\dots ,\xx_N)$ has the following form
$$
U(\xx_1,\dots ,\xx_N)~=~\sum_{1\le i<j\le N}V(\xx_i -\xx_j)
+\sum_{i=1}^{N} \Phi_{\rm e}(\xx_i)
$$
where $V(\xx)$ is a function $V: \mathbb{R}^d\to
\mathbb{R}\cup\{+\infty\}$ and $\Phi_{\rm e}(\xx)$ is a function
$\Phi_{\rm e}: \mathbb{R}^d\to \mathbb{R}$. We will always assume that
$V(\xx)$ is such that $V(-\xx)~=~V(\xx)$. We let  $|x|$ to denote the Euclidean norm of $x$. Physically the
assumption above on the potential energy means that we are
restricting to the case of particles interacting via a  translational invariant {\it pair potential} $V$ plus an external potential $\Phi_{\rm
e}$. The  interaction $\sum_{1\le i<j\le N}V(\xx_i -\xx_j)$  is
``internal" in the sense that it depends only on mutual positions
in space of particles (i.e. only from vectors $\xx_i-\xx_j$). The
interaction $\sum_{i=1}^{N}\Phi_{\rm e}(\xx_i)$ depends instead on
the absolute positions of particles in space and $\Phi_{\rm
e}(\xx_i)$ is interpreted as the effect of the world ``outside"
the boundary of $\L$ on the $i^{th}$ particle confined in $\L$.
For instance, suppose that outside $\L$ there are $M$ particles
in fixed positions $y_1,\dots y_M$, then $\Phi_{\rm
e}(\xx_i)~=~\sum_{j~=~1}^M V(\xx_i -y_j)$.

\\The choice of $\Phi_{\rm e}$ is somehow arbitrary, in the sense
that it will depend on conditions we are supposing outside $\L$. A
given choice of $\Phi_{\rm e}$ is called generally a {\it boundary
condition}. For continuous systems the
question of the effect of the  boundary conditions  at the thermodynamic limit
is rather difficult.
The pressure is expected to be independent on the
choice of the boundary condition but other quantities (such as some derivatives of the pressure) may not be independent
on boundary conditions.  In
this section we will make the mathematically simplest choice $\Phi_{\rm e}=0$ (i.e.
no influence at all on particles inside $\L$ from the world
outside) which is called {\it free boundary condition}. We will
consider in this chapter just free boundary conditions, hence we
will suppose
$$
U(\xx_1,\dots ,\xx_N)~=~\sum_{1\le i<j\le N}V(\xx_i -\xx_j) \Eq(2.3)
$$

\\By \equ(2.3), interaction energy between particles is known once we
have specified the two body potential $V(\xx)$.
We immediately see that the function $U(\xx_1,\dots ,\xx_N)$ defined in \equ(2.3) has
the following properties.
\vskip.3cm
\\
$i)$ {\it $U$ is symmetric for the exchange of particles}.

\\ Let $\{\s(1),\s(2),\dots,\s(N)\}$ be a permutation of the set $\{1,2,\dots ,N\}$ then
$$
U(\xx_{\s(1)},\dots ,\xx_{\s(N)})~=~U(\xx_1,\dots ,\xx_N)  \Eq(2.4)
$$

\\
$ii)$ {\it $U$ is translational invariant}.

\\
Namely, if $(x_1,\dots,x_N)$  and $(x'_1,\dots,x'_N)$ are two configurations which differs only by a translation
then

$$
U(\xx_1,\dots ,\xx_N)~=~U(\xx'_1,\dots ,\xx'_N)
\Eq(2.5)
$$

\\Some further conditions  on the potential $V$ must be imposed. Stability and temperedness are
commonly  considered as minimal   conditions to  guarantee a good
statistical mechanics behavior of the  system (see, e.g., \cite{Ru}
and \cite{Ga}).

\index{pair potential!stability}
\begin{defi}
A pair potential $V(x)$ is said to be stable if there exists $C\ge 0$ such that, for all ${n}\in \mathbb{N}$ such that
$n\ge 2$ and all $(x_1,\dots,x_n)\in \mathbb{R}^{dn}$,
$$
\sum_{1\le i< j\le n}V(x_i-x_j)\ge -n C \Eq(2.6)
$$
\end{defi}

\index{pair potential! regularity}
\begin{defi}\label{regular}
A pair potential $V(x)$ is said to be regular
$$
\int_{\mathbb{R}^d} |e^{-\b V(x)}-1|dx< \infty \Eq(2.7)
$$
\end{defi}
A typical example of a regular pair potential $V(x)$ is  a potential which is absolutely integrable when $|x|$ is outside a sphere of radius $r_0>0$ and it is positive for $|x|\le r_0$.
Then for such a potential, letting $S_d(r_0)$ be the volume of the $d$-dimensional sphere of radius $r_0$ and letting $C=\int_{|x|> r_0}|V(x)| dx$,
$$
\begin{aligned}
\int_{\mathbb{R}^d} |e^{-\b V(x)}-1|dx & = \int_{|x|\le r_0} |e^{-\b V(x)}-1|dx + \int_{|x|> r_0} |e^{-\b V(x)}-1|dx\\
& \le S_d(r_0) + \int_{|x|> r_0}\left[\sum_{n=1}^\infty{|-\b V(x)|^n\over n!}\right]dx \\
& \le S_d(r_0) + \sum_{n=1}^\infty {\b^n\over n!}\int_{|x|> r_0}|V(x)|^n dx \\
&\le  S_d(r_0) + \sum_{n=1}^\infty {\b^n\over n!}\left(\int_{|x|> r_0}|V(x)| dx\right)^n\\
& \le  S_d(r_0) + \sum_{n=1}^\infty {\b^n\over n!}C^n\\
& = S_d(r_0)+ e^{\b C}-1\\
& <+\infty
\end{aligned}
$$

\\A stable and regular  pair potential $V(x)$ will be also called {\it admissible}.
\index{pair potential!stability constant}
\begin{defi}
\\  Given $V(x)$ admissible, and given $n\ge 2$ let
$$
B_n=\sup_{ (x_1,\dots,x_n)\in \mathbb{R}^{dn}}-{1\over n}\sum_{1\le i<j\le n}V(|x_i-x_j|)
\Eq(bn)
$$
then
the nonnegative number
$$
B=\sup_{n\ge 2} B_n
\Eq(b)
$$
is called the {\it stability constant of the potential $V(x)$}.
\end{defi}
Note that   $B$ is non-negative  and $B=0$ if and only if $V(|x|)\ge 0$ (i.e. ``repulsive" potential).
Stability and  regularity
are actually deeply interconnected and the lack of one of them always produces non thermodynamic or catastrophic  behaviors (see ahead).

\\The conditions $i)$ and $ii)$  are motivated by physical considerations. They
originate from the observation that most of
the physical interactions are indeed symmetric under exchange of
particles and translational invariant.

\\Stability and regularity are from the physical point of view
more difficult to understand. Concerning in particular regularity, it is quite natural to assume that potential must vanish at
large distances since particles far away are expected to interact
in a negligible way. On the other hand, it is not clear why the rate of decay has
to be such that \equ(2.7) is satisfied. Concerning the stability condition, we will see that it
is somehow related to the fact that particles are not allowed to  be (or pay a high price to be)  at
short distance from each other.

\\We will see below that
the grand-canonical partition defined in \equ(1.1) is  a holomorphic function of $\l$ if the potential $V(|x|)$ is stable. Moreover, under
very mild additional conditions on the potential  (upper-continuity) it can be proved that  the converse is also true (see \cite{Ru}). In other words
$\Xi_{\La}(\b,\l)$ converges
if and only if the potential $V$ is stable. So, in some sense, stability is a  {\it sine qua non}  condition to construct a    consistent  statistical mechanics
for continuous particle systems.

\vv
\begin{pro}\label{Zolo}
Let $U(\xx_1,\dots ,\xx_N)$  be a stable interaction with stability constant $B$
and let $\l$ in \equ(2.2) be allowed to vary in $\mathbb{C}$, then
series in the r.h.s. of \equ(2.2) converges absolutely for all $\l\in
\mathbb{C}$, all $\b\in \mathbb{R}^{+}$ and all $\L$ Lebesgue
measurable set in $\mathbb{R}^{d}$, or in other words  the function
$\Xi (\b,\L,\l)$, as a function of $\l$ in the complex plane, is
holomorphic.
\end{pro}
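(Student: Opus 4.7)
The plan is to use stability of $U$ as the sole analytic input: it gives a pointwise upper bound on the Boltzmann weight that reduces the whole series to an exponential in disguise.

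First, I would fix a bounded Lebesgue measurable $\L\subset\mathbb{R}^d$ (so that $|\L|<\infty$; if $\L$ has infinite measure already the $N=1$ term diverges). For $N\ge 2$, stability with constant $B$ gives
$$
U(x_1,\dots,x_N)=\sum_{1\le i<j\le N}V(x_i-x_j)\ge -NB,
$$
and hence for every $(x_1,\dots,x_N)\in\L^N$ and every $\b>0$,
$$
0\le e^{-\b U(x_1,\dots,x_N)}\le e^{\b B N}.
$$
For $N=0,1$ the potential vanishes by convention, so $e^{-\b U}=1$ and the same bound $e^{\b B N}$ holds trivially. This is a pointwise, configuration-independent bound, and it is the only nontrivial estimate we need.

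Next, for each $\l\in\mathbb{C}$, let
$$
a_N(\b,\L)\bydef \frac{1}{N!}\int_\L dx_1\cdots\int_\L dx_N\, e^{-\b U(x_1,\dots,x_N)},
$$
so that $\Xi_\L(\b,\l)=\sum_{N\ge 0}\l^N a_N(\b,\L)$. The bound above gives
$$
|\l^N a_N(\b,\L)|\le \frac{|\l|^N |\L|^N e^{\b B N}}{N!}=\frac{\bigl(|\l|\,|\L|\,e^{\b B}\bigr)^N}{N!}.
$$
Summing over $N$, the majorant is the exponential series
$$
\sum_{N=0}^{\infty}\frac{\bigl(|\l|\,|\L|\,e^{\b B}\bigr)^N}{N!}=\exp\bigl(|\l|\,|\L|\,e^{\b B}\bigr)<+\infty,
$$
which is finite for every $\l\in\mathbb{C}$ and every $\b>0$. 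This establishes absolute convergence of the series defining $\Xi_\L(\b,\l)$ on all of $\mathbb{C}$.

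Finally, to conclude holomorphy in $\l$, I would note that each term $\l\mapsto \l^N a_N(\b,\L)$ is a monomial (hence entire), and the majorant $\sum (|\l|\,|\L|\,e^{\b B})^N/N!$ converges uniformly on every compact subset of $\mathbb{C}$ (bounded uniformly by $\exp(R|\L|e^{\b B})$ on $|\l|\le R$). By the Weierstrass $M$-test and the standard theorem that a locally uniform limit of holomorphic functions is holomorphic, $\Xi_\L(\b,\l)$ is entire in $\l$. There is no real obstacle here: the whole argument rests on the single estimate supplied by stability, and the only thing to be careful about is the trivial $N=0,1$ cases, where the pair-interaction sum is empty.
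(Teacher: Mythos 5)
Your proof is correct and follows essentially the same route as the paper: bound $e^{-\b U}\le e^{\b B N}$ via stability, integrate to get $|\L|^N e^{\b B N}/N!$ per term, and sum the exponential series; the paper compresses this into one chain of inequalities while you spell out the Weierstrass/locally-uniform-convergence step for holomorphy, but the substance is identical.
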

\vv
\\{\bf Proof}.
$$
\begin{aligned}
|\Xi_\L (\b,\l)| & \le 1+ \sum_{N=1}^{\infty}{|\l|^N \over N!}
\int_{\L}d\xx_1\dots\int_{\L} d\xx_{N} ~e ^{-\b U(\xx_1,\dots
,\xx_N)} \\
&\le~
1+ \sum_{N=1}^{\infty}{|\l|^N\over N!}
\int_{\L}d\xx_1\dots\int_{\L} d\xx_{N}e ^{+\b B N}\\
& =~ \sum_{N=0}^{\infty}{(|\L||\l|e ^{\b B})^N \over N!}\\
&=~\exp\{|\L||\l|e ^{\b
B}\}
\end{aligned}
$$
$\Box$
\vskip.2cm

\\The stability condition is  therefore  a sufficient condition for the
absolute convergence of the Grand Canonical partition function. As mentioned above, it
can be also shown, that the converse also holds, i.e. stability
condition is also a necessary condition (see \cite{Ru}). But in that case some further
conditions on the function $V$ are
needed. We rather prefer here to show by mean of examples how violation of stability
and/or  temperness produce non thermodynamic behaviors in
the system. In all examples to follow we will assume to work  in $\mathbb{R}^3$.

\section{Potentials too attractive at short distances}
A simple way to violate stability
is by choosing a potential $V(\xx)$ which is negative in
the neighborhood of $\xx=0$. For example, let us consider the potential
$V_1^{\rm bad}(|\xx|)$ as in figure 1: a continuous, not decreasing, compactly supported function of $|x|$. This potential is
continuous, tempered  (i.e. satisfies \equ(2.7)), bounded (i.e.
$|V_1^{\rm bad}(|\xx|)|\le \a$, for some $\a>0$),  and strictly negative
around the origin $\xx~=~0$ (i.e.  $\exists \d>0$ and $b>0$ such
that $V_1^{\rm bad}(|\xx|)\le -b$ whenever $|\xx|\le ~\d$).
This potential $V_1^{\rm bad}(|\xx|)$ is not stable. Indeed, if we place $N$
particles in positions $\xx_1,\dots, \xx_N$  so close one to each
other that $|\xx_i-\xx_j|\le \,\d$ (for all $i,j~=~1,2,\dots, N$), then
$U(\xx_1,\dots, \xx_N)\le -b{N(N-1)/ 2}$.
\begin{figure}
\begin{center}
\includegraphics[width=12cm,height=8cm]{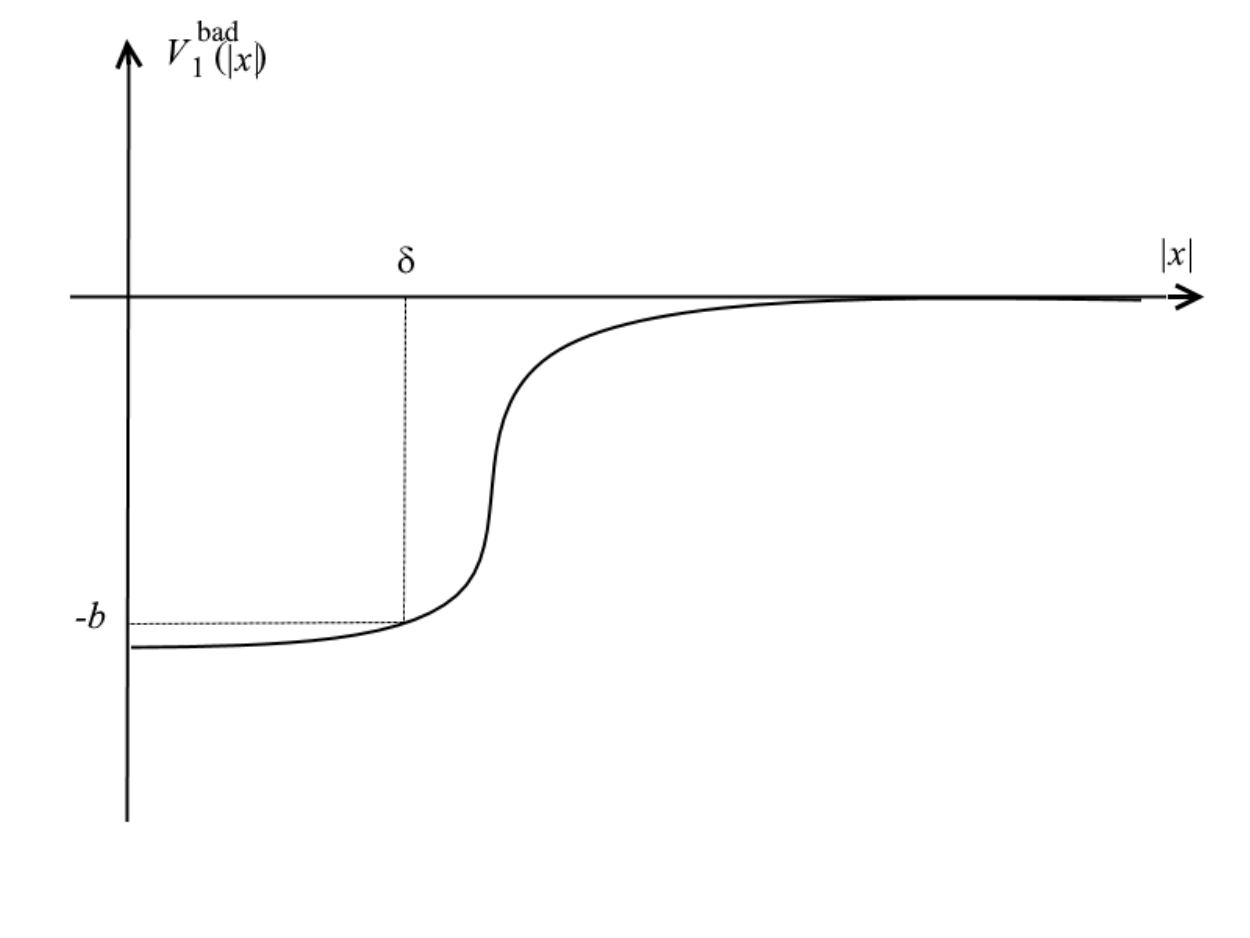}
\end{center}
\begin{center}
Figure 1. The potential $V^{\rm bad}_1(|\xx|)$
\end{center}
\end{figure}
As we said above, it can be shown that the lack of stability destroys the convergence of the series
for $\Xi_\L (\b,\l)$. I.e. it is possible to show that
the grand canonical  partition function with the potential $V_1^{\rm bad}(|\xx|)$
is a divergent series. Even so, one may argue that the partition function in the canonical ensemble for the potential
$V_1^{\rm bad}(|\xx|)$ is still well defined and calculations could then be performed in this ensemble.

\\Let us therefore do these calculations in the (configurational) Canonical Ensemble with $\b$ and $N$ fixed.

\\First we consider a  catastrophic situation in which the $N$ particles collapse in a small region
inside $\L$.
Namely, we calculate
the probability to find the system in a micro-state with the  $N$
particles  being {\it all} located in a small sphere $S_\d\subset \L$ of
radius $\d/2$ (so that they are all at distance less or equal to $\d$). By
\equ(2.1) a lower bound for such probability is given by
$$
\begin{aligned}
P_{\rm bad}(N)& =~{1\over Z_\L (\b,N )} \int_{S_\d}d\xx_1\dots
\int_{S_\d}d\xx_N {1\over N!} e^{ -\b U(\xx_1,\dots ,\xx_N)}\\
& \ge ~{1\over Z_\L(\b,N)} \left[{\pi \d^3\over 6}\right]^N
{1\over N!} e^{ +\b b{N(N-1)\over 2 } }
\end{aligned}
$$
Now consider configurations ``macroscopically correct", i.e.
micro-states with $N$ particles in positions $\xx_1, \dots \xx_N$
uniformly distributed in the box $\L$, with density equal to the
density $\r~=~{N/|\L|}$ fixed by the parameters $N$ and $|\L|$
in the  canonical ensemble. If the potential $V_1^{\rm bad}$ is tempered
then it is not difficult to see that for such configurations
$|U(\xx_1, \dots ,\xx_N)|\le  C N\r$ where
$2C~=~\int_{\mathbb{R}^3}d\xx|V(\xx)|$. As a matter of fact,
let us consider the box $\L$ as the disjoint union of small cubes $\D$  (with volume  $|\D|$) so that
each of the particles $x_1\dots, x_N$ belongs to one of these
small cubes $\D$ (with volume  $|\D|$). Hence, given a configuration  $x_1,\dots, x_N$ (recall that $V_1^{\rm bad}(\xx)$ non-positive by assumption)
$$
\begin{aligned}
\sum_{j\in \{1,2,\dots, N\}\atop j\neq i}V_1^{\rm bad}(|\xx_j-\xx_i|)& =~\sum_{\D\subset \L}
\sum_{j\in \{1,2,\dots, N\}\atop \;\xx_j\in \D,~j\neq i}V_1^{\rm bad}(|\xx_j-\xx_i|)|\\
& \ge
\sum_{\D\subset \L}V_1^{\rm bad}(r_\D)\sum_{j\in \{1,2,\dots, N\}\atop \;\xx_j\in \D, \;j\neq i}1
\end{aligned}
$$
where in the last line to get the inequality we have used the assumption that $V_1^{\rm bad}$ is non decreasing
and $\sum_{\D\subset \L}$ runs over all small cubes  whose disjoint union is $\L$ and $r_\D$ denotes  the
(minimal) distance of a point inside the cube $\D$ from the point $\xx_i$. Now, since we
are assuming that particles are uniformly distributed in $\L$ and
choosing the dimensions of the small cubes sufficiently large in
order to still consider these cubes macroscopic, so that  the
particles in a small cube $\D$ are still uniformly distributed
with density $\r$ or very close to $\r$. Hence we may assume that there exists $\e>0$  such that, for each $\D\subset \L$
$$
\sum_{j\in \{1,2,\dots,N\}\atop \;\xx_j\in \D\;j\neq i}1~\le (1+\e) \r |\D|
$$
so that
$$
\begin{aligned}
\sum_{j\in \{1,2,\dots, N\}:\;j\neq i}V_1^{\rm bad}(|\xx_j-\xx_i|) & \ge \r(1+\e)\,\mbox{$\sum_{\D}$}V_1^{\rm bad}(r_\D) \D\\
&\ge \r(1+\e)\int
_{\L}V_1^{\rm bad}(|\xx-\xx_i|)d\xx\\
& \ge \r(1+\e)\int _{\mathbb{R}^3}V_1^{\rm bad}(\xx)d\xx
\end{aligned}
\Eq(string)
$$
Hence, for such configurations we have
$$
\begin{aligned}
\sum_{1\le i<j\le N}V_1^{\rm bad}(\xx_i -\xx_j)&=~{1\over 2}\sum_{i=1}^{N}
\sum_{j: ~j\neq i}V_1^{\rm bad}(\xx_i -\xx_j)\\
&\ge ~{ N\r(1+\e)\over 2} \int
_{\mathbb{R}^3}V_1^{\rm bad}(\xx) d\xx\\
&=~ -N C \r
\end{aligned}
$$
with
$$
C={(1+\e)\over 2}\left[-\int_{\mathbb{R}^3}V_1^{\rm bad}(\xx)dx\right]
$$
 positive  (again, recall that $V_1^{\rm bad}(\xx)$ non-positive by assumption).
Then an upper bound for the  probability for such configuration to
occur is, according with \equ(2.1)
$$
\begin{aligned}
P_{\rm good}(N)& =~{1\over Z_\L (\b,N)} \int_{\L}d\xx_1\dots
\int_{\L}d\xx_N {1\over N!} e^{ -\b U(\xx_1,\dots ,\xx_N)}\\
&\le~ {1\over Z_\L(\b,N)} |\L|^N {1\over N!} e^{ +\b \r CN}\\
&= ~{1\over Z_\L (\b,N)} N^N\r^{-N} {1\over N!} e^{ +\b C\r N }
\end{aligned}
$$

\\Hence a lower bound for the ratio between the  probability of bad
configurations and good configurations is
$$
{P_{\rm bad}(N)\over P_{\rm good}(N)}~\ge ~{ \left[{\pi\over 6}
\r\d^3\right]^N {1\over N!} e^{ +\b b{N(N-1)\over 2 } }\over
N^N {1\over N!} e^{ +\b C\r N }}~=~ \left[{\pi\over 6}
\r\d^3\over e^{\b ({b\over 2}+C\r)} \right]^N {e^{ +\b {b\over 2}{N^2} }\over N^N}
$$
and, no matter how small $\d$ and/or $b$ are  we have that
$$
\lim_{N\to \infty} {P_{\rm bad}(N)\over P_{\rm good}(N)}~=~+ \infty
$$
This means that it is far more probable to find the system in a
micro-state in which  all particles are all contained within a small sphere of diameter $\d$ in some place of $\L$
rather than in a micro-state ``macroscopically correct", i.e. a  configuration with particles
uniformly distributed in $\L$ with a constant density $\r=N/|\L|$. \vskip.3cm

\section{ Two examples of infrared catastrophe }
In the previous section we consider a  non-stable (but regular) pair potential  $V_1^{\rm bad}(|\xx|)$ which was
not preventing  particles to accumulate in arbitrarily small regions of the space.
We now show that the lack of stability and/or regularity of a pair potential,  even  preventing particles to accumulate in arbitrarily small regions of the space,
yields to non-thermodynamic situations.

\subsection{Potential with hard core too attractive at large distances}
We  consider a second
case of ``bad" potential which this time does not allow particles to come together at arbitrarily short distances  but it is too {\it
attractive} at large distances.  Let  the space dimension be set at $d=3$ and let $a>0$, $3>\e>0$ and define,
$$
V_2^{\rm bad}(|\xx|)~=~
\begin{cases}
+ \infty & {\rm if}~ |\xx|\le a\\
{-|\xx|^{-3+\e}} &{\rm otherwise}
\end{cases}
\Eq(2.8)
$$
\begin{figure}
\begin{center}
\includegraphics[width=16cm,height=12cm]{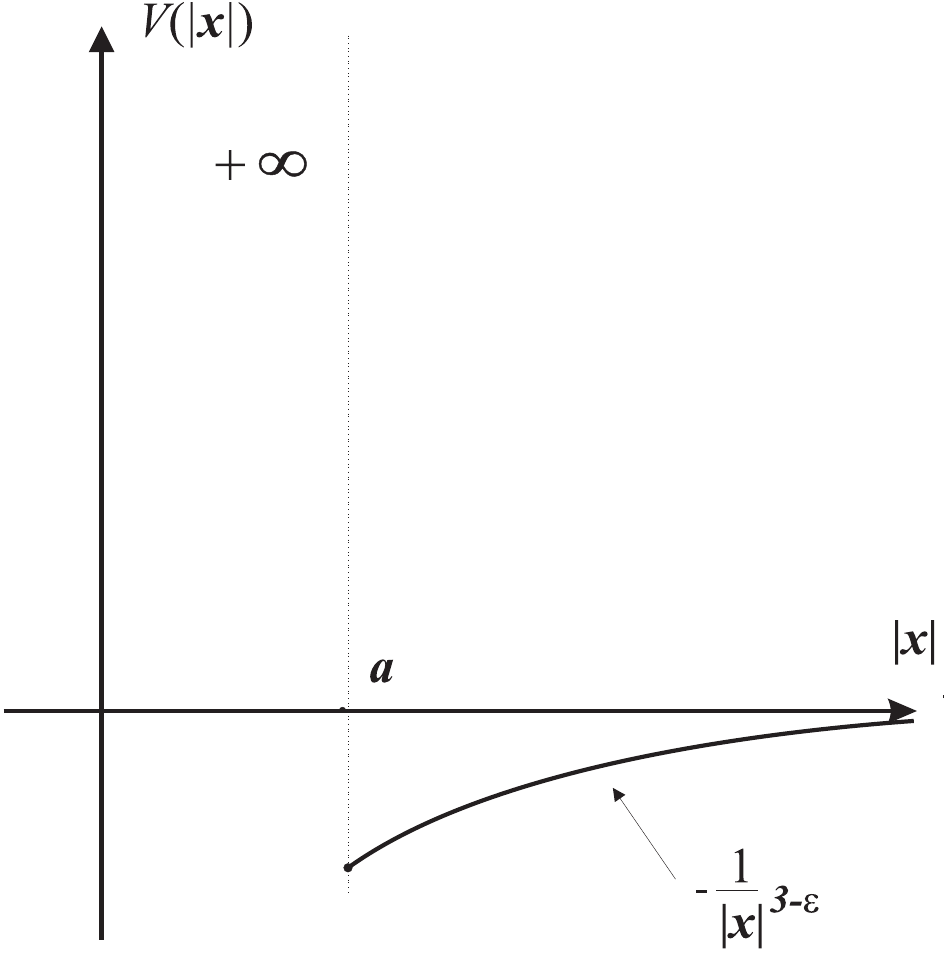}
\end{center}
\begin{center}
Figure 2. The potential $V^{\rm bad}_2(|\xx|)$
\end{center}
\end{figure}

\\It can be proved that this potential, illustrated in
figure 2, is neither stable nor tempered.  This is a first example of a so called  {\it hard-core type potential} (where the hard core condition
is $V_2^{\rm bad}(x)= +\infty $ if $|\xx|\le a$).
\index{pair potential!hard core}
It describes a
system of interacting hard spheres of radius $a$. In fact, since
$V_2^{\rm bad}(|\xx|)$ is $+\infty$ whenever $|\xx|\le a$, then
$U(\xx_1,\dots, \xx_N) =+\infty$ whenever $|\xx_i -\xx_j|\le a$
for some $i,j$, thus the Gibbs factor for such configuration (i.e
where some particles are at distances less or equal $a$) is $
\exp\{ -\b U(\xx_1,\dots, \xx_N)\}~=~0$ and hence it has zero
probability to occur.

\\This means that such system cannot take densities greater than a certain density
$\r_{\rm cp}=c/a^3$ called the close-packing density (here $c$ is a suitable constant\footnote{The constant for the $d=3$ case is actually known: $\p/2\sqrt{3}$.}), where particles are
as near as possible one to each other compatibly with the hard
core condition. E.g., particles are  arranged  (in a close-packed way)
at vertices of a face-centered cubic lattice with nearest neighbors at distance $a$.

\\Let us again consider  the system in the Canonical Ensemble at
fixed inverse temperature $\b$, fixed volume $|\L|$ and fixed number of particles $N$, hence at fixed $\r~=~N/V$. We consider the case $d=3$ and choose $N$ and $|\L|$
in such way that $\r$ is much smaller that the
close-packing density $\r_{\rm cp}$, i.e. $\r / \r_{\rm cp}\ll 1$.

\\We now compare the probability to find the system in a region of the space-phase constituted by micro-states
near the close-packing situation. To be specific, let $y_1,\dots y_N$ be sites of a face-centered cubic lattice
with nearest neighbors at distance $a+\d$ (with $\d$ much smaller than $a$) organized in a close-packed configuration in such way that they fill
completely a cube $\L_{\rm cp}$ of size $L_{\rm cp}$. We
now  suppose that the $N$ particles are at positions $x_1,\dots x_N$ such that $|x_i-y_i|<{\d\over 2}$ for $i=1,\dots N$.
In other words the $i^{\rm th}$ particle can move freely inside the sphere $S_\d^i$ of radius $\d/2$ centered at the vertex $y_i$ of the fcc lattice
without violating the hard core condition of the pair potential. In
this set of configurations the density can vary from the maximum
$\r_{\rm cp}= {c\over a^3}$ to a minimum
density which is surely greater than $\tilde\r_{\rm cp}={c\over (a+2\d)^3}=(1+{2\d\over a})^{-3}\r_{\rm cp}$.
In such configurations  the system does not occupy uniformly all the
available volume $|\L|$, but it rather occupies the fraction ${\r\over\r_{\rm cp}}$ of the available volume $\L$ (since $N=\r_{\rm cp}|\L_{\rm cp}|=\r|\L|$ and thus
$|\L_{\rm cp}|~=~|\L|{\r\over\r_{\rm cp}}$)
leaving  a region
(with volume $|\L|{(\r_{\rm cp}-\r)/ \r_{\rm cp} }$) empty
inside $\L$. Of course such configurations are non thermodynamics. To fix the ideas and simplify the calculations let us suppose that $\L$ is a
sphere of radius $R$ while we recall that by assumption
$\L_{\rm cp}$ is a  cube of size $L_{\rm cp}$. Let of course further assume that $a$ is much smaller  than $L_{\rm cp}$ (actually
$a<{L_{\rm cp}\over 2^{1/\e}}$ will do).

\\In these configurations the potential energy $U$ is strongly negative.
An upper bound for the value of $U$ for such type of
configurations can be obtained by a reasoning similar to that  took us to
the \equ(string).

\\Indeed, suppose that particles are arranged in a configuration near the close packing (in the sense specified above). Then as before
$$
\begin{aligned}
\sum_{j\in \{1,2,\dots, N\}\atop j\neq i}V_2^{\rm bad}(|\xx_j-\xx_i|)&=~\sum_{\D\subset \L}
\sum_{j\in \{1,2,\dots, N\}\atop \;j\neq i,\,\xx_j\in \D}V_2^{\rm bad}(|\xx_j-\xx_i|)|\\
&\le~
\sum_{\D\subset \L}V_2^{\rm bad}(r^{\rm max}_\D)\sum_{j\in \{1,2,\dots, n\}\atop j\neq i,\;\xx_j\in \D}1
\end{aligned}
$$
where this time $r^{\rm max}_\D$ represent the maximal distance of a point in the small cube $\D$ from $x_i$.
Now we have that
$$
\sum_{j\in \{1,2,\dots, n\}\atop j\neq i,\;\xx_j\in \D}1~\ge  \tilde\r_{\rm cp} |\D|
$$
Moreover there exists surely an $\eta_\D$ (depending on $\D$) such that
$$
\sum_{\D\subset \L}V_2^{\rm bad}(r^{\rm max}_\D)|\D|~\le ~-(1-\eta_\D)B_\e(\L_{\rm cp},x_i)
$$
where $\L_{\rm cp}$ denotes the region inside $\L$ where the
$N$ close-packed particles are situated (which, recall, is assumed to be a cube contained in the sphere $\L$) and
$$
B_\e(\L_{\rm cp},x_i)~=~\int_{\xx\in \L_{\rm
cp},~|\xx-x_i|>a}{1\over |\xx-x_i|^{3-\e}}d\xx
$$
Now, since we are assuming that $\L_{\rm cp}$ is a square then, wherever $x_i$ is situated inside $\L_{\rm cp}$, there is always
a quarter of sphere, say $U$,
of radius $L_{\rm cp}/2$ all contained in $\L_{\rm cp}$ such that $x_i$ is the center of this sphere
and hence
$$
\begin{aligned}
B_\e(\L_{\rm cp},x_i)&=~\int_{\xx\in \L_{\rm
cp},~|\xx-x_i|>a}{1\over |\xx-x_i|^{3-\e}}d\xx\\
&\ge~ \int_{x\in U,~|\xx|>a} {1\over |\xx|^{3-\e}}d\xx\\
& =~\int_0^{\pi\over 2}d\varphi\int_0^{\pi\over 2} \sin \theta d\theta\int_a^{L_{\rm cp}\over 2} {r^2\over r^{3-\e}}  dr\\
&=~
{\pi\over 2\e}(L_{\rm cp}^\e-a^\e)\\
&\ge~{\pi\over 4\e}L_{\rm cp}^\e
\end{aligned}
$$
Therefore we get, observing that $L^\e_{\rm cp}=\L_{\rm cp}^{\e\over 3}$
$$
\sum_{j\in \{1,2,\dots, N\}\atop j\neq i}V_2^{\rm bad}(|\xx_j-\xx_i|)~\le~
-(1-\eta_\D)\tilde\r_{\rm cp}{\pi\over 4\e} |\L_{\rm cp}|^{\e\over 3}
$$
and therefore
$$
\begin{aligned}
U(\xx_1,\dots, \xx_N)&=~{1\over 2}\sum_{i=1}^N\sum_{j\in \{1,2,\dots, N\}\atop j\neq i}V_2^{\rm bad}(|\xx_j-\xx_i|)\\
& \le~  - {N}C\r_{\rm cp} |\L_{\rm cp}|^{\e\over 3}
\end{aligned}
$$
with
$$
C= {1\over 2}(1+{2\d\over a})^{-3}(1-\eta_\D){\pi\over 4\e}\Eq(CI)
$$

\\This allows to bound from below the probability $P_{\rm bad}(N)$ to find the system
 (in the canonical ensemble) in such  bad configurations near the close-packing as follows.
$$
\begin{aligned}
P_{\rm bad}(N)& =~{1\over Z_\L (\b,N)} \int_{S^1_\d}d\xx_1\dots
\int_{S^N_\d}d\xx_N {1\over N!} e^{ -\b U(\xx_1,\dots ,\xx_N)}\\
& \ge {1\over Z_\L (\b,N)} \left[{4\over 3}\pi \d^3\right]^N
{1\over N!} e^{ +\b  {N}C\r_{\rm cp}|\L_{\rm cp}|^{\e\over 3} }
\end{aligned}
$$
As far as good configurations (i.e. those with uniform  density $\r= N/|\L|$) are concerned,  proceeding similarly one can bound
$$
\begin{aligned}
\sum_{j\in \{1,2,\dots, N\}\atop j\neq i}V_2^{\rm bad}(|\xx_j-\xx_i|)&=~\sum_{\D\subset \L}
\sum_{j\in \{1,2,\dots, N\}\atop \;\xx_j\in \D,~j\neq i}V_2^{\rm bad}(|\xx_j-\xx_i|)|\\
&\ge
\sum_{\D\subset \L}V_2^{\rm bad}(r^{\rm min}_\D)\sum_{j\in \{1,2,\dots, N\}\atop \;\xx_j\in \D, \;j\neq i}1
\end{aligned}
$$
where this time $r^{\rm min}_\D$ represent the minimal distance of a point in the small cube $\D$ from $x_i$. Since, as before,
 we
are assuming that particles are uniformly distributed in $\L$ with density $\r$ we
may assume that there exists a small $\g>0$  such that, for each $\D\subset \L$
$$
\sum_{j\in \{1,2,\dots,N\}\atop \;\xx_j\in \D\;j\neq i}1~\le (1+\g) \r |\D|
$$
so that
$$
\sum_{j\in \{1,2,\dots, N\}\atop j\neq i}V_2^{\rm bad}(|\xx_j-\xx_i|)~\ge ~\r(1+\g)\,
\sum_{\D\subset\L}V_1^{\rm bad}(r^{\rm min}_\D) |\D|
$$
Now we have
$$
\sum_{\D\subset\L}V_1^{\rm bad}(r^{\rm min}_\D) |\D| \ge  -B_\e(\L,x_i)
$$
with  (recall that $\L$ is a sphere of radius $R$ and thus $R=(3/4\pi)^{1\over 3} |\L|^{1\over 3}$)
$$
\begin{aligned}
B_\e(\L,x_i) & = ~\int_{\xx\in \L,~|\xx-x_i|>a}{1\over |\xx-x_i|^{3-\e}}d\xx\\
&\le~
\int_{\xx\in \L,~|\xx|>a}{1\over |\xx|^{3-\e}}d\xx\\
& ={4\pi\over \e}(R^\e-a^\e)\\
&\le {3^{\e\over 3}(4\pi)^{1+{\e\over 3}}\over \e}|\L|^{\e\over 3}
\end{aligned}
$$
Hence
$$
\sum_{j\in \{1,2,\dots, N\}\atop j\neq i}V_2^{\rm bad}(|\xx_j-\xx_i|)~\ge ~-\r(1+\g)\,
{3(4\pi)^{1+{\e\over 3}}\over \e}|\L|^{\e\over 3}
$$
and therefore for the ``good" configurations $x_1,\dots,x_N$ we have
$$
\begin{aligned}
U(x_1,\dots,x_N)& =~\sum_{1\le i<j\le N}V_1^{\rm bad}(\xx_i -\xx_j)\\
&=~{1\over 2}\sum_{i=1}^{N}
\sum_{j: ~j\neq i}V_1^{\rm bad}(\xx_i -\xx_j)\\
&\ge ~-N~{(1+\g)\over 2}{3(4\pi)^{1+{\e\over 3}}\over \e}\r|\L|^{\e\over 3}\\
&= ~ -N C' \r |\L|^{\e\over 3}
\end{aligned}
$$
with
$$
C'={(1+\g)\over 2}{3(4\pi)^{1+{\e\over 3}}\over \e}\Eq(CIP)
$$
Thus an upper bound for the ``good" configurations is
$$
\begin{aligned}
P_{\rm good}(N)& =~{1\over Z_\L(\b,N)} \int_{\L}d\xx_1\dots
\int_{\L}d\xx_N {1\over N!} e^{ -\b U(\xx_1,\dots ,\xx_N)}\\
&\le~ {1\over Z_\L(\b,N)} |\L|^N {1\over N!} e^{ +\b {N}C'\r |\L|^{\e\over 3}}
\end{aligned}
$$

\\Hence the ratio between the  probability of bad and good configurations is
$$
{P_{\rm bad}(N)\over P_{\rm good}(N)}\ge { \left[{4\over 3}\pi
\d^3\right]^N
 e^{ +\b  NC\r_{\rm cp}  |\L_{\rm cp}|^{\e\over 3} }
\over |\L|^N
 e^{ +\b N C'\r |\L|^{\e\over 3} }}
$$
recalling that
$$
|\L|~=~N/\r ~~~~~~|\L_{\rm cp}|~=~N/\r_{\rm cp} ~$$
we get
$$
{P_{\rm bad}(N)\over P_{\rm good}(N)}\ge \left[{4\over 3}\pi
\d^3\r\right]^N { e^{ +\b C N^{1+{\e\over 3}}\left[\r^{1-{\e\over 3}}_{\rm cp}  - {C'\over C}\r^{1-{\e\over
3}}\right] } \over N^N}
$$
Now observe that factor $\left[\r^{1-{\e\over 3}}_{\rm
cp}  - {C'\over C}\r^{1-{\e\over 3}}\right]$ in the exponential is positive
if $\r$ is sufficiently smaller than $\r_{\rm cp}$, i.e. , recalling \equ(CI) and \equ(CIP), if
$$
{\r\over \r_{\rm cp}}<{1-\eta_\D\over 1+\g} {1\over 16 (12\p)^{\e\over 3}(1+{2\d\over a})^3}
$$
Hence, calling
$$
C_1~=~
 e^{ +\b C\left[\r^{1-{\e\over 3}}_{\rm cp}  -
{C'\over C}\r^{1-{\e\over 3}}\right] },~~~~C_2 ~=~ \left[{4\over 3}\pi
\d^3\r\right]^{-1}
$$
and noting that $C_1>1$ (for $\r$ sufficiently smaller than $\r_{\rm cp}$) we get
$$
{P_{\rm bad}(N)\over P_{\rm good}(N)}\ge { C_1^{ N^{1+{\e\over
3}}} \over C_2^N N^N}
$$
It is just a simple exercise to show that, if $C_1>1$, this ratio
goes to infinity as $N\to \infty$ (write e.g. $N^N~=~e^{N\ln N}$) no matter how large is $C_2$.

\\It is interesting to stress that gravitational interaction behaves at large distances
exactly as $\sim |\xx|^{-3+\e}$ with $\e~=~2$. Hence we can expect
that the matter in the universe  does not obey the laws of
thermodynamics and in particular it is not distributed as a
homogeneous low-density gas (and indeed it is really the case!).

\subsection{Potential with hard core too repuslive at large distances}

\\Consider now a similar case where the pair potential has the same decay as in \equ(2.8), but
now is purely repulsive, i.e. suppose
$$
V_3^{\rm bad}(x)~=~
\begin{cases}
+ \infty &{\text if}  |\xx|\le a\\
{1\over |\xx|^{3-\e}} & {\rm otherwise}
\end{cases}
\Eq(2.9)
$$
\begin{figure}
\begin{center}
\includegraphics[width=10cm,height=7cm]{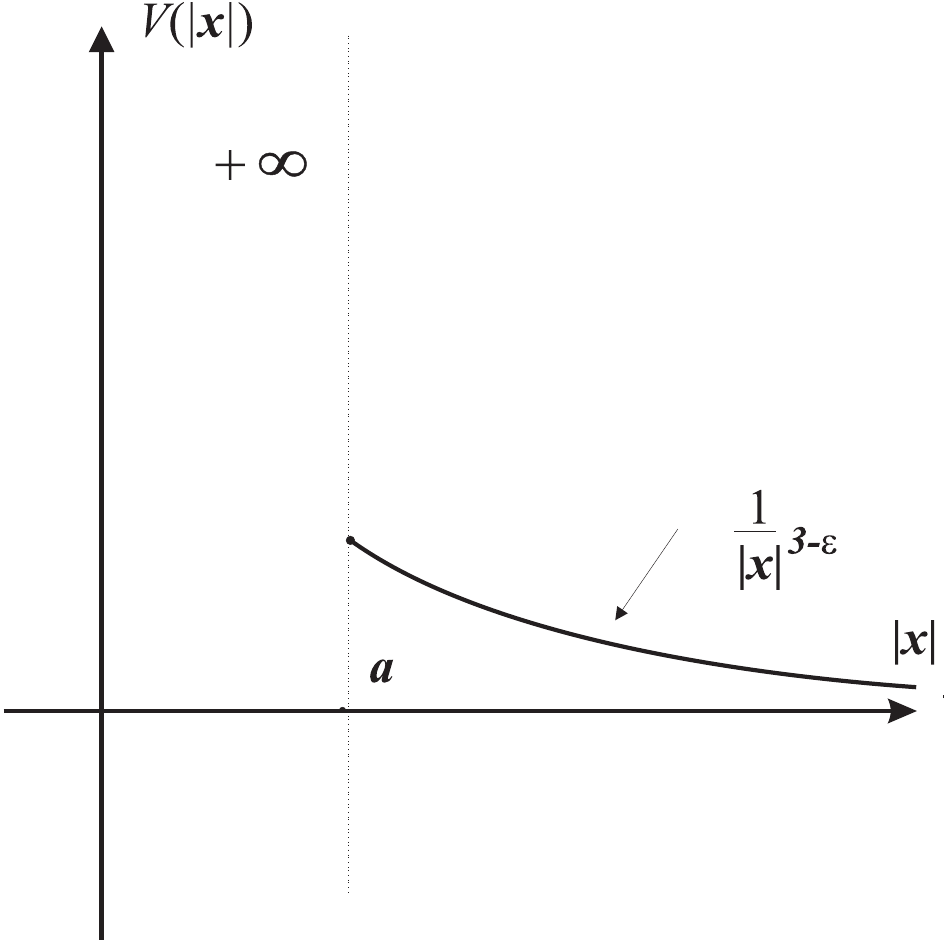}
\end{center}
\begin{center}
Figure 3. The potential $V^{\rm bad}_{3}(|\xx|)$
\end{center}
\end{figure}
This is indeed a stable potential (since it is positive!) but not regular. It will produce with high probability
 bad configurations in which particles tend
to accumulate at the boundary of $\L$ in a close-packed
configuration hence forming a layer. By an argument identical to
the one of case 2, supposing $\r\ll\r_{\rm cp}$, one again shows
that ``close-packed" configurations are far more probable than ``thermodynamic
configurations" (with particles uniformly distributed in $\L$).
Indeed, it is possible to show that particles interacting via a potential of type \equ(2.9)
will tend to leave the center of the container and accumulate in a
layer at the boundary of the container.

\\To simplify the calculations let us suppose that the volume $\L$ enclosing the system is a sphere of radius $L$ and
let us estimate the probability of a bad configuration in which the $N$ particles are in configurations
$x_1,\dots ,x_N$ which are nearly close-packed
(i.e. the  $i^{\rm the}$ particle can move freely inside the sphere $S_\d^i$ of radius $\d/2$ centered at the vertex
$y_i$ of the fcc lattice of step $a+\d$) occupying a region  $\L_{\rm cp}$
which is now a layer stick  at the boundary of the box $\L$ with  thickness  $\D L$.
Now,
by the same argument seen above we have that for such configurations  (recall that now the pair potential is everywhere positive)
we have
$$
\begin{aligned}
\sum_{j\in \{1,2,\dots, N\}\atop j\neq i}V_3^{\rm bad}(|\xx_j-\xx_i|)&=~\sum_{\D\subset \L}
\sum_{j\in \{1,2,\dots, N\}\atop \;j\neq i,\,\xx_j\in \D}V_3^{\rm bad}(|\xx_j-\xx_i|)|\\
& \le~
\sum_{\D\subset \L}V_3^{\rm bad}(r^{\rm min}_\D)\sum_{j\in \{1,2,\dots, n\}\atop j\neq i,\;\xx_j\in \D}1
\end{aligned}
$$
where this time $r^{\rm min}_\D$ represent the minimal distance of a point in the small cube $\D$ from $x_i$.
Now we have that
$$
\sum_{j\in \{1,2,\dots, n\}\atop j\neq i,\;\xx_j\in \D}1~\le  \r_{\rm cp} |\D|
$$
Moreover there exists surely an $\eta_\D$ (depending on $\D$) such that
$$
\sum_{\D\subset \L}V_3^{\rm bad}(r^{\rm max}_\D)|\D|\le (1+\eta_\D)\int_{\xx\in \L_{\rm
cp},~|\xx-x_i|>a}{1\over |\xx-x_i|^{3-\e}}d\xx
$$
where $\L_{\rm cp}$ denotes the region  (a layer now) inside $\L$ where the
$N$ close-packed particles are situated.
Now we have that
$$
\int_{\xx\in \L_{\rm
cp},~|\xx-x_i|>a}{1\over |\xx-x_i|^{3-\e}}d\xx\le \int_{\xx\in \L_{\rm
cp}}{1\over |\xx-x_i|^{3-\e}}d\xx \le
\int_{\xx\in \tilde\L_{\rm cp}}{1\over |\xx-x_i|^{3-\e}}d\xx
$$
where $\tilde\L_{\rm cp}$ is a sphere centered at the $x_i$ with volume $|\tilde\L_{\rm cp}|= |\L_{\rm cp}|$ (i.e. the sphere $\tilde\L_{\rm cp}$
has the same volume of the layer $\tilde\L_{\rm cp}$ and hence with radius $\tilde R= (3|\L_{\rm cp}|/4\pi))^{1\over 3}$. Therefore
$$
B_\e(\L_{\rm cp},x_i)\le \int_{\xx\in \tilde\L_{\rm cp}}{1\over |\xx-x_i|^{3-\e}}d\xx=
{4\pi\over \e} (3/4\pi))^{\e\over 3} |\L_{\rm cp}|^{\e\over 3} \le {4\pi\over \e} |\L_{\rm cp}|^{\e\over 3}
$$
and thus
$$
\sum_{j\in \{1,2,\dots, N\}\atop j\neq i}V_2^{\rm bad}(|\xx_j-\xx_i|)~\le~
{4\pi\over \e} |\L_{\rm cp}|^{\e\over 3}\r_{\rm cp}
$$
Therefore, recalling that $\r_{\rm cp}|\L_{\rm cp}|=N$
$$
U(\xx_1,\dots, \xx_N)\le   {N\over 2}(1+\eta_\D){4\pi\over \e} |\L_{\rm cp}|^{\e\over 3}\r_{\rm cp} =
{N^{1+{\e\over 3}}\over 2}(1+\eta_\D){4\pi\over \e}\r_{\rm cp}^{2\e\over 3}
$$
and hence
$$
P_{\rm bad}(N)~=~{1\over Z_\L(\b,N)} \int_{S^1_\d}d\xx_1\dots
\int_{S^N_\d}d\xx_N {1\over N!} e^{ -\b U(\xx_1,\dots ,\xx_N)}
\ge
$$
$$
\ge {1\over Z_\L(\b,N)} \left[{4\over 3}\pi \d^3\right]^N
{1\over N!} e^{ -\b   {N^{1+{\e\over 3}}\over 2}(1+\eta_\D){4\pi\over \e}\r_{\rm cp}^{2\e\over 3}}
$$
On the other hand, for ``good" configurations $x_1,\dots, x_N$ in which the particles are uniformly distributed in $\L$ with
density $\r=N/|\L|$ (and, say,  never lower than $(1-\e)\r$ for some small $\e$)
we have
$$
\begin{aligned}
\sum_{j\in \{1,2,\dots, N\}\atop j\neq i}V_3^{\rm bad}(|\xx_j-\xx_i|)&=~\sum_{\D\subset \L}
\sum_{j\in \{1,2,\dots, N\}\atop \;j\neq i,\,\xx_j\in \D}V_3^{\rm bad}(|\xx_j-\xx_i|)|\\
& \ge~
\sum_{\D\subset \L}V_3^{\rm bad}(r^{\rm max}_\D)\sum_{j\in \{1,2,\dots, n\}\atop j\neq i,\;\xx_j\in \D}1\\
& \ge~ \sum_{\D\subset \L}V_3^{\rm bad}(r^{\rm max}_\D)(1-\e)\r|\D|
\end{aligned}
$$
Now, as before
$$
\sum_{\D\subset \L}V_3^{\rm bad}(r^{\rm max}_\D)|\D|~\ge~ (1-\eta_\D)\int_{\L:|x-x_i|> a}{1\over |x-x_i|^{3-\e}}dx
$$
But $\int_{\L:|x-x_i|> a}{1\over |x-x_i|^{3-\e}}dx$ is minimal when $x_i$ is on the boundary of the sphere $\L$ and we can bound
$$
\begin{aligned}
\int_{\L:|x-x_i|> a}{1\over |x-x_i|^{3-\e}}dx & \ge \int_{0}^{2\pi}\int_{0}^{\pi/4}\int_a^{L} {r^2\over r^{3-\e}}drd\theta d\phi\\
&=
{\pi\over 2\e}\left[L^\e-a^\e\right]\\
& \ge~{\pi\over 4\e}L^\e\\
& \ge  {|\L|^{\e\over 3}\over 3\e}
\end{aligned}
$$
and thus
$$
U(\xx_1,\dots, \xx_N)\ge   {N\over 6} {1\over \e}|\L|^{\e\over 3}\r
$$
and so
$$
\begn
P_{\rm good}(N) & ={1\over Z_\L (\b,N)} \int_{\L}d\xx_1\dots
\int_{\L}d\xx_N {1\over N!} e^{ -\b U(\xx_1,\dots ,\xx_N)}
\\
& \le {1\over Z_\L (\b,N)} |\L|^N {1\over N!} e^{-\b  {N\over 6}{1\over \e}|\L|^{\e\over 3}\r}
\egn
$$

\\Therefore the ratio between the  probability of bad and good configurations is now
$$
\begn
{P_{\rm bad}(N)\over P_{\rm good}(N)} &  \ge  { \left[{4\over 3}\pi
\d^3\right]^N
 e^{ -\b   {N\over 2}(1+\eta_\D){4\pi\over \e} |\L_{\rm cp}|^{\e\over 3}\r_{\rm cp}}
\over |\L|^N
 e^{-\b  {N\over 6}{1\over \e}|\L|^{\e\over 3}\r}}\\
& =~ \left[{4\pi\d^3\over 3|\L|}\right]^N e^{\b {N^{1+{\e\over 3}}\over \e}\left[4\p(1+\eta_\D)\r_{\rm cp}^{4\over 3}-{\r^{4\over 3}\over 6} \right]}
\egn
$$
where  recall that $|\L|=\r N $ and $|\L_{\rm cp}|=\r_{\rm cp} N $. Hence we get
$$
\begn
{P_{\rm bad}(N)\over P_{\rm good}(N)} & \ge
 \left[{4\pi\d^3\over 3|\L|}\right]^N e^{\b {N^{1+{\e\over 3}}\over \e}
 \left[4\p(1+\eta_\D)\r_{\rm cp}^{4\over 3}-{\r^{4\over 3}\over 6} \right]}\\
& \ge \left[{4\pi\d^3\over 3|\L|}\right]^N e^{\b {N^{1+{\e\over 3}}\over \e}\left[\r_{\rm cp}^{4\over 3}-{\r^{4\over 3}} \right]}
\egn
$$
which diverges as $N\to \infty$.

\\Again physics gives us an example of  potential such as \equ(2.9). That is, the purely repulsive
Coulomb potential (for which $\e=2$) between charged particles {\it with the same
charge}. Indeed electrons in excess inside a conductor tend to
accumulate at the boundary of the conductor forming layers.

\section{The Ruelle example}
In Example 1, $V_1^{\rm bad}$ was tempered but not stable. It was a potential
strictly negative at the origin. Therefore a necessary condition for a pair potential $V(x)$ to be stable is
$V(0)\ge 0$. In the two examples of the previous section we have seen that potentials with a hard-core
preventing that particles accumulate in an arbitrarily  small region of $\mathbb{R}^d$
but with a two slow decay at large distances (i.e non absolutely integrable and hence non regular)  yield to non thermodynamic behaviors.
Failure of stability however can occur also for regular
potential which are strictly positive in the neighborhood of the origin. We will
consider here a very interesting and surprising example of a potential in $d=3$ dimensions which is strongly
positive near the origin and  regular according to Definition \ref{regular}, that nevertheless
is a non stable potential. This subtle example, originally due to
Ruelle, illustrates very well the intuitive fact that the stability condition is there to avoid  the collapse
of many particles into a bounded region of $\mathbb{R}^d$ and it also
shows the key role played by the continuum  where we have always the
possibility to put an arbitrary number of particles in a small region of $\mathbb{R}^d$.

\\Let $R>0$ and let $\d>0$ and let
$$
V^{\rm bad}_4(\xx)~=~
\begin{cases}
11 & {\rm if} ~|\xx|<R-\d\\
-1 & {\rm if}~
R-\d\le|\xx|\le R+\d\\
0 & {\rm otherwise}
\end{cases}
\Eq(2.10b)
$$
This is clearly a  regular potential (actually it is bounded and {\it finite
range}: particles at distances greater than $R+\d$ do
not interact at all). Neverthless, we will show that this is a non-stable
potential by proving that the grand canonical partition function
diverges for such a potential.
\begin{figure}
\begin{center}
\includegraphics[width=7cm,height=7cm]{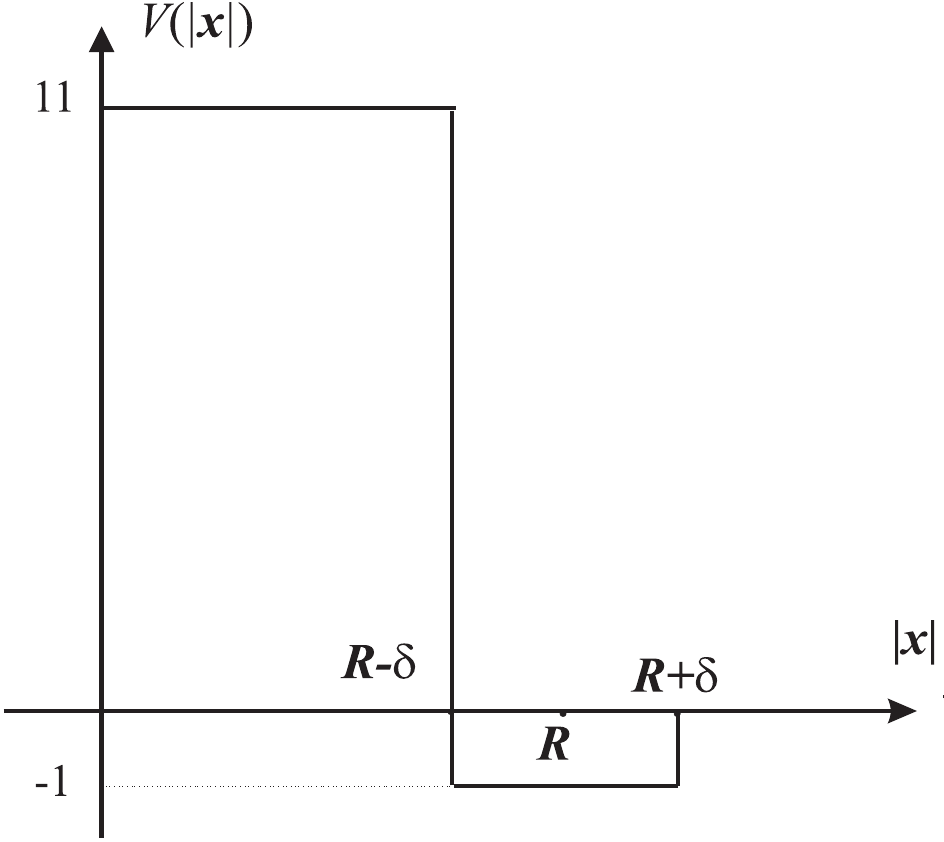}
\end{center}
\begin{center}
Figure 4. A catastrophic potential. The Ruelle potential
\end{center}
\end{figure}

\\Fix an integer $n$ and
let $\tilde\xx_1,\dots ,\tilde \xx_n$ be sites of a face-centered
cubic lattice in three dimensions with nearest neighborhoods at
distance $R$. We recall that a face-centered cubic lattice is a lattice
whose unit cells are cubes, with lattice points
at the center of each face of the cube, as well as at the vertices.

\\Let $B(\tilde\xx_1,\dots ,\tilde \xx_n)=\{\{i,j\}:
|\tilde\xx_i-\tilde\xx_j|=R\}$ the set of nearest neighborhood
bonds in $\tilde\xx_1,\dots ,\tilde \xx_n$ and let
$B_n~=~|B(\tilde\xx_1,\dots ,\tilde \xx_n)|$ the cardinality of this
set. Suppose that $\tilde\xx_1,\dots ,\tilde \xx_n$ are arranged
in such way to maximize $B_n$, hence in a ``close-packed"
configuration. We remind that in the face-centered cubic lattice
every site has $12$ nearest neighborhoods. If we take
$\tilde\xx_1,\dots ,\tilde \xx_n$ to be close-packed, then if $n$
is sufficiently large, the number of nearest neighborhoods bonds
are of the order $B_n\sim 6n$. In fact  each site is the vertex of
12 nearest neighborhoods bonds, and each nearest neighborhood bond is
shared between two sites. If the fixed integer  $n$ is chosen sufficiently large then
it is surely possible to find a close-packed configuration
$\tilde\xx_1,\dots ,\tilde \xx_n$ in such way that
\def\zz{{\bf z}}
$$
B_n>{11\over 2}n +\e \Eq(2.11)
$$
for some $\e>0$. Suppose thus that $n$ is chosen so large such that
the close-packed face centered configuration $\tilde\xx_1,\dots ,\tilde \xx_n$  is such that   \equ(2.11) is satisfied. Then,
$$
\sum_{i=1}^{n}\sum_{j=1}^{n}V_4^{\rm bad}(|\tilde\xx_i-\tilde\xx_j|)~=~n
V_4^{\rm bad}(0)+2B_n V_4^{\rm bad}(R)~=~11n-2B_n<-2\e<0
$$
Consider now the
 function
$$
\Phi: \mathbb{R}^{6n}\to \mathbb{R}: (\yy_1,\dots
,\yy_n,z_1,\dots, z_n)\mapsto
\sum_{i=1}^{n}\sum_{j=1}^{n}V_4^{\rm bad}(|\yy_i-z_j|)
$$
We have that $\Phi(\tilde\xx_1,\dots ,\tilde
\xx_n,\tilde\xx_1,\dots ,\tilde \xx_n)<-2\e$. Moreover, if
$$
|\yy_1 -\tilde\xx_1|<{\d\over 2},~~~\dots,~~~ |\yy_n
-\tilde\xx_n|<{\d\over 2},~~~ |z_1 -\tilde\xx_1|<{\d\over 2},
~~~\dots ~~~|z_n -\tilde\xx_n|<{\d\over 2}
$$
with $\d$ being the constant appearing in \equ(2.10), then also
$$
\Phi(\yy_1,\dots
,\yy_n,z_1,\dots, z_n)=\sum_{i=1}^{n}\sum_{j~=~1}^{n}V_4^{\rm bad}(|\yy_i-z_j|)<-2\e\Eq(2.12)
$$
Let now   $S_{\d}^i~=~\{\xx\in
\mathbb{R}^3:|\xx-\tilde\xx_i|<\d/2\}$ be the open sphere
in
$\mathbb{R}^3$ with radius $\d/2$ and center in $\tilde \xx_i$ and
define $\L_\d=\cup_{i=1}^{n}S_\d^i$ ($\L_\d$ is of course a subset
of $\mathbb{R}^3$).

\\Let $s$ be a positive integer and  define $M_s$ as the following subset of $\mathbb{R}^{3sn}$
$$
M_s~=~\{(\xx_1,\dots ,\xx_{sn})\in \mathbb{R}^{3sn}:
|\xx_{(i-1)s+p}-\tilde \xx_i|<\d/2,~~ p=1,...,s
~~i~=1,..., n\}
$$
Namely, $(\xx_1,\dots ,\xx_{sn})\in M_s$ means that  the $sn$-uple
$(\xx_1,\dots ,\xx_{sn})$ is such that the first $s$ variables
$\xx_1,\dots, \xx_s$ of the $sn$-uple  are all inside the sphere
$S_{\d}^1$, the variables $\xx_{s+1},\dots, \xx_{2s}$ are all
inside the sphere  $S_{\d}^2$, the variables $\xx_{2s+1},\dots,
\xx_{3s}$ are all inside the sphere  $S_{\d}^3$, and so on until
the last $s$ variables of the $sn$-uple, which are
$\xx_{(n-1)s+1},\dots, \xx_{sn}$, and are all inside the sphere
$S_{\d}^n$.

\\Now, if $(\xx_1,\dots ,\xx_{sn})\in M_s$, then,
$$
\begn
U(\xx_1,... ,\xx_{sn}) & =\sum_{1\le i<j\le sn}V_4^{\rm bad}(|\xx_i-\xx_j|)\\
&=
{1\over 2}\left[\sum_{i=1}^{sn}\sum_{j=1}^{sn}V_4^{\rm bad}(|\xx_i-\xx_j|)- s
n V_4^{\rm bad}(0)\right]\\
&= {1\over
2}\left[\sum_{p=1}^s\sum_{p'=1}^s\sum_{i=1}^{n}\sum_{j=1}^{n}
V_4^{\rm bad}(|\xx_{(i-1)s+p}-\xx_{(j-1)s+p'}|)- s n V_4^{\rm bad}(0)\right]
\egn
$$
Now, for fixed $p$ and $p'$ call $\xx_{(i-1)s+p}=\yy_i$ and
$\xx_{(j-1)s+p'}~=~z_j$. Then,  by definition of $M_s$ we have that
$|\yy_i-\tilde\xx_i|<\d/2$ and  $|z_j-\tilde\xx_j|<\d/2$. Hence
by \equ(2.12)
$$
\begn
\sum_{i=1}^{n}\sum_{j~=~1}^{n}
V_4^{\rm bad}(|\xx_{(i-1)s+p}-\xx_{(j-1)s+p'}|)& =\sum_{i=1}^{n}\sum_{j~=~1}^{n}
V_4^{\rm bad}(|\yy_{i}-z_{j}|)\\
& <-2\e
\egn
$$
hence we conclude that
$$
U(\xx_1,\dots ,\xx_{sn})< -\left(s^2\e+ {11\over 2} s
n\right)~~~~~~~~~~{\rm whenever}~~ (\xx_1,\dots ,\xx_{sn})\in M_s
$$
Therefore, if $V_\d$ denote the volume of the sphere of radius
$\d/2$ in $\mathbb{R}^3$ we have
$$
\Xi_\L(\b,\l)~=~1+ \sum_{N=1}^{\infty}{\l^N \over N!}
\int_{\L}d\xx_1\dots\int_{\L} d\xx_{N} ~e ^{-\b U(\xx_1,\dots
,\xx_N)}\ge
$$
$$
\ge 1+ \sum_{s=1}^{\infty}{\l^{sn} \over (sn)!}
\int_{\L}d\xx_1\dots  \int_{\L} d\xx_{sn} ~e ^{-\b U(\xx_1,\dots
,\xx_{sn})} \ge
$$
$$
\ge 1+ \sum_{s=1}^{\infty}{\l^{sn} \over (sn)!}
\int_{M_s}d\xx_1\dots  d\xx_{sn} ~e ^{-\b U(\xx_1,\dots
,\xx_{sn})}~\ge
$$
$$
\ge~\sum_{s=1}^{\infty}{\l^{sn} \over (sn)!} V_\d^{sn}
~e ^{{\b}(s^2\e+ {11\over 2} s n)  }~
=~ \sum_{s=1}^{\infty}{\left[\l V_\d e^{{11\b\over 2}}\right]^{sn}
\over (sn)!}\left(e^{\b\e}\right)^{s^2} ~=~\sum_{s=1}^\infty a_s
$$
This last  series is a series with positive terms whose term
$s^{\rm th}$ term is given by
$$
a_s~=~ {\left[\l V_\d e^{{11\b\over 2}}\right]^{sn} \over
(sn)!}\left(e^{\b\e}\right)^{s^2}
$$
In this formula recall that $n$ is a fixed value while $s$ is the variable
integer. It is now easy to show that $\sum_{s=1}^\infty a_s$
diverges. As a matter of fact, by the ratio test for positive term
series, we have that
$$
\lim_{s\to \infty} {a_{s+1}\over a_s}~=~\lim_{s\to
\infty}{(sn)!\over [(s+1)n)]!} \left[\l V_\d e^{{11\b\over
2}}\right]^n \left(e^{\b\e}\right)^{2s+1}\ge
$$
$$
\ge \lim_{s\to \infty} \left[{\l V_\d e^{{11\b\over 2}}\over
(s+1)n}\right]^n \left(e^{2\b\e}\right)^{s}~=~\left[{\l V_\d
e^{{11\b\over 2}}\over n}\right]^n \lim_{s\to\infty}
{\left(e^{2\b\e}\right)^{s}\over (s+1)^n}~=~+\infty
$$
~

\newpage

\section{Admissible potentials}
Let us start by discussing first three classes of potentials which are both tempered and
stable. \vskip.2cm
\\1 - {``Repulsive"  temperate potentials}

$$
V\ge 0 ~~~{\rm and} ~~~~\int _{|x|\ge r_0}V(x) dx <+\infty~~~~~\mbox{for some $r_0\ge 0$}
$$
\vskip.2cm
\\2 -   {Positive type potentials}:  absolutely integrable potentials which are the Fourier Transform of a positive function
$$
V(\xx)~=~\int\tilde V(\kk)e^{i\kk\cdot\xx}d^3\kk, ~~~~~~ \tilde
V(\kk)\ge 0
$$
\vskip.2cm
\\3 - {Lennard-Jones Potentials}: there exist strictly positive constants $a$, $C_1$, $C_2$ and $\e$ such that
$$
V(\xx)\ge {C_1\over|\xx|^{3+\e}}~~{\rm for}~~ x\le a, ~~~~
|V(\xx)|\le {C_2\over|\xx|^{3+\e}}~~{\rm for}~~x> a
$$
\vskip.2cm\index{pair potential!repulsive}\index{pair potential!positive type}
\index{pair potential!Lennard-Jones}

\\The potentials in the class 1, which are automatically stable are called ``repulsive" in a somewhat improper manner: a positive pair potential should be monotonically decreasing in order to be really purely repulsive.
\\We remark that tempered potentials with hard core ($V(\xx)~=~+\infty$ if $|\xx|\le a$) can be included
in case 3 or in case 1.
\begin{figure}
\begin{center}
\includegraphics[width=7cm,height=7cm]{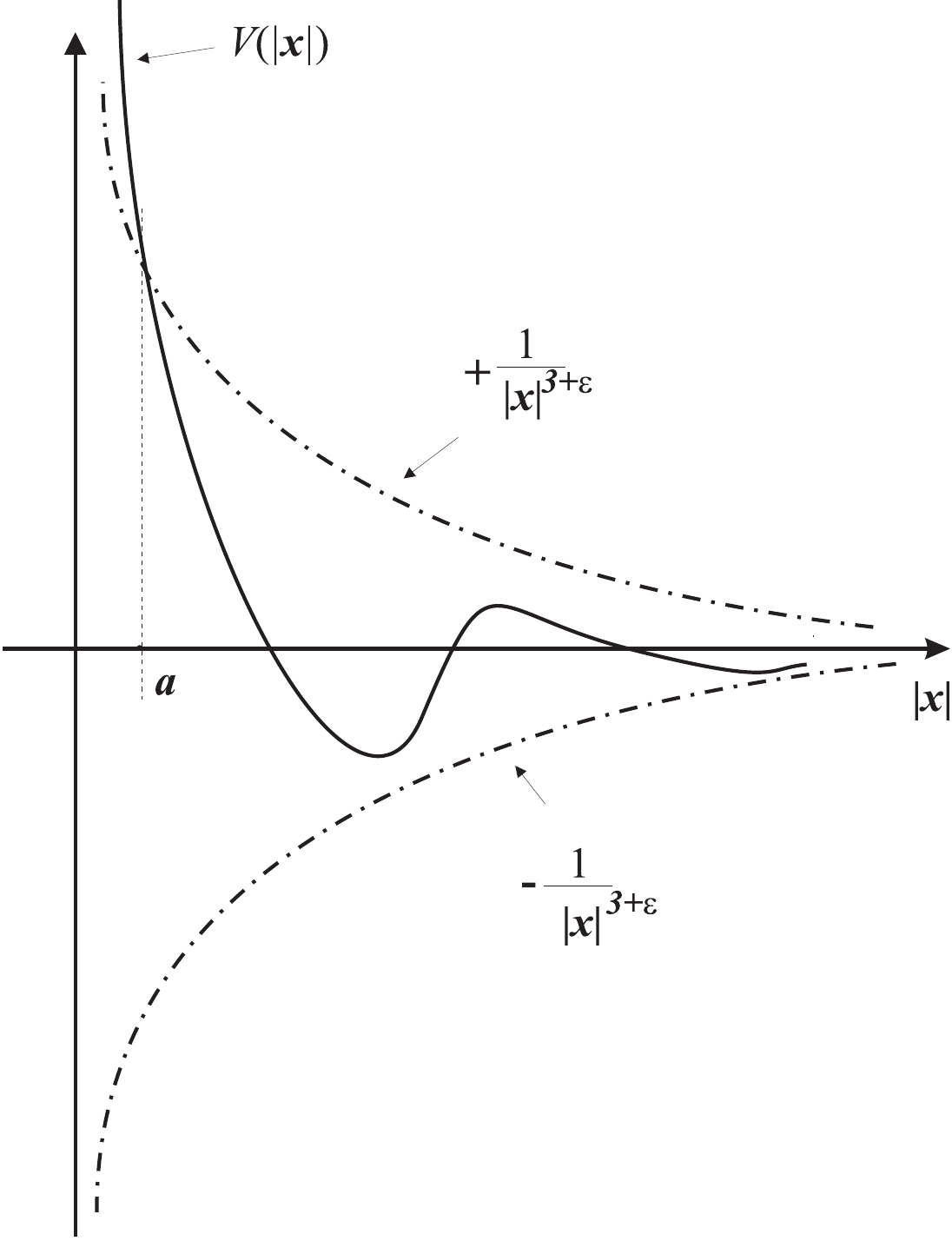}
\end{center}
\begin{center}
Figure 5. A Lennard-Jones potential
\end{center}
\end{figure}

\\It is immediate to see that a potential of type 1 is admissible. It is tempered
(i.e. it satisfies \equ(2.7)) by definition and it is stable,
since from $V(\xx)\ge 0$ we get
$$
U(\xx_1,\dots ,\xx_n)\ge 0~~~~~~\forall n,\xx_1,\dots,\xx_n
$$
hence \equ(2.6) is verified. Moreover, since $\inf_{n, \xx_1,\dots,\xx_n}  U(\xx_1,\dots ,\xx_n)=0$, we have that $B_V~=~0$.

\\Let us show now that a potential of type 2 is admissible. A positive type
potential $V$ is stable because
$$
\begn
U(\xx_1,\dots ,\xx_n)& =\sum_{1\le i<j\le n}V(\xx_i-\xx_j)\\
&=~ {1\over
2} \sum_{ i\neq j}V(\xx_i-\xx_j)\\
&={1\over 2} \sum_{ i,
j}V(\xx_i-\xx_j)-{n\over 2}V({0})\\
&= {1\over 2} \sum_{ i, j}\int e^{ {\rm i}\kk\cdot (\xx_i-\xx_j)} \tilde
V(\kk)d^3\kk -{n\over 2}V(0)\\
&=
~ {1\over 2} \int\left[ \sum_{ i,
j}e^{ {\rm i}\kk\cdot (\xx_i-\xx_j)}\right] \tilde V(\kk)d^3\kk -{n\over
2}V(0)\\
& =~{1\over 2} \int \left|\sum_{ i=1}^ne^{ {\rm i}\kk\cdot \xx_i}\right|^2
\tilde V(\kk)d^3\kk -{n\over 2}V(0)\ge -{n\over 2}V(0)
\egn
$$
where the last inequality follows from the assumption $\tilde
V(\kk)\ge 0$.

\\Hence the stability condition
\equ(2.6) is satisfied by choosing $B~=~{1\over 2}V(0)$. Potential 2
is tempered since it is supposed to admit Fourier transform, hence
it needs to be absolutely integrable. \vskip.5cm

\\The example 3, the Lennard Jones type potential, is of major interest in  applications, since it  is the most popular and used by  physicists and chemists
to model the interactions between molecules in real gases.

\\Let us  show that Lennard-Jones type potential is admissible.
Such a potential is indeed regular by definition, thus we have
just to show that it is also stable. In a given configuration
$\xx_1,\dots, \xx_n$, denote with the number $r_{\rm
min}~=~\min_{i\neq j}|\xx_i-\xx_j|$ the minimum distance between
particles.

\\We distinguish two cases: 1) $r_{\rm min}< {a/2}$; and  2) $r_{\rm min}\ge{a/2}$.
\vskip.2cm
\\Case 1)  $r_{\rm min}< a/2$.
Suppose without loss of generality that $|\xx_1 -\xx_2|=r_{\rm
min}$ and all other distances are at distances greater or equal
than $r_{\rm min}$. We have
$$
\begn
U(\xx_1,\dots ,\xx_n) & =V(\xx_1-\xx_2)+ \sum_{j=3}^n V(\xx_1
-\xx_j)+ U(\xx_2, \dots ,\xx_n)\\
&\ge {C_1\over r_{\rm min}^{3+\e}} + \sum_{j\in \{3,\dots, n\}\atop
|\xx_1-\xx_j|\ge a} V(\xx_1 -\xx_j)+ U(\xx_2, \dots ,\xx_n)
\egn
$$
Note that the second sum of the last inequality above is over those
particles with index $j\in \{2,3,\dots,n\}$ at distance greater or
equal to $a$. The inequality follows from the fact that  $V(\xx_1
-\xx_j)\ge 0$ if $|\xx_1 -\xx_j|<a$.

\\We will now get a lower bound for the term
$ \sum_{j\in \{3,\dots, n\}:~ |\xx_1-\xx_j|\ge a} V(\xx_1
-\xx_j)$. First note that, since in this sum $|\xx_1-\xx_j|\ge a$
for all $j$, we can bound
$$
\sum_{j\in \{3,\dots, n\}:\atop |\xx_1-\xx_j|\ge a} V(\xx_1
-\xx_j)\ge- \sum_{j\in \{3,\dots, n\}:\atop |\xx_1-\xx_j|\ge a}
{C_2\over |\xx_1 -\xx_j|^{3+\e}}
$$
We then proceed  as follows. Draw around each $\xx_j$ a cube $Q_j$
with side $r_{\rm min}/ {\sqrt 12}$ (in such way that its maximal
diagonal is $r_{\rm min}/2$) with $\xx_j$ being the vertex
farthest away from $\xx_1$. Since any two points among $\xx_3
,\dots, \xx_n$ are at mutual distances $\ge r_{\rm min}$ the cubes
so constructed do not overlap. Moreover, if we consider the open
sphere $S_1 ~=~\{\xx\in \mathbb{R}^3: |\xx-\xx_1|< {a\over 2}\}$ and
denote by $S_1^c$ its complementar in $ \mathbb{R}^3$,  then by
construction (recall that $r_{\rm min}<a/2$ and $|x_1-x_j|\ge a$) all cubes $Q_j\subset
S_1^c$, i.e., they lay outside the open sphere with center $\xx_i$
and radius $a/2$. Furthermore, we have
$$
{1\over |\xx_1 -\xx_j|^{3+\e}}\le {(\sqrt{12})^3\over {r_{\rm
min}^3}}\int _{Q_j} {1\over |\xx_1 -\xx|^{3+\e}} d\xx
$$
recall in fact that the cube $Q_j$ is chosen in such way that
$|\xx -\xx_1|\le |\xx_1-\xx_j|$ for all $\xx\in Q_j$.
\\Therefore

$$
\begn
\sum_{j\in \{3,\dots, n\}:\atop |\xx_1-\xx_j|\ge a}^n V(\xx_1
-\xx_j)
& \ge- C_2{(\sqrt{12})^3\over {r_{\rm min}^3}}\sum_{j\in
\{3,\dots, n\}:\atop |\xx_1-\xx_j|\ge a}^n \int _{Q_j} {1\over
|\xx_1 -\xx|^{3+\e}} d^3\xx\\
& =
- C_2{(\sqrt{12})^3\over {r_{\rm min}^3}}\int _{\cup_j Q_j}
{1\over |\xx_1 -\xx|^{3+\e}} d^3\xx\\
& \ge
 - C_2{(\sqrt{12})^3\over
{r_{\rm min}^3}}\int _{|\xx-\xx_1|\ge a/2} {1\over |\xx-\xx_1|^{3+\e}}
d^3\xx\\
&= -C_2{(\sqrt{12})^3\over
{r_{\rm min}^3}}\int _{|\yy|\ge a/2} {1\over |\yy|^{3+\e}}
d^3\yy\\
& = {C_2K_a(\e)\over r_{\rm min}^3}
\egn
$$
where
$$
K_a(\e)~=~ {(\sqrt{12})^3}{4\pi2^\e\over \e a^\e}
$$
So
$$
U(\xx_1,\dots ,\xx_n)~ \ge~ {C_1\over r_{\rm min}^{3+\e}}
-{C_2K_a(\e)\over r_{\rm min}^3}+ U(\xx_2, \dots ,\xx_n)
$$
I.e.
$$
U(\xx_1,\dots ,\xx_n)\ge  -C_\e+ U(\xx_2, \dots ,\xx_n)
$$
where  $-C_\e$  is the minimum of the
function $F(x) ={C_1\over x^{3+\e}}- {C_2K_a(\e)\over x^{3}}$ for $x>0$.
Therefore, iterating this formula we
get
$$
U(\xx_1,\dots ,\xx_n)\ge  -C_\e n
$$

\vskip.2cm
\\Case 2)  $r_{\rm min}\ge a/2$. In this case we write
$$
U(\xx_1, \dots, \xx_n)~=~{1\over 2}\sum_{i=1}^n \sum_{j\in
\{1,2,\dots ,n\}:\,\,j\neq i}V(\xx_i -\xx_j)
$$
and we will get an estimate $\sum_{j\in \{1,2,\dots ,n\},\,j\neq
i}V(\xx_i -\xx_j)$. First we bound
$$
\sum_{j\in \{1,2,\dots ,n\}\atop j\neq i}V(\xx_i -\xx_j)\ge-
\sum_{j\in \{1,2,\dots ,n\}\atop |x_i-x_j|\ge a} {C_2\over |\xx_i
-\xx_j|^{3+\e}}
$$
We then proceed analogously as before.  This time, for fixed $i$,
we can draw around each $\xx_j$ a cube $Q_j$ with side $a/ {\sqrt
48}$ (such that the maximal diagonal of $Q_j$ is $a/4$) with
$\xx_j$ being the vertex farthest away from $\xx_i$. Since any two
points among $\xx_1 ,\dots, \xx_n$ are at mutual distances $\ge
a/2$ the cubes so constructed do not overlap. Moreover if we
consider the open  sphere $S_i ~=~\{\xx\in \mathbb{R}^3:
|\xx-\xx_i|< {a\over 2}\}$ and denote by $S_i^c$ its complementar
in $ \mathbb{R}^3$,  then by construction $\cup_{j\neq i}
Q_j\subset S_i^c$. In other words all cubes $Q_j$ lay outside the
open sphere with center $\xx_i$ and radius $a/2$. Furthermore we
have
$$
{1\over |\xx_i -\xx_j|^{3+\e}}\le {(\sqrt{48})^3\over a^3}\int
_{Q_j} {1\over |\xx_i -\xx|^{3+\e}} d^3\xx
$$
recall in fact that the cube $Q_j$ is chosen in such way that
$|\xx -\xx_i|\le |\xx_i-\xx_j|$ for all $\xx\in Q_j$.

\\Therefore

$$
\begn
\sum_{j\in \{1,2,\dots ,n\}\atop |x_1-x_j|\ge a}V(\xx_i -\xx_j)& \ge~-
C_2{(\sqrt{48})^3\over a^3}\sum_{j\in \{1,2,\dots ,n\}\atop x_1-x_j|\ge a}\int _{Q_j} {1\over |\xx_i -\xx|^{3+\e}} d^3\xx\\
& =- C_2{(\sqrt{48})^3\over a^3}\int _{\cup_{j\neq i} Q_j} {1\over |\xx_i
-\xx|^{3+\e}} d^3\xx\\
& \ge - C_2{(\sqrt{48})^3\over a^3}\int
_{|\xx-x_i|\ge {a\over 2}} {1\over |\xx_i -\xx|^{3+\e}} d^3\xx\\
& \ge~ -{\bar K_a}(\e)
\egn$$
where
$$
\bar K_a(\e)~=~ C_2{(\sqrt{48})^3\over a^3}\int _{|\xx-x_i|\ge a/2} {1\over
|\xx_i -\xx|^{3+\e}} d^3\xx
$$
Hence
$$
\sum_{j\in \{1,2,\dots ,n\}\atop j\neq i}V(\xx_i -\xx_j) ~\ge~- {\bar K_a}(\e)\Eq(0)
$$
and
$$
U(\xx_1,\dots ,\xx_n)~\ge~ -{\bar K_a(\e)\over 2}n
$$
\subsection{Basuev Criteria}
We now present two very efficient criteria for stability of tempered potentials first proposed  by Basuev \cite{Ba1}. To do this we need to itroduce the following notation.
Given a potential $V(x)$ let
$$
V(a)= \inf_{|x|=a}V(x)\Eq(vudia)
$$
Given $n\in \mathbb{N}$,   let us consider the set of configurations defined by
$$
\mathcal{S}_n(a)=\{(x_1,\dots, x_n)\in \mathbb{R}^{dn}: ~|x_i-x_j|>a,~~~{\rm for}~ 1\le i<j\le n\}
$$
I.e. $\SS_n(a)$ is formed by all configurations of $n$ particles in $\mathbb{R}^{dn}$ such that any two particles are at mutual
distance greater than $a$.

\begin{defi}\label{defbas}
A potential $V(x)$  is called Basuev if there exists $a>0$ such that $V(a)>0$,
$V(x)\ge V(a)$ for all $|x|\le a$ and
$$
V(a)\ge \mu(a)\Eq(cba)
$$
where
$$
\mu(a)= \sup_{n\in \mathbb{N},\atop{(x_1,\dots, x_n)\in \mathcal{S}_n(a)}}\sum _{i=1}^n V^-(x_i)\Eq(mua)
$$
with
$$
V^-(x)=\max \{ - V(x), 0\}.
$$
V(x) is called strongly Basuev if
$$
V(a)\ge 2\mu(a)\Eq(strobasu)
$$
\end{defi}

\\We now prove the following theorem.
\begin{teo}\label{basta}
Let $V(x)$ be a Basuev pair  potential with Basuev constants $a$, $V(a)$ and $\mu(a)$ as in definition \ref{defbas}.
Then  $V(x)$ is stable with stability constant $B_V$ such that
$$
B_V\le \mu(a)
$$
\end{teo}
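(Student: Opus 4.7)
The plan is to prove $U(x_1,\dots,x_n)\ge -n\mu(a)$ for every $n\ge 2$ by strong induction on $n$, which immediately delivers $B_V\le\mu(a)$. The base case $n=2$ is an easy dichotomy: if $|x_1-x_2|\le a$, then the standing assumption $V(x)\ge V(a)$ on this ball combined with $V(a)\ge\mu(a)\ge 0$ yields $V(x_1-x_2)\ge 0\ge -2\mu(a)$; if $|x_1-x_2|>a$, then the singleton $\{x_1-x_2\}$ trivially belongs to $\SS_1(a)$, so by the very definition of $\mu(a)$ one has $V^-(x_1-x_2)\le\mu(a)$, giving $V(x_1-x_2)\ge -\mu(a)\ge -2\mu(a)$.

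For the inductive step the natural tactic is to locate a particle $x_k$ whose total interaction with the rest, $S_k:=\sum_{j\neq k}V(x_k-x_j)$, satisfies $S_k\ge -\mu(a)$; if such a $k$ is found, the inductive hypothesis applied to the remaining $n-1$ particles produces $U\ge S_k-(n-1)\mu(a)\ge -n\mu(a)$. To estimate $S_k$, I would split the other particles into close neighbours $C_k=\{j:|x_k-x_j|\le a\}$ and far neighbours $F_k$. The close portion contributes at least $|C_k|V(a)\ge |C_k|\mu(a)$ thanks to $V(x)\ge V(a)$ on the ball of radius $a$. For the far portion, I would extract a maximal sub-family $M\subseteq F_k$ for which the translated configuration $\{x_j-x_k:j\in M\}$ belongs to $\SS_{|M|}(a)$; taking $x_k$ as reference point in the definition of $\mu(a)$ then immediately yields $\sum_{j\in M}V(x_k-x_j)\ge -\mu(a)$. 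Each leftover $j\in F_k\setminus M$ is, by maximality of $M$, at distance $\le a$ from some anchor $j'\in M$, so the close pair $(j,j')$ contributes a bonus $V(x_j-x_{j'})\ge V(a)\ge\mu(a)$ to the \emph{global} energy $U$.

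The main obstacle is the combinatorial bookkeeping that knits these ingredients into the single clean bound $S_k\ge -\mu(a)$. The negative mass contributed by the leftovers $F_k\setminus M$ must be precisely cancelled by the $V(a)$-bonuses of the induced close pairs $(j,j'(j))$, and the Basuev inequality $V(a)\ge\mu(a)$ is exactly what makes this trade-off balance: one $\mu(a)$ of deficit from a leftover far neighbour is paid by one $V(a)$ from its associated close pair. When no single particle $x_k$ is extremal enough to close the accounting — which can happen in a tightly clustered configuration — the scheme must be upgraded: one removes not a single particle but an entire close-connected cluster $K$, i.e.\ a connected component of the graph on $\{1,\dots,n\}$ with edges $\{(i,j):|x_i-x_j|\le a\}$. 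Every edge leaving $K$ is then automatically a far pair, while every internal close edge of $K$ pays a bonus $V(a)\ge\mu(a)$; arranging for the total deficit to be at most $|K|\mu(a)$ closes the induction and yields the stability constant $B_V\le\mu(a)$.
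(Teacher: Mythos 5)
Your overall framework — strong induction on $n$, removing a single particle $x_k$ for which $S_k=\sum_{j\neq k}V(x_k-x_j)\ge -\mu(a)$ — is exactly what the paper does (the paper's $E_1$ is your $S_k$), and your base case and close/far split are fine. The problem is that your argument for $S_k\ge -\mu(a)$ does not close, and you correctly sense this, but the gap is not a matter of ``bookkeeping'': it is a missing idea.

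Concretely, your bonus-cancellation scheme fails because the bonuses $V(x_j-x_{j'})\ge V(a)$ for leftover pairs $(j,j')$ live in $U(x_1,\dots,\widehat{x_k},\dots,x_n)$, not in $S_k$. The inductive hypothesis gives $U(\text{rest})\ge -(n-1)\mu(a)$, and this bound already ``spends'' every interaction — including those close pairs — in the rest. You cannot subtract those bonuses out of $U(\text{rest})$ and re-credit them to $S_k$, so you are left with $S_k\ge |C_k|V(a)-\mu(a)-|F_k\setminus M|\,\mu(a)$, which can be far below $-\mu(a)$ whenever the leftover set is large. In other words, an \emph{arbitrary} maximal pairwise-far family $M\subset F_k$ gives you no control on the total $V^-$ of the leftovers. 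The cluster-removal fallback inherits the same problem (the intra-cluster bonuses and the far deficits are both sums over sets whose sizes you have not bounded), and it also abandons the one-particle-at-a-time induction without supplying a replacement for the key estimate.

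The two ingredients the paper uses to repair exactly this hole are: (i) pick $x_k$ to be the particle with the \emph{maximum} number $l$ of neighbours at distance $\le a$ — so that by maximality \emph{every} particle of the configuration has at most $l$ close neighbours — and (ii) build the pairwise-far representatives in $F_k$ \emph{greedily in decreasing order of $V^-(x_k-\cdot)$}, not arbitrarily. With (i), each representative drags along at most $l$ leftovers; with (ii), each such leftover's $V^-$ is dominated by that of its representative. This gives the single clean bound
$$
\sum_{j\,:\,|x_j-x_k|>a} V^-(x_k-x_j)\;\le\;(l+1)\,\mu(a),
$$
and then $S_k\ge l\,V(a)-(l+1)\mu(a)=l\bigl(V(a)-\mu(a)\bigr)-\mu(a)\ge -\mu(a)$ precisely because of the Basuev condition $V(a)\ge\mu(a)$ and $l\ge 1$. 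Your choice of $M$ as a mere maximal pairwise-far set loses (ii), and you never invoke (i) at all; those are the two missing pieces, and without them the inductive step does not go through.
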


\\{\bf Remark}. Since $V(|x|)$ is translational invariant, observe that for any $x_0\in \mathbb{R}^d$ it holds that

$$
\sup_{n\in \mathbb{N},\atop{(x_1,\dots, x_n)\in \mathbb{R}^{dn}\atop |x_i-x_j|>a}}\sum _{i=1}^n V^-(x_0-x_i)=\m(a) \Eq(mua0)
$$

\begin{proof} Let $n\in \mathbb{N}$ be fixed and let us first consider the case in which we have  a configuration of $n$ particles in positions
$x_1,\dots,x_n$ such that all $n$ particles are mutually at distance greater than $a$, i.e let us consider first the set of configurations
in $\SS_n(a)$.
We have that, for any $n\in \mathbb{N}$
$$
\begin{aligned}
\inf_{(x_1,\dots, x_n)\in \SS_n(a)}{1\over n} U(x_1,\dots, x_n) & =
\inf_{(x_1,\dots, x_n)\in \SS_n(a)}{1\over n}
\sum_{1\le k<s\le n}V(x_k-x_s)\\
& =
\inf_{(x_1,\dots, x_n)\in \SS_n(a)}{1\over 2n}\sum_{k=1}^n\sum_{s\neq k}V(x_k-x_s)\\
& \ge
-{1\over 2n}\sum_{k=1}^n\sup_{(x_1,\dots, x_n)\in \SS_n(a)}\sum_{s\neq k}V^-(x_k-x_s)\\
&\ge
-{1\over 2n}\sum_{k=1}^n \sup_{s\in \mathbb{N}\atop{(x_1,\dots, x_s)\in \SS_n(a)}} \sum_{j=1}^s V^-(x_j)= -{\mu(a)\over 2}
\end{aligned}
$$
where in the last inequality we have used \equ(mua0) and \equ(mua).
We can therefore limit ourselves to configurations in which there are particles at distance $a$ or smaller than $a$.
Consider thus  a configuration $(x_1,\dots,x_n)$ such that there exists $\{i,j\}\subset [n]$ such that
$|x_i-x_j|\le a$. Thus there is a particle, which, without loss of generality,  we can assume to be the particle indexed by 1 at position $x_1=0$ (i.e. at the origin),
which has the maximum number of particles among $x_2,\dots,x_n$ at distance less than or equal $a$.
We have to estimate, for any $n\in \mathbb{N}$
$$
U(x_1,\dots,x_n)=E_1+ U(x_2,\dots,x_n)
$$
where
$$
E_1= \sum_{j=2}^nV(x_j)
$$
Say that the number of these particles close to $x_1$ less or equal to $a$ is $l$ (clearly $l\ge 1$ by assumption). The energy $E_1$ of the particle at position $x_1$
is thus
$$
E_1\ge l V(a) -\sum_{k\in [n]\atop|x_k|>a} V^-(x_k)\Eq(E1)
$$

\\To bound  the sum
$$
\sum_{k\in [n]\atop|x_k-x_1|>a} V^-(x_k))
$$
 observe that we are supposing that each particle has at most $l$ other particles at distance less or equal than $a$.
Thus take the  $k\in [n]$ such that $V^-(x_k)$ is maximum. Again, without loss of generality we can suppose $k=2$. In the sphere with center $x_2$ and radius $a$ there are at most
$l+1$ particles (the particle at position $x_2$, for which the value $V^-(|x_2|)$ is maximum plus at most $l$ other particles,  say $x_{i_1},\dots , x_{i_l}$, for which $V^-(|x_{i_s}|)\le V^-(x_2) $ with $s\in[l]$).
Hence
$$
\sum_{k\in [n]\atop |x_k-x_1|>a} V^-(|x_k|)\le  (l+1)V^-(x_2)+\sum_{k\in [n]\atop |x_k-x_1|>a, |x_k-x_2|>a} V^-(x_k)
$$
Now we have to control the sum
$$
\sum_{k\in [n]\atop |x_k-x_1|>a, |x_k-x_2|>a} V^-(|x_k|)
$$
 Note that in this sum all particles are at distance greater than $a$ for the particle in $x_1$
and also from particle at position $x_2$ moreover each of the particle in the sum
 has at most $l$ particles at distance less or equal than $a$. Suppose without loss of generality
that the particle at position $x_3$ is such that
 $$
 V^-(x_3)=\max_{k\in [n]\atop |x_k-x_1|>a, |x_k-x_2|>a} V^-(x_k)
 $$
and this particle at $x_3$ has at most $l$ particles at distance less or equal than $a$. Therefore
$$
\sum_{k\in [n]\atop |x_k-x_1|>a, |x_k-x_2|>a} V^-(x_k)\le (l+1) V^-(x_3)+\sum_{k\in [n]\atop |x_k-x_1|>a, |x_k-x_2|>a, \,|x_k-x_3|>a} V^-(x_k)
$$
Iterating  we get
$$
\sum_{k\in [n]\atop |x_k-x_1|>a} V^-(x_k)\le (l+1)\sum_{k\in [n]\atop  |x_i-x_j|>a} V^-(x_k)\Eq(lpu2)
$$
where now in the sum in the r.h.s of \equ(lpu) all pairs of particles are at  distance greater than $a$ to each other. Therefore, recalling definition \equ(mua)
we have that
$$
\sum_{k\in [n]\atop  |x_i-x_j|>a} V^-(x_k)\le \m(a)
$$
so that
$$
\sum_{k\in [n]\atop |x_k-x_1|>a} V^-(x_k)\le (l+1)\m(a)
$$

\\Therefore, recalling \equ(E1), we have that
$$
E_1\ge l V(a) -(l+1)\m(a)= l(V(a)-\m(a))-\m(a)\ge -\m(a)
$$
So we have obtained
$$
U(x_1,\dots,x_n)\ge  -\m(a)+ U(x_2,\dots,x_n)
$$
Iterating we get
$$
U(x_1,\dots,x_n)\ge  -n\mu(a)
$$
$\Box$
\end{proof}
\vv
\vv
\\{\bf Example}. Let us  consider the potential
$$
V(\xx)~=~
\begin{cases} A & {\rm if}~ |\xx|\le R\\
-1 &{\rm if}~R<|\xx|\le R+\d\\
0& {\rm otherwise}
\end{cases}
\Eq(2.10)
$$
Let us prove that $V(x)$ is stable if $A\ge 12$ and $\d$ sufficiently small.

\\It is known that the maximum number of points that can be put on the surface of the  sphere $S(R)$ of radius
$R$  in such way that the distance between any pair of such points
is greater or equal $R$ is 12.   Let us now show that if $\d$ is sufficiently small 12 is also the maximum number of points that can be fitted in the layer $S_\d(R)=\{x\in \mathbb{R}^3: R\le |x|\le R+\d\}$.
Indeed, suppose by absurd that  for any $\d>0$ one can find 13 points on $S_\d$ such that the distance between any pair of these points is greater or equal than $R$.  Then by choosing $\d={1\over n}$, for $n\in \mathbb{N}$
one can
construct 13 sequences $x_n^1,\dots, x_n^{13}$  all contained in $S_\d$ such that, for any $n$  and for any $i,j\in\{1,2,\dots, 13\}$ it holds that $|x_n^i-x_n^j|\ge R$ since $S_\d$ is compact
and $\lim_{n\to \infty}S_\d(R)=S(R)$, for each of the sequences
$x^i_n$ there exists a subsequence convergent to some point $y^i\in S(R)$ such that for any pair  $|y^i-y^j|\ge R$. So we have found 13 points of the sphere $S(R)$
of radius $R$ all at distance from one to another greater that $R$, contradicting the statement that the maximal number of such points is 12.
This means that the maximum number of points we can put on  the layer $S_\d$ in such way that the distance between any pair of such points
is greater or equal $R$ is 12 and therefore the potential of Figure 4  with  11 is replaced by any number $A\ge 12$ and $\d$ is sufficiently small satisfies the hypothesis of  Theorem \ref{basta} and hence is stable.

\begin{teo}\label{basu1}
Let $V(x)$ be a strongly Basuev  and tempered pair potential, with costants $a$, $V(a)$ and $\mu(a)$ as in definition \ref{defbas}.
Then  $V(x)$ is stable with stability constant $B_V\le {1\over 2} \mu(a)$. Moreover the representation
$$
V(x)= V_a(x) + K_a(x)
$$
with
$$
V_a(x)=
\begin{cases}
V(a) & {\rm if} ~ |x|\le a\\
V(x)  & {\rm if}~ |x|>a
\end{cases}
\Eq(va)
$$
and
$$
K_a(x)=
\begin{cases}
V(x)-V(a) & {\text if} ~|x|\le a\\
0 &{\text if}~ |x|>a
\end{cases}
\Eq(ka)
$$
is such that
the potential
$V_a(x)$ defined in \equ(va) is also stable and it has the same stability constant $B_V$ of the full potential  $V(x)$ and
the potential $K_a(x)$ is positive
and supported in $[0,a]$.
\end{teo}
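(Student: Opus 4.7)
The two properties of $K_a$ are immediate from the Basuev hypothesis $V(x) \ge V(a)$ for $|x| \le a$: we have $K_a(x) = V(x) - V(a) \ge 0$ on the ball of radius $a$, and $K_a$ vanishes outside. Writing $U_V = U_{V_a} + U_{K_a}$ and using $K_a \ge 0$ gives $U_V \ge U_{V_a}$ on every configuration, so any lower bound on $U_{V_a}$ transfers automatically to $U_V$. It is therefore enough to prove $B_{V_a} \le \mu(a)/2$; the same estimate will then hold for $V$, which is the content of the phrase ``same stability constant''.

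Next I verify that $V_a$ is itself strongly Basuev at the scale $a$. By construction $V_a(a) = V(a) > 0$ and $V_a(x) = V(a)$ on $|x| \le a$, while $V_a = V$ on $|x| > a$; consequently $V_a^-(x) \le V^-(x)$ pointwise, which forces $\mu_{V_a}(a) \le \mu(a)$. Combined with the standing hypothesis $V(a) \ge 2\mu(a)$ this gives $V_a(a) \ge 2\mu_{V_a}(a)$.

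The heart of the argument is to upgrade the iterative scheme of the proof of Theorem \ref{basta}, where the strengthened inequality $V(a) \ge 2\mu(a)$ turns a ``loss of $\mu(a)$ per peeled particle'' into ``a non-negative contribution''. Fix $n$ and a configuration $(x_1, \dots, x_n)$. If the configuration lies in $\SS_n(a)$, then exactly as in the first step of the proof of Theorem \ref{basta},
$$
U_{V_a}(x_1, \dots, x_n) \ge -\frac{n}{2}\,\mu_{V_a}(a) \ge -\frac{n}{2}\,\mu(a),
$$
which is already the desired bound. Otherwise, pick the particle maximising the number of neighbours within distance $a$, relabel it as $x_1$, and let $l \ge 1$ be that number. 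The same ``far-particles'' bookkeeping as in Theorem \ref{basta}, applied to $V_a$, produces
$$
E_1 \;:=\; \sum_{j=2}^{n} V_a(x_1 - x_j) \;\ge\; l\,V(a) - (l+1)\,\mu(a),
$$
and the strong Basuev hypothesis sharpens this to $E_1 \ge 2l\mu(a) - (l+1)\mu(a) = (l-1)\mu(a) \ge 0$.

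I then iterate: peel off $x_1$ at non-negative cost, apply the same dichotomy to the survivors $(x_2,\dots,x_n)$, and continue until the remaining sub-configuration, of some cardinality $k \le n$, lies in $\SS_k(a)$; by the case already treated this residual contributes at least $-k\mu(a)/2 \ge -n\mu(a)/2$. Summing the non-negative contributions from the peeled particles with the residual gives $U_{V_a}(x_1,\dots,x_n) \ge -n\mu(a)/2$. Dividing by $n$ and taking the supremum yields $B_{V_a} \le \mu(a)/2$, and by the first paragraph the same bound holds for $B_V$. The only delicate point is the bookkeeping of the peeling procedure: one must check that at each intermediate step the estimate $E_i \ge (l_i-1)\mu(a)$ remains valid for the current set of survivors, but this is automatic because $\mu(a)$ is defined as a supremum over $\SS_n(a)$ for \emph{all} $n$ and hence accommodates every sub-configuration that arises in the iteration.
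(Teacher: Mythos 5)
Your proof of the bound $B_{V_a}\le\mu(a)/2$ and of the two claimed properties of $K_a$ is correct and tracks the paper's argument closely: you use the same key estimate $E_1\ge lV(a)-(l+1)\mu(a)\ge (l-1)\mu(a)\ge 0$ via the strong Basuev hypothesis, and the peeling induction you set up is just a more explicit formulation of the paper's ``minimal-energy configurations for $V_a$ can be restricted to $\SS_n(a)$'' step, which then hands off to the Theorem \ref{basta} bound on $\SS_n(a)$.

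There is, however, one genuine gap. The theorem asserts that $V_a$ and $V$ have the \emph{same} stability constant, i.e.\ $B_{V_a}=B_V$. You interpret this phrase as merely saying that both constants are $\le \mu(a)/2$, and your argument only establishes the inequality $B_V\le B_{V_a}$ (from $K_a\ge 0$, hence $U_V\ge U_{V_a}$); it does not give the reverse inequality $B_{V_a}\le B_V$, which is the nontrivial direction. The missing step is short and you already have the ingredients: your peeling argument shows that the supremum defining $B_{V_a}$ is attained (in the $\sup$ sense) on configurations lying in $\SS_n(a)$, and on those configurations $V_a\equiv V$, so
$$
B_{V_a}=\sup_{n\ge 2}\;\sup_{(x_1,\dots,x_n)\in\SS_n(a)}\Bigl(-\tfrac{1}{n}U_{V_a}(x_1,\dots,x_n)\Bigr)
=\sup_{n\ge 2}\;\sup_{(x_1,\dots,x_n)\in\SS_n(a)}\Bigl(-\tfrac{1}{n}U_{V}(x_1,\dots,x_n)\Bigr)\le B_V.
$$
Combined with $B_V\le B_{V_a}$, this gives equality. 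Without this step the ``same stability constant'' part of the statement is unproved.
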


\begin{proof} The thesis is trivial if $V^-(x)=0$ (i.e. if $V(x)$ is purely repulsive). So we may assume that  $V^-(x)\neq0$.
For any $(x_1,\dots,x_n)\in \mathbb{R}^{nd}$ and any $i\in [n]$, let
$$
E_i(x_1,\dots,x_n)=\sum_{j\in [n]:\,j\neq i}V(x_i-x_j)
$$
so that $U(x_1,\dots,x_n)={1\over 2}\sum_{i\in [n]}E_i(x_1,\dots,x_n)$.
Note that when $V^-\neq 0$ then surely $B>0$ and thus configurations $x_1,\dots,x_n$ which minimize the energy must be searched
between those such that $U(x_1,\dots,x_n)<0$.
Let now
$(x_1,\dots,x_n)\in \mathbb{R}^{nd}$ be a configuration such that $U(x_1,\dots,x_n)<0$
but there is a particle in position say  $x_1$ (without loss of generality) such that $E_1(x_1,\dots,x_n)\ge 0$. Then
$$
U(x_1,\dots,x_n)= E_1(x_1,\dots,x_n)+U(x_2,\dots,x_n)
$$
and, since we are assuming $E_1(x_1,\dots,x_n)\ge 0$ and $U(x_1,\dots,x_n)<0$, we have that $U(x_2,\dots, x_n)<0$ and
$$
U(x_1,\dots,x_n)\ge  U(x_2,\dots,x_n)
$$
i.e.
$$
-U(x_1,\dots,x_n)\le - U(x_2,\dots,x_n)
$$
with both sides of the last inequality positive.
Therefore, since ${1\over n}<{1\over n-1}$,
$$
-{1\over n}U(x_1,\dots,x_n)<  -{1\over n-1}U(x_2,\dots,x_n)
$$
Thus the configuration $(x_2,\dots,x_n)$  produces a value $-{1\over n-1}U(x_2,\dots,x_n)$ which is nearer to $B_V$ than
$-U(x_1,\dots,x_n)/n$. Whence we can look for minimal energy configurations $(x_1,\dots, x_n)$ limiting ourselves to those configurations in which
the energy per particle $E_i(x_1,\dots,x_n)$ is negative for all $i\in [n]$.

\\Now let us consider the system of
particles interacting via the pair potential $V_a(x)$ defined in \equ(va) and let us
assume that conditions \equ(cba) and  \equ(strobasu) hold. Note first that, recalling  definition \equ(va),
$V_a^-(|x|)=\max\{0, - V_a(x)\}=V^-(x)$.
Consider then  a configuration $(x_1,\dots,x_n)$ such that there exists $\{i,j\}\subset [n]$ such that
$|x_i-x_j|\le a$. Thus there is at least a particle, (which, without loss of generality,  we can assume to be the particle indexed by 1 at position $x_1$),
which has the maximum number of particles among $x_2,\dots,x_n$ at distance less than or equal than $a$.
Say that the number of these particles close to $x_1$ less or equal to $a$ is $l$ (clearly $l\ge 1$ by assumption). The energy $E_1$ of the particle at position $x_1$
can thus be estimated as follows.
$$
E_1(x_1,\dots,x_n)\ge l V(a) -\sum_{k\in [n]\atop|x_k-x_1|>a} V^-(x_k)
$$
To control the sum
$$
\sum_{k\in [n]\atop|x_k-x_1|>a} V^-(x_k))
$$
 observe that we are supposing that each particle has at most $l$ other particles at distance less or equal than $a$.
Thus take the  $k\in [n]$ such that $V^-(x_k)$ is maximum. Again, without loss of generality we can suppose $k=2$. In the sphere with center $x_2$ and radius $a$ there are at most
$l+1$ particles (the particle at position $x_2$, for which the value $V^-(|x_2|)$ is maximum plus at most $l$ other particles).
Hence
$$
\sum_{k\in [n]\atop |x_k-x_1|>a} V^-(x_k|)\le  (l+1)V^-(x_2)+\sum_{k\in [n]\atop |x_k-x_1|>a, |x_k-x_2|>a} V^-(x_k)
$$
Now we have to control the sum
$$
\sum_{k\in [n]\atop |x_k-x_1|>a, |x_k-x_2|>a} V^-(x_k)
$$
 Note that in this sum all particles are at distance greater than $a$ for the particle in $x_1$
and also from the particle at position $x_2$. Moreover,s each of the particle in the sum
 has at most $l$ particles at distance less or equal than $a$. Suppose without loss of generality
that the particle at position $x_3$ is such that
 $$
 V^-(x_3)=\max_{k\in [n]\atop |x_k-x_1|>a, |x_k-x_2|>a} V^-(x_k)
 $$
and this particle at $x_3$ has at most $l$ particles at distance less or equal than $a$. Therefore
$$
\sum_{k\in [n]\atop |x_k-x_1|>a, |x_k-x_2|>a} V^-(x_k)\le (l+1) V^-(x_3)+\sum_{k\in [n]\atop |x_k-x_1|>a, |x_k-x_2|>a, \,|x_k-x_3|>a} V^-(x_k)
$$
Iterating  we get
$$
\sum_{k\in [n]\atop |x_k-x_1|>a} V^-(x_k)\le (l+1)\sum_{k\in [n]\atop  |x_i-x_j|>a} V^-(x_k)\Eq(lpu)
$$
where now in the sum in the r.h.s of \equ(lpu) all pairs of particles are at  distance greater than $a$ to each other. Therefore, recalling definition \equ(mua)
we have that
$$
\sum_{k\in [n]\atop  |x_i-x_j|>a} V^-(x_k)\le \m(a)
$$
and hence
$$
E_1(x_1,\dots,x_n)\ge l V(a) -(l+1)\m(a)
$$
I.e. we have that $E_1>0$ whenever
$$
V(a)>{l+1\over l} \mu(a)
$$
Using now assumption \equ(strobasu) and since ${l+1\over l}<2$ we get
$$
E_1(x_1,\dots,x_n)>0
$$
The conclusion is  that if a configuration $(x_1,\dots,x_n)$ is such that some particles are at distance less or equal than $a$
then there is at least a  particle  whose energy is positive. Hence the minimal energy configurations for $V_a(x)$ must be searched among
those configurations in which all particles are at distance greater than $a$ from each other. But for these configurations $V_a(|x|)= V(|x|)$ which
implies that $V_a$ and $V$, if stable, have the same stability constant $B$ (and hence also the same $\bar B$). It is now easy to see that
 $V_a(|x|)$ is  stable.
Indeed observe that, for any configuration $(x_1,\dots,x_n)$ for which particles are at distance greater than $a$ from each other we have
$$
U_a(x_1,\dots,x_n)={1\over 2}  \sum_{i=1}^n\sum_{j\in [n]\atop j\neq i} V(x_i-x_j)\ge-{1\over 2}n\m(a)
$$
which implies that $V_a(x)$ is stable with stability constant $B\le { \m(a)\over 2}$. $\Box$
\end{proof}
\vv
\\We have shown above that a Lennard-Jones type potential $V(|x|)$ is
stable. Here below we prove that a Lennard-Jones
type potential is also Basuev.
\begin{teo}\label{pro2}
Let $V(x)$ be a pair  potential on  $\mathbb{R}^d$ such that
there exist constants $w,r_1, r_2>0$, with $r_1\le r_2$,  and  non-negative monotonic decreasing functions
$\xi(|x|)$, $\eta(|x|)$ such that
$$
V(x)
\begin{cases}
\ge \xi(|x|) & {\rm if}~ |x|\le r_1\\
\ge - w & {\rm if} ~r_1< |x|< r_2\\
\ge -\eta(|x|) & {\rm if}~ |x|\ge r_2
\end{cases}
\Eq(condLJg)
$$
with
$$
\lim_{x\to 0} \x (x) x^d =+\infty \Eq(ult)
$$
and
$$
\int_{|x|\ge r_2} \eta (|x|) dx <+\infty\Eq(long)
$$
Then $V(x)$ is strongly Basuev (i.e. it satisfies Theorem \ref{basu1}).
\end{teo}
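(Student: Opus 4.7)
My aim is to verify the three requirements of strong Basuev (Definition \ref{defbas}) for sufficiently small $a$: $V(a) > 0$, $V(x) \geq V(a)$ for $|x| \leq a$, and $V(a) \geq 2\mu(a)$. I would restrict to $a \leq \min\{r_1, r_2/2\}$; then positivity $V(a) \geq \xi(a) > 0$ follows from \equ(ult) (which forces $\xi$ to blow up at $0$), and the interior condition $V(x) \geq V(a)$ for $|x| \leq a$ is immediate from monotonicity of $\xi$ (interpreting $\xi$ as the radial lower envelope of $V$, one has $V(x) \geq \xi(|x|) \geq \xi(a) = V(a)$ for $|x|\le a\le r_1$). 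The decisive inequality is $V(a) \geq 2\mu(a)$, and the heart of the proof is to show $\mu(a) \leq C/a^d$ for a constant $C$ depending only on $d$, $w$, $r_2$ and the tail integral of $\eta$; then \equ(ult) forces $\xi(a) \gg a^{-d}$ as $a \to 0$, so $V(a) \geq \xi(a)$ dominates $2\mu(a)$ for $a$ small enough.

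\textbf{Estimating $\mu(a)$.} Given any $(x_1,\dots,x_n) \in \mathcal{S}_n(a)$, I would split $\sum_i V^-(x_i)$ according to the three regimes of hypothesis \equ(condLJg). Inner points ($|x_i| \leq r_1$) contribute nothing, since $V \geq \xi \geq 0$ there forces $V^-(x_i)=0$. For middle points ($r_1 < |x_i| < r_2$) one has $V^-(x_i) \leq w$, and a standard packing bound (the disjoint balls $B(x_i,a/2)$ must fit in the enlarged annulus $\{|x| \leq r_2 + a/2\}$ of volume $O(r_2^d)$) limits the number of such indices to $C_d(r_2/a)^d$, giving a total contribution at most $C_d w (r_2/a)^d$. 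For outer points ($|x_i| \geq r_2$), $V^-(x_i) \leq \eta(|x_i|)$; decomposing $\{|x| \geq r_2\}$ into the radial shells $\{ka \leq |x|<(k+1)a\}$, packing inside each shell bounds the number of outer points in shell $k$ by $C_d k^{d-1}$, and then monotonicity of $\eta$ lets one compare the resulting series $\sum_k k^{d-1}\eta(ka)$ to the radial integral, yielding $\sum_{|x_i| \geq r_2}\eta(|x_i|) \leq C_d' a^{-d} \int_{|x| \geq r_2/2} \eta(|x|)\,dx$, which is finite by \equ(long). Adding the three contributions produces the claimed bound $\mu(a) \leq C/a^d$.

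\textbf{Conclusion and main obstacle.} With $\mu(a) \leq C/a^d$ in hand, \equ(ult) produces an $a_*$ small enough that $V(a_*) \geq \xi(a_*) \geq 2C/a_*^d \geq 2\mu(a_*)$, verifying strong Basuev; Theorem \ref{basu1} then gives stability with constant $B \leq \mu(a_*)/2$ and the decomposition $V = V_{a_*} + K_{a_*}$. The principal technical obstacle is the shell-packing estimate for the outer region: the sum-to-integral comparison via monotonicity of $\eta$ requires care at the inner boundary $|x| \approx r_2$ (where a shell can shift by $O(a)$) and careful bookkeeping of the dimensional factor $k^{d-1}$ against $[k^d - (k-1)^d]$. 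Once this step is established, the remainder is a clean accounting of the three regimes combined with the divergence built into hypothesis \equ(ult).
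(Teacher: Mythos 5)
Your high-level strategy coincides with the paper's: fix a small $a\in(0,r_1)$, establish a bound of the form $\mu(a)\le C_d/a^d$, and then invoke \equ(ult) to choose $a$ so that $\xi(a)a^d>2C_d$, forcing $V(a)\ge\xi(a)>2\mu(a)$. The difference is entirely in how the estimate $\mu(a)\le C_d/a^d$ is obtained, and here the paper's route is noticeably simpler than yours. Instead of splitting the configuration into inner/middle/outer regimes and packing each region separately, the paper first replaces $V^-$ by a \emph{single} monotone decreasing, absolutely integrable dominating function: setting $\bar w=\max\{w,\eta(r_2)\}$ and $\bar\eta(|x|)=\bar w$ for $|x|\le r_2$, $\bar\eta(|x|)=\eta(|x|)$ for $|x|>r_2$, one has $V^-(x)\le\bar\eta(|x|)$ everywhere, $\bar\eta$ decreasing, and $\int_{\mathbb{R}^d}\bar\eta<\infty$ by \equ(long). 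It then recycles the hypercube device already used for the Lennard--Jones stability proof: around each $x_j$ draw a cube $Q_j$ of side $a/(2\sqrt d)$ (diagonal $a/2$) placed so that $x_j$ is the vertex of $Q_j$ farthest from the origin. The mutual distance $>a$ makes the cubes disjoint, and since $\bar\eta$ is decreasing and $|x|\le|x_j|$ for $x\in Q_j$, one has $\bar\eta(|x_j|)\le\frac{(4d)^{d/2}}{a^d}\int_{Q_j}\bar\eta$; summing over $j$ and using disjointness gives $\sum_j\bar\eta(|x_j|)\le\frac{(4d)^{d/2}}{a^d}\int_{\mathbb{R}^d}\bar\eta=C_d/a^d$ in one stroke, with no shell decomposition and no sum-to-integral comparison. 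This is exactly the bookkeeping you identify as your main technical obstacle, and it disappears because the cube-packing comparison is local and valid at every scale uniformly. (Both your write-up and the paper are similarly brief on the remaining Basuev requirements $V(a)>0$ and $V(x)\ge V(a)$ for $|x|\le a$; your justification of the latter via "$\xi(a)=V(a)$" is not literally implied by the hypothesis, which gives only $V(a)\ge\xi(a)$, but the paper glosses over this point as well.)
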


\begin{proof} By Theorem \ref{basu1} we just need to show that there exists $a$ such that \equ(cba) and \equ(strobasu) are satisfied. Fix $a\in (0,r_1)$,
let $\bar w=\max\{w, \h(r_2)\}$
$$
\bar \h(|x|)=
\begin{cases}
\h(|x|) & {\text if}~ |x|> r_2 \\
\bar w &{\text if}~|x|\le r_2
\end{cases}
$$
Then, by construction  $\bar \h(|x|)$ is monotonic decreasing  and such that $\int_{\mathbb{R}^{d}}\bar \h(|x|)\le \infty$.
Moreover by conditions \equ(condLJg) we have that
$$
V^-(x)\le \bar \h(|x|)
$$
Hence, recalling \equ(mua)  and considering also that,  since we took $a\in (0,r_1)$, by hypothesis $V^-(|x|)=0$ for all $|x|\le a$, we have
$$
\m(a)\le\sup_{n\in \mathbb{N},~(x_1,\dots, x_n)\in \mathbb{R}^{dn}\atop |x_i-x_j|>a,~|x_i|>a}\sum _{i=1}^n \bar \h(|x_i|)
$$
To  estimate from above $\sum _{i=1}^n \bar \h(|x_i|)$, having in mind that  all particles are at mutual
distances greater than  $a$ and are at
distance greater than  $a$ from the origin, we proceed as follows.
We  draw for each $x_j$ a hypercube $Q_j$ with side $a/ 2\sqrt{d}$
(such that the maximal diagonal of $Q_j$ is $a/2$) in such a way that
$x_j$ is  a  vertex of the cube $Q_j$  and at the same time is the point $x\in Q_j$ which is the farthest away from the origin $0$. Since any two
points among $x_1 ,\dots, x_n$ are at mutual distances $\ge
a$ the cubes so constructed do not overlap.  Furthermore, using the fact that $\bar\h(|x|)$ is monotonic decreasing,  we
have
$$
\bar\h(|x_j|)\le {(4d)^{d\over 2}\over a^d}\int_{Q_j} \bar\h(|x|)dx
$$
recall in fact that the cube $Q_j$ is chosen in such way that
$|x|\le |x_j|$ for all $x\in Q_j$.
Therefore
$$
\begn
\sum _{i=1}^n \bar \h(|x_i|)& \le
{(4d)^{d\over 2}\over a^d}\sum_{i=1}^n\int_{Q_i} \bar \h(|x|)dx\\
& ={(4d)^{d\over 2}\over a^d}\int _{\cup_i Q_i} \bar\h(|x|)dx\\
& \le
{{(4d)^{d\over 2}\over a^d}}\int
_{\mathbb{R}^d}\bar\h(|x|)dx\\
& ={C_d\over a^d}
\egn
$$
where
$C_d=~ {(4d)^{d\over 2}}\int_{\mathbb{R}^d}\bar\h(|x|)dx$.
Hence
we get
$$
\mu(a)\le {C_d\over a^d}
$$
Now, in view of  condition \equ(ult), we can always choose $a$ such that $\x(a)a^d > 2C_d$. Thus we get
$$
\x(a)a^d > 2C_d ~~~\Longrightarrow~~ \x(a)> 2{C_d\over a^d}~~~\Longrightarrow~~~V(a)> 2{C_d\over a^d} ~~~\Longrightarrow~~~ V(a)> 2\mu(a)
$$
$\Box$
\end{proof}
\vv
\\Exercise: prove that a Lennard-Jones pair potential satisfies Theorem \ref{pro2} and hence Theorem \ref{basu1} (thus a Lennard-Jones potential can be written as the sum
of a positive potential plus an absolutely integrable and stable potential with stability constant $B_V$ equal to the one of the full potential).

\section{The infinite volume limit}
We will now start to consider the problem of the existence of the thermodynamic limit
\index{thermodynamic limit}
for the pressure of a system of particles in the grand canonical
ensemble interacting via a pair potential stable and tempered. The
mathematical problem is to show the existence of the infinite
volume pressure defined as
$$
\b p(\b,\l)~=~\lim_{\L\to \infty} {1\over |\L|}\ln \Xi_\L(\b,\l)\Eq(2.13)
$$
where we recall that $|\L|$ denotes the volume of $\L$.

\\First of all we need to give a mathematical meaning to the notation
$\L\to \infty$ in \equ(2.13). We know that $\L$ is a finite region
of $ \mathbb{R}^3$ and it can tend to infinity (namely its volume
tends to infinity) in various ways. For instance, $\L$ could be a
cylinder of fixed base $A$ and height $h$ and we could let $h\to
\infty$. I.e. like a cigar with increasing length. It is obvious
that such a system (particularly if $A$ is very small) is not
expected to have a thermodynamic behavior even if $h\to\infty$.

\\Thus $\L\to \infty$ cannot simply be $|\L|\to \infty$, since we want to exclude
cases like ``the cigar". We need thus that $\L\to \infty$
roughly in such way that $\L$ is big in every direction, e.g. a
sphere of increasing radius, a cube of increasing size etc. We
will review the following definitions
\vskip.3cm
\begin{defi}\label{vanhove}
$\L$ is said to go to infinity in the sense of {\it Van Hove}
if the following occurs.

\\Cover $\L$ with small cubes of size $a$ and let $N_+(\L, a)$ the number of cubes with non void
intersection with $\L$ and  $N_-(\L, a)$ the number of cubes
strictly included in $\L$. Then $\L\to\infty$ in the sense of Van
Hove if

\vskip.5cm
$$N_-(\L, a)\to \infty,~~~~~{\rm and}~~~~
{N_-(\L, a)\over N_+(\L, a)}\to 1~~~~~~~~ \forall a$$
\end{defi}
\\See figure 7.

\vskip.3cm
\\As an example, it is easy to show that if $\L\to \infty$ as in the case of the ``cigar'' seen before, then
$\L$ is not tending to infinity in the sense of Van Hove. Consider
thus, for sake of simplicity in the plane, a rectangle $R$ with
sizes $l$ fixed and $t$ variable and going to infinity. Let
consider squares of size $a$ to cover the rectangle. Then
$$
N_-(R,a)~=~{l\over a}{t\over a},~~~~~~~~N_+(R,a)~=~{l\over a}{t\over
a} + 2 {t\over a}+ 2{l\over a}
$$
thus
$$
\lim_{t\to \infty}{N_-(R,a)\over N_+(R,a)}~=~{l\over l+2a}\neq 1
$$

\\On the other hand, let us check that a rectangle of sizes $(\sqrt{t}, t)$ and
tends to infinity, as $t\to\infty$, in the sense of Van Hove. As a
matter of fact
$$
N_-(R,a)~=~{\sqrt{t}\over a}{t\over
a},~~~~~~~~N_+(R,a)~=~{\sqrt{t}\over a}{t\over a} + 2 {t\over a}+
2{\sqrt{t}\over a}
$$
thus
$$
\lim_{t\to \infty}{N_-(R,a)\over
N_+(R,a)}~=~\lim_{t\to\infty}{t\sqrt{t}\over t\sqrt{t}+2at +
2a\sqrt{t}}~=~ 1
$$
\vv\vv
\\We now give a second and more stringent definition of $\L\to \infty$.
For a given $\L$  let $\L_h$ denotes  the
set of points at distance less or equal to $h$ from the boundary of $\L$ and let $|\L_h|$ denotes its volume. Let finally
$d(\L)$ denote the diameter of $\L$ (i.e. $d(\L)~=~\sup_{x,y\in
\L}\{|x-y|\}$) .
\begin{figure}
\begin{center}
\includegraphics[width=7cm,height=1cm]{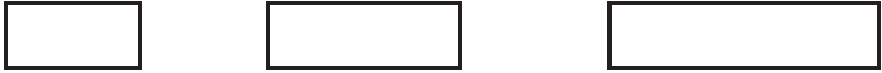}
\end{center}
\begin{center}
Figure 6. A box $\L$ going to infinity in a ``non-thermodynamic"
way
\end{center}
\end{figure}
\vskip.3cm
\begin{defi}\label{fisher}
We say that $\L$ tends to infinity in the sense of Fischer if
$|\L|\to \infty$ and it exists a positive function $\p(\a)$ such
that $\lim_{\a\to 0}\p(\a)~=~0$ and for $\a$ sufficiently small and  for
all $\L$
$$
{|\L_{\a d(\L)}|\over |\L|}\le \p(\a)
$$
\end{defi}
\begin{figure}
\begin{center}
\includegraphics[width=7cm,height=5cm]{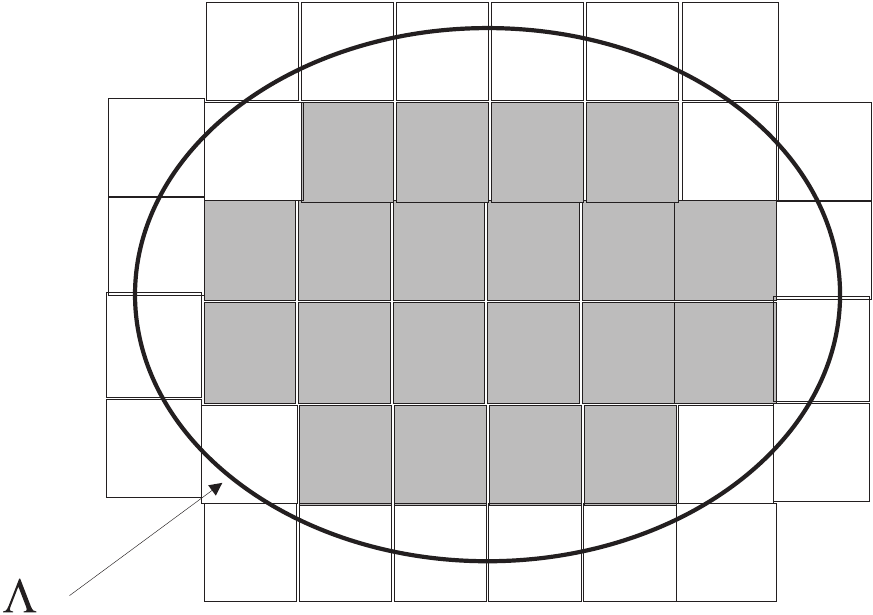}
\end{center}
\begin{center}
Figure 7. A set $\L$ with $N_+ ~=~44 $ and $N_- ~=~20$
\end{center}
\end{figure}

\\A rectangle $R$ of sizes $f(t),t$
such that $\lim_{t\to\infty}{f(t)\over t}~=~0$ does not go to
infinity in the sense of Fischer. As a matter of fact
$$
d(R)~=~[t^2 + f^2(t)]^{1/2}
$$
$$
|\L_{\a d(R)}|~=~ 2\a [t^2 + f^2(t)]^{1/2}[t-2\a [t^2 +
f^2(t)]^{1/2}+f(t)]
$$
$$
{|\L_{\a d(R)}|\over |R|}~=~{ 2\a [t^2 + f^2(t)]^{1/2}[t-2\a [t^2 +
f^2(t)]^{1/2}+f(t)]\over tf(t)}
$$
$$
~=~{ 2\a t^2\sqrt{1 + {f^2(t)\over t^2}}[1-2\a \sqrt{1 +
{f^2(t)\over t^2}}+{f(t)\over t}]\over tf(t)}
$$
For any fixed $\a$ the quantity above can be make  large at will,
as $t\to\infty$. Thus it is not possible to find any $\p(\a)$ such
that ${V_{\a d(R)}(R)/ V(R)}\le \p(\a)$ for all $R$ (i.e. for all
$t$).

\\On the other hand, the square $S$ of size $(t,t)$ goes to infinity in the sense of Fischer
$$
d(S)~=~\sqrt{2}t,~~~~~~~~~~~~~~~~~~~~ |\L_{\a d(S)}|~=~ 4\a
\sqrt{2}t[t-\a \sqrt{2}t]
$$
$$
{|\L_{\a d(S)}|\over |S|}~=~{4\a \sqrt{2}\,t[t-\a \sqrt{2}t]\over
t^2}~=~4\a \sqrt{2}[1-\a \sqrt{2}]\le 4\a \sqrt{2}
$$
Hence one can choose $\p(\a)~=~ 4\a \sqrt{2}$.

\def\rb{{\bar r}}
\section{Example: finite range potentials}
We will prove in this section the existence of the thermodynamic limit for the function
$\b p(\b,\l)$ defined in \equ(2.13), but  we will not treat the
general case, namely particles interacting via a pair potential
stable and tempered and $\L$ going to infinity as e.g. Van Hove.
We will rather give our proof in a simpler case. Namely, we will put ourselves in $d=3$ dimensions  and we will
assume that particles interact through a  stable and tempered pair
potential $V(\xx)$ with the further property that it exists $\rb
>0$ such that $V(\xx)\le 0$ if $|\xx|\ge \rb$, i.e. the potential
is negative for distances greater than $\rb$.

\\In order to make things even simpler we will also suppose that $\L$ is a cube of size
$L$ and $\L\to \infty~ \Leftrightarrow~L\to\infty$. We now choose
two particular sequences $\L_1,\L_2,\dots,\L_n,\dots$ and
$\tilde\L_1,\tilde\L_2,\dots,\tilde\L_n,\dots$ of cubes. Let
$\L_1$ be a cube of size $L_1$ with volume $V_1$ while
$\tilde\L_1$ is a new cube of size  $\tilde L_1~=~L_1+\rb$ which
consists of $\L_1$ plus a frame of thickness ${\rb\over 2}$. We
denote $\tilde V_1$ it volume (of course $\tilde V_1> V_1$) $\L_2$
is a cube of size $L_2~=~ 2 L_1+\rb$,  thus in $\L_2$ we can arrange
$2^3~=~8$ cubes $\L_1$ with  frames of thickness $\rb/2$ in such
way that any point in a given cube $\L_1$ inside $\L_2$ is at
distance greater than $\rb$ from any point in any other cube
$\L_1$ inside $\L_2$. Of course $\tilde \L_2$ is a cube of size
$2\tilde L_1$, i.e. is the cube $\L_2$ plus a frame of thickness
$\rb/2$. See Figure 8.

\\In general $\L_{n+1}$ is a cube of size $L_{n+1}= 2 L_{n}+\rb$
and $\tilde \L_{n+1}$ is a cube of size $\tilde L_{n+1}=2\tilde
L_n ~=~2 L_{n}+2\rb$.
 Note that $|\L_{n+1}|>8 |\L_n|$ and $|\tilde\L_{n+1}|=8|\tilde \L_n|$ and $\lim_{n\to\infty}{|\tilde \L_n|/ |\L_n|}~=~1$.

\\We will now  show that the sequence of functions
$$
P_n(\b,\l)~=~  {1\over |\L_n |}\ln \Xi_{\L_n}
(\b,\l)\Eq(2.14)
$$
tends to a limit.
\begin{figure}
\begin{center}
\includegraphics[width=7cm,height=7cm]{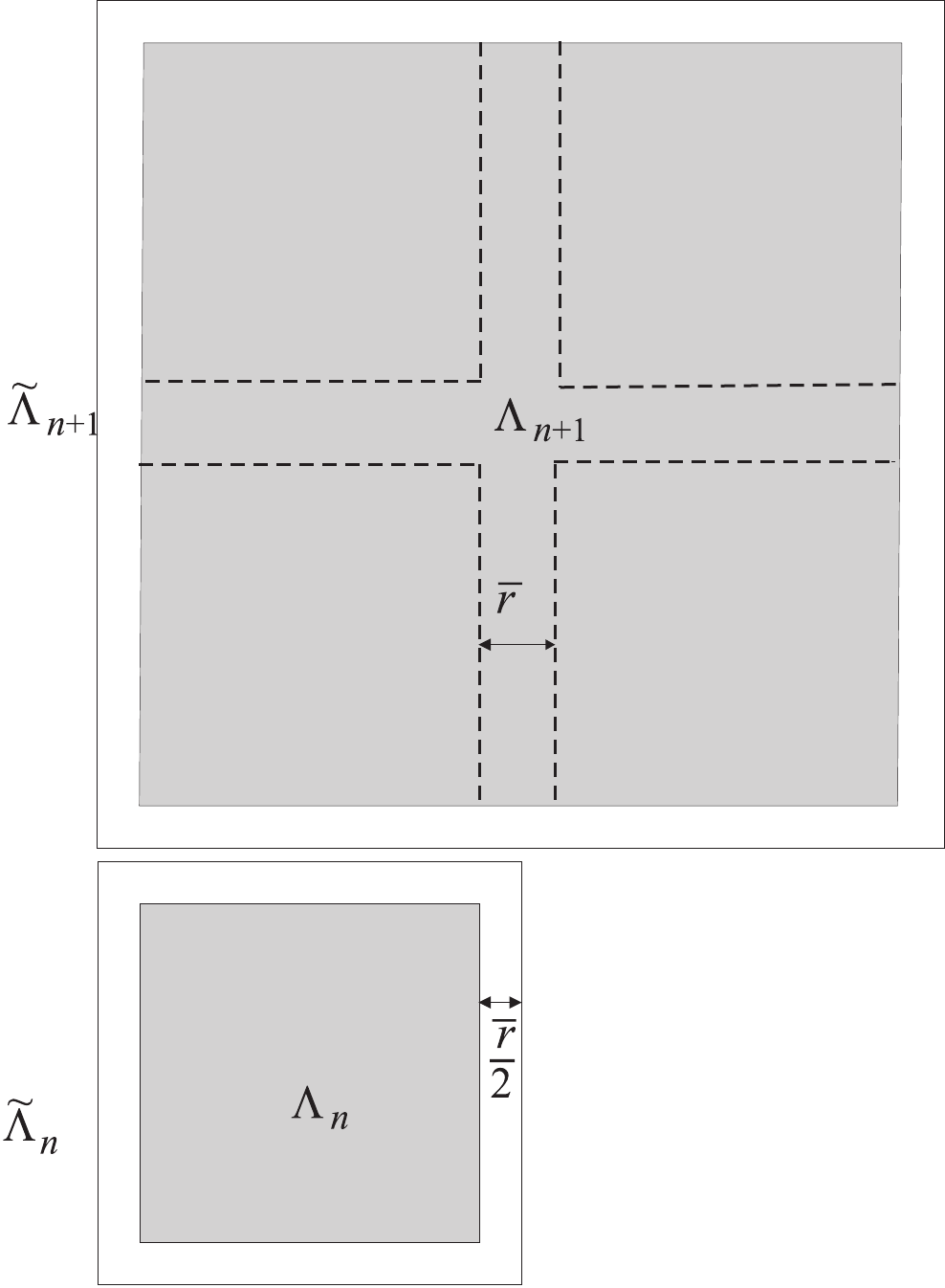}
\end{center}
\begin{center}
Figure 8. The cubes $\L_{n+1}$, $\tilde\L_{n+1}$ and the cubes
$\L_n$, $\tilde\L_n$
\end{center}
\end{figure}
Define the sequence
$$
\tilde P_n~=~  {1\over |\tilde \L_n|}\ln \Xi_{\L_n}(\b,\l)\Eq(2.15)
$$
We will first show that this sequence converges to a limit.
Consider the sequence $\Xi_n ~=~ \Xi_{\L_n}(\b,\l)$. We have
$$
\Xi_{n+1}~=~\sum_{N=0}^{\infty} {\l^N\over N!}\int_{\L_{n+1}^N}
d\xx_1\dots d\xx_N e^{ -\b U(\xx_1,\dots ,\xx_N)}
$$
We now think $\L_{n+1}$ as  the union of $8$ cubes $\L^j_n$
($j=1,2,\dots ,8$) plus the (internal) frames. Obviously, if we calculate
$\Xi_{n+1}$ eliminating the configurations in which particles can
stay in the frames we are underestimating $\Xi_{n+1}$. I.e.
$$
\Xi_{n+1}>\sum_{N=0}^{\infty} {\l^N\over N!}\int_{\cup_j\L^j_{n}}
d\xx_1\dots d\xx_N e^{ -\b U(\xx_1,\dots ,\xx_N)}
$$
Thus the $N$ particles are arranged in such way that $N_1$ are in
the box $\L_n^1$, $N_2$ are in the box $\L_n^2$..., and $N_8$
are in the box $\L_n^8$. Thus we can rename coordinates $\xx_1,
\dots, \xx_N$ as $\xx_1^{1},\dots \xx_{N_1}^{1},\dots,
\xx_1^{8},\dots \xx_{N_8}^{8}$. The interaction between particles
in different boxes is surely non positive (since they are at
distances greater or equal to $\rb$) then
$$
U(\xx_1,\dots ,\xx_N)~=~ U(\xx_1^{1},\dots \xx_{N_1}^{1},\dots,
\xx_1^{8},\dots \xx_{N_8}^{8})\le
$$
$$~~~~~~~~~~~~~~~~~~~~~~~~
\le U(\xx_1^{1},\dots \xx_{N_1}^{1})~+~\dots~ +~ U(\xx_1^{8}\dots
\xx_{N_8}^{8})
$$
and hence
$$
e^{-\b U(\xx_1,\dots ,\xx_N)}\ge e^{-\b \left[U(\xx_1^{1},\dots
\xx_{N_1}^{1})~+~\dots~ +~ U(\xx_1^{8}\dots \xx_{N_8}^{8})\right]}
$$

\\The number of ways in which such arrangement can occur is
$$
{N\choose N_1}{N-N_1\choose N_2}\dots {N-N_1 -N_2 - \dots -
N_7\choose N_8}~=~ {N!\over N_1!\dots N_8!}
$$
Hence
$$
\begn
\Xi_{n+1} & >\sum_{N_1,\dots ,N_8}{\l^{N_1+\dots +N_8}\over N!}
{N!\over N_1!\dots N_8!} \int_{\L^1_{n}}d\xx^{1}_1\dots
\int_{\L^1_{n}}d\xx^{1}_{N_1}\cdots\\
& \cdots
\int_{\L^8_{n}}d\xx^{8}_1\dots \int_{\L^8_{n}}d\xx^{8}_{N_8}
 e^{ -\b U(\xx^{1}_1,\dots ,\xx^{1}_{N_1})}
e^{ -\b U(\xx^{2}_1,\dots ,\xx^{2}_{N_2})} \dots e^{ -\b
U(\xx^{8}_1,\dots ,\xx^{8}_{N_8})}\\
&=(\Xi_{n})^8
\egn
$$
So we have shown that
$$
\Xi_{n+1}~> ~(\Xi_n)^8
$$
Hence, since $f(x)=\ln x$ is a monotonic increasing function, we
also get
$$
\ln\Xi_{n+1}> 8\ln (\Xi_n)
$$
and
$$
{1\over \tilde V(\L_{n+1})}\ln\Xi_{n+1}> {8\over \tilde
V(\L_{n+1})}\ln (\Xi_n)~=~ {1\over \tilde V(\L_{n})}\ln\Xi_{n}
$$
So the sequence $\tilde P_n$ is monotonic increasing with $n$. On
the other hand by stability we have that.
$$
\Xi_{n}~=~\sum_{N~=~0}^{\infty} {\l^N\over N!}\int_{\L_{n}^N}
d^3\xx_1\dots d^3\xx_N e^{ -\b U(\xx_1,\dots ,\xx_N)}~\le
$$
$$
~~~~~~~~\le~
\sum_{N~=~0}^{\infty} {\l^N\over N!}\int_{\L_{n}^N} d^3\xx_1\dots
d^3\xx_N e^{ +\b B N} ~=~ \exp\{\l V_n e^{ \b B}\}
$$
therefore,
$$
\ln\Xi_{n}~\le~ {\l V_n e^{ \b B}}
$$
and
$$
\tilde P_n ~=~{1\over \tilde V_n} \ln\Xi_{n}\le {1\over \tilde V_n}
{\l V_n e^{ \b B}}\le {\l  e^{ \b B}}
$$
This means that the sequence $\tilde P_n$ is monotonic increasing
and bounded above. Hence $\lim_{n\to \infty}\tilde P_n~=~P$ exists.
But now
$$
P_n ~=~ {1\over V_n}\ln\Xi_n ~=~ {\tilde V_n\over V_n} \tilde P_n
$$
whence
$$
\lim_{n\to \i} P_n ~=~ \lim_{n\to \i} {\tilde V_n\over V_n} \tilde
P_n~=~
 \lim_{n\to \i}{\tilde V_n\over V_n} \lim_{n\to \i} \tilde P_n ~=~P
$$
Therefore we show the existence of the thermodynamic limit for a
class of systems interacting via a potential which, beyond being
tempered and stable, has the further property to be non positive
for $|\xx|\ge \rb$, when $\L$ goes to infinity along the sequence
of cubes $\L_n$. It is not difficult from here to show that the
existence of such limit implies also that the limit exists if
$\L$ is a cube which goes at infinity in sense homothetic (i.e.
the size $\L\to \infty$).

\\Actually, the existence of limits can be proved for much more general cases, see e.g.
Theorem 3.3.12  in \cite{Ru}

\section{Properties the pressure}

\\Let us now show some general properties of the limit for the pressure  $\b p(\b,\l)$
in \equ(2.13). The pressure $p(\b,\l)$ is a function of two
variables $\l$ and $\b$  which are the two independent
thermodynamic parameters describing the macroscopic equilibrium
state in the Grand canonical ensemble.\index{pressure}

\\We are  interested to study the function $\b p(\b,\l)$ only
for   the ``physically" admissible  values of $\l$ and $\b$. These
physical values  are:  $\l$ real number in the interval $(0,+\infty)$ and  $\b$  real number in the interval $(0,+\infty)$ (recall definition \equ(acty)). The Grand canonical
partition function $ \Xi (\b,\L,\l)$ defined in \equ(2.2) where we
are supposing of course that $U(\xx_1,\dots, \xx_N)$ is derived
from a stable and tempered pair potential has the following
structure

$$
\Xi_\L(\b,\l)~=~1+Z_1(\L, \b)\l+ {Z_2(\L,\b)}\l^2+
{Z_3(\L,\b)}\l^3 + \dots ~=$$
$$
=~\sum_{n=0}^{\infty} {Z_n(\L,\b)}\l^n
~~~~~~~~~~~~~~~~~~~~~~~$$I.e.
is a power series in $\l$ with convergence radius $R=\infty$ (this
is true for any $\L$ such that $V(\L)<\infty$), i.e.,
$\Xi_\L(\b,\l)$ is analytic as a function of $\l$ in the whole
complex plane. Hence {\it a fortiori} $\Xi_\L(\b,\l)$ is analytic
for all $\l\in (0,+\infty)$. This is true for all $\L$ such that
$V(\L)<\infty$.

\\The coefficients $Z_n(\L,\b)$ are explicitly given by
$$
Z_n(\L,\b)~=~{1\over n!}\int_{\L}d\xx_1\dots\int_{\L} d\xx_{n}~e ^{-\b
U(\xx_1,\dots ,\xx_n)}\Eq(bu)
$$
They are clearly all positive numbers and due to stability (recall
proposition 1) they admit the upper bound $Z_n(\L,\b)\le
[V(\L)]^n\,e^{nB\b}/n!$. Moreover as functions of $\b$ the
coefficients $Z_n(\L,\b)$ are analytic in $\b$ in the whole
complex plane and hence {\it a fortiori} for all $\b\in
(0,\infty)$.

\\So,  $\Xi_\L(\b,\l)$, for all $\L$ such that $|\L|<\infty$, and for all
$\l\in (0,+\infty)$ is also analytic as a function of $\b$ in the
whole complex plane. Hence {\it a fortiori} $\Xi_\L(\b,\l)$ is
analytic for all $\b\in (0,+\infty)$.

\\Now the function $\ln \Xi_\L(\b,\l)$ has no reason to continue
analytic in $\l$ and $\b$ in the whole complex plane, but it is
indeed  analytic in $\l$  for any $\l\in (0,+\infty)$ and it is
analytic in $\b$ for all $\b\in (0,+\infty)$. This is due to the
fact that coefficients $Z_n(\L,\b)$ are positive numbers. Hence $
\Xi_\L(\b,\l)$ has no zeroes in the intervals $\l\in (0,+\infty)$
and  $\b\in (0,+\infty)$. This means that its logarithm is
analytic for such intervals.

\\In conclusion, we can state that the function
$$
\b p_{\L}(\b,\l)~=~{1\over |\L|}\ln \Xi_\L(\b,\l)~\doteq~ f_{\L}(\b,\l)\Eq(2.17)
$$
is analytic in $\l$ for all $\l\in (0,+\infty)$  and  it is also
analytic in $\b$ for all $\b\in (0,+\infty)$ for any finite
box $\L$.

\\This fact of course does not imply that also in the limit $\L\to \infty$ the function
$\b p(\b,\l)$ will stay analytic in the whole physical intervals
$\l\in (0,+\infty)$  and $\b\in (0,+\infty)$.

\\Let us now list some properties of the function $f_{\L}(\b,\l)$ defined by \equ(2.17).
\vskip.7cm

\\{\bf Property 0a}.  {\it $ f_\L(\b,\l)$ defined in \equ(2.17) is continuous  as a function of $\l$
and all its  derivatives are continuous as functions of $\l$ in the
whole interval $\l\in (0,+\i)$ and for all $\b\in (0,+\io)$ and
for all $\L$ such that $V(\L)<\i$}.

\\{\bf Property 0b}.  {\it $ f_\L(\b,\l)$ defined in \equ(2.17) is continuous  as a function of $\b$
and all its  derivatives are continuous as functions of $\b$ in the
whole interval $\b\in (0,+\i)$ and for all $\l\in (0,+\io)$ and
for all $\L$ such that $V(\L)<\i$}.

\\These properties follow trivially from the fact that $f_{\L}(\b,\l)$ is analytic in $\l$
and $\b$ when they vary in the interval $(0,+\i)$.

\\{\bf Property 1}. {\it $ f_\L(\b,\l)$ defined in \equ(2.17)
is monotonic increasing  as a function of $\l$ in the interval
$\l\in (0,+\i)$, for all $\b\in (0,+\io)$ and for all $\L$ such
that $V(\L)<\i$}

\vskip.2cm
\\In order to show the property 1 it is sufficient
to show that ${\dpr f_\L(\b,\l)\over\dpr \l}\ge 0$. But
$$
{\dpr f_\L(\b,\l)\over\dpr \l}~=~{1\over |\L|}
{{\dpr\Xi_\L(\b,\l)/\dpr \l}\over \Xi_\L(\b,\l)}~=~{\r_\L(\b,\l)\over
\l}\Eq(2.18)
$$
where  $\r_\L(\b,\l)={\<N\>\over |\L|}$ is the density and
$\<N\>$ is the mean number of particles in the grand
canonical  ensemble at fixed values of $\l$, $\b$ and $\L$.
Explicitly $\<N\>$ is given by
$$
\<N\>~=~{\sum_{N=0}^{\infty}  {\l^N}N
Z_N(\L,\b)\over \sum_{N=0}^{\infty} {\l^N} Z_N(\L,\b)}
$$
Hence, since $\<N\>$ is surely a positive number for  $\l > 0$, we
get
$$
{\dpr f_\L(\b,\l)\over\dpr \l}>0
$$

\\{\bf Property 2}. {\it $f_\L(\b,\l)$ defined in \equ(2.17) is convex   as a function of
$\ln\l$
\footnote{Reminding \equ(acty) observe that $\ln\l$ is proportional to the chemical potential.}
 in the interval $\l\in (0,+\i)$, for all $\b\in (0,+\io)$
and for all $\L$ such that $V(\L)<\i$. Moreover the finite volume
density $\r_{\L}(\b,\l)={\dpr \over \dpr (\ln \l)}f_\L(\b,\l)$
is a monotonic increasing function of $\ln \l$.}

\\As a matter of fact
$$
{\dpr \over \dpr (\ln \l)}f_\L(\b,\l)~=~\l {\dpr \over \dpr
\l}f_\L(\b,\l) ~=~\r_\L(\b,\l)
$$
Last line follows by \equ(2.18). Moreover
$$
{\dpr^2 \over\dpr (\ln \l)^2}f_\L(\b,\l) ~=~\l {\dpr \over \dpr
\l}\r_\L(\b,\l)~=~ {\l\over |\L|}{\dpr \over \dpr \l}\<N\>~=
$$
$$
=~
{\l\over |\L|}{\dpr \over \dpr \l}\left[{\sum_{N=0}^{\infty}
{\l^N}N Z_N(\L,\b)\over \sum_{N~=~0}^{\infty} {\l^N}
Z_N(\L,\b)}\right]
~=~{1\over |\L|}\left(   \<N^2\>- \<N\>^2\right)~=
$$
$$
=~
{\<(N-\<N\>)^2\>\over |\L|}~~~~~~~~~~~~ ~\Eq(2.19)
$$
where
$$
 \<N^2\>~=~{\sum_{N=0}^{\infty}  {\l^N}N^2 Z_N(\L,\b)\over
\sum_{N=0}^{\infty} {\l^N} Z_N(\L,\b)}
$$
thus, since the factor $\<(N-\<N\>)^2\>$ is always positive we get
$$
{\dpr^2 \over\dpr (\ln \l)^2}f_\L(\b,\l)~=~ {\dpr \over\dpr (\ln
\l)}\r_\L(\b,\l)> 0
$$
This prove that $f_\L(\b,\l)$ is convex in the variable $\ln \l$
and that $\r_\L(\b,\l)$ is monotonic increasing in $\ln \l$.
\vskip.7cm

\\{\bf Property 3}. {\it $f_\L(\b,\l)$ defined in \equ(2.17) is convex   as a function of
$\b$ in the interval $\b\in (0,+\i)$, for all $\l\in (0,+\i)$ and
for all $\L$ such that $V(\L)<\i$}.

\\As a matter of fact
$$
{\dpr \over \dpr \b}f_\L(\b,\l)~=~ {1 \over |\L|}{\dpr \over \dpr
\b}\ln \Xi(\L,\b,\l) ~=~ {-\<U \>\over |\L|}
$$
where
$$
\<U \>~=~ {1\over \Xi(\L,\b,\l)}\sum_{N~=~0}^{\infty}{\l^N\over N!}
\int_{\L}d\xx_1\dots\int_{\L}d\xx_NU(\xx_1,\dots ,\xx_N)e^{-\b
U(\xx_1,\dots ,\xx_N)}
$$
Deriving one more time respect to $\b$
$$
{\dpr^2 \over \dpr \b^2}f_\L(\b,\l)~=~ {\<U^2 \>-\<U\>^2 \over
|\L|}
$$
where
$$
\<U^2 \>~=~ {1\over \Xi(\L,\b,\l)}\sum_{N~=~0}^{\infty}{\l^N\over N!}
\int_{\L}d\xx_1\dots\int_{\L}d\xx_N [U(\xx_1,\dots
,\xx_N)]^2\,e^{-\b U(\xx_1,\dots ,\xx_N)}
$$
and since $\<U^2 \>-\<U\>^2~=~ \<(U-\<U\>)^2\>\ge 0$ we obtain
$$
{\dpr^2 \over \dpr \b^2}f_\L(\b,\l)\ge 0
$$
and $f_\L(\b,\l)$ is a convex function of $\b$ for all $\L$ such
that $V(\L)$ is finite.

\vskip.7cm
\\{\bf Property 4}. {\it $\b p(\b,\l)~=~\lim_{\L\to \infty}\b p_{\L}(\b,\l)$ is convex
as a function of $\b$ and $\ln\l$  in the interval $\b,\l\in
(0,+\i)$}.

\\This very important property of the pressure in the thermodynamic
limit follows trivially from the fact that  the limit of a pointwise
converging sequence of convex functions is also a convex  function.

\vskip.7cm
\\{\bf Property 5}. {\it It is possible to express, for any $\L$ finite, the pressure
$p_\L$ as a function of $\r_\L$ and  $\b$, i.e.
$$
p_\L~=~g_\L(\r_\L, \b)
$$
moreover the function $g_\L(\r_\L, \b)$ is monotonic increasing as
a function of $\r_\L$.}

\\The finite volume density is
$$
\r_\L ~=~ \r_\L (\b,\l)~=~  \r_\L (e^{\ln\l}, \b)~=~ F_\L(\ln \l ,\b)
$$
Since, by Property 2 the function $F_\L(x ,\b)$ is strictly
increasing as a function of $x$ for any $\L$ finite and any $\b\in
(0,+\i)$, then it admits, as function of $x$, an inverse, say $x ~=~
G_\L(\r_\L,\b)$. Hence
$$
\ln \l ~=~ G_\L(\r_\L,\b), ~~~~~\l ~=~ e^{G_\L(\r_\L,\b)}
$$
Thus the function $g_\L(\r_{\L},\b)$ can be indeed constructed and is
given explicitly by
$$
p_\L ~=~{1\over \b}f_\L( e^{G_\L(\r_\L,\b)},\b)~=~ g_\L(\r_\L,\b)
$$
It is now easy to check that this function is monotonic
increasing.
As a matter of fact
$$
\begn
{\dpr\over \dpr\r_\L}p_\L & ={1\over \b} {\dpr\over \dpr\r_\L}(\b
p_\L)\\
& ={1\over \b} {\dpr(\ln \l)\over \dpr \r_{\L}}{\dpr \over \dpr
(\ln \l)} (\b p_\L)\\
&={1\over \b} \left[{\dpr\r_{\L}\over
\dpr (\ln \l)}\right]^{-1}~\l{\dpr \over \dpr
\l} f_\L(\b, \l)\\
& ={1\over \b} \left[{\dpr\r_{\L}\over
\dpr (\ln \l)}\right]^{-1}{\r_{\L}}\\
&={1\over \b}
{\r_{\L}}\left[\l{\dpr\r_{\L}\over \dpr  \l}\right]^{-1}
\egn
$$
recalling now that $\r_\L~=~\<N\>/V(\L)$ and \equ(2.19) we obtain
$$
{\dpr\over \dpr\r_\L}p_\L ~=~{1\over \b}
{\r_{\L}}\left[\l{\dpr\r_{\L}\over \dpr  \l}\right]^{-1}~=~{1\over
\b} {\<N\>\over \<N^2\>-\<N\>^2}\Eq(2.20)
$$
Formula \equ(2.20) shows that ${(\dpr/ \dpr\r_\L)}p_\L$ is
always positive. Actually \equ(2.20) tells us also that the value
of  $(\dpr/ \dpr\r_\L)p_\L$ is ${\<N\>/\b(\<N^2\>-\<N\>^2)}$.
Thus if we are able to prove that ${\<N\>(\<N^2\>-\<N\>^2)^{-1}}$
stay bounded away from $+\i$ for any $\L$ we can conclude that
$p(\r,\b)=\lim_{\L\to \i}p_\L(\r_\L,\b)$  is continuous a function
of the density $\r~=\lim_{\L\to\i}\r_\L$.

\vskip.7cm

\\{\bf Property 6}.
{\it $p=g(\r, \b)$ is  monotonic increasing as a function of $\r$
(hence monotonic decreasing as a function of $\r^{-1}$)}
\vskip.2cm
\\The monotonicity follows trivially  from the fact that $p_\L(\r_\L,\b)$ is monotonic increasing
for any $\L$.

\\Note now that

$$
{\dpr\over \dpr\r_\L}p_\L ~=~{1\over \b} {\<N\>\over
\<N^2\>-\<N\>^2}~=~C_\L\ge 0 \Eq(CL)
$$
$C_\L$ is a constant in general  depending on $\L$.

\section{Continuity of the pressure}
\index{pressure!continuity}
Experimentally the thermodynamic pressure (i.e. the infinite volume limit
of the finite volume pressure $p_\L$) is not only increasing as a function
of the density $\r$ but it appears furthermore to have no (jump)
discontinuities. A general proof of this fact is still lacking. It
has been proven that under suitable conditions on the potential
(super-stability) the pressure is indeed continuous as a function
of the density \cite{Ru70}. We  prove here this fact in  a much simpler case, namely
we assume that the pair potential is either hard core, or non
negative (i.e. purely repulsive). Our strategy will consist in
proving that the constant $C_\L$ in equation \equ(CL) is bounded
uniformly in $\L$. This will allow us to conclude that
$p_\L(\r_\L,\b)$ has a bounded derivative in $\r_\L$ uniformly in
$\L$, so the limit $p(\r,\b)=\lim_{\L\to\i}p_\L(\r_\L,\b)$, cannot
have vertical jumps as a function of $\r$, i.e.  $p(\r,\b)$ is
continuous as a function of the density.

\\Let us thus assume that the pair potential between particles is hard core or
purely repulsive. Under these conditions we will then prove the following theorem.

\begin{teo}\label{pcont}
Let $V$ be a tempered and stable pair potential.
If $V$ is either  positive ($V\ge 0$) or has an hard core (i.e.  $\exists
\;a$ such that $V(\xx)=+\i$ whenever $|\xx|\le a$), then
$$
{\<N\>\over \<N^2\>-\<N\>^2}\le (1+D\l)\Eq(2.21)
$$
where $D$ is uniform in $\L$.
\end{teo}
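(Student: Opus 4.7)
The plan is to pass to the correlation-function representation of $\<N^2\>-\<N\>^2$ and use the sign of the pair interaction, together with the tempered character of $V$, to obtain the required inequality. From the standard formulas $\rho^{(1)}(x)=(\lambda/\Xi_\L)\sum_{M\ge 0}(\lambda^M/M!)\int e^{-\beta U(x,x_1,\dots,x_M)}dx_1\cdots dx_M$ and its two-particle analogue $\rho^{(2)}(x,y)$, an elementary rearrangement of the grand-canonical series yields the identity
$$
\<N^2\>-\<N\>^2\;=\;\<N\>+\iint_{\L\times\L}\bigl[\rho^{(2)}(x,y)-\rho^{(1)}(x)\rho^{(1)}(y)\bigr]\,dx\,dy,
$$
so the theorem reduces to proving that $K:=\iint[\rho^{(1)}(x)\rho^{(1)}(y)-\rho^{(2)}(x,y)]dx\,dy$ satisfies $K\le D\lambda\<N\>/(1+D\lambda)$.

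Under either hypothesis on $V$, the function $f(x)=e^{-\beta V(x)}-1$ takes values in $[-1,0]$ (the hard-core region contributing a factor $f=-1$, while for $V\ge 0$ one has $e^{-\beta V}\le 1$), and the Weierstrass product inequality $\prod_i(1+f_i)\ge 1+\sum_i f_i$ applies. Using it to control $\Xi_\L-\Xi_\L^{\{x\}}$, where $\Xi_\L^{\{x\}}$ denotes the partition function with one particle fixed at $x$, yields
$$
\lambda-\rho^{(1)}(x)\;\le\;\lambda\int|f(x-y)|\,\rho^{(1)}(y)\,dy,
$$
and an analogous computation applied to $\Xi_\L^{\{y\}}-e^{\beta V(x-y)}\Xi_\L^{\{x,y\}}$ produces the Kirkwood--Salsburg-type lower bound
$$
\rho^{(2)}(x,y)\;\ge\;\lambda\,e^{-\beta V(x-y)}\Bigl[\rho^{(1)}(y)-\int|f(x-z)|\,\rho^{(2)}(y,z)\,dz\Bigr].
$$
Substituting this into the integrand of $K$, using $\rho^{(1)}\le\lambda$, and integrating with $I:=\int|f(z)|dz<\infty$ (finite by temperedness), one obtains a self-consistent inequality of the form $K\le C_{1}\lambda\<N\>+C_{2}\lambda K$, with constants $C_{1},C_{2}$ depending only on $I$ and $\beta$. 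Solving for $K$ yields precisely the bound $K\le D\lambda\<N\>/(1+D\lambda)$ with $D$ uniform in $\L$.

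The hard part will be closing this self-consistent estimate: the easier use of just an upper bound on $\rho^{(2)}$ alone would yield only $K\le D\lambda\<N\>$, which is non-trivial only in the regime $D\lambda<1$, whereas the saturating form with the denominator $1+D\lambda$ -- indispensable for the bound to remain useful as $\lambda\to\infty$ -- only emerges when the upper bound on $\rho^{(1)}$ and the Kirkwood--Salsburg lower bound on $\rho^{(2)}$ are combined in a coupled manner, so that the linear-in-$K$ correction on the right-hand side can be absorbed to the left. The uniformity in $\L$ then follows from the translation invariance of the integral $I$, which is the only dimensional constant entering the argument.
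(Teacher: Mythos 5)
Your approach is genuinely different from the paper's: you pass to the correlation--function representation and try to close a Kirkwood--Salsburg-style inequality, whereas the paper works entirely with the canonical weights $Z_N=\int_{\L^N}e^{-\b U(X_N)}dX_N$, proving by a Cauchy--Schwarz inequality in the configuration integral that $Z_{N+1}^2\le Z_N Z_{N+2}+D(\b)Z_N Z_{N+1}$, and then applies a second Cauchy--Schwarz inequality on the index $N$ to turn this recursion into the moment bound $\<N\>^2(1+\l D)^2\le \<N^2\>(1+\l D)^2-\<N\>(1+\l D)$. That route avoids correlation functions and the sign subtleties they bring. However, there are genuine gaps in your proposal.

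First, the algebraic closure step does not yield the theorem as written. You state a self--consistent inequality of the form $K\le C_1\l\<N\>+C_2\l K$. Moving the $K$-term to the left gives $K(1-C_2\l)\le C_1\l\<N\>$, i.e.\ $K\le C_1\l\<N\>/(1-C_2\l)$, which is useful only for $C_2\l<1$ and is not of the saturating form $D\l\<N\>/(1+D\l)$. To obtain the saturating bound one must produce $K\le C_1\l\<N\>-C_2\l K$ (a \emph{negative} $K$-correction) and additionally have $C_1=C_2$; neither the sign reversal nor this numerical coincidence is argued. As stated, the closure step manufactures the denominator $1+D\l$ out of nowhere, and that denominator is exactly what the theorem is about. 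Absent a concrete derivation showing why the coupled use of the two Kirkwood--Salsburg bounds inverts the sign and equalises the two constants, the argument does not reach the target inequality.

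Second, the key sign input fails in the hard-core case. You assert that $f(x)=e^{-\b V(x)}-1\in[-1,0]$ under either hypothesis, but the hard-core hypothesis only forces $V(x)=+\infty$ for $|x|\le a$; for $|x|>a$ the potential may well be negative (indeed the paper's example \equ(2.8) is precisely a hard-core potential with an attractive tail), so $f(x)>0$ there. The identification $-\sum_i f_i=\sum_i|f_i|$, which your Weierstrass-product step relies on, then breaks down, as does the ancillary claim $\r^{(1)}\le\l$. The paper handles this by explicitly inserting the bound $e^{-\b W(X_N,\yy)}\le K(\b)=e^{\b\mu(a)}$ for hard-core potentials and bounding $\int_\L|1-e^{-\b V(x-\yy)}|d\yy\le C(\b)$ separately, so that the constant $D(\b)=K(\b)C(\b)$ records the contribution of the attractive tail. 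Your argument needs an analogous device to survive outside the purely repulsive case.
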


\\{\bf Proof}.  We will use the following short notations
$$
X_N ~=~ \xx_1,\dots ,\xx_N,~~~~~~~~~~~~ dX_N ~=~ d\xx_1\dots d\xx_N~,
~$$
$$
 U(X_N)~=~U(\xx_1,\dots ,\xx_N),~~~~~~~~~~~~~~
W(\xx,X_N)~=~ \sum_{j=1}^N V(\xx_j - \xx)
$$
$$
Z_N~=~ \int_{\L}d\xx_1 \dots \int_{\L}d\xx_N \, e^{ -\b
U(\xx_1,\dots ,\xx_N)}~=~ \int_{\L^N}dX_N  \, e^{ -\b U(X_N)}
$$
The partition function $\Xi_\L(\l,\b)$ (denoted shortly by $\Xi$)
can  thus be rewritten as
$$
\Xi ~=~\sum_{N=0}^{\i} {\l^N\over N!} Z_N\Eq(X)
$$
With these definitions we have,
$$
Z^2_{N+1}~=~ \left[\int_{\L^N}dX_N  \, e^{ -\b U(X_N)}\int_\L d\xx
e^{-\b W(\xx,X_N)}\right]^2~
~=
$$
$$
~=~
 \left[\int_{\L^N}dX_N  \, e^{ -{\b\over 2} U(X_N)}\int_\L d\xx
e^{ -{\b\over 2} U(X_N)}\, e^{-\b W(\xx,X_N)}\right]^2~=~
$$
$$=~
 \left[\int_{\L^N}dX_N  \, F(X_N)G(X_N)\right]^2
$$
where
$$
F(X_N)~=~ e^{ -{\b\over 2} U(X_N)}~~~~~~~~~~G(X_N)~=~ \int_\L d\xx e^{
-{\b\over 2} U(X_N)}\, e^{-\b W(\xx,X_N)}
$$
Using Schwartz inequality we get
$$
\left[\int_{\L^N}dX_N  \, F(X_N)G(X_N)\right]^2\le \int_{\L^N}dX_N
\, F^2(X_N) \int_{\L^N}dX_N  \, G^2(X_N)
$$
thus
$$
\begin{aligned}
Z^2_{N+1} & \le\int_{\L^N}\!\!dX_N  \, e^{ -\b U(X_N)} \!\!\int_{\L^N}\!\!dX_N
\int_{\L^2} d\xx d\yy e^{ -{\b} U(X_N)}\, e^{-\b W(\xx,X_N)} e^{-\b
W(\yy,X_N)}\\
& = Z_N \int_{\L^N}dX_N  \,\int_{\L^2} d\xx d\yy e^{ -{\b}
U(X_N,\xx,\yy)}\, e^{+\b V(\xx-\yy)}\\
& = Z_N \int_{\L^N}dX_N  \,\int_{\L^2} d\xx d\yy e^{ -{\b}
U(X_N,\xx,\yy)}\,\left ( e^{+\b V(\xx-\yy)}-1+1\right)\\
&= Z_N Z_{N+2} + Z_N \int_{\L^N}dX_N  \,\int_\L d\xx\int_\L d\yy
e^{ -{\b} U(X_N,\xx,\yy)}\,\left ( e^{+\b V(\xx-\yy)}-1\right)\\
&= Z_N  Z_{N+2} + Z_N \!\!\!\int_{\L^N}\!\!dX_N\!\! \int_\L\!\! d\xx e^{ -{\b}
U(X_N,\xx)} \!\!\int_\L \!\!d\yy e^{ -{\b} W(X_N,\yy)} [ 1-e^{-\b
V(\xx-\yy)}]
\end{aligned}
$$
Now, since we are assuming $V(\xx)$ hard core or positive we have
$$
e^{ -{\b} W(X_N,\yy)}\le K(\b) ~=~
\begin{cases} 1 &{\text if}~ V\ge 0\\
e^{ \b \mu(a)} & {\text if}~ V~{\rm  has ~an ~hard~ core}~ a
\end{cases}
$$
where $\mu(a)$ is a constant defined in \equ(mua).

\\Thus
$$
Z_{N+1}^2\le Z_N  Z_{N+2} ~+ ~K(\b) Z_N\!\!\! \int_{\L^N}dX_N \int_\L
\!\!d\xx e^{ -{\b} U(X_N,\xx)} \!\!\int_\L \!\!d\yy \,\left ( 1-e^{-\b
V(\xx-\yy)}\right)\le
$$
$$
\le Z_N \, Z_{N+2} + K(\b) Z_N Z_{N+1} \int_\L d\yy \,\left|
1-e^{-\b V(\xx-\yy)}\right|\
$$
Note now that in the case in which $V$ is positive or have hard
core, and also recalling that  $V$ must also be tempered, we have
$$
\int_\L d\yy \,\left| 1-e^{-\b V(\xx-\yy)}\right|\le \int_{
\mathbb{R}^3} d\xx | 1-e^{-\b V(\xx)}|\le C(\b)<+\i
$$
and calling $D(\b) ~=~ K(\b) C(\b)$ we get finally
$$
Z_{N+1}^2\le Z_N \, Z_{N+2} + D(\b) Z_N Z_{N+1}    ~
\Longleftrightarrow ~ {Z^2_{N+1}\over Z_{N}}\le {Z_{N+2}}+
Z_{N+1}D(\b)\Eq(Pi)
$$
Now consider
$$
\left(\<N\>[1+\l D(\b)]\right)^2~=~ \left(\sum_{N=0}^{\i}{\l^N\over
N!}N {Z_N\over \Xi}[1+\l D(\b)] \right)^2 ~=~
$$
$$
=~ \left(\sum_{N=0}^{\i}{\l^N\over N!} {Z_N\over \Xi}
\left[{\l^{N+1}Z_{N+1}\over \l^N Z_N}+N\l D(\b)\right]
\right)^2~=
$$
$$=
~~ \left(\sum_{N=0}^{\i}{\l^N\over N!} {Z_N\over \Xi}
\left[{\l\,Z_{N+1}\over Z_N}+N\l D(\b)\right] \right)^2
$$
where in the second line we have used
$$
\sum_{N=0}^{\i}  {\l^N\over N!} N Z_N ~=~ \sum_{N=1}^{\i}  {\l^N\over (N-1)!}  Z_N ~=~\sum_{N=0}^{\i}
{\l^{N+1}\over N!}  Z_{N+1}
$$
and we will use below that, by definition \equ(X)
$$
\sum_{N=0}^{\i}  {\l^N\over N!}  {Z_N\over \Xi} ~=~1
$$
Now
$$
\left(\sum_{N=0}^{\i}{\l^N\over N!} {Z_N\over \Xi} \left[{\l
Z_{N+1}\over  Z_N}+N\l D(\b)\right]
\right)^2~=~\left(\sum_{N=0}^{\i} F_N \cdot G_N\right)^2
$$
where
$$
F_N ~=~ \left({\l^N\over N!} {Z_N\over \Xi}\right) ^{1/2}
~~~~~~~~~ G_N~=~  \left({\l^N\over N!} {Z_N\over \Xi}\right)
^{1/2} \left[{\l Z_{N+1}\over  Z_N}+N\l D(\b)\right]
$$
using thus again Schwartz inequality we get
$$
\left(\sum_{N=0}^{\i} F_N \cdot G_N\right)^2\le
(\sum_{N=0}^{\i}F_N^2)(\sum_{N=0}^{\i}G_N^2) ~=~
$$
$$=~
1\times
\sum_{N=0}^{\i}  \left({\l^N\over N!} {Z_N\over \Xi}\right)
\left[\l {Z_{N+1}\over Z_N}+N\l D(\b)\right]^2~=~
$$
$$
~=~ \sum_{N=0}^{\i}  \left({\l^N\over N!} {1\over \Xi}\right)
\left[\l ^2 {Z^2_{N+1}\over Z_N}+ 2 N\l^2 D(\b)  {Z_{N+1}}+ Z_N
N^2\l ^2D^2(\b)\right]
$$
hence using also \equ(Pi)
$$
\left(\<N\>[1+\l D(\b)]\right)^2\le \sum_{N~=~0}^{\i}
\left({\l^N\over N!} {1\over \Xi}\right) \l ^2\Bigg[ {Z_{N+2}}~+~~~~~~~~~~~~~~~~~~~
$$
$$
~~~~~~~~~~~~~~~~~~~~+~
Z_{N+1}D(\b)+ 2 ND(\b)  {Z_{N+1}}+ Z_N N^2D^2(\b)\Bigg]
$$
~~ Now observe that
$$
 \sum_{N=0}^{\i}  \left({\l^N\over N!} {1\over \Xi}\right)
\l ^2 {Z_{N+2}}~=~\sum_{N=2}^{\i}  \left({\l^N{Z_{N}}\over (N-2)!} {1\over \Xi}\right)
~=~\<N(N-1)\>
$$
$$
 \sum_{N=0}^{\i}  \left({\l^N\over N!} {1\over \Xi}\right)
\l^2 Z_{N+1}D(\b)~=~\l D(\b) \<N\>
$$

$$
 \sum_{N=0}^{\i}  \left({\l^N\over N!} {1\over \Xi}\right)
2 N\l^2 D(\b)  {Z_{N+1}}~=~ 2D(\b)\l \<N(N-1)\>
$$
$$
 \sum_{N=0}^{\i}  \left({\l^N\over N!} {1\over \Xi}\right)
Z_N N^2\l ^2D^2(\b)~=~ D^2(\b)\l^2\<N^2\>
$$
thus
$$
\left(\<N\>[1+\l D(\b)]\right)^2\le
$$
$$
\le ~\<N(N-1)\>+ \l D(\b) \<N\> ~ +~ 2D(\b)\l \<N(N-1)\> ~ + ~
D^2(\b)\l^2\<N^2\>~=~
$$

$$
~=~ \<N^2\> -\<N\>+ \l D(\b) \<N\>  ~+ ~2D(\b)\l (\<N^2\> -\<N\>)~+~
D^2(\b)\l^2\<N^2\>~=~
$$

$$
~=~ \<N^2\>(1+ D(\b)\l)^2 - \<N\>(1+\l D(\b))
$$

\\
Thus we are arrived at the inequality
$$
\<N\>^2[1+\l D(\b)]^2\le ~\<N^2\>(1+ D(\b)\l)^2 - \<N\>(1+\l
D(\b))
$$
i.e.
$$
\<N\> ~~\le  ~(\<N^2\>-\<N\>^2) (1+ \l D(\b) )
$$
which is as to say
$$
{\<N\>\over \<N^2\>-\<N\>^2} \le 1+\l D(\b)
$$
and the proof is completed. $\Box$

\section {Analiticity of the pressure}
\vskip.1cm
\\We have seen by property 0a that the pressure at finite volume $p_\L(\b,\l)$ is analytic as a
function of its parameters $\b$ and $\l$ in the whole physical
domain $\l>0$, $\b>0$. We can now ask if the infinite volume
pressure $p(\b,\l)$ is also analytic in its parameters. If this
were the case then we would be really in trouble and we should
conclude that statistical mechanics is not sufficient to describe
the  macroscopic behaviour of a system with a large number of
particles.
\index{pressure!analyticity}
As a matter of fact experiments tells us that the physical
pressure can indeed be non analytic. For example the graphic of
the pressure versus the density at constant temperature (if the
temperature is not too high, i.e. not above the critical point)
for a real gas is drawn below.

\\When $\r$ reaches the value $\r_0$ the gas starts to condensate to its liquid phase
and during the whole interval $[\r_0, \r_1]$ the gas performs a
phase transitions. i.e. from its gas phase to its liquid phase, at
the same pressure $p_0$. Above $\r_1$ the system is totally in its
liquid phase. The change occurs abruptly and is usually
characterized by singular behaviour in thermodynamic functions.

Hence, in spite of the fact that pressure is continuous even in
the limit, its derivatives may be not continuous in the infinite
volume limit. Note that this fact substantially justifies the
necessity to take the thermodynamic limit. Until we stay at finite
volume, all thermodynamic functions are analytic, hence in order
to describe a phenomenon like phase transition we are forced to
consider the infinite volume limit. 
We can thus give the following mathematical definition for phase transition
\vskip.5cm
\begin{defi}
Any non analytic point of the grand
canonical pressure defined in \equ(2.13) occurring for real
positive $\b$ or $\l$ is called a phase transition point.
\end{defi}

\\People believe in general that the pressure $p(\b, \l)$ is a piecewise analytic function of
its parameters in the physical intervals $\l>0$ $\b>0$.

\\Hence it is very important to see which are the values of parameter $\l$ and $\b$ for which
the pressure is analytic.  Guided by the physical intuition, for  low
values of  $\l$ and $\b$   one expects that the
pressure is indeed analytic. In fact $\l$ low means that the
system is at low density, while $\b$ low means that the system is
at high temperature (e.g. above the critical point thus so high
that the system is always a gas and never condensates). For such
values the system is indeed a gas and in general very near to a
perfect gas. I.e. for temperature sufficiently high and/or density
sufficiently low the system should be in the gas phase and no phase
transition occurs.

\\Hence it should exist a theorem stating that the pressure $p(\b,\l)$ is analytic for
$\b$ and/or $\l$ sufficiently small.  We would see later the such a
theorem can effectively be proved.

\begin{figure}
\begin{center}
\includegraphics[width=7cm,height=7cm]{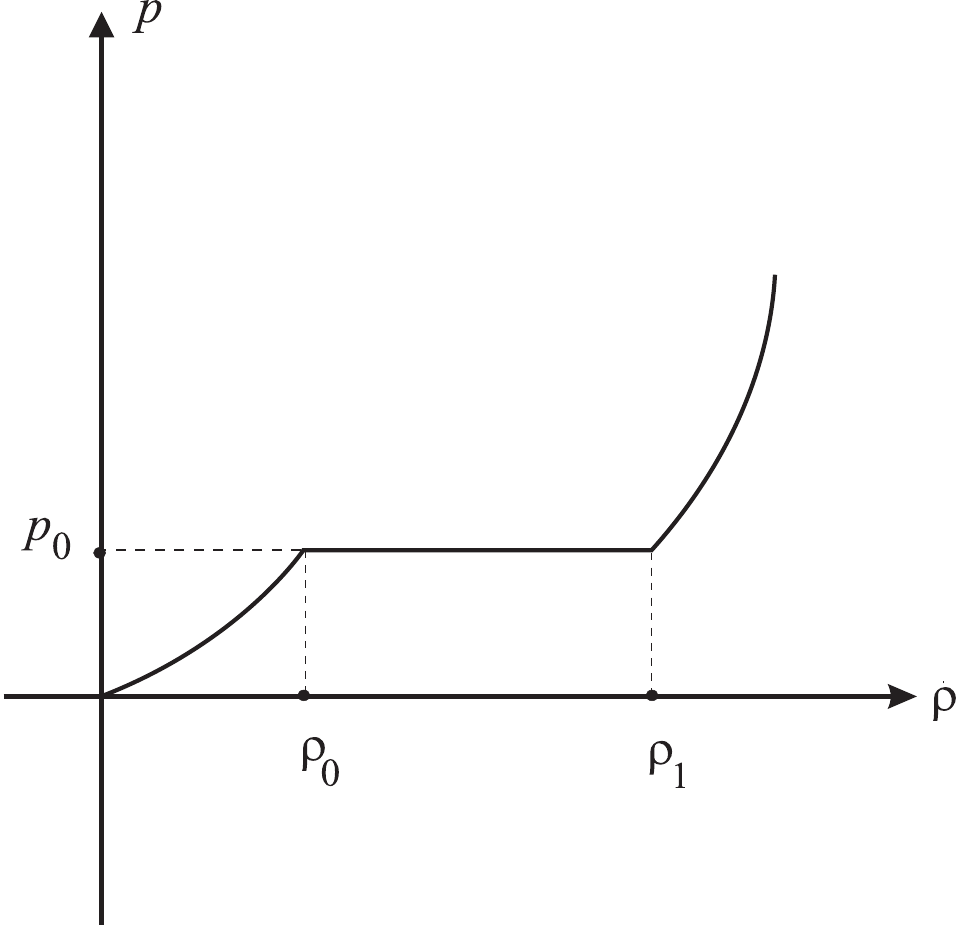}
\end{center}
\begin{center}
Figure 9. Pressure versus density for a physical gas. The
gas-liquid phase transition
\end{center}
\end{figure}

\\We may ask the following question. We know that $p(\b,\r)$ is continuous while e.g.
${\dpr p\over\dpr \r}$ may be not. But if ${\dpr p\over\dpr \r}$
is not continuous in some point it means that at that point $\r_0$
the function can take two values. So what is the thermodynamic
limit when $\r~=~\r_0$? Or, in other words which of the possible
values of ${\dpr p\over\dpr \r}$ the system chooses at the
thermodynamic limit?

\\The answer to this question resides  in the concept of boundary conditions.
Up to now boundary conditions were ``open", i.e. we were studying
a system of particles enclosed in a box $\L$ supposing that
outside $\L$ there was nothing, just empty space.

\\However, we could also
have done things in a different way, or in more proper words we
could have put a different boundary condition. For instance, we
can put $n$ particles outside the box  $\L$  at fixed point $\yy_1,\dots,\yy_n$.\index{boundary conditions}

\\In this case the grand canonical partition function looks as
$$
\Xi^{\yy}_\L (\b,\l)~=~1~+ \sum_{N=1}^{\infty}{\l^N \over N!}
\int_{\L}\!\!d\xx_1\dots\!\!\int_{\L} \!\!d\xx_{N} ~e ^{-\b U(\xx_1,\dots
,\xx_N)} e^{-\b W (\xx_1,\dots ,\xx_N,\yy_1,\dots ,\yy_n)}
\Eq(2.2bc)
$$
where
$$
W (\xx_1,\dots ,\xx_N,\yy_1,\dots ,\yy_n)~=~\sum_{i=1}^N\sum_{j=1}^n
V(\xx_i -\yy_j)
$$
Hence the function $\Xi_\L (\b,\l)$ may depend also on boundary
conditions. This means that also $p_\L$ the pressure at finite
volume depends on boundary conditions.

\\Does the infinite volume pressure depend on boundary conditions?

\\The answer to this question must be no (always guided by physical intuition), at least
for not too strange systems and/or not too strange boundary
conditions.

\\It is possible to show this quite easily for a finite range (with
range $\bar r$) hard core (with hard core $a$)  potential. I.e.,
particles must stay at distances $a$ or higher and they do not
interact if the distance is greater than $\bar r$.

\\In this case, the particles outside $\L$ that give contribution to the
partition function \equ(2.2bc) are  those which are contained  in a layer of thickness
$\bar r$ outside $\L$. Supposing that $\L$ is a cube in $\mathbb{R}^3$ of size $L$, the
volume of this layer is of the order $L^2 \bar r$.
\begin{figure}
\begin{center}
\includegraphics[width=5cm,height=5cm]{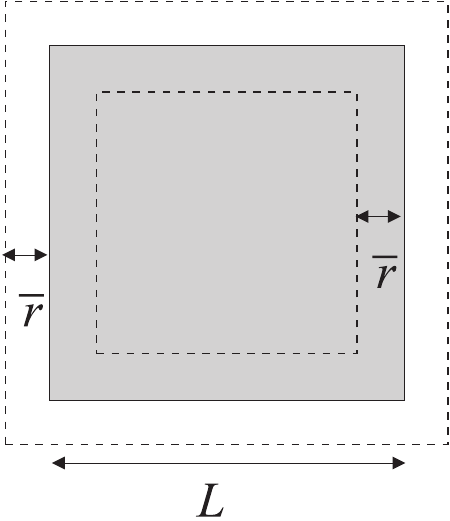}
\end{center}
\begin{center}
Figure 10.
\end{center}
\end{figure}
Any particle inside $\L$ can interact just with particles inside a
sphere of radius $\bar r$ centered at the particles position, and
due to the hard core condition in this sphere one can arrange at
most of the order of  $(\bar r / a)^3$. This means that a particle
inside can interact at most with $(\bar r / a)^3$ particles
outside $\L$.

\\On the other hand  particles inside $\L$  that can interact with particles
outside $\L$  are also contained in a (internal) layer of size
$\bar r$, and the maximum number of particles inside $\L$(hence
contributing to $W$ in \equ(2.2bc)) is of the order
$$
L^2 \bar r\over a^3
$$
With these observations it is not difficult to see that
$$
|W (\xx_1,\dots ,\xx_N,\yy_1,\dots ,\yy_n)|\le   {\rm Const.}~{L^2
\bar r\over a ^3 } \left({\bar r \over a}\right)^3
$$
and therefore
$$
 e^{-{\rm Const.}~{L^2
\bar r\over a ^3 } \left({\bar r \over a}\right)^3} \Xi^{\rm open}_\L (\b,\l)~\le ~\Xi^{\yy}_\L (\b,\l)~\le ~ e^{{\rm Const.}~{L^2
\bar r\over a ^3 } \left({\bar r \over a}\right)^3} \Xi^{\rm open}_\L (\b,\l)
$$
Thus
$$
{1\over L^3}\ln \Xi^{y}_\L(\b, \l)~=~ {1\over L^3}\ln \Xi^{\rm
open}_\L(\b,\l) \pm  {1\over L^3}{\rm Const.}~{L^2 \bar r\over
a ^3 } \left({\bar r \over a}\right)^3
$$
and the factor
$$
{1\over L^3}{\rm Const.}~{L^2 \bar r\over  a ^3 } \left({\bar r
\over a}\right)^3
$$ goes to zero as $L\to\i$. I.e. the pressure does not depend on boundary conditions.

\\But now  in general the derivative of the pressure may depend on boundary condition.
This happens precisely when the derivative are not continuous. In
such points different boundary condition may force different
values of derivatives. Changing boundary condition one can thus
change the value on a derivative in a discontinuity point. This
can be interpreted as an alternative definition of phase
transition. \vskip.1cm
\index{phase transition}
\begin{defi}
A phase transition point is a point in which the value of
some derivative of the infinite volume pressure depends on
boundary conditions even at the thermodynamic limit.
\end{defi}
Thus when the system is sensible to change of boundary conditions  we say that  there is a phase transition.

\\By this rough discussion, we see that lack of analyticity of
the pressure or sensitivity of the system to change in boundary
conditions are two ways to characterize a phase transition.

\newpage
\numfor=1\numsec=3
\chapter{High temperature low density expansion}
\numsec=3\numfor=1
\section{The Mayer series}\label{sec3}
\subsection{Some Notations about graphs}
\\We first give some definitions about abstract graphs which will be useful below.

\index{graph}
\\Let $A$ be any finite set, we denote by $|A|$ the number of
elements of $A$. We denote by $P(A)$ the power set of $A$ (i.e.
the set of all subsets of $A$). We denote by $P^k(A)~=~\{U\subset A:
|U|~=~k\}$ (i.e. the set of all subsets of $A$ with cardinality $k$). If
$A~=~\{1,2,\dots ,n\}$, we shortly put  $P(\{1,2,\dots ,n\})\equiv
P_n$ and $P^k(\{1,2,\dots ,n\})\equiv P^k_n$.

\begin{defi}
A graph $g$ is a pair $g~=~(V_g,E_g)$ where
$V_g$ is a countable set and $E_g\subset P^2(V_g)$. The set $V_g$
is called the {\it vertex set} of $g$ and the elements of $V_g$
are called {\it vertices} of the graph $g$, while $E_g$ is called
the {\it edge set} of $g$ and the elements of $E_g$ are
called {\it edges} of the graph $g$.
\end{defi}

\\
Given two graphs $g~=~(V_g,E_g)$ and $f~=~(V_f,E_f)$ we say that
$f\subset g$ if  $V_f\subset V_g$ and $E_f\subset E_g$.

\begin{defi}
A graph $g~=~(V_g,E_g)$
is said to be {\it connected} if for any pair $B, C$ of  subsets
of $V_g$ such that $B\cup C ~=~V_g$ and $B\cap C ~=~\emptyset$, there is
a $e\in E_g$ such that $e\cap B\neq\emptyset$ and $e\cap
C\neq\emptyset$.
\end{defi}

\\We denote by $\GG_{V}$ the set of all graphs with vertex set $V$ and
by $G_{V}$ the set of all connected graphs with vertex set $V$. We
will shortly denote $\GG_n$ the set $\GG_{\{1,2,\dots, n\}}$ of
all graphs with vertex set $\{1,2,\dots ,n\}$ and by $G_n$ the set
$G_{\{1,2,\dots, n\}}$ of all connected graphs with vertex set
$\{1,2,\dots ,n\}$. We will also denote shortly
$$
[n]=\{1,2,\dots, n\}\Eq(Ien)
$$
$$
\E_n=\{\{i,j\}\subset [n]\}\Eq(Een)
$$
In particular $\E_n$ is the set of all unordered pairs $\{i,j\}$ of the set $[n]$.

\begin{defi}
A  graph $\t\in G_{V}$ such
that $|E_\t|~=~|V|-1$ is called a tree graph.
\end{defi}

\\We denote $T_{V}$ the set of all tree graphs with vertex set $V$. Note that
$T_V\subset G_V$. The set of all the tree graph over $[n]$
will be denoted by $T_n$.

\begin{defi}
Let $g~=~(V_g,E_g)$ be a graph and let $x\in V_g$. Then
the degree (or number of incidence)  $d_x$ of the vertex $x\in V_g$
in $g$ is the number of edges $e\in E_g$ such that $x\in e$.
\end{defi}

\index{graph!connected}\index{graph!tree graph}
\\Observe that in a tree $\t\in T_n$ the numbers of incidence
$d_1, \dots ,d_n$ at vertices $1,\dots, n$ satisfies the identity
$$
d_1+d_2+\dots +d_n~=~ 2n-2\Eq(dipiu)
$$
This identity holds because any tree $\t$ has $n-1$ edges and each
edge has two vertices and so each edge is counted twice in the sum of the l.h.s. of \equ(dipiu).
Moreover, we have clearly the following
bound for the number of incidence $d_i$ in a vertex of tree $\t$.
$$
1\le d_i\le n-1\Eq(dimeno)
$$
This is again because any tree   $\t\in T_n$ is connected (hence $d_i\ge 1$ for all $i\in [n]$) and has $n-1$ edges (hence $d_i\le n-1$ for  all $i\in [n]$).

\\The number of trees in $T_n$ is explicitly computable using the
Cayley formulas.  \index{Cayley formulas} These formulas can be presented by the
following lemma. \vskip.1cm

\begin{lem}
The number of trees in $T_n$ with
numbers of incidence in vertices $1,2,\dots,n$ fixed at the values
$d_1 ,d_2 ,\dots , d_n$ is given by
$$ \sum_{\t\in T_{n}\atop d_1 ,d_2 ,\dots ,
d_n ~{\rm fixed}}1~=~ {(n-2)!\over \prod_{i~=~1}^{n}(d_i -1)!}
\Eq(cay1)
$$
Moreover
$$ |T_n|~=~\sum_{\t\in
T_{n}}1~=~n^{n-2}\Eq(cay2)
$$
\end{lem}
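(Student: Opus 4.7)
The plan is to prove both formulas at once via the Prüfer bijection between $T_n$ and the set of sequences $[n]^{n-2}$ of length $n-2$ with entries in $[n]$. Once this bijection is in place and its interaction with the degree sequence is understood, both \equ(cay1) and \equ(cay2) follow by pure counting.

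First I would define the Prüfer encoding $\Phi : T_n \to [n]^{n-2}$. Given $\tau \in T_n$, iterate the following procedure $n-2$ times: locate the leaf (degree $1$ vertex) of smallest label in the current tree, record the label of its unique neighbor as the next entry of the output sequence, and then delete that leaf. After $n-2$ steps exactly one edge (two vertices) remains, and one has produced a sequence $(a_1, \ldots, a_{n-2}) \in [n]^{n-2}$. I would then construct the inverse $\Psi$ explicitly: given a sequence $(a_1,\dots,a_{n-2})$, at each step let $b_k$ be the smallest element of $[n] \setminus \{b_1,\dots,b_{k-1}\}$ that does not appear in $(a_k,\ldots,a_{n-2})$, and draw the edge $\{a_k, b_k\}$; after $n-2$ steps add the final edge between the two remaining unused labels. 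A routine induction on $n$ shows $\Psi \circ \Phi = \mathrm{id}$ and that the graph produced by $\Psi$ is always a tree, so $\Phi$ is a bijection; this is the routine but slightly fiddly piece of the argument and is the main (only) obstacle.

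Next I would establish the key combinatorial fact: for every $\tau \in T_n$ and every vertex $i \in [n]$, the label $i$ appears in $\Phi(\tau)$ exactly $d_i - 1$ times, where $d_i$ is the degree of $i$ in $\tau$. The reason is that each time a neighbor of $i$ is pruned, the label $i$ is recorded once, and this happens exactly $d_i - 1$ times (the final remaining neighbor is never pruned). This is consistent with \equ(dipiu), since the total length of the sequence is $\sum_i (d_i - 1) = 2n - 2 - n = n - 2$.

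From here both formulas are immediate. For \equ(cay1), trees in $T_n$ with degree sequence $(d_1, \ldots, d_n)$ correspond bijectively, via $\Phi$, to sequences in $[n]^{n-2}$ in which the symbol $i$ appears exactly $d_i - 1$ times; the number of such sequences is the multinomial coefficient
$$
\binom{n-2}{d_1-1,\,d_2-1,\,\dots,\,d_n-1} \;=\; \frac{(n-2)!}{\prod_{i=1}^n (d_i-1)!}.
$$
For \equ(cay2), since $\Phi$ is a bijection onto $[n]^{n-2}$, one has $|T_n| = |[n]^{n-2}| = n^{n-2}$. Alternatively, summing \equ(cay1) over all admissible $(d_1,\dots,d_n)$ with $d_i \ge 1$ and $\sum d_i = 2n-2$ and using the multinomial theorem
$$
\sum_{k_1+\cdots+k_n = n-2} \frac{(n-2)!}{k_1!\cdots k_n!} = n^{n-2}
$$
(with $k_i = d_i - 1 \ge 0$) yields the same count, giving a consistency check.
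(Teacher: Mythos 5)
Your proof is correct but takes a genuinely different route from the paper's. The paper proves \equ(cay1) by induction on $n$: it observes that a tree on $n+1$ vertices with a leaf at $n+1$ is obtained by attaching $n+1$ to some vertex $j$ of a tree on $[n]$, sums the induction hypothesis over the possible choices of $j$, and then uses \equ(dipiu) to evaluate the resulting sum $\sum_j (d_j - 1) = n - 1$; it then derives \equ(cay2) from \equ(cay1) by the multinomial theorem, exactly as in your final ``consistency check.'' You instead construct the Pr\"ufer bijection $\Phi : T_n \to [n]^{n-2}$, show that vertex $i$ appears exactly $d_i - 1$ times in $\Phi(\tau)$, and read off both formulas directly from this bijection. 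Your route is the more structural one: once the bijection is in hand, \equ(cay2) is literally immediate rather than a corollary of \equ(cay1), and the bijection itself carries more information than the two cardinality identities. The trade-off, which you flag honestly, is that verifying $\Phi$ is a bijection (that $\Psi$ always produces a tree, and $\Psi \circ \Phi = \mathrm{id}$, $\Phi \circ \Psi = \mathrm{id}$) is a nontrivial induction of its own; the paper's argument stays entirely within elementary counting and only needs the degree identity \equ(dipiu), which has already been established in the text. Both are standard and both are complete; your argument is not what the paper does, but it is a fully valid alternative and arguably more illuminating if one is willing to do the bijection bookkeeping.
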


\\{\it Proof.}
We first show \equ(cay1) by induction. Formula \equ(cay1) is
trivially true if $n=2$. We suppose now that \equ(cay1) is true
for all trees with $n$ vertices and with any incidence numbers
$d_1,\dots ,d_n$ and we will prove that \equ(cay1) holds also for
all trees with $n+1$ vertices and with any incidence numbers
$d_1,\dots ,d_n,d_{n+1}$.

\\Take a tree $\t$ with $n+1$ vertices and numbers of incidence in vertices fixed at
the values $d_1,\dots ,d_{n+1}$. Such tree has at least two
vertices with number of incidence equal to 1. Without lost in
generality let us suppose that $d_{n+1}=1$ (otherwise we can
always rename the vertices of $\t$). Now if $d_{n+1}=1$, then
there is an edge in $\t$ which links the vertex $n+1$ with some
vertex  $j\in \{1,\dots ,n\}$. This vertex $j$ has surely $d_j\ge
2$, i.e. there is at least another edge starting from $j$ which
ends in some other vertex $k\in \{1,\dots ,n\}$, since $\t$ is
connected. So we can count trees with $n+1$ vertices and fixed
number of incidence proceeding as follows. For each time we fix
the edge joining $n+1$ to some vertex $j\in \{1,\dots ,n\}$, we
count all trees in $\{1, \dots ,n\}$  with number of incidence
fixed at the values $d_1, \dots, d_{j-1}, d_j -1, d_{j+1}, \dots
d_n$, by using the induction hypothesis. Namely,
$$
\begin{aligned}
\sum_{\t\in T_{n+1}\atop d_1, \dots , d_{n+1} ~{\rm
fixed}}1& =\!\!\!\!\!\!\sum_{j\in \{1,2,\dots,n\}:\atop d_j\ge 2}\!\!\!\! {(n-2)!\over
(d_1-1)!\cdots (d_{j-1}-1)!(d_j-2)!(d_{j+1}-1)!\cdots (d_n-1)!}\\
&=\sum_{j\in \{1,2,\dots,n\}:\atop d_j\ge 2} {(n-2)!(d_j-1)\over
(d_1-1)!\cdots(d_j-1)!\cdots  (d_n-1)!(d_{n+1}-1)!}\\
& = {(n-2)!\over
\prod_{i=1}^{n+1}(d_i-1)!}\sum_{j=1}^n (d_j-1)
\end{aligned}
$$
where in the second line we just rewrite
$$
{1\over (d_j-2)!}~=~ { (d_j-1)\over (d_j-1)!(d_{n+1} -1)!}
$$
using the assumption that $d_{n+1}=1$ and hence  $(d_{n+1}
-1)!=0!=1$. Now using again that $d_{n+1}-1=0$ and by \equ(dipiu) with $n+1$ in place of $n$, we get
$$
\sum_{j=1}^n (d_j-1)= \sum_{j= 1}^{n+1} (d_j-1)~=~ \sum_{j= 1}^{n+1} d_j -(n+1)~=~
2(n+1)-2 -(n+1)~=~ n-1
$$
hence
$$
\sum_{\t\in T_{n+1}\atop d_1 ,d_2 ,\dots , d_{n+1} ~{\rm fixed}}1~=~
{(n-2)!\over \prod_{i=1}^{n+1}(d_i-1)!}(n-1)~=~ {(n-1)!\over
\prod_{i=1}^{n+1}(d_i-1)!}
$$
and  \equ(cay1) is proved. Now  \equ(cay2) is an easy consequence of \equ(cay1). As a matter of fact, just observe,  using
\equ(dipiu), \equ(dimeno), and \equ(cay1), that
$$
\sum_{\t\in T_{n}}1~=~\sum_{d_1, \dots , d_n: ~1\le d_i\le n-1\atop
d_1+\dots +d_n=2n-2}  {(n-2)!\over \prod_{i=1}^{n}(d_i -1)!}~=~
\sum_{s_1, \dots , s_n:~ 0\le s_i\le n-2\atop s_1+\dots +s_n~=~n-2}
{(n-2)!\over \prod_{i=1}^{n}s_i!}~=~ n^{n-2}
$$
$\Box$
\vskip.1cm
\subsection{Mayer Series: definition}\label{mayersec}
We now come back to the claim  that a system of particles,
interacting via a  reasonable potential energy (e.g. defined via a
stable and tempered pair potential) should be  in the  a gas phase  at
sufficiently low density and/or sufficiently high temperature,
hence the pressure of such system should be analytic in the
thermodynamic parameters in this region.  Since we are considering
just the Grand Canonical ensemble, the region of high temperature
and low density will be in the case the region of $\l$ small and
$\b$ small.

\\Let us thus consider again the grand Canonical partition function
$$
\Xi_\L(\b,\l)~=~1+ \sum_{n=1}^{\infty}{\l^n \over n!}
\int_{\L}d\xx_1\dots\int_{\L} d\xx_{n} ~e ^{-\b U(\xx_1,\dots
,\xx_n)} \Eq(n2.2)
$$
with
$$
U(\xx_1,\dots ,\xx_n)~=~\sum_{1\le i<j\le n}V(\xx_i-\xx_j)\Eq(n2.3)
$$
where $V(\xx)$ is a stable and tempered pair potential. We denote shortly $\xx_{[n]}$ the set of coordinates
$\xx_1,\dots,\xx_n$ and if $I\subset [n]$ then $\xx_{I}$ will denote the set of coordinates $\xx_i$ with $i\in I$. Hence
for $I\subset [n]$
$$
U(\xx_{I})=\sum_{\{i,j\}\subset I} V(\xx_i-\xx_j)
$$

\\We now look for an expansion of the log of such function valid for $\l$ small.
This is clearly possible, due to the structure of $\Xi_\L(\b,\l)$. In fact, as a power series of $\l$ the partition
function has the form
$$
\Xi_\L(\b,\l)~=~1 + c_1\l + c_2\l^2 + c_ 3\l^3 + \dots ~=~ 1+ O(\l)
$$
and $\ln( 1+ O(\l))$ can indeed be expanded also in power series
of $\l$. Let us show that there exists a formal expansion of $\ln
\Xi $ in powers of $\l$ called the Mayer series of the
pressure. We thus prove now the following theorem.

\begin{teo}\label{mayer}
Let $\Xi_\L(\b,\l)$ be defined as in \equ(n2.2) and \equ(n2.3).
Then
$$
{1\over |\L|}\ln \Xi_\L(\b,\l) ~=~ \sum_{n=1}^{\i}C_n(\b,\L)\l^n\Eq(pressm)
$$
where
$$
C_n(\b,\L)~=~{1\over n!}{1\over |\L|}\int_{\L}d\xx_1
\dots \int_{\L} d\xx_n \Phi^T(\xx_1,\dots,\xx_n) \Eq(ursm)
$$
with $\Phi^T(\xx_1,\dots,\xx_n)$ given by  the following  expressions:
\begin{itemize}
\item[A)]
$$
\Phi^T(\xx_1,\dots,\xx_n)~=~
\begin{cases}\sum\limits_{g\in G_{n}}~
\prod\limits_{\{i,j\}\in E_g}\left[  e^{ -\b V(\xx_i -\xx_j)} -1\right] &{\rm  if}~
n\ge 2\\ 1 &{\rm if}~ n=1
\end{cases}
\Eq(urse)$$
where recall   that $G_n$ is the set of all connected graphs in
$\{1,2,\dots, n\}$ and if $g\in G_n$ then its edge set is denoted by $E_g$.

\item[B)] Given $[n]$ and $1\le k\le n$, let $\Pi^k_n$ is the set of all partitions of $[n]$ in $k$ blocks. Then
$$
\Phi^T(\xx_1,\dots,\xx_n)~=~ \sum_{k=1}^n (-1)^{k-1}(k-1)!\sum_{\{I_1,I_2, \dots, I_k\}\in \P^k_{n}}e^{-\b\sum_{\a=1}^k U(\xx_{I_\a})} \Eq(Basuev)
$$
where  $I=\{i_1,\dots, i_{|I|}\}\subset [n]$ and $x_I=(x_{i_1},\dots, x_{i_{|I|}})$ with $ U(\xx_{I})=U(x_{i_1},\dots, x_{i_{|I|}})$.
\end{itemize}

\end{teo}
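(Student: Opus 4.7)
The plan is to prove both expressions for $\Phi^T$ by working directly with $\Xi_\L(\b,\l)$ as a formal power series in $\l$; since $\Xi_\L(\b,\cdot)$ is entire in $\l$ with $\Xi_\L(\b,0)=1$, its logarithm is analytic in a neighbourhood of $\l=0$ and the identification of Taylor coefficients is legitimate. I would derive \equ(urse) (formula A) via the classical Mayer factorisation of the Boltzmann factor, and then obtain \equ(Basuev) (formula B) by direct expansion of $\log(1+x)$.

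For (A), I would introduce the Mayer $f$-bonds $f_{ij}:=e^{-\b V(\xx_i-\xx_j)}-1$ and rewrite
\[
e^{-\b U(\xx_{[n]})}\;=\;\prod_{\{i,j\}\in\E_n}(1+f_{ij})\;=\;\sum_{g\in\GG_n}\prod_{\{i,j\}\in E_g}f_{ij}.
\]
The central observation is that the weight of each graph factorises over its connected components: if $\pi(g)=\{V_1,\dots,V_c\}$ is the partition of $[n]$ into the components of $g$, then $\prod_{\{i,j\}\in E_g}f_{ij}=\prod_{\a=1}^{c}\prod_{\{i,j\}\in E_{g|_{V_\a}}}f_{ij}$, since no bond of $g$ crosses distinct components. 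Setting $\phi^T_m(\xx_{[m]}):=\sum_{h\in G_m}\prod_{\{i,j\}\in E_h}f_{ij}$ for $m\ge 2$ and $\phi^T_1\equiv 1$ (precisely the right-hand sides in \equ(urse)), and $K_m:=\int_{\L^m}\phi^T_m\,d\xx$, the permutational symmetry of $U$ (property $i)$ of $U$) gives
\[
I_n\;:=\;\int_{\L^n}e^{-\b U(\xx_{[n]})}d\xx\;=\;\sum_{\pi\in\Pi_n}\prod_{V\in\pi}K_{|V|}.
\]
This is the combinatorial hypothesis of the exponential formula for species, whose verification reduces to expanding $\exp\bigl(\sum_{m\ge 1}\frac{\l^m}{m!}K_m\bigr)$, grouping terms according to the multiset of block sizes, and reading off the multinomial coefficient. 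Taking logarithms and dividing by $|\L|$ then yields \equ(pressm) with $C_n(\b,\L)$ given by (A).

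For (B), I would expand $\log\Xi_\L=\sum_{k\ge 1}\frac{(-1)^{k-1}}{k}(\Xi_\L-1)^k$ with $\Xi_\L-1=\sum_{n\ge 1}\frac{\l^n}{n!}I_n$. The coefficient of $\l^n$ in $(\Xi_\L-1)^k$ is $\sum_{(n_1,\dots,n_k)}\prod_j\frac{I_{n_j}}{n_j!}$, the sum running over ordered $k$-tuples of positive integers with $\sum_j n_j=n$. Using the symmetry of $U$ to rewrite each such ordered tuple as an integral over an ordered partition of $[n]$ into $k$ blocks of the prescribed sizes, and then converting ordered partitions to unordered ones (each $\pi\in\Pi_n^k$ arising from exactly $k!$ ordered partitions), one obtains
\[
\sum_{(n_1,\dots,n_k)}\prod_j\frac{I_{n_j}}{n_j!}\;=\;\frac{k!}{n!}\int_{\L^n}\sum_{\pi\in\Pi_n^k}e^{-\b\sum_{\a}U(\xx_{I_\a})}d\xx.
\]
The prefactor $\frac{(-1)^{k-1}}{k}\cdot\frac{k!}{n!}=\frac{(-1)^{k-1}(k-1)!}{n!}$ then reproduces formula (B) after collecting the coefficient of $\l^n$.

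The main obstacle is purely combinatorial: for (A) one must verify the exponential formula identity $\sum_n\frac{\l^n}{n!}\sum_{\pi\in\Pi_n}\prod_{V\in\pi}K_{|V|}=\exp(\sum_m\frac{\l^m}{m!}K_m)$; for (B) one must handle the conversion between sums over ordered size-tuples and sums over unordered set partitions. Both are standard formal-series manipulations but must be executed carefully. Once they are in place the equality of the two formulas is automatic, since they are two presentations of the same sequence of Taylor coefficients of $\log\Xi_\L(\b,\cdot)$ at $\l=0$.
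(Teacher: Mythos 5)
Your proposal is correct and follows essentially the same route as the paper: part (A) via the Mayer $f$-bond expansion, factorisation of graph weights over connected components, and the exponential formula; part (B) via direct expansion of $\log(1+(\Xi_\L-1))$ followed by the conversion from ordered size-tuples $(n_1,\dots,n_k)$ to unordered set partitions $\{I_1,\dots,I_k\}\in\Pi_n^k$ using the multinomial count $n!/(n_1!\cdots n_k!)$ and the $k!$ factor for ordering the blocks. The only presentational difference is that the paper carries out the multinomial/partition bookkeeping via its identity \equ(aiou) rather than stating the exponential-formula step abstractly, but the content is identical.
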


\\The series in the r.h.s. of \equ(pressm) is called the Mayer series\index{Mayer series} and the coefficient
$C_n(\b,\L)$ ($n\ge 1$) is called Mayer (or Ursell) coeffcient of order $n$. For the moment this series is merely
a formal series, since Theorem \ref{mayer} does not say if the series
converges or not.

\\{\it Proof of A}.

$$
\begin{aligned}
\Xi_\L(\b,\l)& =~1+ \sum_{n=1}^{\infty}{\l^n \over n!}
\int_{\L}d\xx_1\dots\int_{\L} d\xx_{n} ~e ^{-\b \sum_{1\le i<j\le n}V(\xx_i -\xx_j)}\\
& =~1+ \sum_{n=1}^{\infty}{\l^n \over n!}
\int_{\L}d\xx_1\dots\int_{\L} d\xx_{n} ~\prod_{1\le i<j\le n}e
^{-\b V(\xx_i -\xx_j)}\\
&=~1+ \sum_{n=1}^{\infty}{\l^n \over n!}
\int_{\L}d\xx_1\dots\int_{\L} d\xx_{n} ~\prod_{1\le i<j\le
n}\left[\left(e ^{-\b V(\xx_i -\xx_j)}-1\right)+1\right]
\end{aligned}
$$
Develop  now  the product in  the factor
$$
\prod_{1\le i<j\le n}\left[\left(e ^{-\b V(\xx_i
-\xx_j)}-1\right)+1\right]
$$
Then it is not difficult to see that it
can be rewritten as
$$
\prod_{1\le i<j\le n}\left[\left(e ^{-\b V(\xx_i
-\xx_j)}-1\right)+1\right]~=~ \sum_{g\in \GG_n} \prod_{\{i,j\}\in
E_g}\left[ e^{ -\b V(\xx_i -\xx_j)} -1\right]
$$
where recall that $\GG_n$ is the set of all graphs (connected or
not connected) in the set $\{1,2\dots ,n\}$.  In $\GG_n$ we also
include the empty graph, i.e the graph $g$ such that
$E_g~=~\emptyset$, and its contribution is the factor $1$ in the
development of the product.

\\We now reorganize the sum over graphs in $\GG_n$.

\\For $1\le k\le n$,
let $\{I_1,I_2, \dots, I_k\}$ denote a partition of the set
$[n]$. Namely, for any $i,j ~=~1,2\dots ,k$ we have that
$I_i\neq\emptyset$, $I_i\cap I_j~=~\emptyset$ and
$\cup_{j~=~1}^{k}I_j~=~ [n]$. We denote by $\xx_{I_j}$ the
set of coordinates $\xx_i$ with $i\in I_j$. We also denote with
$\P_{n}$ the set of all partitions of the set $[n]$.

\\Then it is not difficult to see that
$$
\sum_{g\in \GG_n} \prod_{\{i,j\}\in E_g}\left[ e^{ -\b V(\xx_i
-\xx_j)} -1\right] ~=~ \sum_{k=1}^n\sum_{\{I_1,I_2, \dots,
I_k\}\in \P^k_{n}}\prod_{j=1}^k \Phi^T(\xx_{I_{j}})\Eq(2.2s)
$$
where
$$
\Phi^T(\xx_{I_{j}})~=~
\begin{cases}\sum\limits_{g\in G_{I_j}} \prod\limits_{\{l,s\}\in
E_g}\left[  e^{ -\b V(\xx_l -\xx_s)} -1\right] &{\rm if}~ |I_j|\ge 2\\
1 &{\rm if}~ |I_j|~=~ 1
\end{cases}
$$
Note now that $\sum_{g\in G_{I_j}} $ runs over all {\it connected}
graphs in the set $I_j$.

\\Hence
$$
\Xi_\L (\b,\l)~=~1+ \sum_{n=1}^{\infty}{\l^n \over n!}
\int_{\L}d\xx_1\dots\int_{\L} d\xx_{n}
\sum_{k=1}^n~\sum_{\{I_1, \dots, I_k\}\in \P^k_{n}}~\prod_{j=1}^k
\Phi^T(\xx_{I_{j}})
$$
Now observe that  each $\Phi^T(\xx_{I_{j}})$ depends only on the
set of coordinates $\xx_i$ with $i\in I_j$ and since $I_1, \dots
I_k$ is a partition of $\{1,\dots ,n\}$ we can write
$$
\int_{\L}d\xx_1\dots\int_{\L} d\xx_{n} ~=~
\int_{\L^{|I_1|}}d\xx_{I_1}\dots \int_{\L^{|I_k|}}d\xx_{I_k}
$$
where of course $d\xx_{I_i}~=~\prod_{i\in I_j}d\xx_i$. Hence
$$
\Xi_\L (\b,\l)~=~1+ \sum_{n=1}^{\infty}{\l^n \over n!}
\sum_{k=1}^n\sum_{\{I_1, \dots, I_k\}\in \P_{n}}\prod_{j=1}^k
\int_{\L^{|I_j|}}d\xx_{I_j}\Phi^T(\xx_{I_{j}})
$$
Observe now that the factor
$$
\int_{\L^{|I_j|}}d\xx_{I_j}\Phi^T(\xx_{I_{j}})
$$
depends only on $|I_j|$ and not anymore from $\xx_i$ (since space
coordinates are integrated) and not even on $I_j$ since $i\in I_j$
is just an index attached to a mute variable. I.e.

$$
\int_{\L^{|I_j|}}d\xx_{I_j}\Phi^T(\xx_{I_{j}})~=~
\int_{\L}d\xx_1\dots \int_{\L}d\xx_{|I_j|} \Phi^T(\xx_1, \dots
\xx_{|I_{j}|}) ~=~  \phi(|I_j|)$$ Note that the numbers  $|I_j|$ are
positive integers subjected  to the condition $\sum_{j=1}^k |I_j|
=n$. Hence
$$
\Xi_\L (\b,\l)~=~1+ \sum_{n=1}^{\infty}{\l^n \over n!}
\sum_{k=1}^n\sum_{\{I_1, \dots, I_k\}\in \P^k_{n}}\prod_{j=1}^k
\phi(|I_j|)
$$

\\ Now observe that
$$
\begin{aligned}
\sum_{\{I_1,I_2, \dots, I_k\}\in \P^k_{n}}\prod_{j=1}^k \phi(|I_j|)&=~
{1\over k!} \sum_{m_1, \dots ,m_k:~\, m_i\ge 1\atop m_i+\dots
+m_k=n} \sum_{I_1,\dots ,I_k:\atop
|I_1|=m_1,\dots, |I_k|=m_k}\prod_{j=1}^k\phi(m_j)\\
&=\,\,{1\over k!} \sum_{m_1, \dots ,m_k:~\, m_i\ge 1\atop m_1+\dots
+m_k=n} \prod_{j=1}^k\phi(m_j) \sum_{I_1,\dots ,I_k:\atop
|I_1|=m_1,\dots, |I_k|=m_k}\!\!\!\!1\end{aligned}
$$
here above we have also to divide by $k!$
because the same partition $I_1, \dots I_k$ of $[n]$ appears
exactly $k!$ times in the sum
$$
\sum_{m_1, \dots ,m_k:\, m_i\ge 1\atop m_1+\dots
m_k=n} \sum_{I_1,\dots ,I_k:\atop
|I_1|=m_1,\dots, |I_k|=m_k}
$$
Therefore we get
$$
\sum_{\{I_1,I_2, \dots, I_k\}\in \P^k_{n}}\prod_{j=1}^k \phi(|I_j|)=
{1\over k!}\sum_{m_1, \dots ,m_k:~\, m_i\ge 1\atop m_1+\dots
m_k=n}\prod_{j=1}^k \phi(m_j){n!\over m_1!\dots m_k!}\Eq(aiou)
$$
Indeed, ${n!/(m_1!\dots m_k!)}$ is the number
of partitions of $[n]$ in $k$ subsets $I_1 , ..., I_k$
such that the numbers $|I_1|,...,|I_k|$ are fixed at the values
$m_1,\dots ,m_k$ respectively.
\\Hence we can write
$$
\begin{aligned}
\Xi_\L(\b,\l)&=~1+ \sum_{n=1}^{\infty}{\l^n \over n!}
\sum_{k=1}^n{1\over k!}\sum_{m_1, \dots,m_k: \,m_i\ge
1\atop m_1+\dots + m_k=n} {n!\over m_1!\dots m_k!}
\prod_{j=1}^k \phi(m_j)\\
&=~1+ \sum_{n=1}^{\infty}  \sum_{k=1}^n{1\over
k!}\sum_{m_1, \dots,m_k: \,m_i\ge 1\atop m_1+\dots +
m_k=n} {\l^{m_1}\cdots\l^{m_k}\over m_1!\dots m_k!} \prod_{j=1}^k \phi(m_j)\\
&=~1+  \sum_{k=1}^\infty{1\over
k!}\sum_{m_1, \dots,m_k \atop \,m_i\ge 1} {\l^{m_1}\cdots\l^{m_k}\over m_1!\dots m_k!} \prod_{j=1}^k \phi(m_j)\\
&=~ 1+ \sum_{k=1}^\i {1\over
k!}\left[\sum_{m=1}^{\i} {\l^{m}\over m!}  \phi(m)\right]^k
\end{aligned}
$$
i.e we have found
$$
\Xi_\L(\b,\l)~=~1+ \sum_{k=1}^\i {1\over k!}\left[\sum_{n=1}^{\i}
{\l^{n}\over n!}  \phi(n)\right]^k~=~
\exp\left\{\left[\sum_{n=1}^{\i} {\l^{n}\over n!}
\phi(n)\right]\right\}
$$
Hence
$$
\begin{aligned}
\ln \Xi_\L(\b,\l)&=~ \sum_{n=1}^{\i} {\l^{n}\over n!}  \phi(n)\\
&=~\sum_{n=1}^{\i} {\l^{n}\over n!} \int_{\L}d\xx_1\dots
\int_{\L}d\xx_{n} \Phi^T(\xx_1, \dots, \xx_{n})\\
 &=~ \sum_{n=1}^{\i}
{\l^{n}\over n!} \int_{\L}d\xx_1\dots \int_{\L}d\xx_{n} \sum_{g\in
G_{n}} \prod_{\{i,j\}\in E_g}\left[  e^{ -\b V(\xx_i -\xx_j)}
-1\right]
\end{aligned}
$$
which concludes the proof of part A) of the theorem. $\Box$

\vskip.3cm
\\{\it Proof of B)}. We start by writing
$$
Z_{n}= \int_{\L}d\xx_1\dots\int_{\L} d\xx_{n} ~e ^{-\b U(\xx_{[n]})}
$$
and in general, if $I\subset [n]$,
$$
Z_{I}= \int_{\L^{|I|}}d\xx_{I} ~e ^{-\b U(\xx_{I})}= Z_{|I|}
$$
so that
$$
\Xi_\L (\b,\l)~=~1+ \sum_{n=1}^{\infty}{\l^n \over n!}
Z_n
$$
Hence, formally
$$
\begin{aligned}
\ln\Xi_\L (\b,\l)& =~\sum_{k=1}^\infty {(-1)^{k-1}\over k} \left[\sum_{n=1}^{\infty}{\l^n \over n!}Z_n\right]^k\\
&= ~
\sum_{k=1}^\infty {(-1)^{k-1}\over k} \sum_{m_1, \dots ,m_k\atop  m_i\ge 1}\prod_{j=1}^k {\l^{m_j}Z_{m_j}\over m_j!}\\
&= ~\sum_{k=1}^\infty {(-1)^{k-1}\over k}\sum_{n=k}^\infty{\l^n\over n!} \sum_{m_1, \dots ,m_k:\;  m_i\ge 1\atop m_1+\dots+
m_k=n}(\prod_{j=1}^k {Z_{m_j}}){n!\over m_1!\cdots m_k!}
\end{aligned}
$$
Now, by equation \equ(aiou) and denoting shortly
$$
\int_\L d\xx_1\dots\int_\L d\xx_n=\int_{\L^n} d\xx_{[n]}
$$
 we have that
$$
\begin{aligned}
\sum_{m_1, \dots ,m_k:\;  m_i\ge 1\atop m_1+\dots+\,
m_k=n}(\prod_{j=1}^k {Z_{m_j}}){n!\over m_1!\cdots m_k!} & = k! \sum_{\{I_1,I_2, \dots, I_k\}\in \P^k_{n}}\prod_{j=1}^k Z_{(|I_j|)}\\
&
=~
k!\int_{\L^n} d\xx_{[n]} \sum_{\{I_1, \dots, I_k\}\in \P^k_{n}}e ^{-\sum_{\a=1}^k\b U(\xx_{I_\a})}
\end{aligned}
$$
Thus we get
$$
\ln\Xi_\L (\b,\l)= \sum_{k=1}^\infty {(-1)^{k-1}(k-1)!}\sum_{n=k}^\infty{\l^n\over n!}\int_{\L^n} d\xx_{[n]} \sum_{\{I_1, \dots, I_k\}\in \P^k_{n}} e ^{-\b\sum_{\a=1}^k U(\xx_{I_\a})}
$$
Finally, exchanging the sum over $k$ with the sum over $n$ we get
$$
\ln\Xi_\L (\b,\l)= \sum_{n=1}^\infty{\l^n\over n!}\int_{\L^n} d\xx_{[n]} \sum_{k=1}^n {(-1)^{k-1}(k-1)!} \sum_{\{I_1, \dots, I_k\}\in \P_{n}} e ^{-\b\sum_{\a=1}^k U(\xx_{I_\a})}
$$
whence \equ(Basuev) follows.
$\Box$

\subsection{The combinatorial problem}\label{combpro}
The very structure of the Mayer series for $\ln\Xi_\L (\b,\l)$  hides a very
hard combinatorial problem. To understand this problem let us try to find
 a bound uniform in $\L$ for the coefficient of order $n$ of
the Mayer series, as given by formula \equ(ursm), i.e.
$$
C_n(\b,\L) ~=~  {1\over n!}{1\over |\L|} \int_{\L}d\xx_1\dots \int_{\L}d\xx_{n}\Phi^T(x_1,\dots,x_n)~=~~~~~~~~~~~~~~~~~~~~~~~~~~~~
$$
$$
=~
 {1\over n!}{1\over |\L|} \int_{\L}d\xx_1\dots \int_{\L}d\xx_{n} \sum_{g\in G_{n}}
\prod_{\{i,j\}\in E_g}\left[  e^{ -\b V(\xx_i -\xx_j)} -1\right]\Eq(cbeta2)
$$
\\Note that $C_n(\b,\L)$ is a function of $\b$ and $\L$.
The simplest way to bound  $|C_n(\b,\L)|$ is as follows
$$
\begn
|C_n(\b,\L)| & \le
 {1\over n!}{1\over |\L|} \int_{\L}d\xx_1\dots \int_{\L}d\xx_{n}\Big|  \sum_{g\in G_{n}}
\prod_{\{i,j\}\in E_g}\left[  e^{ -\b V(\xx_i -\xx_j)} -1\right]
\Big|\\
&\le
 {1\over n!}{1\over |\L|}  \sum_{g\in G_{n}}  \int_{\L}d\xx_1\dots \int_{\L}d\xx_{n}
\prod_{\{i,j\}\in E_g}\left|  e^{ -\b V(\xx_i -\xx_j)} -1\right |
\egn
$$
Now observe that, if $V(\xx)$ is stable then  necessarily it is
bounded below by $2B$, i.e.
$$
V(\xx_i -\xx_j)\ge -2B
$$
which is simply the stability condition $U(\xx_1, \dots ,\xx_n)\ge
-B n$ for the case in which $n~=~2$. Thus factor
$$
|  e^{ -\b V(\xx_i -\xx_j)} -1 |\le \max\{1, e^{2\b B}-1\}\le
1\Eq(3.1)
$$
for $\b$ sufficiently small.

\\Now observe that any connected graph $g\in G_n$ contains at least
one tree $\t\in T_n$. Let us thus choose for each $g$ a tree $\t$ such
that $\t\subset g$. It is allowed to choose the same tree for
different $g$'s.

\\So, by \equ(3.1) we have

$$
\prod_{\{i,j\}\in E_g}\left|  e^{ -\b V(\xx_i -\xx_j)} -1\right |\le
\prod_{\{i,j\}\in E_{\t}}\left|  e^{ -\b V(\xx_i -\xx_j)} -1\right
|
$$
We now prove the following
\begin{pro}\label{integr}
For any  $\t\in T_n$ it holds the following inequality
$$
\int_{\L}d\xx_1\dots \int_{\L}d\xx_{n} \prod_{\{i,j\}\in E_{\t}}
\left|  e^{ -\b V(\xx_i -\xx_j)} -1\right |~ \le~
|\L|\left[\int_{\mathbb{R}^3} |   e^{ -\b V(\xx)}
-1|d\xx\right]^{n-1} \Eq(treintg)
$$
\end{pro}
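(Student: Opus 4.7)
The plan is to prove the inequality by induction on $n$, the number of vertices of the tree $\tau$, exploiting the elementary observation that every tree on at least two vertices has at least one leaf (a vertex of degree one), and that integrating out the coordinate attached to a leaf decouples cleanly because that coordinate appears in exactly one factor of the product.

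For the base case $n=1$, the tree has no edges, so $E_\tau = \emptyset$ and the product on the left-hand side equals $1$ by convention; the integral then reduces to $\int_\L d\xx_1 = |\L|$, which matches the right-hand side (with the bracketed quantity raised to the power $0$).

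For the inductive step, assume the inequality holds for all trees on $n-1$ vertices, and let $\tau \in T_n$ with $n \ge 2$. By \equ(dimeno) and the counting identity \equ(dipiu), the degree sequence $d_1,\dots,d_n$ satisfies $\sum_i d_i = 2n-2$ with $d_i \ge 1$, so at least one vertex, say $k$, has $d_k=1$. Let $j$ be its unique neighbor in $\tau$, and let $\tau'$ be the tree in $T_{[n]\setminus\{k\}}$ obtained by deleting the leaf $k$ and the edge $\{j,k\}$; then $\tau' \in T_{n-1}$ after relabeling. The variable $\xx_k$ appears in exactly one factor of the product, namely $|e^{-\b V(\xx_k-\xx_j)}-1|$. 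Hence, performing the $\xx_k$ integral first with $\xx_j$ held fixed, and using translation invariance of $V$ to make the change of variable $\yy = \xx_k - \xx_j$, we obtain
$$
\int_\L d\xx_k \,\bigl|e^{-\b V(\xx_k-\xx_j)}-1\bigr|
= \int_{\L - \xx_j} d\yy \,\bigl|e^{-\b V(\yy)}-1\bigr|
\le \int_{\mathbb{R}^d} \bigl|e^{-\b V(\yy)}-1\bigr|\, d\yy,
$$
since $\L - \xx_j \subset \mathbb{R}^d$ and the integrand is non-negative.

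The remaining integrand depends only on the coordinates $\xx_i$ with $i \ne k$, and factors exactly as $\prod_{\{i,l\}\in E_{\tau'}} |e^{-\b V(\xx_i-\xx_l)}-1|$, so by the inductive hypothesis applied to $\tau' \in T_{n-1}$,
$$
\int_{\L^{n-1}} \prod_{\{i,l\}\in E_{\tau'}} \bigl|e^{-\b V(\xx_i-\xx_l)}-1\bigr| \prod_{i\ne k} d\xx_i
\le |\L|\left[\int_{\mathbb{R}^d} \bigl|e^{-\b V(\xx)}-1\bigr|\,d\xx\right]^{n-2}.
$$
Multiplying the two inequalities yields \equ(treintg). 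There is no real obstacle here --- the only subtlety is being careful that the chosen leaf exists and that its coordinate truly appears in a single edge-factor, both of which are immediate from the definition of a tree and the degree bounds recalled earlier in the chapter.
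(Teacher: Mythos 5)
Your proof is correct, but it takes a different route from the paper's. The paper proves the bound in one shot by introducing a global change of variables: keeping one reference coordinate $y_1 = x_1$ and setting $y_k = x_{i_k} - x_{j_k}$ for each edge of $\tau$, then observing that the Jacobian of this linear map is $1$ so the integral factors immediately into $|\L|$ times $n-1$ copies of $C(\b)$. You instead proceed by induction on $n$, locating a leaf of $\tau$, integrating out its coordinate (which appears in a single edge factor) to pull out one factor of $C(\b)$, and applying the inductive hypothesis to the pruned tree on $n-1$ vertices. The two approaches are implementations of the same underlying idea — using translation invariance of $V$ to decouple the edge factors along the tree — but they differ in form. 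Your induction is arguably the more careful write-up: it avoids any need to justify that the tree-based change of variables is a bijection with unit Jacobian (which is true, since with a suitable ordering the transformation matrix is triangular with $\pm 1$ on the diagonal, but this was stated without proof in the paper). The paper's version is more compact and treats all edges simultaneously. Both yield the same bound.
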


\\{\bf Proof}.
Let us denote shortly
$$
C_\t(\b)= \int_{\L} d\xx_1\dots \int_{\L} d\xx_{n}  \prod_{\{i,j\}\in E_{\t}}
\left|  e^{ -\b V(\xx_i -\xx_j)} -1\right |
$$
and
to fix ideas  let  $E_{\t}~=~\{i_1,j_1\},\dots \{i_{n-1},j_{n-1}\}$, so
that
$$
C_\t(\b)=
\int_{\L}d\xx_1\dots\int_{\L}d\xx_{n} \prod_{k=1}^{n-1}
\left|  e^{ -\b V(\xx_{i_k} -\xx_{j_k})} -1\right |
$$

\\Define a change of variables in the integral as follows
$$
y_k~=~ x_{i_k}-x_{j_k}, ~~~~~~\forall k ~=~2\dots ,n
$$
$$
y_1 ~=~x_1
$$
The Jacobian of this transformation is clearly 1, then
$$
\begn
C_\t(\b) & \le
\int_{\L}d\xx_1\int_{\mathbb{R}^3}d\yy_2\dots
\int_{\mathbb{R}^3}\!\!d\yy_{n}
 \prod_{j~=~2}^n
\left|  e^{ -\b V(\yy_j)} -1\right |\\
&=~
|\L|\left[\int_{\mathbb{R}^3} |   e^{ -\b V(\xx)}
-1|d\xx\right]^{n-1}\\
\egn
$$
$\Box$
\vv
\vv
\\It is now a simple exercise to see  that stability and temperness
imply that
$$
\int_{\mathbb{R}^3} |   e^{ -\b V(\xx)} -1|d\xx~\doteq~C(\b)< \i
$$
Actually, it is not even necessary for the potential to be
stable. A sufficient condition for the finiteness of $C(\b)$ is,
if $S(a)$ is a sphere of radius $a$ and center in $\xx~=~0$,
$V(\xx)$ integrable in $ \mathbb{R}^3 \backslash S(a)$ and non
negative for $|\xx|\le a$.

\\So we get

$$
|C_n(\b,\L)| \le {[C(\b)]^{n-1}\over n!} \sum_{g\in G_{n}} 1 ~=~
{[C(\b)]^{n-1}\over n!} B_n
$$
where $B_n$ is the number of connected graphs in the set
$\{1,2,\dots ,n\}$. Hence $|C_n(\b,\L)|$  has as  upper bound  a quantity uniform in $\L$, which is good,
 hence the
coefficents for the Mayer series of the pressure indeed admit a
bound uniform in $\L$!

\\Thus looking for the absolute convergence of the Mayer series for the
finite volume pressure we get
$$
|\b p_{\L}(\b,\l)|\le \sum_{n=1}^{\i} |C_n((\b,\L))
\l^n|\le \sum_{n=1}^{\i} {[C(\b)]^{n-1}\over n!} B_n   \l^n
$$
If we could now get a good bound for $B_n$, the number of
connected graphs in a finite set of $n$ elements, e.g. a bound at
worst such as $n! B^n$, we would have the proof that the infinite
volume  pressure is analytic in the region
$$
|\l|C(\b)B< 1
$$
with the further restriction \equ(3.1) on $\b$. But it is not the
case since it is easy to show that
for  all $n\ge 2$
$$
B_n\ge 2^{{(n-1)(n-2)\over 2} }\Eq(bad)
$$

\\Indeed, first observe that  the number of graphs, denoted with ${\cal B}_n $, (connected or not connected) in the
set $\{1,2,\dots ,n\}$ is
$$
{\cal B}_n ~=~\sum_{g\in {\cal G}_n} 1 ~=~2^{{1\over 2}n(n-1)}\Eq(3.2)
$$
As a matter of fact, we can construct a one-to-one correspondence
between the set of graphs in $\GG_n$ and the set $\Omega_n$ of
sequences $\{\s_{\{i,j\}}\}_{\{i,j\}\subset \{1,\dots ,n\}}$
with $\s_{\{i,j\}}~=~0,1$, with the rule that, $\s_{\{i,j\}}=1$ if
$\{i,j\}\in E_g$ and $\s_{\{i,j\}}=0$ if $\{i,j\}\notin E_g$. Hence a
graph $g\in \GG_n$ can be viewed as an ordered sequence of
$n(n-1)/2$  numbers which can be either 0 or 1. Thus the total
number of graphs equals the total number of such sequences which
is clearly $2^{n(n-1)/2}$.

\\It is now simple to get a bound for $B_n =|G_n|$. Consider the subset of
$\tilde\Omega_n\subset\Omega_n$ formed by sequences
$\{\s_{\{i,j\}}\}_{\{i,j\}\subset \{1,\dots ,n\}}$
such that
$\s_{\{1,2\}}=1 ,\s_{\{2,3\}}=1, \dots , \s_{\{n-1,n\}}=1$, hence
$n-1$ links are fixed while $n(n-1)/2 -(n-1) = (n-1)(n-2)/2$ are
arbitrary. Clearly, any graphs corresponding to a sequence in
$\tilde\Omega_n$ is connected by construction, since it contains
the tree $\t=\left\{\{{1,2\}},\{2,3\} ...,\{n-1,n\}\right\}$ and
$|\tilde\Omega_n|=2^{(n-1)(n-2)/2}$. Thus
$$
B_n \ge 2^{(n-1)(n-2)/2}
$$
\def\ba{\begin{array}}
\def\ea{\end{array}}  \def \eea {\end {eqnarray}}\def \bea {\begin {eqnarray}}

\\Such a bound shows that we have no hope to control the (absolute) convergence of
the Mayer series if we don't exploit cancellations hidden in the
factor
$$
\left|  \sum_{g\in G_{n}} \prod_{\{i,j\}\in E_g}\left[  e^{ -\b
V(\xx_i -\xx_j)} -1\right] \right|
$$
One may think to use the alternative expression \equ(Basuev) for the Ursell coefficients $\Phi^T(\xx_1,\dots,\xx_n)$ which looks more well behaved as a function of $n$ and also looks more suitable to
take advantage of stability. The problem is that
the expression \equ(Basuev), if bounded  naively, behaves badly in the volume $|\L|$. Using naively \equ(Basuev) we get the following bound for the absolute value of
$C_n(\b,\L)$.
$$
\begn
|C_n(\b,\L)| & \le {1\over n!}{1\over |\L|} \int_{\L}d\xx_1\dots \int_{\L}d\xx_{n} |\Phi^T(\xx_1,\dots,\xx_n)|\\
& \le{1\over |\L|}\sum_{k=1}^n (k-1)!\sum_{\{I_1, \dots, I_k\}\in \P_{n}}{1\over n!} \int_{\L}d\xx_1\dots \int_{\L}d\xx_{n} e^{-\b\sum_{\a=1}^k U(\xx_{I_\a})}
\egn
$$
and using then stability we get
$$
\begn
|C_n(\b,\L)| & \le {1\over |\L|} \sum_{k=1}^n (k-1)!\sum_{\{I_1, \dots, I_k\}\in \P_{n}}{1\over n!} \int_{\L}d\xx_1\dots \int_{\L}d\xx_{n}e^{+\b\sum_{\a=1}^kB|I_\a|}\\
&\le |\L|^{n-1}  e^{\b Bn}   \sum_{k=1}^n {k!\over n!}\sum_{\{I_1,I_2, \dots, I_k\}\in \P_{n}}1\\
 & = e^{+\b B}(|\L|e^{\b B})^{n-1}   \sum_{k=1}^n {k!\over n!}S(n,k)
\egn
$$
where the terms $S(n,k)$ are the Stirling numbers of the second kind\footnote{$S(n,k)$  is the number of ways to partition a set of $n$ objects into $k$ non-empty subsets.}.
It holds
$$
\begn
S(n,k)& =~\sum_{\{I_1,I_2, \dots, I_k\}\in \P_{n}}1\\
&=~{1\over k!}\sum_{m_1, \dots ,m_k:~\, m_i\ge 1\atop m_1+\dots
m_k=n}{n!\over m_1!\dots m_k!}\\
&=~{n!\over k!(n-k)!}\sum_{s_1, \dots ,s_k:~\, s_i\ge 0\atop s_1+\dots
s_k=n-k}{(n-k)!\over (s_1+1)!\dots (s_k+1)!}\\
&\le ~{n\choose k} \sum_{s_1, \dots ,s_k:~\, s_i\ge 0\atop s_1+\dots
s_k=n-k}{(n-k)!\over s_1!\dots s_k!}\\
&=~ {n\choose k}k^{n-k}
\egn
$$
Therefore we obtain
$$
\begn
|C_n(\b,\L)| & \le {e^{\b B}}{(|\L|e^{\b B})^{n-1}}  \sum_{k=1}^n {k!\over n!}{n\choose k}k^{n-k}\\
&=~{e^{\b B}}{(|\L|e^{\b B})^{n-1}}   \sum_{k=1}^n {k^{n-k}\over (n-k)!}\\
&=~{e^{\b B}}{(|\L|e^{\b B})^{n-1}}\sum_{s=0}^{n-1}{(n-s)^{s}\over s!}\\
&\le~ {e^{\b B}}{(|\L|e^{\b B})^{n-1}}\sum_{s=0}^{n-1}{{n}^{s}\over s!}\\
&\le~ {e^{\b B}}{(|\L|e^{\b B})^{n-1}}\sum_{s=0}^{\infty}{{n}^{s}\over s!}\\
&= ~ |\L|^{n-1}{(e^{+\b B+1})^n }
\egn
$$
This bound for $|C_n(\b,\L)| $ has a well  combinational behavior in $n$, but it  grows as $|\L|^n$ in the volume $\L$ and hence the (lower) bound for the convergence radius of the Mayer series obtained
from this bound shrinks to zero as $\L\to \infty$.

\section{Convergence of the Mayer series}
The best rigorous upper bound on $|C_n(\b,\L)|$ until recently (and hence the best lower bound on the convergence radius
of the Mayer series)  for
stable and tempered pair  potentials was that
obtained by Penrose and Ruelle in 1963 \cite{Pen63,Ru63}.

\begin{teo}[Penrose-Ruelle]\label{peru}
 Let $V$ be a stable and tempered pair  potential with  stability constant $B$. Then
the  $n$-order Mayer  coefficient $C_n(\b,\L)$ defined in \equ(cbeta2)
is bounded by
$$
|C_n(\b,\L)|\le  e^{2\b B (n-2)}n^{n-2} {[C(\b)]^{n-1}\over n!}\Eq(PerRu)
$$
where
$$
C(\b)=\int_{\mathbb{R}^{d}} dx ~ |e^{-\b V(x)}-1|\Eq(cbetapr)
$$
Therefore the Mayer series \equ(pressm) converges absolutely, uniformly in $\L$,
 for any complex  $\l$ inside the disk
$$
|\l| <{1\over e^{2\b B+1} C(\b)}\Eq(RadPR)
$$
I.e. the convergence radius of the Mayer series \equ(pressm) admits the following lower bound
$$
R_{V}\ge {1\over e^{2\b B+1} C(\b)}\Eq(convrad)
$$
\end{teo}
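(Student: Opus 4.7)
The plan is to follow the Penrose--Ruelle strategy, which rewrites the sum over $G_n$ appearing in \equ(urse) as a sum over spanning trees in $T_n$, thereby extracting the cancellations in $\Phi^T$ that the naive bound of \S\ref{combpro} could not see. Recall from that discussion that the number of connected graphs satisfies $|G_n|\ge 2^{(n-1)(n-2)/2}$ whereas $|T_n|=n^{n-2}$ by Cayley, so the combinatorial gap is enormous and must be bridged by combining cancellations with stability. The remaining ingredients are already in place: Proposition \ref{integr} integrates a single spanning-tree contribution, and stability of $V$ with constant $B$ will control the ``extra'' Boltzmann factors generated by the tree identity.

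First I would construct a Penrose partition scheme, i.e.\ a map $g\mapsto \t(g)$ from $G_n$ into $T_n$ with $\t(g)\subseteq g$, together with a set $R(\t)$ of admissible chords depending only on $\t$, such that the preimage of $\t$ consists of exactly those $g\in G_n$ satisfying $E_\t\subseteq E_g\subseteq E_\t\cup R(\t)$. A convenient concrete choice roots $\t$ at vertex $1$, assigns to each vertex $i$ its tree-distance $r_\t(i)$ to the root, and declares an edge $\{i,j\}\notin E_\t$ admissible iff $|r_\t(i)-r_\t(j)|\le 1$ and the edge is consistent with the breadth-first labelling of $\t$. Writing $f_{ij}:=e^{-\b V(\xx_i-\xx_j)}-1$ and partitioning $G_n$ according to this scheme, expansion of $\prod_{e\in R(\t)}(1+f_e)$ produces the Penrose tree identity
$$\Phi^T(\xx_1,\ldots,\xx_n)=\sum_{\t\in T_n}\prod_{\{i,j\}\in E_\t}f_{ij}\prod_{\{i,j\}\in R(\t)}(1+f_{ij}),$$
and since $1+f_{ij}=e^{-\b V(\xx_i-\xx_j)}$ the chord factor equals $\exp\{-\b W_{R(\t)}\}$ with $W_{R(\t)}(\xx_1,\ldots,\xx_n):=\sum_{\{i,j\}\in R(\t)}V(\xx_i-\xx_j)$.

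The main obstacle is the stability estimate on this chord factor. The admissibility rule is engineered so that the rank decomposition $V_k=\{i\in[n]:r_\t(i)=k\}$ lets $W_{R(\t)}$ be split into contributions on which the stability bound $U\ge -B\,|\cdot|$ can be applied twice (once at each of two consecutive layers), yielding $W_{R(\t)}(\xx_1,\ldots,\xx_n)\ge -2B(n-2)$ and hence $\prod_{e\in R(\t)}(1+f_e)\le e^{2\b B(n-2)}$. Once this estimate is in hand the rest is mechanical: taking absolute values in the tree identity, using this bound on the chord product, and integrating over $\L^n$ give, by Proposition \ref{integr} applied term by term in $T_n$,
$$\int_{\L^n}\!|\Phi^T(\xx_1,\ldots,\xx_n)|\,d\xx_1\cdots d\xx_n\le e^{2\b B(n-2)}\,|\L|\,[C(\b)]^{n-1}\,|T_n|.$$
Inserting Cayley's formula $|T_n|=n^{n-2}$ and dividing by $n!|\L|$ yields \equ(PerRu). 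Finally, the convergence radius \equ(convrad) follows by applying the ratio test to $\sum_n C_n(\b,\L)\l^n$ and using the Stirling asymptotics $n^{n-2}/n!\sim e^{n}/(n^2\sqrt{2\pi n})$, which produces the factor $e$ in the denominator of \equ(RadPR).
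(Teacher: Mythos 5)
Your proposal takes a genuinely different route from the paper's proof of Theorem~\ref{peru}: the paper derives the bound via the Kirkwood--Salsburg integral equations for the correlation functions $\r_n(\xx_1,\dots,\xx_n;\l)$, bounding the Taylor coefficients $\r_{n,k}$ by induction on $M=n+k$ and extracting the one-point estimate that yields \equ(PerRu). Your tree-graph strategy is natural, and it is indeed discussed in the paper (Section~\ref{secpen}) — but there it leads to a gap, not a proof.

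The step that fails is the stability estimate on the chord factor. You claim that, for the Penrose partition scheme $\mi$ rooted at vertex $1$ with chords of type (p1) (same generation) and (p2) (consecutive generations subject to $j>i'$), stability with constant $B$ can be applied twice across consecutive layers to get
$$\sum_{\{i,j\}\in E_{\mi(\t)}\setminus E_\t}V(\xx_i-\xx_j)\ge -2B(n-2).$$
This is precisely Conjecture~\ref{stabpen} of the paper, which is explicitly stated as \emph{open} for the map $\mi$. The obstruction is structural: stability controls the \emph{full} sum $\sum_{\{i,j\}\subset S}V(\xx_i-\xx_j)$ for a subset $S\subset[n]$, but the Penrose chord set restricted to two consecutive levels $V_{k-1}\cup V_k$ is not the complete pair set on that union — it omits the tree edges and, more seriously, the type-(p2) edges are cut down by the labelling condition $j>i'$, which depends only on the labels and not on the geometry. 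Omitting an arbitrary collection of (possibly strongly negative) pair terms from a stable sum gives no lower bound whatsoever, so the ``apply stability twice per pair of layers'' argument does not go through. There is no known way to repair this for $\mi$: as the paper explains, one must replace $\mi$ by the Kruskal-type partition scheme $\bm M$ of Definition~\ref{d4}, which orders chords by the \emph{values} $V(\xx_i-\xx_j)$ so that the chord sum decomposes along the forest $E_\t\setminus E_\t^+$ into complete pair sums on which stability does apply (Lemma~\ref{stabgen}). That route is carried out later and in fact yields the \emph{stronger} bound of Theorem~\ref{pryu}, with $e^{\b B n}$ and $\tilde C(\b)\le C(\b)$ in place of $e^{2\b B(n-2)}$ and $C(\b)$. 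The Penrose--Ruelle bound itself, with its characteristic $e^{2\b B(n-2)}$ factor, comes from the Kirkwood--Salsburg route, not from a tree-graph identity.
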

The estimate \equ(PerRu) leading to the lower bound \equ(convrad) was obtained by attacking the problem with a rather
indirect approach. Namely Penrose and Ruelle  looked at an infinite class of functions of the systems,
the so called  correlation functions, and showed that they can be expressed via an absolute convergent
expansion. The next section is entirely   devoted to the proof of Theorem \ref{peru}.

\subsection{Kirkwood-Salsburg equations:  the original proof of Theorem \ref{peru} (according to Penrose)}
We are considering a system of particles enclosed in a box $\L$ and interacting via a pair potential
$V(x-y)$ stable and tempered.
We thus define the correlation functions of the system as follows. For any $n\ge 1$
$$
\r_n(\xx_1, \dots \xx_{n};\l)={1\over \Xi_\L(\b,\l)}\sum_{m=0}^\infty {\l^{n+m}\over m!}\int_\L d\yy_1\dots\int_\L d\yy_m\,
e^{-\b U(\xx_1,\dots,\xx_n,\yy_1,\dots,\yy_m)}\Eq(corr)
$$
with the convention that
$$
U(\xx_1)=0\Eq(conv0)
$$
and for $n=0$
$$
\r_0(\emptyset;\l)=1\Eq(conv)
$$
The numbers  $\r_n(\xx_1, \dots \xx_{n};\l)$ represent the probability density of finding
in the system $n$ particles at positions $\xx_1, \dots \xx_{n}$ irrespective of where the other particle are.
It is important to observe that the functions  $\r_n(\xx_1, \dots \xx_{n};\l)$ are by construction symmetric
under permutation of positions  $\xx_1, \dots \xx_{n}$. I.e. if $\s:\{1,2,\dots,n\}\to \{1,2,\dots,n\}:i\mapsto \s(i)$
is one-to-one (i.e. a permutation) then
$$
\r_n(\xx_1, \dots, \xx_{n};\l)=\r_n(\xx_{\s(1)}, \dots ,\xx_{\s(n)};\l)\Eq(sym)
$$

\\Observe also that, due to stability, the numerator and the denominator of \equ(corr) are holomorphic functions
of $\l$ (see proposition \ref{Zolo}), hence $\r_n(\xx_1, \dots \xx_{n};\l)$ is a meromorphic function of $\l$.
The series expansion $\r(\xx_1, \dots \xx_{n};\l)$ in powers of $\l$ around $\l=0$  has convergence radius at least
equal to the convergence radius of $\log\Xi_\L(\b,\l)$ (i.e. where in the region where the denominator
$\Xi_\L(\b,\l)$ in l.h.s of \equ(corr)
is free of zeros). Let us write this series expansion as
$$
\r_n(\xx_1, \dots, \xx_{n};\l)= \sum_{\ell=0} \r_{n,\ell\,}(\xx_1, \dots, \xx_{n})\l^{n+\ell}\Eq(expan)
$$
Observe that if we consider the one-point correlation function $\r_1(\xx_1)$, it is easy to see that
the coefficients $\r_{1,\ell}(x_1)$ of its series expansion in power of $\l$,  integrated over the volume $\L$, are related with
the Ursell coefficient of the Mayer expansion for $\log\Xi_\L(\b,\l)$. As a matter of fact,
since
$$
\r_1(\xx_1;\l)= {1\over \Xi_\L(\b,\l)}\sum_{m=0}^\infty {\l^{1+m}\over m!}\int_\L d\yy_1\dots\int_\L d\yy_m\,
e^{-\b U(\xx_1,\yy_1,\dots,\yy_m)}\Eq(relru)
$$
it is easy to check that
$$
\int_{\L} \r_1(\xx_1;\l)d\xx_1=\l\,\:{d(\log\Xi_\L(\b,\l))\over d\l}
$$
Thus
$$
\begn
{1\over |\L|}\int_{\L} \r_{1,\ell-1\,}(\xx_1)d\xx_1& = {1\over (\ell-1)!}{1\over |\L|}\int_{\L}d\xx_1
\dots \int_{\L} d\xx_\ell~ \Phi^T(\xx_1,\dots,\xx_\ell)\\
& =\ell C_l(\b,\L)\egn
\Eq(impo)
$$
One can now calculate easily the first term in the expansion \equ(expan) by  division
in \equ(corr). As a matter of fact, since
$\Xi_\L(\b,\l)= 1+ O(\l)$, then also $\Xi^{-1}_\L(\b,\l)= 1+ O(\l)$ and the first term
of \equ(corr) is thus the first term of the numerator. Therefore
$$
\r_{n,0\,}(\xx_1, \dots \xx_{n})=e^{-\b U(\xx_1.\dots,\xx_n)}\Eq(ron0)
$$
Moreover observe that, by  the convention \equ(conv), we also have
$$
\r_{0,\ell\,}(\0)=\d_{0,\ell}\Eq(conv2)
$$
where $\d_{ij}$ is the Kronecker symbol, i.e $\d_{ij}=1$ if $i=j$ and $\d_{ij}=0$ if $i\neq j$. Let us now  find recurrence relations for the higher coefficients
$\r_{n,\ell\,}(\xx_1, \dots \xx_{n})$ for $n\ge 1$. It is easy to see that the functions $\r_n(\xx_1, \dots \xx_{n};\l)$
satisfies a set of integral functions named the Kirkwood-Salsburg equations. These equations
are obtained first by decomposing the energy $U(\xx_1,\dots,\xx_n,\yy_1,\dots,\yy_m)$ appearing in l.h.s. of
\equ(corr) as
$$
U(\xx_1,\dots,\xx_n,\yy_1,\dots,\yy_m)=W(\xx_1;\xx_2,\dots,\xx_n)+~~~~~~~~~~~~~~~~~~~~~~~~~~~~~~~~~~~~~~~~~~~~~~~~~~~~~~~~~
$$
$$~~~~~~~~~~~~~~~~~~+\sum_{j=1}^mV(\xx_1-\yy_j)+ U(\xx_2,\dots,\xx_n,\yy_1,\dots,\yy_m)\Eq(dec)
$$
where
$$
W(\xx_1;\xx_2,\dots,\xx_n)=\sum_{i=2}^n V(\xx_1-\xx_i)
$$

\\Putting \equ(dec) into \equ(corr) one gets
$$
\begn
\r_n(\xx_1, \dots \xx_{n};\l) & =
{\l e^{-\b W(\xx_1;\xx_2,\dots,\xx_n)}\over \Xi_\L(\b,\l)}
\sum_{m=0}^\infty {\l^{n-1+m}\over m!}\int_\L d\yy_1\dots\int_\L d\yy_m\\
&~~~\times e^{-\b (\sum_{j=1}^mV(\xx_1-\yy_j)+ U(\xx_2,\dots,\xx_n,\yy_1,\dots,\yy_m))}\\
&= ~{\l e^{-\b W(\xx_1;\xx_2,\dots,\xx_n)}\over \Xi_\L(\b,\l)}
\sum_{m=0}^\infty {\l^{n-1+m}\over m!}\int_\L d\yy_1\dots\int_\L d\yy_m\\
&~~~\times \prod_{j=1}^m \Big[\Big(e^{-\b V(\xx_1-\yy_j)}-1\Big)+1\Big] ~
e^{-\b U(\xx_2,\dots,\xx_n,\yy_1,\dots,\yy_m)}\\
& = ~{\l e^{-\b W(\xx_1;\xx_2,\dots,\xx_n)}\over \Xi_\L(\b,\l)}
\sum_{m=0}^\infty {\l^{n-1+m}\over m!}\int_\L d\yy_1\dots\int_\L d\yy_m\\
&~~~\times \sum_{s=0}^m\sum_{j_1,\dots j_s\atop
1\le j_1<\dots<j_s\le m}
\prod_{k=1}^s(e^{-\b V(\xx_1-\yy_{j_k})}-1) e^{-\b U(\xx_2,\dots,\xx_n,\yy_1,\dots,\yy_m)}
\egn
$$

\\Here above the term with $s=0$ correspond to 1.
Since the variables $\yy$ are dummy indices, for any term of the sum over indices
$j_1,\dots j_s$ we can relabel the variables $\yy_1, \dots \yy_m$ in such way that
$\yy_{j_1}=\yy_1,\dots, \yy_{j_s}=\yy_s$ and keeping in account that
$$
\sum_{j_1,\dots j_s\atop
1\le j_1<j_2<\dots<j_s\le m}1={m\choose s}={m!\over s!(m-s)!}
$$
and the fact that $U(z_1,\dots z_p)$ is symmetric under permutations of its variables, we obtain
$$
\r_n(\xx_1, \dots \xx_{n};\l)= {\l e^{-\b W(\xx_1;\xx_2,\dots,\xx_n)}\over \Xi_\L(\b,\l)}
\sum_{m=0}^\infty {\l^{n-1+m}}\int_\L d\yy_1\dots\int_\L d\yy_m\,
$$
$$
~~~~~~~~~~~~~~~~~~~~~~~\sum_{s=0}^m\,{1\over s!(m-s)!}
\prod_{k=1}^s(e^{-\b V(\xx_1-\yy_{k})}-1) e^{-\b U(\xx_2,\dots,\xx_n,\yy_1,\dots,\yy_m)}
$$
We can interchange the sum over $m$ and $s$  by observing that
$$
\sum_{m=0}^\infty\sum_{s=0}^m \dots=\sum_{s=0}^\infty\sum_{m=s}^\infty \dots
$$
and calling $t=m-s$. Hence we get, after a suitable renomination of the dummy indices $\yy$
$$
\r_n(\xx_1, \dots \xx_{n};\l)= {\l e^{-\b W(\xx_1;\xx_2,\dots,\xx_n)}\over \Xi_\L(\b,\l)}
\sum_{s=0}^\infty {1\over s!}\int_\L d\yy_1\dots\int_\L d\yy_s\,\prod_{k=1}^s(e^{-\b V(\xx_1-\yy_{k})}-1)
$$
$$
~~~~~~~~~~~~~~~~~~~~~~~\sum_{t=0}^\infty\,{{\l^{n-1+s+t}}\over t!} \int_\L d\yy'_1\dots\int_\L d\yy'_t
 e^{-\b U(\xx_2,\dots,\xx_n,\yy_1,\dots,\yy_s,\yy'_1,\dots,\yy'_t)}
$$
i.e.
$$
\r_n(\xx_1, \dots \xx_{n};\l)= {\l e^{-\b W(\xx_1;\xx_2,\dots,\xx_n)}}
\sum_{s=0}^\infty {1\over s!}\int_\L d\yy_1\dots\int_\L d\yy_s\,\prod_{k=1}^s(e^{-\b V(\xx_1-\yy_{k})}-1)
$$
$$
\r_{n-1+s}(\xx_2, \dots \xx_{n},\yy_1,\dots,\yy_s;\l)\Eq(K-S)
$$
The latter are the Kirkwood-Salsburg equations.
Substituting now \equ(expan) into \equ(K-S) we get
$$
\sum_{k=0}^\infty \r_{n,k\,}\!(\xx_1, \dots \xx_{n})\l^{k}= {e^{-\b W(\xx_1;\xx_2,\dots,\xx_n)}}\!\!
\sum_{s=0}^\infty {1\over s!}\!\!\int_\L \!\!d\yy_1\dots\!\!\!\int_\L \!\!d\yy_s\!\prod_{k=1}^s(e^{-\b V(\xx_1-\yy_{k})}-1)
$$
$$
\sum_{r=0}^\infty \r_{n-1+s,r\,}(\xx_2, \dots \xx_{n},\yy_1,\dots,\yy_s)\l^{s+r}
$$
and equating coefficients with the same power, say  $k$, of $\l$
$$
\r_{n,k\,}(\xx_1, \dots \xx_{n})= {e^{-\b W(\xx_1;\xx_2,\dots,\xx_n)}}\sum_{s=0}^k {1\over s!}
\int_\L d\yy_1\dots\int_\L d\yy_s\,\prod_{k=1}^s(e^{-\b V(\xx_1-\yy_{k})}-1)
$$
$$
\r_{n-1+s,k-s\,}(\xx_2, \dots \xx_{n},\yy_1,\dots,\yy_s)\Eq(KS2)
$$
\vv
\\Recalling \equ(conv2), this equation holds for all integers $n\ge 1$ and $k\ge 0$. Note that for $n=1$ the formula \equ(KS2) degenerates into
$$
\r_{1,k\,}(\xx_1)= \sum_{s=0}^k {1\over s!}
\int_\L d\yy_1\dots\int_\L d\yy_s\,\prod_{k=1}^s(e^{-\b V(\xx_1-\yy_{k})}-1)
\r_{s,k-s\,}(\yy_1,\dots,\yy_s)\Eq(KS2b)
$$

\\To estimate the convergence radius of the correlation functions
$\r_{n}(\xx_1, \dots \xx_{n};\l)$
and hence of the $\log\Xi_\L(\b,\l)$, we have to calculate efficient upper bounds
on the coefficients $\r_{n,k\,}(\xx_1, \dots \xx_{n})$ of the expansion of the correlations.

\\Looking  to the structure of the equation \equ(KS2) one can observe the following.
Given a pair of indices $(n,k)$,  let  $M=n+k$, then  \equ(KS2) says that
 $\r_{n,k}$ is a function of the $\r_{i,j}$ such that $i+j=M-1$.
Therefore it is absolutely natural
to look for a bound on $\r_{n,k}$ by induction on $n+k$. Namely, we look for
$$
|\r_{n,M-n\,}(\xx_1, \dots \xx_{n})|\le K_{n,M-n}~~~~~~(n=1,2,\dots,M) \Eq(indu)
$$
and  make the  induction on  the integer $M$.
By \equ(ron0) and \equ(conv0)
$$
K_{1,0}=1\Eq(K10)
$$
Therefore inequality \equ(indu) is satisfied for $M=1$ with $|\r_{1,0}(\xx_1)|\le K_{1,0}= 1$.
Now, for $M>1$, we proceed  by induction on $M$. Assume that \equ(indu) is true $ M-1$ and for all $n=1,\dots, M-1$, then, by \equ(KS2)
$$
|\r_{n,M-n\,}(\xx_1, \dots \xx_{n})|\le e^{-\b W(\xx_1;\xx_2,\dots,\xx_n)}\sum_{s=0}^{M-n} {1\over s!}
[C(\b)]^s
\,K_{n-1+s,M-n-s\,}\Eq(KS3)
$$
where
$$
C(\b)=\int_{\mathbb{R}^2} \Big|e^{-\b V(\xx)}-1\Big|\,d\xx \Eq(cbeta)
$$
In order to bound the factor $\exp\{-\b W(\xx_1;\xx_2,\dots,\xx_n)\}$ in \equ (KS3),
we explicitly make use of the symmetry of the function $\r_{n,M-n\,}(\xx_1, \dots \xx_{n})$
under permutation of $\xx_1,\dots,\xx_n$.

Let  $i\in \{1,2,\dots,n\}$ and  let $\s_{1\leftrightarrows i}$ be
a permutation in
$\{1,2,\dots,n\}$ such that
$\s_{1\leftrightarrows i}(1)=i$, $\s_{1\leftrightarrows i}(i)=1$ and $\s_{1\leftrightarrows i}(k)=k$
for all $k\neq 1,i$ (i.e. $\s_{1\leftrightarrows i}$ exchange only $i$ with 1). Then, by stability
for at least one $j\in \{1,2,\dots,n\}$, it holds
$$
W(\xx_{\s_{1\leftrightarrows j}(1)};\xx_{\s_{1\leftrightarrows j}(2)},\dots,\xx_{\s_{1\leftrightarrows j}(n)})\ge - 2B
\Eq(stab2)
$$
Indeed, by stability we have that
$$
\sum_{i=1}^n W(\xx_{\s_{1\leftrightarrows i}(1)};\xx_{\s_{1\leftrightarrows i}(2)},\dots,\xx_{\s_{1\leftrightarrows i}(n)})=
2U(\xx_1,\dots,\xx_n)\ge -2n B
$$
and this immediately implies \equ(stab2) for at least one $j$. Thus, choosing $j\in [n]$ such that the permutation
$\s_{1\leftrightarrows j}$ satisfies \equ(stab2), we have
$$
|\r_{n,M-n\,}(\xx_1, \dots \xx_{n})|=
|\r_{n,M-n\,}(\xx_{\s_{1\leftrightarrows j}(1)}, \dots \xx_{\s_{1\leftrightarrows j}(n)})|
\le
$$
$$
\le~e^{2\b B}\sum_{s=0}^{M-n} {1\over s!}
[C(\b)]^s
\,K_{n-1+s,M-n-s\,}\Eq(KS32)
$$

\\Hence \equ(indu) holds also for $M$ provided that, for $n=1,\dots,M$
$$
K_{n, M-n}\ge\,e^{2\b B}\sum_{s=0}^{M-n} {1\over s!}
[C(\b)]^s
\,K_{n-1+s,M-n-s\,}\Eq(KS4)
$$
In conclusion we have proved by induction that \equ(indu) holds
for all $M\ge 1$ if  \equ(K10) holds and \equ(KS4) for $M>1$.
To find the best value for $K_{n,M-n}$
let us solve the  set of equations

$$
K_{n, M-n}=\,e^{2\b B}\sum_{s=0}^{M-n} {1\over s!}
[C(\b)]^s
\,K_{n-1+s,M-n-s\,}\Eq(rec)
$$
These equations are  recursive. I.e., for a fixed integer $M$ all coefficients $K_{n, M-n}$ with $n=1,\dots,M$ are functions of
coefficients $K_{n,M-1-n}$ with $n=1,\dots M-1$. So the initial condition \equ(K10) plus the equations \equ(rec)
determines uniquely all coefficients $K_{n, M-n}$.
It is worth  to check that its  solution is
$$
K_{n,\ell}=e^{2\b B (n+\ell-1)}n(n+\ell)^{\ell-1} {[C(\b)]^\ell\over \ell!}\Eq(soluz)
$$
In conclusion we have that the coefficients of the power series in $\l$ given by \equ(expan) can be bounded as
$$
|\r_{n,\ell\,}(\xx_1, \dots \xx_{n})|\le e^{2\b B (n+\ell-1)}n(n+\ell)^{\ell-1} {[C(\b)]^\ell\over \ell!}\Eq(OKK)
$$

\\This implies that \equ(expan) has convergence radius at least
$$
R\ge  {1\over e^{2\b B +1} C(\b)}
$$

\\As a matter of fact, the n-point correalation $|\r_{n}(\xx_1, \dots \xx_{n})|$, see \equ(expan) and
\equ(OKK) is less or equal to
$$
\begn
|\r_{n}(\xx_1, \dots, \xx_{n};\l)|& \le\sum_{\ell=0}^\infty
e^{2\b B (n+\ell-1)}n(n+\ell)^{\ell-1} {[C(\b)]^\ell\over \ell!}\l^{n+\ell}\\
&\le\sum_{\ell=0}^\infty
e^{2\b B (n+\ell-1)}(n+\ell)^{\ell}~ {[C(\b)]^\ell\over \ell!}\l^{n+\ell}\\
&=e^{2\b B (n-1)}\l^n
\sum_{\ell=0}^\infty
\Big(1+{n\over \ell}\Big)^{\ell} ~\,{\ell^{\ell} \over \ell!}~\Big [e^{2\b B }C(\b)\l\Big ]^\ell\\
&\le
e^{2\b B (n-1)}\l^n \sum_{\ell=0}^\infty e^n\,e^\ell \Big [e^{2\b B }C(\b)\l\Big ]^\ell\\
&\le
e^{(2\b B+1)n}\l^n \sum_{\ell=0}^\infty\Big [e^{2\b B +1}C(\b)\l\Big ]^\ell
\egn
$$
It is also interesting to calculate the bound we obtain for the Ursell coefficients
of the Mayer series. Observe first that in place of \equ(KS3) we have, recalling the special case
$n=1$ \equ(KS2b),
$$
|\r_{1,M-1}(\xx_1)|\le \sum_{s=0}^{M-1} {1\over s!}
[C(\b)]^s
\,K_{s,M-1-s\,}\Eq(KS33)
$$
(i.e.the factor $e^{-\b W(\xx_1;\xx_2,\dots,\xx_n)}$ is not present in the $n=1$ case).
So, using \equ(rec) and \equ(soluz) we get
$$
|\r_{1,\ell}(\xx_1)|\le e^{-2\b B}K_{1,\ell}= e^{2\b B (\ell-1)}(1+\ell)^{\ell-1} {[C(\b)]^\ell\over \ell!}
$$
Hence, recalling the definitions \equ(pressm) and \equ(ursm) and using \equ(impo), we get

$$
\begn
n|C_n(\b,\L)|&\le~ \left|n {\l^n\over n!}{1\over |\L|}\int_{\L}d\xx_1
\dots \int_{\L} d\xx_n~ \Phi^T(\xx_1,\dots,\xx_n)\right|\\
& =~ \Big|{1\over |\L|}\int_{\L} \r_{1,n-1\,}(\xx_1)d\xx_1\Big|\\
&\le~
e^{2\b B (n-2)}(n)^{n-2} {[C(\b)]^{n-1}\over (n-1)!}
\egn$$
I.e. we obtain the bound \equ(PerRu)

$$
|C_n(\b,\L)|\le e^{2\b B (n-2)}n^{n-2} {[C(\b)]^{n-1}\over n!}\Eq(bmaru)
$$
which completes the proof of Theorem \ref{peru}.

\\We stress once again that this bound is valid for any stable pair potential $V(\xx)$ such that
$C(\b)$ defined in \equ(cbeta) is finite. The bound is very efficient, but
it is indeed obtained in a rather involved and indirect way. We will now show
an alternative way to get the same bound \equ(bmaru) in a much more direct way,
i.e. obtaining directly a bound for the absolute values of the Ursell coefficients $|C_n(\b,\L)|$
starting from their explicit expression \equ(ursm).

\section{The Penrose Tree graph Identity}\label{secpen}
%

The
 tree-graph identity that we present was proposed by Penrose \cite{Pen67} in 1967 and
it was based on  the existence of a map from the set $T_n$ of the  trees with vertex set $[n]$ to the set  $G_n$ of the connected graphs with vertex set $[n]$ inducing a so-called {\it partition scheme} in  $G_n$.
This tree graph identity allows us to rewrite the Ursell coefficient defined in \equ(urse), whose expression we recall
$$
\Phi^T(\xx_1,\dots,\xx_n)~=~\sum\limits_{g\in G_{n}}~
\prod\limits_{\{i,j\}\in E_g}\left[  e^{ -\b V(\xx_i -\xx_j)} -1\right]
\Eq(urse2)
$$
in terms of a sum over trees rather than over connected graphs.
\begin{defi}\label{partschem} A map $\mathfrak{M}: T_n\to G_n$ is called a partition scheme in the set of the connected graphs $G_n$  if, for all $\tau\in T_n$,
$\tau\subset \mathfrak{M}(\t)$
and $G_n=\biguplus_{\tau\in {\mathcal T}}[\tau,\mathfrak{M}(\tau)]$
where $\biguplus$ means disjoint union and $[\tau,\mathfrak{M}(\tau)]=\{g\in G_n: \tau\subset g\subset \mathfrak{M}(\tau)\}$
 is a boolean interval (with respect to the set-inclusion).
\end{defi}

\\Once a  partition scheme in $G_n$ has been given, we have the following identity

\begin{teo}[General Penrose identity]\label{Penid} Let $n\ge 2$ and let $\{V_{ij}\}_{\{i,j\}\in E_n}$ be $n(n-1)/2$ real numbers
(one for each unordered pair $\{i,j\}\subset[n]$) taking values in $\mathbb{R}\cup\{+\infty\}$ and let
$$
\Phi^T(n)=\sum\limits_{g\in G_{n}}~
\prod\limits_{\{i,j\}\in E_g}\left[  e^{ - V_{ij}} -1\right]
$$
Let  $\mathfrak{M}: T_n\to G_n$ be a partition scheme in $G_n$.
Then the following identity  holds
$$
\Phi^T(n)~=~
\sum_{\t\in T_n}
e^{-\sum\limits_{\{i,j\}\in E_{\mathfrak{M}(\t)}\backslash E_\t}V_{ij}}
\prod_{\{i,j\}\in E_\t}\left(e^{- V_{ij}}-1\right)
\Eq(r.200)
$$
\end{teo}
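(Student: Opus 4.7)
The plan is to exploit the partition-scheme structure directly: the hypothesis $G_n = \biguplus_{\tau \in T_n}[\tau,\mathfrak{M}(\tau)]$ partitions the indexing set of the sum defining $\Phi^T(n)$, so one can regroup
$$
\Phi^T(n) \;=\; \sum_{\tau \in T_n}\;\sum_{g\,:\,\tau\subset g\subset \mathfrak{M}(\tau)} \prod_{\{i,j\}\in E_g}\bigl(e^{-V_{ij}}-1\bigr).
$$
This is the only place the partition-scheme hypothesis enters; everything after is an elementary manipulation of the inner sum.

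Next I would factor the product according to the splitting $E_g = E_\tau \sqcup (E_g\setminus E_\tau)$, which is legal because $\tau\subset g$ ensures $E_\tau \subset E_g$. The factor associated with $E_\tau$ pulls out of the inner sum, and the remaining sum runs over all subsets $S := E_g\setminus E_\tau$ of the finite edge set $A_\tau := E_{\mathfrak{M}(\tau)}\setminus E_\tau$ (again because $\tau\subset g\subset \mathfrak{M}(\tau)$ translates exactly into $S\subset A_\tau$, and conversely any such $S$ yields a valid $g$ in the interval). Thus
$$
\sum_{g\in[\tau,\mathfrak{M}(\tau)]}\prod_{\{i,j\}\in E_g}\!\!\bigl(e^{-V_{ij}}-1\bigr)
\;=\; \prod_{\{i,j\}\in E_\tau}\!\!\bigl(e^{-V_{ij}}-1\bigr)\;\sum_{S\subset A_\tau}\prod_{\{i,j\}\in S}\bigl(e^{-V_{ij}}-1\bigr).
$$

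The last step is the classical binomial-type identity
$$
\sum_{S\subset A_\tau}\prod_{\{i,j\}\in S}\bigl(e^{-V_{ij}}-1\bigr)\;=\;\prod_{\{i,j\}\in A_\tau}\bigl[\,1+(e^{-V_{ij}}-1)\,\bigr]\;=\;\prod_{\{i,j\}\in A_\tau} e^{-V_{ij}}\;=\;e^{-\sum_{\{i,j\}\in E_{\mathfrak{M}(\tau)}\setminus E_\tau}V_{ij}},
$$
obtained by expanding the product on the right. Substituting this back yields formula \equ(r.200).

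The main obstacle is not computational but conceptual: one must verify carefully that the hypothesis $G_n=\biguplus_{\tau}[\tau,\mathfrak{M}(\tau)]$ with $\tau\subset \mathfrak{M}(\tau)$ gives exactly the two properties used above, namely (i) that every connected $g$ is counted exactly once as $\tau$ ranges over $T_n$, so that reordering the sum is legitimate, and (ii) that the condition $g\in[\tau,\mathfrak{M}(\tau)]$ is equivalent to the choice of an arbitrary subset $S\subset A_\tau$, so that the inner sum factorizes as a product. Both are immediate from Definition \ref{partschem}, so once this bookkeeping is made explicit, the identity follows without any further input. Note in particular that the argument is purely combinatorial: no use is made of the fact that $V_{ij}$ comes from a pair potential, and values $V_{ij}=+\infty$ are harmless since $e^{-V_{ij}}-1=-1$ in that case and the identity is formal.
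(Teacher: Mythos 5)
Your proof is correct and follows essentially the same route as the paper's: regroup the sum over connected graphs via the disjoint union $G_n=\biguplus_\tau[\tau,\mathfrak{M}(\tau)]$, factor out the tree edges, identify the residual sum as one over subsets of $E_{\mathfrak{M}(\tau)}\setminus E_\tau$, and collapse it via $\sum_{S}\prod_{e\in S}(e^{-V_e}-1)=\prod_e e^{-V_e}$. The only thing the paper leaves implicit and you helpfully spell out is why the interval $[\tau,\mathfrak{M}(\tau)]$ is in bijection with subsets of $E_{\mathfrak{M}(\tau)}\setminus E_\tau$ (namely, any $g$ with $E_\tau\subset E_g$ is automatically connected since it contains the spanning tree $\tau$), and the observation that $V_{ij}=+\infty$ causes no trouble.
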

\def\Ti{\mathfrak{T}}
\def\Mi{\mathfrak{M}}

\\{\bf Proof}. Since
$G_n$ is the disjoint union $G_n=\biguplus_{\tau\in {T}_n}[\tau,\mathfrak{M}(\tau)]$ we can write
$$
\begn
\Phi^T(n)& = \sum_{\t\in T_n} \sum_{g\in  [\tau,\mathfrak{M}(\tau)]} \prod_{\{i,j\}\in E_g}\left(e^{- V_{ij}}-1\right)\\
& =\sum_{\t\in T_n} \prod_{\{i,j\}\in E_\t}\left(e^{- V_{ij}}-1\right)\sum_{g\in [\tau,\mathfrak{M}(\tau)]}
 \prod_{\{i,j\}\in E_g\setminus E_\t}\left(e^{- V_{ij}}-1\right)\\
&= \sum_{\t\in T_n} \prod_{\{i,j\}\in E_\t}\left(e^{- V_{ij}}-1\right)\sum_{g\in G_n\atop E_\t\subset E_g\subset E_{\mathfrak{M}(\t)}}
 \prod_{\{i,j\}\in E_g\setminus E_\t}\left(e^{- V_{ij}}-1\right)\\
&=\sum_{\t\in T_n} \prod_{\{i,j\}\in E_\t}\left(e^{- V_{ij}}-1\right)\sum_{E\subset E_{\mathfrak{M}(\t)}\setminus E_\t}
 \prod_{\{i,j\}\in E}\left(e^{- V_{ij}}-1\right)\\
&= \sum_{\t\in T_n} \prod_{\{i,j\}\in E_\t}\left(e^{- V_{ij}}-1\right)
 \prod_{\{i,j\}\in E_{\mathfrak{M}(\t)}\setminus E_\t}\left[\left(e^{- V_{ij}}-1\right)+1\right]\\
& =\sum_{\t\in T_n} \prod_{\{i,j\}\in E_\t}\left(e^{- V_{ij}}-1\right)
 \prod_{\{i,j\}\in E_{\Mi(\t)}\setminus E_\t} e^{- V_{ij}}
 \egn
$$
which concludes the proof.
$\Box$

\vskip.2cm
\\In general  it is not so simple to check whether a given map $\mathfrak{M}:T_n\to G_n$ is a partition scheme. The proposition below
can be useful.

\begin{pro}\label{twomaps}
The following statements are equivalent.
\begin{itemize}
\item[1.]
There are  two maps $$\xymatrix{{\mathcal G}_n \ar@<.5ex>[r]^\Ti & {\mathcal T}_n \ar@<.5ex>[l]^\Mi}$$
such that $\Ti^{-1}(\tau)=\{g \in {\mathcal G}_n:\, \tau \subseteq g \subseteq \Mi(\tau)\}$ for every $\tau \in {\mathcal T}_n$.
\item[2.]
 $\Mi$ is a partition scheme in ${\mathcal G}_n$.
 \end{itemize}
\end{pro}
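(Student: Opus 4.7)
The plan is to prove the two implications separately. For $(2\Rightarrow 1)$, which is the easier direction, I will define $\mathfrak{T}:\mathcal{G}_n\to\mathcal{T}_n$ directly from the partition scheme: for each $g\in\mathcal{G}_n$, the disjoint-union property says there is a \emph{unique} $\tau\in\mathcal{T}_n$ with $g\in[\tau,\mathfrak{M}(\tau)]$, and I set $\mathfrak{T}(g):=\tau$. By the very definition of $\mathfrak{T}$, its fiber over $\tau$ is exactly the set $\{g\in\mathcal{G}_n:\tau\subseteq g\subseteq\mathfrak{M}(\tau)\}$, giving condition~1.

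For $(1\Rightarrow 2)$, I will use two observations. First, the fibers of the map $\mathfrak{T}$ automatically partition $\mathcal{G}_n$, so
\[
\mathcal{G}_n \;=\; \biguplus_{\tau\in\mathcal{T}_n} \mathfrak{T}^{-1}(\tau) \;=\; \biguplus_{\tau\in\mathcal{T}_n} [\tau,\mathfrak{M}(\tau)],
\]
provided no fiber with $\tau$ outside the image of $\mathfrak{T}$ spoils the formula (an empty fiber is harmless). Second, I must show that $\tau\subseteq\mathfrak{M}(\tau)$ for every $\tau\in\mathcal{T}_n$, i.e.\ that the interval $[\tau,\mathfrak{M}(\tau)]$ actually contains $\tau$. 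For this I plan to apply $\mathfrak{T}$ to the tree $\tau$ itself (viewed as an element of $\mathcal{G}_n$): if $\mathfrak{T}(\tau)=\tau'$ then by hypothesis $\tau'\subseteq\tau\subseteq\mathfrak{M}(\tau')$. But every element of $\mathcal{T}_n$ is a spanning tree on $[n]$ and thus has exactly $n-1$ edges, so the inclusion $\tau'\subseteq\tau$ between two trees forces $\tau'=\tau$. Hence $\mathfrak{T}(\tau)=\tau$, so $\tau\in[\tau,\mathfrak{M}(\tau)]$, giving both $\tau\subseteq\mathfrak{M}(\tau)$ and the fact that every $\tau\in\mathcal{T}_n$ lies in the image of $\mathfrak{T}$.

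The main obstacle is precisely the observation in the previous paragraph: a priori the datum in condition~1 does not say that $\mathfrak{M}(\tau)$ is a supergraph of $\tau$, only that $\mathfrak{T}^{-1}(\tau)$ equals a certain set of graphs sandwiched between $\tau$ and $\mathfrak{M}(\tau)$ (which could conceivably be empty if $\tau\not\subseteq\mathfrak{M}(\tau)$). The structural fact that all trees in $\mathcal{T}_n$ have the same number of edges is what rules this out and pins down $\mathfrak{T}(\tau)=\tau$. Once this is in place, the rest of the argument is a direct bookkeeping of fibers versus the boolean intervals $[\tau,\mathfrak{M}(\tau)]$ in Definition~\ref{partschem}.
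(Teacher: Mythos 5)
Your proof is correct and follows essentially the same route as the paper's: in both directions you extract the same key observations, namely that $\mathfrak{T}(g)\subseteq g$ forces $\mathfrak{T}(\tau)=\tau$ for trees (because trees on $[n]$ all have $n-1$ edges), which simultaneously yields surjectivity of $\mathfrak{T}$ and $\tau\subseteq\mathfrak{M}(\tau)$, after which the fiber decomposition of $\mathcal{G}_n$ gives the disjoint union. The only difference is that you make explicit the potential empty-interval worry and the $n-1$-edge count, both of which the paper leaves implicit in the phrase ``because both are trees.''
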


\\{\it Proof}. $1\Rightarrow2$.
Since $g \in \Ti^{-1}(\Ti(g))$, we have $\Ti(g) \subset g$ for all $g \in {\mathcal G}_n$. In particular,
for every tree $\tau$ we have $\Ti(\tau) \subset \tau$ which implies
$\Ti(\tau)=\tau$ because both are trees.  I.e.,  $\Ti$ is  surjective and thus the  intervals $\Ti^{-1}(\tau)$ are nonempty. This implies that
 ${\mathcal G}_n$ is the disjoint union of the intervals
 ${\mathcal G}_n = \bigcup_{\tau \in {\mathcal T}_n} \Ti^{-1}(\tau)=\bigcup_{\tau \in {\mathcal T}_n}[\t, \Mi(\t)]$.
 Hence in view
 of Definition \ref{partschem} we conclude  that  $\Mi$ is a partition scheme in ${\mathcal G}_n$.

 \\$2\Rightarrow1$. If
  $\Mi$ is a partition scheme then for any $g\in {\mathcal G}_n$ there exists a unique tree $\t\in {\mathcal T}_n$
  such that  $g\in [\t,\bm M(\t)]$. Therefore we can define the map $\bm T$ from ${\mathcal G}_n$ to ${\mathcal T}_n$
  such that, for all $g\in [\t,\bm M(\t)]$,  $\bm T(g)=\t$.
  $\Box$

\subsection{The original Penrose map}

\def\ti{{\rm\bf  t}}\def\mi{{\rm\bf m}}
\\ The original partition scheme proposed by Penrose was exclusively based on labels $1,\dots ,n$ (rather than $x_1,\dots,x_n$)
 and it  involves two explicit maps, say $\ti: G_n\to T_n$ and $\mi: T_n\to G_n$ satisfying
Proposition \ref{twomaps}. Let us first construct the map
$\ti : G_n\to T_n$.
To define this map we have first of all to choose a root among vertices ${1,2,\dots,n}$. So we
identify for example a vertex among $\{1,2,\dots,n\}$ as the root, e.g., to fix the ideas, let the root
be the vertex $1$ (as in the original paper of Penrose).
Once the root $1$ has been chosen,
let us denote, for any $g\in G_n$,
by $d_g(i)$ the graph distance of the vertex $i$ from the root $1$ in $g$.
Given thus $g\in G_n$, we construct the tree $\ti(g)$
as follows.
\vv
\vv
\vv
\\1) We first delete all edges $\{i,j\}$ in $E_g$
with $d_g(i)=d_g(j)$.
\vv
\\After this operation we are left
with  a connected graph $g'$ such that $d_{g'}(i)=d_{g}(i)$ for all vertices $i=1,\dots,n$.
Moreover,
each edge $\{i,j\}$ of $g'$ is such that $|d_{g'}(i)-d_{g'}(j)|=1$.
\vv
\\2) For any $i\neq 1$ let now delete from the graph $g'$ all edges $\{i,j\}$
in $E_{g'}$  such that $d_{g'}(j)=d_{g'}(i)-1$ except the one with $j$ minimal.
\vv
\\The resulting graph $g''\doteq \ti(g)$ is by construction a connected graph in $G_n$, i.e.
$\ti(g)\in G_n$, which is a subgraph of $g$, i.e. $\ti(g)\subset g$, and which has
no cycles, i.e. $\ti(g)\in T_n$. Observe that the map $\ti$ is a surjection from $G_n$ to $T_n$ (because $\ti(\t)=\t$ for all $\t\in T_n$).
\vv
\\In order to define the   map $\mi:T_n\to G_n$, i.e. the Penrose partition scheme, let us introduce some notations regarding rooted trees..
First, observe that if $\t\in T_n$
is thought as rooted in $1$,
then vertices of $\t\in T_n$ may be thought as {\it partially ordered} so that
{each vertex  {$i$} of $\t$} has a unique parent, which we denote by {$i'$}, and
{$s_i$} children, denoted by  {$i^1,\dots,i^{s_i}$}. The number {$s_i$} is also called the  branching factor of $i$.
Of course, the root has no parent. For any $\t\in T_n$ we also denote by {$d_\t(i)$}
the tree distance of the vertex $i$ from the root $1$ ($d_\t(i)$ is also called the
{ \it generation number} of the vertex $i$ in $\t$). If $i$ is a vertex of $\t$
such that { $s_i=0$} (i.e. $i$ has no children) then $i$ is called a {\it leaf} (or end-point) of $\t$.

\begin{defi}[Penrose partition scheme]\label{penspart} The Penrose partition scheme is
the map $\mi:T_n\to G_n$ such that to each tree $\tau\in T_n$ associates the graph
$\mi(\tau)\in G_n$ formed by adding  to $\tau$
all edges $\{i,j\}$ such that either:
\begin{itemize}
\item[(p1)]  $d_\t(i)=d_\t(j)$ (edges between vertices of the same generation), or
\item[(p2)]  $d_\t(j)=d_\t(i)-1$ and $j>i'$ (edges between vertices with generations differing by one).
\end{itemize}
\end{defi}

\\It is clear, by construction, that, for any $\t\in T_n$,    we have  $\ti(\mi(\t))=\t$ and if $g\in G_n$ such that $\ti(g)=\t$ then $g\in [\t,\mi(\t)]$. In other words,
the maps $\mi$ and $\ti$ satisfy Proposition \ref{twomaps} and the map $\mi$ is a partition scheme.

\medskip

\\The original  \emph{Penrose identity \cite{Pen67}}, involving the explicit map $\mi$, is the following
\begin{teo}\label{Penid2}
Let $V$ be a pair potential and let $\mi$ be the map described in
Definition \ref{penspart}. Then the following identity holds.
$$
\Phi^T(\xx_1,\dots,\xx_n)~=~
\sum_{\t\in T_n}
e^{-\b\sum_{\{i,j\}\in E_{\mi(\t)}\backslash E_\t}V(x_i-x_j)}
\prod_{\{i,j\}\in E_\t}\left(e^{- \b V(x_i-x_j)}-1\right)
\Eq(r.20)
$$
\end{teo}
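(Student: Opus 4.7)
The plan is to deduce this theorem as a direct corollary of the general Penrose identity (Theorem \ref{Penid}), which already gives the desired formula for any partition scheme $\mathfrak{M}: T_n \to G_n$ in the sense of Definition \ref{partschem}. Setting $V_{ij} = \beta V(x_i - x_j)$ in Theorem \ref{Penid} produces exactly the right-hand side of \equ(r.20). So the entire content of the present theorem is to verify that the explicit map $\mi$ defined in Definition \ref{penspart} is actually a partition scheme on $G_n$.

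To carry out this verification I would invoke Proposition \ref{twomaps}, which reduces the task to exhibiting two maps $\ti: G_n \to T_n$ and $\mi: T_n \to G_n$ satisfying $\ti^{-1}(\t) = [\t, \mi(\t)]$ for every $\t \in T_n$. The map $\mi$ is the Penrose map of Definition \ref{penspart} and the map $\ti$ is the one described just before that definition, which, given $g \in G_n$, deletes first all edges joining vertices at the same distance from the root $1$ and then, for each non-root vertex $i$, all edges $\{i,j\}$ with $d_{g'}(j) = d_{g'}(i)-1$ except the one with $j$ minimal. First I would check the easy facts: that $\ti(g)$ is a tree (it has no cycles by construction and remains connected because removing only edges going "sideways" or "non-minimal upwards" still leaves a path from every vertex to the root), that $\ti(\t) = \t$ for every tree (so $\ti$ is surjective), and that $\t \subset \mi(\t)$ while $\mi(\t) \in G_n$.

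The main step, and the actual obstacle, is showing the equality $\ti^{-1}(\t) = [\t, \mi(\t)]$. For the inclusion $\ti^{-1}(\t) \subset [\t, \mi(\t)]$, given $g$ with $\ti(g) = \t$, I need to show $g \subset \mi(\t)$: this reduces to proving that $d_g(i) = d_\t(i)$ for every $i$ (because the steps defining $\ti$ do not change distances from the root), and then to observing that if $\{i,j\} \in E_g \setminus E_\t$ then either the two endpoints are at the same $\t$-distance from the root (type (p1) in Definition \ref{penspart}), or they are at distances differing by one and the "parent candidate" of $i$ that is discarded by step~2 of $\ti$ must be larger than the true parent $i'$ of $i$ in $\t$ (type (p2)); both cases put $\{i,j\}$ into $E_{\mi(\t)}$. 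For the reverse inclusion $[\t, \mi(\t)] \subset \ti^{-1}(\t)$, I would start with a graph $g$ satisfying $\t \subset g \subset \mi(\t)$ and run the two deletion steps of $\ti$ on it: step~1 kills exactly the type-(p1) edges added by $\mi$, and step~2 kills exactly the type-(p2) edges (the minimality clause in step~2 guarantees that the parent edge of $\t$ survives, since by (p2) any competing edge from $i$ to a vertex $j$ at generation $d_\t(i)-1$ has $j > i'$).

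Once this matching is established, Proposition \ref{twomaps} gives that $\mi$ is a partition scheme in $G_n$, and Theorem \ref{Penid} applied with $V_{ij} = \beta V(x_i-x_j)$ yields \equ(r.20) verbatim. I do not expect any analytic difficulty in this proof: the entire argument is a careful bookkeeping of how the two Penrose deletion rules interact with the two types of edges added to a tree by the map $\mi$, and the main care is in the case analysis on edges in $E_{\mi(\t)} \setminus E_\t$.
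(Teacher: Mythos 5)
Your proposal is correct and follows exactly the same strategy as the paper: verify that $\mi$ is a partition scheme via Proposition \ref{twomaps}, and then read Theorem \ref{Penid2} off from the general identity of Theorem \ref{Penid} with $V_{ij}=\b V(x_i-x_j)$. The paper itself treats the partition-scheme property as ``clear by construction'' and gives no further proof, so your contribution is to fill in the details the paper glosses over; the case analysis you outline (type-(p1) versus type-(p2) edges, the role of the minimality clause in step~2 of $\ti$) is the right one. The only small point you should make explicit is that $d_g=d_\t$ also holds for any $g$ with $\t\subset g\subset\mi(\t)$ in the reverse inclusion, not only in the forward one: this follows because every edge of $\mi(\t)\setminus E_\t$ joins vertices whose $\t$-generations differ by at most one, so no shortcut to the root is created. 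With that observation added, the argument is complete.
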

The problem with this identity is that,  supposing stability of the potential $V(x)$, it is in general very hard (if not impossible) to efficiently estimate
the factor $-\sum_{\{i,j\}\in E_{\mi(\t)}\backslash E_\t}V(x_i-x_j)$ using stability. In other words, let us consider the following conjecture
\begin{conj}\label{stabpen}
Let $\mi$ be the map described  in Definition \ref{penspart}. Let $V(x)$ be a stable pair potential with stability constant $B$. Then there exists a constant
$\tilde B$ such that
for any $n\ge 2$, for any $\t\in T_n$ and for any $(x_1,\dots,x_n)\in \mathbb{R}^{dn}$ it holds
$$
\sum_{\{i,j\}\in E_{\mi(\t)}\backslash E_\t}V(x_i-x_j)\ge -\tilde B n
$$
\end{conj}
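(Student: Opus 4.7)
\vskip.3cm
\noindent\textbf{Proof proposal.} The plan is to split the added-edge set $E_{\mi(\t)}\setminus E_\t$ into the two families dictated by Definition~\ref{penspart}, namely the equal-generation edges (p1) and the neighbouring-generation edges (p2) with $j>i'$, and then to bound the potential sum on each family separately. Write $G_d\subset [n]$ for the set of vertices at tree-distance $d$ from the root in $\t$, and $n_d=|G_d|$, so that $\sum_d n_d=n$.

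The (p1) contribution is the easy one. Every edge of $\t$ joins a child to its parent, hence connects two vertices whose generations differ by $1$; in particular no edge of $\t$ satisfies $d_\t(i)=d_\t(j)$. Consequently the set of (p1) edges coincides exactly with $\bigcup_{d\ge 0}\{\{i,j\}:i\neq j,\;i,j\in G_d\}$, i.e. with the disjoint union of the complete graphs on each generation. Applying the stability of $V$ to each generation separately yields
$$
\sum_{d\ge 0}\sum_{\{i,j\}\subset G_d} V(x_i-x_j)\;=\;\sum_{d\ge 0}U(x_{G_d})\;\ge\;-B\sum_{d\ge 0}n_d\;=\;-Bn,
$$
which already gives the desired linear bound for this half of the sum.

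The (p2) contribution is the real obstacle and is what makes the conjecture delicate. These edges form a label-filtered subgraph of the bipartite graph between consecutive generations and do not, in general, decompose as a union of $O(n)$ cliques, so stability cannot be invoked directly. The natural second move is to apply the Basuev decomposition of Theorem~\ref{basu1}, writing $V=V_a+K_a$ with $K_a\ge 0$ supported in $[0,a]$: the $K_a$-contribution to the (p2) sum is non-negative and may be discarded, so it suffices to prove a uniform-in-$n$ bound
$$
\sum_{i:\,d_\t(i)\ge 1}\;\sum_{\substack{j\in G_{d_\t(i)-1}\\ j\neq i',\; j>i'}} V^-(x_i-x_j)\;\le\; C\,n,
$$
where $V^-=V_a^-$ is by hypothesis absolutely integrable (and controlled per vertex by the Basuev constant $\m(a)$ of \equ(mua)). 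The hard point is precisely this bound: the selection $j>i'$ is purely combinatorial and makes no reference to the positions $x_1,\dots,x_n$, whereas the Basuev estimate $\sum_j V^-(x_i-x_j)\le\m(a)$ requires the $x_j$'s summed over to be mutually $a$-separated in $\mathbb{R}^d$. One can envisage adversarial configurations in which many vertices of $G_{d_\t(i)-1}$ accumulate inside the attractive well of $x_i$ while the label condition $j>i'$ nevertheless forces the whole cluster to be counted, giving a contribution that is $\omega(n)$. My expectation is therefore that Conjecture~\ref{stabpen} cannot be proved for the original Penrose map $\mi$ without further input, and that the right strategy is to replace $\mi$ by a partition scheme whose selection rule is \emph{geometric} rather than label-based — for example one that orders children of each vertex by their mutual distances, so that the (p2) edges are automatically incident only to particles lying outside a fixed hard-core radius — and to prove the analogue of the conjecture for that scheme. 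This geometric reformulation of the partition scheme is, I believe, both the unavoidable step and the substantive content of any rigorous tree-graph bound obtained after Penrose.
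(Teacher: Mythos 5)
The statement is labelled a Conjecture, and the paper itself never proves it for the Penrose map $\mi$: immediately after stating it, the text reads ``Unfortunately, Conjecture \ref{stabpen}, as far as the map $\mi$ of Definition \ref{penspart} is concerned, has never been proven to be true. We will see later that using a partition scheme different from $\mi$, it is possible to prove the Conjecture \ref{stabpen}.'' Your analysis lands exactly there: you correctly refrain from claiming a proof, you locate the obstruction in the (p2) label-selected edges, and you propose replacing $\mi$ by a scheme governed by the geometry of the potential values rather than by vertex labels --- which is precisely the move the paper makes with the Kruskal minimal-spanning-tree scheme $\bm M$ of Definition \ref{d4}, whose associated bound is Lemma \ref{stabgen}. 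Your observation that the (p1) edges decompose into the disjoint cliques on the generations $G_d$ and that stability applied to each clique controls that half of the sum by $-Bn$ is correct and not spelled out in the text; it is structurally the same device the paper later uses in Lemma \ref{stabgen}, where stability is applied to the complete graphs on the vertex sets of the negative-edge forest. In short, this is a sound diagnosis rather than a proof, and the remedy you point to is the one the paper adopts.
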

If the conjecture above is true, then we would immediately get from \equ(r.20) the inequality
$$
|\Phi^T(\xx_1,\dots,\xx_n)|~\le~ {e^{\b\tilde B n}}\sum_{\t\in T_n}
\prod_{\{i,j\}\in \t}\left|e^{- \b V(x_i-x_j)}-1\right|
$$
and thus, recalling \equ(cbeta2), the Mayer coefficients  would be bounded as follows
$$
C_n(\b,\L)\le {1\over |\L|}{ e^{\b\tilde B n}\over n!} \sum_{\t\in T_n} \int_\L dx_1\dots\int_\L dx_n \prod_{\{i,j\}\in E_\t}\left|e^{-\b V(x_i-x_j)}-1\right|\le
$$
$$
\le  { e^{\b\tilde B n}\over n!}\left[ \int_{\mathbb{R}^{d}} dx \left|e^{- \b V(x)}-1\right| \right]^{n-1} \sum_{\t\in T_n}1
~\le~ { e^{\b\tilde B n}\over n!}C(\b)^{n-1}n^{n-2}
$$
The latter estimates yield a lower bound for the convergence radius $R_V$ of the Mayer series as
$$
R_V\ge {1\over e^{\b \tilde B+1}C(\b)}
$$
which would have been better than the Penrose-Ruelle bound \equ(convrad) provided $\tilde B\le 2B$.

\\Unfortunately, Conjecture \ref{stabpen}, as far as the map $\mi$ of Definition \ref{penspart} is concerned,  has never been proven to be true. We will see later that using a
partition scheme different from $\mi$, it is possible to prove  the Conjecture \ref{stabpen}.

\\Let us conclude this section by showing a first consequence of the tree graph identity \equ(r.20). We show that
when the potential is positive the Mayer series of the pressure has the remarkable property
to be an alternate series. Using the tree graph identity \equ(r.20), it is
very easy to see the following. \vskip.1cm

\begin{pro}\label{alternate}
If $V(\xx)\ge 0$ then
$$
\Phi^T(\xx_1,\dots,\xx_n)=(-1)^{n-1}|\Phi^T(\xx_1,\dots,\xx_n)|
$$
and therefore the Mayer series of the
finite volume pressure is, for any $\L$, an alternate series, i.e.
$$
(-1)^{n-1}C_n(\b,\L)\ge 0
$$
\end{pro}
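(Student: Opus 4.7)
The plan is to apply the Penrose tree graph identity \equ(r.20) from Theorem \ref{Penid2}, which expresses $\Phi^T(\xx_1,\dots,\xx_n)$ as a sum over trees $\t\in T_n$ of the product
$$
e^{-\b\sum_{\{i,j\}\in E_{\mi(\t)}\backslash E_\t}V(x_i-x_j)}\prod_{\{i,j\}\in E_\t}\left(e^{-\b V(x_i-x_j)}-1\right).
$$
Under the hypothesis $V(\xx)\ge 0$, I will analyze the sign of each factor separately and conclude that every summand carries the same sign $(-1)^{n-1}$, so no cancellation can occur.

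First I would observe that the exponential prefactor $e^{-\b\sum_{\{i,j\}\in E_{\mi(\t)}\backslash E_\t}V(x_i-x_j)}$ is automatically nonnegative (in fact positive), regardless of the signs of the $V(x_i-x_j)$. Second, for each edge $\{i,j\}\in E_\t$, the assumption $V(\xx)\ge 0$ gives $e^{-\b V(x_i-x_j)}\le 1$, so the factor $e^{-\b V(x_i-x_j)}-1$ is nonpositive. Since every tree $\t\in T_n$ has exactly $|E_\t|=n-1$ edges, the product $\prod_{\{i,j\}\in E_\t}(e^{-\b V(x_i-x_j)}-1)$ has sign $(-1)^{n-1}$ (or is zero).

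Consequently each tree-indexed summand in the Penrose identity is of the form $(-1)^{n-1}$ times a nonnegative quantity, hence
$$
(-1)^{n-1}\Phi^T(\xx_1,\dots,\xx_n)=\sum_{\t\in T_n}e^{-\b\sum_{\{i,j\}\in E_{\mi(\t)}\backslash E_\t}V(x_i-x_j)}\prod_{\{i,j\}\in E_\t}\left(1-e^{-\b V(x_i-x_j)}\right)\ge 0,
$$
which proves the first assertion $\Phi^T(\xx_1,\dots,\xx_n)=(-1)^{n-1}|\Phi^T(\xx_1,\dots,\xx_n)|$. To finish, I would plug this sign information into the definition \equ(ursm) of $C_n(\b,\L)$: since the integrand $\Phi^T(\xx_1,\dots,\xx_n)$ has constant sign $(-1)^{n-1}$ throughout $\L^n$, the integral inherits the same sign, and thus $(-1)^{n-1}C_n(\b,\L)\ge 0$. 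There is no real obstacle here; the proof is essentially immediate once one notices that the Penrose identity exhibits $\Phi^T$ as a sum of terms all of the same sign when $V\ge 0$, and this is precisely what the tree graph identity was designed to reveal.
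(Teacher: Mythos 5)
Your proof is correct and follows essentially the same route as the paper: apply the Penrose tree graph identity, observe that under $V\ge 0$ each factor $e^{-\b V}-1$ is nonpositive while the exponential prefactor is positive, extract the sign $(-1)^{n-1}$ from the $n-1$ edges of each tree, and conclude by integrating. No meaningful difference.
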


\\{\bf Proof}.
We have
$$
\Phi^T(\xx_1,\dots,\xx_n)~=~ \sum\limits_{g\in G_{n}}
\prod\limits_{\{i,j\}\in E_g}\left[  e^{ -\b V(\xx_i -\xx_j)}
-1\right]
$$
and, using the tree graph identity \equ(r.20), with, for any fixed $\xx_1
,\dots ,\xx_n$, we can write
$$
\begn
\Phi^T(\xx_1,\dots,\xx_n)& =
\sum_{\t\in T_n}\prod_{\{i,j\}\in E_\t}[e^{- \b V(x_i-x_j)}-1]e^{-\b\sum\limits_{\{i,j\}\in E_{\mi(\t)}\backslash E_\t}V(x_i-x_j)}\\
&~=~(-1) ^{n-1} \sum_{\t\in T_n}\prod_{\{i,j\}\in E_\t}[1-e^{- \b V(x_i-x_j)}]e^{-\b\sum\limits_{\{i,j\}\in E_{\mi(\t)}\backslash E_\t}\!\!\!\!\!\!V(x_i-x_j)}\\
&=(-1) ^{n-1} \left|\Phi^T(\xx_1,\dots,\xx_n)\right|
\egn
$$
$\Box$

\\An interesting consequence of such proposition is that for positive
potential the Mayer series with convergence radius $R$ has surely
a singularity at the point $\l~=~ -R$. This fact is quite
unpleasant, since the singularity occurs in a non physical region.

\\The Penrose partition scheme can be used to prove the following identity.

\begin{pro}\label{congraphs}
Let $G_n$ be the set of all connected graphs with vertex set $[n]$ and let, for any $g\in G_n$, be $E_g$ its edge set with cardinality $|E_g|$. Then
$$
\sum_{g\in G_n}(-1)^{|E_g|}=(-1)^{n-1}(n-1)!
$$
\end{pro}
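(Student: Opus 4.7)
The plan is to obtain the identity as a special case of the Penrose tree-graph identity (Theorem \ref{Penid}) applied with the Penrose partition scheme $\mathfrak{M}=\mi$ of Definition \ref{penspart}. First I would take $V_{ij}=+\infty$ for every pair $\{i,j\}\subset[n]$, a choice explicitly admitted by the hypothesis $V_{ij}\in\mathbb{R}\cup\{+\infty\}$ of Theorem \ref{Penid}. With this choice $e^{-V_{ij}}-1=-1$ and $e^{-V_{ij}}=0$, so the left-hand side $\Phi^T(n)$ reduces to $\sum_{g\in G_n}(-1)^{|E_g|}$, while on the right-hand side the summand indexed by a tree $\tau$ equals $(-1)^{n-1}$ when $E_{\mi(\tau)}\setminus E_\tau=\emptyset$ and vanishes otherwise (since the factor $e^{-\sum V_{ij}}$ over $E_{\mi(\tau)}\setminus E_\tau$ is $1$ if that set is empty and $0$ if it is not). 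The identity therefore collapses to
$$\sum_{g\in G_n}(-1)^{|E_g|}\;=\;(-1)^{n-1}\,\#\{\tau\in T_n:\,\mi(\tau)=\tau\}.$$

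Next I would identify the trees fixed by the Penrose map. Clause (p1) of Definition \ref{penspart} instructs $\mi$ to join any two distinct vertices sharing a generation number, so $\mi(\tau)=\tau$ forces each generation of $\tau$ to contain at most one vertex. Once that holds, clause (p2) contributes nothing either: for every non-root vertex $i$ the unique vertex at generation $d_\tau(i)-1$ is its parent $i'$, and (p2) only creates edges $\{i,j\}$ with $j>i'$. Consequently $\mi(\tau)=\tau$ holds precisely when $\tau$ is a chain rooted at $1$ that passes through every element of $[n]$ in succession.

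The count is then immediate: such a chain is determined by the ordering of $\{2,3,\dots,n\}$ along its single branch, which yields exactly $(n-1)!$ fixed trees and therefore
$$\sum_{g\in G_n}(-1)^{|E_g|}=(-1)^{n-1}(n-1)!,$$
completing the argument. I expect no genuine obstacle: the whole proof is a careful application of Theorem \ref{Penid} in the degenerate regime $V_{ij}\equiv+\infty$, together with a short unwinding of the Penrose rules (p1)--(p2) to see that only the linear chains rooted at $1$ survive. If one prefers to avoid $+\infty$ entirely, the same conclusion can be reached by taking all $V_{ij}$ equal to a real parameter $v$, regarding both sides as polynomials in $y=e^{-v}$, and setting $y=0$; either presentation runs into no real difficulty.
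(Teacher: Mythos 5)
Your proposal matches the paper's proof exactly: both specialize the general Penrose identity (Theorem \ref{Penid}) to $V_{ij}=+\infty$ with $\mathfrak{M}=\mi$, observe that only trees $\tau$ with $E_{\mi(\tau)}\setminus E_\tau=\emptyset$ survive, and identify these as the linear chains rooted at $1$, of which there are $(n-1)!$. The extra remark about avoiding $+\infty$ by working with $y=e^{-v}\to 0$ is a nice touch but not a departure in method.
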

{\bf Proof} Recalling  that $V_{ij}$ can take the value $+\infty$, let us consider the case in which $V_{ij}=+\infty$ for all $\{i,j\}\in E_n$ and  let us  apply the general Penrose identity
\equ(r.200) with $\mathfrak{M}=\mi$. We have in this case
$$
\begin{aligned}
\sum\limits_{g\in G_{n}}~
\prod\limits_{\{i,j\}\in E_g}\left[  e^{ - V_{ij}} -1\right] & = \sum_{g\in G_n}(-1)^{|E_g|} \\
&=
\sum_{\t\in T_n}
e^{-\sum\limits_{\{i,j\}\in E_{\mi(\t)}\backslash E_\t}V_{ij}}
\prod_{\{i,j\}\in E_\t}\left(e^{- V_{ij}}-1\right)\\
& = \sum_{\t\in T_n\atop \mi(\t)\backslash E_\t=\0}
\prod_{\{i,j\}\in E_\t}\left(e^{- V_{ij}}-1\right)\\
&= (-1)^{n-1} \sum_{\t\in T_n\atop \mi(\t)\backslash E_\t=\0}1\\
& = (-1)^{n-1} (n-1)!
\end{aligned}
$$
The last identity follows from the fact that $\sum_{\t\in T_n:~\mi(\t)\backslash E_\t=\0}1=(n-1)!$. Indeed the trees $\t\in T_n$ such that $ \mi(\t)\backslash E_\t=\0$ are
 the trees with incident numbers $d_1,\dots, d_n$ such that $d_1=1$, and $d_j\le 2$ for a single $j\in [n]\setminus \{1\}$ and $d_k=2$ for all $k\in [n]\setminus \{1,j\}$  (i.e. the linear trees rooted at 1), which are clearly $(n-1)!$  (i.e the permutations of the set $\{2,3,\dots,n\}$).

\section{The hard-sphere gas via Penrose identity}\label{sechard}
\def\Va{{V^a_{\rm h.c.}}}
The  original  Penrose identity \equ(r.20), although it has not been proven useful to deal with general stable potentials, has been proven to be useful and very efficient
for purely hard-core pair potentials. Namely for those potentials of the form
$$
V^a_{\rm h.c.}(x)=
\begin{cases}
+\infty & {\rm if}~ |x|\le a\\
0 & {\rm otherwise}
\end{cases}
\Eq(hasp)
$$where $a>0$.  We remind that particles interacting via the pair potential $V^a_{\rm h.c.}(x)$ are in fact a system of
free hard spheres of diameter $a>0$.

\\Observe that the potential $V^a_{\rm h.c.}$ is  stable with stability constant $B=0$ (since $V^a_{\rm h.c.}$ is  non-negative) and tempered (since
$\int_{x> a}V^a_{\rm h.c.}(|x|)dx=0$).
We can therefore apply  Theorem \ref{peru}  which gives us the following lower bound  the convergence radius of the Mayer series  for such a system
$$
R_{\Va}\ge {1\over e S_d(a)}\Eq(peruhc)
$$
where
$$
S_d(a)=\int_{\mathbb{R}^d}|e^{-\b V^a_{\rm h.c.}(|x|)}-1|dx =\int_{|x|\le a} 1 dx
$$
is the volume of the $d$-dimensional sphere of radius $a$.

\\Let us now use the Penrose tree graph identity \equ(r.20) to bound directly the Mayer coefficients of a system of free hard spheres
of diameter $a$  (i.e particles interacting via the potential \equ(hasp)).

\\Given $(x_1,\dots,x_n)\in \mathbb{R}^{dn}$, let us use the short notation $i\sim j$ when $|x_i-x_j|>a$ and  $i\nsim j$ when $|x_i-x_j|\le a$.


\begin{defi}\label{mappen}
Let   $V^a_{\rm h.c.}(x)$  be a purely hard core pair potential as in formula \equ(hasp).
Then, for any $n\ge 2$ and any $(x_1,\dots,x_n)\in \mathbb{R}^{dn}$,
we define the set of Penrose trees $P_a(x_1,\dots,x_n)\subset T_n$
as follows. A tree $\t\in T_n$ belongs to $P_a(x_1,\dots,x_n)$ if the following conditions are satisfied

\begin{enumerate}

\item[(t0)]  if  $\{i,j\}\in E_\t$   then $i\nsim j$ (i.e. $ |x_i-x_j|\le a$)

\item[(t1)] if two vertices $i$ and $j$ are cousins in $\t$ (i.e. such that $d_\t(i)=d_\t(j)$), then $i\sim j$ (i.e. $|x_i-x_j|>a$);

\item[(t2)]  if two vertices $i$ and $j$ are such that $d_\t(j)=d_\t(i)-1$ and $j>i'$,  then $i\sim j$ (i.e. $|x_i-x_j|>a$);


\end{enumerate}
\end{defi}

\\Then we have the following

\def\Re{{\mathbb{R}}}
\begin{teo}\label{pencore}
For any purely hard core pair potential
$\Va$, for any $n\ge 2$, for any $(x_1,\dots,x_n)\in \Re^{dn}$ and for any $\b\in (0,+\infty)$
it holds the identity
$$
\sum_{g\in G_n} \prod_{\{i,j\}\in E_g}\left(e^{-\b\Va(x_i-x_j)}-1\right)~=~
(-1)^{n-1}\sum_{\t\in T_n}\ind{P_a(x_1,\dots,x_n)}(\t)
\Eq(r.20b)
$$
where
$$
\ind{P_a(x_1,\dots,x_n)}(\t)=
\begin{cases}
1 &{\text if} \t\in P_a(x_1,\dots,x_n)\\
0 &{\text otherwise}
\end{cases}
$$
\end{teo}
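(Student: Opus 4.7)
The plan is to apply the Penrose tree graph identity (Theorem \ref{Penid2}) directly to the hard-core potential $\Va$ defined in \equ(hasp) and observe that, because of the discrete nature of the Boltzmann factors in the hard-core case, each tree $\tau \in T_n$ either contributes $(-1)^{n-1}$ or $0$, depending entirely on whether $\tau$ belongs to the set $P_a(x_1,\dots,x_n)$ of Definition \ref{mappen}.

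More concretely, I would first record the two elementary identities for the hard-core potential: $e^{-\beta \Va(x_i-x_j)}-1$ equals $-1$ when $i\nsim j$ (i.e.\ $|x_i-x_j|\le a$) and equals $0$ when $i\sim j$ (i.e.\ $|x_i-x_j|>a$), while $e^{-\beta \Va(x_i-x_j)}$ equals $0$ when $i\nsim j$ and equals $1$ when $i\sim j$. Applying these to the right-hand side of \equ(r.20) immediately factors the contribution of a fixed $\tau$ into a tree-edge product and an extra-edge exponential: the tree-edge product is nonzero if and only if every $\{i,j\}\in E_\tau$ satisfies $i\nsim j$, in which case it equals $(-1)^{n-1}$, and the extra-edge exponential is nonzero (equal to $1$) if and only if every $\{i,j\}\in E_{\mi(\tau)}\setminus E_\tau$ satisfies $i\sim j$.

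The core of the argument is then a pure bookkeeping step: the two surviving constraints on $\tau$ are exactly (t0), (t1), (t2) from Definition \ref{mappen}. Condition (t0) is precisely the tree-edge constraint. For the extra-edge constraint, I would recall from Definition \ref{penspart} that $E_{\mi(\tau)}\setminus E_\tau$ consists precisely of the pairs $\{i,j\}$ satisfying either (p1) ($d_\tau(i)=d_\tau(j)$) or (p2) ($d_\tau(j)=d_\tau(i)-1$ and $j>i'$). Requiring $i\sim j$ on each such pair translates verbatim into (t1) and (t2). Hence the contribution of $\tau$ to \equ(r.20) is $(-1)^{n-1}\ind{P_a(x_1,\dots,x_n)}(\tau)$, which upon summing over $\tau \in T_n$ yields \equ(r.20b).

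I do not expect a real obstacle here: the argument is a transparent specialization of the general Penrose identity to a potential whose Mayer factors take only the values $0$ and $-1$. The only point that deserves a moment of care is the matching of the conditions $(\text{p1}),(\text{p2})$ on $E_{\mi(\tau)}\setminus E_\tau$ with $(\text{t1}),(\text{t2})$ on cousin pairs and uncle-type pairs in $\tau$, which is purely combinatorial and follows directly from the construction of $\mi$. No stability input is needed, since the hard-core case trivially has stability constant $B=0$.
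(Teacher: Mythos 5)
Your proposal is correct and follows essentially the same route as the paper's proof: specialize the Penrose identity \equ(r.20) to the hard-core potential, observe that the Mayer factors take only the values $0$ and $-1$ and the exponential factor takes only $0$ and $1$, and then identify the surviving constraints on $\tau$ with (t0), (t1), (t2) of Definition \ref{mappen} via the construction of $\mi$ in Definition \ref{penspart}. The only cosmetic difference is that the paper first notes $\b\Va = \Va$ to drop $\b$ from the outset, which you handle implicitly since your case analysis is independent of $\b$.
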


\\{\it Proof}. Observe that, by definition \equ(hasp) we have that $\b\Va(x)=\Va(x)$, for any $\b\in (0,+\infty)$ and any $x\in \mathbb{R}^d$.   Using now \equ(r.20) we have
$$
\sum_{g\in G_n} \prod_{\{i,j\}\in E_g}\left(e^{-\b\Va(x_i-x_j)}-1\right)
= \sum_{\t\in T_n} w_\t(x_1,\dots,x_n)
$$
where
$$
w_\t(x_1,\dots,x_n) =  e^{-\sum\limits_{\{i,j\}\in E_{\mi(\t)}\backslash E_\t}\Va(x_i-x_j)}
\prod_{\{i,j\}\in E_\t}\left(e^{- \Va(x_i-x_j)}-1\right)
$$
Now observe that
$$
\prod_{\{i,j\}\in E_\t}\left(e^{- \Va(x_i-x_j)}-1\right)=
\begin{cases}(-1)^{n-1} &{\rm  if} ~|x_i-x_j|\le a~ {\rm for\; all}~ \{i,j\}\in E_\t\\
0 & {\rm otherwise}
\end{cases}
$$
and
$$
e^{-\sum\limits_{\{i,j\}\in E_{\mi(\t)}\backslash E_\t}\Va(x_i-x_j)}=
\begin{cases}
1 &{\rm if}~ |x_i-x_j|>a~{\rm  for \;all}~ \{i,j\}\in E_{\mi(\t)}\backslash E_\t\\
0 &{\rm otherwise}
\end{cases}
$$
Therefore
$$
 w_\t(x_1,\dots,x_n)~=~
\begin{cases}
(-1)^{n-1} &{\rm if}~ |x_i-x_j|\le a \;{\rm for\;all\;}\{i,j\}\in E_\t~{\rm and}\atop
~~~~~ |x_i-x_j|>a \; {\rm for\; all\;}  \{i,j\}\in E_{\mi(\t)}\backslash E_\t\\\\
0 &{\rm  otherwise}
\end{cases}
$$
Now, recalling the Definition \ref{penspart} of the map $\mi$,  we have
$$
\{i,j\}\in E_{\mi(\t)}\setminus E_\t ~~~\Longrightarrow  ~~~{\rm either }~d_\t(i)=d_\t(j) \;{\rm or} \;d_\t(j)=d_\t(i)-1\; {\rm and}\; j>i'
$$
whence
$$
w_\t(x_1,\dots,x_n)= (-1)^{n-1}
\begin{cases}
1 &{\rm if}~ {{|x_i-x_j|\le a ~{\rm for\;all\;}\{i,j\}\in E_\t~{\rm and}}
\atop ~~~|x_i-x_j|>a \;
{{\rm for\; all\;} \; \{i,j\}\in E_{\mi(\t)}\setminus E_\t }}
\\ \\
0 &{\rm otherwise}
\end{cases}
$$
I.e., recalling the Definition \ref{mappen} concerning the Penrose trees,
$$
w_\t(x_1,\dots,x_n)=(-1)^{n-1}\ind{P_a(x_1,\dots,x_n)}(\t)\Eq(wtpen)
$$
$\Box$
\vv
\\We now derive  a useful inequality from \equ(r.20b). To do this we need one more definition.

\begin{defi}\label{mappenwe}
Let   $V^a_{\rm h.c.}(x)$  be a purely hard core pair potential as in formula \equ(hasp).
Then, for any $n\ge 2$ and any $(x_1,\dots,x_n)\in \mathbb{R}^{dn}$,
we define the set of weakly Penrose trees $P^*_a(x_1,\dots,x_n)\subset T_n$
as follows. A tree $\t\in T_n$ belongs to $P^*_a(x_1,\dots,x_n)$ if the following conditions are satisfied

\begin{enumerate}

\item[(t0)]  if {  $\{i,j\}\in E_\t$}   then $i\nsim j$ (i.e. $ |x_i-x_j|\le a$);

\item[(t1)] if two vertices $i$ and $j$ are siblings (i.e. $d_\t(i)=d_\t(j)$ and moreover they have the same parent $i'=j'$), then $i\sim j$ (i.e. $|x_i-x_j|>a$);

\end{enumerate}
\end{defi}

\\Then we have the following

\begin{teo}\label{pencore2} For any purely hard core pair potential
$V^a_{\rm h.c.}$
it holds the inequality
$$
|\sum_{g\in G_n} \prod_{\{i,j\}\in E_g}\left(e^{-\Va(x_i-x_j)}-1\right)|~\le~
\sum_{\t\in  T_n}w^*_\t(x_1,\dots,x_n)
\Eq(r.20c)
$$
with
$$
w^*_\t(x_1,\dots,x_n)=\ind{P^*_a(x_1,\dots, x_n)}(\t)
$$
\end{teo}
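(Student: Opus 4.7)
The plan is to deduce this inequality directly from the identity \equ(r.20b) established in Theorem \ref{pencore} by bounding the right-hand side via a set inclusion between the two classes of Penrose trees.

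First, I would take Theorem \ref{pencore} as my starting point, which gives the \emph{exact} identity
$$
\sum_{g\in G_n} \prod_{\{i,j\}\in E_g}\left(e^{-\b\Va(x_i-x_j)}-1\right)~=~(-1)^{n-1}\sum_{\t\in T_n}\ind{P_a(x_1,\dots,x_n)}(\t).
$$
Taking absolute values, since every term on the right-hand side is a non-negative integer times $(-1)^{n-1}$, the left-hand side of \equ(r.20c) equals $\sum_{\t\in T_n}\ind{P_a(x_1,\dots,x_n)}(\t)$. Hence it suffices to prove the pointwise inequality $\ind{P_a(x_1,\dots,x_n)}(\t) \le \ind{P^*_a(x_1,\dots,x_n)}(\t)$ for every $\t\in T_n$, equivalently the set inclusion $P_a(x_1,\dots,x_n) \subseteq P^*_a(x_1,\dots,x_n)$.

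The core observation is then purely combinatorial: comparing Definitions \ref{mappen} and \ref{mappenwe}, both sets are cut out of $T_n$ by condition (t0), which is identical. Condition (t1) in $P^*_a$ refers only to \emph{siblings} (same generation \emph{and} same parent), while condition (t1) in $P_a$ refers to all \emph{cousins} (same generation, regardless of parent); since siblings form a subset of cousins, the $P_a$ requirement is strictly stronger. Condition (t2) of $P_a$ is simply absent in $P^*_a$, so dropping it can only enlarge the class. Combining, any tree fulfilling the defining conditions of $P_a(x_1,\dots,x_n)$ automatically fulfils those of $P^*_a(x_1,\dots,x_n)$, establishing the inclusion.

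Chaining the two steps yields
$$
\Bigl|\sum_{g\in G_n} \prod_{\{i,j\}\in E_g}\!\!\left(e^{-\Va(x_i-x_j)}-1\right)\Bigr| = \sum_{\t\in T_n}\ind{P_a(x_1,\dots,x_n)}(\t) \le \sum_{\t\in T_n}\ind{P^*_a(x_1,\dots,x_n)}(\t),
$$
which is precisely \equ(r.20c). There is no real obstacle here: once one has the exact identity of Theorem \ref{pencore}, the only thing to check is the set inclusion, and the sole subtlety is to notice that ``siblings'' in Definition \ref{mappenwe} is a genuine restriction of the ``cousins'' condition in Definition \ref{mappen}, so no equivalence trap is hiding in the terminology.
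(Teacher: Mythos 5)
Your proof is correct and follows exactly the paper's argument: the paper's proof is the one-line observation that $P_a(x_1,\dots,x_n)\subset P^*_a(x_1,\dots,x_n)$ combined with Theorem \ref{pencore}, which is precisely what you do. Your additional remarks simply unpack the set inclusion by comparing the defining conditions (t0)--(t2) of Definition \ref{mappen} against the weaker conditions (t0), (t1) of Definition \ref{mappenwe}, which is the correct justification that the paper leaves implicit.
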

{\it Proof}. The inequality follows immediately from Theorem \ref{pencore} by noting that $P^*_a(x_1,\dots,x_n)\supset P_a(x_1,\dots,x_n)$. $\Box$
\vv
\\Observe now that
$$
w^*_\t(x_1,\dots,x_n)=
\begin{cases}
1 &{\rm if}~ {|x_i-x_j|\le a \;{\rm for\;all\;}\{i,j\}\in E_\t}~{\rm and}\atop
~~~~~ |x_i-x_j|>a \; {\rm for\; all\;}  i,j\;{\rm siblings\;in\;}\t\\\\
0 & {\rm  otherwise }
\end{cases}
\Eq(wtau)
$$
Therefore $n$-order Mayer coefficient  for a system interacting via the potential $\Va$  is bounded by
$$
|C_n(\b,\L)|\le {1\over n!}\sum_{\t\in T_n}S_\L(\t)\Eq(cbetahc)
$$
with
$$
S_\L(\t)= {1\over |\L|}\int_\L dx_1\dots\int_\L dx_n w^*_\t(x_1,\dots,x_n)
$$
By  \equ(wtau) we have
$$
S_\L(\t)\;\le\;   g_d(d_1)\prod_{i=2}^n  g_d(d_i-1)
\Eq(r.1)
$$
where $d_i$ is the degree of the vertex $i$ in $\t$,
$$
g_d(k)\;=\;\int_{|x_i|\le a\atop |x_i-x_j|>a} dx_1\dots dx_k\;=\;
a^{dk}  \int_{|y_i|\le 1\atop |y_i-y_j|>1} dy_1\dots dy_k
$$
for $k$ positive integer, and $g_d(0)=1$ by definition. Recalling that $S_d(a)$ denotes the volume of the $d$-dimensional sphere of radius $a$, it is convenient to write
$$
g_d(k)\;=\;[S_d(a)]^k\, \widetilde g_d(k)
\Eq(r.2)
$$
with
$$
\widetilde g_d(k)\;=\;
{1\over [S_d(1)]^k}\int_{|y_i|\le 1\atop |y_i-y_j|>1} dy_1\dots dy_k
\Eq(r.3)
$$
for $k$ positive integer and $\widetilde g_d(0)=1$.  We observe
that $\widetilde g_d(k)\le 1$ for all values of $k$.
From
\equ(r.1)--\equ(r.3) we conclude that
\begin{eqnarray*}
S_\L(\t) &\le& [S_d(a)]^{d_1}\, \widetilde g_d(d_1)
\prod_{i=2}^n    [S_d(a)]^{d_i-1} \,\widetilde g_d(d_i-1)\\
&=& [S_d(a)]^{n-1}\, \widetilde g_d(d_1)
\prod_{i=2}^n    \widetilde g_d(d_i-1)\;.
\end{eqnarray*}
The last identity follows from the fact that for every tree of $n$ vertices,
$d_1+\cdots+d_n=2n-2$.
The $\tau$-dependence of this last bound is only through the degree of the vertices,
hence it leads, upon insertion in \equ(cbetahc), to the inequality
$$
|C_n(\b,\L)|\;\le\;{[S_d(a)]^{n-1}\over n!}
\sum_{d_1,...,d_n: \,d_i\ge 1\atop d_1+...+d_n=2n-2}
\widetilde g_d(d_1)\,
\prod_{i=2}^n  \widetilde g_d(d_i-1)\,
{(n-2)!\over \prod_{i=1}^n (d_i-1)!}\;\le
$$
$$
\le {[S_d(a)]^{n-1}\over n}
\sum_{d_1,...,d_n: \,d_i\ge 1\atop d_1+...+d_n=2n-2}
{\widetilde {g}_d(d_1)\over d_1!}\,
\prod_{i=2}^n  {\widetilde{g}_d(d_i-1)\,
\over (d_i-1)!}\;
$$
where we recall that the quantity ${(n-2)!/[ \prod_{i=1}^n (d_i-1)!]}$ in the first line of inequality above is precisely
the number of trees with $n$ vertices and fixed degrees
$d_1,\ldots,d_n$, according to Cayley formula and in the second line  we have used the bound $d_1\le n-1$.

\\At this point, following \cite{PS},
we multiply and divide by $\m^{n-1}$ where $\m>0$ is a parameter to
be chosen in an optimal way.  This leads us to the inequality
\begin{eqnarray*}
|C_n(\b,\L)|&\le&
{1\over n}\left[{[S_d(a)]\over \m}\right]^{n-1}
\sum_{d_1,...,d_n:\,d_i\ge 1\atop d_1+...+d_n=2n-2}
 {{\widetilde g}_d(d_1)\,\m^{d_1}\over d_1!}
\prod_{i=2}^n   {\widetilde g_d(d_i-1)\,\m^{d_i-1}\over (d_i-1)!}\\
&\le& {1\over n}\left[{[S_d(a)]\over \m}\right]^{n-1}{\Bigg(\sum_{s\ge 0}
{\widetilde g_d(s)\m^s\over s!}\Bigg)^n}
\end{eqnarray*}
Therefore the convergence radius of the Mayer series of the of the gas of free hard spheres of diameter $a$ admit the following lower bound
$$
R_{\Va}\ge{1\over S_d(a)} \max_{\m>0}{\mu\over  C_d(\m)}
$$
where
$$
C_d(\m)\;=\;\sum_{s\ge 0}
{\widetilde g_d(s)\over s!}\,\m^s
$$
(attention! This is a polynomial in $\m$).
Let us show that for $d=2$
the quantitative improvement given by this condition
 with respect to the classical bound \equ(peruhc)
can be substantial.
In the $d=2$ case (i. e. the two-dimensional hard sphere gas)
$$
C_2(\m)\;=\;\sum_{s= 0}^5
{\widetilde g_2(s)\over s!}\m^s
$$
where, by definition, $\widetilde g_2(0)=\widetilde g_2(1)=1$.
The factor $\widetilde g_2(2)$ can be explicitly evaluated in terms of
straightforward integrals and we get

$$
\widetilde g_2(2)={1\over \pi^2}\int_{|x|\le 1}d^2x\int_{|x'|\le 1}d^2x'\Theta(|x-x'|>1)
$$
where $\Theta(|x-x'|>1)= 1$ if $|x-x'|>1$ and zero otherwise.
Using polar coordinates
$$
\widetilde g_2(2)={2\pi\over \pi^2}\int_0^1A(\r)\r d\r
$$
where $A(\r)$ is the area of the region $S_0\backslash S_\r$
with $S_0=\{(x,y)\in \mathbb{R}^2: x^2+y^2\le 1\}$ and
$S_\r=\{(x,y)\in \mathbb{R}^2: (x-\r)^2+y^2\le 1\}$.
We get
$$
A(\r)=2\left[\int_{-1}^{\r/2}\sqrt{1-x^2}dx-\int_{-1+\r}^{\r/2}\sqrt{1-(x-\r)^2}\,dx\right]=4\int_{0}^{\r/2}\sqrt{1-x^2}dx=
$$
$$
=2\left[\arcsin(\r/2)+{\r\over 2}\sqrt{1-{\r^2\over 4}}\right]
$$
Hence
$$
\widetilde g_2(2)={2\over \pi}\int_0^1A(\r)\r d\r= {4\over \pi}\int_0^1\left[\arcsin(\r/2)+{\r\over 2}\sqrt{1-{\r^2\over 4}}\right]\r d\r=
$$
$$
= {16\over \pi}\int_0^{1/2}\left[u\arcsin(u)+{u^2}\sqrt{1-{u^2}}\right] du
$$
But, since
$$
\int u\arcsin(u)du={1\over 4}\left[2u^2\arcsin(u) - \arcsin(u)+u\sqrt{1-u^2}\right]
$$
and
$$
\int u^2\sqrt{1-u^2}={1\over 8}\left[\arcsin(u)+u\sqrt{1-u^2}(1-2u^2)\right]
$$
We obtain
$$
\widetilde g_2(2)={3\sqrt{3}\over 4\pi}
$$
The other terms of the sum can be numerically evaluated (e.g. using Mathematica),
obtaining
$$
 \widetilde g_2(3)=0,0589\,\,\,\,\,\,\,\,\widetilde g_2(4)=0,0013\,\,\,\,\,\,\,\,\widetilde g_2(5)\le 0,0001
$$
Choosing $\m=\left[8\pi\over 3\sqrt{3}\right]^{1/2}$ (a value for which ${\m\over C_d(\m) }$ is close to its maximum) we get
$$
R_{\Va}\ge{1\over S_d(a)} \; 0.5107
$$
This should be compared with the bound $R_{\Va}\ge{1/e\over  S_d(a)} $ obtained
through the classical condition \equ(peruhc).

\section{Stable and tempered potentials}

The Penrose map $\bm m$ described above is not suitable to control the factor
$$
\exp\{-\b\sum_{\{i,j\}\in E_{\bm{m}(\t)}\backslash E^+_\t}V(x_i-x_j)\}
$$
appearing in \equ(r.20). The key point  is  thus to find another partition scheme $\bm M$ in such a way that is possible to find a good bound for such a factor.
\\We will construct explicitly our  new partition scheme $\bm M: {\mathcal T}_n\to {\mathcal G}_n$  by first defining an auxiliary map
$\bm T :{\mathcal G}_n\to {\mathcal T}_n$ and then deriving $\bm M$ from $\bm T$ according to  Proposition \ref{twomaps}.

\\ We thus start
by first defining the map $\bm T$  from ${\mathcal G}_n$ to
${\mathcal T}_n$.

\\For a given pair potential $V$ and a fixed $(x_1,\dots,x_n)\in \mathbb{R}^{dn}$,  call $V_{ij}=\b V(x_i-x_j)$ for all $\{i,j\}\in E_n$. Let us establish a total order $\succ$ in the set $E_n$ in such way that
 $\{i,j\}\succ \{k,l\}$ implies $V_{ij}\ge V_{kl}$. Using this total order we can associate to any $g\in \GG_n$ a unique
 minimal spanning tree $\bm T(g)$ of $g$ via the so-called Kruskal algorithm. Namely, $\bm T(g)$ is the spanning tree
 of $g$ constructed by choosing its edges one after the other from the set $E_g$ in such way that at each step the chosen
 edge is always the lowest edge (in the order $\succ$) that does not create a cycle.

\begin{defi}\label {hyp}
\\For a given pair potential $V$ and a fixed $(x_1,\dots,x_n)\in \mathbb{R}^{dn}$, the map  $\bm T: \GG_n \to \TT_n$ associates to any $g\in \GG_n$  its unique minimal spanning tree $\bm T(g)$
w.r.t. the order $\succ$  defined above via the Kruskal algorithm.
\end{defi}
We can  now define the
map
$\bm M$  from ${\mathcal T}_n$ to ${\mathcal G}_n$ as follows.
\begin{defi}\label{d4}
The map $\bm M:{\mathcal T}_n\to{\mathcal G}_n$ is such that  $\bm M( \tau )$ is the graph whose edges are
 the $\{i,j\}$ such that $\{i,j\} \succ \{k,l\} $ for every edge $\{k,l\} \in E_\tau$ belonging to the path  from $i$ to $j$ through $\tau$.
\end{defi}

\\Thus we have constructed
$$\xymatrix{{\mathcal G}_n \ar@<.5ex>[r]^{\bm T} & {\mathcal T}_n \ar@<.5ex>[l]^{\bm M}}$$
Observe that $\tau \subset \bm M(\tau)$ and $\bm T(g) \subset g$.
We now show that these two maps ${\bm T}$, ${\bm M}$ satisfy the hypothesis of  Proposition \ref{twomaps}.

\begin{lem}\label{l1}
 Let $\bm T$ be the and $\bm M$ be the maps given in Definitions
\ref{hyp} and \ref{d4} respectively. Then
 $$\bm T^{-1}(\tau)=\{g \in {\mathcal G}_n :\, \tau \subset g \subset \bm M(\tau)\}$$
 and therefore $\bm M$  is a partition scheme in ${\mathcal G}_n$.
\end{lem}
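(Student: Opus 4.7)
My plan is to prove the set equality $\bm T^{-1}(\tau)=\{g \in {\mathcal G}_n :\, \tau \subset g \subset \bm M(\tau)\}$ by two inclusions, and then the final statement (that $\bm M$ is a partition scheme) follows immediately from Proposition \ref{twomaps}. The key observation that drives both inclusions is a standard characterization of the Kruskal algorithm: an edge $\{i,j\}$ of $g$ is rejected by the algorithm (and hence does not belong to $\bm T(g)$) if and only if, at the moment $\{i,j\}$ is examined, its endpoints are already connected by the partial tree; equivalently, if and only if there is a path in the final tree $\bm T(g)$ from $i$ to $j$ whose edges all satisfy $\{k,l\}\prec\{i,j\}$.

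For the inclusion $\bm T^{-1}(\tau)\subset [\tau,\bm M(\tau)]$, let $g$ be such that $\bm T(g)=\tau$. Clearly $\tau=\bm T(g)\subset g$. To show $g\subset \bm M(\tau)$, pick any $\{i,j\}\in E_g$. If $\{i,j\}\in E_\tau$ we are done. Otherwise $\{i,j\}$ was rejected by Kruskal when building $\tau$ from $g$, so the edges of the unique $\tau$-path from $i$ to $j$ were all added before $\{i,j\}$; since Kruskal examines edges in the order $\succ$, each such edge $\{k,l\}$ satisfies $\{i,j\}\succ\{k,l\}$, which is precisely the defining condition for $\{i,j\}\in E_{\bm M(\tau)}$ in Definition \ref{d4}.

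For the reverse inclusion, suppose $\tau\subset g\subset \bm M(\tau)$, and run Kruskal on $g$. Every edge of $\tau$ belongs to $g$, so it will be considered; I will show that (i) every edge of $\tau$ is accepted and (ii) every edge of $g\setminus E_\tau$ is rejected, which forces $\bm T(g)=\tau$. For (ii), take $\{i,j\}\in E_g\setminus E_\tau$: by the assumption $g\subset \bm M(\tau)$ and Definition \ref{d4}, the edges $\{k,l\}$ of the $\tau$-path from $i$ to $j$ are all strictly smaller than $\{i,j\}$ in the order $\succ$, so when $\{i,j\}$ is examined by Kruskal, those $\tau$-edges have already been accepted (they are in $g$ and never create a cycle among themselves because $\tau$ is a tree), hence $\{i,j\}$ closes a cycle and is rejected. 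Statement (i) then follows because no edge of $\tau$ can close a cycle among the previously accepted edges (which are a subset of $E_\tau$), so Kruskal accepts all of them. Therefore $\bm T(g)=\tau$.

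I do not anticipate a serious obstacle: the only delicate point is the compatibility of the total order $\succ$ with the Kruskal procedure in the presence of ties among the $V_{ij}$'s, but since Definition \ref{hyp} fixes $\succ$ as a fixed total order refining the order of the weights, the algorithm is deterministic and the characterization of rejected edges used above is exact. Once the set equality is established, Proposition \ref{twomaps} gives at once that $\bm M$ is a partition scheme, completing the lemma.
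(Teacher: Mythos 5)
Your argument is correct in substance but takes a genuinely different route from the paper's proof, and it has one presentational circularity in the second inclusion that needs to be tightened.

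\emph{Comparison of approaches.} The paper does not trace Kruskal's algorithm step by step. Instead it uses exchange arguments on the minimal spanning tree: for the first inclusion, replace an edge $e$ on the $\tau$-path from $i$ to $j$ by $\{i,j\}$ to obtain another spanning tree of $g$, and invoke minimality of $\bm T(g)=\tau$ to conclude $\{i,j\}\succ e$; for the second inclusion, suppose $\{i,j\}\in E_{\bm T(g)}\setminus E_\tau$, remove it to split $\bm T(g)$ into two pieces, and find a $\tau$-path edge $f\prec\{i,j\}$ that reconnects them, contradicting minimality. Your proof instead tracks the run of Kruskal directly (accept iff no cycle is closed). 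Both arguments are really two dressings of the same ``cycle property'' of minimum spanning trees; the paper's exchange-lemma version is shorter and avoids reasoning about intermediate forest states, while your algorithmic version is perhaps more vivid and closer to Definition \ref{hyp}.

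\emph{The circularity.} In the reverse inclusion you state you will prove (i) all $\tau$-edges are accepted and (ii) all edges of $E_g\setminus E_\tau$ are rejected, but as written the two assertions lean on each other. Your argument for (ii) at $\{i,j\}$ claims the $\tau$-path edges ``have already been accepted'' because ``they never create a cycle among themselves''; that is not enough, since a $\tau$-edge might a priori close a cycle with a previously accepted \emph{non}-$\tau$-edge, and ruling that out is exactly (ii) for smaller edges. You then say (i) ``follows because \dots the previously accepted edges \dots are a subset of $E_\tau$,'' which is precisely (ii). The fix is routine but needs to be stated: prove (i) and (ii) jointly by strong induction on position in the total order $\succ$, establishing the invariant that after processing the edges of $g$ that are $\preceq e$, the accepted set is exactly $E_\tau\cap\{f : f\preceq e\}$. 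All the ingredients you already use — that for $\{i,j\}\in E_g\setminus E_\tau$ the $\tau$-path edges are all $\prec\{i,j\}$ and lie in $g$, and that $\tau$-edges are acyclic among themselves — are exactly what makes each inductive step go through. With that one reorganization, the proof is complete and the appeal to Proposition \ref{twomaps} closes the lemma as you say.
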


\begin{proof}
 Let $g \in \bm T^{-1}(\tau)$. We have $\tau=\bm T(g) \subset g$. Now take $\{i,j\} \in E_g$, and let $e \in E_\tau$
 be any edge belonging to the path from $i$ to $j$ in $\tau$.
 Consider the graph $\tau'$  obtained from $\tau$ after replacing the edge $e$ by $\{i,j\}$. Clearly $\tau'$ is connected and has $n-1$ edges,
 so it is a tree. By minimality of $\tau$ we must have $ \{i,j\}\succ e$, whence $\{i,j\} \in E_{\bm M(\tau)}$. Therefore $g \subset \bm M(\tau)$.

 \\Conversely, let $\tau \subset g \subset \bm M(\tau)$. We must show $\bm T(g)=\tau$. By cardinality, it suffices to show  $\bm T(g) \subset \tau$.
 Proceeding by contradiction, take $\{i,j\} \in E_{\bm T(g)} \setminus E_\tau$. Consider the path $p^\t(\{i,j\})$ in $\tau$ joining $i$ with $j$.
   Since $\bm T(g) \subset \bm M(\tau)$, $\{i,j\}$ is greater (w.r.t. $\succ$) than  any
 edge in the path $p^\t(\{i,j\})$. If we remove $\{i,j\}$ from $\bm T(g)$, the tree $T(g)$ splits into two trees.
 Necessarily, at least one of the edges, say $f$, in
 the path $p^\t(\{i,j\})$ joins a vertex of one tree with a vertex of the other. Thus, by adding this edge we obtain a connected
 graph with $n-1$ edges, a new tree $\t'\subset g$ with differs from $\bm T(g)$ because the edge $\{i,j\}$ has been replaced by $f$ and $\{i,j\} \succ f$  which contradicts the minimality of $\bm T(g)$. $\Box$
\end{proof}
\vv

\\{\bf Remark}. The map  $\bm M:{\mathcal T}_n\to{\mathcal G}_n$ given in Definition \ref{d4} is a
partition scheme in ${\mathcal G}_n$. This is our new partition scheme. The advantage to consider this new partition
scheme instead of the one originally proposed
by Penrose is manifestly clear in the following two lemmas. The first of them
 follows easily from the general Penrose tree graph identity given in Theorem \ref{Penid}. The second key lemma shows the crucial role played by the new partition scheme
 introduced above.
\begin{lem}\label{cor1}  Let $n\ge 2$ and let $\{V_{ij}\}_{\{i,j\}\in E_n}$ be $n(n-1)/2$ real numbers
(one for each unordered pair $\{i,j\}\subset[n]$) taking values in $\mathbb{R}\cup\{+\infty\}$.
Let  $\mathfrak{M}: T_n\to G_n$ be any partition scheme in $G_n$.
Then the following inequality  holds
$$
\left|\sum\limits_{g\in G_{n}}~
\prod\limits_{\{i,j\}\in E_g}\left[  e^{ -V_{ij}} -1\right]\right|~
\le~
\sum_{\t\in T_n}
e^{-\sum_{\{i,j\}\in E_{\mathfrak{M}(\t)}\backslash E^+_\t}V_{ij}}
\prod_{\{i,j\}\in E_\t}\left(1-e^{-  |V_{ij}|}\right)\Eq(tre)
$$
where
$$
E_\tau^+=\{\{i,j\}\in E_\t:~V_{ij}\ge 0\}\Eq(t+)
$$
\end{lem}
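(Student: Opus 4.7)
The plan is to derive the inequality by starting from the exact Penrose tree graph identity stated in Theorem \ref{Penid}, then taking absolute values and rewriting each factor $|e^{-V_{ij}}-1|$ in a form that produces the desired modification of the exponential weight in the right-hand side.

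First, I would apply the triangle inequality to the identity \equ(r.200), obtaining
\begin{equation*}
\left|\sum_{g\in G_n}\prod_{\{i,j\}\in E_g}\!\!\left[e^{-V_{ij}}-1\right]\right|\le \sum_{\t\in T_n}e^{-\sum_{\{i,j\}\in E_{\mathfrak{M}(\t)}\setminus E_\t}V_{ij}}\prod_{\{i,j\}\in E_\t}\left|e^{-V_{ij}}-1\right|.
\end{equation*}
(Note that the exponential factor is already nonnegative, so no absolute value is needed around it.) The gap between this bound and the claim \equ(tre) is the replacement of $|e^{-V_{ij}}-1|$ by $1-e^{-|V_{ij}|}$ in the tree product, at the cost of enlarging the set over which the negative of $V_{ij}$ is summed in the exponent from $E_{\mathfrak{M}(\t)}\setminus E_\t$ to $E_{\mathfrak{M}(\t)}\setminus E_\t^+$.

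The key elementary identity that bridges this gap is the case analysis
\begin{equation*}
|e^{-V_{ij}}-1|\;=\;e^{-V_{ij}^{-}}\bigl(1-e^{-|V_{ij}|}\bigr),\qquad V_{ij}^{-}:=\min\{V_{ij},0\}.
\end{equation*}
Indeed, if $V_{ij}\ge 0$ then $V_{ij}^{-}=0$ and $|e^{-V_{ij}}-1|=1-e^{-V_{ij}}=1-e^{-|V_{ij}|}$; if $V_{ij}<0$ then $V_{ij}^{-}=V_{ij}$ and $|e^{-V_{ij}}-1|=e^{|V_{ij}|}-1=e^{-V_{ij}^{-}}(1-e^{-|V_{ij}|})$. (When some $V_{ij}=+\infty$ both sides are $1$, so the identity also covers the extended-real case.) Taking the product over $E_\t$ I get
\begin{equation*}
\prod_{\{i,j\}\in E_\t}|e^{-V_{ij}}-1|\;=\;\exp\!\Bigl(-\!\!\sum_{\{i,j\}\in E_\t}V_{ij}^{-}\Bigr)\prod_{\{i,j\}\in E_\t}\bigl(1-e^{-|V_{ij}|}\bigr),
\end{equation*}
and since $-V_{ij}^{-}=0$ when $\{i,j\}\in E_\t^+$ and $-V_{ij}^{-}=-V_{ij}$ when $\{i,j\}\in E_\t\setminus E_\t^+$, the exponent rewrites as $-\sum_{\{i,j\}\in E_\t\setminus E_\t^+}V_{ij}$.

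Finally I would merge the two exponentials. Since $E_\t^+\subset E_\t\subset E_{\mathfrak{M}(\t)}$, the sets $E_{\mathfrak{M}(\t)}\setminus E_\t$ and $E_\t\setminus E_\t^+$ are disjoint and their union is exactly $E_{\mathfrak{M}(\t)}\setminus E_\t^+$, so
\begin{equation*}
e^{-\sum_{E_{\mathfrak{M}(\t)}\setminus E_\t}V_{ij}}\;e^{-\sum_{E_\t\setminus E_\t^+}V_{ij}}\;=\;e^{-\sum_{E_{\mathfrak{M}(\t)}\setminus E_\t^+}V_{ij}},
\end{equation*}
which inserted back into the bound yields \equ(tre). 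There is no serious obstacle here — the only subtle point is the bookkeeping that isolates the $e^{-V_{ij}^{-}}$ factor from each tree edge and combines it with the Penrose exponential so that the union $E_{\mathfrak{M}(\t)}\setminus E_\t^+$ emerges cleanly; this is what will make the bound exploit stability efficiently in the subsequent application to $V_{ij}=\beta V(x_i-x_j)$ with a stable pair potential.
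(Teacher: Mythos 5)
Your proof is correct and follows essentially the same route as the paper: triangle inequality applied to the Penrose identity, then the pointwise factorization $|e^{-V_{ij}}-1| = e^{-V_{ij}^{-}}(1-e^{-|V_{ij}|})$ on each tree edge, and finally the merge of the two exponentials over the disjoint union $E_{\mathfrak{M}(\t)}\setminus E_\t^+ = (E_{\mathfrak{M}(\t)}\setminus E_\t)\uplus(E_\t\setminus E_\t^+)$. The only difference is that you spell out the case analysis behind the elementary identity, which the paper states without justification.
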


\\{\bf Proof}. By Theorem \ref{Penid} we have that
$$
\left|\sum_{g\in {\mathcal G}_n} \prod_{\{i,j\}\in E_g}\left(e^{- V_{ij}}-1\right)\right|~\le ~
\sum_{\t\in {\mathcal T}_n} e^{-  \sum_{\{i,j\}\in E_{\Mi(\t)}\setminus E_\t}V_{ij}}\prod_{\{i,j\}\in E_\t}\left|e^{- V_{ij}}-1\right|
 \Eq(uno)
$$
Observe now that, for any $\t\in \mathcal T_n$
$$
 \prod_{\{i,j\} \in E_\tau} |e^{-V_{ij}}-1|=\Big[ \prod_{\{i,j\} \in E_\tau} (1-e^{- |V_{ij}|}) \Big] \
 e^{- \sum_{\{i,j\}\in E_\t\setminus E^+_\t}V_{ij}}
$$
 so that
$$
 e^{-  \sum\limits_{\{i,j\}\in E_{\Mi(\t)}\setminus E_\t}V_{ij}}
\prod_{\{i,j\}\in E_\t}\left|e^{- V_{ij}}-1\right|
~=~
e^{- \sum\limits_{\{i,j\} \in E_{\mathfrak{M}(\tau)} \setminus E^+_\tau}V_{ij}}
\prod_{\{i,j\} \in E_\tau} (1-e^{-| V_{ij}|})
 \Eq(due)
$$
Inserting now \equ(due) into \equ(uno) we get \equ(tre). $\Box$



\vv


\begin{lem}\label{stabgen}  Let $n\ge 2$ and let $\{V_{ij}\}_{\{i,j\}\in E_n}$ be $n(n-1)/2$ real numbers
(one for each unordered pair $\{i,j\}\subset[n]$) taking values in $\mathbb{R}\cup\{+\infty\}$.
Suppose that there exist non negative numbers $\{B_i\}_{i\in [n]}$ such that
for each
$S\subset [n]$ it holds
$$
\sum_{\{i,j\}\subset S}V_{ij}\ge \sum_{i\in S}B_i\Eq(stabgen)
$$
Let  ${\bm M}: T_n\to G_n$ be the  partition scheme in $G_n$ given in Definition \ref{d4}.
Then, for any $\t\in T_n$ the following inequality  holds
$$
 \sum_{\{i,j\} \in \bm E_{M(\tau)} \setminus E_\tau^+} V_{ij} \geq -\sum_{i=1}^nB_i
\Eq(Peb0)
$$
\end{lem}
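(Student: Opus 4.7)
The plan is to exploit the order-structure of the partition scheme $\bm M$: any edge $\{i,j\} \in E_{\bm M(\tau)}$ satisfies $V_{ij} \ge V_{kl}$ for every edge $\{k,l\}$ on the $\tau$-path from $i$ to $j$. In particular, a ``new'' edge $\{i,j\} \in E_{\bm M(\tau)} \setminus E_\tau$ with $V_{ij} < 0$ forces all edges of the corresponding $\tau$-path to be strictly negative. I will combine this with a decomposition of $[n]$ dictated by the sign of the tree edges, so that inside each piece the stability hypothesis can be applied directly.

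I would begin by partitioning $[n]$ into the connected components $N_1, \dots, N_p$ of the \emph{negative subforest} of $\tau$, i.e.\ the spanning subgraph of $\tau$ whose edges are the elements of $E_\tau \setminus E_\tau^+$. Because $\tau$ is a tree and the $\tau$-path between any two vertices is unique, no positive tree edge can have both endpoints in the same $N_\alpha$ (it would close a cycle in $\tau$); hence every edge of $\tau$ inside $N_\alpha$ is automatically a negative tree edge. The key lemma to prove is: for every $e \in E_{\bm M(\tau)} \setminus E_\tau^+$ with $V_e < 0$, both endpoints of $e$ lie in the same component $N_\alpha$. Indeed, either $e$ is itself a negative tree edge (immediate), or $e \notin E_\tau$, in which case the defining property of $\bm M$ together with $V_e<0$ yields $V_{kl} \le V_e < 0$ for every edge $\{k,l\}$ on the $\tau$-path linking the endpoints, so the path lives inside the negative subforest.

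Using the contrapositive, any edge of $E_{\bm M(\tau)} \setminus E_\tau^+$ that crosses between two distinct components must satisfy $V_e \ge 0$; dropping those crossing edges only lowers the sum, so
\[
\sum_{\{i,j\} \in E_{\bm M(\tau)} \setminus E_\tau^+} V_{ij} \;\ge\; \sum_{\alpha=1}^{p} \sum_{\{i,j\} \in E_{\bm M(\tau)},\,\{i,j\} \subset N_\alpha} V_{ij}.
\]
Inside a single $N_\alpha$, the uniqueness of paths in $\tau$ implies that the $\tau$-path between two vertices of $N_\alpha$ never leaves $N_\alpha$ and coincides with the path in the induced sub-tree $\tau|_{N_\alpha}$ (all of whose edges are negative). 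It follows that the inner sum is $\sum_{e \in E_{\bm M(\tau|_{N_\alpha})}} V_e$. Any pair $\{i,j\}\subset N_\alpha$ that is \emph{not} in $\bm M(\tau|_{N_\alpha})$ must be dominated by some tree edge of $\tau|_{N_\alpha}$ on its path, which is strictly negative, forcing $V_{ij}<0$ as well; therefore throwing these missing pairs back in can only decrease the sum, and $\sum_{e \in E_{\bm M(\tau|_{N_\alpha})}} V_e \ge \sum_{\{i,j\}\subset N_\alpha} V_{ij}$. Applying \equ(stabgen) to $S = N_\alpha$ and summing on $\alpha$, using that $\{N_\alpha\}$ partitions $[n]$, gives exactly $-\sum_{i=1}^n B_i$ on the right-hand side.

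The delicate point I expect to need the most care is the identification of $\{e\in E_{\bm M(\tau)}:e\subset N_\alpha\}$ with $E_{\bm M(\tau|_{N_\alpha})}$, since it requires simultaneously the tree structure of $\tau$ (so that $\tau$-paths between vertices of $N_\alpha$ stay inside $N_\alpha$) and the definition of $N_\alpha$ as a maximal connected set of negative tree edges (so that restricting the path to this subset does not change the domination condition defining $\bm M$). Everything else is a matter of combining the sign-monotonicity provided by $\succ$ with a single, componentwise application of the stability hypothesis.
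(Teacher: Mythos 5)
Your proof is correct and follows essentially the same route as the paper's: you decompose $[n]$ into the connected components of the negative subforest $E_\tau\setminus E_\tau^+$ (the paper's $\mathcal V_{\tau_s}$, your $N_\alpha$), observe that cross-component edges of $E_{\bm M(\tau)}\setminus E_\tau^+$ are non-negative because their $\tau$-path meets a positive tree edge, and that intra-component pairs omitted from $\bm M(\tau)$ are non-positive because they are dominated by a negative tree edge on the path, which together give $\sum_{E_{\bm M(\tau)}\setminus E_\tau^+}V_{ij}\ge\sum_\alpha\sum_{\{i,j\}\subset N_\alpha}V_{ij}$, after which stability applied componentwise finishes the argument. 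The only cosmetic difference is that you package the intra-component step via the auxiliary object $\bm M(\tau|_{N_\alpha})$, while the paper argues directly on $\bm M(\tau)$; the underlying observations and the final chain of inequalities are identical.
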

{\bf Proof}.
 The set of edges $E_\t\setminus E_\tau^+$ forms  the forest $\{\tau_1,...,\tau_k\}$.
 Let us denote ${\mathcal V}_{\tau_s}$ the vertex set of the  tree $\tau_s$
 of the forest.
Assume $i \in {\mathcal V}_{\tau_a}$, $j \in {\mathcal V}_{\tau_b}$.
 If $a \neq b$, the path from $i$ to $j$ through $\tau$ involves an edge $e$ in $E_\tau^+$. Thus, if in addition $\{i.j\} \in
E_{\bm M(\tau)}$, we have that  $\{i,j\}\succ e$ and therefore
 $V_{ij} \geq V_e \geq 0$. If $a=b$, the path from $i$ to $j$ through $\tau$ is contained in $\tau_a$. Thus, if in addition
 $\{i,j\} \notin E_{\bm M(\tau)}$, there must be at least  an edge $e$ in that path such that
 $\{i,j\}\prec e$  and  therefore  $V_{ij} \leq V_e \leq 0$.  This allows to bound:

 $$\sum_{\{i,j\} \in E_{\bm M(\tau)} \setminus E_\tau^+} V_{ij} \geq \sum_{s=1}^k \sum_{\{i,j\} \subset {\mathcal V}_{\tau_s}} V_{ij} \geq
 -\sum_{s=1}^k \sum_{i\in \mathcal V_{\tau_s}}B_i \ge   -\sum_{i=1}^nB_i~~~~~~~~~~~~~~~~~~\Box$$

\\From Lemmas \ref{cor1} and \ref{stabgen} we immediately get the following theorem.
\begin{teo}\label{teoPYgen}
 Let $n\ge 2$ and let $\{V_{ij}\}_{\{i,j\}\in E_n}$ be $n(n-1)/2$ real numbers
(one for each unordered pair $\{i,j\}\subset[n]$) taking values in $\mathbb{R}\cup\{+\infty\}$.
Suppose that there exist non negative numbers $\{B_i\}_{i\in [n]}$ such that
for each
$S\subset [n]$ it holds
$$
\sum_{\{i,j\}\subset S}V_{ij}\ge \sum_{i\in S}B_i\Eq(stabgenS)
$$
 Then the following inequality holds.
$$
\left|\sum_{g \in { G}_n} \prod_{\{i,j\} \in E_g} (e^{-V_{ij}}-1)\right| \le
e^{\sum_{i=1}^nB_i}\sum_{\tau \in { T}_n} \prod_{\{i,j\} \in E_\tau} (1-e^{-\b| V_{ij}|})\Eq(bteo1)
$$
\end{teo}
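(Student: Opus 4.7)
The plan is to derive Theorem \ref{teoPYgen} as a direct and essentially immediate consequence of the two lemmas just proved, namely Lemma \ref{cor1} (which turns the absolute value of the sum over connected graphs into a sum over trees weighted by an exponential factor coming from the partition scheme) and Lemma \ref{stabgen} (which, for the specific partition scheme $\bm M$ of Definition \ref{d4}, controls precisely that exponential factor by the stability-like hypothesis \equ(stabgenS)).

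First, I would specialize Lemma \ref{cor1} to the partition scheme $\mathfrak{M}=\bm M$ introduced in Definition \ref{d4}, which by Lemma \ref{l1} is indeed a partition scheme in $G_n$. This yields the inequality
$$
\left|\sum_{g\in G_{n}}\prod_{\{i,j\}\in E_g}\left(e^{-V_{ij}}-1\right)\right|
\le
\sum_{\tau\in T_n}
e^{-\sum_{\{i,j\}\in E_{\bm M(\tau)}\setminus E^+_\tau}V_{ij}}
\prod_{\{i,j\}\in E_\tau}\left(1-e^{-|V_{ij}|}\right),
$$
so all that is left is to bound the tree-dependent exponential factor uniformly in $\tau$.

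This is exactly the content of Lemma \ref{stabgen}: under the hypothesis \equ(stabgenS), for every $\tau\in T_n$ one has
$$
\sum_{\{i,j\}\in E_{\bm M(\tau)}\setminus E_\tau^+}V_{ij}\;\ge\;-\sum_{i=1}^n B_i,
$$
and hence
$$
e^{-\sum_{\{i,j\}\in E_{\bm M(\tau)}\setminus E_\tau^+}V_{ij}}\;\le\;e^{\sum_{i=1}^n B_i}.
$$
Pulling this factor out of the sum over trees and combining with the inequality from Lemma \ref{cor1} gives \equ(bteo1) at once.

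Since both ingredients have already been proved, I do not anticipate any real obstacle; the work was done in setting up the new partition scheme $\bm M$ (Definition \ref{d4}) in such a way that Lemma \ref{stabgen} holds with the clean bound $-\sum_i B_i$, so that the stability-type assumption on the $V_{ij}$ translates directly into a uniform bound on the weights of the Penrose partition. The theorem is thus just the packaging of Lemmas \ref{cor1} and \ref{stabgen} into a single statement suitable for application to the Mayer coefficients.
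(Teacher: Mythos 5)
Your proposal is correct and matches the paper's own proof essentially verbatim: the paper also applies Lemma \ref{cor1} with $\mathfrak{M}=\bm M$ (the minimal-spanning-tree partition scheme of Definition \ref{d4}), then invokes Lemma \ref{stabgen} to bound the tree-dependent exponential factor uniformly by $e^{\sum_i B_i}$. The only remark worth making is that the spurious $\b$ appearing in the exponent on the right-hand side of \equ(bteo1) in the statement is a typographical slip in the paper; the correct factor is $1-e^{-|V_{ij}|}$, exactly as you write it in your derivation.
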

\\{\bf Proof}. By Lemma \ref{cor1} we have that
$$
\big| \sum_{g \in { G}_n} \prod_{\{i,j\} \in E_g} (e^{-V_{ij}}-1) \Big|=\sum_{\tau \in {T}_n}  e^{-  \sum_{\{i,j\}\in E_{{\bm M}(\t)}\setminus E^+_\t}V_{ij}} \prod_{\{i,j\} \in E_\tau} (1- e^{- |V_{ij}|})
 $$
where  $\bm M$ is the map given  in Definition \ref{d4}.
Using then   Lemma \ref{stabgen}, the inequality \equ(bteo1) follows.  $\Box$

\begin{teo}\label{teoPY2}
 Let $V$ be a stable pair  potential with  stability constant $B$. Then
for any  $n\in \N$ such that $n\ge 2$ and any $(x_1,\dots,x_n)\in \Re^{dn}$ the following inequality holds.
$$
\left|\sum_{g \in { G}_n} \prod_{\{i,j\} \in E_g} (e^{-\b V(x_i-x_j)}-1)\right| \le  e^{\b Bn}\sum_{\tau \in { T}_n} \prod_{\{i,j\} \in E_\tau} (1-e^{-\b| V(x_i-x_j|)})\Eq(bteo1b)
$$
\end{teo}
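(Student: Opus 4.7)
The plan is to deduce Theorem \ref{teoPY2} as an immediate corollary of the more general Theorem \ref{teoPYgen}. Specifically, I would apply Theorem \ref{teoPYgen} to the family of numbers $V_{ij} \doteq \b V(x_i - x_j)$ indexed by $\{i,j\} \in E_n$, and with the uniform choice $B_i \doteq \b B$ for every $i \in [n]$. With these choices, the right-hand side of \equ(bteo1) in Theorem \ref{teoPYgen} becomes exactly $e^{\b B n}\sum_{\tau \in T_n}\prod_{\{i,j\}\in E_\tau}(1 - e^{-\b |V(x_i-x_j)|})$, which is the right-hand side of \equ(bteo1b).

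The only thing left to check is that the hypothesis \equ(stabgenS) of Theorem \ref{teoPYgen} is satisfied, namely that for every $S \subset [n]$ one has
\begin{equation*}
\sum_{\{i,j\}\subset S} \b V(x_i - x_j) \ge - \sum_{i\in S} \b B = -\b B|S|.
\end{equation*}
But this is precisely the stability condition \equ(2.6) applied to the sub-configuration $(x_i)_{i\in S}$ of $|S|$ particles, multiplied through by $\b > 0$. So the verification is one line.

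The key conceptual point (and the reason why this deduction is so short) is that stability is a property automatically inherited by every sub-configuration: the definition \equ(2.6) quantifies universally over $n$ and over all $n$-tuples in $\mathbb{R}^{dn}$, so applying it to $(x_i)_{i \in S}$ is legitimate. There is no obstacle here worth the name; the real work was done in the machinery preceding Theorem \ref{teoPYgen} (the construction of the minimal-spanning-tree partition scheme $\bm M$ of Definition \ref{d4}, the tree-graph inequality of Lemma \ref{cor1}, and the combinatorial lemma \ref{stabgen} that lets the factor $e^{-\sum_{\{i,j\}\in E_{\bm M(\tau)}\setminus E^+_\tau} V_{ij}}$ be controlled by $e^{\sum_i B_i}$). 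Once those are in place, the passage from the abstract hypothesis \equ(stabgenS) to the stable-potential statement \equ(bteo1b) is simply a matter of specialising $V_{ij}$ and the lower-bound constants $B_i$ in the way described above.
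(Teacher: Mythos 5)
Your proposal is correct and is essentially identical to the paper's own (one-line) proof of Theorem \ref{teoPY2}: specialize Theorem \ref{teoPYgen} to $V_{ij}=\b V(x_i-x_j)$, verify the hypothesis \equ(stabgenS) via the stability condition \equ(2.6) restricted to subconfigurations, and read off \equ(bteo1b). In fact you are slightly more careful than the paper's text: the hypothesis requires $B_i=\b B$ (not $B_i=B$, as the paper slips into writing) to absorb the $\b$ into the stability bound and yield the factor $e^{\b Bn}$ on the right-hand side.
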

\begin{proof} Just observe that, once $(x_1,...,x_n) \in {\mathbb R}^{dn}$ are fixed,
the numbers  $V_{ij}=\b V(x_i - x_j)$ satisfy \equ(stabgenS) with $B_i=B$ for all $i\in [n]$. So \equ(bteo1b) follows immedialely
by Theorem \ref{teoPYgen}.  $\Box$
\end{proof}

\vv
\\From Theorem \ref{teoPY2} we have immediately the following Theorem.

\begin{teo}\label{pryu}
 Let $V$ be a stable and tempered pair  potential with  stability constant $B$. Then
the  $n$-order Mayer  coefficient $C_n(\b,\L)$ defined in \equ(cbeta2)
is bounded by
$$
|C_n(\b,\L)|\le e^{\b B n}n^{n-2} {[\tilde C(\b)]^{n-1}\over n!}\Eq(Proyuh)
$$
where
$$
\tilde C(\b)=\int_{\mathbb{R}^{d}} dx ~ |e^{-\b |V(x)|}-1|\Eq(cbetapy)
$$
Therefore the Mayer series \equ(pressm) converges absolutely, uniformly in $\L$,
 for any complex  $\l$ inside the disk
$$
|\l| <{1\over e^{\b B+1} \tilde C(\b)}\Eq(RadPY)
$$
I.e. the convergence radius of the Mayer series \equ(pressm) admits the following lower bound
$$
R_{V}\ge {1\over e^{\b B+1} \tilde C(\b)}\Eq(convradpy)
$$
\end{teo}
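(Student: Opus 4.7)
The plan is to combine Theorem \ref{teoPY2} with the tree-integration bound in Proposition \ref{integr} (applied to $|V|$ instead of $V$) and then Cayley's formula, in order to obtain the stated estimate directly from the definition
$$
C_n(\b,\L)~=~{1\over n!\,|\L|}\int_{\L}dx_1\cdots\int_{\L}dx_n\,\Phi^T(x_1,\dots,x_n),
$$
where $\Phi^T$ is given by the connected graph sum in \equ(urse).

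First, for each fixed configuration $(x_1,\dots,x_n)\in\mathbb R^{dn}$, Theorem \ref{teoPY2} yields
$$
\bigl|\Phi^T(x_1,\dots,x_n)\bigr|\;\le\;e^{\b B n}\sum_{\tau\in T_n}\prod_{\{i,j\}\in E_\tau}\bigl(1-e^{-\b|V(x_i-x_j)|}\bigr),
$$
which replaces the original combinatorially enormous sum over $G_n$ (of size $\ge 2^{(n-1)(n-2)/2}$) by a sum over trees only, at the price of the stability prefactor $e^{\b B n}$ and of integrands involving $|V|$ rather than $V$. Substituting this bound into the integral representation of $C_n(\b,\L)$ gives
$$
|C_n(\b,\L)|\;\le\;\frac{e^{\b B n}}{n!\,|\L|}\sum_{\tau\in T_n}\int_\L dx_1\cdots\int_\L dx_n\prod_{\{i,j\}\in E_\tau}\bigl(1-e^{-\b|V(x_i-x_j)|}\bigr).
$$

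Next, for each fixed tree $\tau\in T_n$, I would apply verbatim the change of variables used in the proof of Proposition \ref{integr}, setting $y_1=x_1$ and $y_k=x_{i_k}-x_{j_k}$ along the $n-1$ edges of $\tau$ (the edges of a tree form an independent set of $n-1$ difference vectors, so the Jacobian is $1$); the $y_1$ integration gives $|\L|$ and each of the remaining $n-1$ integrations is bounded by extending the domain to $\mathbb R^d$, producing a factor of $\tilde C(\b)=\int_{\mathbb R^d}|e^{-\b|V(x)|}-1|\,dx$. This is finite because of temperedness: where $|V|$ is small, $|e^{-\b|V|}-1|$ is comparable to $\b|V|$, which is integrable at infinity, while where $|V|$ is large the integrand is bounded by $1$ on a compact set (the only subtle point, analogous to the discussion after Proposition \ref{integr}). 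We conclude
$$
\int_{\L^n}\prod_{\{i,j\}\in E_\tau}\bigl(1-e^{-\b|V(x_i-x_j)|}\bigr)\,dx_1\cdots dx_n\;\le\;|\L|\,[\tilde C(\b)]^{n-1}.
$$

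Finally, summing over trees and using Cayley's formula $|T_n|=n^{n-2}$ (Lemma of section \ref{sec3}) gives exactly
$$
|C_n(\b,\L)|\;\le\;\frac{e^{\b B n}}{n!}\,n^{n-2}\,[\tilde C(\b)]^{n-1},
$$
which is \equ(Proyuh). The convergence radius estimate \equ(convradpy) then follows by the ratio/root test together with Stirling's approximation $n!\ge n^n e^{-n}$: the general term of $\sum_n |C_n(\b,\L)|\,|\l|^n$ is bounded by $n^{-2}\tilde C(\b)^{-1}[\,e^{\b B+1}\tilde C(\b)\,|\l|\,]^n$, whose summability is equivalent to $|\l|<1/(e^{\b B+1}\tilde C(\b))$. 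There is no real obstacle in this argument since all hard combinatorial/analytic work has already been done in Theorem \ref{teoPY2} (the new partition scheme identity) and in the tree-integration lemma; the present proof is essentially an assembly of these ingredients with Cayley's formula and Stirling.
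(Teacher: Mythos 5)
Your proof is correct and follows essentially the same route as the paper's: apply Theorem \ref{teoPY2} to replace the connected-graph sum by a tree sum with the $e^{\b Bn}$ prefactor, integrate each tree via the change of variables of Proposition \ref{integr} to extract $|\L|\,[\tilde C(\b)]^{n-1}$, and sum over $T_n$ with Cayley's formula; the radius then follows from $n!\ge n^n e^{-n}$. The only cosmetic difference is that you argue the finiteness of $\tilde C(\b)$ from scratch, whereas it follows immediately from $\tilde C(\b)\le C(\b)$ and the regularity hypothesis \equ(2.7).
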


\\{\bf Proof}. From Theorem \ref{teoPY2} (and also recalling Proposition \ref{integr}) we have that
$$
\begin{aligned}
|C_n(\b,\L)| & \le{1\over n!}{1\over |\L|}\int_\L dx_1\dots \int_\L dx_n |\Phi^T(x_1,\dots, x_n)|\\
&\le {1\over n!}{1\over |\L|}\int_\L dx_1\dots \int_\L dx_n  e^{\b Bn}\sum_{\tau \in { T}_n} \prod_{\{i,j\} \in E_\tau} (1-e^{-\b| V(x_i-x_j|)})\\
&\le {e^{\b Bn}\over n!}\sum_{\tau \in { T}_n}{1\over |\L|}\int_\L dx_1\dots \int_\L dx_n
\prod_{\{i,j\} \in E_\tau} (1-e^{-\b| V(x_i-x_j|)})\\
&\le ~{e^{\b Bn}}{1\over n!}\left[\int_{\mathbb R^d} \Big[1-e^{-\b| V(x)|}\Big]\,dx\right]^{n-1} \sum_{\tau \in { T}_n} 1\\
& =
~{e^{\b Bn}}{n^{n-2}\over n!}\left[\tilde C(\b)\right]^{n-1}
\end{aligned}
$$
This concludes the proof of Theorem \ref{pryu}. $\Box$.

\vv
\\The improvement on the lower bound of the convergence radius of the Mayer series for stable and tempered potentials
given by Theorem \ref{pryu} with respect to Theorem \ref{peru} is
twofold. First, the factor $e^{\b B+1}$ in \equ(convradpy)  replaces
the factor $e^{2\b B+1}$ in \equ(convrad). Second, the factor $\tilde C(\b)$ in \equ(convradpy)  replaces the factor $C(\b)$ in
\equ(convrad) and clearly, recalling their definitions \equ(cbetapy) and \equ(cbetapr)  respectively, we have $\tilde C(\b)\le C(\b)$ where
the equality only holds  if $V$ is non-negative  (purely repulsive). Moreover observe that while $\tilde C(\b)$
grows at most linearly  in $\b$, the factor $C(\b)$ grows exponentially with $\b$.
So, if we denote by $R_{PR}= [e^{2\b B+1} C(\b)]^{-1}$ the
Penrose-Ruelle lower bound for the convergence radius given in Theorem \ref{peru}
and by $R^*= [e^{\b B+1} \tilde C(\b)]^{-1}$ the lower bound given by Theorem \ref{pryu} for the same convergence radius  we get that
${R^*/R_{PR}}= e^{\b B}[{C(\b)/\tilde C(\b)}]$. This ratio, always greater than one,  is the product of two factors,
 $e^{\b B}$ and $[{\tilde C(\b)/ C(\b)}]$,  both
growing exponentially fast
with $\b$. To give an idea,   for a gas of particles interacting via the classical Lennard-Jones potential
$$
V(|x|)= {1\over |x|^{12}}-{2\over |x|^6}
$$
at inverse temperature $\b=1$, using the value $B_{\rm LJ}=8.61$
 for its stability constant  (see \cite{JI}), the  lower bound \equ(convradpy)
is at least $8.5\times 10^4$ larger than the Penrose-Ruelle lower bound, while for $\b=10$
is at least $7.26\times 10^{43}$ larger than the Penrose-Ruelle lower bound.

\section{Convergence of the Virial expansion}
We recall that via the formulas
$$
\b P_\L(\b,\l) ~=~{1\over |\L|}\ln \Xi_\L(\b,\l)= \sum_{n=1}^{\i}C_n(\b,\L)\l^n\Eq(pressu)
$$
$$
\r_\L(\b,\l)~=~{1\over |\L|}{\l}{\partial\over \dpr
\l}{\ln \Xi_\L(\b,\l)} =\sum_{n=1}^{\i}nC_n(\b,\L)\l^n\Eq(densi)
$$it is possible to express the pressure of the system in the grand canonical ensemble in power of the
density $\r(\b,\l)$. Indeed, by the discussion of the previous section  the r.h.s. of \equ(densi)  is a function of $\l$ which is analytic for $|\l|<{1/[e^{\b B+1}\tilde C(\b)]}$
and its derivative respect to $\l$ calculated at $\l=0$ is equal to 1.
So one can eliminate $\l$ in \equ(pressu) and \equ(densi) to obtain the so-called {\it Virial expansion of the Pressure}, i.e. the pressure as a (formal) power series of the density $\r=\r_\L(\b,\l)$, in the Grand Canonical Ensemble
$$
\b P_\L(\b,\l)= \r_\L(\b,\l) - \sum_{k\ge 1} {k\over k+1} \b_k(\b,\L)[\r_\L(\b,\l)]^{k+1}\Eq(virial)
$$

\\The coefficients $ \b_k(\b,\L)$ are combinations of the Mayer coefficients $C_n(\b,\L)$ with $n\le k$ by algebraic expressions. Namely one can prove that  (see e.g.   formula (29) p. 319 of \cite{PB})
$$
\b_k(\b,\L)=
\sum_{n=1}^k{(-1)^{n-1}} {(k-1+n)!\over k!} \sum_{{ \{m_2,\dots,m_{k+1}\}\atop m_i\in {\mathbb{N}\cup\{0\}},~\sum_{i=2}^{k+1} m_i=n}\atop
\sum_{i=2}^{k+1}(i-1)m_i=k}
\prod_{i=2}^{k+1} {[{C_{i}(\b,\L){i}}]^{m_i}\over m_i!}\Eq(virial1)
$$
It is also possible to prove (see e.g. sec. 6.3 of \cite{Mc}) that the limit
$$
\b_k(\b)= \lim_{\L\to\infty} \b_k(\b,\L)
$$
admits a representation in term of two-connected graphs. Namely,
$$
\b_k(\b)={1\over n!}
\int_{\mathbb{R}^d}d\xx_2\dots\int_{\mathbb{R}^d} d\xx_{n} \sum_{g\in G^*_{n}}
\prod_{\{i,j\}\in E_g}\left[  e^{ -\b V(\xx_i -\xx_j)} -1\right]\Eq(virial2)
$$where here $G^*_n$ is the set of the two-connected graphs with vertex set $[n]$. We recall that a graph $g=(V_g,E_g)$ is two-connected if, for every vertex $x\in V_g$ the graph $g\setminus x$ is connected.

\\To study the convergence  of the series \equ(virial) in principle one could try to  proceed as we did in the previous section
when  we dealed with the convergence of the Mayer series. Namely, we could  try to get a bound on the virial coefficients directly from their finite volume expressions
\equ(virial1) or their infinite volume expression \equ(virial2). Unfortunately no method has been found to do this
and the lower bound  for the convergence radius of the
virial series has been obtained in 1964 by Lebowitz and Penrose \cite{LPe} in a more indirect way. We illustrate here below how to get these bounds. Before doing this we need to introduce some further notations and definitions
which will be useful in order to deduce the lower bound for convergence radius of the Virial series.

\\Given a pair potential $V$ with stability constant $B$ we let
 $$
B_n=\sup_{(x_1,\dots,x_n)\in \mathbb{R}^{dn}}-{1\over n}\sum_{1\le i<j\le n}V(|x_i-x_j|)
$$
so that
$$
B=\sup_{n\ge 2} B_n
$$
We also define
$$
{\bar B}_n = {n\over n-1} B_n
$$
and
$$
\bar B= \sup_{n\ge 2} \bar B_n
$$
Then we have clearly, for all $n\ge 2$ and all $x_1,\dots,x_n)\in \mathbb{R}^{dn}$
$$
\sum_{1\le  i<j\le n} V(x_i-x_j)\ge -(n-1) \bar B \Eq(stabas)
$$
and
$$
\bar B\ge B
$$
We call $\bar B$ the {\it Basuev stability constant} of the potential $V$, after Basuev who was the first to introduce it in \cite{Ba1}.
On the other hand, as noted by Basuev,
$B$ and $\bar B$, if not coinciding, should be  very close for the large majority of
stable pair potentials.
Indeed, let us first state and prove the following propositions.
\vv

\begin{pro}
Let  $V(x)$ be a stable and tempered pair potential  in $d$ dimensions with stability constant $B$ and Basuev stability constant $\bar B$.

\begin{itemize}
\item[(a)] If $$
\bar B= \limsup \bar B_n \Eq(supbarb)
$$
then $B=\bar B$.
\item[(b)] If there exists $m\in \mathbb{N}$ such that $\bar B=\bar B_m$, then
$$
\bar B\le {d+1\over d}\Eq(barbd)
$$
\item[(c)] If $V(x)=V(|x|)$ and $V(|x|)$
reaches a  negative minimum at   $|x|=r_0$
being negative for   $|x|>r_0$, then
\end{itemize}
$$
\bar B\le \begin{cases}{3\over 2} B & {\rm if}~ d=1\\
{7\over 6} B &{\rm  if}~ d=\\
{2d(d-1)+1\over 2d(d-1)} B & {\rm if}~ \ge 3
\end{cases}\Eq(estbar)
$$
\end{pro}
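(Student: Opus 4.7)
I will address the three parts in order of increasing difficulty.

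For (a), observe that $\bar B_n = \tfrac{n}{n-1} B_n \ge B_n$ for every $n\ge 2$, so $\bar B \ge B$ is automatic. Under the hypothesis $\bar B = \limsup_n \bar B_n$ I would extract a subsequence $n_k\to\infty$ with $\bar B_{n_k}\to\bar B$, and then combine $B_{n_k} = \tfrac{n_k-1}{n_k}\bar B_{n_k}\to \bar B$ with $B_{n_k}\le B$ to conclude $\bar B\le B$. This is purely bookkeeping about $\limsup$ and presents no real obstacle.

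For (b), interpreting the claim as $\bar B \le \tfrac{d+1}{d} B$, the assumption that the supremum is attained at some finite $m$ reduces matters to a lower bound on $m$. Since $\bar B = \bar B_m = \tfrac{m}{m-1} B_m \le \tfrac{m}{m-1} B$ and $x\mapsto \tfrac{x}{x-1}$ is decreasing, it suffices to prove $m\ge d+1$. I would argue by contradiction: if $m\le d$, then the attaining configuration $(x_1,\ldots,x_m)$ spans an affine subspace of dimension at most $d-1$, so there is at least one orthogonal direction available; placing an $(m{+}1)$-th particle at an appropriate point along it and comparing $\bar B_{m+1}$ to $\bar B_m$ produces a strict improvement, contradicting $\bar B_m = \sup_n \bar B_n$. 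The technical point is to make the perturbation work at a possibly degenerate optimizer or a limit of optimizers.

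For (c), the idea is that because $V(r)$ is non-positive for $r\ge r_0$ and attains its minimum $-|V(r_0)|$ exactly at $r_0$, the attractive energy of any configuration is essentially encoded by the ``short-bond'' graph of pairs at distance $\le r_0$. In dimension $d$, the number of neighbors a given particle can have at distance $\le r_0$ (with the other particles themselves at mutual distance $\ge r_0$) is bounded by the kissing number: $\kappa_1 = 2$, $\kappa_2 = 6$, and $\kappa_d\ge 2d(d-1)$ for $d\ge 3$ (tight for $d=3,4$). Combining this degree bound with the decomposition $V=V_a+K_a$ used in Theorem \ref{basu1} yields $B \le \tfrac{\kappa_d}{2}|V(r_0)|$, while for any finite-$n$ optimizer at least one boundary vertex must have strictly fewer than $\kappa_d$ short bonds; accounting for this deficit produces the extra ``$+1$'' in the numerator and gives $\bar B/B \le (\kappa_d+1)/\kappa_d$, matching all three stated cases.

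The main obstacle, in both (b) and (c), is making the geometric extraction rigorous. In (b) one must handle degenerate and limiting optimizers; in (c) one must separate the short-range contribution (controlled by the kissing-number count) from the long-range contribution (non-positive but present), and show that the latter does not spoil the bound. For this separation I would follow the spirit of the proof of Theorem \ref{pro2}, cutting the energy sum at $r_0$ and estimating the tail via a packing argument over disjoint small cubes of side $\lesssim r_0$, exploiting the monotone decay of $V$ beyond its minimum.
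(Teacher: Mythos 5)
Your part (a) is correct; it is a slightly cleaner, direct version of the paper's contradiction argument (extract $n_k\to\infty$ with $\bar B_{n_k}\to\bar B$, note $B_{n_k}=\tfrac{n_k-1}{n_k}\bar B_{n_k}\to\bar B$ and $B_{n_k}\le B$, so $\bar B\le B$; the reverse inequality is automatic).

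Parts (b) and (c) take a genuinely different route from the paper and both have real gaps. For (b), the paper does not use any perturbation argument. Setting $C=-\inf V\ge 0$, it (i) places $d+1$ particles at the vertices of a regular $d$-simplex of side length near the minimizer of $V$ to get $B\ge\tfrac{d}{2}C$; (ii) uses the trivial bound $\bar B_m\le\tfrac{m}{2}C$; and (iii) in the only non-trivial case $\bar B>B$, the strict chain $\tfrac{d}{2}C\le B<\bar B_m\le\tfrac{m}{2}C$ yields $d<m$, hence $m\ge d+1$. Your proposed perturbation is not small: if $(x_1,\dots,x_m)$ realizes $\bar B_m$, a direct computation shows that $\bar B_{m+1}>\bar B_m$ requires $\sum_{j\le m}V(x_j-x_{m+1})<-\bar B_m$, i.e. the added particle must interact with the configuration about as strongly as a typical existing particle. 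Sending it far off in an orthogonal direction makes its interaction vanish (by temperedness), not become strongly negative, so the naive version of the move fails. You also face the issue that the supremum defining $\bar B_m$ need not be attained, which you flag but do not resolve.

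For (c) the direction of your estimate is wrong. You need $m\ge 2d(d-1)+1$ for a finite optimizer, and the paper obtains this from a \emph{lower} bound $B\ge d(d-1)C$, gotten by placing $n$ particles at the sites of a $d$-dimensional face-centered cubic lattice with step $r_0$ (where each site has $2d(d-1)$ nearest neighbours, each such pair contributing $-C$, and all farther pairs contributing non-positively by hypothesis). Combining this with the trivial $\bar B_m\le\tfrac{m}{2}C$ forces $m$ to be large. Your kissing-number argument instead produces an \emph{upper} bound on $B$, which does not help: the point is precisely that no small-$m$ configuration can beat the bulk energy density, so $B$ must be shown to be large, not small, and the ``boundary deficit'' heuristic does not convert an upper bound into the needed lower bound. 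Note also that $2d(d-1)$ is the coordination number of the $D_d$ lattice, not the kissing number of $\mathbb{R}^d$ (the two coincide only for $d\le 5$), so the kissing number is not the combinatorial quantity behind the stated constants.
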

\begin{proof}
Let us  prove {\it (a)}.
Suppose by contradiction that $\bar B- B=\d>0$. If this holds
then necessarily there exists a finite  $m \ge 2$ such that $B=B_m$ otherwise if $B=\limsup B_n$ then $B=\bar B$.
Now, due to the hypothesis \equ(supbarb), for all $\e>0$ there exists $n_0$ such that
for infinitely many $n>n_0$
$$
\bar B_n> \bar B-\e~~~~~~~~~~\Longrightarrow ~~~~~~ B_n> {n-1\over n}(\bar B-\e)= {n-1\over n}(B+\d-\e)~
$$
choose now $\e={\d\over 2}$, then  for infinitely many $n$ we have that
$$
B_n>  {n-1\over n}(B+{\d\over 2}) > B~~~~~~~~~~~~\mbox{ as soon as $n> {2B\over \d}+1$}
$$
\end{proof}
in contradiction with the assumption that  $B=B_m$. Hence we must have $B=\limsup B_n= \limsup \bar B_n=\bar B$.
\vskip.2cm
\\We now prove {\it (b)}. By hypotheis $\bar B= \bar B_m$.
Since $V$ is  stable, it is bounded from below and since $V$ is tempered $\inf V$ cannot be positive.
Let $\inf_{x\in \mathbb{R}^d}V(x)=-C$ with $C\ge 0$. Then for any $\e>0$ there exists $r_0$ such that
$V(r_0)< -(C -\e)$.
Take the configuration $(x_1,x_2,\dots, x_{d+1})\in (\mathbb{R}^d)^{d+1}$ such that $x_1$, $x_2,\dots ,x_{d+1}$ are vertices of a $d$-dimensional
hypertetrahedron  with sides of length $r_0$. Recall that a $d$-dimensional
hypertetrahedron has $d+1$ vertices and $d(d+1)/2$ sides.
Then $U(x_1,x_2,\dots, x_{d+1})< -{d(d+1)\over 2}(C-\e)$, which implies that $B> {d\over 2}(C-\e)$ and by the arbitrariness of $\e$ we get $B\ge {d\over 2}C$.
On the other hand, we also have, for any $(x_1,\dots,x_m)\in (\mathbb{R}^d)^m$ that
$U(x_1,\dots,x_m)\ge -{m(m-1)\over 2} C$ which implies that $\bar B_m=\bar B\le {m\over 2}C$. Hence  we can write ${d\over 2}C\le B\le \bar B=\bar B_m\le {m\over 2}C$ which implies  $m\ge d+1$ and so   $\bar B= \bar B_m= {m\over m-1}B_m\le {m\over m-1}B\le {d+1\over d}B$ so that \equ(barbd) holds.

\vskip.2cm

\\We  prove {\it (c)}. \\
Let $\inf_{r\ge 0}V(r)=-C$ with $C> 0$.
First note that $d(d-1)C$ is always a lower bound for $B$ when $d\ge 3$. Just consider a configuration in which $n$ particle
 (with $n$ as large as we want) are
arranged in close-packed configuration  at the sites of a $d$-dimensional  face-centered cubic lattice with step $r_0$. The energy of such configuration
is (asymptotically
as $n\to \infty$)
less than or equal to $-d(d-1)Cn$ since
in a $d$-dimensional face-centered cubic lattice each site has $2d(d-1)$ neighbors (see e.g. \cite{TJ}) and so there are  (asymptotically)
$d(d-1)n$ pairs of neighbors in the configuration.
On the other hand,
for any $n$-particle configuration $(x_1,\dots,x_n)\in \mathbb{R}^{dn}$,
it holds that $U(x_1,\dots,x_n)\ge -n(n-1)C/2$, i.e. $\bar B_n\le nC/2$. Now,
 if $\bar B= \sup_n\bar B_n$ is attained at $n\to \infty$ then, as previously seen, we have $\bar B=B$. So let us suppose
 that $\bar B= \sup_n\bar B_n$ is attained at a certain finite $m$. Then
we must have that  $d(d-1)C\le B\le \bar B\le mC/2$ and so $m \ge 2d(d-1)$. Now if $m=2d(d-1)$ then $\bar B=B$. So when $\bar B>B$ then
$m > 2d(d-1)$ and so we have $\bar B=\frac{m}{m-1}B_m \leq {2d(d-1)+1\over 2d(d-1)} B$. The case $d=1$ and $d=2$ are treated analogously by just
observing that the close-packed arrangement in $d=1$ is simply the cubic lattice with $2$ neighbors for each site while for $d=2$ is the
triangular lattice with $6$ neighbors for each site. So $d(d-1)$ must be replaced by $1$ for $d=1$ and by $3$ for $d=2$ yielding $m>2$ and
$m>6$ for $d=1$ and $d=2$ respectively. $\Box$
\vskip.2cm

\\It is also possible to express the bounds on the Mayer coefficients in terms of the the Basuev Stability constant $\bar B$.
\begin{pro}
Let $V$ be a stable and regular pair potential with basuev stability constant $\bar B$. Then the Mayer coefficients $C_n(\b,\L)$ defined in \equ(ursm) admit the following bound
$$
|C_n(\b, {\Lambda})|\le e^{\beta \bar B (n-1)}n^{n-2} {[\tilde C(\beta)]^{n-1}\over n!}\Eq(Proyuh2)
$$
\end{pro}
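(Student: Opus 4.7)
The plan is to reduce this to an application of Theorem \ref{teoPYgen} by choosing an appropriate assignment of per-vertex stability constants $B_i$. First, fix an arbitrary configuration $(x_1,\dots,x_n)\in\mathbb{R}^{dn}$ and set $V_{ij}=\b V(x_i-x_j)$. The Basuev stability inequality \equ(stabas), applied not only to $[n]$ but to any nonempty $S\subset[n]$ with $|S|\ge 2$ (simply by relabelling), gives
$$
\sum_{\{i,j\}\subset S}V_{ij}\ \ge\ -(|S|-1)\,\b\bar B.
$$
I would then assign $B_1=0$ and $B_i=\b\bar B$ for $i=2,\dots,n$. For any $S\subset[n]$ with $|S|\ge 2$, one has $\sum_{i\in S}B_i\ge (|S|-1)\b\bar B$ (equality if $1\in S$, strict inequality otherwise), so the hypothesis \equ(stabgenS) is met, and crucially $\sum_{i=1}^n B_i=(n-1)\b\bar B$, which is where the improved exponent $\b\bar B(n-1)$ in place of $\b B n$ will come from.

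Next I would invoke Theorem \ref{teoPYgen} directly, which yields
$$
\Big|\sum_{g\in G_n}\prod_{\{i,j\}\in E_g}(e^{-\b V(x_i-x_j)}-1)\Big|\ \le\ e^{(n-1)\b\bar B}\sum_{\tau\in T_n}\prod_{\{i,j\}\in E_\tau}\bigl(1-e^{-\b|V(x_i-x_j)|}\bigr).
$$
Substituting this into the definition \equ(ursm) of $C_n(\b,\L)$ and bringing the absolute value inside the sums and integrals gives
$$
|C_n(\b,\L)|\ \le\ \frac{e^{\b\bar B(n-1)}}{n!}\sum_{\tau\in T_n}\frac{1}{|\L|}\int_\L dx_1\cdots\int_\L dx_n\prod_{\{i,j\}\in E_\tau}\bigl(1-e^{-\b|V(x_i-x_j)|}\bigr).
$$

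To control the tree integral, I would apply the same change of variables used in the proof of Proposition \ref{integr}: for each $\t\in T_n$, pick any root (say vertex $1$), orient the edges away from the root, and change variables $y_k=x_{i_k}-x_{j_k}$ along the $n-1$ edges $\{i_k,j_k\}\in E_\tau$ with $y_1=x_1$; the Jacobian is $1$ and the integration decouples into $n-1$ copies of $\int_{\mathbb{R}^d}(1-e^{-\b|V(y)|})dy=\tilde C(\b)$ plus one integration of the root variable over $\L$, giving $|\L|[\tilde C(\b)]^{n-1}$. Dividing by $|\L|$ and using the Cayley count $|T_n|=n^{n-2}$ (cf.\ \equ(cay2)) produces exactly \equ(Proyuh2).

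The argument is short and there is no real obstacle beyond the bookkeeping of the $B_i$ assignment: the gain $e^{-\b\bar B}$ compared to Theorem \ref{pryu} is entirely attributable to setting $B_1=0$, which is permissible precisely because the Basuev bound is $(n-1)\bar B$ rather than $n\bar B$. The only point requiring care is to verify that $\sum_{i\in S}B_i\ge (|S|-1)\b\bar B$ holds for every $S\subset[n]$, not just $S=[n]$; this is immediate from the definition of $B_i$ by splitting into the cases $1\in S$ and $1\notin S$.
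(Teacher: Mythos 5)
Your proof is correct and it does establish \equ(Proyuh2), but it takes a genuinely different route from the paper's. You fold the Basuev constant $\bar B$ into the per-vertex constants $B_i$ of Theorem \ref{teoPYgen} by the asymmetric choice $B_1=0$, $B_i=\b\bar B$ for $i\ge 2$, then verify (via the two cases $1\in S$ and $1\notin S$) that the hypothesis holds with $\sum_{i=1}^n B_i=(n-1)\b\bar B$, and finally invoke Theorem \ref{teoPYgen} as a black box. (You are right, incidentally, to read \equ(stabgenS) with a minus sign on the right-hand side; as stated it contains a sign typo, as is clear from the way Theorem \ref{teoPY2} uses it with $B_i=B$.) The paper instead does not invoke Theorem \ref{teoPYgen} at all: it reproves the key inequality directly, going back to the partition scheme $\bm M$ of Definition \ref{d4} and bounding $\sum_{\{i,j\}\in E_{\bm M(\tau)}\setminus E^+_\tau}V(x_i-x_j)$ from below by summing the Basuev bound over the trees $\tau_1,\dots,\tau_k$ of the forest obtained from $E_\tau\setminus E^+_\tau$, and using that $\sum_s(|V_{\tau_s}|-1)=n-k\le n-1$. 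The paper's route keeps the argument symmetric in the vertices and works inside the proof of Lemma \ref{stabgen} rather than through its statement; your route is shorter because it reuses the lemma as stated, at the small cost of picking a distinguished vertex to set $B_1=0$, and it makes transparent exactly where the improvement from $e^{\b\bar B n}$ to $e^{\b\bar B(n-1)}$ comes from. Both give the same final bound after the standard change of variables on trees and the Cayley count, so there is no gap.
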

{\bf Proof}.
\noindent
Using the partition scheme $\bm M$ given in Definition \ref{d4} let us prove the following inequality.
Given $\tau \in { T}_n$ abd given $(x_1,...,x_n) \in {\mathbb R}^{dn}$, we have
$$
 \sum_{\{i,j\} \in \bm E_{M(\tau)} \setminus E_\tau^+} v(x_i-x_j) \geq -\bar B(n-1)\Eq(Peb)
$$
Indeed,  reasoning indetically as in the proof of Lemma \ref{stabgen} (with $v(x_i-x_j)$ in place of
$V_{ij}$) we have that
$$
\sum_{\{i,j\} \in E_{\bm M(\tau)} \setminus E_\tau^+}v(x_i-x_j)
\geq \sum_{s=1}^k \sum_{\{i,j\} \subset { V}_{\tau_s}} v(x_i-x_j)
$$
Now, by \equ(stabas), for any $I\subset [n]$ it holds that
$\sum_{\{i,j\} \subset {I}} v(x_i-x_j) \ge -(|I|-1)\bar B$ Therefore
$$
\sum_{\{i,j\} \in E_{\bm M(\tau)} \setminus E_\tau^+}v(x_i-x_j) \geq
 \sum_{s=1}^k -| (V_{\tau_s}|-1)\bar B \ge -(n-1)\bar B
 $$
 Proceeding now as in the proof of Theorem \ref{pryu} we get the inequality \equ(Proyuh2).
$\Box$

\vv
\\Let us now go back to the Virial expansion of the Pressure given in \equ(virial) and let us state and prove the following theorem.

\begin{teo}\label{pryu2}
 Let $V$ be a stable and tempered pair  potential with  Basuev stability constant $\bar B$. Then
the convergence radius $\RR$ of the Virial  series \equ(virial)  admits the following lower bound.
$$
\RR\ge { 0.14477\over  \tilde C(\b) e^{\b \bar B}}\Eq(PYvi)
$$
\end{teo}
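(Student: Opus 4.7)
The plan is to apply the Lagrange inversion formula to the relation $\r=\r_\L(\b,\l)$ in order to express the coefficients of the virial series \equ(virial) as Cauchy integrals, and then to control these integrals using the Basuev--strengthened Mayer bound \equ(Proyuh2). First I would set $a=e^{\b\bar B}\tilde C(\b)$, so that \equ(Proyuh2) reads $|C_n(\b,\L)|\le a^{n-1}n^{n-2}/n!$ uniformly in $\L$, and in particular $|nC_n(\b,\L)|\le a^{n-1}n^{n-1}/n!$. Recognising the tree function $T(z)=\sum_{n\ge1}n^{n-1}z^n/n!$, which is analytic on $|z|<1/e$, this already yields the explicit majorant $\sum_{n\ge 1}|nC_n|\,r^n\le T(ar)/a$.

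Next, since by the definition \equ(densi) one has $\r_\L(\l)=\l\,(\b P_\L)'(\l)$, i.e.\ $(\b P_\L)'(\l)=\r_\L(\l)/\l$, inserting this into Lagrange's inversion formula and cancelling one factor of $\r_\L(\l)/\l$ leads to the compact identity
$$[\r^n]\,\b P_\L(\r)\;=\;\frac{1}{n}\,[\l^{n-1}]\Bigl(\frac{\l}{\r_\L(\l)}\Bigr)^{\!n-1}.$$
Cauchy's estimate on the circle $|\l|=r$ then gives $|[\r^n]\b P_\L(\r)|\le n^{-1}r^{-(n-1)}\sup_{|\l|=r}|\l/\r_\L(\l)|^{n-1}$, so the whole problem reduces to a good lower bound for $|\r_\L(\l)|$ on a circle.

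For this, setting $s=ar$, the reverse triangle inequality together with the majorant from the first step yields
$$|\r_\L(\l)|\;\ge\; r-\!\sum_{n\ge 2}n|C_n|\,r^n\;\ge\;r-\frac{T(s)-s}{a}\;=\;\frac{2s-T(s)}{a},$$
valid on the subinterval of $(0,1/e)$ on which $T(s)<2s$. Consequently $\bigl|[\r^n]\b P_\L(\r)\bigr|\le n^{-1}\bigl(a/(2s-T(s))\bigr)^{n-1}$, and the virial series converges absolutely whenever $|\r|<(2s-T(s))/a$. Optimising over $s$ is then a standard calculus exercise: the stationarity condition $T'(s)=2$, combined with the functional relations $T(s)=se^{T(s)}$ and $T'(s)=T/[s(1-T)]$, produces the transcendental equation $e^u=2(1-u)$ for $u=T(s^\star)$; at this optimum $\max_{s\in(0,1/e)}(2s-T(s))=u^2/(1-u)\simeq 0.14477$, which is precisely the constant appearing in \equ(PYvi).

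The delicate point is the applicability of Lagrange inversion and the passage to the thermodynamic limit. The former requires the analyticity of $\r_\L(\l)$ in a neighbourhood of $0$ (which follows from Theorem \ref{pryu}) together with $\r_\L'(0)=C_1=1$; the latter is immediate because all the estimates above are uniform in $\L$, so the same lower bound holds for the limiting virial coefficients $\b_k(\b)$ described in \equ(virial2). The main genuine obstacle is pinning down the sharpest possible constant, because any argument that does not exploit the special tree-function structure of the majorant produces a strictly worse bound; the maximum $u^2/(1-u)\simeq 0.14477$ is what makes the tree-function approach so effective.
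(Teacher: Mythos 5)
Your proof is correct and follows essentially the same route as the paper: bound the density via the triangle inequality together with the Mayer-coefficient estimate \equ(Proyuh2), then invert via Lagrange/Cauchy to control the virial coefficients, and finally optimise the radius. The only differences are cosmetic — where the paper goes through Rouch\'e's theorem, Cauchy's argument principle, and the Lebowitz--Penrose substitution $we^{-w}=\tilde C(\b)e^{\b\bar B}|\l|$ with Euler's formula, you invoke Lagrange inversion directly and name the tree function $T(s)$ explicitly; under the change of variable $s=we^{-w}$, $T(s)=w$ your lower bound $\frac{2s-T(s)}{a}$ and the paper's $\frac{w}{a}(2e^{-w}-1)$ coincide term by term, and the stationarity equation $e^{u}=2(1-u)$ with maximum $u^2/(1-u)\approx 0.14477$ is the same in both.
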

{\bf Proof}.
\\We follow the steps done by Lebowitz and Penrose in \cite{LPe} to obtain, via Lagrange inversion, a lower bound for $\RR$.

\\We start by observing that, due to \equ(densi) and the fact that $C_1(\b,\L)=1$, there exists a circle $C$
 of some radius $R< 1/[\tilde C(\b)e^{\b B+1}]$ and center in the origin  $\l=0$ of the complex $\l$-plane such that
  $\r_\L(\b,\l)$
has only one zero  in the disc ${D_R}=\{\l\in \mathbb{C}: |\l|\le R\}$ and this zero occurs precisely at $\l=0$.
  Let now  $\r\in \mathbb{C}$ be such that
$$
|\r|< \min_{\l\in C}|\r_\L(\b,\l)| \Eq(condm)
$$
Then by Rouch\'e's theorem $\r_\L(\b,\l)$ and
 $\r_\L(\b,\l)-\r$ have the same number of zeros (i.e. one)  in the region $D_R=\{\l\in \mathbb{C}: |\l|\le R\}$. In other words, for any complex $\r$ satisfying
  \equ(condm) there is only one $\l\in D_R$ such that  $\r=\r_\L(\b,\l)$ and therefore we can invert the equation   $\r=\r_\L(\b,\l)$ and write
 $\l=\l_\L(\b, \r)$. Thus, according to  Cauchy's argument principle, we can write the pressure $\b P_\L(\b,\l)$
  as a function of the density $\r=\r_\L(\b,\l)$  as
$$
P_\L(\r,\b)={1\over 2\p i}\oint_\g P_\L(\b,\l){d\r_\L(\b,\l)\over d\l}{d\l\over \r_\L(\b,\l)-\r}\Eq(cauchy)
$$
where $\g$ can be  any circle centered at the origin in the  complex $\l$-plane fully contained in the region $D_R$
and such that
$$
|\r|<\min_{\l\in \g} | \r_\L(\b,\l)|\Eq(minro)
$$
By standard complex analysis  $P_\L(\r,\b)$ is analytic in $\r$ in the region \equ(minro). Indeed,
once \equ(minro) is satisfied we can write
$$
{1\over \r_\L(\b,\l)-\r}=\sum_{n=0}^{\infty} {\r^n\over  [\r_\L(\b,\l)]^{n+1}}\Eq(okso)
$$
and inserting \equ(okso) in \equ(cauchy) we get
$$
P_\L(\r,\b)= \sum_{n=1}^\infty c_n(\b,\L)\r^n\Eq(virial3)
$$
with
$$
\begin{aligned}
c_n(\b,\L)& = {1\over 2\p i}\oint_\g P_\L(\b,\l){d\r_\L(\b,\l)\over d\l}{d\l\over  [\r_\L(\b,\l)]^{n+1}} \\
&=
-{1\over 2\p in}\oint_\g P_\L(\b,\l){d\over d\l}\left[{1\over  [\r_\L(\b,\l)]^{n}}\right]d\l\\
&=
{1\over 2\p in}\oint_\g {dP_\L(\b,\l)\over d\l}{1\over  [\r_\L(\b,\l)]^{n}}d\l\\
&= ~ {1\over 2\p in\b} \oint_\g{1\over
[\r_\L(\b,\l)]^{n-1}}{d\l\over \l}
\end{aligned}
$$
Therefore
$$
|c_n(\b,\L)|\le{1\over n\b}{1\over \left[\min_{\l\in \g} |\r_\L(\b,\l)|\right]^{n-1}}\Eq(cn)
$$
Inequality \equ(cn) shows that the convergence radius $\RR$ of the series \equ(virial3) (i.e. of the virial series \equ(virial)) is such
that
$$
\RR\ge \min_{\l\in \g}   |\r_\L(\b,\l)|\Eq(Ok)
$$
Therefore  the game is to find an optimal circle $\g$ in the region $D_R$ which maximizes  the r.h.s. of \equ(Ok).
We proceed as follows.
Recalling \equ(densi)  we have, by the triangular inequality, that
$$
 |\r_\L(\b,\l)| \ge |\l|- \sum_{n=2}^\infty n|C_n(\b,\L)||\l|^n \Eq(triangul)
$$
We now  use estimate \equ(Proyuh2) to bound the sum in the r.h.s. of  \equ(triangul).
Therefore we get
$$
\begin{aligned}
 |\r_\L(\b,\l)| ~& \ge ~|\l|- \sum_{n=2}^\infty {n^{n-1}\over n!}[\tilde C(\b) e^{\b \bar B}]^{n-1}|\l|^n\\
 &=
 2|\l|-{1\over \tilde C(\b) e^{\b \bar B}}\sum_{n=1}^\infty {n^{n-1}\over n!}[\tilde C(\b) e^{\b \bar B}|\l|]^{n}
\end{aligned}
$$
Now following \cite{LPe} let us set $w$ to be the  first positive solution of
$$
we^{-w}= \tilde C(\b) e^{\b \bar B}|\l|\Eq(wew)
$$
If we  take $|\l|< 1/ e^{\b \bar B+1} \tilde C(\b)$ (which is
surely  inside the convergence region since $\bar B\ge B$) then $\tilde C(\b) e^{\b \bar B}|\l|<1/e$ and since the function $we^{-w}$ is
increasing in
the interval $[0,1]$ and takes the value $1/e$ at $w=1$,  there is a unique $w$ in the interval $[0,1]$ which solves \equ(wew).
Now we use the Euler's formula
$$
w=\sum_{n=1}^\infty{n^{n-1}\over n!}(we^{-w})^n
$$
to get
$$
|\r_\L(\b,\l)|\ge {w\over \tilde C(\b) e^{\b \bar B}}\left[2e^{-w}
-1\right]\Eq(fififi)
$$
Observe that the r.h.s. \equ(fififi) is greater than zero when $w$ varies in the interval  $(0,\ln 2)$. Therefore, any  circle $\g$ around the origin with radius $R_\g$ between zero and
$\ln 2/(2e^{\b \bar B+1}\tilde
C(\b))$,  which  corresponds, in force of \equ(wew),  to  any  $w$  in the interval $(0,\ln 2)$, is surely in the region $D_R$. So
 we have
$$
\max_{\g\subset D_R} \min_{\l\in \g}   |\r_\L(\b,\l)| \ge {1\over \tilde C(\b) e^{\b \bar B}}\max_{w\in (0,\ln2)} w\left[2e^{-w}
-1\right]\Eq(fine)
$$
which, recalling  \equ(Ok) and since $\max_{w\in (0,\ln2)} {w}[2e^{-w}-1]\ge 0.14477$, concludes  the proof.

\numsec=5\numfor=1

\part{Discrete systems}

\numfor=1\numsec=4
\vskip.1cm\def\E{\mathcal{E}}\def\C{ \mathbb{C}}
\index{polymer gas}

\chapter[\small{The polymer gas}]{The Abstract Polymer gas}

\section{Setting}\label{defpoly}
\\The abstract polymer gas is a discrete model which plays a very important role in many physical situations,
such as spin systems on the lattice at low or high temperature, or continuous or discrete field theories.
The polymer gas was first introduced by Gruber and Kunz \cite{GK} in 1970. In the original Gruber and Kunz
paper polymer were finite subsets of $\Z^d$ (the unit cubic lattice). Later, Kotecy\'y and Preiss proposed
the abstract model in which polymers were abstract objects belonging to some set $\PP$, the polymer set,
whose unique structure was given by mean of  a symmetric and reflexive relation in $\PP$, that they called the incompatibility relation.
As we will see, this is equivalent to assume
that the interaction between polymers occurs via a hard core pair potential.
In the present chapter we
revisit the abstract polymer gas.

We first  specify a countable set $\PP$ whose elements are all possible polymers (i.e. $\PP$ is the
{\it single particle state space}).
We then associate to each polymer $\g\in \PP$ a complex  number $z_\g$ (a positive number in physical situations)
which is  interpreted as the  {\it activity} of the  polymer $\g$. We will denote $\bm z=\{z_\g\}_{\g\in \PP}$ and for any
$\L\subset \PP$, $\bm z_\L=\{z_\g\}_{\g\in \L}$.

In the general situation polymers interact through a pair potential. Namely, the energy $E$ of a configuration $\g_1,\dots,\g_n$
of $n$ polymers is given by
$$
E(\g_1,\dots,\g_n)= \sum_ {1\leq i<j\leq n}V(\g_i,\g_j)\Eq(U)
$$
where pair potential $V(\g,\g')$ is a symmetric function in $\PP\times \PP$ taking values in
$ \mathbb{R}\cup\{+\infty\}$. We will make the assumption that  the pair interaction $V(\g,\g')$ is purely hard-core.
Namely $V(\g,\g')$ takes values in the set $\{0, +\infty\}$. Observe that
an hard-core  pair potential $V(\g,\g')$ in $\PP$ induces  a {relation} $ \mathcal{R}_V$
in $\PP\times \PP$ (i.e.  $ \mathcal{R}_V$ is a subset of
$\PP\times \PP$). Namely, we say that a pair $(\g,\g')$ belongs to  $ \mathcal{R}_V$ if and only if
$V(\g,\g')=+\infty$.
Clearly  $\mathcal{R}_V$ is symmetric because by assumption $V$ is symmetric.
When $(\g,\g')\in \mathcal{R}_V$ (i.e. $V(\g,\g')=+\infty$), we write $\g\not\sim \g'$ and say that $\g$ and $\g'$ are {\it incompatible}. Conversely, if
$(\g,\g')\notin \mathcal{R}_V$ we say that the polymers $\g$ and $\g'$ are
{\it compatible} and we write $\g\sim\g'$. Note that  if $V$ is such that $ \mathcal{R}_V$ is reflexive, then
$\g\not\sim\g$ for all $\g\in \PP$. The relation $\RR_V$ induced by the pair potential $V$ is called {\it the incompatibility relation}.

%
%

\\Fix now  a finite set $\L\subset \PP$ (the ``volume" of the gas). Then, for $z_\g\ge 0$, the probability to see the configuration
$(\g_{1},\dots ,\g_{n})\in \L^n$ is given by
$$
Prob(\g_{1},\dots ,\g_{n})={1\over \Xi_{\L}} {1\over n!}{{z_{\g_1}}{z_{\g_2}}\dots{z_{\g_n}}
e^{- \sum_ {1\leq i<j\leq n}V(\g_i,\g_j)}}
$$
where the normalization constant $\Xi_{\L}$ is
the grand-canonical partition function  in the volume $\L$ and  is given by
$$
\Xi_{\L}(\bm z_\L)=1+\sum_{n\geq 1}{1\over n!}
\sum_{(\g_{1},\dots ,\g_{n})\in\L^n}
{z_{\g_1}}{z_{\g_2}}\dots{z_{\g_n}}
e^{- \sum_ {1\leq i<j\leq n}V(\g_i,\g_j)}\Eq(partiz)
$$
Note that  configurations $\g_1,\dots,\g_n$ for which  there exists some incompatible  pair $\g_i,\g_j$, i.e. such that $V(\g_i,\g_j)=+\infty$
have zero probability to occur, i.e. are  forbidden.

\\{\bf Remark}. The fact that  $V(\g,\g')\ge 0$  immediately
implies that $\X_\L$ is convergent and
$$
|\Xi_{\L}(\bm z_\L)|\le 1+\sum_{n\geq 1}{1\over n!}
\Bigg[\sum_{\g\subset\L}
|z_{\g}|\Bigg]^n\le \exp\Big\{\sum_{\g\in \L}|z_\g| \Big\}\le \max_{\g\in \L}e^{|z_\g||\L| }
$$
Actually, $V(\g\,\g')\ge 0$   implies that $\X_\L(z)$ is  analytic in the whole $\mathbb{C}^{|\L|}$ ($|\L|$ is the cardinality of $\L$).
\vv
\\The  ``pressure'' of this gas is defined via the formula
$$
P_\L(\bm z_\L)={1\over |\L|}\log \Xi_{\L}(\bm z_\L)\Eq(pressure)
$$
While the partition function $\Xi_{\L}(\bm z_\L)$ diverges as $\L\to \PP$, the pressure $P_\L(\bm z_\L)$ is supposed to have
a finite limit at least when $z$ varies in some finite polydisc
$|z_\g|\le \r_\g$ with $\bm \r=\{\r_\g\}_{\g\in \PP}$ being some positive function
$\bm \r:\PP\to \mathbb{R}^+: \g\mapsto\r_\g$ defined on $\PP$ independent on $\L$.
So that in principle  it should be possible to give an upper bound for $|P_\L(\bm z)|$ which is uniform in $\L$.

The pressure \equ(pressure) can be written as a formal series using the
Mayer trick in the partition function \equ(partiz)
by writing  the Gibbs factor
as
$$
e^{-\!\sum_{1\leq i<j\leq
n}V(\g_i,\g_j)}\,=\,
\prod_{1\leq i<j\leq n}\!\left[(e^{-V(\g_i,\g_j)}-1)+1\right]
$$
Proceeding exactly as in section \ref{mayersec} with the only difference if that now $V(\g_i,\g_j)$ is in place of $V(x_i-x_j)$ and $\sum_{\g_i\in \L}$ is in place of
$\int_\L dx_i$ we obtain

$$
\log\Xi_{\L}(\bm z_\L)= \sum_{n=1}^{\infty}{1\over n!}
\sum_{(\g_{1},\dots ,\g_{n})\in\L^n}
\phi^{T}(\g_1 ,\dots , \g_n)\,z_{\g_1}\dots\,z_{\g_n}\Eq(6)
$$
with
$$
\phi^{T}(\g_{1},\dots ,\g_{n})=
\begin{cases}
1&{\text if}~ n=1\\
\sum\limits_{g\in G_{n}}\prod\limits_{\{i,j\}\in E_g}(e^{-V(\g_i,\g_j)}-1)&{\text if}~ n\ge 2
\end{cases}
\Eq(7)
$$
where  $\sum_{g\in{G}_n}$ is the sum over all connected
graphs between $[n]$.

\\Observe  that,  since $V(\g,\g')\ge 0$, by Proposition \ref{alternate} it holds

$$
\phi^{T}(\g_{1},\dots ,\g_{n})=(-1)^{n-1}|\phi^{T}(\g_{1},\dots ,\g_{n})|\Eq(altern)
$$

\\The equation \equ(6) makes sense  only for those $\bm z\in \C^\PP$ such that  the formal series in the r.h.s. of \equ(6) converge absolutely.
To study  absolute convergence,
we will consider, for any $\L$, the positive term series
$$
|\log \Xi|_{\L}(\bm \r_\L)=
\sum_{n=1}^{\infty}{1\over n!}
\sum_{(\g_{1},\dots ,\g_{n})\in \L^n}
|\phi^{T}(\g_1 ,\dots , \g_n)|\r_{\g_1}\cdots{\r_{\g_n}}
\Eq(6abs)
$$
for $\bm \r\in (0,\infty)^{\PP}$.
Note that
$$
|\log\Xi_{\L}(\bm z_\L)|\le |\log \Xi|_{\L}(\bm \r_\L)
$$
for all $\bm z\in \C^{|\L|}$ in the poly-disc  $\{|z_\g|\le \r_\g \}_{\g\in \L}$, so that if we are able to prove
that the series \equ(6abs) converges, for all $\L$,
at some value   $\bm \r=\{\r_\g\}_{\g\in \PP}\in (0,\infty)^{\PP}$,
then we have also proved that
the series \equ(6) converges absolutely, for all $\L$, whenever $\bm z$ is in the poly-disk $\{|z_\g|\le \r_\g\}_{\g\in \PP}$.
We also observe that,  since $V(\g,\g')\ge 0$, by \equ(altern)
$$
|\log \Xi|_{\L}(\bm \r_\L)=-\log\Xi_{\L}(-\bm \r_\L) \Eq(logmr)
$$
To study the convergence of the pressure \equ(pressure) it is convenient to consider the quantity

%


%

$$
|\Pi|_{\g_0}(\bm \r)=\sum_{n=0}^{\infty}{1\over n!} \sum_{(\g_1,\g_2,\dots,\g_n)\in \PP^n}
|\phi^{T}(\g_0,\g_1 ,\dots , \g_n)|\,\,\r_{\g_1}\cdots{\r_{\g_n}}\Eq(TP)
$$
Note that in \equ(TP) the sum over each polymer $\g_i$ is not anymore restricted to the finite ``volume" $\L$ and it runs over the full polymer space $\PP$.

\\If we are able to show that $|\Pi|_{\g_0}(\bm \r)$ 
converges
for some (bounded) positive function  $\bm\r\in [0,\infty)^\PP$, then also the pressure
$P_\L(\bm z_\L)$ converges absolutely, for all $\L$, whenever $z$ is in the poly-disk $\{|z_\g|\le \r_\g\}_{\g\in \PP}$ and in this poly-disk it uniformly in $\L$. Indeed
 in the poly-disk $\{|z_\g|\le \r_\g\}_{\g\in \PP}$ we have
$$
|P_\L(\bm z_\L)|={1\over |\L|} |\log\Xi_{\L}(\bm z_\L)|\le  {1\over |\L|}|\log \Xi|_{\L}(\bm \r_\L) ~=
$$
$$
= {1\over |\L|}\sum_{n=1}^{\infty}{1\over n!}
\sum_{(\g_{1},\dots ,\g_{n})\in \L^n}
|\phi^{T}(\g_1 ,\dots , \g_n)|\r_{\g_1}\cdots{\r_{\g_n}}=
$$
$$
={1\over |\L|}\sum_{\g_0\in \L}\r_{\g_0}\sum_{n=1}^{\infty}{1\over n!}
\sum_{(\g_{1},\dots ,\g_{n-1})\in \L^{n-1}}
|\phi^{T}(\g_0,\g_1 ,\dots , \g_{n-1})|\r_{\g_1}\cdots{\r_{\g_{n-1}}}~=
$$
$$
\le~ {1\over |\L|}\sum_{\g_0\in \L}\r_{\g_0}\sum_{m=0}^{\infty}{1\over (m+1)!}
\sum_{(\g_{1},\dots ,\g_{m})\in \L^{m}}
|\phi^{T}(\g_0 ,\g_1\dots , \g_m)|\r_{\g_1}\cdots{\r_{\g_m}}~\le
$$
$$
{1\over |\L|}\sum_{\g_0\in \L}\r_{\g_0}\sum_{m=0}^{\infty}{1\over m!}
\sum_{(\g_{1},\dots ,\g_{n})\in \L^{m}}
|\phi^{T}(\g_0 ,\g_1\dots , \g_m)|\r_{\g_1}\cdots{\r_{\g_m}}~=
$$
$$
=~  {1\over |\L|}\sum_{\g_0\in \L}\r_{\g_0}|\Pi|_{\g_0}(\bm\r)~\le
\sup_{\g_0\in \L}\,\,\r_{\g_0}\,|\Pi|_{\g_0}(\bm\r)~\le ~\sup_{\g_0\in \PP}\,\,\r_{\g_0}\,|\Pi|_{\g_0}(\bm\r)
$$
In short, for all $\bm z$ in the polydisk $\{|z_\g|\le \r_\g\}_{\g\in \PP}$, it holds
$$
|P_\L(\bm z_\L)|~\le~ \sup_{\g_0\in \PP}\,\,\r_{\g_0}\,|\Pi|_{\g_0}(\bm \r)\Eq(PminTP)
$$
So in the next sections we will focus our attention on the  formal series $|\Pi|_{\g_0}(\bm \r)$
defined in equation \equ(TP). Indeed, when one is able to prove that the series \equ(TP) then, by \equ(PminTP)
he has also proved
the absolute convergence of the pressure of the polymer gas uniformly in the volume $\L$.

%
%
%
%
%

\section{Convergence of  the  abstract polymer gas}
\vv

As explained in the previous section, to study  absolute convergence of the pressure
we will just need to prove that the positive term series  defined in \equ(TP)

$$
|\Pi|_{\g_0}(\bm \r)=\sum_{n=0}^{\infty}{1\over n!} \sum_{(\g_1,\g_2,\dots,\g_n)\in \PP^n}
|\phi^{T}(\g_0,\g_1 ,\dots , \g_n)|\,\,\r_{\g_1}\cdots{\r_{\g_n}}
$$

\\is convergent for some $\bm \r\in (0,+\infty)^\PP$.

\\Since the interaction $V(\g,\g')$ is purely hard core, we can use the original Penrose identity to bound the factor Ursell factor $|\phi^{T}(\g_0,\g_1 ,\dots , \g_n)|$. We define the
map $\bm m: T^0_n\to G^0_n$ as in Definition \ref{penspart}. By rooting trees in $T_n^0$ in $0$, given $\t\in T^0_n$, $d(i)$ denotes  the depth  of the vertex $i$ (i.e. its edge distance  from $0$)
and that $i'$  denotes the parent of $i$.
The  map $\mi:T^0_n\to G^0_n$ is therefore the map such that to each tree $\tau\in T^0_n$ associates the graph
$\mi(\tau)\in G^0_n$ formed by adding  to $\tau$
all edges $\{i,j\}$ such that either:
\begin{itemize}
\item[(p1)]  $d_\t(i)=d_\t(j)$ (edges between vertices of the same generation), or
\item[(p2)]  $d_\t(j)=d_\t(i)-1$ and $j>i'$ (edges between vertices with generations differing by one).
\end{itemize}

\\As shown previously  the map $\mi$ is a partition scheme and therefore by Theorem \ref{Penid} we have that
$$
\phi^{T}(\g_0,\g_1 ,\dots , \g_n)= \sum\limits_{g\in G^0_{n}}\prod\limits_{\{i,j\}\in E_g}(e^{-V(\g_i,\g_j)}-1)= \sum_{\t\in T^0_n}w_\t(\g_0\,\g_1,\dots,\g_n)
$$
with
$$
w_\t(\g_0\,\g_1,\dots,\g_n)=
e^{-\sum_{\{i,j\}\in E_{\mi(\t)}\backslash E_\t}V(\g_i,\g_j)}
\prod_{\{i,j\}\in \t}\left(e^{- V(\g_i,\g_j)}-1\right)
$$

\def\v{\vskip.1cm}
\def\vv{\vskip.2cm}
\def\gt{{\tilde\g}}
\def\E{{\mathcal E} }
\def\I{{\rm I}}
\def\GI{\mathbb{G}}
\def\EE{\mathbb{E}}\def\VU{\mathbb{V}}

\\As shown in Section \ref{sechard}  (see there Theorem \ref{pencore} and formula \equ(wtpen)) we have
$$
w_\t(\g_0\,\g_1,\dots,\g_n)= (-1)^n\1_{P(\g_0,\g_1,\dots,\g_n)}(\t)
$$
where $P(\g_0,\g_1,\dots,\g_n)$ is the set of Penrose trees and
 $\1_{\t\in P(\g_0,\g_1,\dots,\g_n)}$ is the characteristic function of the set $P(\g_0,\g_1,\dots,\g_n)$ in $T^0_n$, i.e.
$$
\1_{P(\g_0,\g_1,\dots,\g_n)}(\t)\;=\;
\begin{cases}1 &{\text if}~ \t\in P(\g_0,\g_1,\dots,\g_n)\\
0 & {\text if otherwise}
\end{cases}
$$
We remind its definition below.

\begin{defi} Given $(\g_0,\g_1,\dots,\g_n)\in \PP^{n+1}$, we define
the set of \emph{Penrose trees}
$P(\g_0,\g_1,\dots,\g_n)$, as the set   formed by the trees $\t\in T^0_n$ such that
\begin{enumerate}

\item[{\rm (t0)}]  if   $\{i,j\}\in E_\t$ then  $\g_i\nsim\g_j$

\item[{\rm (t1)}]  if   two vertices $i$ and $j$ siblings
then $ \g_i\sim\g_j$;

\item[{\rm (t2)}]  if two vertices $i$ and $j$ are not siblings but  $d_\t(i)=d_\t(j)$, then $  \g_i\sim\g_j$;

\item[{\rm (t3)}]  if two vertices $i$ and $j$ are s. t. $d_\t(j)=d_\t(i)-1$ and $j>i'$, then
 $\g_i\sim\g_j$.
\end{enumerate}
\end{defi}
Therefore we get

$$
\phi^{T}(\g_0,\g_{1},\dots ,\g_{n})\; =\;(-1)^n
\sum\limits_{\t\in T_n^0} \1_{ P(\g_0,\g_1,\dots,\g_n)}(\t)
\Eq(r.20d)
$$

\vskip.2cm
\\Using \equ(r.20d)  we can now rewrite the formal series \equ(TP) as

$$
|\Pi|_{\g_0}(\bm\r)=\sum_{n=0}^{\infty}{1\over n!} \sum_{(\g_1,\g_2,\dots,\g_n)\in \PP^n}
\sum\limits_{\t\in T_n^0} \1_{ P_G(\g_0,\g_1,\dots,\g_n)}(\t)\,\,{\r_{\g_1}}\dots{\r_{\g_n}}\;=
$$
$$
\,\,\,\,\,\,\,\,\,\,\,\,\,\,\,\,\,\,\,\,\,\,\,\,\,\,
= \;\sum_{n=0}^{\infty}{1\over n!} \sum\limits_{\t\in T_n^0}  \sum_{(\g_1,\g_2,\dots,\g_n)\in \PP^n}
\1_{ P_G(\g_0,\g_1,\dots,\g_n)}(\t)\,\,{\r_{\g_1}}\dots{\r_{\g_n}}\; =
$$
$$
=\;\sum_{n=0}^{\infty}{1\over n!} \sum\limits_{\t\in T_n^0}  \phi_{\g_0}(\t) \;\; \;\;
\,\,\,\,\,\,\,\,\,\,\,\,\,\,\,\,\,\,\,\,\,
\,\,\,\,\,\,\,\,\,\,\,\,\,\,\,\,\,\,\,\,\,\,\,\,\,\,\,\,\,\,\,\,\,\,\,\,\,\,\,
\,\,\,\,~~~~~~~\Eq(pitre)
$$
where
$$
\phi_{\g_0}(\t)=\; \sum_{(\g_1,\g_2,\dots,\g_n)\in \PP^n}
\1_{ P_G(\g_0,\g_1,\dots,\g_n)}(\t)\,\,{\r_{\g_1}}\dots{\r_{\g_n}}\;\Eq(lbdp)
$$
The structure of \equ(pitre) is crucial. This equation shows that the formal series $|\Pi|_{\g_0}(\r)$ can be
reaorganized as a sum over terms associated to labelled trees.

\\We stress that here the factor $\phi_{\g_0}(\t)$ depends on the  {\it labelled} tree $\t$ because of the Penrose condition
(t3) which  indeed depends on the labelling of the tree.

\\We will see below that efficient  bounds on the factor $\phi_{\g_0}(\t)$ defined in \equ(lbdp) above can be obtained  by
choosing  a family  of trees $\tilde P(\g_0,\g_{1},\dots ,\g_{n})$ such that
${P}(\g_0,\g_1,\dots,\g_n)\subset  {\tilde P}(\g_0,\g_{1},\dots ,\g_{n})$.
Below we will consider  three possible choice of $\tilde P{(\g_0,\g_{1},\dots ,\g_{n})}$ which are the choices which yield the
three known criteria for the convergence of cluster  expansion of the abstract polymer gas.

\begin{defi}
Given $(\g_0,\g_1,\dots,\g_n)\in \PP^{n+1}$, the set of weakly the Penrose trees, which we denote by is defined as
${P}^*(\g_0,\g_1,\dots,\g_n)$, is formed by all trees $\t$ with vertex set $\{0,1,\dots,n\}$ and edge set $E_\t$ such that
\begin{enumerate}
\item[$\rm (t0)\;\,$]  if $\{i,j\}\in E_\t$ then $\g_i\not\sim\g_j$

\item[$\rm (t1)^*$]  if $i$ and $j$ are siblings then $\g_i\sim\g_j$;
\end{enumerate}
\end{defi}

\begin{defi}
Given $(\g_0,\g_1,\dots,\g_n)\in \PP^{n+1}$, the set of  the Dobrushin trees, which we denote by is defined as
${P}^{\rm Dob}(\g_0,\g_1,\dots,\g_n)$,
is the set the  trees $\t$ with vertex set $\{0,1,\dots,n\}$ and edge set $E_\t$ such that

\begin{enumerate}
\item[$\rm (t0)\;\,$]  if $\{i,j\}\in E_\t$ then $\g_i\not\sim\g_j$

\item[$\rm (t1)^D$]  if $i$ and $j$ are siblings  then $\g_i\neq\g_j$;
\end{enumerate}
\end{defi}

\begin{defi}\label{kp-trees}
Given $(\g_0,\g_1,\dots,\g_n)\in \PP^{n+1}$, the set of  Koteck\'y-Preiss trees, which we denote by ${P}^{\rm KP}(\g_0,\g_1,\dots,\g_n)$ is formed by
all  trees $\t$ with vertex set $\{0,1,\dots,n\}$ and edge set $E_\t$ such that

\begin{enumerate}
\item[$\rm (t0)\;\,$]  if $\{i,j\}\in E_\t$ then $\g_i\not\sim\g_j$

\end{enumerate}

\end{defi}

\\Note that we have, by definition, that
$$\begn
{P}(\g_0,\g_1,\dots,\g_n) & \subset \;{P}^*(\g_0,\g_1,\dots,\g_n)\\
& \subset \;{P}^{\rm Dob}(\g_0,\g_1,\dots,\g_n)\\
& \subset\; {P}^{\rm KP}(\g_0,\g_1,\dots,\g_n)
\egn
\Eq(fpdkp)
$$

\\So that, by \equ(r.20), we have the bounds
$$
\begn
|\phi^{T}(\g_0,\g_1 ,\dots , \g_n)|&\le\sum_{\t\in T_n^0} \1_{{P}^*(\g_0,\g_1,\dots,\g_n)}(\t)\\
&\le
\sum_{\t\in T_n^0} \1_{P^{\rm Dob}(\g_0,\g_1,\dots,\g_n)}(\t)\\
&\le\sum_{\t\in T_n^0}  \1_{P^{\rm KP}(\g_0,\g_1,\dots,\g_n)}(\t)
\egn
\Eq(pfi)
$$
The latter \equ(pfi), i.e. the worst among the three bound proposed,
is known as the Rota bound. This was the bound which was  used by Cammarota \cite{C},  Brydges \cite{B} and Simon
\cite{S} to obtain a direct proof of the absolute convergence of the pressure of a Polymer gas by directly
estimating the Ursell coefficent.

\\Hence, we can bound the positive term series $|\Pi|_{\g_0}(\bm\r)$ defined in  \equ(TP), using of course the first of the estimates \equ(pfi) which is the best
among the three given above, as
$$
|\Pi|_{\g_0}(\bm\r)\;=\;\sum_{n=0}^{\infty}{1\over n!} \sum_{(\g_1,\g_2,\dots,\g_n)\in \PP^n}
|\phi^{T}(\g_0,\g_1 ,\dots , \g_n)| {\r_{\g_1}}\dots{\r_{\g_n}}\;\le
$$
$$
\le\;\sum_{n=0}^{\infty}{1\over n!} \sum_{(\g_1,\g_2,\dots,\g_n)\in \PP^n}
\sum_{\t\in T_n^0} \1_{{P}^*(\g_0,\g_1,\dots,\g_n)}(\t){\r_{\g_1}}\dots{\r_{\g_n}}\;=
$$
$$
=\;\sum_{n=0}^{\infty}{1\over n!} \sum_{\t\in T_n^0}\sum_{(\g_1,\g_2,\dots,\g_n)\in \PP^n}
\1_{{P}^*(\g_0,\g_1,\dots,\g_n)}(\t){\r_{\g_1}}\dots{\r_{\g_n}}
$$
So we get
$$
|\Pi|_{\g_0}(\bm\r)\;\le\; \Pi^*_{\g_0}(\bm\r) \Eq(PiP*)
$$
where
$$
\Pi^*_{\g_0}(\bm\r) \;=\;
\sum_{n=0}^{\infty}{1\over n!}
\sum_{\t\in T^0_{n}}\phi^*_{\g_0}(\t,\bm\r)\;\Eq(exact)
$$
with
$$
\phi^*_{\g_0}(\t,\bm\r)=\; \sum_{(\g_1,\g_2,\dots,\g_n)\in \PP^n}
\1_{{P}^*(\g_0,\g_1,\dots,\g_n)}(\t)\,\,{\r_{\g_1}}\dots{\r_{\g_n}}\; \Eq(factor)
$$
Analogously, using the second and third bounds \equ(pfi)  we can define two more series
which also majorize   the series $|\Pi|_{\g_0}(\r)$. Namely
$$
\Pi^{\rm Dob}_{\g_0}(\bm\r) \;=\;
\sum_{n=0}^{\infty}{1\over n!}
\sum_{\t\in T^0_{n}}\phi^{\rm Dob}_{\g_0}(\t,\r)\;\Eq(exactd)
$$
and
$$
\Pi^{\rm KP}_{\g_0}(\bm\r) \;=\;
\sum_{n=0}^{\infty}{1\over n!}
\sum_{\t\in T^0_{n}}\phi^{\rm KP}_{\g_0}(\t,\r)\;\Eq(exactkp)
$$

\\with
$$
\phi^{\rm Dob}_{\g_0}(\t,\bm\r)=\; \sum_{(\g_1,\g_2,\dots,\g_n)\in \PP^n}
\1_{P^{\rm Dob}(\g_0,\g_1,\dots,\g_n)}(\t)\,\,{\r_{\g_1}}\dots{\r_{\g_n}}\; \Eq(factor2)
$$
and
$$
\phi^{\rm KP}_{\g_0}(\t,\bm\r)=\; \sum_{(\g_1,\g_2,\dots,\g_n)\in \PP^n}
\1_{P^{\rm KP}(\g_0,\g_1,\dots,\g_n)}(\t)\,\,{\r_{\g_1}}\dots{\r_{\g_n}}\; \Eq(factor3)
$$

\\Here it is important to stress that, differently form the factor $\phi_{\g_0}(\t)$ defined in \equ(lbdp),
the three factors $\phi^{*,\,\rm Dob,\, KP}_{\g_0}(\t)$ do not depend on the labels of the tree $\t$, but only on its topological structure.
This means that the terms in the series $\Pi^*_{\g_0}(\r)$ can further be grouped together in terms of non unlabelled rooted trees.
We will make this concept precise in the next section. We conclude this section by remarking that  inequalities \equ(pfi)
immediately imply
$$
|\Pi|_{\g_0}(\bm\r)\;\le\; \Pi^*_{\g_0}(\bm\r) \;\le\; \Pi^{\rm Dob}_{\g_0}(\bm\r) \;\le\; \Pi^{\rm KP}_{\g_0}(\bm\r)  \Eq(Pitut)
$$

\subsection{Reorganization of the series  $\Pi^*_{\g_0}(\bm\r)$ }\label{treesec}
We now reorganize the sum over rooted labelled trees appearing
in formula \equ(exact) in terms of the plane rooted trees. Such reorganization
is motivated by the observation that
the factors \equ(factor)-\equ(factor3) do not depend on the labels
assigned to the vertices of $\t$ but only its topological structure.
As a  matter of fact,
to each labelled ordered rooted tree $\t\in T^0_{n}$ we can associate a drawning in the plane known as
the ``plane rooted tree'' associated to $\t$.
The drawing of $\t$ is obtained by putting  parents at the left of their children which are ordered
in the top-to-bottom order consistently with the order of their labels.
For example the plane rooted trees with $n+1=5$ vertices associated to the trees $\bf a$ with edge set
$\{0,3\},\{1,3\}, \{2,3\}, \{1,4\}$,  $\bf b$ with edge set $\{0,2\},\{0,3\}, \{1,2\}, \{2,4\}$ and $\bf c$ with edge set
$\{0,2\},\{0,4\}, \{4,3\}, \{1,4\}$
are drawn below
\vv\vv
\includegraphics[width=12cm,height=4.0cm]{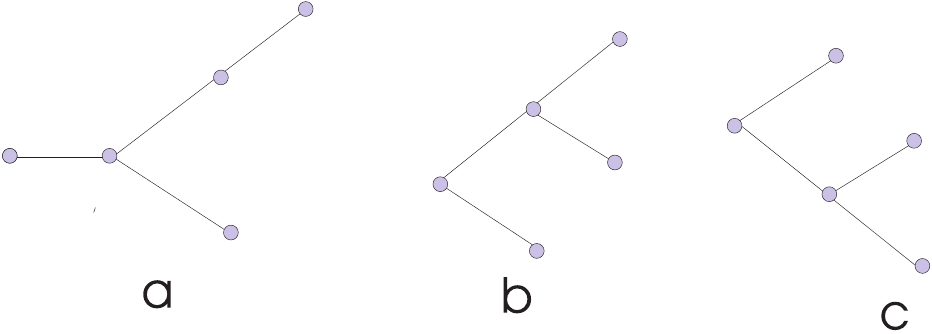}
\begin{center}
Figure
\end{center}
\vv\vv
Observe that $\it b$ and $\it c$ are {\it different} plane rooted tree
(because of  the rule of the ordering of the children from top-to-bottom).

In this way we have defined a map $m:\t\mapsto m(\t)$ which associate to each labelled tree $\t\in T^0_{n}$ a unique  drawing $t=m(\t)$ in the plane,
called the {\it planar rooted tree} associated to $\t$.
We denote by $ {\TT}^0_n=$  the set of all planar rooted trees
with $n+1$ vertices and by  $\TT^{0,k}$ the
set of planar rooted trees with maximal generation number $k$; let also $\TT^0= \cup_{n\ge 0}{\TT}^0_n=\cup_{k\ge 0}\TT^{0,k}$ be
the set of all planar rooted trees.
An element $t\in \TT^0_n$ can also be viewed as an equivalence class of elements
$\t\in T_{n+1}$ with the equivalence relation being that
two elements $\t$ and $\t'$ are equivalent if the produce the same planar rooted tree. So
when we write $\t\in t$ with $t\in \TT^0_n$ we mean that $\t$ is an element of the set of all labelled trees in $T^0_n$
that produce the same plane ordered rooted tree. 

\\We will use the following notations.
Given a vertex $v\neq 0$ in a rooted tree (with root $0$), we denote by $v'$ is  parent,
we denote by $s_v$ the number of its children and  we  denoted by $v^1,\dots , v^{s_v}$ the
children of $v$. If $s_v=0$ we say that $v$ is an end-point or a {\it leaf} of $\t$.

\\Note that  the set of vertices of a labeled rooted tree  $\t\in T^0_n$ (a plane rooted tree $t\in \TT^0_n$) can be endowed with a total order $\prec$  in a natural way
in such way that  for any $v$ the father $v'$ of $v$ is such that $v'\prec v$ and children of any vertex $v$ of a labeled tree $\t\in T^0_n$  (a plane rooted tree  $t\in \TT^0_n$) are ordered
following the order of their labels (from high to low);  namely,  the ordering of $v^1,\dots , v^{s_v}$ is such that
$v^1\prec v^2\prec \cdots  \prec v^{s_v}$.

Clearly the  map $\t\mapsto m(\t)=t$ is many-to-one and the cardinality of the
pre-image of a planar rooted tree
$t$ (=number of ways of labelling the $n$ non-root vertices of the tree with $n$ distinct labels
consistently with the rule ``from high to low") is given by
$$
\card{\{\t\in T^0_n: m(\t)=t\}}\;=\; {n!\over\prod_{v\succeq 0} s_{v}!}\Eq(rel1)
$$
As a matter of fact, it is very easy to count how many labeled trees $\t\in T^0_n$ belong to the same equivalent class
$t$, i.e. are associated to the same plane rooted tree. One have just to count all permutations
$\s$ of $\{0,1,2,\dots,n\}$ which leaves the root unchanged and which respect the order of the children
in any vertex. Let $\t\in T^0_n$  and let  $t=[\t]$ the plane root tree associated to
$\t$ characterized by the sequence $\{s_v\}_{v\succeq 0}$ then we have
$$
|[\t]|= {n!\over \prod_{v\succeq 0}s_v!}
$$
Indeed, $n!$ is the number of all permutations in the set $\{1,2,\dots, n\}$,
while for each vertex $v$, $s_v!$ are the permutation of the children. So ${n!/ \prod_{v\succeq 0}s_v!}$
is the number of permutations of the vertices of $\t$ different from the root which do not  change the order of the children in every vertex.

\def\T{\mathbb{T}}

\\Now observe that the factor \equ(factor) depend actually only on the plane rooted tree associated to $\t$.
We have indeed
$$
\phi^*_{\g_0}(\t,\bm\r)= \phi^*_{\g_0}([\t],\bm\r)=\phi^*_{\g_0}(t,\bm\r)=
\prod_{v\succeq 0}
\Bigg[
\sum_{(\g_{v^1},\dots,\g_{v^{s_v}})\in \PP^{s_v}\atop \g_{v^i}\not\sim \g_v,\,\,\,\g_{v^i}\sim\g_{v^j}}
\r_{\g_{v^1}}\dots \r_{\g_{v^{s_v}}}
\Bigg]\Eq(explw1)
$$
Note that, since in each vertex $v$ the sum over polymers $\g_{v_1},\dots,\g_{v_{s_v}}$
associated to children of $v$ depends on the polymer $\g_v$ associated to $v$,
in the expression  above the order of the product is relevant and it is organized in such way that
products corresponding to ancestors are  at the left of products corresponding
to descendants.
In the r.h.s. of \equ(explw1) it  also adopted the convention that the sum in brackets is equal to 1 for a vertex $v$ such that $s_v=0$.

\\We now ready to reorganize the sum in the the r.h.s. of \equ(exact)
$$
\Pi^*_{\g_0}(\bm\r) =
\sum_{n\ge0}{1\over n!}
\sum_{\t\in T^0_{n}}\phi^*_{\g_0}(\t,\bm\r)\;=
\sum_{n\geq 0}{1\over n!}\sum\limits_{t\in \TT^0_{n}}\sum_{\t\in t}\phi^*_{\g_0}(t,\r)=
$$
$$
=
\sum_{n\geq 0}{1\over n!}\sum\limits_{t\in \TT^0_{n}}\phi^*_{\g_0}(t,\r)\sum_{\t\in t}1=
\sum_{n\geq 0}{1\over n!}\sum\limits_{t\in \TT^0_{n}}\phi^*_{\g_0}(t,\r)|t|=
$$
$$
= \sum_{n\geq 0}\sum\limits_{t\in \TT^0_{n}}\Big[\prod_{v\succeq 0}{1\over s_v!}\Big]\phi^*_{\g_0}(t,\r)=
\sum_{n\geq 0}\sum\limits_{t\in \TT^0_{n}}\prod_{v\succeq 0}
\Bigg[{1\over s_v!}
\sum_{(\g_{v^1},\dots,\g_{v^{s_v}})\in \PP^{s_v}\atop \g_{v^i}\not\sim \g_v,\,\,\,\g_{v^i}\sim\g_{v^j}}
\r_{\g_{v^1}}\dots \r_{\g_{v^{s_v}}}
\Bigg]
$$

\\In conclusion we have obtained
$${
\Pi^*_{\g_0}(\bm\r)\;=}$$
$$
{  =\;\sum\limits_{t\in \TT^0}
\prod_{v\succeq 0}
\Bigg\{{1\over s_v!}\sum_{(\g_{v^1},\dots,\g_{v^{s_v}})\in \PP^{s_v}}
\prod^{s_v}_{i=1}\ind{\g_{v^i}\nsim\g_{v}}\prod_{1\le i<j\le s_v}\ind{\g_{v^i}\sim\g_{v^j}}\r_{\g_{v^1}}\dots \r_{\g_{v^{s_v}}}
\Bigg\}
}\Eq(aab)
$$
In a completely analogous way we also can obtain
$${
\Pi^{\rm Dob}_{\g_0}(\bm\r)\;=}$$
$$
{  =\;\sum\limits_{t\in \TT^0}
\prod_{v\succeq 0}
\Bigg\{{1\over s_v!}\sum_{(\g_{v^1},\dots,\g_{v^{s_v}})\in \PP^{s_v}}
\prod^{s_v}_{i=1}\ind{\g_{v^i}\nsim\g_{v}}\prod_{1\le i<j\le s_v}\ind{\g_{v^i}\neq\g_{v^j}}\r_{\g_{v^1}}\dots \r_{\g_{v^{s_v}}}
\Bigg\}
}\Eq(aabd)
$$
and
$${
\Pi^{\rm KP}_{\g_0}(\bm\r)\;=\;\sum\limits_{t\in \TT^0}
\prod_{v\succeq 0}
\Bigg\{{1\over s_v!}\sum_{(\g_{v^1},\dots,\g_{v^{s_v}})\in \PP^{s_v}}
\prod^{s_v}_{i=1}\ind{\g_{v^i}\nsim\g_{v}}\r_{\g_{v^1}}\dots \r_{\g_{v^{s_v}}}
\Bigg\}
}\Eq(aabkp)
$$

\subsection{Trees and convergence}
We start by defining a proper domain in the space of function $(0,\infty)^{\PP}$, i.e. the space of the functions
$\bm \m: \PP\to (0,\infty): \g\mapsto \m_\g$.
We are agree that given two functions $\bm\m$ and $\bm\n$ in $(0,\infty)^{\PP}$, we say that $\bm \m<\bm\n$ if
and only if $\m_\g<\n_\g$ for all $\g\in \PP$.

\\Given $n\in \mathbb{N}$ and
$(\g_0,\g_1,\ldots,\g_n)\in\PP^{n+1}$,  let  $b_n(\g_0;\g_1,\ldots,\g_n)\ge 0$.
Once numbers $b_n(\g_0;\g_1,\ldots,\g_n)$ are given for all $n\in \mathbb{N}$ and
$(\g_0,\g_1,\ldots,\g_n)\in\PP^{n+1}$, we can
define a function
$$
\varphi^b:  (0,\infty)^{\PP}\to  (0,\infty]^{\PP}: \bm u\mapsto \varphi^b(\bm u)
$$
with entries
$$
[\varphi^b(\bm u)]_\g \doteq \varphi^b_{{\g}} (\bm u)\;=\; 1+\sum_{n\geq 1}
\sum_{(\g_{1},\dots ,\g_{n})\in\PP^n} b_n(\g;\g_1,\ldots,\g_n)\,
u_{\g_1}\dots u_{\g_n}
\Eq(r.7.1)
$$
We define the set
$$
\mathcal{D}^b=\Big\{\bm u\in (0,\infty)^{\PP}: \;\;\varphi^b_{{\g}} (\bm u)<+\infty, \,\,\forall\g\in \PP\Big\}
$$
Then the restriction   of $\varphi^b$ to $\mathcal{D}^b$ is a function  in $(0,\infty)^\PP$,
i.e.
$$
\varphi^b_{{\g}} (\bm u)\;<\;\infty,\,\,\forall\g\in \PP, \,\,\,\,\,\,\,\,\,\,
\mbox{whenever $\bm u\in \mathcal{D}^b$}
$$
Note also that, if $\bm u\in \mathcal{D}^b$ and $\bm u'<\bm u$ then also $\bm u'\in \mathcal{D}^b$.

\\Let now $\bm \m: \PP\to (0,\infty)$ be a function in the set
$\mathcal{D}^b\subset (0,\infty)^{\PP}$
and  let $\bm r \in (0,\infty)^{\PP}$ be defined such that
its  entries  $r_\g$, as $\g$ varies in $\PP$, are given by
$$
r_\g= {\m_\g\over  \varphi^b_{{\g}}(\bm \mu)}\Eq(muR)
$$

\\Note that $\bm r\in \mathcal{D}^b$ because $\bm r\le\bm\m$ by construction, since $\varphi^b_\g(\bm \m)\ge 1$ for all $\g\in \PP$. Morevover
the assumption $\bm \m\in (0,\infty)^{\PP}$ is equivalent to say
$$
\m_\g>0 \,\,\,\,\,\,\,\,\,\,\,\,\,\,\,\mbox{for all $\g_0\in \PP$}  \Eq(ass1)
$$
while assumption $\bm\m\in \mathcal{D}^b$  means that
$$
\varphi^b _{\g} (\bm \mu)<+\infty
\,\,\,\,\,\,\,\,\,\,\,\,\,\,\,\mbox{for all $\g\in \PP$}\Eq(ass2)
$$
Assumptions \equ(ass1) and \equ(ass2) guarantee that $\bm r\in (0,\infty)^{\PP}$, i.e.
$$
r_\g >0 \,\,\,\,\,\,\,\,\,\,\,\,\,\,\,\mbox{for all $\g\in \PP$}
$$

\\Let us now consider, for any $\bm \r\in \mathcal{D}^p$, the map $T^{\bm\r}=\bm \r\varphi^b$. $T^{\bm \r}$ is the map
$$
T^{\bm \r}: (0,\infty)^\PP\to (0,\infty]^\PP: \bm u \mapsto T^{\bm \r}(\bm u)
$$
with entries
$$
[T^{\bm \r}(\bm u)]_\g\doteq T^{\bm \r}_{\g}(\bm u)= \r_{\g} \varphi^b_{\g}(\bm u)\;\;\;\;\;\;\;\;\;\;\;\;\;\;\;\;\;\;\;\;\;\;\;\;\g\in \PP
$$
From \equ(muR) we get
$$
\bm \m = T^{\bm r}(\bm \m)\Eq(fixed)
$$
I.e. $\bm \m$ is fixed point for the map $T^{\bm r}$. So, by  \equ(fixed) we have that, for all $\g_0\in\PP$
$$
\m_{\g_0}= T^{\bm r}_{\g_0}(\bm \mu) =
$$
$$
=r_{\g_0} \;+ \;r_{\g_0}\sum_{\g_{1}\in\PP} b_1(\g_0;\g_1)\,
\m_{\g_1} + r_{\g_0}\sum_{(\g_{1},\g_{2})\in\PP^2} b_2(\g_0;\g_1,\g_2)\,
\m_{\g_1}\m_{\g_2}+\;\dots
$$
$$
\dots \;+\,
r_{\g_0}\sum_{(\g_{1},\dots ,\g_{n})\in\PP^n} b_n(\g_0;\g_1,\ldots,\g_n)\,
\m_{\g_1}\dots\m_{\g_n}~+ \dots\Eq(mTm)
$$
Equation \equ(mTm), recalling the definition \equ(r.7.1) of $\varphi_{\g}^b(\bm \mu)$ can be visualized in the diagrammatic form

\setlength{\unitlength}{.9cm}
$$
\begin{picture}(15,2.0)
\thicklines
\put(0.5,1.5){$\bullet$}
\put(0.5,1.0){$\scriptstyle\g_0$}
\put(1,1,5){$\doteq$}
\put(1.5,1.5){$\mu_{\g_0}$}
\put(2.6,1.5){=}
\put(3.1,1.5){{$T^{\bm r}_{\g_0}(\bm \mu)$}}
\put(4.4,1.5){$\doteq$}
\put(5,1.5){$\circ$}
\put(5,1.3){$\scriptstyle\g_0$}
\put(5.4,1.5){$+$}
\put(5.9,1.5){$\circ$} %
\put(5.9,1.3){$\scriptstyle\g_0$} %
\put(6.05,1.61){\line(1,0){1}} %
\put(7.05,1.5){$\bullet$} %
\put(7.05,1.3){$\scriptstyle\g_1$} %
\put(7.4,1.5){$+$}   %
\put(8,1.5){$\circ$}
\put(8,1.3){$\scriptstyle\g_0$}
\put(8.15,1.65){\line(2,1){0.8}} %
\put(8.95,1.95){$\bullet{\scriptstyle \g_1}$} %
\put(8.15,1.6){\line(2,-1){0.8}}
\put(8.95,1.13){$\bullet{\scriptstyle \g_2}$}
\put(9.7,1.5){$+$}
\put(10.2,1.5){$\cdots$}
\put(10.9,1.5){$+$}  %
\put(11.4,1.5){$\circ$}
\put(11.3,1.3){$\scriptstyle\g_0$}
\put(11.53,1.65){\line(1,1){0.9}}
\put(12.4,2.5){$\bullet{\scriptstyle \g_1}$}
\put(11.5,1.65){\line(2,1){0.9}}
\put(12.4,2.03){$\bullet{\scriptstyle \g_2}$}
\put(12.45,1.45){\vdots}
\put(11.55,1.55){\line(1,-1){0.85}}   
\put(12.4,0.55){$\bullet{\scriptstyle \g_n}$}
\put(13.1,1.5){$+$}
\put(13.7,1.5){$\cdots$}
\end{picture}
$$
where
$$
\circ_{\g_0}= r_{\g_0} ~~~~~~~~~~~\bullet_{\g_i}= \m_{\g_i}
$$
and, for any $n\ge 1$
\setlength{\unitlength}{.9cm}
$$
\begin{picture}(15,3.0)
\thicklines
\put(2,1.5){$\circ$}
\put(2,1.0){$\scriptstyle\g_0$}
\put(2.13,1.65){\line(1,1){0.9}}
\put(3,2.5){$\bullet{\scriptstyle \g_1}$}
\put(2.1,1.65){\line(2,1){0.9}}
\put(3,2.03){$\bullet{\scriptstyle \g_2}$}
\put(3.05,1.45){\vdots}
\put(2.15,1.55){\line(1,-1){0.85}}   
\put(3,0.55){$\bullet{\scriptstyle \g_n}$}
\put(4,1.5){$=$}
\put(5,1.5){$r_{\g_0}\sum\limits_{(\g_{1},\dots ,\g_{n})\in\PP^n} b_n(\g_0;\g_1,\ldots,\g_n)\,
\m_{\g_1}\dots\m_{\g_n}$}
\end{picture}
$$

\\The iteration $[{T^r}]^2(\mu)=T^r(T^r(\mu))$ corresponds to
replacing each of the bullets by each one of the diagrams of the
expansion for $T^r$.

\\This leads to plane rooted trees of up to two
generations, with open circles at first-generation vertices and bullets
at second-generation ones.  The
$k$-th iteration of $T$ involves all possible \emph{plane rooted trees} previously  seen with generation up to $k$.
In each tree  of the expansion, vertices of the last generation  are occupied
by bullets and all the others by open circles.
Let us recall that $\TT^{0,k}$ denotes  the
set of trees with maximal depth (or maximal generation number) equal to $k$.
A straightforward inductive argument shows that
$$
[T^{\bm r}]^k_{\g_0}(\bm \mu) \;=\; r_{\g_0} \Bigl[\sum_{\ell=0}^{k-1}
\Phi^{(\ell)}_{\g_0}(\bm r) + R^{(k)}_{\g_0}(\bm r,\bm\mu)\Bigr]
\Eq(r.8.1)
$$
with

$$
\Phi^{(\ell)}_{\g_0}(\bm r) \;
=\,\sum_{t \in\TT^{0,\ell}}\,
\prod_{v\succeq 0}
\Bigg\{\sum_{(\g_{v^1},\dots,\g_{v^{s_v}})\in \PP^{s_v}}b_{s_v}(\g_v;\g_{v^1},\ldots,\g_{v^{s_v}})\,
r_{\g_{v^1}}\dots r_{\g_{v^{s_v}}}
\Bigg\}\Eq(explw3)
$$
while
$$
R^{(k)}_{\g_0}(\bm r,\bm \mu)
=\,\sum_{t \in\TT^{0,k}}\,
\prod_{v\succeq 0}
\Bigg\{\sum_{(\g_{v^1},\dots,\g_{v^{s_v}})\in \PP^{s_v}}b_{s_v}(\g_v;\g_{v^1},\ldots,\g_{v^{s_v}})\,
\chi^{t}_{\g_{v^1}}\dots \chi^t_{\g_{v^{s_v}}}
\Bigg\}\Eq(explw4)
$$
where
$$
\chi^{t}_{\g_v}=
\begin{cases}
r_{\g_v} & {\text if} \,\,\,d_t(v)< k\\
\m_{\g_v} & {\text if} \,\,\,d_t(v)= k
\end{cases}
\Eq(defchi)
$$
with, we recall, $d_t(v)$ indicating the depth (distance from the root) of $v$ in $t$.
In other words $R^{(k)}_{\g_0}(\bm r,\bm \mu)$ has an expression similar to $\Phi^{(k)}_{\g_0}(\bm r)$ but
with the activities of the vertex of the $k$-th generation
weighted by $\bm \mu$. Here we agree that if $v$ is such that $s_v=0$ then  $b_0(\g_v)\equiv 1$.
Now, by \equ(mTm) we have
$$
[T^{\bm r}]^k_{\g_0}(\bm \mu) \;= \;\m_{\g_0}
$$
which implies immediately, via \equ(r.8.1),
$$
 r_{\g_0} \sum_{\ell=0}^{k-1}
\Phi^{(\ell)}_{\g_0}(\bm r)\; \le\; \m_{\g_0}\;\,\,\,\,\,\,\,\,\,\,\,\,\,  \mbox{for all $k\in \mathbb{N}$ }  \Eq(parz)
$$
Equation \equ(parz) immediately implies the following proposition.
\begin{pro}\label{prop:1}
\\Let $\bm\m$ be a function
$\bm \m\in (0,\infty)^\PP$ and, for any $\g\in \PP$ let $\varphi^b_{{\g}} (\bm \mu)$ be function defined in \equ(r.7.1)
supposed to satisfy \equ(ass2). Let $\bm r \in(0,\infty)^\PP$ be defined by \equ(muR).
Then, for all $\bm \r\le \bm r$

\begin{enumerate}

\item[i)]
The series
$$
\Phi^{b}_{\g_0} (\bm \r)\;\bydef\;
\,\sum_{t \in\TT^{0}}\,
\prod_{v\succeq 0}
\Bigg\{\sum_{(\g_{v^1},\dots,\g_{v^{s_v}})\in \PP^{s_v}}b_{s_v}(\g_v;\g_{v^1},\ldots,\g_{v^{s_v}})\,
\r_{\g_{v^1}}\dots \r_{\g_{v^{s_v}}}
\Bigg\}
\Eq(r.11)
$$
converges  for each $\g_0\in\PP$ and admits, for each $\g_0\in\PP$, the bound
$$ \Phi^b_{\g_0}(\bm \r)\le \Phi^b_{\g_0}(\bm r)\le \varphi^b_{{\g_0}}(\bm \mu)
\Eq(r.12)
$$

\item[ii)]
$$
\r_{\g_0}\Phi^b_{\g_0}(\bm \r)= \lim_{n\to\infty} [T^{\bm \r}]^n_{\g_0}(\bm \r)\Eq(r.12a)
$$
and
$\r_{\g_0}\Phi^b_{\g_0}(\bm \r)$
is solution of the equation \equ(mTm), i.e. is fixed point of the map $T^{\bm \r}$, i.e.
$$
\r_{\g_0}\Phi^b_{\g_0}(\bm \r)= T^{\bm \r}_{\g_0}\Big(\r_\g \Phi^b_{\g}(\bm \r)\Big)\Eq(r.12b)
$$
\end{enumerate}
\end{pro}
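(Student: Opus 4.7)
The plan is to leverage the two facts already established in the excerpt: equation \equ(r.8.1), expressing the iterates $[T^{\bm r}]^k_{\g_0}(\bm\mu)$ as a finite partial sum over planar rooted trees plus a remainder, and the resulting uniform inequality \equ(parz). Because $b_n\ge 0$, $\bm r>0$ and $\bm\mu>0$, every summand in \equ(explw3) and \equ(explw4) is non-negative; in particular $R^{(k)}_{\g_0}(\bm r,\bm\mu)\ge 0$ and the partial sums $\sum_{\ell=0}^{k-1}\Phi^{(\ell)}_{\g_0}(\bm r)$ are monotone non-decreasing in $k$. All the content of the proposition is essentially bookkeeping once this positivity is in place.

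For part (i), I first use \equ(parz) to get $\sum_{\ell=0}^{k-1}\Phi^{(\ell)}_{\g_0}(\bm r)\le \mu_{\g_0}/r_{\g_0}=\varphi^b_{\g_0}(\bm\mu)$ uniformly in $k$. Monotone convergence then yields that $\sum_{\ell\ge 0}\Phi^{(\ell)}_{\g_0}(\bm r)$ converges with limit at most $\varphi^b_{\g_0}(\bm\mu)$. Since $\TT^0$ is the disjoint union of the strata $\TT^{0,\ell}$ and every term in \equ(r.11) is non-negative, Tonelli lets me reorganize the double sum to conclude $\Phi^b_{\g_0}(\bm r)=\sum_{\ell\ge 0}\Phi^{(\ell)}_{\g_0}(\bm r)\le \varphi^b_{\g_0}(\bm\mu)$. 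For $\bm\rho\le\bm r$, termwise comparison of the non-negative monomials in \equ(r.11) delivers $\Phi^b_{\g_0}(\bm\rho)\le\Phi^b_{\g_0}(\bm r)$, giving \equ(r.12).

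For part (ii), I would re-run the inductive derivation of \equ(r.8.1) with $\bm\mu$ replaced by $\bm\rho$ throughout. Inspecting \equ(explw4) and \equ(defchi) shows that with this substitution both $r$- and $\mu$-weights coalesce into $\rho$-weights, so that $R^{(n)}_{\g_0}(\bm\rho,\bm\rho)=\Phi^{(n)}_{\g_0}(\bm\rho)$. Consequently
\[
[T^{\bm\rho}]^n_{\g_0}(\bm\rho)\;=\;\rho_{\g_0}\sum_{\ell=0}^{n}\Phi^{(\ell)}_{\g_0}(\bm\rho),
\]
and letting $n\to\infty$, with the bound from part (i) applied to $\bm\rho$, yields \equ(r.12a).

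Finally, to obtain the fixed-point identity \equ(r.12b) I would use the canonical decomposition of a plane rooted tree: every $t\in\TT^0$ is either the one-vertex tree or factors uniquely into its root together with an ordered tuple $(t_1,\dots,t_{s_0})$ of subtrees rooted at the children of the root. Substituting this decomposition into \equ(r.11), splitting the product $\prod_{v\succeq 0}\{\cdots\}$ into the factor at the root times the analogous products along each subtree, and relabeling each subtree as an element of $\TT^0$ rooted at the corresponding child, I get
\[
\Phi^b_{\g_0}(\bm\rho)\;=\;1+\sum_{n\ge 1}\sum_{(\g_1,\dots,\g_n)\in\PP^n}b_n(\g_0;\g_1,\dots,\g_n)\prod_{i=1}^n\rho_{\g_i}\Phi^b_{\g_i}(\bm\rho),
\]
which by \equ(r.7.1) is exactly $\varphi^b_{\g_0}$ evaluated at the function $\g\mapsto\rho_\g\Phi^b_\g(\bm\rho)$; multiplying by $\rho_{\g_0}$ produces \equ(r.12b). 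The exchange of summations implicit in this factorization is justified by Tonelli via the absolute convergence proved in part (i). The only delicate step is verifying $R^{(n)}_{\g_0}(\bm\rho,\bm\rho)=\Phi^{(n)}_{\g_0}(\bm\rho)$, but this is immediate from \equ(defchi) once one observes that under the substitution $\bm r\to\bm\rho$, $\bm\mu\to\bm\rho$ the piecewise definition of $\chi^t$ becomes constant. No estimate beyond the uniform bound \equ(parz) is needed.
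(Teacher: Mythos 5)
Your treatment of part (i) and of the limit identity \equ(r.12a) is essentially the paper's argument: combine \equ(parz) with positivity and monotonicity, then observe that $R^{(k)}_{\g_0}(\bm\rho,\bm\mu)\big|_{\bm\mu=\bm\rho}=\Phi^{(k)}_{\g_0}(\bm\rho)$ so that $[T^{\bm\rho}]^k_{\g_0}(\bm\rho)=\rho_{\g_0}\sum_{\ell=0}^k\Phi^{(\ell)}_{\g_0}(\bm\rho)$. The one place where you diverge is the fixed-point identity \equ(r.12b). The paper proves it dynamically: it writes $\rho_{\g_0}\Phi^b_{\g_0}(\bm\rho)=\lim_n[T^{\bm\rho}]^{n+1}_{\g_0}(\bm\rho)=\lim_n T^{\bm\rho}_{\g_0}\bigl([T^{\bm\rho}]^n(\bm\rho)\bigr)$ and then passes the limit through $T^{\bm\rho}_{\g_0}=\rho_{\g_0}\varphi^b_{\g_0}$, a step that relies (implicitly) on monotone convergence for the non-negative, non-decreasing sequence $[T^{\bm\rho}]^n(\bm\rho)$ plugged into the series $\varphi^b_{\g_0}$ in countably many variables. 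You instead prove \equ(r.12b) statically, by the root-decomposition of a plane rooted tree: every $t\in\TT^0$ is the one-vertex tree or the root together with an ordered tuple of subtrees, so that $\Phi^b_{\g_0}(\bm\rho)=1+\sum_{n\ge1}\sum_{(\g_1,\dots,\g_n)}b_n(\g_0;\g_1,\dots,\g_n)\prod_i\rho_{\g_i}\Phi^b_{\g_i}(\bm\rho)$, with the resummation justified by Tonelli and the finiteness established in part (i). Both arguments are correct; yours is arguably cleaner because it isolates the fixed-point identity from the iteration and makes the rearrangement hypotheses explicit (positivity plus the bound from part (i)), whereas the paper's version leans on an interchange of limit and infinite sum that it does not spell out. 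The paper's route has the compensating advantage of emphasizing the dynamical picture of $\bm\mu\mapsto\rho_{\g_0}\Phi^b_{\g_0}(\bm\rho)$ as the attracting fixed point of the iteration starting at $\bm\rho$, which is the intuition driving the whole section.
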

This proposition can be viewed as a generalization of the Lagrange inversion formula for series depending on infinite (countable) variables.
\vv
\\{\bf Proof}.

\\i). By \equ(r.8.1) we get
$$
r_{\g_0} \sum_{\ell=0}^{n}\Phi^{(\ell)}_{\g_0}(\bm r)\,\,\le\,\,[T^{\bm r}]_{\g_0}^{n+1}(\bm \mu)\;\;\;\;\;\;\;\forall n\in \mathbb{N}
$$
but, by definition \equ(mTm) we have, for any $k\in \mathbb{N}$ that $[T^{\bm r}]^k_{\g_0}(\bm \mu)=\m_{\g_0}$. So we obtain
$$
r_{\g_0}\sum_{\ell=0}^{n}\Phi^{(\ell)}_{\g_0}(\bm r)\;\le\;
\m_{\g_0} \,\,\,\,\,\,\,\,\,\,\,\,\,\,\mbox{for all $n$}
$$
i.e., by  \equ(muR),
$$
\sum_{\ell=0}^{n}\Phi^{(\ell)}_{\g_0}(\bm r)\;\le\;
 \varphi^b_{{\g_0}}(\bm \mu) \,\,\,\,\,\,\,\,\,\,\,\,\,\,\mbox{for all $n$}
$$
which implies
$$
\Phi^b_{\g_0}(\bm r)\le\varphi^b_{{\g_0}}(\bm \m)
$$
Therefore, by monotonicity, for any $\bm \r\le \bm r$
$$
\Phi^b_{\g_0}(\bm \r)\le \Phi^b_{\g_0}(\bm r)\le\varphi^b_{{\g_0}}(\bm \m)
$$

\v
\\ii) By \equ(r.8.1) we have that
$$
[T^{\bm \r}]_{\g_0}^k(\bm \r) \;=\; \r_{\g_0} \Bigl[\sum_{\ell=0}^{k-1}
\Phi^{(\ell)}_{\g_0}(\bm \r) + R^{(k)}_{\g_0}(\bm \r,\bm \m)|_{\bm \m=\bm \r}\Bigr]
$$
But, recalling definition \equ(defchi)
$$
R^{(k)}_{\g_0}(\bm \r,\bm \m)|_{\bm \m=\bm \r}
\;=
$$
$$
=\,\sum_{t \in\TT^{0,k}}\,
\prod_{v\succeq 0}
\Bigg\{\sum_{(\g_{v^1},\dots,\g_{v^{s_v}})\in \PP^{s_v}}b_{s_v}(\g_v;\g_{v^1},\ldots,\g_{v^{s_v}})\,
\r_{\g_{v^1}}\dots \r_{\g_{v^{s_v}}}
\Bigg\}\,\,=\,\,\Phi^{(k)}_{\g_0}(\bm \r)
$$
So
$$
[T^{\bm \r}]^k_{\g_0}(\bm \r) \;=\; \r_{\g_0}\sum_{\ell=0}^{k}
\Phi^{(\ell)}_{\g_0}(\bm \r)
$$
Hence, for any $\bm\r\le \bm r$
$$
\lim_{k\to\infty} [T^{\bm \r}]^k_{\g_0}(\bm \r) \,=\,\lim_{k\to\infty}\r_{\g_0}\sum_{\ell=0}^{k}
\Phi^{(\ell)}_{\g_0}(\bm \r) \,=\,\r_{\g_0}\Phi^b_{\g_0}(\bm \r)
$$
\v
\\Finally, for any  $\bm \r\le \bm r$
$$
\begn
\r_{\g_0}\Phi^b_{\g_0}(\bm \r) &= \lim_{n\to\infty} [T^{\bm\r}]^{n+1}_{\g_0}(\bm \r)\\
&=\,\lim_{n\to\infty} T^{\bm \r}_{\g_0}\Bigl( [T^{\bm \r}]^{n}_{\g_0}(\bm \r)\Bigr)\\
&=
\,\r_{\g_0}\lim_{n\to\infty}\varphi^b_{{\g_0}}\Bigl( [T^{\bm \r}]^{n}_{\g_0}(\bm \r)\Bigr)\\
&= \, \r_{\g_0}\varphi^b_{\g}\Bigl(\lim_{n\to\infty} [T^{\bm \r}]^{n}_{\g_0}(\bm \r)\Bigr)\\
&=  \, \r_{\g_0}\varphi^b_{{\g_0}}\Bigl(\r_{\g_0}\Phi^b_{\g_0}(\bm \r)\Bigr)\\
&=\, T^{\bm \r}_{\g_0}\Bigl(\r_{\g_0}\Phi^b_{\g_0}(\bm \r)\Bigr)
\egn
$$
$\Box$
\vv

\subsection{Convergence criteria}

\\\underline{\it Fern\'andez-Procacci criterion}.

\\Let us now  choose
$$
b_n(\g_0;\g_1,\ldots,\g_n) = b^{*}_n(\g_0;\g_1,\ldots,\g_n)\doteq
{1\over n!}\prod_{i=1}^n \ind{\g_i\not\sim\g_0}\;\; \prod_{1\le i<j\le n} \ind{\g_{i}\sim\g_{j}}\Eq(defb)
$$
and thus
$$
\varphi^{b^{*}}_{\g_0} (\bm \mu)\;=\; 1+\sum_{n\ge 1} \frac{1}{n!}\,
\sum_{(\g_{1},\dots ,\g_{n})\in\PP^n\atop \g_i\not\sim \g_0,\,\,\g_{i}\sim\g_{j}}
{\mu_{\g_1}}\dots{\mu_{\g_n}}\;=\;
\Xi_{\PP_{\g_0}}(\bm \mu)
\Eq(r.15.2)
$$

\\Then, proposition \ref{prop:1} tells us that the series
$$
\Phi^{b^{*}}_{\g_0} (\bm \r) =
$$
$$
=
\,\sum_{t \in\TT^{0}}\,
\prod_{v\succeq 0}
\Bigg\{\sum_{(\g_{v^1},\dots,\g_{v^{s_v}})\in \PP^{s_v}}{1\over s_v!}
\prod_{i=1}^{s_v} \ind{\g_{v^i}\not\sim\g_v} \prod_{1\le i<j\le s_v} \ind{\g_{v^i}\sim\g_{v^j}}\,
\r_{\g_{v^1}}\dots \r_{\g_{v^{s_v}}}
\Bigg\}
\Eq(r.11fp)
$$
converges as soon as $\bm \r\le \bm r^*$ with
$$
r^*_{\g}={\m_\g\over \Xi_{\PP_{\g}}(\bm \mu)} \Eq(muRv)
$$

\\comparing \equ(aab) with \equ(r.11fp)  we immediately
see that
$$
\Pi^*_{\g_0}(\bm \r)=\Phi^{b^{*}}_{\g_0}(\bm \r)
$$
So
we immediately get, by Proposition \ref{prop:1},
the following criterion for the convergence of cluster expansions.

\begin{teo}\label{coro:1}
Choose $\bm \m\in \mathcal{D}^{b^*}\subset (0,\infty)^\PP$ and let $\bm r^* \in(0,\infty)^\PP$ s.t.
$$
r^*_{\g_0}={\m_{\g_0}\over \Xi_{\PP_{\g_0}}(\bm \mu)}
$$
Let $ \bm \r$ such that
$$
\r_{\g}\;\le\; r^*_{\g} \,\,\,\,\,\,\,\,\,\,\,\, \forall\g\in\PP\Eq(FPcrit)
$$
Then the series $|\Pi|_{\g_0}(\bm \r)$ defined in  \equ(TP)
is finite for each $\g_0\in\PP$ and
$$
|\Pi|_{\g_0}(\bm \r)\le  \Xi_{\PP_{\g_0}}(\bm \mu)
\Eq(r.12b0)
$$
and hence
$$
\r_{\g_0}|\Pi|_{\g_0}(\bm \r)\le  \m_{\g_0}\Eq(r.12bb)
$$
\\for each $\g_0\in\PP$.
\end{teo}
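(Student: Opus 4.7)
The plan is to realize Theorem \ref{coro:1} as a direct specialization of Proposition \ref{prop:1} to the coefficients $b^*_n$ defined in \equ(defb), together with the bound \equ(PiP*) that was established in the previous subsection.

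First, I would verify the identifications. By the choice \equ(defb), the function $\varphi^{b^*}_{\g_0}(\bm\m)$ defined in \equ(r.7.1) coincides term by term with the partition function $\Xi_{\PP_{\g_0}}(\bm\m)$, as already observed in \equ(r.15.2); in particular the domain $\mathcal{D}^{b^*}$ is the set of $\bm\m\in(0,\infty)^\PP$ for which $\Xi_{\PP_{\g}}(\bm\m)<+\infty$ for every $\g\in \PP$. Next, comparing the reorganized expression \equ(aab) for $\Pi^*_{\g_0}(\bm\r)$ with the tree series \equ(r.11fp) for $\Phi^{b^*}_{\g_0}(\bm\r)$, I see that both sums run over the same set $\TT^0$ of plane rooted trees and assign to each vertex $v$ the identical weight
$$
\frac{1}{s_v!}\sum_{(\g_{v^1},\dots,\g_{v^{s_v}})\in\PP^{s_v}} \prod_{i=1}^{s_v}\ind{\g_{v^i}\not\sim\g_v}\prod_{1\le i<j\le s_v}\ind{\g_{v^i}\sim\g_{v^j}}\,\r_{\g_{v^1}}\cdots\r_{\g_{v^{s_v}}}.
$$
Hence $\Pi^*_{\g_0}(\bm\r)=\Phi^{b^*}_{\g_0}(\bm\r)$ as formal (positive-term) series, and the bound \equ(PiP*) already established gives $|\Pi|_{\g_0}(\bm\r)\le \Phi^{b^*}_{\g_0}(\bm\r)$.

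With these identifications in hand, the proof is essentially a single invocation. The hypothesis $\bm\m\in \mathcal{D}^{b^*}$ ensures $\varphi^{b^*}_{\g}(\bm\m)=\Xi_{\PP_{\g}}(\bm\m)<+\infty$ for every $\g$, so \equ(muR) yields precisely $\bm r^{*}$ with $r^{*}_{\g}=\m_{\g}/\Xi_{\PP_{\g}}(\bm\m)$, and the assumption \equ(FPcrit) reads $\bm\r\le \bm r^*$. Part (i) of Proposition \ref{prop:1} then asserts that $\Phi^{b^*}_{\g_0}(\bm\r)\le \Phi^{b^*}_{\g_0}(\bm r^*)\le \varphi^{b^*}_{\g_0}(\bm\m)=\Xi_{\PP_{\g_0}}(\bm\m)$, which combined with the previous paragraph gives the first inequality \equ(r.12b0).

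Finally, the second inequality \equ(r.12bb) is immediate: multiplying \equ(r.12b0) by $\r_{\g_0}$ and using $\r_{\g_0}\le r^*_{\g_0}=\m_{\g_0}/\Xi_{\PP_{\g_0}}(\bm\m)$ yields $\r_{\g_0}|\Pi|_{\g_0}(\bm\r)\le \r_{\g_0}\Xi_{\PP_{\g_0}}(\bm\m)\le \m_{\g_0}$. There is no genuine obstacle here, since all the analytical work (convergence of the tree sum, the fixed-point identity, and the Penrose tree-graph reorganization leading to $|\Pi|_{\g_0}\le \Pi^*_{\g_0}$) has already been done; the role of this theorem is simply to package Proposition \ref{prop:1} for the particular combinatorial weights $b^*_n$ that encode the hard-core polymer incompatibility relation, so that the abstract fixed-point criterion is translated into the Fern\'andez--Procacci bound on the pressure.
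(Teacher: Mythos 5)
Your proof is correct and follows essentially the same route as the paper: identify $\varphi^{b^*}_{\g_0}(\bm\m)=\Xi_{\PP_{\g_0}}(\bm\m)$ and $\Pi^*_{\g_0}(\bm\r)=\Phi^{b^*}_{\g_0}(\bm\r)$, invoke part (i) of Proposition \ref{prop:1}, and transfer the bound to $|\Pi|_{\g_0}$ via \equ(PiP*). The only addition you make is spelling out the term-by-term verification of the identifications and the (trivial) derivation of \equ(r.12bb), both of which the paper leaves implicit.
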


\proof By proposition \ref{prop:1} we have immediately that the series $\Pi^*_{\g_0}(\bm \r)$ defined in  \equ(exact)
is finite for each $\g_0\in\PP$ and for all $\bm \r$ such that $\r_{\g}\le r^*_{\g}$ where $r^*_{\g}$ is defined in \equ(FPcrit). Moreover
$$
\Pi^*_{\g_0}(\bm \r)\le \Xi_{\PP_{\g_0}}(\bm \mu)
$$
for each $\g_0\in\PP$.  Now recalling \equ(PiP*) we obtain that the same is true also for the series $|\Pi|_{\g_0}(\bm \r)$. $\Box$

\vv

\\\underline{\it Dobrushin criterion}.

\\If we now  choose
$$
b_n(\g_0;\g_1,\ldots,\g_n)=b^{\rm Dob}_n(\g_0;\g_1,\ldots,\g_n)
\doteq {1\over n!}\prod_{i=1}^n \ind{\g_i\not\sim\g_0}\;\; \prod_{1\le i<j\le n} \ind{\g_{i}\neq\g_{j}}
$$
and thus
$$
\varphi^{b^{\rm Dob}} _{\g_0} (\bm\mu)\;=\; 1+\sum_{n\geq 1}
\frac{1}{n!}\,\sum_{(\g_{1},\dots ,\g_{n})\in\PP^n\atop
\g_0\nsim\g_i\,,\, \g_i\neq\g_j}
{\mu_{\g_1}}\dots{\mu_{\g_n}}\;=\;
\prod_{\g\nsim\g_0} [1+\mu_{\g}]\;,
\Eq(r.25)
$$
\\Then again proposition \ref{prop:1} tells us that the series
$$
\Phi^{b^{\rm Dob}}_{\g_0} (\bm \r)\; =
$$
$$=
\,\sum_{t \in\TT^{0}}\,
\prod_{v\succeq 0}
\Bigg\{\sum_{(\g_{v^1},\dots,\g_{v^{s_v}})\in \PP^{s_v}}{1\over s_v!}
\prod_{i=1}^{s_v} \ind{\g_{v^i}\not\sim\g_v} \prod_{1\le i<j\le s_v} \ind{\g_{v^i}\neq\g_{v^j}}\,
\r_{\g_{v^1}}\dots \r_{\g_{v^{s_v}}}
\Bigg\}
\Eq(r.11d)
$$
converges as soon as $\bm \r\le \bm r^{\rm Dob}$ with
$$
r^{\rm Dob}_{\g}={\m_\g\over \prod_{\g\nsim\g_0} [1+\mu_{\g}]} \Eq(muRvD)
$$
\\Comparing \equ(aabd) with \equ(r.11d)  we get
$$
\Pi^{\rm Dob}_{\g_0}(\bm \r)=\Phi^{b^{\rm Dob}}_{\g_0}(\bm \r)
$$
So
Proposition \ref{prop:1} also yields
the following (weaker) criterion for the convergence of cluster expansions.
\begin{cor}[Dobrushin]\label{coro:1dob}
Choose $\bm \m\in {\mathcal D}^{b^{\rm Dob}}$ and let $\bm r^{\rm Dob} \in(0,\infty)^\PP$ s.t.
$$
r^{\rm Dob}_\g={\m_\g\over \prod_{\gt\nsim\g} [1+\m_\gt]} \Eq(muRvd)
$$
Let $ \bm \r \in[0,\infty)^\PP$ such that
$$
\r_\g\;\le\; r^{\rm Dob}_\g \,={\m_\g\over \prod_{\gt\nsim\g} [1+\m_\gt]}\,\,\,\,\,\,\,\,\,\,\, \forall\g\in\PP\Eq(Dcrit)
$$
Then the series $|\Pi|_{\g_0}(\bm \r)$ defined in  \equ(TP)
is finite for each $\g\in\PP$ and
$$ |\Pi|_{\g}(\bm \r)\le \prod_{\gt\nsim\g} [1+\mu_\gt]
\Eq(r.12bd)
$$
or
$$
\r_\g|\Pi|_{\g}(\bm \r)\le  \m_\g\Eq(same)
$$
for each $\g\in\PP$.
\end{cor}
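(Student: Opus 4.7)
\proof The plan is to mimic the proof of Theorem \ref{coro:1} (the Fern\'andez--Procacci criterion), simply replacing the choice of coefficients $b^{*}_n$ by $b^{\rm Dob}_n$ and invoking the chain of inequalities \equ(Pitut) which already makes $\Pi^{\rm Dob}_{\g_0}$ a valid upper bound for $|\Pi|_{\g_0}$.

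First, I would apply Proposition \ref{prop:1} to the family
$$
b^{\rm Dob}_n(\g_0;\g_1,\ldots,\g_n)\;=\;{1\over n!}\prod_{i=1}^n\ind{\g_i\nsim\g_0}\prod_{1\le i<j\le n}\ind{\g_i\neq\g_j},
$$
for which \equ(r.25) gives
$$
\varphi^{b^{\rm Dob}}_{\g_0}(\bm\m)\;=\;\prod_{\gt\nsim\g_0}[1+\m_\gt].
$$
By hypothesis $\bm\m\in{\mathcal D}^{b^{\rm Dob}}$ so this is finite. Defining $\bm r^{\rm Dob}$ as in \equ(muRv) with this choice of $\varphi^b$ yields precisely \equ(muRvd), and Proposition \ref{prop:1} guarantees that for any $\bm\r\le\bm r^{\rm Dob}$ the series $\Phi^{b^{\rm Dob}}_{\g_0}(\bm\r)$ defined in \equ(r.11d) converges and satisfies
$$
\Phi^{b^{\rm Dob}}_{\g_0}(\bm\r)\;\le\;\varphi^{b^{\rm Dob}}_{\g_0}(\bm\m)\;=\;\prod_{\gt\nsim\g_0}[1+\m_\gt].
$$

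Second, comparing the tree expansion \equ(aabd) for $\Pi^{\rm Dob}_{\g_0}(\bm\r)$ with \equ(r.11d), one reads off term by term the identity
$$
\Pi^{\rm Dob}_{\g_0}(\bm\r)\;=\;\Phi^{b^{\rm Dob}}_{\g_0}(\bm\r).
$$
Finally, the chain \equ(Pitut), which was established for every $\bm\r\in[0,\infty)^{\PP}$ by the elementary set-inclusions \equ(fpdkp), gives
$$
|\Pi|_{\g_0}(\bm\r)\;\le\;\Pi^{\rm Dob}_{\g_0}(\bm\r)\;=\;\Phi^{b^{\rm Dob}}_{\g_0}(\bm\r)\;\le\;\prod_{\gt\nsim\g_0}[1+\m_\gt],
$$
which is \equ(r.12bd). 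Multiplying both sides by $\r_{\g_0}$ and using $\r_{\g_0}\le r^{\rm Dob}_{\g_0}=\m_{\g_0}/\prod_{\gt\nsim\g_0}[1+\m_\gt]$ yields \equ(same), completing the argument.

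There is essentially no obstacle: the entire work has already been done in Proposition \ref{prop:1} (the tree fixed-point/Lagrange-type inversion), in the reorganization \equ(aabd) of the Ursell series in terms of plane rooted trees via the Penrose partition scheme, and in the elementary observation $P\subset P^{*}\subset P^{\rm Dob}\subset P^{\rm KP}$. The only thing to check carefully is the evaluation of $\varphi^{b^{\rm Dob}}_{\g_0}(\bm\m)$, where the constraints $\g_i\nsim\g_0$ and $\g_i\neq\g_j$ decouple the sum over $(\g_1,\dots,\g_n)$ into a product over $\g\nsim\g_0$, giving the factorized form $\prod_{\gt\nsim\g_0}[1+\m_\gt]$; this is exactly \equ(r.25) and requires no further argument. $\qed$
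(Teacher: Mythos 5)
Your proof is correct and follows the same route the paper itself takes: one applies Proposition \ref{prop:1} with the Dobrushin choice $b^{\rm Dob}_n$, evaluates $\varphi^{b^{\rm Dob}}_{\g_0}(\bm\m)=\prod_{\gt\nsim\g_0}[1+\m_\gt]$ via \equ(r.25), identifies $\Pi^{\rm Dob}_{\g_0}=\Phi^{b^{\rm Dob}}_{\g_0}$ by comparing \equ(aabd) with \equ(r.11d), and then chains through \equ(Pitut) exactly as in the proof of Theorem \ref{coro:1}. The derivation of \equ(same) from \equ(r.12bd) by multiplying through by $\r_{\g_0}\le r^{\rm Dob}_{\g_0}$ is the same elementary step.
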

For the benefit of the readers we stress that
 in the literature the Dobrushin condition is generally written in a different (but equivalent) form. In particular,
in the Dobrushin paper \cite{D} and also in \cite{M} the condition \equ(Dcrit) is written as follows.

$$
\r_\g\;\le\; r^{\rm Dob}_\g \;=(e^{{\tilde\m}_\g}-1)
e^{-\sum_{\gt\in\PP:\,\gt \not\sim \g}\tilde\m_\gt}\,\,\,\,\,\,\,\,\,\,\,\, \forall\g\in\PP
$$

\\This is clearly the same condition \equ(Dcrit) by defining
$$
\tilde\m_\g=\log[1+\m_\g] \Eq(chang)
$$

\def\gt{{\tilde\g}}

\vv\vv

\\\underline{\it Kopteck\'y-Preiss criterion}.

\\Finally, if we now choose
$$
b_n(\g_0;\g_1,\ldots,\g_n)=b^{\rm KP}_n(\g_0;\g_1,\ldots,\g_n)\doteq{1\over n!} \prod_{i=1}^{n}\ind{\g_i\not\sim \g_0}
$$
and
$$
\varphi^{\rm KP} _{\g_0} (\bm\mu)\;=\; 1+\sum_{n\geq 1} \,\frac{1}{n!}
\sum_{(\g_{1},\dots ,\g_{n})\in\PP^n\atop
\g_0\nsim\g_i\,,\,  1\le i \le n}
{\mu_{\g_1}}\dots{\mu_{\g_n}}\;=\;
\exp\Bigl[\sum_{\g\nsim\g_0} \mu_\g\Bigr]
\Eq(r.27)
$$
\\Then once again proposition \ref{prop:1} tells us that the series
$$
\Phi^{b^{\rm KP}}_{\g_0} (\bm \r)\;=\;
\,\sum_{t \in\TT^{0}}\,
\prod_{v\succeq 0}
\Bigg\{\sum_{(\g_{v^1},\dots,\g_{v^{s_v}})\in \PP^{s_v}}{1\over s_v!}
\prod_{i=1}^{s_v} \1_{\g_{v^i}\not\sim\g_v}\;\;
\r_{\g_{v^1}}\dots \r_{\g_{v^{s_v}}}
\Bigg\}
\Eq(r.11kp)
$$
converges as soon as $\bm \r\le \bm r^{\rm KP}$ with
$$
r^{\rm KP}_{\g}={\m_\g\over \exp\Bigl[\sum_{\gt\nsim\g} \mu_\gt\Bigr]} \Eq(muRvKP)
$$
\\Comparing \equ(aabkp) with \equ(r.11kp)  we get
$$
\Pi^{\rm KP}_{\g_0}(\bm \r)=\Phi^{b^{\rm KP}}_{\g_0}(\bm \r)
$$
So we get the criterion of Koteck\'y and Preiss. Namely,

\begin{cor}[Koteck\'y-Preiss]\label{coro:1kp}
Choose $\bm \m\in {\mathcal D}^{b^{\rm KP}}$ and let $\bm r^{\rm KP} \in(0,\infty)^\PP$ s.t.
$$
r^{\rm KP}_{\g_0}={\m_{\g_0}\over \exp\Bigl[\sum_{\g\nsim\g_0} \mu_\g\Bigr]} \Eq(muRvkp)
$$
Let $ \bm \r \in[0,\infty)^\PP$ such that
$$
\r_\g\;\le\; r^{\rm KP}_\g \;=\;
{\m_{\g}\over\exp\Bigl[\sum_{\gt\nsim\g} \mu_\gt\Bigr]},\,\,\,\,\,\,\,\,\,\,\,\, \forall\g\in\PP\Eq(KPcrit)
$$
Then the series $|\Pi|_{\g_0}(\bm \r)$ defined in  \equ(TP)
is finite for each $\g_0\in\PP$ and
$$ |\Pi|_{\g_0}(\bm \r)\le  \exp\Bigl[\sum_{\g\nsim\g_0} \mu_\g\Bigr]
\Eq(r.12bkp)
$$
or
$$
\r_{\g_0}|\Pi|_{\g_0}(\bm \r)\le  \m_{\g_0}
$$
for each $\g_0\in\PP$.
\end{cor}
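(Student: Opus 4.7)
The proof is essentially an assembly of results already established in the excerpt, so my plan is to apply Proposition \ref{prop:1} to the particular choice of weights that produces the Koteck\'y--Preiss series. First I would take
\[
b_n(\g_0;\g_1,\ldots,\g_n)\;=\;b^{\rm KP}_n(\g_0;\g_1,\ldots,\g_n)\;=\;\frac{1}{n!}\prod_{i=1}^n\ind{\g_i\nsim\g_0},
\]
and verify by direct computation (which is in fact already displayed in \equ(r.27)) that
\[
\varphi^{b^{\rm KP}}_{\g_0}(\bm\m)\;=\;\exp\!\Bigl[\sum_{\g\nsim\g_0}\m_\g\Bigr].
\]
Under the hypothesis $\bm\m\in\mathcal{D}^{b^{\rm KP}}$, this is finite, so the assumptions of Proposition \ref{prop:1} are satisfied with $r^{\rm KP}_{\g_0}=\m_{\g_0}/\exp[\sum_{\g\nsim\g_0}\m_\g]$ as in \equ(muRvkp).

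Next I would invoke part (i) of Proposition \ref{prop:1}: for every $\bm\r\le\bm r^{\rm KP}$ the tree series
\[
\Phi^{b^{\rm KP}}_{\g_0}(\bm\r)\;=\;\sum_{t\in\TT^0}\prod_{v\succeq 0}\Bigg\{\frac{1}{s_v!}\sum_{(\g_{v^1},\ldots,\g_{v^{s_v}})\in\PP^{s_v}}\prod_{i=1}^{s_v}\ind{\g_{v^i}\nsim\g_v}\;\r_{\g_{v^1}}\cdots\r_{\g_{v^{s_v}}}\Bigg\}
\]
converges for each $\g_0\in\PP$ and satisfies $\Phi^{b^{\rm KP}}_{\g_0}(\bm\r)\le\varphi^{b^{\rm KP}}_{\g_0}(\bm\m)=\exp[\sum_{\g\nsim\g_0}\m_\g]$. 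By the reorganization \equ(aabkp) carried out in Subsection \ref{treesec}, the right-hand side above is identically $\Pi^{\rm KP}_{\g_0}(\bm\r)$, so
\[
\Pi^{\rm KP}_{\g_0}(\bm\r)\;\le\;\exp\!\Bigl[\sum_{\g\nsim\g_0}\m_\g\Bigr].
\]

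Finally, to transfer this bound to the target series $|\Pi|_{\g_0}(\bm\r)$ defined in \equ(TP), I would use the chain of inequalities \equ(Pitut) established from the inclusions \equ(fpdkp) between the Penrose, weakly Penrose, Dobrushin, and KP tree families, namely $|\Pi|_{\g_0}(\bm\r)\le\Pi^{\rm KP}_{\g_0}(\bm\r)$. This yields \equ(r.12bkp). The second bound $\r_{\g_0}|\Pi|_{\g_0}(\bm\r)\le\m_{\g_0}$ is then obtained by multiplying by $\r_{\g_0}$ and invoking the hypothesis \equ(KPcrit), which is exactly $\r_{\g_0}\exp[\sum_{\g\nsim\g_0}\m_\g]\le\m_{\g_0}$.

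There is no real obstacle here: all the substantive work---the Penrose identity, the passage from labelled to planar rooted trees, the fixed-point argument giving Proposition \ref{prop:1}, and the hard-core bound $|\phi^T|\le\sum_\t\ind{P^{\rm KP}}$---has been done in the preceding sections. The corollary is just the specialization of the general tree framework to the weights $b^{\rm KP}_n$, and the proof reduces to recognizing this specialization and citing the appropriate inequality from \equ(Pitut).
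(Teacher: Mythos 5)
Your proposal is correct and follows essentially the same route as the paper: the Kotecký--Preiss corollary is obtained by specializing the general tree-series framework of Proposition \ref{prop:1} to the weights $b^{\rm KP}_n$, identifying $\varphi^{b^{\rm KP}}_{\g_0}(\bm\m)=\exp[\sum_{\g\nsim\g_0}\m_\g]$, matching $\Phi^{b^{\rm KP}}_{\g_0}$ with $\Pi^{\rm KP}_{\g_0}$ via \equ(aabkp), and then invoking the domination $|\Pi|_{\g_0}\le\Pi^{\rm KP}_{\g_0}$ from \equ(Pitut). The final multiplication by $\r_{\g_0}$ combined with \equ(KPcrit) is also exactly the step the paper leaves implicit, so there is no gap.
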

\vv\vv
{\bf Remark}. Again for the benefit of the readers we stress that
 in the literature the Kotecky-Preiss  condition is generally written in a different (but equivalent) form. Namely in the original  Kotecky-Preiss paper convergence is guaranteed by choosing $\bm \r$  such that
 there is a function  $a_\g$ such that
$$
\sum_{\gt\nsim \g} \r_\gt e^{a_\gt}\le a_\g ~~~~~~~~~~~~~~~~~~~~\forall\g\in\PP
$$

\\This is clearly the same condition \equ(muRvkp) by  setting
$$
\m_\g=\r_\g e^{a_\g} \Eq(changKP)
$$
\vv\vv
\\Summing up, available convergence conditions are of the form
$$
\r_\g \,\;\le\; r_\g= \,{\mu_\g\over \varphi_{\g}(\m) }
\Eq(cdg.7)
$$
with
$$
\varphi_{\g}(\bm \m) \;=\; \left\{\begin{array}{ll}
\exp\Bigl[\sum_{\gt\nsim\g} \mu_\gt\Bigr] & \mbox{ (Koteck\'y-Preiss)}\\[10pt]
\prod_{\gt\nsim\g} \bigl(1 + \mu_\gt\bigr)  & \mbox{ (Dobrushin)}\\[10pt]
\Xi_{\PP_{\g}}(\bm \m)  & \mbox{ (Fern\'andez-Procacci)}
\end{array}\right.
\Eq(cdg.8)
$$

\\Each condition is strictly weaker than the preceding one. Namely,
since, for fixed $\bm \m\in (0,\infty)^\PP$,
$$
\Xi_{\PP_{\g}}(\bm \m) \le \prod_{\gt\nsim\g} \bigl[1 + \mu_\gt\bigr] \le \exp\Bigl[\sum_{\g\nsim\g_0} \mu_\gt\Bigr]
$$
we get
$$
\bm r^*_\g\ge \bm r^{\rm Dob}_\g\ge \bm r_\g^{\rm KP}
$$
So the criterion given by the Corollary \ref{coro:1kp} (i.e. Kotecky-Preiss condition) yields the worst estimate for convergence radius for the cluster expansion;
the Dobrishin Criterion of Corollary \ref{coro:1dob} gives an  estimate which is better (i.e. larger) than
that given by the Koteck\'y-Preiss criterion for the same radius and finally the criterion \ref{coro:1} give the best estimate
for convergence radius for the cluster expansion
among the three proposed.

\subsection{Elementary Examples}\label{elexe}
\vv
In this section we give some elementary in order to illustrate how the criterion \equ(FPcrit) represents a sensible improvement
on previous criteria in applications.
\vv

\\{\it Example 1. The  Domino model on $\mathbb{Z}^2$.}
\v
\\This model has also been considered by Dobrushin in \cite{D}. The elements of the polymer space $\PP$
are in this case nearest neighbor bonds of the bidimensional
cubic lattice. For any $\g\in \PP$ we put $\r_\g=\r$, where $\r>0$ (all polymers have the same activity).
Two polymers are incompatible if and only if they have non empty intersection.
We can choose by symmetry that the function $\m_\g$  appearing in the Kotecky-Preiss, Dobrushin and Fern\'andez-Procacci criteria  are constant at the value $\m$.

\\The Kotecky-Preiss criterion \equ(KPcrit) for the domino model then reads as
$$
\r_\g\le \m_\g e^{-\sum_{\tilde\g\not\sim\g}\m_{\tilde\g}}~~\Longleftrightarrow~~
\r \le \m e^{-7\m}
$$
which yelds at best
$$
\r\le {1\over 7e}\approx 0.0525
$$
On the other hand the Dobrushin condition \equ(Dcrit) reads
$$
\r_\g\le {\m_\g\over \prod_{\tilde\g\not\sim \g}[1+\m_\gt]}
~~\Longleftrightarrow~~\r\le {\m\over (1+\m)^7}
$$
which yields at best

$$
\r\le {{1\over 6}\over (1+{1\over 6})^7}\approx 0.0566
$$

\\Finally, the condition \equ(FPcrit) gives
$$
\r_\g\le {\m_\g\over \Xi_{\PP_{\g}}(\bm \m)}
~~\Longleftrightarrow~~\r\le {\m\over1 +7\m +9\m^2}
$$
which yields at best

$$
\r\le {1\over 13}\approx 0.0769
$$

\vv
\\{\it Example 2. The  lattice gas on a bounded degree graph $\GI=(\VU,\EE)$ with hard core
self repulsion and hard core pair interaction and the triangular lattice on then plane}
\v
\\Let  $\GI=(\VU,\EE)$ be a bounded degree infinite graph with vertex set $\VU$ and edge
set $\EE$, and maximum degree $\D$. A polymer system is obtained by choosing $\PP=\VU$ and
by defininig the incompatibility relation $\nsim$ by saying that two polymers $\g$ and $\g'$ (i.e. two vertices
of $\GI$) are incompatible if and only if either $\g=\g'$ or $\{\g,\g'\}\in \EE$. This polymer gas realization
is called the self repulsive hard core lattice gas on $\GI$. In this case
the polymers are the vertices of $\GI$ and two  polymers $\{x,y\}\subset \VU$ are incompatible if either
$y=x$ (self repulsion) or  $\{x,y\}\in \EE$ (hard core pair interaction). In general, since polymers have no structure
(they are just vertices in a graph) one  can suppose that the activity  of a polymer $x\in \PP$ is  a constant,
i.e. $\r_x=\r$ for all $x\in \VU$. Of course  expect that the convergence radius depends
strongly on the topological structure of $\GI$. We first consider  the worst case i.e.
when the graph $\GI$ is such that the nearest neighbors of any vertex are pairwise compatible.
This happens e.g. if $\GI$ is a tree or if it is the cubic lattice $\mathbb{Z}^d$.
The Kotecky-Preiss condition for this model then reads as
$$
\r_{x}\le \m_xe^{-\sum_{y \not\sim x}\m_y}~~\Longleftrightarrow~~
\r \le \m e^{-(\D+1)\m}
$$
which yields at best
$$
\r\le {1\over (\D+1)e}\Eq(treekp)
$$
On the other hand the Dobrushin  condition reads
$$
\r_{x}\le {\m_x \over \prod_{y\not\sim x}[1+\m_y]}
~~\Longleftrightarrow~~\r\le {\m\over (1+\m)^{\D+1}}
$$
which yields at best

$$
\r\le {{1\over \D}\over (1+{1\over \D})^{\D+1}}={\D^\D\over (\D+1)^{\D+1}}={1\over \D+1}{1\over (1+{1\over \D})^\D}\Eq(treed)
$$

\\Finally, the condition \equ(FPcrit) gives
$$
\r_{x}\le {\m_x\over \Xi_{\PP_x}(\bm \m)}
~~\Longleftrightarrow~~\r\le {\m\over 1 +(\D+1)\m +\sum_{k=2}^\D{\D\choose k}\m^k }
={\m\over \m +(1+\m)^\D}
$$
which yields at best

$$
\r\le {{1\over \D-1}\over {1\over \D-1} +\left(1+{1\over \D-1}\right)^\D}=
{1\over 1+{\D^\D\over (\D-1)^{\D-1}}} ={1\over \D(1+{1\over \D-1})^{\D-1} +1}\Eq(treeb)
$$

\\To illustrate that Theorem \ref{coro:1} permits to  improve this last bound \equ(treeb) if we know more about the
topological structure of $\GI$, we now consider a case of the triangular lattice in $d=2$ (a regular graph with degree $\D=6$,  where our bound turns to be
more efficient than the Dobrushin bound  and  the Shearer-Sokal  bound \equ(treekp). For the triangular lattice the tree bound \equ(treeb)
gives
$$
\e< {5^5\over 6^6}\approx 0,067
$$
while our bound
gives

$$
\e\le {c\over \Xi_\g(c)}= {c\over 1 + 7c + 9c^2 + 2c^3}
$$
The maximum occurs when $4c^3+ 9c^2-1=0$, which is somewhere between 1/3 and 3/10. For example choosing $c=1/3$ (which is not the best choice) we obtain
$$
\e\le {c\over \Xi_\g(c)}= {{1\over 3}\over 1 + {7\over 3} +{1} + {2\over 27}}\approx 0,075
$$

\section[Gas of non-overlapping subsets]{Gas of non overlapping finite subsets}
\def\L{\Lambda}
In this section we will study  a particular realization of the polymer gas
which appears in  the most part of the examples in statistical mechanics.

\\We will suppose that it is given an infinite  countable set $\VU$, and we define the space of polymers as
$$
\PP_\VU=\{R\subset \VU :  |R|<\infty\}
$$
and the incompatibility relation in $\PP_\VU$ is defined as
$$
\g\not\sim\gt\,\,\,\,\Longleftrightarrow\,\,\,\, \g\cap\tilde\g\neq\0
$$
Note that now polymers have
a cardinality, so that we can speak about big polymers and small polymers. Of course, as before, to a
polymer $\g$  is associated an activity $\z(\g)$. We assume in general that  $\z(\g)\in \mathbb{C}$ as far as  $\g\in \PP_\VU$ and we set
$$
|\z(\g)|=\r(\g)\Eq(zeqr)
$$
Note that here we allow the value $\z(\g)=0$ for some $\g$ in order to stay more general. For example in the polymer
expansion of high temperature spin systems, the polymer space is always $\PP_\VU$
for some suitable $\VU$ but it happens that $\z(\g)=0$ whenever $|\g|=1$.

In most of the physics realizations $\VU$ is the vertex set of
an infinite
graph $\GI=(\VU,\EE)$ with edge set $\EE$.
For example $\VU= \Z^d$ and $\EE$ is the set of nearest neighbor in $\Z^d$.  When $\VU$ is the vertex
set of a graph $\GI$ then $\VU$  has a natural metric structure induced by the graph distance in $\GI$.
This metric structure on $\VU$ allow us to  talk about how spread is a polymers (a polymer is spread if its points are
far apart) and we can say now if two polymers
$\g$ and $\g'$ are close or far apart. So, from the abstract context
we can pass to more concrete realizations which have richer structures. Namely,
if one suppose that polymers are finite subsets of an underlying  countable set $\VU$ with $\not\sim=\cap$,
the any polymer has an activity and a cardinality, so we an distinguish between big and small polymers.
If we further suppose that the underlying set $\VU$ is the vertex set of some graph $\GI$ we  also can
talk about distance between polymers and spread polymers.
In any case,  in the whole section below we will not suppose any graph structure for the set $\VU$.
Our abstract polymer space is just the set of all finite subsets of a countable set with the incompatibility
relation being the the non void intersection.

Let now $\L$ be a finite set of $\VU$. A configuration
of polymer gas in $\L$ is given once we specify  the set of polymers which are present in $\L$.
Of course this polymers must be pairwise  compatible, i.e. a configuration in $\L$ is an unordered
$n$-ple $\{\g_1,\dots, \g_n\}$ such that $\g_i\cap \g_j=\0$ for all $i,j=1,\dots, n$.
The ``probability"\footnote{(4.79) is a real probability only if $\z(\g)\in [0,+\infty)$} to see the configuration $\{\g_1,\dots, \g_n\}$ in the box $\L$ is defined as
$$
{\rm Prob}_{\bm \z}(\g_1,\dots,\g_n)= \Xi_\L^{-1} \prod_{i=1}^n\z(\g_i)\Eq(ppp)
$$
where $\Xi_\L$ is the partition function defined as
$$
\Xi_\L(\bm \z)=1+\sum_{n\ge 1}\sum_{\{\g_1,\dots,\g_n\}:~\g_i\subset \L\atop \g_i\cap \g_j=\0}
\z(\g_1)\dots \z(\g_n)\Eq(GKp)
$$

\subsection{Convergence via the abstract polymer criteria}\label{431}
We compare the three conditions for this model. Starting with the Kotecky-Preiss condition, choosing
$\m(\g)=\r(\g) e^{a|\g|}$  (recall: $|\z(\g)|=\r(\g)$), the condition \equ(KPcrit) becomes the well known inequality
$$
\sum_{\gt\in \PP_\VU\atop\tilde\g\not\sim\g} \r(\gt)\; e^{a|\gt|}\:\le\: a|\g|,\,\,\,\,\,\,\,\,\,\forall\g\in\PP\Eq(KPht)
$$
Now using that $\tilde\g\not\sim\g$ means for the present model $\gt\cap\g\neq\emptyset$ we have that
$$
\sum_{\tilde\g\not\sim\g} \r(\gt) e^{a|\gt|}\le |\g|\sup_{x\in \VU}\sum_{\tilde\g\ni x}\r(\gt) e^{a|\gt|}
$$
Hence \equ(KPht) becomes the well known condition
$$
\sup_{x\in \VU}\,\sum_{\g\in \PP_\VU\atop\g\ni x} \,\r(\g)\, e^{a|\g|}\,\,\le \,\,a\Eq(KPht1)
$$

\\On the other hand the Dobrushin condition \equ(Dcrit) can be written as
$$
\r(\g)\le {c(\g)\over \prod_{\gt\in\PP_\VU:\,\tilde\g\not\sim \g}[1+c(\tilde\g)]}
$$

\\choosing  again $c(\g)=|\r(\g)| e^{a|\g|}$ the condition above becomes
$$
\prod_{\gt\in\PP_\VU\atop\tilde\g\not\sim\g} (1+\r(\gt) e^{a|\gt|})\le e^{a|\g|},\,\,\,\,\,\,\,\,\,\forall\g\in\PP
$$
i.e.
$$
\sum_{\gt\in\PP_\VU\atop\tilde\g\not\sim\g} \log(1+\r(\gt) e^{a|\gt|})\le a|\g|\,\,\,\,\,\,\,\,\,\forall\g\in\PP\Eq(Dht)
$$
i.e.
$$
\sup_{x\in \VU}\sum_{\g\in\PP_\VU\atop\g\ni x}\log(1+\r(\g) e^{a|\g|})\le a
\Eq(Dht1)
$$which is slightly better than \equ(KPht1).

\\Finally, the condition \equ(FPcrit), putting again $c(\g)=\r(\g) e^{a|\g|}$, becomes
$$
\Xi_{\,\PP_\g}(c)\le e^{a|\g|}\Eq(usht)
$$
where $\PP_\g=\{\g'\in \PP_\VU: \g'\cap \g\neq\emptyset\}$ and
$$
\Xi_{\,\PP_\g}(c)=1+ \sum_{n= 1}^{|\g|}{1\over n!}
\sum_{(\g_1,\dots,\g_n)\in \PP_\VU^n\atop\g_i\not\sim\g, \g_i\sim\g_j} \prod_{i=1}^n\r(\g_i) e^{a|\g_i|}
$$

\\We again use the fact that $\g_i\not\sim\g_j\Longleftrightarrow\g_i\cap\g_j\neq\emptyset$ and $\g_i\sim\g_j\Longleftrightarrow\g_i\cap\g_j=\emptyset$
to estimate the factor
$$
\sum_{(\g_1,\dots,\g_n)\in \PP_\VU^n\atop\g_i\not\sim\g, \g_i\sim\g_j} \prod_{i=1}^n\r(\g_i) e^{a|\g_i|}
$$
Note that this factor is zero whenever $n>|\g|$, since there is no way to choose $n$ subsets $\g_i$ of $\VU$ such that they are all pairwise compatible
(i.e. non intersecting)
and all incompatible (i.e. intersecting) with a fixed subset $\g$ of $\VU$ with a number of elements equal to  $|\g|$.
On the other hand, when the sum above is not zero, i.e. for $n\le |\g|$, it can be bounded at least by (a very rough bound)

$$
\begn
\sum_{(\g_1,\dots,\g_n)\in \PP^n\atop\g_i\not\sim\g, \g_i\sim\g_j} \prod_{i=1}^n\r(\g_i) e^{a|\g_i|}
& \le
|\g|(|\g|-1)\cdots (|\g|-n+1)\Bigg[\sup_{x\in \VU}\,\sum_{\g\in  \PP_\VU\atop x\in \g}{\r(\g)}e^{a|\g|}\Bigg]^n\\
& =~
{|\g|\choose n}n! \Bigg[\sup_{x\in \VU}\,\sum_{\g\in  \PP_\VU\atop x\in \g}{\r(\g)}e^{a|\g|}\Bigg]^n
\egn
$$
Thus

$$
\Xi^{\,\g}_{\,\PP_\VU}(c)\le 1+ \sum_{n=1}^{|\g|}
{|\g|\choose n}\Bigg[\sup_{x\in \VU}\,\sum_{\g\in  \PP_\VU\atop x\in \g}{\r(\g)}e^{a|\g|}\Bigg]^n=
\Bigg[1+\sup_{x\in \VU}\,\sum_{\g\in  \PP_\VU\atop x\in \g}{\r(\g)}e^{a|\g|}\Bigg]^{|\g|} \Eq(4321)
$$
Thus \equ(usht) can be written as
$$
\Bigg[1+\sup_{x\in \VU}\,\sum_{\g\in  \PP_\VU\atop x\in \g}{\r(\g)}e^{a|\g|}\Bigg]^{|\g|}\le e^{a|\g|}
$$
i.e.
$$
\sup_{x\in \VU}\,\sum_{\g\in  \PP_\VU\atop x\in \g}{\r(\g)}e^{a|\g|}\le e^{a}-1\Eq(usht1)
$$
Note that by \equ(r.12b0) we also get the upper bound
$$
\Pi_{\g}(\r)\le   e^{a|\g|}
$$
We have therefore proved the following theorem

\begin{teo}\label{teo42}
Let $\VU$ be a countable set and let $\PP_\VU=\{\g\subset \VU: |\g|<\infty\}$ be the  polymer set  with incompatibility relation:
$\g\not\sim\g'\,\,\Leftrightarrow\,\, \g\cap\g'\neq\0$ and (complex) activity $\z(\g)$.
Assume that there is a positive number $a>0$,  such that, for all $x\in \VU$
$$
\sum\limits_{\g\in \PP_\VU\atop x\in\g}
|\z(\g)| ~  e^{a|\g|}\le e^a-1\Eq(fpset)
$$
then, for   all finite $\L\subset \VU$, we have  that ${1\over |\L|}\ln \Xi_\L(\bm \z)$,  where $ \Xi_\L(\bm \z)$ is the partition function define in \equ(GKp), can be written as an absolutely convergent series uniformly in $\L$.
\end{teo}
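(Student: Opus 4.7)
The plan is to derive the theorem as a direct consequence of the Fern\'andez--Procacci criterion (Theorem \ref{coro:1}) specialized to the polymer gas of non-overlapping finite subsets. First, I would set $\r(\g)=|\z(\g)|$ as in \equ(zeqr) and make the natural choice $\m(\g)=\r(\g)e^{a|\g|}$, which is exactly the ansatz already introduced in Section \ref{431}. With this choice, the task reduces to verifying the abstract FP inequality \equ(FPcrit), which in the present geometric setting reads $\r(\g)\le \m(\g)/\Xi_{\PP_\g}(\bm\m)$, i.e.\ the inequality \equ(usht):
$$
\Xi_{\PP_\g}(\bm\m)\;\le\; e^{a|\g|}\qquad \forall\,\g\in\PP_\VU.
$$

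Next I would bound $\Xi_{\PP_\g}(\bm\m)$ using the combinatorial estimate already carried out in Section \ref{431} and leading to \equ(4321). The key observation is that a collection $\g_1,\dots,\g_n$ entering $\Xi_{\PP_\g}(\bm\m)$ must be pairwise disjoint and each $\g_i$ must intersect $\g$; hence we can label each $\g_i$ by a distinct ``anchor'' point $x_i\in \g\cap \g_i$, which automatically gives the factor ${|\g|\choose n}n!$ and the bound
$$
\Xi_{\PP_\g}(\bm\m)\;\le\;\Bigl[\,1+\sup_{x\in\VU}\!\!\sum_{\g'\in\PP_\VU:\,x\in\g'}\!\!\r(\g')e^{a|\g'|}\Bigr]^{|\g|}.
$$
Applying the hypothesis \equ(fpset) inside the bracket gives $[1+(e^a-1)]^{|\g|}=e^{a|\g|}$, which is exactly \equ(usht). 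Thus the FP condition \equ(FPcrit) is satisfied with $\r_\g=|\z(\g)|$ and $\m_\g=|\z(\g)|e^{a|\g|}$.

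Once \equ(FPcrit) holds, Theorem \ref{coro:1} yields $|\Pi|_{\g_0}(\bm\r)\le \Xi_{\PP_{\g_0}}(\bm\m)\le e^{a|\g_0|}$ for every $\g_0\in\PP_\VU$. To conclude, I would plug this into the general bound \equ(PminTP) applied to the finite-volume pressure $P_\L(\bm \z_\L)=|\L|^{-1}\log\X_\L(\bm\z)$ associated with the partition function \equ(GKp): the Mayer-type series \equ(6) converges absolutely and
$$
\Bigl|\tfrac{1}{|\L|}\log\X_\L(\bm\z)\Bigr|\;\le\;\tfrac{1}{|\L|}\sum_{\g_0\subset\L}\r(\g_0)\,|\Pi|_{\g_0}(\bm\r)\;\le\;\sup_{\g_0\in\PP_\VU}|\z(\g_0)|\,e^{a|\g_0|},
$$
which is finite by \equ(fpset) and independent of $\L$, giving the claimed uniform absolute convergence.

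The main obstacle — already resolved in the displayed computation of Section \ref{431} — is the combinatorial estimate \equ(4321) for $\Xi_{\PP_\g}(\bm\m)$; beyond that, the proof is a bookkeeping specialization of the abstract criterion. I would therefore keep the writing short: one paragraph setting up the choice of $\m$, one paragraph carrying out the FP-condition verification via the anchor-point argument, and a final sentence invoking \equ(PminTP) to transfer the bound on $|\Pi|_{\g_0}$ to uniform convergence of $|\L|^{-1}\log\X_\L(\bm\z)$.
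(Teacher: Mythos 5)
Your proposal follows exactly the paper's route: set $\m(\g)=|\z(\g)|e^{a|\g|}$, verify the Fern\'andez--Procacci inequality \equ(FPcrit) in the form \equ(usht) via the combinatorial anchor-point estimate \equ(4321), and conclude $|\Pi|_{\g_0}(\bm\r)\le e^{a|\g_0|}$. This part is correct and is identical to the derivation in Section~\ref{431} that precedes Theorem~\ref{teo42}.

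However, the last inequality in your concluding display is not correct as stated. You write
$$
\frac{1}{|\L|}\sum_{\g_0\subset\L}\r(\g_0)\,|\Pi|_{\g_0}(\bm\r)\;\le\;\sup_{\g_0\in\PP_\VU}|\z(\g_0)|\,e^{a|\g_0|}
$$
and justify it by appealing to \equ(PminTP). But in \equ(PminTP) the quantity $|\L|$ is the \emph{number of polymers} in the finite subcollection $\L\subset\PP$, so that the average $|\L|^{-1}\sum_{\g_0\in\L}(\cdot)$ is dominated by the sup. In Theorem~\ref{teo42} the set $\L$ is a finite subset of \emph{sites} $\VU$ and $|\L|$ is its cardinality, while the polymers of the corresponding gas \equ(GKp) are the $2^{|\L|}-1$ nonempty subsets of $\L$. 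Hence your sum over $\g_0\subset\L$ has $2^{|\L|}-1$ terms, not $|\L|$, and dividing by $|\L|$ does not yield a bound by the supremum; taken literally, your chain only gives a bound of order $2^{|\L|}/|\L|$, which is not uniform.

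The gap is easily repaired by anchoring at a site rather than at a polymer. Each nonempty $\g_0\subset\L$ contains at least one site of $\L$, so using $|\Pi|_{\g_0}(\bm\r)\le e^{a|\g_0|}$ and the hypothesis \equ(fpset):
$$
\sum_{\g_0\subset\L}\r(\g_0)\,|\Pi|_{\g_0}(\bm\r)\;\le\;\sum_{\g_0\subset\L}\r(\g_0)\,e^{a|\g_0|}\;\le\;\sum_{x\in\L}\ \sum_{\g_0\in\PP_\VU:\ x\in\g_0}\r(\g_0)\,e^{a|\g_0|}\;\le\;|\L|\,(e^a-1),
$$
whence $\bigl|\tfrac{1}{|\L|}\log\X_\L(\bm\z)\bigr|\le e^a-1$ uniformly in $\L$. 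With this replacement of the final step, your proof coincides with the paper's.
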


\subsection{The subset gas: proof by Induction}

We now illustrate the reasoning by induction to prove convergence of the cluster expansion
for the subset gas in the space $\PP_\VU$ in the special case in which the activity is non negative (i.e. $\r(\g)\ge 0$ for all $\g\in \PP_\VU$. We use the Miracle-sole
approach.

\begin{teo}
Let $\VU$ be a countable set and let $\PP_\VU=\{\g\subset \VU: |\g|<\infty\}$ be the  polymer set  with incompatibility relation:
$\g\not\sim\g'\,\,\Leftrightarrow\,\, \g\cap\g'\neq\0$ and activity $\r(\g)\ge 0$.
Assume that there is a positive number $a>0$,  such that, for all $x\in \VU$
$$
\sum\limits_{\g\in \PP_\VU\atop x\in\g}
\r(\g) ~  e^{a|\g|}\le e^a-1\Eq(gkindu)
$$
(here of course $\r\ge0$)
then, for all $\L\in \VU$, we have
$$
|\Si^\L_{x}|(\r)\le a
\Eq(6bmigk)
$$
where $x\in \L$ and
$$
|\Si^\L_{x}|(\r)= \sum_{n=1}^{\infty}{1\over n!}
\sum_{(\g_{1},\dots ,\g_{n})\in\L^n\atop \exists i:~x\in \g_i}
|\phi^{T}(\g_1 ,\dots , \g_n)|\;{\r(\g_1)}\dots{\r(\g_n)}
$$

\end{teo}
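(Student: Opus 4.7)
The plan is to follow the Miracle-Sole induction on $|\Lambda|$, reducing the cluster-expansion quantity $|\Sigma_x^\Lambda|(\rho)$ to a ratio of partition functions thanks to the alternating-sign property available for $\rho\ge 0$.

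First, I would record the two preliminary identities that drive the proof. By the alternating-sign property (Proposition \ref{alternate}) combined with \equ(logmr), for $\rho\ge 0$ we have $|\log\Xi|_{\Lambda}(\rho)=-\log\Xi_\Lambda(-\rho)$. Splitting the tuples $(\gamma_1,\dots,\gamma_n)\in \Lambda^n$ according to whether some $\gamma_i\ni x$ yields $|\log\Xi|_\Lambda(\rho)=|\log\Xi|_{\Lambda\setminus\{x\}}(\rho)+|\Sigma_x^\Lambda|(\rho)$, so that, setting $t_\Lambda(x):=|\Sigma_x^\Lambda|(\rho)$, one obtains
$$
t_\Lambda(x)\;=\;-\log\frac{\Xi_\Lambda(-\rho)}{\Xi_{\Lambda\setminus\{x\}}(-\rho)}.
$$
Next, partitioning the configurations in \equ(GKp) according to whether a (necessarily unique) polymer contains $x$ gives the recursion
$$
\Xi_\Lambda(-\rho)\;=\;\Xi_{\Lambda\setminus\{x\}}(-\rho)\;-\sum_{\gamma\subset\Lambda,\,\gamma\ni x}\rho(\gamma)\,\Xi_{\Lambda\setminus\gamma}(-\rho).
$$

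The induction proceeds on $|\Lambda|$, with the statement that both $\Xi_\Lambda(-\rho)>0$ and $t_\Lambda(y)\le a$ hold for every $y\in\Lambda$. For the base case $\Lambda=\{y\}$, the only non-empty subset is $\{y\}$ itself, so $\Xi_{\{y\}}(-\rho)=1-\rho(\{y\})$; the hypothesis \equ(gkindu) specialized to the one-element polymer $\{y\}$ yields $\rho(\{y\})\,e^a\le e^a-1$, whence $\Xi_{\{y\}}(-\rho)\ge e^{-a}>0$ and $t_{\{y\}}(y)\le a$. For the inductive step, fix $\Lambda$ and $x\in\Lambda$, and for $\gamma\ni x$ enumerate $\gamma\setminus\{x\}=\{y_1,\dots,y_k\}$ with $k=|\gamma|-1$; telescoping gives
$$
\frac{\Xi_{\Lambda\setminus\gamma}(-\rho)}{\Xi_{\Lambda\setminus\{x\}}(-\rho)}\;=\;\prod_{i=1}^{k}\frac{\Xi_{\Lambda_{i}}(-\rho)}{\Xi_{\Lambda_{i-1}}(-\rho)},\qquad \Lambda_{i}:=\Lambda\setminus\{x,y_1,\dots,y_i\},
$$
where each factor equals $e^{-t_{\Lambda_{i-1}}(y_i)}$ with $|\Lambda_{i-1}|<|\Lambda|$. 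The induction hypothesis bounds each such factor by $e^{a}$ (and ensures all these partition functions are strictly positive), so the ratio is at most $e^{ak}=e^{a(|\gamma|-1)}$. Inserting this into the recursion and using \equ(gkindu),
$$
1-\frac{\Xi_\Lambda(-\rho)}{\Xi_{\Lambda\setminus\{x\}}(-\rho)}\;\le\;e^{-a}\sum_{\gamma\ni x}\rho(\gamma)\,e^{a|\gamma|}\;\le\;e^{-a}(e^a-1)\;=\;1-e^{-a},
$$
hence $\Xi_\Lambda(-\rho)/\Xi_{\Lambda\setminus\{x\}}(-\rho)\ge e^{-a}>0$, which is exactly $t_\Lambda(x)\le a$ and also propagates positivity of $\Xi_\Lambda(-\rho)$.

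The only delicate point, and the main obstacle, is the simultaneous propagation of positivity of $\Xi_\Lambda(-\rho)$ along the induction: without it the logarithms above are not well defined, so the telescoping estimate on $\Xi_{\Lambda\setminus\gamma}/\Xi_{\Lambda\setminus\{x\}}$ must be shown to use only ratios of strictly positive quantities arising at strictly smaller volumes. This is why the inductive statement must bundle together the bound $t_\Lambda(y)\le a$ and the positivity $\Xi_\Lambda(-\rho)\ge e^{-a}\,\Xi_{\Lambda\setminus\{y\}}(-\rho)>0$; once this joint statement is set up, the computation above closes the induction and proves \equ(6bmigk). $\qed$
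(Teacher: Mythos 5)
Your proposal is correct and follows essentially the same Miracle-Sole induction as the paper's own proof: the same alternating-sign identity $|\Si^\L_x|(\r)=-\log\bigl(\Xi_\L(-\r)/\Xi_{\L\setminus\{x\}}(-\r)\bigr)$, the same polymer-deletion recursion for $\Xi_\L(-\r)$, the same telescoping of $\Xi_{\L\setminus\g}(-\r)/\Xi_{\L\setminus\{x\}}(-\r)$ into ratios controlled by the inductive hypothesis, and the same closing estimate $\Xi_\L(-\r)/\Xi_{\L\setminus\{x\}}(-\r)\ge e^{-a}$. Your explicit bundling of the positivity of $\Xi_\L(-\r)$ into the inductive statement, and your base-case computation $\Xi_{\{y\}}(-\r)=1-\r(\{y\})$ in place of the paper's direct summation using $|\phi^T(\{y\},\dots,\{y\})|=(n-1)!$, are minor presentational refinements of the identical argument.
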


\\{\bf Proof}. First observe that, by the alternate sign property of $\phi^{T}(\g_1 ,\dots , \g_n)$ we have
that

$$
|\Si^\L_{x}|(\r)= -\log \Xi_{\L}(-\r)+  \log \Xi_{\L\backslash\{x\}}(-\r)\Eq(funda)
$$

\\We will perform the proof by induction on $\L$. We assume that, under the condition \equ(gkindu), the inequality
\equ(6bmigk) is satisfied for a given $\L$ and any of its subsets.  Take now $x\in \Z^d\backslash\L$. We want to bound
$$
|\Si^{\L\cup\{x\}}_{x}|(\r)= -\log \Xi_{\L\cup\{x\}}(-\r)+  \log \Xi_{\L}(-\r)= - \log{\Xi_{\L\cup\{x\}}(-\r)\over \Xi_{\L}(-\r)}
$$

\\But now we know that
$$
\Xi_{\L\cup\{x\}}(-\r)= \Xi_{\L}(-\r) -\sum_{S\subset \L}\r(\{x\}\cup S) \Xi_{\L\backslash S}(-\r)
$$
So
$$
{\Xi_{\L\cup\{x\}}(-\r)\over \Xi_{\L}(-\r)} =
1 -\sum_{S\subset \L}\r(\{x\}\cup S) {\Xi_{\L\backslash S}(-\r)\over \Xi_{\L}(-\r)}
$$
Now, let $S=\{y_1,\dots,y_k\}$ with of course $k=|S|$, then
$$
{\Xi_{\L\backslash S}(-\r)\over \Xi_{\L}(-\r)}= {\Xi_{\L\backslash \{y_1\}}(-\r)\over \Xi_{\L}(-\r)}
{\Xi_{\L\backslash \{y_1,y_2\}}(-\r)\over \Xi_{\L\backslash \{y_1\}}(-\r)}
\cdots
{\Xi_{\L\backslash \{y_1,\dots,y_k\}}(-\r)\over \Xi_{\L\backslash \{y_1,\dots,y_{k-1}\}}(-\r)}
$$
The  r.h.s. of equation above is a product of $k=|S|$ terms of the form
$$
{\Xi_{\tilde\L\backslash \{z\}}(-\r)\over \Xi_{\tilde\L}(-\r)}
$$
with $\tilde \L\subset \L$ and $z\in \tilde \L$. So by the induction hypothesis for each of these terms
we have
 $$
\log {\Xi_{\tilde\L\backslash \{z\}}(-\r)\over \Xi_{\tilde\L}(-\r)}= \log\Xi_{\tilde\L\backslash \{z\}}(-\r)-
\log\Xi_{\tilde\L}(-\r)=
$$
$$
= - \log\Xi_{\tilde\L}(-\r)+ \log\Xi_{\tilde\L\backslash \{z\}}(-\r)=|\Si^{\tilde\L}_{z}|(\r)\le a
$$
So, adopting the convention $\L\backslash \{y_1,\dots,y_{i-1}\}=\L$ for $i=1$,
$$
\log {\Xi_{\L\backslash S}(-\r)\over \Xi_{\L}(-\r)}=\sum_{i=1}^k\log
{\Xi_{\L\backslash \{y_1,\dots,y_i\}}(-\r)\over \Xi_{\L\backslash \{y_1,\dots,y_{i-1}\}}(-\r)}\le k\,a=|S|\,a
$$
and hence
$$
{\Xi_{\L\backslash S}(-\r)\over \Xi_{\L}(-\r)}\le e^{a|S|}
$$

Thus

$$
|\Si^{\L\cup\{x\}}_{x}|(\r)= -\log \Xi_{\L\cup\{x\}}(-\r)+  \log \Xi_{\L}(-\r)= - \log{\Xi_{\L\cup\{x\}}(-\r)\over \Xi_{\L}(-\r)}=
$$
$$
-\log \Bigg[1 -\sum_{S\subset \L}\r(\{x\}\cup S) {\Xi_{\L\backslash S}(-\r)\over \Xi_{\L}(-\r)}\Bigg]\le
-\log \Bigg[1 -\sum_{S\subset \L}\r(\{x\}\cup S) e^{a|S|}\Bigg]=
$$
$$
= -\log \Bigg[1 -\sum_{S\subset \L}\r(\{x\}\cup S) e^{a|S|}\Bigg]= -\log \Bigg[1 -{1\over e^a}\sum_{S\subset \L}\r(\{x\}\cup S) e^{a(|S|+1)}\Bigg]
$$
$$
\le -\log \Bigg[1 -{1\over e^a}(e^a-1)\Bigg]=a
$$
To complete the induction we have to show that the  claim is true when $\L$ contains just one point,
i.e. $\L=\{x\}$. If $\L=\{x\}$ then we have only to check the function

$$
\begn
|\Si^{\{x\}}_{x}|(\r) & = \sum_{n=1}^{\infty}{1\over n!}
\sum_{(\g_{1},\dots ,\g_{n})\in (\{x\})^n\atop \exists i:~x\in \g_i}
|\phi^{T}(\g_1 ,\dots , \g_n)|\;{\r(\g_1)}\dots{\r(\g_n)}\\
& =
\sum_{n=1}^{\infty}{1\over n!}
|\phi^{T}(\{x\} ,\dots , \{x\})|\;[{\r(\{x\})}]^n\\
&=
\sum_{n=1}^{\infty}{1\over n}
\;[{\r(\{x\})}]^n\\
&\le \sum_{n=1}^{\infty}{1\over n}
\;\Big[1-{1\over e^a}\Big]^n\\
&= -\log \Big[1-  \Big(1-{1\over e^a}\Big)\Big]\\
&=a
\egn
$$
where in the third line we have used that $|\phi^{T}(\{x\} ,\dots , \{x\})|=\,(n-1)!$ and
in the fourth line we used that, by \equ(gkindu), $\r(\{x\})\,\,\,\le 1-{1\over e^a}$. So
the induction is completed. $\Box$

\vv\v\vv

\chapter{Two systems in the cubic lattice}
\numsec=5\numfor=1
To show the utility of the polymer expansion in the study os discrete systems, we will consider in this chapter two systems on the $d$-dimensional cubic lattice $\mathbb{Z}^d$. We agree that $\mathbb{Z}^d$ is equipped with the usual ``graph distance" metric $|\cdot |$. Namely, given two
points $x$ and $y$ in  $\mathbb{Z}^d$, then $|x-y|$ denotes the graph distance between $x$ and $y$ (i.e. the number of nearest neighbor edges of the shortest path between $x$ and $y$).
\section{The Self repuslive Lattice gas}
The self repulsive  lattice gas in  $\Z^d$ is a system  formed by ``particles"  which may or may not  occupy   the  vertices $x$
of the unit cubic lattice $\Z^d$. In other words we associate to  any site $x$ of $\Z^d$ a variable $n_x$ taking values in the set $\{0,1\}$. When $n_x=1$ a particle is occupying the site $x$, and when $n_x=0$, the site $x$ is empty. The name self-repulsive is because  each site  can be occupied by at most one particle. Let us then suppose that each particle  occupying the site $x$ has an activity
$\l_x$  assumed to be constant, i.e. $\l_x=\l$ for all $x\in \Z^d$. Finally, let us suppose that particles in the self repulsive lattice gas interact via a pair potential $V(x,y)$ which we assume to be  symmetric $V(x,y)=V(y,x)$ (and, of course,  self repulsive, i.e. only one particle can occupy a site $x\in \Z^d$). Namely:
$$
V(x,x)=+\infty\,\,\,\,\,\,\,\,\,\ \mbox{for all $x\in \Z^d$}\Eq(la0)
$$
We will also assume that the pair potential
is summable (regular) in the following sense
$$
\sup_{x\in \Z^d}\sum_{y\in \Z^d\atop y\neq x}|V(x,y)|= J<\infty\Eq(la1)
$$
Note that the assumptions \equ(la1) and  \equ(la1) automatically guarantee that the pair potential
is stable with stability constant not greater that $J/2$. Indeed
for any $n\in \N$ and $(x_1,\dots,x_n)\in (\Z^d)^n$,
$$
\sum_{1\leq i<j\leq n}V(x_i ,x_j)\ge - {1\over 2} \sum_{(i,j)\in [n]^2\atop i\neq j}|V(x_i ,x_j)| = -
{1\over 2}\sum_{i=1}^n\sum_{i\in [n]\atop j\neq 1}|V(x_i ,x_j)|\geq -{J\over 2}n, \Eq(la2)
$$
\\The Grand Canonical partition function of such a lattice gas is
$$
Z_{\La}(\b ,\l)
=~\sum_{n=0}^{\i}{\l^n\over n!} \sum_{(x_1 ,\dots
,x_n)\in\La^n} ~e^{-\b\sum_{1\le i<j\le n}V(x_i ,x_j)}
$$
The specific case in which
$$
V_{\rm h.c.}(x,y)=\begin{cases}
+\infty & {\rm if} ~|x-y|\le 1\\
0 & {\rm otherwise}
\end{cases}
$$
belongs to the class of   pair potentials  satisfying assumptions \equ(la1) and \equ(la2).
so by the discussion done in Section \ref{elexe}, if $V(x_i ,x_j)=V_{\rm h.c.}(x_i ,x_j)$ then the finite volume pressure
$p_\L(\l)={1\over |\L|}\ln Z_{\La}(\l)$ (now independent from $\b$) converge absolutely  and it is bounded by $(2d-1)^{-1}$ uniformly in $\L$
as soon as
$$
\l\le \l_c\equiv {1\over 1+{(2d)^{2d}\over (2d-1)^{2d-1}}} \Eq(raggiozd)
$$
In other words, $p_\L(\b,\l)$ is analytic in the disk $|\l|\le \l_c$.
\subsection{Covergence by direct Mayer expansion}
\\As far as the general case is concerned (i.e. $V(x,y)$ regular and stable pair potential $\b$), we can obtain the  convergent condition for the high tempretaure/low activity phase of such lattice gas by directly perform a Mayer expansion of the
grand canonical partition function similarly to what we did in Secrion\ref{mayersec}. Let us sketch how.
\index{partition function!grand canonical}
\\The Grand Canonical partition function is
$$
Z_{\La}(\b ,\l)
=~\sum_{n=0}^{\i}{\l^n\over n!} \sum_{(x_1 ,\dots
,x_n)\in\La^n} ~e^{-\b\sum_{1\le i<j\le n}V(x_i ,x_j)}
$$
where of course we are assuming in that
$V(x,x) =+\i$. By Mayer expansion on the factor
$e^{-\b\sum_{i<j}V(x_i ,x_j)}$ we find, as usual
$$
\ln Z_{\La}(\b ,\l) ~=~ \sum_{n=1}^{\i}{\l^n\over n!}\sum_{(x_1
,\dots ,x_n)\in\La^n} ~\sum_{g\in { G}_n}\prod_{\{i,j\}\in
E_g}(e^{-\b V(x_i ,x_j)}-1)
$$
Then, by Theorem \ref{teoPY2} and tree graph inequality \equ(bteo1b), we have, using stability also condition \equ(la2),
$$
|\sum_{g\in { G}_n}\prod_{\{i,j\}\in E_g}(e^{-\b V(x_i ,x_j)}-1)|\le
e^{ {\b  J\over 2}n}\sum_{\t\in { T}_n}\prod_{\{i,j\}\in E_\t}(1- e^{-\b |V(x_i ,x_j)|})
$$
hence
$$
|\ln Z_{\La}(\b ,\l)| \le \sum_{n~=~1}^{\i}{|\l|^n\over
n!}\sum_{(x_1 ,\dots ,x_n)\in\La^n} ~e^{ {\b  J\over 2}n}\sum_{\t\in {
T}_n}\prod_{\{i,j\}\in E_\t}(1- e^{-\b |V(x_i ,x_j)|})\le
$$
$$\le
\sum_{n~=~1}^{\i}{|\l|^n\over n!} ~e^{ {\b  J\over 2}n}\sum_{\t\in { T}_n}
\sum_{(x_1 ,\dots ,x_n)\in\La^n}\prod_{\{i,j\}\in E_\t}(1- e^{-\b |V(x_i ,x_j)|})
$$
but similarly to what we saw in Proposition \ref{integr}, formula \equ(treintg) it is not difficult to show that
$$
\sum_{(x_1 ,\dots ,x_n)\in\La^n}\prod_{\{i,j\}\in E_\t}(1- e^{-\b |V(x_i ,x_j)|})
\le |\L| \left[\sup_{x\in \Z^d}\sum_{y\in \Z^d}(1-e^{-\b |V(x,y)|})\right]^{n-1}
$$
note that the sum over $y$ after the sup includes also $y=x$ where $V(x,x)~=~\i$.
Observing that
$$
\sup_{x\in \Z^d}\sum_{y\in \Z^d\atop y\neq x}(1-e^{-\b |V(x,y)|})\le \sup_{x\in \Z^d}\sum_{y\in \Z^d\atop y\neq x}\b |V(x,y)|\le \b J
$$
we get
$$
\sum_{(x_1 ,\dots ,x_n)\in\La^n}\prod_{\{i,j\}\in E_\t}(1- e^{-\b |V(x_i ,x_j)|})
\le |\L| \left[1+ \b J\right]^{n-1}
$$
and therefore
$$
\begn
|\ln Z_{\La}(\b ,\l)| & \le \sum_{n=1}^{\i}{|\l|^n\over n!} ~e^{ {\b  J\over 2}n}|\L|[1+ \b J]^{n-1}n^{n-2}\\
 & \le \sum_{n=1}^{\i}{|\l|^n\over n^2} ~e^{ {\b  J\over 2}n}|\L|[1+ \b J]^{n-1}{n^{n}\over n!}\\
& \le \sum_{n=1}^{\i}{|\l|^n\over n^2} ~e^{ {\b  J\over 2}n}|\L|[1+ \b J]^{n-1}e^n
\egn
$$
thus in this case the condition for the convergence is
$$
|\l| e^{ {\b  J\over 2}+1}[1+ \b J] <1 \Eq(!!!b)
$$
This condition is quite unsatisfactory since it says that, for any temperature, even very high (i.e even for $\b$ very small),
one needs to set the activity smaller than $1/e$ to ensure convergence.

\subsection{High temperature  polymer expansion of the lattice gas}
\\We now obtain the convergence condition for the same $\ln Z_{\La}(\b ,\l)$ performing a first example of high temperature polymer expansion. We will get a much satisfactory  bound and this will illustrate
quite well how convenient can be to perform polymer expansion in discrete systems when
possible!

\\The
{\it Grand canonical partition function} of the same  lattice gas enclosed
in $\La\subset \Z^d$, with activity $\l$, inverse temperature $\b$,
interacting via a pair potential $V(x,y)$ ($x$ and $y$ sites in
$\La$) such that $V(x,x)~=~\infty$ (a site $x$ can be occupied at most by one particle) can be written as follows

$$
\begin{aligned}
Z_{\La}(\b ,\l)~
& =~\sum_{n=0}^{\i}{\l^n\over n!} \sum_{(x_1 ,\dots
,x_n)\in\La^n} ~e^{-\b\sum_{i<j}V(x_i ,x_j)}\\
&=~\sum_{n=0}^{|\La|}{\l^n\over n!} \sum_{(x_1
,\dots ,x_n)\in\La^n\atop x_i\neq x_j} ~e^{-\b\sum_{i<j}V(x_i
,x_j)}\\
&= ~\sum_{n=0}^{|\La|} \l^n\sum_{\{x_1
,\dots ,x_n\}\subset \L}  ~e^{-\b\sum_{i<j}V(x_i
,x_j)}\\
&=~\sum_{S\subset \L} \l^{|S|} ~e^{-\b\sum_{\{x,y\}\subset S}V(x,y)}
\end{aligned}
$$
Define then for $x\in \La$ the variable  $n_x$ taking values in the set $\{0,1\}$  ($n_x$ can be interpreted as the occupation number of the site $x$: $n_x=0$ means that
the site is empty and $n_x=1$ means that the site is occupied).

\\We can rewrite the partition function above by considering that to each $S\subset \L$ we can associate a unique function $\bm n_\L:\L\to \{0,1\}: x\mapsto n_x$ such that $\bm n^{-1}(1)=S$ (we remind that $\bm n_\L$ is usually called a configuration of the lattice gas in $\L$ and it specifies which sites are occupied by  a particle and which sites are empty). We let $N_\L$ be the set of all such functions. Clearly, if $|N_\L|$ denotes the cardinality of $N_\L$ and
$|\L|$ is the number of sites in $\L$, then  $|N_\L|=2^{|\L|}$. Moreover, in general, if $R\subset \mathbb{Z}^d$, we denote by
$\bm n_R$ a possible configuration  in $R$ and by $N_R$ the set
of all possible configurations $n_R$.

\\As remarked earlier, there is a one to one correspondence between $\bm n_\L\in N_\L$ and $S\subset \L$ by defining $S(n_\L)=\{x\in \L: n_x=1\}$.
Therefore we can write
$$
\begn
Z_{\La}(\b ,\l) &
=~\sum_{S\subset \L}\,\l^{|S|} ~e^{-\b\sum_{\{x,y\}\subset S}V(x,y)}\\
&=~\sum_{\bm n_\L\in N_\L}
\l^{\sum_{x\in\La}n_x}
e^{-\b\sum_{\{x,y\}\subset\La}n_xn_y V(x,y)}
\egn
\Eq(1.7z)
$$

\\Expand now the exponential in  \equ(1.7z)
\vskip.2cm
$$
e^{-\b\sum_{\{x,y\}\subset\La}n_x n_y V(x,y)}~=~
\prod_{\{x,y\}\subset\La} [e^{-\b n_x n_y
V(x,y)}-1+1]~=~
$$
$$
~=~ \sum_{s=1}^{|\L|}\sum_{\{R_1 ,\dots ,R_s\}\in\pi(\La)}\r(R_1)\cdots\r(R_r)
$$
where $\pi(\La)~=~$ set of all partitions of $\La$ (so that $\{R_1 ,\dots ,R_s\}$ is a partition of $\L$), and
\vskip.3cm
$$
\r(R)~=~
\begin{cases} 1 &{\text if} |R|~=~1\\
\displaystyle{\sum\limits_{g\in G_R}\prod\limits_{\{x,y\}\in
E_g}[e^{-\b n_x n_y V(x,y)}-1]} &{\text if} ~ |R|\geq 2\
\end{cases}
$$
where $G_R$ is the set of connected graph with vertex set $R$.

\nin Thus \equ(1.7z) can be written as, \vskip.5cm
$$
\begn
Z_{\La}(\b ,\l)  &=~  \sum_{\bm n_\L\in N_\L}
\l^{\sum_{x\in\La}n_x}\sum_{s=1}^{|\L|}
\sum_{R_1 ,\dots,R_s\in\pi(\La)}\r(R_1)\cdots\r(R_s)\\
&=~ \sum_{s=1}^{|\L|}\sum_{R_1 ,\dots ,R_s\in\pi(\La)}
\sum_{\bm n_{R_1}\in N_{R_1}}\r(R_1)\l^{\sum\limits_{x\in R_1}n_x} \cdots
\!\!\!\sum_{\bm n_{ R_s}\in N_{R_s}}\!\!\!\r(R_s)\l^{\sum\limits_{x\in R_s}n_x}
\egn
$$
Define now

$$
{\tilde \r}(R)\,~=~\, \sum_{n_{R}\in N_R}\;\r(R)\;\l^{\sum_{x\in R}\;n_x}
$$
\vskip.5cm
\\Now observe that, for any $g\in G_R$, the factor
$$
\prod_{\{x,y\}\in E_g}[e^{-\b n_x n_y V(x,y)}-1]\neq 0
$$
is different from zero only for the configuration $\bm n_R$ such that $n_x =1$ for all $x\in R$. Therefore
$$
{\tilde \r}(R)~=~
\begin{cases}
1+\l &{\rm if}~ |R|=1\\
\l^{|R|}\sum\limits_{g\in
G_R}\prod\limits_{\{x,y\}\in E_g}[e^{-\b V(x,y)}-1] &{\rm if}~ |R|\geq 2
\end{cases}
$$
Defining  now
$$
\z(R)~=~
\begin{cases}
1 &{\text if}~ |R|~=~1\\
{\l^{|R|}\over (1+\l)^{|R|}}
\sum\limits_{g\in G_R}\prod\limits_{\{x,y\}\in E_g}[e^{-\b V(x,y)}-1] &{\text if}~ |R|\geq 2
\end{cases}
\Eq(defact)
$$
we obtain
\bea\nonumber Z_{\La}(\b ,\l) & = & (1+\l)^{|\La|}\sum_{s=1}^{|\L|}
\sum_{R_1 ,\dots ,R_s\in\pi(\La)} \z(R_1)\cdots\z(R_s)\\\nonumber
\\\nonumber
&= & (1+\l)^{|\La|}\sum_{s\ge 0}\sum_{\{R_1 ,\dots ,R_s\}\subset \L\atop |R_i|\ge 2,~R_i\cap R_j=\0} \z(R_1)\cdots\z(R_s)
\eea
where the term $s=0$ in the last sum is equal to 1 and corresponds to the partition of $\L$ in $|\L|$ subsets each of cardinality 1. Clearly
$$
\begn
\Xi_{\La}(\b ,\l)
& \doteq
\sum_{s\ge 0}\sum_{\{R_1 ,\dots ,R_s\}\subset \L\atop |R_i|\ge 2,~R_i\cap R_j=\0} \z(R_1)\cdots\z(R_s)\\
&=~1+\sum_{n\geq 1}{1\over n!} \sum_{R_1 ,\dots
,R_n\atop |R_i|\geq 2,\, R_i\cap R_j~=~\emptyset}
\z(R_1)\cdots\z(R_n)
\egn
$$
I.e. $\Xi_{\La}(\b ,\l)$ is the grand canonical partition function of a polymer gas in which polymers are
finite subsets of $R\subset \mathbb{Z}^d$ such that $|R|\ge 2$ and with activity $\z(R)$.

\\We finally  have
$$
Z_{\La}(\b ,\l)=  (1+\l)^{|\La|}\Xi_{\La}(\b ,\l)
$$
and hence
$$
\ln Z_{\La}(\b ,\l)=|\L| \ln (1+\l) + \ln \Xi_{\La}(\b ,\l)
$$

\\Hence the pressure of the lattice gas ${1\over |\L|}\log Z_{\La}(\b ,\l)$ converges absolutely  uniformly in $\La$ if ${1\over |\L|}\log
\Xi_{\La}(\b ,\l)$ does.

\\I.e., by \equ(usht1) conditions if,
$$
\sum_{n\ge 2}e^{an} \sup_{x\in \Z^d}\sum_{R\subset \Z^d:\atop x\in R, ~|R|=n} |\z(R)|\le e^a-1\Eq(conlat)
$$
or a little bit  ``roughly''  (i.e choosing $a=\ln 2$ which is not an optimal choice)
$$
\sum_{n\ge 2}2^{n} \sup_{x\in \Z^d}\sum_{R\subset \Z^d:\atop x\in R, ~|R|=n} |\z(R)|~\le ~1\Eq(KP)
$$

\\Recalling definition \equ(defact) of the activity $\z(R)$
and setting ${\tilde \l} ~=~ {\l\over (1+\l)}$ we have
\def\lt{{\tilde \l}}
\vskip.5cm
$$
\sup_{x\in Z^d}\sum_{R\subset \Z^d:\atop x\in R, ~|R|=n} |\z(R)|~=~
{|\lt|}^n\sup_{x\in Z^d}\sum_{R\subset \Z^d:\atop x\in R, ~|R|=n}
\bigg|\sum_{g\in G_R}\prod_{\{x,y\}\in E_g}[e^{-\b V(x,y)}-1]\bigg|
$$
Now note that
$$
\sum_{R\subset \Z^d:\atop x\in R, ~|R|=n} ~=~{1\over (n-1)!}\sum_{(x_1,\dots ,x_n)\in \Z^{dn}\atop x_1=x,\;x_i\ne x_j\ {\rm for}\ i\ne j}
$$
Hence
$$
\sup_{x\in \Z^d}\sum_{R\subset \Z^d:\atop x\in R, ~|R|=n} |\z(R)|~=
$$
$$= {{|\lt|}^n\over (n-1)!}\sup_{x\in\Z^d}
\sum_{(x_1,\dots ,x_n)\in \Z^{dn}\atop x_1=x,\;x_i\ne x_j\ {\rm for}\ i\ne j} \bigg|\sum_{g\in
G_n}\prod_{\{i,j\}\in E_g}[e^{-\b V(x_i,x_j)}-1]\bigg| \Eq(reorg)
$$

\\Once again,  by Theorem \ref{teoPY2}, tree graph inequality \equ(bteo1) and using stability condition
\equ(la2) we have
the estimate
$$
 \bigg|\sum_{g\in
G_n}\prod_{\{i,j\}\in E_g}[e^{-\b V(x_i,x_j)}-1]\bigg|\leq
e^{ {\b  J\over 2}n}\sum_{\t\in T_{n}}
\prod_{\{i,j\}\in E_\t}(1-e^{-\b |V(x_i,x_j)|})
$$
Now observe that, for any $\t\in T_n$
$$
\begn
\sum_{(x_1,\dots ,x_n)\in \Z^{dn}\atop x_1=x,\;x_i\ne x_j\ {\rm for}\ i\ne j}
\prod_{\{i,j\}\in E_\t}(1-e^{-\b |V(x_i,x_j)|})& \le~ \Big[\sup_{x\in \Z^d}\sum_{y\in \Z^d\atop y\neq x}\b |V(x,y)|\Big]^{n-1}\\
&=~
[\b J]^{n-1}
\egn
$$
Therefore, recalling also  Cayley formula \equ(cay2), we get

$$
\begn
\sup_{x\in Z^d}\sum_{R\subset \Z^d:\atop x\in R, ~|R|=n} |\z(R)|& \le~
 {n^{n-2}\over (n-1)!}[\b J]^{n-1} (|\lt| e^{\b J\over 2})^n\\
& \le ~ |\lt| e^{\b J\over 2} \Big[|\lt| e^{{\b J\over 2}+1}\,\b J\Big]^{n-1}
\egn
$$

\\Hence the condition for convergence \equ(conlat) is satisfied if
$$
2|\lt| e^{\b J\over 2}\sum_{n\ge 2} \Big[2|\lt| e^{{\b J\over 2}+1}\,\b J\Big]^{n-1}~\le ~1\Eq(!!!)
$$
which, after some calculus,  is satisfied if
$$
|\lt| e^{{\b J\over 2}+1}\le {1\over \b J}{1\over 1+ \sqrt{1+{4\over e\b J}}}\Eq(OKKK)
$$
This condition is much better  than \equ(!!!b).
Indeed, denoting by   $\b_c$
the unique solution of the equation
$$
e^{{\b J\over 2}+1}~=~{1\over \b J}{1\over 1+ \sqrt{1+{4\over e\b J}}}
$$
and recalling that $\lt =\l/(1+\l)$,
we get that, as soon as $\b\le \b_c$,
the pressure
of the lattice gas is an absolute convergent expansion for all
$\l$  real. The temperature $\b_c$ is a first example of critical temperature: below $\b_c$ the lattice  gas  is  in a pure phase for any  activity $\l>0$.

\section{The BEG model at low temperature}
\vv

\def\XX{{\rm X}}\def\YY{{\rm Y}}
As an example, we consider the Blume-Emery-Griffiths (BEG)  model
 in the low temperature disordered
phase. The model is defined on the cubic unit lattice in
$d$-dimensions $\Z^d$ by supposing that in each vertex $x\in \mathbb{Z}^d$
there is a spin variable $\s_x$ taking values in the set
$\{0,-1,+1\}$. Given  a volume $\L\subset \mathbb{Z}^d$ (typically, a cube centered at the origin), a spin configuration in $\L$
is a function $\bm \s: \L \to \{0,-1,+1\}: x\mapsto \s_x$, We denote by  $\Sigma_\L$ the set of all spin configurations in $\L$ (note that $|\Si_\L|=3^{|\L|}$). Given
$\bm \s\in \Si_\L$,  its  energy $ H_\L(\bm \s)$  is given by the Hamiltonian
$$
H_\L(\bm \s)=-\sum_{\{x,y\}\subset \L\atop |x-y|=1}[\s_x\s_y+ \YY\s^2_x\s^2_y]+ \XX\sum_{x\in \L}\s_x^2
\Eq(Ham)
$$
where  in general $\XX$ and  $\YY$ are real parameters. In other words spins interacts via a nearest neighbor pair potential.
We will further suppose that
$$
\XX>d(1+|\YY|) \Eq(dis)
$$
With this hypothesis it is not difficult to check that  the minimum of  $H_\L(\bm \s)$ is reached at the configuration $\bm \s$ such that   $\s_x=0$ for all $\x\in \L$. This
region is called the disordered phase. Via the  Hamiltonian \equ(Ham) we can  assign a probability $Prob(\bm \s)$ to each configuration $\bm \s$ in $\L$.  Namely,
$$
Prob(\bm \s)= {e^{ -\b H_\L(\bm \s)}\over Z_\L(\b)}
$$
where $\b>0$ is the inverse temperature and
$$
Z_\L(\b)=\sum_{\bm \s\in \Si_\L}e^{-\b H (\bm \s)}\Eq(parttt)
$$
is the partition function of the system. The thermodynamics of the BEG model  is recovered from the the partition function $Z_\L(\b)$. namely, the (finite volume) free energy
of the model is given by
$$
f_\L(\b)= {1\over |\L|} \ln Z_\L(\b)\Eq(freng)
$$

\\We'll show in this section  that the  free energy \equ(freng) of the BEG model  is analytic as function of $\b$ for $\b$ sufficiently large (i.e low temperature). This is achieved
 by rewriting  $Z_\L(\b)$ defined in  \equ(parttt) as the partition function of a polymer system of the type considered in the previous sections. Hence the free energy of the BEG model
coincides with the pressure of this polymer system.
Then, using theorem \ref{teo42}, we will prove that, in the disordered phase \equ(dis)  and with the assumptions
\equ(dis), the pressure of such  polymer system converges absolutely for $\b$ large enough (i.e.  for sufficiently low temperatures).

\\In order to do that, let us  define,
for a fixed spin configuration $\bm \s\in \Si_\L$,
the subset of $\L$ given
by  $P=\{x\in \L: \s_x\neq 0\}$. We view this set as the union of its  connected components,
i.e. $P=\cup_{i=1}^n p_i$ with each set $p_i\subset \L$  being connected in the sense that
for each partition $A,B$ of $p_i$ (i.e. $A\cup B=p_i$ and $A\cap B=\emptyset$) there exist $x\in A$ and $y\in B$ such that
$|x-y|=1$.
The configuration $\bm \s$ induces a (non zero) spin configuration $\bm s_{p_i}$ on each connected component
$p_i$ of $P$ which is a function
$\bm s_{p_i}:p_i\to \{-1,+1\}: x\mapsto s_x$. The pairs ${\bm p}_i=(p_i,\bm s_{p_i})$ are the polymers associated to the configuration $\bm \s$.

\\We denote by $\QQ$ the set of all subsets of $\Z^d$ which are connected and finite.
Given $p\in \QQ$, we denote by $S_p$ the set of all (non zero)  spin configurations $\bm s_p$ in $p$ (note that $|S_p|=2^{|p|})$.
By construction the correspondence $\bm \s\leftrightarrow \{\bm p_1,\dots,\bm p_n\}$ is one to one.
The distance between two polymers $\bm p=(p,\bm s_p)$ and $\tilde{\bm p}=(\tilde p,\bm s_{\tilde p})$
 is the number $d(p,\tilde p)=\min_{x\in p,\,\,y\in \tilde p}|x-y|$. Note that if $ \{\bm p_1,\dots,\bm p_n\}$ are the polymers
associated to the configuration $\bm \s$, then necessarily $d(p_i,p_j)\ge 2$ for all $\{i,j\}\subset \{1,\dots, n\}$.

\\With these definitions, given $\bm \s\in \Si_\L$ such that $\bm \s\leftrightarrow \{\bm p_1,\dots,\bm p_n\}$,   we can rewrite the Hamiltonian \equ(Ham)  as
$$
H_\L(\bm s)= \sum_{i=1}^n
\Big[\XX|p_i|-A(\bm p_i)\Big]
$$
where
$$
A(\bm p_i)=\sum_{\{x,y\}\subset p_i\atop |x-y|=1}  [ s_xs_y+\YY]\Eq(Ap)
$$
Observe now that to sum over configuration $\bm\s\in \Si_\L$ is equivalent to sum over polymers configurations
$\{\bm p_1,\dots,\bm p_n\}$ in $\L$ such that $n\ge 0$ ($n=0$, i.e. no polymers, is the lowest energy configuration $\s_x=0$ for all $x\in \L$)
and $d(p_i,p_j)\ge 2$ for all pairs $\{i,j\}\subset \{1,\dots,n\}$.
Hence the partition function of the system, at inverse temperature $\b$ and with free boundary conditions, is rewritten as
$$
\begn
Z_\L(\b) & =~\sum_{\s_\L}e^{-\b H (\s_\L)}\\
& =~1~+~\sum_{n\ge 1}{1\over n!}\sum_{(\bm p_1,\dots,\bm p_n)\in \PP^n \atop p_i\subset \L,\; d(p_i,p_j)\ge 2}\r_{\bm p_1}\dots\r_{\bm p_n}
\egn
\Eq(parttt2)
$$
where
$$
\r_{\bm p}=e^{-\b \big[\XX|p|-A(\bm p)\big]}\Eq(actp)
$$
and $\PP$ is the set
$$
 \PP=\Big\{\bm p=(p,\bm s_p): \mbox{$p\in \QQ$ and $\bm s_p\in S_p$}\Big\}\Eq(polyspace)
$$
Thus we have rewritten the partition function $Z_\L(\b)$ of the BEG model as the partition function of an hard-core  polymer gas in which the space of polymers is the set
given in \equ(polyspace). Each polymer $\bm p\in \PP$ has activity $\r_{\bm p}$ given by \equ(actp) and the hard core interaction $V(\bm p_i,\bm p_j)$  between pair  of polymers
$p_i, p_j$ is
$$
V(\bm p_i,\bm p_j)=\begin{cases}0&{\rm if}~ d(p_i,p_j)\ge 2\\
+\infty & {\rm otherwise}
\end{cases}
\Eq(extlongr)
$$

\\With these definitions it is immediate to see that r.h.s. of \equ(parttt2) can be written as
$$
\begn
Z_\L(\b)& = 1+\sum_{n\ge 1}{1\over n!}
\sum_{(\bm p_1,\dots,\bm p_n)\in \PP^n_\L}
\r_{\bm p_1}\dots\r_{\bm p_n}
e^{-\sum_{1\le i<j\le n}V(\bm p_i,\bm p_j)}\\
&=
1+\sum_{n\ge 1}{1\over n!}
\sum_{(\bm p_1,\dots,\bm p_n)\in \PP^n_\L\atop \bm p_i\sim \bm p_j}
\r_{\bm p_1}\dots\r_{\bm p_n}
\egn
\Eq(parttt3)
$$
where we have denoted shortly $\PP_\L=\{(p,\bm s_p)\in \PP: \;p\subset \L\}$ and  $ \bm p_i\sim \bm p_j$ means that $d(p_i,p_j)\ge 2$ (and similarly
ahead we use the symbol  $ \bm p_i\nsim \bm p_j$ when  $d(p_i,p_j)< 2$).

\\So, by Theorem \ref{coro:1},  the pressure of this polymer gas (i.e. the free energy of our BEG model) is absolutely convergent
if there exist $\{\m_{\bm p}\}_{\bm p\in \PP}$ such that
such that
$$
\r_{\bm p}\le {\m_{\bm p}\over \Xi_{\bm p}(\bm \mu) }
 \;\;\;\,\,\,\,\,\,\,\forall \g\in \PP \Eq(muRv2)
$$
with
$$
\Xi_{\bm p}(\bm \mu)  = 1+\sum_{n\ge 1} \frac{1}{n!}\,
\sum_{(\bm p_{1},\dots ,\bm p_{n})\in\PP^n\atop \bm p_i\not\sim \bm p,\,\,\bm p_{i}\sim\bm p_{j}}
{\mu_{\bm p_1}}\dots{\mu_{\bm p_n}}
$$
Given $p\in \QQ$ let $\partial p=\{x\in \Z^d\setminus p: \exists y\in p~s.t.~ |x-y|=1 \}$  be the the boundary of $p$ and let
$\bar p = p\cup \partial p$.
\def\pp{{\bm p}}\def\pt{{\tilde{\bm p}}}
Given $\a>0$, we set
$$
\m_\pp=\r_\pp e^{\a|\bar p|}\def\pp{{\bm p}} \Eq(mpp)
$$
Hence, inserting \equ(mpp) in \equ(muRv2), we obtain that
the pressure of such contour gas can be written in terms of an absolutely convergent series if, for some $\a>0$
$$
\Xi_{\bm p}(\bm \mu) \le e^{\a|\bar p|}\Eq(converg2)
$$
Now, similarly to what we did in Section \ref{431} we can bound
$$
\Xi_{\bm p}(\bm \mu)\le \Bigg[1+\sup_{x\in \Z^d}\,\sum_{\bm p\in  \PP\atop x\in p}\r_{\bm p}e^{\a|\bar p|}\Bigg]^{|\bar p|}
$$
so that condition \equ(converg2) is satisfied if
$$
\sup_{x\in \Z^d}\sum_{\bm p\in  \PP\atop x\in p}\r_{\bm p}e^{\a|\bar p|}\le e^\a-1\Eq(alfs)
$$
Now, recalling \equ(actp),  observe that
$$
A(\bm p_i)=\sum_{\{x,y\}\subset p_i\atop |x-y|=1}  [s_xs_y+\YY]\le d(1+|\YY|)|p_i|
$$
and so
$$
\r_{\bm p}\le e^{-\b D|p|}
$$
where
$$
D=  \XX-d(1+|\YY|)
$$
and observe that $D>0$ by hypotheis \equ(dis). Therefore, choosing for simpliciy $\a=\ln2$, condition \equ(alfs) is satisfied if
$$
\sup_{x\in \Z^d}\sum_{\bm p\in  \PP\atop x\in p}e^{-\b D|p|}2^{|\bar p|}\le 1\Eq(alfs2)
$$
Now observe that surely $|\bar p|\le 2d|p|+2\le 3d|p|$. Therefore we get that  condition \equ(alfs2) is satisfied if
$$
\sum_{n=1}^\infty (2^{3d+1}e^{-\b D})^n \sup_{x\in \Z^d}\sum_{\bm p\in  \PP\atop x\in p,\;|p|=n}1 \le 1\Eq(quasi)
$$
Now we have that
$$
\sup_{x\in \Z^d}\sum_{\bm p\in  \PP\atop x\in p,\;|p|=n}1=2^{n} \sup_{x\in \Z^d}\sum_{ p\in  \QQ\atop x\in p, \;|p|=n}1=2^nC_n
$$
where $C_n$ is the number of connected sets of vertices of $\Z^d$ with cardinality $n$
containing a fixed point $x$ of $\Z^d$ (this number does not depend on $x$ and can be chosen to be the orgin)
 and the extra factor $2^n$  counts the number of functions $\s_p$
from $p$ to $\{-1,+1\}$ when $|p|=n$. $C_n $ can be easily bounded by $C^n$ for some $C$, e.g. one can
take $C_n\le (4d)^n$. So condition \equ(quasi) becomes
$$
\sum_{n=1}^\infty (d2^{3d+3}e^{-\b D})^n\le 1
$$
which is satisfied if
$$
\b \ge \b^* \doteq {\ln (d2^{3d+4})\over D}\Eq(finale)
$$
Therefore  for all  $\b\ge \b^*$ the free energy  \equ(freng) of the BEG mode under the condition \equ(dis) is analytic in $\b$ an no phase transitions occur.

\chapter{Spin systems in a lattice}
\numsec=6\numfor=1

We denote by $ \mathbb{Z}^d$ the unit lattice
in $d$ dimensions, i.e. an element  $x\in  \mathbb{Z}^d$ is  an
ordered  $d$-ple $x~=~(n_1 ,\dots ,n_d): n_i\in\mathbb{Z}$. We will
consider $ \mathbb{Z}^d$ embedded in the standard way in
$\mathbb{R}^d$. Let $\L$ be a  finite  subset of ${\Z}^d$,
e.g.usually one takes al set $\L$ the set $\L~=~\{ (n_1 ,\dots
,n_d): n_i\in\mathbb{Z}; -{L\over2} \leq n_i \leq {L\over2},
\forall i~=~1,\dots ,d\}$, i.e. all sites of the unit lattice inside
a cube with the center in the origin and with size $L$.
\index{spin system}
\\The definition of a {\it spin system} on ${\Z}^d$ begins by associating to each
site $x\in \L$ a  variable $\phi_x$, (the "spin" at the site $x$).
The spin $\phi_x$  takes values in some space $\Omega$ (which is the same for all $x\in \Z^d$) equipped
with a probability measure $d\m(\phi_x)$ and with a norm
$\|\phi_x\|$. The single spin probability space $\O$ depends of the type of
system we are treating and, in general, its structure can be very
different from case to case. The probability measure $d\m(\phi_x)$
is called sometimes the {\it single spin a priori distribution}.

\\The simplest choice for which $\O$ is assuming that $\O$ is a {\it finite set},
with $d\m$ assigning equal probability to any value of the spin in
this set $\O$. E.g.  the choice $\O~=~\{1, -1\}$ (i.e. the spin
$\phi_x$ at a site $x$ can take only the two values $\phi_x~=~\pm
1$) and $d\m(\phi_x) ~=~{1\over 2}\d_{\phi_x,+1}+{1\over
2}\d_{\phi_x,-1}$, (i.e. the probability that the spin $\phi_x$ at
the site $x$ has the value $+1$ is ${1\over 2}$ and so for
$\phi_x~=~-1$) corresponds to {\it Ising systems}. For such systems
the random variable $\phi_x$ is usually denote in the literature
by  $\s_x$ and when  $\s_x~=~1$ people say that the spin is ``up"
while $\s_x~=~-1$ is referred as spin ``down".

\index{lattice gas}
\\The choice $\O~=~\{0,1\}$  and
$d\m(\phi_x) ~=~{1\over 2}\d_{\phi_x,+1}+{1\over 2}\d_{\phi_x,0}$
corresponds to the {\it Lattice gas} considered in the previous sections. In this case the
variable $\phi_x$ is usually denoted by $n_x$ and interpreted as
an occupation number, namely if $n_x=1$ then we say that a
particle is present at the site $x$ and the site $x$ is occupied,
while if $n_x=0$ then no particle is present at $x$ and the site
$x$ is empty.

\\The choice  $\O~=~\{1,2,\dots, q\}$ with $q$ positive integer
and $d\m(\phi_x) ~=~{1\over q} [\d_{\phi_x,1}+{1\over
q}\d_{\phi_x,2}+\dots + {1\over q}\d_{\phi_x,q}]$ correspond to
the so called {\it Potts model}.

\\A less simple choice for $\O$ is to assume $\O$ infinite but  compact. A typical example
is the (classical) Heisenberg model for which $\Omega~=~
\{\phi_{x}\in \mathbb{R}^d: \Vert\phi_{x}\Vert ~=~1\}$, i.e.
$\Omega$ is the  unit sphere in $d$ dimensions, and $d\m(\phi_x)~=~[
{\rm volume}~ \O]^{-1} d\O(\phi_x)$, where $d\O(\phi_x)$ is the
usual Rienmann measure on the sphere.

\\Spin systems for which   $\Omega$ is a compact space
(in the sense of the norm $\|\phi_x\|$), are usually called {\it
bounded spin systems}. But we could also consider the case in
which $\Omega$ is  not a compact space. Spin systems for which the
single spin space $\O$ is a non compact space are usually known as
{\it unbounded spin systems}. In general bounded spin systems are
treated easier than unbounded spin systems.

\\As an example of a  unbounded spin system we will consider ahead the so called
$\l\phi^4$ field theory, for which $\Omega~=~ \mathbb{R}$, (i.e. the
spin take values in $\mathbb{R}$) and
$$d\m(\phi_x)~=~{Z}^{-1}
e^{-U(\phi_x)} d\phi_x$$
where  $U(\phi_x)~=~ \l\phi^4$ with $\l>0$
and $Z~=~\int_{ \mathbb{R}}e^{-U(\phi_x)} d\phi_x$. In this case
$\phi_x$ is interpreted as the value of the field at the site $x$.

\vskip.2cm

\\A {\it configuration} $\phi$ of a general spin  system in $\Z^d$ is given once it has been specified
the value of $\phi_x\in\Omega$ for all $x\in \Z^d$. More
precisely, a configuration $\phi$ is a function $\phi: \Z^d\to
\O$. The set of all configuration $\phi$ of the system is the set
$\Phi=\O^{\Z^d}$. So a configuration  $\phi$ is
an element of a very big space $\Phi$ and in general it is
mathematically problematic to define a probability measure on this
space. Even the product measure $\prod_{x\in \Z^d} \m(\phi_x)$ can
be difficult to define  rigorously  depending on the structure of
the space $\O$ (e.g. in case that $\O$ in non compact).

\\Hence, next step towards the definition of a spin system in the lattice it to fix some {\it finite}
subset $\L\subset \Z^d$ and then consider our system of spins
restricted to this set $\L$. So a {configuration} $\phi_\L$ of the
system restricted to  $\L$ is given once it has been specified the
value of $\phi_x\in\Omega$ for all $x\in \L$.
We denote by $\Phi_\L=\prod_{x\in \L}\O$ the set of all spin configurations restricted to  $\L$.
Now the product measure
$d\m(\phi_{\L})=\prod_{x\in\L}d\m(\phi_x)$ in $\Phi_\L$ is
rigorously defined without no particular difficulties. If $X$ is
any subset of  $\L$, then symbols as $\Phi_X$ and $d\m(\phi_{X})$
are defined in the obvious way.


\\The boundary conditions are in general chosen by
fixing a spin configuration $\phi_{\L^c}$ for the spins outside  $\L$ (i.e. in the set $\L^c~=~\VU-\L$). Such choice can be sometimes
limited due to the structure of the  single spin space $\O_x$. But other choices are also possible. E.g. two very popular choices
of the boundary conditions are the so-called {\it free  boundary conditions}  and {\it periodic boundary conditions}.
In particular,  the free boundary condition choice is to say that
``there is no outside", i.e.  simply to study the spin system in the finite volume $\L$ with the identification
$\VU=\L$.   Also the periodic boundary conditions follows the same philosophy. However to be stated $\VU$ has to possesses some additional structure.
For example, if $\VU=\Z^d$, one has imposed boundary  conditions by saying that $\L$ is a torus in  $\Z^d$.


Finally, the Hamiltonian or interaction between spins, denoted here by $H(\phi_\L, \phi_{\L^c} )$, is
in general  expressed in terms of a {\it potential}
$\Psi$, i.e.  an assignment to each nonempty subset $X\subset \VU$  of real valued measurable function
$\Psi(X,\phi_X)$ definied in $\Omega_{X}$.  Once   $\Psi$ has been given, and
boundary conditions are give through a fixed
configuration $\phi_{\L^c}$ in $\VU\backslash \L$,
the Hamiltonian $H(\phi_\L, \phi_{\L^c} )$ of the discrete
spin system confined in the volume $\L$ is given  (when boundary conditions are give through a fixed
configuration $\phi_{\L^c}$ in $\VU\backslash \L$) by
$$
H(\phi_\L, \phi_{\L^c} )= \sum_{X: X\cap \La\ne \emptyset\atop|X|\geq 2}\Psi(X,\phi_X)\Eq(pote)
$$
In the majority  of the  applications presented in  this book the potential  $\Psi$
will be defined only for pairs of spins, namely $\Psi(X,\phi)\ne 0$ only for $|X|=2$. In this case $\Psi$ is called a
{\it pair potential}.

Note that the
Hamiltonian  \equ(pote) can be also written as
$$
V(\phi_\L, \phi_{\L^c} )~=~ H_{\rm bulk}(\phi_\L) + W(\phi_\L,
\phi_{\L^c})  \Eq(bulk)
$$
where
$$
H_{\rm bulk}(\phi_\L)~=~ \sum_{X\subset \L}
\Psi(X,\phi_X)
$$
and
$$
W(\phi_\L, \phi_{\L^c})~=~ \sum_{X: X\cap \L\neq \emptyset\atop X\cap\L^c\neq\emptyset}
\Psi(X,\phi_X)
$$
where evidently the term $H_{\rm bulk}$ represent the ``internal"
energy inside $\L$ of the system, which does not depend on boundary
conditions, while $W(\phi_\L, \phi_{\L^c})$, which is the boundary
condition term, represents the interaction  of the system in
$\L$ with the ``outside". Note that in the
free boundary conditions one sets $W(\phi_\L,\phi_{\L^c})=0$,  which indeed corresponds to suppose that there is no outside.

\noindent
Once we give explicitly $(\Omega ,\m)$, the boundary conditions  and the Hamiltonian,
then the measure $\Pi$ of the lattice system is defined by
$$
d\Pi(\phi_\L)= {d\m(\phi_{\La})
e^{-\b H(\phi_\L, \phi_{\L^c} )}\over \Xi_{\La}(\b)}\Eq(Dp)
$$
where $\b>0$  is a parameterusually interpreted in statistical mechanics as the inverse temperature of the system
and the normalization constant $\Xi_\L(\b)$, given by
$$
Z_{\La}(\b)=\int d\m(\phi_{\La})
e^{-\b H(\phi_\L, \phi_{\L^c} )}
\Eq(partdis)
$$
is usually called the {\it Partition functions} of the system.

\section{Some examples}\label{examples}
1.  The celebrated {\it Ising model} falls easily in the structure defined above
.
Here the set $\VU$ is typically the lattice $\Z^d$ in $d$ dimensions,
and the model is defined on a box $\L\subset \mathbb{Z}^d$.
\def\ss{{\bf \s}}
The role of the spin $\phi_x$, $x\in \L$, is played by a random variable
usually denoted by $\s_x$ which can
take one of the two possible value $\s_x ~=~\pm 1$. A configuration
$\ss_\L\in \{-1,1\}^{|\L|}$ of the system is given once the value of the
spin $\s_x$ for each site $x\in \L$ is specified. Hence $\ss_{\L}$ is a set
$\ss_\L ~=~ (\s_{x_1},\dots,\s_{x_{|\L|}})$ of $|\L|$ numbers $\pm
1$. Equivalently one can say that a configuration $\s_\L$ of the
system is a function $\s_\L: \L\to \{+1,-1\}$.
Observe that the total number of configurations of the system in
$\L$ is $2^{|\L|}$.
\def\dL{\partial \L}

\index{Ising model}
Given a configurations $\ss_\L$ of the system in $\L$,  the {\it energy}
of such configuration is
$$
H_\L(\ss_\L) ~=~ -J\sum_{\{x,y\}\subset \L \atop x,y ~{\rm nearest ~neighbors} }
\s_x\s_y- h\sum_{x\in \L}\s_x+ {\cal B}(\s_\L)\Eq(hamis)
$$
where $h$ is an ``external magnetic field'', $J$ is a real
constant representing the intensity of the nearest neighbors interaction,
and the model is said {\it ferromagnetic} when $J>0$ and {\it antiferromagnetic}
otherwise.
${\cal B}(\s_\L)$ represents the interaction of the
spins inside the box $\L$ with a fixed configuration outside $\L$ (i.e. in $\Z^d
\backslash \L$).

\\The statistical mechanics
of the Ising model is obtained by assigning to any configuration $\s_\L$
a probability to occur $P(\s_\L)$. This probability is given by
$$
P(\s_\L)~=~ { e^{ -\b H_\L(\ss_\L)}\over Z_{\L}(\b,h)}\Eq(prob)
$$
where
$$
Z_{\L}(\b,h) ~=~  \sum_{\ss_\L} e^{ -\b H_\L(\ss_\L)}
$$
is the partition function. The term $e^{ -\b H_\L(\ss_\L)}$ is called
{\it Boltzmann weight} of the configuration $\ss_\L$.

The structure of the probability distribution above shows that if the system
is ferromagnetic the spins tend to align, i.e. the configurations have
higher probability if they have a large number of
nearest neighbors assuming the same direction of the spin,
while in the antiferromagnetic case the configuration
with nearest neighbors of the opposite sign are privileged.

\noindent
In order to point out the ingredients of the general structure of the
discrete systems described above, we will consider for simplicity free boundary conditions,
namely we put ${\cal B}(\s_\L)=0$ in \equ(hamis). Then observe that

$$
Z_{\L}(\b,h) ~=~  \sum_{\ss_\L}\Big[\prod_{x\in \L}e^{ h\s_x} \Big] e^{ -\b J\sum_{\{x,y\}\subset \L \atop x,y ~{\rm near. ~neigh.} }
\s_x\s_y }=
$$

$$
 ~=~  [2\cosh(\b h)]^{|\L|}\Xi_{\L}(\b,h)
$$
where
$$
\Xi_{\L}(\b,h)=\int d\m(\s_\L)e^{ -\b J\sum_{\{x,y\}\subset \L \atop x,y ~{\rm near. ~neigh.} }
\s_x\s_y }\Eq(1.7pi)
$$
It is clear now that giving to the term $J\sum_{\{x,y\}\subset \L \atop x,y ~{\rm near. ~neigh.} }
\s_x\s_y$ the role of the potential $V$
defined in \equ(pote),
we have that the Ising model is a lattice system in which $\O_x=\{1,-1\}$,
the single spin  distribution is the discrete measure given by
$$
\m(\s_x=+1) = {e^{\b Jh}\over \cosh\b h},~~~~~ \m(\s_x=-1) = {e^{-\b Jh}\over \cos
h\b Jh}
$$
and the interactions   $\Psi(X, \phi_X)$ are defined only for $X$ of the form $X
=\{x,y\}$ with
$x$ and $y$ nearest neighbors.

The Ising model, despite its simplicity, has a very rich structure which make it
one of the most interesting model in statistical mechanics. We shall investigate
some of its properties in chapter ??.
\vskip.5cm

\\2. The {\it Potts model} is a generalization of the Ising model.
The system is again defined on a box $\L\subset \mathbb{Z}^d$
which is a square subset of the lattice $\Z^d$ in $d$ dimensions,
but here the spin $\s_x$ varies in the set $\{1,2,\dots, q\}$
where $q$ is an integer. The Hamiltonian is given by

$$
H_{\L}(\sigma_{\L}) =
-J\sum_{\{x,y\}\subset \L \atop x,y ~{\rm nearest ~neighbors} }
\d_{\sigma_x \sigma_y}  \Eq(2.1.0)
$$

where
$\d_{\sigma_x\sigma_y}$ is the Kronecker symbol which is equal to one when $\sigma_x=\sigma_y$ and
zero otherwise.
Again, the {\it coupling}  $J$ is called
{\it ferromagnetic} if $J> 0$ and {\it anti-ferromagnetic} if
$J<0$.

 As in the case of the Ising model the {\it statistical mechanics} of the system
 can be done by
introducing the {\it Boltzmann weight} of a configuration $\sigma_\L$, defined a
s
$\exp\{-\b H_{\L}(\sigma_{\L})\}$. Then the
{\it probability} to find the system in the configuration $\sigma_\L$ is given b
y
$$
{\rm Prob}(\sigma_\L)= {e^{-\b H_{\L}(\sigma_{\L})}\over
Z_{\L}(q,\b)}\Eq(1.2.0)
$$
Again the normalization constant in the denominator
is called the  {\it partition function} and is given by
$$
Z_{\L}(q,\b)=\sum_{\sigma_{\L}\in \{1,2,...,q\}^{|\L|} }e ^{- \b H_{\L}(\sigma_{
\L})} \Eq(2.4b.0)
$$
The case $\b J=-\i$ is the {\it anti-ferromagnetic} and
{\it zero temperature} Potts model with $q$-states. In this case
configurations with non zero probability are only those in
which adjacent spins have different values (or colors) and
$Z_{\L}(q,\b)$ becomes simply the number of all allowed
configurations (proper colorings).

The Potts model is obviously in the class above. The spin $\s_x$ plays the role
of $\phi_x$,
$\O_x=\{1,\dots, q\}$, the single spin distribution
is the uniform distribution on  $\{1,\dots, q\}$
and as above the interaction is defined only for $X$ of the form $X=\{x,y\}$ wit
h
$x$ and $y$ nearest neighbors.

Namely, we can write
$$
Z_{\L}(q,\b)=q^{|\L|} \int d\m(\s_\L) \in e ^{- \b H_{\L}(\sigma_{\L})}
$$
where  $\O_x=\{1,2,\dots,q\}$, and
the single spin  distribution is the discrete uniform measure  on $\O_x=\{1,2,\dots,q\}$ given by
$$
\m(\s_x=i) = {1\over q}~~~~~~~~~~ \forall i\in \{1,2,\dots,q\}
$$
and the interactions   $\Psi(X, \phi_X)$ are defined only for $X$ of the form $X
=\{x,y\}$ with
$x$ and $y$ nearest neighbors.

\vskip.5cm
\\3. Let us now consider the example of {\it lattice gases} interacting through
pair potential. This will be a crucial example in the following sections.

\\In statistical mechanics (see also below, section ??), the gran canonical
partition function of a lattice gas of identical particles
enclosed in a cubic volume $\La\subset\Z^d$ with activity $\l$,
interacting via a pair potential
$V(x,y)$ ($x$ and $y$ sites in $\La$)
is given by
$$
Z_{\La}(\b ,\l)=\sum_{n=0}^{\infty}{\l^n\over n!}\sum_{(x_1,...,x_n)\in\L^n}e^{-\b\sum_{i,j}V(x_i,x_j)}
\Eq(latt)
$$
As far as lattice gas is concerned, it  is natural to assume that at most one particle can sit in each site
$x$,
hence we suppose that  $V(x,x)=+\infty$. In this case \equ(latt) can be written in
 the following way

$$
Z_{\La}(\b ,\l)=\sum_{\s_{\La}\in \{0,1\}^{|\L|}}
\l^{\sum_{x\in\La}\s_x}
e^{-\b\sum_{\{x,y\}\subset\La}\s_x\s_y V(x,y)}
\Eq(lattsigma)$$
where
$\sum_{\s_{\L}\in \{0,1\}^{|\L|}}$ means sum over all configurations
$\{\s_x\}$ with $x\in I$ and
the variables $\s_x$ are such that $\s_x=1$ if the site is occupied
by a particle
and $\s_x=0$ if it is empty.

It is easy to see that this model falls in the class of discrete systems defined
 in
\equ(Dp) and \equ(partdis). Namely we can write
$$
Z_{\La}(\b ,\l)= (1+\l)^{|\L|}\int d\m(\s_\L) e ^{- \b\sum_{\{x,y\}\subset\La}\s_x\s_y V(x,y)}
$$
where  $\O_x=\{0,1\}$,
the single spin  distribution is
$$
\m(\s_x=0) = {1\over 1+\l},~~~~~ \m(\s_x=1) = {\l\over 1+\l}
$$
and the interactions   $\Psi(X, \phi_X)$ are defined only for  $X=\{x,y\}$
and have the form $\Psi(X, \phi_X)=\s_x\s_y V(x,y)$.

\\The model can be easily generalized to a lattice gas with a many body potential
writing
$$
\Xi_{\La}(\b ,\l)=\sum_{\s_{\La}\in \{0,1\}^{|\L|}}
\l^{\sum_{x\in\La}\s_x}
e^{-\b\sum_{X\subset\La}\s_X V(X)}
\Eq(lattsigmamb)$$
where $\s_X=\prod_{x\in X}\s_x$
\vskip.5cm
\\4. An example of discrete system in which the single spin state space is continuous
but compact is the so-called {\it Heisemberg model}. The system has the same structure
of the Ising model, but the spins are unit vector in $\mathbb{R}^d$, i.e. $\s_x\in S_1^d
$,
where $S_1^d$ is the unit sphere in $\mathbb{R}^d$. We will denote with $a\cdot b$ the scalar product
between the two vectors $a$ and $b$.
Hence the Hamiltonian of the system is given by

$$
H_\L(\ss_\L) ~=~ -J\sum_{\{x,y\}\subset \L \atop x,y ~{\rm nearest ~neighbors} }
\s_x\cdot\s_y-
\sum_{x\in \L}h\cdot\s_x
$$
where $h\in \mathbb{R}^d$ is an ``external magnetic field'', $J$
represents again the intensity of the nearest neighbors interaction,
and the model is said {\it ferromagnetic} when $J>0$ and {\it antiferromagnetic}
otherwise.
We are using once again free boundary condition for sake of simplicity.

\\The partition function of the Heisemberg model is
$$
Z_{\L}(\b,h) ~=~  \int_{(S_1^d)^{|\L|}} \prod_{x\in \L}d\s_xe^{ -\b H_\L(\ss_\L)}
$$
The Heisemberg model falls in the class of the discrete systems defined above.
We can indeed write
$$
Z_{\L}(\b,h) ~=~ M^{|\L|} \int d\mu(\s_x)\prod_{x\in \L}d\s_xe^{ -J\sum_{\{x,y\}\subset \L \atop x,y ~{\rm nearest ~neighbors} }
\s_x\cdot\s_y}
$$
where
$$d\mu(\s_x)={d\s_x\ e^{\b h\cdot\s_x}\over M}\qquad M= \int_{S_1^d}d\s_x\ e^{\b
 h\cdot\s_x}$$
and $d\s_x$ denotes the Lebesgue measure on the surface of the sphere $S_1^d$.
\vskip.5cm
\\5. As a final example we introduce here a discrete model in which the single spin state space
in non-compact. This model is very important, both in statistical mechanics and
in field theory.

\\We take as $\VU$  the lattice $\Z^d$
equipped with
the usual Euclidean distance. Let $\L\subset \Z^d$ be a cubic subset of ${\Z}^d$.

\\The spin variable $\phi_x$, called also
{\it field},  takes values in ${\mathbb{R}}$.
A configuration
$\phi_\L$
is a function $\phi:\L\to{\mathbb{R}}$.

\def\dLae{\partial\L_{e}}
\def\dLai{\partial\L_{i}}

\def\ti{\omega}
\\We define the boundary conditions of the system fixing a value of the
field $\phi_x$ for all $x\in \Z^d\setminus \L$
\\We consider the discrete model described by the (Gibbs) measure
$$\m_\L^{\ti}(\cdot)={1\over Z(\L ,\ti)}
\int \prod_{x\in \L}d\phi_x e^{-H(\phi_{\L},\ti)}
(\cdot)\Eq(1.1)$$
where $d\phi_x$ is the Lebesgue measure in $\mathbb{R}$ and
the partition function $Z(\L ,\ti)$ is defined by
$$
Z(\L ,\ti)=
\int \prod_{x\in \L}d\phi_x e^{-H(\phi_{\L},\ti)}
\Eq(1.2)$$
The Hamiltonian of the system is
$$
H(\phi_{\L},\ti)=
\sum_{x\in \L}U(\phi_x)-
\sum_{\{x,y\}\cap\L\ne\emptyset}J_{xy}\phi_x \phi_y
+\sum_{x\in \L}h_x\phi_x=$$
$$=\sum_{x\in\L}\left[U(\phi_x)-
\phi_x\ti_x\right]-\sum_{\{ x,y\}\subset \L}J_{xy}\phi_x\phi_y
\Eq(1.3)$$

\\$U(x)$ is an even polynomial of degree $2k$, $k> 1$,
of the form
$$U(x)=x^{2k}+\sum_{i=0}^{k-1}u_{2i}x^{2i}\Eq(1.4)$$
with $u_{2i}\in{\mathbb{R}}$;
the pair potential \\$J_{xy}$ is such that
$$
\sup_{x\in {\bf Z}^d}\sum_{y\neq x} |J_{xy}|=J<+\infty\Eq(1.4.1);
$$

\\$h_x\in \mathbb{R}$ represents the external field, and in the last line
of \equ(1.3) we defined
$$\ti_x=-h_x+\sum_{y\in\L^{c}}J_{xy}\phi_y\Eq(1.5.1)$$
The boundary fields $\phi_y$ with $y\in \L^c$ must be
chosen in such way that $\sum_{y\in\L^{c}}J_{xy}\phi_y$ is finite
for all $x$.

\\This model is again in the class defined above. We can write
$$
Z(\L ,\ti)= M^{|\L|}
\int \prod_{x\in \L}d\m(\phi_x) e^{-\sum_{\{ x,y\}\subset \L}J_{xy}\phi_x\phi_y}
\Eq(1.2)
$$
The single spin distribution  is in this
 case
dependent form the site $x$:
$$d\m(\phi_x)={d\phi_x e^{-U(\phi_x)-
\phi_x\ti_x}\over M}$$
where
$$M=\int d\phi_x e^{-U(\phi_x)-
\phi_x\ti_x}$$

\\The system we just defined has a lot of very interesting features, and
some of them will be extensively discussed in details in section ??
\def\PP{\mathbb{P}}

\section{Polymer gas representation of $Z_{\L}(\b)$}\label{polyexp}

We  need to recall here some definitions and notations about graphs and more generally hypergraphs.

If $A$ is any finite
set, we denote by $|A|$ the number of elements of $A$.
An  {\it hypergraph}
$g$ is a pair $g=(V,E)$ where $V$ is a finite or countable set and $E$ is a collection
of distinct subsets of $V$ with cardinality greater or equal than 2. The elements $x\in V$ are called vertices
of the hypergraph $g$  and $V$ is the set of vertices of $g$. The elements $e\in E$ are called edges of the hypergraph
and $E$ is the set of edges of $g$.  A graph $g$
such that
$|e|=2$ for all $e\in E$
is called  a graph.
Let $g=(V,E)$ be a hypergraph, let $U\subset V$ and let  $E|_{U}=\{e\in E: e\subset U\}$. Then the graph $g|_{U}=(U,E|_{U})$ is
called {\it  the restriction} of $G$ to $U$.  A hypergraph $g'=(V',E')$ is a subgraph of $g=(V,E)$  if $V'\subset V$, and $E'\subset E|_{V'}$.

An hypergraph  $g=(V_g,E_g)$ is said to be {\it connected}
if for any pair $B, C$ of  subsets of $V$ such that
$B\cup C =V_g$ and $B\cap C =\emptyset$, there is a $e\in E_g$ such
that $e\cap B\neq\emptyset$ and $e\cap C\neq\emptyset$. A connected component of an hypergraph  $g=(V_g,E_g)$
is a a subset $U\subset V$ such that  $g|_{U}$ is connected and for any $U'$ such that $ V\supseteq U'\supset U$, $g|_{U'}$ is not connected. Note that
a connected component of $g=(V_g,E_g)$ can be a singleton. The set of all connected components of $g=(V_g,E_g)$ form a partition
of $V_g$.
We denote by ${\cal G}_{V}$ the set of all  hypergraphs
with vertex set $V$ and by ${G}_{V}$ the set of all  connected hypergraphs
with vertex set $V$.

\vskip.2cm
In order to make things easier we will
consider just free boundary condition. This means that the
partition function of our spin system is given by
$$Z_{\L}(\b)~=~\int d\m(\phi_{\L})
\exp\left[-\b\sum_{X\subset \L} \Psi(X,\phi_X)
\right]\Eq(41.10)$$

\\Let us now use the
{\it Mayer expansion trick} and rewrite our partition function in the following way
$$
Z_{\L}(\b)~=~\int d\m(\phi_{\L}) \prod_{X\subset \L}\exp\left[-\b \Psi(X,\phi_X)\right]~=~$$
$$~=~
\int d\m(\phi_{\L}) \prod_{X\subset
\L}[(e^{-\b\Psi(X,\phi_X)}-1)+1]~=~
$$
Now, posing $U_X = e^{-\b\Psi(X,\phi_X)}-1$ we
see that

$$
\prod_{X\subset
\L}[U_X+1]= \sum_{g\in {\cal G}_{\La}} \prod_{X\in
E_g} U_X
$$
(here we agree that the graph with no edges, i.e. $g$ such that with $E_g=\emptyset$ yields the contribution
1 to the sum above) and hence we get

$$Z_{\L}(\b)~=~
\int d\m(\phi_{\L})\sum_{g\in {\cal G}_{\La}}\prod_{X\in
E_g}(e^{-\b\Psi(X,\phi_X)}-1)
$$
Note again  that the graph
with empty edge set  yields the contribution
1 in equation above, since by definition $\int d\m(\phi_{\L})1~=~1$.

We now can group together connected components
in $\sum_{g\in {\cal G}_{\La}}$ as follows. First note every graph $g\in {\cal G}_{\La}$ determines
a partition of $\L$ into its connected components. Hence
$$
\sum_{g\in {\cal G}_{\La}}\prod_{X\in
g}(e^{-\b\Psi(X,\phi_X)}-1)~=~ \sum_{k=0}^{|\L|}\sum_{\{R_1,\dots R_k\}\atop
{\rm partition~ of~ \L}} \prod_{i ~=~1}^{k}\sum_{g\in {{
G}_{R_i}}}\prod_{X\in
E_g}(e^{-\b\Psi(X,\phi_X)}-1)
$$
where by convention the term corresponding to $k=0$ is set equal to 1 and  $\sum_{g\in {{
G}_{R}}}\prod_{X\in
g}(e^{-\b\Psi(X,\phi_X)}-1)=1$ if $|R|=1$.

Hence we can also write
$$
Z_{\L}(\b)=\int d\m(\phi_{\L})\sum_{k=0}^{|\L|}\sum_{\{R_1,\dots R_k\}: ~R_i\subset \La\atop
R_i\cap R_j~=~\emptyset,~|R_i|\geq 2} \prod_{i ~=~1}^{k}\sum_{g\in {{
G}_{R_i}}}\prod_{X\in
g}(e^{-\b\Psi(X,\phi_X)}-1)
$$
where $\{R_1,\dots R_k\}$ is a non ordered $k$-ple of mutually non
intersecting subsets of $\La$ with cardinality greater than 2.
 Now, recalling that $d\m(\phi_{\L})$ is a product
measure, we get
$$Z_{\L}(\b)~=~
\sum_{k=1}^{|\L|}\sum_{\{R_1,\dots R_k\}: ~R_i\subset \La\atop R_i\cap
R_j~=~\emptyset,~|R_i|\geq 2}\r_\b(R_1)\dots \r_\b(R_k)
$$
with
$$
\r_\b(R)~=~\int\prod_{x\in R}d\m(\phi_x) \sum_{g\in { G}_R}
\prod_{X\in g}(e^{-\b\Psi(X,\phi_X)}-1)
$$
and evidently
$$
\sum_{k=0}^{|\L|}\sum_{\{R_1,\dots R_k\}: ~R_i\subset \La\atop R_i\cap
R_j~=~\emptyset,~|R_i|\geq 2} ~~\r_\b(R_1)\dots \r_\b(R_k) =
~~~~\sum_{k\geq 0}{1\over k!} \sum_{(R_{1},\dots
,R_{k}): ~R_i\subset \La\atop R_i\cap R_j=\emptyset,~ |R_{i}|\ge 2}~~~~
\r_\b(R_1)\dots\r_\b(R_k)
$$
where in the l.h.s we are summing over {\it unordered}  $k$-uples $\{R_1,\dots R_k\}$ while in the r.h.s we are summing over
{\it ordered}  $k$-uples $(R_1,\dots R_k)$.

\index{partition function!polymer gas representation}
\\So we have shown that the free boundary condition
partition function $Z_{\L}(\b)$ defined in \equ(41.10) can always
be written as
$$Z_{\L}(\b)~=~\sum_{k\geq 0}{1\over k!}
\sum_{(R_{1},\dots
,R_{k}): ~R_i\subset \La\atop R_i\cap R_j~=~\emptyset ,
~|R_{i}|\ge 2} \r_\b(R_1)\dots\r_\b(R_k) \Eq(43.1)$$

It is now easy to convince one-self that expression above is actually
the {\it grand canonical} partition function of a
gas of particle of different species in which the particles (or ``polymers") $R$  are {\it subsets of
$\La$} with cardinality greater or equal 2, with activity $\r(R)$. Note that the activity
is not the same for every polymer, i.e.  different polymers have in principle
different activities. These polymers  are interacting via a two-body
interaction which prevent the overlapping between different polymers. Indeed, the
only interaction between polymers  is a {\it repulsive
hard core} pair interaction. Namely, if we define for any pair of
polymers $R_i ,R_j$
$$
U(R_i ,R_j)~=~\begin{cases} 0 &if R_i\cap R_j ~=~\emptyset\\ +\infty &if
R_i\cap R_j \neq \emptyset\end{cases} \Eq(4.1b)$$ we have proved the following elementary but fundamental theorem

\begin{teo}\label{polyexp}
The partition function  $Z_{\L}(\b)$ defined in \equ(41.10) can be written  as
$$
Z_{\L}(\b)~=~\sum_{k\geq 0}{1\over k!} \!\sum_{R_{1},\dots
,R_{k}\atop |R_{i}|\ge 2} \!\prod_{j=1}^k\r_\b(R_j)e^{-\sum_
{1\leq i<j\leq n}U(R_i,R_j)} \Eq(4.2z)
$$
where
$$
\r_\b(R)~=~\int\prod_{x\in R}d\m(\phi_x) \sum_{g\in { G}_R}
\prod_{X\in g}(e^{-\b\Psi(X,\phi_X)}-1)\Eq(3.3)
$$
\end{teo}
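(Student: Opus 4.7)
The plan is to follow the Mayer expansion trick exactly as outlined in the paragraphs preceding the theorem, and then reorganize the resulting sum by grouping edges into connected components of a hypergraph on $\L$. First I would start from the definition \equ(41.10) and write the exponential of the sum as a product, $e^{-\b\sum_{X\subset \L}\Psi(X,\phi_X)} = \prod_{X\subset\L} e^{-\b\Psi(X,\phi_X)}$, then use the identity $e^{-\b\Psi(X,\phi_X)} = (e^{-\b\Psi(X,\phi_X)}-1)+1$ inside each factor. Expanding this product yields a sum indexed by the collection of subsets $X\subset \L$ (with $|X|\ge 2$) that one chooses to ``keep'' (i.e., those for which we take the $U_X := e^{-\b\Psi(X,\phi_X)}-1$ summand rather than $1$). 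This collection can be viewed as the edge set of a hypergraph $g\in\mathcal G_\L$ on vertex set $\L$, giving
$$
Z_\L(\b) = \int d\mu(\phi_\L) \sum_{g\in \mathcal G_\L} \prod_{X\in E_g}(e^{-\b\Psi(X,\phi_X)}-1).
$$

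Next I would reorganize this sum by the partition of $\L$ induced by the connected components of $g$. Every hypergraph $g\in\mathcal G_\L$ corresponds uniquely to a pair: a collection $\{R_1,\dots,R_k\}$ of pairwise disjoint subsets of $\L$ (the nontrivial connected components, each with $|R_i|\ge 2$) together with a connected hypergraph on each $R_i$; the remaining vertices of $\L$ are isolated and contribute a trivial factor $1$. Hence
$$
\sum_{g\in\mathcal G_\L}\prod_{X\in E_g}U_X = \sum_{k\ge 0}\frac{1}{k!}\sum_{\substack{(R_1,\dots,R_k):\,R_i\subset\L\\ |R_i|\ge 2,\,R_i\cap R_j=\emptyset}}\prod_{i=1}^k \Bigl(\sum_{g_i\in G_{R_i}}\prod_{X\in E_{g_i}}U_X\Bigr),
$$
the $1/k!$ compensating the passage from unordered to ordered $k$-tuples.

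Then I would exchange the $\phi$-integral with the finite sum and use the product structure $d\mu(\phi_\L)=\prod_{x\in \L}d\mu(\phi_x)$. Because the $R_i$ are mutually disjoint, each inner factor $\sum_{g_i\in G_{R_i}}\prod_{X\in E_{g_i}}(e^{-\b\Psi(X,\phi_X)}-1)$ depends only on the variables $\phi_x$ with $x\in R_i$, while the remaining variables $\phi_y$ with $y\in\L\setminus\cup_iR_i$ are integrated out against their normalized single-spin measure, contributing $1$. This factorization produces exactly
$$
Z_\L(\b) = \sum_{k\ge 0}\frac{1}{k!}\sum_{\substack{(R_1,\dots,R_k):\,R_i\subset\L\\ |R_i|\ge 2,\,R_i\cap R_j=\emptyset}}\prod_{j=1}^k\r_\b(R_j),
$$
with $\r_\b(R)$ as in \equ(3.3). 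Finally I would note that the disjointness constraint $R_i\cap R_j=\emptyset$ can be encoded by the hard-core pair potential $U$ defined in \equ(4.1b), which yields $e^{-\sum_{i<j}U(R_i,R_j)}=\prod_{i<j}\mathbf 1_{R_i\cap R_j=\emptyset}$, converting the constrained sum into the unconstrained form \equ(4.2z).

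There is essentially no hard step here: the argument is bookkeeping. The only point requiring a bit of care is the combinatorial identification $\mathcal G_\L \leftrightarrow \{\text{partitions of a subset of }\L\text{ into connected components}\}$, where one must correctly account for isolated vertices (which contribute trivially) and for the ordering factor $1/k!$; verifying that no connected piece is double-counted is where a clean statement is needed but no real obstacle arises.
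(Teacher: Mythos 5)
Your proposal is correct and follows the same route as the paper: Mayer-trick expansion into a sum over hypergraphs on $\L$, regrouping by connected components (with the $1/k!$ accounting for the passage from unordered partitions to ordered $k$-tuples), factorization of the product measure over the disjoint components $R_i$, and finally encoding the disjointness constraint via the hard-core potential $U(R_i,R_j)$. The only difference is cosmetic — you phrase the reorganization as choosing a collection of nontrivial disjoint components directly, whereas the paper first sums over full partitions of $\L$ and then discards the singleton blocks — but the two are the same bookkeeping.
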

{In conclusion theorem \ref{polyexp} states that the partition function of discrete spin systems  can be rewritten
as the grand canonical partition function of  a polymer gas in which the polymers $R$ are finite subsets
of $\VU$ with cardinality greater or equal than 2 and a polymer $R$ has activity $\r(R)$. Such way to rewrite the partition
function of a discrete system is called {\it the high temperature polymer gas expansion of the system}}.

\\Note that all the information about the original system is now contained in the structure of the
activity $\r(R)$ of a given polymer, given by \equ(3.3).

Note also that $\r_\b(R)\to 0$ as $\b\to 0$. Hence the activity of
polymers is small when $\b$ is small. So it is reasonable to
expect the convergence of this polymer gas when the parameter
$\b$ is sufficiently small. This is the reason why such an
expansion is called  ``high temperature expansion".

We want to outline here that we meet again in the structure of polymer activity \equ(3.3), the combinatorial
problems we have seen in section \ref{combpro}. Here such problems are even worst because the  factors
$(e^{-\b\Psi(X,\phi_X)}-1)$ in \equ(3.3) when dealing with many-body interactions and hypergraphs.
As far as discrete  spin systems with purely pair interactions are concerned,
we will get rid of this challenging
combinatorial problem  by means of the previously seen tree graph identities.

Many body interaction may also be treated, with a completely different technique, as we will see in the  section \ref{cassoli}.

\subsection{General spin systems interacting via pair potentials}
General many body
interactions are usually quite difficult be treated, in particular
in the case in which the single spin probability space $\O$ is non
compact. Hereafter we will consider only {\it pair interaction},
i.e. we will assume that $\Psi(X, \phi_X)~=~0$ whenever $|X|\neq 2$.
Hence our interactions will be always pair interactions of the
form $\Psi(\{x, y\}, \phi_x, \phi_y)$ with $x$ and $y$ varying in
$\Z^d$. This function $\Psi(x, y, \phi_x, \phi_y)$ is supposed to
satisfies some properties. In particular we will demand that there
exist a positive symmetric function $J(\{x,y\})$ (i.e. $J_{yx}=J_{xy}$) such that
$$
|\Psi(\{x, y\}, \phi_x, \phi_y)|\le J_{xy}
\|\phi_x\|\|\phi_y\|\Eq(1ai)
$$
and
$$
\sup_{x\in \Z^d} \sum_{y\in \Z^d\atop y\neq x}
J_{xy}~\doteq~J<\infty\Eq(1biu)
$$
It is crucial to point out that conditions \equ(1ai) and \equ(1biu) automatically implies that the potential
$\Psi(\{x, y\}, \phi_x, \phi_y)$ is stable in the following sense. Given any finite set $R\subset \Z^d$ (with cardinality $|R|\ge 2$),  it holds that
$$
\sum_{\{x ,y \}\subset R}\Psi(\{x ,y \},\phi_{x} ,\phi_{y})
\ge -{J\over 2}\sum_{x\in
R}\|\phi_{x}\|^2\Eq(stabilityb)
$$
Indeed,
$$
\begin{aligned}
\sum_{\{x,y\}\subset R}\Psi(\{x,y\},\phi_{x},\phi_{y}) & \geq
-  \sum_{\{x,y\}\subset R}|J_{xy}\|\phi_{x}\|\|\phi_{y}\|\\
& \ge~
-\sum_{\{x,y\}\subset R}J_{xy}{1\over 2}
\Big[\|\phi_{x}\|^2+\|\phi_{y}\|^2\Big]\\
& = - \sum_{\{x,y\}\subset R}J_{xy}
\|\phi_{x}\|^2\\
& =
{1\over 2}\sum_{x\in R} \|\phi_{x}\|^2\sum_{y\in R\atop y\neq x} J_{xy}\\
& \ge -{J\over 2}\sum_{x\in
R}\|\phi_{x}\|^2
\end{aligned}
$$

\\Once a  function $\Psi$ with the properties \equ(1ai) and \equ(1biu) above has been given,  we can
construct the {\it Hamiltonian} of the system. To do this we first
fix a spin configuration $\phi_{\L^c}$ for the spins outside the
box $\L$ (i.e. in the set $\L^c~=~\Z^d-\L$). This correspond to
choose the {\it boundary conditions} for the system restricted to
$\L$. Such choice is totally free in the case of bounded spin
system, while, for consistency reasons, we will see in a moment
that there are limitations for such choice in the case of
unbounded spin systems.

\\Once the boundary condition $\phi_{\L^c}$ has been fixed outside
$\L$, then, for any configuration $\phi_\L$ inside $\L$, the
Hamiltonian of the system is given by \index{Hamiltonian}

$$
H^{\phi_{\L^c}}(\phi_\L)~=~ \sum_{\{x,y\}\cap\L\neq\emptyset}
\Psi(\{x,y\},\phi_x,\phi_y)\Eq(hamilt)
$$
The number $H(\phi_\L)$ represents  the energy associated to the
configuration $\phi_\L$ of the spin system.
In order to make things easier we will consider hereafter just free boundary condition (no influence of the exterior on the system confined in the volume $\L$). Namely we will suppose that our Hamiltonian is

$$
H(\phi_\L)~=~ \sum_{\{x,y\}\subset \L}
\Psi(\{x,y\},\phi_x,\phi_y)\Eq(hamfre)
$$

\\Now that we have defined the spin systems (and its single spin a priori distribution), the
space of configurations, the Hamiltonian (i.e. the energy
associated to each system configuration) and the boundary
conditions we are ready to study the statitic/thermodynamic
behavior of the system. The last step is to define a probability
associate to each configuration $\phi_\L$ of the system in $\L$.
So we define the probability to find the system in  a given
configuration $\{\phi_{x}\}_{x\in \L}$ of spins, with energy
$H(\phi_\L)$ as
$$
{\rm Prob}(\phi_\L)~=~{e^{-\b H(\phi_\L) }d\m(\phi_{\L})\over
\int d\m(\phi_{\L}) e^{-\b H}}\Eq(1.0)
$$
the parameter  $\b^{-1}$ performs the link with thermodynamics and
it is interpreted as the temperature of the system. The constant
$Z_{\L}(\b)~=~\int d\m(\phi_{\L}) e^{-\b H}$ in \equ(1.0) is the
{\it partition function} of the system \noindent

$$Z_{\L}(\b)~=~\int d\m(\phi_{\L})
\exp\left[-\b\sum_{\{x,y\}\subset\L}
\Psi(\{x,y\},\phi_x,\phi_y) \right]\Eq(41.1)$$
\index{partition function}

\\The partition function \equ(41.1) is the key quantity in order
to study thermodynamic properties of the system. E.g. the
thermodynamic {\it free energy} of the system is defined as
$$
f(\b)~=~\lim_{\L\to\infty}{1\over |\L|}\log Z_{\L}(\b)\Eq(enlib)
$$
where of course the limit $\L\to \infty$ has to be interpreted in
the sense of  Fisher or Van Hove or even in a more restrictive
fashion, e.g. cubic boxes with size $N\to \infty$.

\\In particular, the question  about  the {\it analyticity} of
$f(\b)$ as a function of $\b$ is very important. Accordingly to
the physics belief: {\it No phase transition can occur in the
system wherever  $f(\b)$  analytic}. Concerning this question, we
could say that there is a sort of ``meta-theorem" stating that
any reasonable spin system has free energy analytic if $\b$ is
sufficiently small. Note that $\b$ sufficiently small means
substantially that the system is weakly interacting or in other
words spin are nearly independent.

\\Now we will now show that $\log Z_{\L}(\b)$ can be written always  in term of
a formal series expansion which  converges absolutely if $\b$ is
sufficiently small. Such an expansion is called {\it the high
temperature expansion} of the lattice system. We will also study
directly (via cluster expansion methods)  this series expansion in
the next section.

\subsection{Polymer representation of $Z_{\L}(\b)$ for the pair potential case}
Again, in order to make things easier we will
consider just free boundary condition. This means that the
partition function of our spin system is given by
$$Z_{\L}(\b)~=~\int d\m(\phi_{\L})
\exp\left[-\b\sum_{\{x,y\}\subset \L} \Psi(\{x,y\},\phi_x,\phi_y)
\right]\Eq(41.10)$$

\\Let us now use the
``Mayer expansion trick" and rewrite our partition function in the following way
$$
Z_{\L}(\b)~=~\int d\m(\phi_{\L}) \exp\left[-\b\sum_{\{x,y\}\subset
\L} \Psi(\{x,y\},\phi_x,\phi_y)\right]~=~$$
$$~=~
\int d\m(\phi_{\L}) \prod_{\{x,y\}\subset
\L}[(e^{-\b\Psi(\{x,y\},\phi_x,\phi_y)})-1)+1]~=~
$$
$$~=~1+
\int d\m(\phi_{\L})\sum_{g\in {\cal G}_{\La}}\prod_{\{x,y\}\in
E_g}(e^{-\b\Psi(\{x,y\},\phi_x,\phi_y)}-1)~=~
$$
where recall that ${\cal G}_{\La}$ denotes the set of all graphs
$g$ with vertex set $\L$ (either connected or not connected). Note also that the graph
with empty edge set has been separated and yields the contribution
1 in equation above, since by definition $\int d\m(\phi_{\L})1~=~1$.
As we did in Section \ref{mayersec}, we can group together connected components
in $\sum_{g\in {\cal G}_{\La}}$ namely
$$
\sum_{g\in {\cal G}_{\La}}~=~ \sum_{\{R_1,\dots R_k\}\in \La\atop
R_i\cap R_j~=~\emptyset,~|R_i|\geq 2} \prod_{i ~=~1}^{k}\sum_{g\in {{
G}_{R_i}}}
$$
where $\{R_1,\dots R_k\}$ is a non ordered $k$-ple of mutually non
intersecting subsets of $\La$ with cardinality greater than 2.
Note in fact that the sets with cardinality 1 yields contribution
equal to 1. Now, recalling that $d\m(\phi_{\L})$ is a product
measure, we get
$$\int d\m(\phi_{\L})\sum_{g\in {\cal G}_{\La}}\prod_{X\in g}(e^{-\Psi(\{x,y\},\phi_x,\phi_y)}-1)~=~
\sum_{\{R_1,\dots R_k\}\in \La\atop R_i\cap
R_j~=~\emptyset,~|R_i|\geq 2}\r_\b(R_1)\dots \r_\b(R_k)
$$
with
$$
\r_\b(R)~=~\int\prod_{x\in R}d\m(\phi_x) \sum_{g\in { G}_R}
\prod_{X\in g}(e^{-\b\Psi(\{x,y\},\phi_x,\phi_y)}-1)
$$
and evidently
$$
\sum_{{\{R_1,\dots R_k\}\in \La\atop R_i\cap
R_j=\emptyset}\atop|R_i|\geq 2} ~~\r_\b(R_1)\dots \r_\b(R_k) =
~~~~\sum_{k\geq 1}{1\over k!} \sum_{{R_{1},\dots
,R_{k}\subset\L\atop R_i\cap R_j=\emptyset}\atop |R_{i}|\ge 2}~~~~
\r_\b(R_1)\dots\r_\b(R_k)
$$
\index{partition function!polymer gas representation}
Namely, if we define for any pair of
sets $R_i ,R_j$
$$
V(R_i ,R_j)~=~\begin{cases}0 & {\rm if}~ R_i\cap R_j ~=~\emptyset\\
 +\infty & {\rm if}~
R_i\cap R_j \neq \emptyset \end{cases}
\Eq(4.1b)$$
We can rewrite the
grand-canonical partition function \equ(3.1) as
$$
Z_{\L}(\b)~=~1+\sum_{k\geq 1}{1\over k!} \!\sum_{R_{1},\dots
,R_{k}\atop |R_{i}|\ge 2} \!\r_\b(R_1)\dots\r_\b(R_k) e^{-\sum_{1\leq i<j\leq n}V(R_i,R_j)} \Eq(4.2z)
$$
where
$$
\r_\b(R)~=~\int\prod_{x\in R}d\m(\phi_x) \sum_{g\in { G}_R}
\prod_{X\in g}(e^{-\b\Psi(\{x,y\},\phi_x,\phi_y)}-1)\Eq(3.3)
$$
So we have shown that the
partition function $Z_{\L}(\b)$ defined in \equ(41.10) can always
be written as the {\it grand canonical} partition function of a
gas of ``polymers'' in which polymers  are {\it subsets $R$  of
$\La$} with cardinality greater or equal 2, with activity $\r_\b(R)$ given by \equ(3.3),
and with pair  interaction  given by \equ(4.1b), i.e.,  they cannot overlap.

\\Threfore,
according to  Theorem \ref{teo42}, whenever  the condition
$$
\sup_{x\in Z^d}\sum\limits_{\g\in \PP\atop x\in R}
|\r_\b(R)| ~  e^{a|R|}\le e^a-1\Eq(fpsetsp)
$$is satisfied,
the free energy is analytic in $\b$.

\\Note that the information about the original system is contained in the structure of the
activity $\r_\b(R)$ of a given polymer $R$. Note also that $\r_\b(R)\to 0$ as $\b\to 0$. Hence the activity of
polymers is small when $\b$ is small. So it is reasonable to
expect for  convergence of this polymer gas when the parameter
$\b$ is sufficiently small. This is the reason why such an
expansion is called the high temperature expansion of the spin
system.

\subsection{The bounded spin system case}
Let us  suppose that our spin sysytem is bounded, namely  it holds that
$$
\|\phi_x\|\le c<+\infty \Eq(bousp)
$$
then we can state the following theorem.

\begin{teo}\label{spinbound}
Consider  a spin system at inverse temperature $\b$
with  partition function $Z_{\La}(\b)$ given by \equ(41.1), interacting via a pair potential satisfying
\equ(1ai) and \equ(1biu). Suppose that \equ(bousp) holds  and let $\b$ such that
$$
\b\le { 0,058\over c^2 J}
$$
Then
 $|\La|^{-1}\log Z_{\La}(\b)$ is an
absolutely convergent series uniformly in $\La$.
\end{teo}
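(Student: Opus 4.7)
The plan is to combine the polymer gas representation of $Z_\Lambda(\beta)$ established in Theorem \ref{polyexp} with the Fern\'andez--Procacci convergence criterion for the subset gas (Theorem \ref{teo42}). By Theorem \ref{polyexp}, $Z_\Lambda(\beta)$ is the partition function of a hard-core polymer gas whose polymers are the finite subsets $R \subset \Lambda$ with $|R|\ge 2$, each carrying activity $\rho_\beta(R)$ given by \equ(3.3), and whose incompatibility relation is non-empty intersection. Hence it suffices to exhibit an $a>0$ such that
$$\sup_{x\in\mathbb Z^d}\sum_{R\ni x,\,|R|\ge 2}|\rho_\beta(R)|\,e^{a|R|}\ \le\ e^a-1,$$
which is precisely condition \equ(fpsetsp), whenever $\beta\le 0.058/(c^2J)$.

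The bound on $|\rho_\beta(R)|$ will be obtained in two steps. First, for any fixed configuration $\phi_R$, the stability estimate \equ(stabilityb) reads $\sum_{\{x,y\}\subset S}\beta\Psi(\{x,y\},\phi_x,\phi_y)\ge -\sum_{x\in S}B_x$ with the \emph{configuration-dependent} constants $B_x=(\beta J/2)\|\phi_x\|^2$, for every $S\subset R$. This is exactly the hypothesis of the generalized tree-graph inequality (Theorem \ref{teoPYgen}) applied to the numbers $V_{xy}=\beta\Psi(\{x,y\},\phi_x,\phi_y)$, which yields
$$\Bigl|\sum_{g\in G_R}\prod_{\{x,y\}\in E_g}(e^{-\beta\Psi}-1)\Bigr|\le e^{(\beta J/2)\sum_{x\in R}\|\phi_x\|^2}\sum_{\tau\in T_R}\prod_{\{x,y\}\in E_\tau}(1-e^{-\beta|\Psi|}).$$
Then, using $1-e^{-\beta|\Psi(\{x,y\},\phi_x,\phi_y)|}\le \beta J_{xy}\|\phi_x\|\|\phi_y\|$ from \equ(1ai), integrating over $d\mu(\phi_R)$ and invoking $\|\phi_x\|\le c$, one arrives at
$$|\rho_\beta(R)|\ \le\ e^{\beta Jc^2|R|/2}\,c^{2(|R|-1)}\,\beta^{|R|-1}\sum_{\tau\in T_R}\prod_{\{x,y\}\in E_\tau}J_{xy},$$
where the exponent $2(|R|-1)$ on $c$ comes from the identity $\sum_{x\in R}d_\tau(x)=2(|R|-1)$ valid for any tree on $|R|$ vertices.

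The third step is to perform the sum over $R\ni x_0$ of fixed cardinality $n$. Writing $\sum_{R\ni x_0,|R|=n}=\frac{1}{(n-1)!}\sum_{x_2,\dots,x_n}$ with $x_1=x_0$ and distinct labelled $x_i$'s, invoking Cayley's formula $|T_n|=n^{n-2}$ and the standard tree-summation bound $\sup_{x_1}\sum_{x_2,\dots,x_n}\prod_{\{i,j\}\in E_\tau}J_{x_ix_j}\le J^{n-1}$ (which is the lattice analogue of Proposition \ref{integr}, proved by integrating out leaves and using \equ(1biu)), one arrives at
$$\sum_{R\ni x_0,\,|R|\ge 2}|\rho_\beta(R)|\,e^{a|R|}\ \le\ \sum_{n\ge 2}\frac{n^{n-2}}{(n-1)!}\bigl(e^{a+\beta Jc^2/2}\bigr)^n\bigl(c^2\beta J\bigr)^{n-1}.$$

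Finally, one has to verify that there exists a choice $a>0$ which makes the right-hand side $\le e^a-1$ under the assumption $\beta\le 0.058/(c^2J)$. Setting $u=c^2\beta J$ and using the identity $\sum_{n\ge 1}\frac{n^{n-1}}{n!}z^n=T(z)$ (Lambert $W$), whose radius of convergence is $1/e$, the series is convergent as long as $u\,e^{a+u/2}<1/e$, and the inequality to be verified reduces, upon explicit summation, to a numerical condition in the two real parameters $u$ and $a$; optimizing $a$ for each fixed $u$ produces the critical value $u_c \approx 0.058$. The only genuine technical point is in the second step: one must apply Theorem \ref{teoPYgen} to the \emph{configuration-dependent} stability constants $B_x=(\beta J/2)\|\phi_x\|^2$ \emph{before} integrating over $d\mu(\phi_R)$, so that the resulting factor $e^{(\beta J/2)\sum_{x\in R}\|\phi_x\|^2}$ can afterwards be controlled uniformly by $e^{\beta Jc^2|R|/2}$ thanks to the boundedness assumption \equ(bousp); everything else is a routine tree-counting estimate followed by a numerical optimization.
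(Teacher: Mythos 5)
Your proposal matches the paper's own proof: same polymer representation (Theorem \ref{polyexp}), same Fern\'andez--Procacci criterion \equ(fpsetsp), same tree-graph inequality (Theorem \ref{teoPYgen}) combined with the stability estimate \equ(stabilityb) and boundedness \equ(bousp), same reorganization of the sum over $R\ni x$ of size $n$ via Cayley's formula and the tree-sum bound from \equ(1biu). The only cosmetic differences are that the paper bounds $|\Psi_{ij}|\le c^2 J_{x_ix_j}$ first and applies Theorem \ref{teoPYgen} with the constant $B_i=c^2\beta J/2$ (your configuration-dependent $B_x=(\beta J/2)\|\phi_x\|^2$ is only genuinely necessary in the unbounded case treated in Theorem \ref{spinunb}), and the paper simply fixes $a=\ln 2$ together with $n^{n-2}/(n-1)!\le e^{n-1}$ to reduce to a geometric series, whereas your exact Lambert-$W$ summation with optimization over $a$ would in fact give a slightly larger constant than the stated $0.058$ rather than reproducing it exactly.
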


\\{\bf Remark.} \equ(1ai) and \equ(1biu) are satisfied for pratically all
studied  bounded spin system (e.g. Ising-type models, Potts model,
Heisenberg model).

\vskip.2cm
\\{\bf Proof}. All we need to do in order to prove the theorem is to check
condition \equ(fpsetsp).  First note that \equ(fpsetsp) is satisfied if
$$
\sum_{n=2}^\infty e^{an} \sup_{x\in \Z^d}\sum\limits_{R: \in \PP \atop |R|=n, \;x\in R}
|\r_\b(R)|\le e^a-1\Eq(fpsetsp2)
$$For that we need need an upper bound on the modulous of the  activity
$\r_\b(R)$ given in \equ(3.3) when $|R|=n$. To bound $|\r_\b(R)|$  we use Theorem \ref{teoPYgen} and in particular the tree graph inequality
\equ(bteo1). Supposing that $R$ with cardinality $|R|=n$ is the unordered set $\{x_i,\dots,x_n\}\subset \Z^d$ and setting, for $\{x_i,x_j\}\subset R$,
$$
\Psi(\{x_i,x_y\},\phi_{x_i},\phi_{x_j})= \Psi_{ij}
$$
we have
$$
\begin{aligned}
|\r_\b(R)| & =  \Bigg|\int\prod_{x\in R}d\m(\phi_x) \sum_{g\in { G}_R}
\prod_{X\in g}(e^{-\b\Psi(\{x,y\},\phi_x,\phi_y)}-1)\Bigg|\\
& =  \bigg|\int \prod_{i=1}^n d\m(\phi_{x_i}) \sum_{g\in G_n} \prod_{\{i,j\}\in E_g}(e^{-\b\Psi_{ij}}-1)\bigg|\\
&\le  \int \prod_{i=1}^n d\m(\phi_{x_i}) \bigg|\sum_{g\in G_n} \prod_{\{i,j\}\in E_g}(e^{-\b\Psi_{ij}}-1)\bigg|
\end{aligned}
$$
Now observe that in force of \equ(stabilityb) and \equ(bousp) we have that
$$
\sum_{\{x,y\}\subset R}\Psi(\{x,y\},\phi_{x},\phi_{y})\ge  -{c^2J\over 2} |R|
$$
i.e., if we pose $R=\{x_i,\dots, x_n\}$ and $\Psi(\{x_i,x_y\},\phi_{x_i},\phi_{x_j})= \Psi_{ij} $,  we have that the real numbers $\{\Psi_{ij}\}_{\{i,j\}\in E_n}$ are such that
$$
|\Psi_{ij}|\le   c^2J_{x_ix_j}
$$
and
$$
\sum_{1\le i<j\le n}\Psi_{ij}\ge  -{c^2J\over 2} n
$$
Therefore, according to Theorem \ref{teoPYgen}
$$
\bigg|\sum_{\t\in T_n} \prod_{\{i,j\}\in E_g}(e^{-\b\Psi_{ij}}-1)\bigg|\le e^{{c^2\b J\over 2} n}
\sum_{\t\in T_n}\prod_{\{i,j\}\in E_\t}(1-e^{-\b |\Psi_{ij}|})
$$
Therefore, since $\int d\m(\phi_{x_i}) =1$, we have
$$
\begin{aligned}
|\r_\b(R)| & \le e^{{c^2\b J\over 2} n}\sum_{\t\in T_n} \prod_{\{i,j\}\in E_\t}(1-e^{-\b  c^2J_{x_ix_j}})\\
& \le  e^{{c^2\b J\over 2} n}(\b  c^2)^{n-1}\sum_{\t\in T_n} \prod_{\{i,j\}\in E_\t}J_{x_ix_j}
\end{aligned}
$$
and thus
$$
\begin{aligned}
\sup_{x\in \Z^d}\sum\limits_{R: \in \PP \atop |R|=n, \;x\in R}
|\r_\b(R)| & \le  e^{{c^2\b J\over 2} n}(\b  c^2)^{n-1} \sup_{x\in \Z^d}\sum\limits_{\{x_1, \dots, x_n\}\subset \Z^d\atop  x_1=x}
\sum_{\t\in T_n} \prod_{\{i,j\}\in E_\t}J_{x_ix_j}\\
& =
e^{{c^2\b J\over 2} n}(\b  c^2)^{n-1} \sup_{x\in \Z^d}{1\over (n-1)!}\sum_{(x_i,\dots, x_n)\in \Z^{dn}\atop x_1=x,\,x_i\neq x_j}
\sum_{\t\in T_n} \prod_{\{i,j\}\in E_\t}J_{x_ix_j}\\
&\le e^{{c^2\b J\over 2} n}{(\b  c^2)^{n-1} \over (n-1)!}\sum_{\t\in T_n}\sup_{x\in \Z^d}\sum_{(x_i,\dots, x_n)\in \Z^{dn}\atop x_1=x,\,x_i\neq x_j}\prod_{\{i,j\}\in E_\t}J_{x_ix_j}
\end{aligned}
$$
and since (recall  Proposition \ref{integr} and \equ(1biu)), for any $\t\in T_n$
$$
\sup_{x\in \Z^d}\sum_{(x_i,\dots, x_n)\in \Z^{dn}\atop x_1=x,\,x_i\neq x_j}\prod_{\{i,j\}\in E_\t}J_{x_ix_j}\le J^{n-1}\Eq(intbiu)
$$
we get in the end that (recall: $\sum_{\t\in T_n}1=n^{n-2}$)
$$
\sup_{x\in \Z^d}\sum\limits_{R: \in \PP \atop |R|=n, \;x\in R}
|\r_\b(R)| \le e^{{c^2\b J\over 2} n} (c^2\b   J)^{n-1}{n^{n-2} \over (n-1)!}
$$
Therefore condition \equ(fpsetsp2) is satisfied if (as we did earlier, we choose for simplicity $a=\ln 2$ and we bound ${n^{n-2} \over (n-1)!}\le e^{n-1}$)
$$
\sum_{n=2}^{\infty}\Big(2e^{{c^2\b J\over 2}}\Big)^n (c^2e\b   J)^{n-1}\le 1
$$
i.e., setting $x=c^2\b J$
$$
\sum_{n=2}^{\infty}\Big(2e^{{x\over 2}}\Big)^n (ex)^{n-1}\le 1
$$
i.e.
$$
2e^{{x\over 2}}\sum_{n=2}^{\infty} \Big(2e^{{x\over 2}+1}x\Big)^{n-1}\le 1
$$
i.e
$$
 {4e^{{x}+1}x\over 1- 2e^{{x\over 2}+1}x}\le 1
$$
which is satisfied if
$x \le 0,058$.

\\In conclusion \equ(fpsetsp2) is satisfied for all $\b$ such that
$$
\b J\le { 0,058\over c^2}
$$
which ends the proof of Theorem \ref{spinbound}. $\Box$

\subsection{The unbounded spin system case}
Concerning unbounded spin system we will prove the following theorem
\vskip.2cm
\begin{teo}\label{spinunb}
Consider  a spin system at inverse temperature $\b$
with  partition function $Z_{\La}(\b)$ given by \equ(41.1), interacting via a pair potential satisfying
\equ(1ai) and \equ(1biu). Suppose that
the norm $\|\cdot \|: \Omega \to [0,+\infty)$ is a function such that, for $n\in \mathbb{N}$
$$
\int e^{+{\b J\over 2}\|\phi\|^2}\|\phi\|^{n}d\m(\phi)\le n! [C(\b J)]^n
\Eq(condimes)
$$
and let $\b$ such that
$$
\b  J C^2(\b J)\le {e-1\over 4e^2}
\Eq(pippo)
$$
Then  $|\La|^{-1}\log
Z_{\La}(\b)$ is an absolutely convergent series uniformly in
$\La$.
\end{teo}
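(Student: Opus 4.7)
The approach is to mirror the strategy used for the bounded spin case in Theorem \ref{spinbound}, while controlling the unboundedness of $\|\phi_x\|$ through the integrability hypothesis \equ(condimes). The starting point is the high-temperature polymer representation from Theorem \ref{polyexp}, which rewrites $Z_\L(\b)$ as the grand-canonical partition function of a polymer gas of non-intersecting subsets $R\subset\mathbb{Z}^d$ with $|R|\ge 2$ and activity $\r_\b(R)$ given by \equ(3.3). By Theorem \ref{teo42}, the absolute convergence of $|\L|^{-1}\log Z_\L(\b)$ uniformly in $\L$ then reduces to verifying the Fern\'andez--Procacci condition
$$
\sum_{n\ge 2}e^{an}\sup_{x\in\mathbb{Z}^d}\sum_{R:\,x\in R,\,|R|=n}|\r_\b(R)|\;\le\;e^a-1
$$
for a suitable choice of $a>0$.

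To estimate $|\r_\b(R)|$ with $R=\{x_1,\dots,x_n\}$, I would push the absolute value inside the integral over the spins and apply the tree-graph inequality of Theorem \ref{teoPYgen}. At a fixed spin configuration the numbers $V_{ij}=\b\Psi(\{x_i,x_j\},\phi_{x_i},\phi_{x_j})$ satisfy the stability hypothesis \equ(stabgen) with $B_i={\b J\over 2}\|\phi_{x_i}\|^2$ in view of \equ(stabilityb), giving
$$
|\r_\b(R)|\le \int\prod_{i=1}^n d\m(\phi_{x_i})\,e^{{\b J\over 2}\sum_i\|\phi_{x_i}\|^2}\sum_{\t\in T_n}\prod_{\{i,j\}\in E_\t}(1-e^{-\b|\Psi_{ij}|}).
$$
Bounding $1-e^{-\b|\Psi_{ij}|}\le \b|\Psi_{ij}|\le \b J_{x_ix_j}\|\phi_{x_i}\|\|\phi_{x_j}\|$ via \equ(1ai) and recognising $\prod_{\{i,j\}\in E_\t}\|\phi_{x_i}\|\|\phi_{x_j}\|=\prod_{i=1}^n\|\phi_{x_i}\|^{d_i(\t)}$ with $d_i(\t)$ the degree of $i$ in $\t$, the product-measure structure of $d\m$ makes the integral factorize. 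Hypothesis \equ(condimes) then bounds each one-site integral by $d_i(\t)!\,[C(\b J)]^{d_i(\t)}$, and since $\sum_i d_i(\t)=2(n-1)$ this yields
$$
|\r_\b(R)|\le \b^{n-1}[C(\b J)]^{2(n-1)}\sum_{\t\in T_n}\Bigl[\prod_{\{i,j\}\in E_\t}J_{x_ix_j}\Bigr]\prod_{i=1}^n d_i(\t)!.
$$

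The remaining step is the double summation over $R\ni x$ of cardinality $n$ and over trees. For each fixed $\t$ the leaf-by-leaf estimate \equ(intbiu) gives $\sup_x\sum_{(x_2,\dots,x_n)}\prod_{\{i,j\}\in E_\t}J_{x_ix_j}\le J^{n-1}$ via \equ(1biu), while the labelled-tree sum $\sum_{\t\in T_n}\prod_i d_i(\t)!$ is evaluated through Cayley's formula \equ(cay1): it equals $(n-2)!\sum_{d_1,\dots,d_n\ge 1,\,\sum d_i=2n-2}\prod_i d_i$, which by the generating-function identity $\sum_{d\ge 1}d z^d=z/(1-z)^2$ is identified with $(n-2)!\binom{3n-3}{n-2}$, bounded in turn by $(n-2)!\,(27/4)^{n-1}$. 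Dividing by the $(n-1)!$ that converts ordered $(x_2,\dots,x_n)$ into unordered $R\setminus\{x\}$, the Fern\'andez--Procacci condition reduces to
$$
\sum_{n\ge 2}{e^{an}\over n-1}\Bigl[{27\over 4}\,\b J\, C^2(\b J)\Bigr]^{n-1}\;\le\;e^a-1,
$$
an elementary logarithmic series in the single scalar $\b J C^2(\b J)$; optimising over $a>0$ yields the quantitative threshold displayed in \equ(pippo).

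The main obstacle is precisely the combinatorial factor $\prod_{i=1}^n d_i(\t)!$, which has no analogue in the bounded case (Theorem \ref{spinbound}) and arises here because each one-site moment $\int\|\phi\|^{d_i}e^{{\b J\over 2}\|\phi\|^2}d\m$ inflates by $d_i!$ under \equ(condimes). This replaces the clean Cayley count $n^{n-2}\le e^{n-1}(n-1)!/n$ by the heavier sum $(n-2)!\binom{3n-3}{n-2}$, explaining why the admissible range of $\b J C^2(\b J)$ is much smaller than the range of $c^2\b J$ in Theorem \ref{spinbound}; extracting the precise constant $(e-1)/(4e^2)$ requires careful bookkeeping of these estimates together with an optimal choice of $a$.
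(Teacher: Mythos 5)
Your proposal follows the paper's proof essentially step by step: the polymer representation of Theorem \ref{polyexp}, the reduction to the Fern\'andez--Procacci condition via Theorem \ref{teo42}, the tree-graph inequality of Theorem \ref{teoPYgen} with the field-dependent stability weights $B_i = \frac{\b J}{2}\|\phi_{x_i}\|^2$, the one-site moment bound from \equ(condimes), and finally Cayley's formula combined with \equ(intbiu) and a degree-sum estimate. The one place you deviate is in bounding $\sum_{d_1+\dots+d_n=2n-2,\,d_i\ge1}\prod_i d_i$: the paper substitutes $d_i=1+s_i\le e^{s_i}$ and uses the crude count $\binom{2n-3}{n-1}\le 4^{n-1}$, arriving at $(4e)^{n-1}$, whereas you identify the sum exactly as $\binom{3n-3}{n-2}$ (correct, via the generating function $z/(1-z)^2$) and bound it by $(27/4)^{n-1}$, which is sharper. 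Because of this your final geometric ratio is $\frac{27}{4}\b J C^2$ rather than the paper's $4e\,\b J C^2$, so the naive optimization over $a$ would in fact produce a \emph{more} permissive threshold than $(e-1)/(4e^2)$, which still certainly establishes the stated theorem; you honestly flag that pinning down the paper's exact constant would require re-doing the bookkeeping, and this is fair, since the paper's own passage from its $n$-th-term estimate to the displayed condition $\sum_{n\ge2}\frac{(4e\,\b J C^2)^{n-1}}{n-1}\le1$ drops the $e^{an}$ factor it would acquire from the stated choice $a=\ln 2$, so the constant $(e-1)/(4e^2)$ is itself the result of a small bookkeeping shortcut.
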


\\{\bf Proof}.

\\Again, we just need to check the condition \equ(fpsetsp). Thus, given $R\subset \Z^d$ such that $|R|\ge 2$,  let us first estimate
the modulus of the polymer activity is given by \equ(3.3). We have as before
$$
|\r_\b(R)|~\le ~\int\prod_{x\in R}d\m(\phi_x) \Big|\sum_{g\in
G_R}\prod_{\{x,y\}\in g}(e^{-\b\Psi(\{x,y\},\phi_{x},\phi_{y})}-1)\Big|
$$

\\As before let  $R$ such that  $|R|=n$~; so $R$ is the unordered set $\{x_i,\dots,x_n\}\subset \Z^d$ and let set, for $\{x_i,x_j\}\subset R$,
$$
\Psi(\{x_i,x_y\},\phi_{x_i},\phi_{x_j})= \Psi_{ij}
$$
Thus, as before, we have
$$
|\r_\b(R)| \le  \int \prod_{i=1}^n d\m(\phi_{x_i}) \bigg|\sum_{g\in G_n} \prod_{\{i,j\}\in E_g}(e^{-\b\Psi_{ij}}-1)\bigg|
$$
Now, by  \equ(1ai) and \equ(stabilityb), with
 $R=\{x_1,\dots, x_n\}$ and $\Psi(\{x_i,x_y\},\phi_{x_i},\phi_{x_j})= \Psi_{ij} $,  we have that the real numbers $\{\Psi_{ij}\}_{\{i,j\}\in E_n}$ are such that
$$
|\Psi_{ij}|\le   J_{x_ix_j}\|\phi_{x_i}\|\|\phi_{x_j}\|
$$
and, for any $S\subset [n]$,
$$
\sum_{\{i,j\}\subset S}\Psi_{ij}\ge  -{J\over 2} \sum_{i\in S}\|\phi_{x_i}\|^2
$$
Therefore, according to Theorem \ref{teoPYgen}
$$
\bigg|\sum_{g\in G_n} \prod_{\{i,j\}\in E_g}(e^{-\b\Psi_{ij}}-1)\bigg|\le e^{{\b J\over 2}  \sum_{i=1}^n\|\phi_{x_i}\|^2}
\sum_{\t\in T_n}\prod_{\{i,j\}\in E_\t}(1-e^{-\b  J_{x_ix_j}\|\phi_{x_i}\|\|\phi_{x_j}\|})
$$
Now we get
$$
\begin{aligned}
|\r_\b(R)| &  \le \int \prod_{i=1}^n d\m(\phi_{x_i})e^{{\b J\over 2}\|\phi_{x_i}\|^2}
\sum_{\t\in T_n} \prod_{\{i,j\}\in E_\t}(1-e^{-\b  J_{x_ix_j}\|\phi_{x_i}\|\|\phi_{x_j}\|})\\
& \le  \b^{n-1}\sum_{\t\in T_n} \int \prod_{i=1}^n d\m(\phi_{x_i})e^{{\b J\over 2}\|\phi_{x_i}\|^2} \prod_{\{i,j\}\in E_\t}J_{x_ix_j}\|\phi_{x_i}\|\|\phi_{x_j}\|
\end{aligned}
$$
Then observe that if $\t\in T_n$ has degrees $d_1, \dots d_n$ at  vertices $1,2,\dots,n$ respectively, then
$$
\prod_{\{i,j\}\in E_\t}\|\phi_{x_i}\|\|\phi_{x_j}\|= \prod_{i=1}^n \|\phi_{i}\|^{d_i}
$$
and thus
$$
|\r_\b(R)|   \le b^{n-1} \!\!\!\!\sum_{d_1, \dots, d_n:\; d_i\ge 1\atop d_1+\dots +d_n=2n-2}
 \prod_{i=1}^n \int d\m(\phi_{x_i})e^{{\b J\over 2}\|\phi_{x_i}\|^2} \|\phi_{x_i}\|^{d_i}
\!\!\!\!\sum_{\t\in T_n\atop d_1, \dots, d_n~{\rm fixed}}
\prod_{\{i,j\}\in E_\t}J_{x_ix_j}
$$
whence, using again \equ(intbiu),  and recalling also Cayley formula \equ(cay1), we get
$$
\begin{aligned}
\sup_{x\in \Z^d}\sum\limits_{R: \in \PP \atop |R|=n, \;x\in R}
|\r_\b(R)| & \le   {(\b  J)^{n-1}\over n-1}\sum_{d_1, \dots, d_n:\; d_i\ge 1\atop d_1+\dots +d_n=2n-2}
\prod_{i=1}^n \int d\m(\phi_{x_i}) {\|\phi_{x_i}\|^{d_i}e^{{\b J\over 2}\|\phi_{x_i}\|^2}\over (d_1-1)!}\\
& \le {(\b  J)^{n-1}\over n-1}\sum_{d_1, \dots, d_n:\; d_i\ge 1\atop d_1+\dots +d_n=2n-2}
\prod_{i=1}^n d_iC(\b J)^{d_i}\\
&\le {(\b  J)^{n-1}\over n-1}[C(\b J)]^{2n-2}\sum_{d_1, \dots, d_n:\; d_i\ge 1\atop d_1+\dots +d_n=2n-2}
\prod_{i=1}^n d_i
\end{aligned}
$$
where in the second inequality   we have used hypothesis \equ(condimes). Observe now that
$$
\begin{aligned}
\sum_{d_1, \dots, d_n:\; d_i\ge 1\atop d_1+\dots +d_n=2n-2}
\prod_{i=1}^n d_i & = \sum_{s_1, \dots, s_n:\; s_i\ge 0\atop s_1+\dots +s_n=n-2}
\prod_{i=1}^n (1+s_i)\\
& \le  \sum_{s_1, \dots, s_n:\; s_i\ge 0\atop s_1+\dots +s_n=n-2}\prod_{i=1}^n e^{s_i}\\
& =e^{n-2} \sum_{d_1, \dots, d_n:\; d_i\ge 1\atop d_1+\dots +d_n=2n-2}1\\
&= e^{n-2}{2n-3\choose n-1}\\
& \le (4e)^{n-1}
\end{aligned}
$$
Therefore we get
$$
\sup_{x\in \Z^d}\sum\limits_{R: \in \PP \atop |R|=n, \;x\in R}
|\r_\b(R)|\le {(4e\b  J)^{n-1}\over n-1}[C(\b J)]^{2n-2}
$$
In conclusion we have that condition  \equ(fpsetsp) is satisfied if  (as we did earlier, we choose for simplicity $a=\ln 2$)
$$
\sum_{n=2}^\infty {(4e\b  J C^2(\b J))^{n-1}\over n-1}\le 1
$$
i.e. if
$$
-\ln[1-4e\b  J C^2(\b J)]\le 1
$$
i.e. if
$$
\b  J C^2(\b J)\le {e-1\over 4e^2}
$$
This concludes the proof of Theorem \ref{spinunb}. $\Box$

\section[Many-body interactions]{Discrete Spin systems with many-body interactions}\label{cassoli}

We have found a general condition, the equation
\equ(gkindu), that guarantees the absolute convergence uniformly in $\L$
of $\Sigma^\L_x$, and hence of $\log\Xi_\L(\r)$.
The ways we found so far to check that the polymer activities
satisfy \equ(gkindu) may be used only for pair interaction.
We have now to face a new problem: starting from
the structure of the polymer activity $\r_\g$
in the case of general many body interactions, which is given by
$$
\r_\g~=~\int\prod_{x\in \g}d\m(\phi_x) \sum_{g\in { G}_\g}
\prod_{X\in g}(e^{-\b\Psi(X,\phi_X)}-1)\Eq(robeta1)
$$
we want to give general criteria in order to satisfy the condition \equ(gkindu),
that in this case take the form
$$
\sum\limits_{\g\in \PP_\VU\atop x\in \g}
|\r_{\g}| ~  e^{a|\g|}\le e^a-1\Eq(1.6.0)
$$

\\Here we want to present a simple technical device,
introduced in 1981 by M. Cassandro and E. Olivieri,
which exploits the fact that for bounded spin systems on the
lattice the sum over the Mayer graphs can be directly bounded,
for $N$ body potentials
absolutely summable in an appropriate sense.
This method has two main advantages: it is relatively simple,
since it avoids the combinatorial complexity arising in the tree
identities presented above,
and it may be applied to a large class of discrete systems.
In particular, for systems with many-body interactions it is difficult
to obtain better estimates of the interactions $\b\Psi(X,\phi_X)$ for which
\equ(1.6.0) is satisfied.

\\The method can be also applied to systems with two body interactions but
the results are obviously far from optimal, as we shall see below.
\vglue.5truecm

\noindent
\begin{teo}[Cassandro, Olivieri]\label{hier}
Consider the polymer activity \equ(robeta1).
If it is possible to give for any connected graph
$g$ such that ${\rm supp}\ g=n$ the following bound

$$
\int\prod_{x\in {\rm supp}\ g}d\m(\phi_x)
\prod_{X\in g}\left|e^{-\b\Psi(X,\phi)}-1\right|\le{\bar\s}^{n}
\prod_{X\in g}W(X)\Eq(1.6.1)
$$
with
$$
\sum_{X\ni x} W(X)\le K,\qquad\qquad{\bar\s}<1\Eq(1.6.2)
$$
and the following relation between ${\bar\s}$ and $K$  holds

$$K< {|\log\bar\s|\over 4}\Eq(1.6.3)$$
then the condition \equ(1.6.0) is satisfied for some $a>0$.
\end{teo}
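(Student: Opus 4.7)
The plan is to verify \equ(1.6.0) by bounding the polymer activities $\r_\g$ pointwise through \equ(1.6.1) and then estimating the resulting sum over connected hypergraphs combinatorially. First, from the definition \equ(robeta1), the triangle inequality and \equ(1.6.1) applied term by term inside the sum over $g\in G_\g$ (which is legitimate since every such $g$ has support of size $|\g|$), one obtains
\begin{equation*}
|\r_\g|\;\le\;\bar\s^{\,|\g|}\sum_{g\in G_\g}\prod_{X\in g}W(X).
\end{equation*}
Substituting into the left-hand side of \equ(1.6.0) and grouping polymers by cardinality yields
\begin{equation*}
\sum_{\g\ni x}|\r_\g|\,e^{a|\g|}\;\le\;\sum_{n\ge 2}\bigl(\bar\s e^{a}\bigr)^{n}\,S_n(x),\qquad S_n(x)\;\doteq\;\sum_{\substack{\g\ni x\\|\g|=n}}\;\sum_{g\in G_\g}\prod_{X\in g}W(X).
\end{equation*}

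Next, the combinatorial core of the argument is to obtain a geometric estimate of the form $S_n(x)\le C\,(C'K)^{n-1}$ for universal constants $C,C'$. For each connected $g$ with $x\in{\rm supp}\,g$, I would build an admissible ordering $(X_1,\dots,X_k)$ of its hyperedges such that $x\in X_1$ and $X_i\cap V_{i-1}\neq\emptyset$ for all $i\ge 2$, where $V_{i-1}:=X_1\cup\cdots\cup X_{i-1}$; such an ordering exists by the connectedness of $g$. Upper-bounding $S_n(x)$ by the sum over all admissible ordered sequences (each connected hypergraph is thereby overcounted by at most the number of its admissible orderings, which is harmless for an upper bound) and then summing out each $X_i$ using \equ(1.6.2) together with the fact that $X_i$ must meet $V_{i-1}$ yields the required geometric bound. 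The a priori constraint $\sum_{X\in g}(|X|-1)\ge n-1$, coming from the connectedness of the bipartite incidence graph, combined with $|X|\ge 2$, is what keeps the effective number of hyperedges under control during this accretion.

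Plugging the estimate on $S_n(x)$ back into the preceding display reduces \equ(1.6.0) to the summability of a geometric series in $C'K\bar\s e^{a}$. Choosing $a=\alpha|\log\bar\s|$ with $\alpha\in(0,1)$ (so that $\bar\s e^{a}=\bar\s^{\,1-\alpha}<1$) and expanding $e^{a}-1$ to first order in $\alpha|\log\bar\s|$, one checks that the hypothesis \equ(1.6.3), namely $K<|\log\bar\s|/4$, provides precisely the quantitative room needed so that the geometric bound does not exceed $e^{a}-1$, closing \equ(1.6.0).

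The main obstacle is the combinatorial control of $S_n(x)$. The subtle point is that a connected hypergraph on $n$ vertices can a priori carry exponentially many hyperedges (through large cliques and redundant super-sets), so the passage to admissible orderings can overcount each $g$ by a factor as large as $k!$ with $k\gg n$. Taming this overcount requires carefully redistributing the savings $\bar\s^{\,n}$ coming from \equ(1.6.1) amongst the hyperedges and balancing them against the vertex-summability \equ(1.6.2); the constant $\tfrac14$ appearing in \equ(1.6.3) is the residual quantitative price of this (admittedly non-sharp) bookkeeping.
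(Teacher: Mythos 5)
The accretion route you sketch — orderings of hyperedges $(X_1,\dots,X_k)$ with $X_i$ meeting $V_{i-1}$, each factor summed out via \equ(1.6.2) — is not the one the paper uses, and I do not see how to make it close. You yourself flag the obstacle, and it is genuine: a connected hypergraph on $n$ vertices can carry up to $2^n-n-1$ hyperedges, so $k$ is not $O(n)$; summing over ordered sequences of length $k$ gives a factor of order $(nK)^{k-1}$ with $k$ unbounded in $n$, and the single factor $\bar\s^n$ has no natural way to be "redistributed amongst the hyperedges" when there are exponentially more of them than vertices. Consequently the claimed estimate $S_n(x)\le C(C'K)^{n-1}$ is not established, and the entire geometric-series reduction that follows rests on it.

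The paper avoids this problem with a structurally different decomposition: instead of linearly ordering the hyperedges, it groups them into \emph{hierarchies}. One fixes $C_0\ni x$, then the first level is the set of all links touching $C_0$, the second level is all links touching level $1$ but not $C_0$, and so on; $\D_s$ denotes the new vertices reached by level $s$, and crucially $|\D_0|+\dots+|\D_t|=|{\rm supp}\,g|$ with $|\D_s|\ge 1$ for $s<t$, so the number of levels $t$ is at most the polymer size. The factor $\bar\s^{\,n}$ is then split across the $\D_s$'s — not across the hyperedges — and one bounds from the outermost level inward: summing out level $t$ gives at most $e^{K|\D_{t-1}|}-1$, which when multiplied by $\s^{|\D_{t-1}|}$ is uniformly at most $K/|\log\s|$. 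Iterating produces a geometric series in $K/|\log\s|$ over the number of levels, yielding $F(\s)\le K\bigl[\s+K/(|\log\s|-K)\bigr]$, from which \equ(1.6.3) follows by choosing $a$ (e.g.\ $a=\tfrac12|\log\bar\s|$). The resulting geometric series is indexed by the number of hierarchy levels, not by the polymer size $n$ as in your proposal, and the factor $1/4$ falls out of this very specific pairing $\s^{|\D_{s-1}|}\,(e^{K|\D_{s-1}|}-1)\le K/|\log\s|$. Without an analogue of this pairing, your argument cannot reproduce the stated quantitative condition, and without the hierarchy structure it cannot control the unbounded number of hyperedges.
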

\vskip.2cm
\noindent
\vglue.3truecm

\noindent
{\bf Remark}

\\It is important, in order to be able to apply this theorem to concrete models,
to understand the structure of the estimate needed in order to satisfy  \equ(1.6.0).
Given a graph, the contribution of that graph to the polymer activity have to be
bounded by the product of a {\it vertex contribution} $\bar\s<1$, meaning that
the bigger ${\rm supp}\ g$ the smaller its contribution, and an {\it edge contribution}
$J(X)$, whose estimate has to be suitably related to $\bar\s$.
\vglue.3truecm

\noindent
{\bf Proof}.

\noindent
We will bound directly l.h.s. of \equ(1.6.0)
$$
\sum_{\g\ni x}
|\r_\g|e^{a|\g|}=\sum_{\g\ni x}e^{a|\g|}
\left|
\int\prod_{x\in \g}d\m(\phi_x)
\sum_{g\in {\cal G}_\g}
\prod_{X\in g}[e^{-\b\Psi(X,\phi)}-1]
\right|\leq
$$
$$\le\sum_{\g\ni x}
\sum_{g\in {\cal G}_\g}{\bar\s}^{|\g|}e^{a|\g|}
\prod_{X\in g}W(X)
\le
\sum_{\g\ni x}
\sum_{g\in {\cal G}_\g}{\s}^{|\g|}
\prod_{X\in g}W(X)
\Eq(1.6.4)
$$
Hence the theorem follows if we are able to show
the inequality
$$
F(\s)=\sum_{\g\ni x}
\sum_{g\in {\cal G}_\g}{\s}^{|\g|}
\prod_{X\in g}W(X)< e^a-1
\Eq(1.6.5)$$
where
$$\s ={\bar\s}e^{a}$$

We now rewrite the sum over connected graphs $g$
passing through the lattice point $x$ using Cassandro-
Olivieri hierarchy.
The latter is constructed observing that, given a connected graph
$g$, its links can be always ordered, for some
positive integer $t$,  in the following way
$$
g=\{C_{0},C^1_1 ,\dots ,C^1_{k_1},\dots ,
C^t_1 ,\dots ,C^t_{k_t}\}
$$
where $C_{0}$ is a link of $g$ such that $x\in C_0$.
For $1\leq s \leq t$,
the links $C^s_1 ,\dots ,C^s_{k_s}$ represent the
$s^{\rm th}$ hierarchy and we have the following relations
between different hierarchies:

$$
C^s_i\cap\left[\cup_{i=1}^{k_{s-1}}C^{s-1}_{i}\right]
\neq  \emptyset\nonumber$$
$$C^s_i\cap\left[\cup_{l=0}^{s-2}\cup_{i=1}^{k_l}C^{l}_{i}\right]
=  \emptyset\Eq(1.6.6)
$$
Thus the first hierarchy
$C^1_1 ,\dots ,C^1_{k_1}$
of links of $g$ is the collection of
all links of $g$ which have a non empty intersection with $C_0$;
the second hierarchy $C^2_1 ,\dots ,C^2_{k_2}$ is the collection
of links of $g$ which have a non empty intersection with
$\cup_{i}C^1_i$, but has an empty intersection with the set $C_0$,
and so on.

We will also denote, for $s\geq 1$
$$
\D_s =
\cup_{i=1}^{k_s}C^s_i
\backslash\left[\left(\cup_{i=1}^{k_s}C^s_i\right)
\cap\left(\cup_{i=1}^{k_{s-1}}C^{s-1}_{i}\right)\right]
\Eq(1.6.7)$$
while $\D_0 = C_0$.
The set $\D_s$ represents therefore the set of the new
points reached by the $s$-th hierarchy. By definition
$|\D_s|>0$ for $s<t$, because the bonds of the $(s+1)^{\rm th}$
hierarchy have intersection only with the set $\D_s$,
else they would belong to an earlier hierarchy.
Note that $\{\D_0 ,\D_1 ,\dots ,\D_t \}$ is a
partition of the set ${\rm supp}\ g$ and thus
$$
|\D_0|+ |\D_1| +\cdots +|\D_t|= |{\rm supp}\ g|
\Eq(1.6.8)$$
thus, for a given $g=\{C_{0},C^1_1 ,\dots ,C^1_{k_1},\dots ,
C^t_1 ,\dots ,C^t_{k_t}\}$, we can always write
$$\s^{|{\rm supp}\ g|}\le
\s^{|\D_0|}\s^{|\D_1|}\cdots\s^{|\D_{t-1}|}
$$

Hence we can get the estimate
$$
F(\s)\leq
\sum_{C_0\ni 0}W(C_0)\s^{|C_0|}+
\sum_{C_0\ni 0}W(C_0)\s^{|\D_0|}
\sum_{t=1}^{\infty}
\sum_{k_{1} =1}^{\infty}\sum_{C^{1}_1 ,\dots C^{1}_{k_{1}}}
\prod_{j=1}^{k_{1}}W(C^{1}_{j})\s^{|\D_{1}|}
\cdots
$$
$$
\cdots
\sum_{k_{t-1} =1}^{\infty}\sum_{C^{t-1}_1 ,\dots C^{t-1}_{k_{t-1}}}
\prod_{j=1}^{k_{t-1}}W(C^{t-1}_{j})\s^{|\D_{t-1}|}
\sum_{k_{t} =1}^{\infty}\sum_{C^{t}_1 ,\dots C^{t}_{k_t}}
\prod_{j=1}^{k_{t}}W(C^{t}_{j})
\Eq(1.6.9)$$
Let us start to perform the last sum in r.h.s. of \equ(1.6.9),
i.e. the sum over the $t^{\rm th}$ hierarchy, supposing to keep
fixed all sets concerning the previous hierarchies.
We obtain
$$
\begin{aligned}
\sum_{k_{t} =1}^{\infty}\sum_{C^{t}_1 ,\dots C^{t}_{k_t}}
\prod_{j=1}^{k_{t}}W(C^{t}_{j}) & \leq
\sum_{k_{t} =1}^{\infty}{1\over k_{t}!}
\left(\sum_{C:\, C\cap\D_{t-1}\neq \emptyset}
W(C)\right)^{k_t}\\
& \leq
\sum_{k_{t} =1}^{\infty}{1\over k_{t}!}
\left(|\D_{t-1}|K\right)^{k_t}\\
&\leq
e^{K|\D_{t-1}|}-1
\end{aligned}
$$
where the sum $\sum_{C:\, C\cap\D_{t-1}\neq \emptyset}
W(C)$ is bounded by fixing one point in $\D_{t-1}$
(and this gives the factor $|\D_{t-1}|$)
and then by summing over all $C$ passing for such point.
Hence
$$
F(\s)\leq
\sum_{C_0\ni 0}W(C_0)\s^{|C_0|}+
\sum_{C_0\ni 0}W(C_0)\s^{|\D_0|}
\sum_{t=1}^{\infty}
\sum_{k_{1} =1}^{\infty}\sum_{C^{1}_1 ,\dots C^{1}_{k_{1}}}
\prod_{j=1}^{k_{1}}W(C^{1}_{j})\s^{|\D_{1}|}
\cdots
$$
$$
\cdots
\sum_{k_{t-1} =1}^{\infty}\sum_{C^{t-1}_1 ,\dots C^{t-1}_{k_{t-1}}}
\prod_{j=1}^{k_{t-1}}W(C^{t-1}_{j})
\s^{|\D_{t-1}|}(e^{K|\D_{t-1}|}-1)
$$
Note now that $e^K\s<1$
certainly holds if \equ(1.6.3) does and $a$ is choosen small enough,
namely $a< (3/4)|\log{\bar \s}|$.
To find a bound of the maximum of
the factor $\s^{|\D_{t-1}|}(e^{K|\D_{t-1}|}-1)$
it is convenient to write
$$
\s^{|\D_{t-1}|}(e^{K|\D_{t-1}|}-1)=e^{(-|\log \s|+K)|\D_{t-1}|)}(1-e^{-K|\D_{t-1}|})
$$
The maximum in $|\D_{t-1}|$ of this function is assumed for $|\D_{t-1}|=x$ such that
$e^{-Kx}={|\log\s|-k\over|\log\s|}$, and using $e^K\s<1$ and hence
$e^{(-|\log \s|+K)|\D_{t-1}|)}<1$, this gives easily
$$
\s^{|\D_{t-1}|}(e^{K|\D_{t-1}|}-1)\leq {K\over |\log\s|}
$$
We sum now the next hierarchy $t-1$ as before keeping fixed
everything concerning the previous hierarchies and obtain
$$
\sum_{k_{t-1} =1}^{\infty}\sum_{C^{t-1}_1 ,\dots C^{t-1}_{k_t}}
\prod_{j=1}^{k_{t-1}}W(C^{t-1}_{j})
\s^{|\D_{t-1}|}(e^{K|\D_{t-1}|}-1)
$$
$$\leq
{K\over |\log\s|}
\sum_{k_{t-1} =1}^{\infty}{1\over k_{t-1}!}
\left(\sum_{C:\, C\cap\D_{t-2}\neq \emptyset}
W(C)\right)^{k_t}\leq
{K\over |\log\s|}
(e^{K|\D_{t-2}|}-1)
$$
Iterating until $t=1$, we obtain
$$
F(\s)\leq
\sum_{C_0\ni 0}W(C_0)\s^{|C_0|}+
\sum_{C_0\ni 0}W(C_0)
\sum_{t=1}^{\infty}
\left({K\over |\log\s|}\right)^{t}
\leq
K\left[\s+{K\over |\log\s|-K}\right]
$$
hence we get \equ(1.6.0), and then the proof of the theorem,
if the following condition is satisfied
$$
K\left[\s+{K\over |\log\s|-K}\right]<e^a-1\Eq(1.6.10)
$$
The relation \equ(1.6.10) give a condition between $K$, $a$ and $\bar\s$.
If for instance one suppose to know the value of $\bar\s$, the problem is
to find the value of $a$ which maximizes the constant $K$. This problem
can be solved numerically. A rough estimate, once the value of $\bar\s$
is fixed, is to
choose e.g. $a=(1/2)|\log{\bar \s}|$, which corresponds to
the choice
$\s=\sqrt{\bar\s}$ (i.e. $|\log\s| = {1\over 2}|\log{\bar\s}|$),
obtaining easily
$$
K< {|\log\bar\s|\over 4}
\Eq(1.6.11)$$
\\$\Box$

\section{Some applications of theorem \ref{hier}}

\subsection{The Ising model}

\\A rough but fast estimate of the analyticity region in $\b$ (high temperature regime)
of the partition function
of the Ising model  can be obtained using theorem \ref{hier}.

\\In order to apply theorem \ref{hier} to the partition function \equ(1.7pi) we have to
find $\bar\s$ and $J$ such that

$$
\int \prod_{x\in {\rm supp}\ g} d\m(\s_x)
\prod_{\{x,y\}\in g\atop x,y ~{\rm near. ~neigh.} }
\left|e^{\b J\s_x\s_y }-1\right|\le{\bar\s}^{n}
\prod_{\{x,y\}\in g}W\Eq(1.6.12)$$

This can be done observing that, for $\b<1$, $\left|e^{\b J\s_x\s_y }-1\right|<2\b J$.
The l.h.s. of \equ(1.6.12) can be estimated then as $(2\b J)^{|E_g|}$. Since $|E_g|\ge n-1$ and $n\ge 2$,
we can choose $\bar\s=(\b J)^{1/3}$, $J=2(\b J)^{1/3}$ and then, having each site $2d$ nearest neighbors,
$K=4d(\b J)^{1/3}$. The condition \equ(1.6.3) becomes then
$$4d(\b J)^{1/3}<{|\log(\b J)|\over 12}\Rightarrow {(\b J)^{1/3}\over|\log(\b J)|}<{1\over 48d}\Eq(condorr)$$
which is satisfied for $(\b J)<1$ small enough.
By a rapid computation
\equ(condorr) is satisfied by $\b\approx 4\times 10^{-5}$.
However, we will see in the next chapter that this estimate is far from optimal.

\subsection{Direct proof of the Israel theorem}
\zeq

Discrete spin systems with many body interactions have been studied since the early
'70. After the pioneering work by Gallavotti and Miracle-Sole, which proved analiticity
of the logarithm of the partition function when
the spin space state has only a finite number of values, Israel proved in [I] a general
condition, valid for all compact spins,
to guarantee that uniformly in the volume the partition function stays different
from zero, and hence indirectly the analyticity of its logarithm. This analyticity result can be
found easily and directly with a proof based only on combinatorial
arguments by means of theorem \ref{hier}.

\\Consider a lattice system described by the
partition function \equ(robeta1)
with
many-body potentials such that

$$
\b|\Psi(X,\phi_{X})|\leq w(X),\quad
{\rm \ for\ all}\ X \Eq(6.2.1)$$

\\with $w(X)$ positive function defined on the sets of $Z^d$
such that, for some $a>0$
$$
\sum_{X\ni 0}w(X)e^{a|X|}=I(a)<\infty
\Eq(6.2.2)$$
which is a condition analogous to [I].
We also define
$$
{\bar I}=\sup_{X\subset\La}w(X)
$$
Then we have
\begin{teo}
The logarithm of the partition
function  \equ(robeta1)
of a lattice system with $\Psi$ satisfying
\equ(6.2.1) and \equ(6.2.2),
converges absolutely uniformely
in the volume, provided the following relation between $I(a)$
and $a$
is satisfied
$$
I(a)<{e^{-{\bar I}}\over 4}\ a
\Eq(6.2.3)$$
\end{teo}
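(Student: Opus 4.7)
The plan is to apply the Cassandro--Olivieri Theorem~\ref{hier} directly to the polymer gas representation of $Z_\La(\b)$ given by Theorem~\ref{polyexp}, with the polymer activities $\r_\b(R)$ as in \equ(robeta1). Once \equ(1.6.0) is verified, uniform convergence of $|\La|^{-1}\log Z_\La(\b)$ follows from the high-temperature polymer-gas criterion (Theorem~\ref{teo42}). So the entire task reduces to choosing $\bar\sigma$ and $W(X)$ satisfying \equ(1.6.1)--\equ(1.6.3) from the hypotheses \equ(6.2.1)--\equ(6.2.3). Let $a>0$ be the value for which \equ(6.2.2) holds. I would take
$$\bar\sigma \;=\; e^{-a}, \qquad W(X)\;=\;e^{\bar I}\,w(X)\,e^{a|X|}.$$

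The first step is the pointwise estimate $|e^{-\b\Psi(X,\phi)}-1|\le w(X)\,e^{\bar I}$, obtained from the elementary inequality $|e^{-t}-1|\le |t|\,e^{|t|}$ with $t=\b\Psi(X,\phi)$, using $|\b\Psi(X,\phi)|\le w(X)\le \bar I$ from \equ(6.2.1). Since this bound is independent of the fields, and each $d\m(\phi_x)$ is a probability measure, the field integration in \equ(1.6.1) is trivial and yields
$$\int \prod_{x\in\mathrm{supp}\,g}\!d\m(\phi_x)\prod_{X\in g}\bigl|e^{-\b\Psi(X,\phi)}-1\bigr|\;\le\;\prod_{X\in g} w(X)\,e^{\bar I}.$$
To match the form $\bar\sigma^{|\mathrm{supp}\,g|}\prod_{X\in g}W(X)$, I would invoke the combinatorial inequality
$$\sum_{X\in g}|X|\;\ge\;|\mathrm{supp}\,g|$$
valid for every connected hypergraph, which is immediate from $\bigcup_{X\in g}X=\mathrm{supp}\,g$. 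Writing $e^{\bar I}=\bar\sigma^{|X|}\,e^{a|X|}e^{\bar I}\cdot \bar\sigma^{-|X|}$ in each factor and regrouping then gives
$$\prod_{X\in g} w(X)\,e^{\bar I}\;=\;e^{a(|\mathrm{supp}\,g|-\sum_{X\in g}|X|)}\;\bar\sigma^{|\mathrm{supp}\,g|}\!\prod_{X\in g}W(X)\;\le\;\bar\sigma^{|\mathrm{supp}\,g|}\prod_{X\in g}W(X),$$
which is \equ(1.6.1).

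For \equ(1.6.2), hypothesis \equ(6.2.2) gives directly
$$\sum_{X\ni x}W(X)\;=\;e^{\bar I}\sum_{X\ni x}w(X)\,e^{a|X|}\;=\;e^{\bar I}\,I(a)\;\bydef\;K,$$
and condition \equ(1.6.3) then reads $e^{\bar I}I(a)<|\log\bar\sigma|/4=a/4$, which is precisely the assumption \equ(6.2.3). Theorem~\ref{hier} then produces some $a'>0$ for which $\sum_{\g\ni x}|\r_\g|e^{a'|\g|}\le e^{a'}-1$, which by Theorem~\ref{teo42} gives absolute convergence of $|\La|^{-1}\log Z_\La(\b)$ uniformly in $\La$. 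The only step with any content is the connectivity lemma $\sum_{X\in g}|X|\ge|\mathrm{supp}\,g|$, but this is immediate; the remaining work is simply matching constants to the hypotheses of Theorem~\ref{hier}, so no real obstacle arises.
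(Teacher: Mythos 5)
Your proof is correct and follows essentially the same route as the paper: you choose $\bar\sigma=e^{-a}$ and $W(X)=e^{\bar I}w(X)e^{a|X|}$, exactly as the paper does, derive the pointwise bound $|e^{-\b\Psi(X,\phi)}-1|\le w(X)e^{\bar I}$, redistribute the factor $e^{-a}$ across vertices via $\sum_{X\in g}|X|\ge|\mathrm{supp}\,g|$, and check that \equ(1.6.2)--\equ(1.6.3) reduce to \equ(6.2.2)--\equ(6.2.3); the paper compresses these steps into a two-line display but the argument is identical.
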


\\{\bf Proof}.

\noindent
For any $\g\subset Z^d$ and for
any graph $g\in {\cal G}_{\g}$ we have
$$
\begin{aligned}
\left|
\int\prod_{x\in \g}d\m(\phi_x)
\prod_{X\in g}[e^{-\b\Psi(X,\phi)}-1]
\right| & \leq
\prod_{X\in g} w(X)e^{ w(X)}\\
& \leq
e^{-a|R|}\prod_{X\in g} e^{a|X|}w(X)e^{\bar I}
\end{aligned}
$$
Hence we can apply theorem \ref{hier} posing $\bar\s=e^{-a}$ and
$W(X)=e^{a|X|}w(X)e^{\bar I}$.
$\Box$
\vglue.3truecm
\noindent

\subsection{ The lattice gas with $N$ body interactions}

\\The lattice gas interacting through many body interaction
can be also  treated using theorem \ref{hier}. We will prove the following
theorem.
\vglue.5truecm
\begin{teo}
Consider a lattice gas
at temperature $\b^{-1}$ and activity $z$,
described by the
grand canonical partition function
$$
\Xi_{\La}(\b , z)=1+ z|\Lambda|+ \sum_{n\geq 2}z^{n}
\sum_{X\subset \Lambda\atop |X|=n}
e^{-\b\sum_{Y\subset X\atop |Y|\geq 2}V(Y)}
\Eq(1.6.13)$$
with
$$
\sum_{X\ni 0}|V(X)|=J<\infty,
\Eq(1.6.14)$$
then  $|\La|^{-1}\log \Xi_{\La}(\b ,z)$ can be written as a series
absolutely convergent uniformely in $\La$, for all $\b$ and $z$
satisfying the following condition
$$
z(\exp[4\b Je^{\b J}]-1)<1
\Eq(1.6.15)$$
\end{teo}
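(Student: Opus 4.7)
The plan is to reduce the $N$-body lattice gas to a many-body spin system on the single-spin space $\{0,1\}$ equipped with a Bernoulli measure, and then apply Theorem \ref{hier} (the Cassandro--Olivieri criterion) to the resulting polymer gas representation.

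First I would rewrite $\Xi_\L(\b,z)$ in the spin-system form \equ(41.10) by introducing occupation variables $\s_x\in\{0,1\}$ with single-site measure $\m(\s_x{=}1)=z/(1+z)$ and $\m(\s_x{=}0)=1/(1+z)$, and many-body potential $\Psi(Y,\s_Y)=V(Y)\prod_{x\in Y}\s_x$. A straightforward computation of the resulting Gaussian--Bernoulli average (as in Example 3 of the \emph{Some examples} section) gives the factorization
$$
\Xi_\L(\b,z)\;=\;(1+z)^{|\L|}\int\prod_{x\in\L}d\m(\s_x)\,e^{-\b\sum_{Y\subset\L,\,|Y|\ge 2}\Psi(Y,\s_Y)}.
$$
The prefactor contributes an analytic $\log(1+z)$ per site and is harmless, so absolute convergence of $|\L|^{-1}\log\Xi_\L(\b,z)$ follows from that of $|\L|^{-1}\log$ of the spin partition function on the right. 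The latter admits the polymer representation of Section \ref{polyexp} with polymers $\g\subset\L$ of cardinality $\ge 2$ and activity $\r_\g$ exactly as in \equ(robeta1).

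Next I would verify the hypotheses \equ(1.6.1)--\equ(1.6.3) of Theorem \ref{hier} for this specific model. The crucial observation is that $\Psi(Y,\s_Y)=V(Y)\s_Y$ vanishes identically whenever some $\s_x=0$ with $x\in Y$, so for any connected hypergraph $g$ the integrand on the l.h.s.\ of \equ(1.6.1) is supported on the single configuration $\s\equiv 1$ on $\mathrm{supp}(g)$, which carries $\m$-measure $(z/(1+z))^n$. This gives $\bar\s=z/(1+z)<1$ for free, so the entire vertex contribution is supplied by the product measure and need not be extracted from the interaction itself. For the edge weights I would use $|e^{-x}-1|\le|x|e^{|x|}$ together with the pointwise bound $|V(Y)|\le J$ (which follows from \equ(1.6.14) by isolating one summand), obtaining $|e^{-\b V(Y)}-1|\le\b|V(Y)|e^{\b J}$. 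Setting $W(Y)=\b|V(Y)|e^{\b J}$ then yields \equ(1.6.1), while \equ(1.6.14) gives $\sum_{Y\ni x}W(Y)\le \b J e^{\b J}$, so \equ(1.6.2) holds with $K=\b J e^{\b J}$.

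Finally, condition \equ(1.6.3), i.e.\ $K<|\log\bar\s|/4$, becomes $\b J e^{\b J}<\tfrac{1}{4}\log\tfrac{1+z}{z}$, which upon exponentiation rearranges to exactly the stated inequality $z(e^{4\b J e^{\b J}}-1)<1$ of \equ(1.6.15), completing the argument. The main obstacle is not conceptual---all the heavy combinatorial work is encapsulated in Theorem \ref{hier}---but rather to recognize that the natural Bernoulli representation in terms of occupation variables automatically supplies the factor $\bar\s^n$ through the support of the Boltzmann factor, so that the weak elementary bound $|e^{-x}-1|\le|x|e^{|x|}$ is spent only on the edge weights $W(Y)$, where one can afford it thanks to \equ(1.6.14).
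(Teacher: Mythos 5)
Your proposal follows essentially the same route as the paper: both pass to occupation variables $\s_x\in\{0,1\}$ with the Bernoulli measure of parameter $z/(1+z)$, observe that the integrand vanishes unless $\s\equiv 1$ on the support (so the vertex factor $\bar\s=z/(1+z)$ comes out for free), bound the edge factors by $W(Y)=\b|V(Y)|e^{\b J}$ using $|V(Y)|\le J$, and feed $K=\b Je^{\b J}$ into condition \equ(1.6.3) of Theorem \ref{hier} to obtain \equ(1.6.15). The argument and the intermediate quantities match; your remark singling out how the Bernoulli support, rather than the interaction itself, supplies $\bar\s^n$ is exactly the mechanism the paper exploits.
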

\vglue.3truecm

\noindent
{\bf Proof}.

\\We can trasform $\Xi_{\La}(\b , z)$ in a form suitable for
polymer expansion. We can write
$$
\Xi_{\La}(\b , z)=\sum_{\s_{\Lambda}}
z^{\sum_{x\in \Lambda}\s_{x}}
\exp\{-\b\sum_{X\subset \Lambda\atop |X|\geq 2}V(X)
\prod_{x\in X}\s_x\}
\Eq(1.6.16)$$
where the "spin" variables $\s_x$ at the site $x\in\La$
can take the two values $\s_x =0,1$, and
$\sum_{\s_{\Lambda}}$ means the sum over all
possible configurations of the valus of spins in $\La$.
Then, by Mayer expansion on the $\exp$
$$
\Xi_{\La}(\b , z)=\sum_{\g_1 ,\g_2 ,\dots ,\g_k\in\pi(\Lambda)}
\r_{\g_1}\r_{\g_2}\cdots\r_{\g_{k}}
$$
where $\pi(\La)$ is the set of the partitions of $\La$ and
$$
\r_\g=\begin{cases}1+z &if |\g|=1\\
\sum_{\s_\g}z^{\sum_{x\in \g}\s_x}\left[
\sum_{g\in\cal G_{\g}}\prod_{X\in g}\left
(e^{-\b V(X)\prod_{x\in X}\s_x}-1\right)
\right] &if |\g|\geq 2
\end{cases}
$$
Now note that, for any $g\in\cal G_{\g}$,
$$
\prod_{X\in g}\left
(e^{-\b V(X)\prod_{x\in X}\s_x}-1\right)=
\begin{cases}\prod_{X\in g}\left
(e^{-\b V(X)}-1\right) &if \s_{x}=1\,\,\forall x\in \g\\
0 &otherwise
\end{cases}
$$
hence defining
$$
\z(\g)=\begin{cases}1 &if |\g|=1\\
\left({z\over 1+z}\right)^{|\g|}\left[
\sum_{g\in\cal G_{\g}}\prod_{X\in g}\left
(e^{-\b V(X)}-1\right)
\right] &if |\g|\geq 2
\end{cases}
$$
we can write
$$
\Xi_{\La}(\b , z)=(1+z)^{|\La|}
\sum_{\{\g_1 ,\dots ,\g_k\}\atop |\g_i|\geq 2\,\g_i\cap \g_j =\emptyset}
\z(\g_1)\z(\g_2)\cdots\z(\g_{k})
$$
and thus also
$$
\log \Xi_{\La}(\b , z)=|\La|\log(1+z)+
\sum_{n\geq 1}{1\over n!}
\sum_{\g_1 ,\dots ,\g_k\atop |\g_i|\geq 2}
\phi^{T}(\g_1 ,\dots ,\g_k)
\z(\g_1)\z(\g_2)\cdots\z(\g_{k})
$$
Hence the lattice gas pressure is analytic when the infinite
sum in r.h.s. of equation above is absolutely convergent
uniformely in $\La$.
We have
$$
\left|
\left({z\over 1+z}\right)^n
\prod_{X\in g}[e^{-\b V(X)}-1]
\right|\leq
$$
$$
\left({z\over 1+z}\right)^n
\prod_{X\in g} \b|V(X)|e^{\b J}
\leq
\left({z\over 1+z}\right)^{n}
\prod_{X\in g} |{\tilde V}(X)|
$$
where ${\tilde V}(X)=\b|V(X)|e^{\b J}$.
Hence we can apply theorem 2 with $K=\b Je^{\b J}$ and
$$
{\bar\s}=\left({z\over 1+z}\right)
$$
and the convergence occurs when
$$
\b Je^{\b J}<{1\over 4}\left|\log\left({z\over 1+z}\right)\right|
$$
which is the condition \equ(1.6.15). $\Box$

\\Note that the relation between
$z$ and $\b$ in \equ(1.6.15),
namely the fact that $z$ can be taken
large if $\b$ is sufficently small,
is of the form needed to apply
the particle-hole symmetry pointed out in
\cite{GMR}, enlarging the analiticity region.

\chapter{Ising model}
\numsec=7\numfor=1
\def\L{\Lambda}
The Ising model is  a lattice system enclosed in a box $\L\subset \mathbb{Z}^d$. The box
$\L$ is a finite set, generally a square of size $L$ which
contains $|\L|~=~L^d$ sites of the lattice $\mathbb{Z}^d$.
\def\ss{{\bf \s}}
In each site $x\in \L$ there is a random variable $\s_x$ which can
take one of the two possible value $\s_x ~=~\pm 1$. A configuration
$\ss_\L$ of the system is given when one declares the value of the
spin $\s_x$ for each site $x\in \L$. Hence $\ss_{\L}$ is a set
$\ss_\L ~=~ (\s_{x_1},\dots,\s_{x_{|\L|}})$ of $|\L|$ numbers $\pm
1$. Equivalently one can say that a configuration $\s_\L$ of the
system is a function $\s_\L: \L\to \{+1,-1\}: x\mapsto \s_x$.
Observe that the total number of configurations of the system in
$\L$ is $2^{|\L|}$. We denote by $\O_\L$ the set of all possible spin configurations in $\L$.
\def\dL{\partial \L}

\index{Ising model}
\\Given a configurations $\ss_\L$ of the system in $\L$,  the energy
of such configuration is
$$
H_\L(\ss_\L) ~=~ -J\sum_{|x-y|=1\atop\{x,y\}\subset \L}\s_x\s_y- h\sum_{x\in \L}\s_x+ {\cal B}(\s_\L)\Eq(Hh)
$$
where $h$ is an ``external magnetic field'', $J>0$ is a positive
constant and ${\cal B}(\s_\L)$ represent the interaction of the
spins inside the box $\L$ with the world outside. Of course ${\cal
B}(\s_\L)$ is rather arbitrary.  We will list in a moment some
typical boundary conditions. Let us just remark here that ${\cal
B}(\s_\L)$ has to be in any case  a ``surface term'', i.e.
$$
\lim_{\L\to \i}{ \max_{\s_\L}|{\cal B}(\s_\L)|\over
|\L|}~=~0\Eq(contor)
$$

\\The statistical mechanics is obtained by assigning to any configuration $\s_\L$
a probability to occur $P(\s_\L)$. This probability is given by
$$
P(\s_\L)~=~ { e^{ -\b H_\L(\ss_\L)}\over Z_{\L}(\b,h)}\Eq(prob)
$$
where
$$
Z_{\L}(\b,h) ~=~  \sum_{\ss_\L\in\O_\L} e^{ -\b H_\L(\ss_\L)}
$$
is the partition function in the grand canonical ensemble. If
$f(\s_\L)$ is a function depending on the configuration of the
system, its mean value in the grand canonical ensemble is
$$
\<f(\s_\L)\>_\L ~=~ \sum_{\s_\L\in \O_\L} P(\s_\L)f(\s_\L)~=~  \sum_{\s_\L\in \O_\L} { e^{
-\b H_\L(\ss_\L)}f(\s_\L)\over Z_{\L}(\b,h)}
$$
The finite volume free energy of the model is given by
$$
 f_\L(\b,h)~=~ {1\over |\L|}\ln Z_{\L}(\b,h)\Eq(elibL)
$$
Note that, at finite volume, this function is
analytic in $\b,h$ for all $\b>0$ and  for all $h\in
(-\i,+\i)$.

\\Thermodynamic is recovered evaluating the following limit
$$
f(\b,h)~=~\lim_{\L\to \i}{1\over |\L|}\ln
Z_{\L}(\b,h)\Eq(elibinf)
$$
The function $f(\b,h)$ is called the infinite volume free energy of the system. It
is easy to show that the limit exists, but in general is not
expected to be analytic in the whole physical region. It is also
worth to stress that \equ(contor) ensures that the limit above in
independent on the boundary conditions $\cal B$. As a matter of
fact, let
\def\BB{{\cal B}}
$\BB(\s_\L)$ and $\BB '(\s_\L)$ two different boundary conditions
and let us denote
$$
H^{\rm bulk}_\L(\ss_\L)~=~ -J\sum_{|x-y|=1\atop\{x,y\}\subset \L }\s_x\s_y- h\sum_{x\in \L}\s_x
$$
then we have
$$
{Z_{\L, \BB}(\b,h)\over Z_{\L,\BB'}(\b,h)}~=~ { \sum_{\ss_\L} e^{
-\b H^{\rm bulk}_\L(\ss_\L)}e^{ -\b\BB(\s_\L)} \over \sum_{\ss_\L}
e^{ -\b H^{\rm bulk}_\L(\ss_\L)}e^{ -\b\BB'(\s_\L)} }\le
\max_{\s_\L}e^{ \b|\BB(\s_\L)|+ \b|\BB'(\s_\L)|}
$$
i.e
$$
{Z_{\L, \BB}(\b,h)\over Z_{\L,\BB'}(\b,h)}\le e^{\b \max_{\s_\L}(
|\BB(\s_\L)|+ |\BB'(\s_\L)|)}
$$
analogously one can get
$$
e^{-\b \max_{\s_\L}( |\BB(\s_\L)|+ |\BB'(\s_\L)|)}\le {Z_{\L,
\BB}(\b,h)\over Z_{\L,\BB'}(\b,h)}
$$
whence
$$
{1\over |\L|}|\ln Z_{\L, \BB}(\b,h) - \ln Z_{\L, \BB'}(\b,h)|\le
{1\over |\L|}\b \max_{\s_\L}( |\BB(\s_\L)|+
|\BB'(\s_\L)|)\Eq(find)
$$
and taking the limit $\L\to \infty$ and using \equ(contor) we get
the result.

\\We list below some typical boundary conditions.

\\1) {\it Open (or free)  boundary conditions}. This is the case
$$
\BB_0(\s_\L)~=~0
$$
\vv
\\2) {\it Periodic boundary conditions}. This is the case in which $\L$ is a Torus,
i.e. spins of opposite faces interact via the constant $-J$, as if
they were nearest neighbors. Clearly this can be obtained by a
suitable choice of $\BB_{p}(\s_\L)$
\vv
\index{boundary condition}
\\3) {\it $+$ boundary conditions}. This is the following case. Let
$\dL=\{x\in \mathbb{Z}^2\backslash\L: |x-y|=1~~{\rm for~ some}~
y\in \L\}$ be the ``external boundary'' of $\L$ and put $\s_x~=~+1$
for all $x\in \dL$. Then
$$
B_+(\s_\L) ~=~ -J\sum_{x\in \dL}\sum_{y\in \L\atop |x-y|=1}\s_y
$$
Physically this means to fix the spins outside $\L$ (those who can
interact with spin inside!) to the value $\s_x=+1$.

\\The fact that the free energy is independent from boundary
conditions does not mean that the system is stable respect to
boundary conditions. Precisely the instability of the system
respect to boundary conditions is an evidence for phase transition.
Even if the free energy is independent of boundary conditions
it still can occur a discontinuity of some derivative of the free energy at some point
in the $(\b, h)$ region of the physical parameters. We will show below that this indeed happens.

\\We will consider ahead the magnetization of the
system, which measures if spin are mostly oriented up or down. This quantity is
the partial derivative of the free energy respect to the magnetic field.
$$
M_{\L}^{\BB}(\b, h)~=~ {1\over |\L|}\sum_{x\in \L}\<\s_x\>_{\L, \BB}~=~
\b^{-1}{\partial\over\partial h}f_\L(\b, h)
$$
We will see that  the magnetization in the
thermodynamic limit can take different  values according to different boundary conditions.

\subsection {High temperature expansion}
We now develop a (high temperature)
polymer expansion for the partition function of the two
dimensional Ising model {\it with zero magnetic field} (i.e. we set $h=0$ in \equ(Hh)). We suppose from now on that
$\L$ is a square of size $L$ which thus contains $L^2$ lattice
sites. We also choose free boundary conditions (i.e. we set $\BB(\s_\L)=0$ in \equ(Hh)).
The Hamiltonian of the zero field, free boundary conditions Ising
model is thus
$$
H_\L(\ss_\L) ~=~ -J\sum_{|x-y|=1\atop\{x,y\}\subset \L }\s_x\s_y ~=~ -J \sum_{b\in B(\L)}\tilde \s_b
$$
where $B(\L)$ is the set of all pairs $b=\{x,y\}\subset \L$ such
that $|x-y|~=~1$ (nearest neighbor pairs) and for  $b=\{x,y\}$ we
put $\tilde \s_b= \s_x\s_y$. Recall that, since $\L$ is a $L\times
L$ box in $\mathbb{Z}^2$, then $|B(\L)|~=~ 2L(L-1)$

\\The partition function of the Ising model at zero magnetic field is  thus
$$
Z_{\L}(\b) ~=~  \sum_{\ss_\L\in \O_\L}\prod_{b\in B(\L)}e^{+\b J\tilde\s_b}
$$
We want to prove that the free energy
$$
f_\L(\b)= {1\over |\L|}\ln Z_{\L}(\b) \Eq(free0)
$$
is an analytic function of $\b$ is $\b$ is sufficiently small (i.e. in the high temperature regime) uniforlmly in the volume $\L$.
Observe that $\tilde\s_b~=~\pm 1$, hence
$$
e^{+\b J\tilde\s_b}~=~ \cosh(\b J\tilde\s_b)+ \sinh(\b J\tilde\s_b)~=~
 \cosh(\b J)+ \tilde\s_b\sinh(\b J)~=
 $$
 $$
 =~ \cosh(\b J)[1+\tilde\s_b\tanh(\b J)]
$$
hence
$$
Z_{\L}(\b) ~=~   [\cosh(\b J)]^{2L(L-1)} \sum_{\ss_\L\in \O_\L}\prod_{b\in
B(\L)}[1+\tilde\s_b\tanh(\b J)]
$$
Developing the product $\prod_{b\in B(\L)}[1+\tilde\s_b\tanh(\b
J)]$ we get terms of the type
$$
[\tanh(\b J)]^k\tilde\s_{b_1}\dots\tilde\s_{b_k}
$$
which has a clear geometric interpretation. The set of bonds ${b_1},\dots,{b_k}$ form  a graph (connected
or not) in $\L$ whose links are nearest neighbors.

\\When one perform the sum over $\ss_\L$
we get that
$$
\sum_{\ss_\L}\tilde\s_{b_1}\dots \tilde\s_{b_k}
$$
is zero whenever there is a not paired spin. See Figures 11 and 12.

\begin{figure}
\begin{center}
\includegraphics[width=5cm,height=5cm]{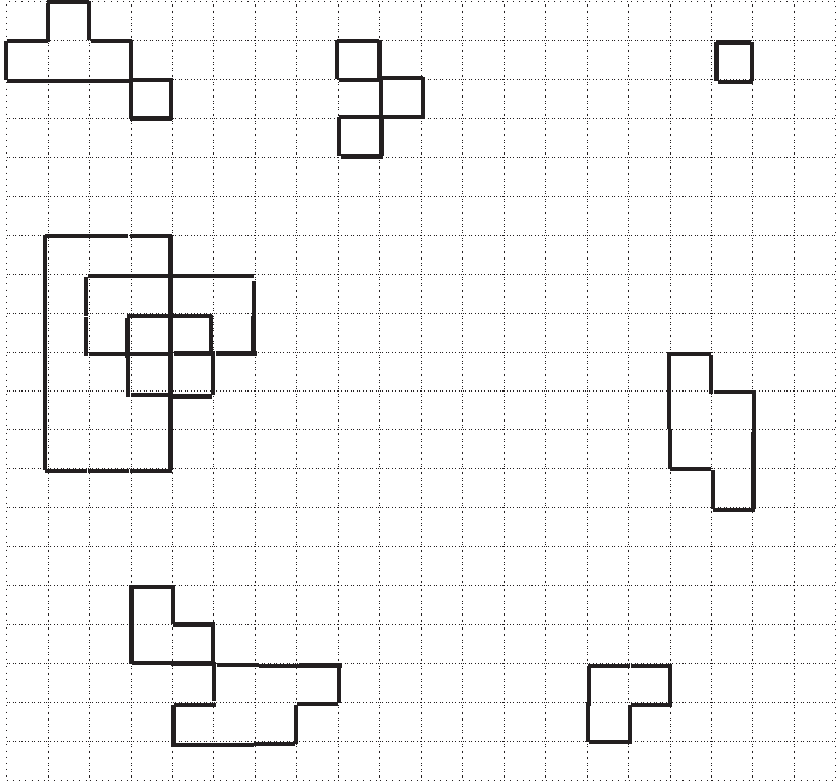}
\end{center}
\begin{center}
Figure 11.  A  non  vanishing  graph with seven connected
components
\end{center}
\end{figure}

\\The only graphs which
yield a non vanishing contribution to the partition function are
those whose vertices have incidence number zero, two or four (Figure
11), while the contribution of all other graphs is zero once the sum over
configurations $\s_\L$ has been done (Figure 12).
If the graph
$\tilde\s_{b_1}\dots \tilde\s_{b_k}$ is non vanishing then
$$
\sum_{\ss_\L}\tilde\s_{b_1}\dots \tilde\s_{b_k}~=~ 2^{L^2}
$$
\def\sg{{\rm supp}\,\g}
We can naturally split a non vanishing graph in non intersecting
connected components which we will call {\it lattice animals}. For
example in figure 11 it is drawn a non vanishing graph formed by
seven non intersecting lattice animals. A lattice animal $\g$ is
thus nothing but a graph $g$ with edge set $E_\g~=~\{b_1, \dots ,b_k\}\subset B(\L)$
formed by
nearest neighbor links $b~=~\{x, y\}$
and with vertex set
$V_g=\cup_{i=1}^k b_i\subset \L$  which is a connected graph  in $V_\g$ (in usual sense). The allowed lattice animals are only those $\g$ with
incidence number at the vertices equal to two or  four. Let us denote by $\LL$ the set of possible lattice animals in $\mathbb{Z}^d$ and by
$\LL_\L$ the set of all possible lattice animals in $\L$.

\\Two lattice animals $\g$ and $\g'$ are non overlapping (i.e. compatible), and we write $\g\sim\g'$
{ \it if and only if } $V_\g\cap V_{\g'}'=\emptyset$. We will denote shortly
$|\g|= |E_\g|$ (i.e. $|\g|$ is the number of nearest neighbor bonds which constitute $\g$,
i.e. if $\g=\{b_1, \dots ,b_k\}$ then $|\g|=k$.
\\Note that only such lattice animals (i.e. just those
with incidence number at the vertices equal to 2 or to 4) survive
because we are using free boundary conditions. Note also that
 lattice animal $c\in \LL$ with incidence number equal to 2 in anyone of its vertices is a simple cycle. For a cycle $c$
we have that $|V_c|=|c|$, while for a lattice lattice animal $\g$ which is not a cycle we have $|V_\g|< |\g|$.
\begin{figure}
\begin{center}
\includegraphics[width=5cm,height=5cm]{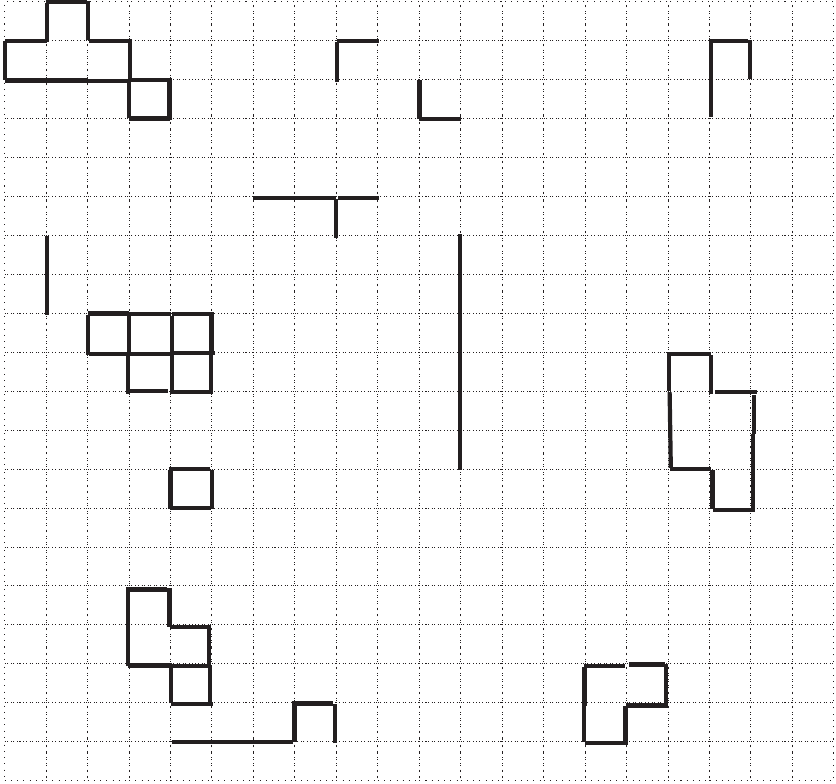}
\end{center}
\begin{center}
Figure 12.  A  vanishing graph.
\end{center}
\end{figure}

\\In conclusion we can write
$$
Z_{\L}(\b) ~=~   [\cosh(\b
J)]^{2L(L-1)}2^{L^2}\;\Xi_{\L}(\b)\Eq(1is)
$$
where
$$
\Xi_\L(\b)~=~
 1+
\sum_{n\ge 1}\sum_{\{\g_1, \dots \g_n\}\subset \LL_\L\atop |\g_i|\ge 4,\;
\g_i\sim \g_j}\xi(\g_1)\dots \xi(\g_n)\Eq(is2)
$$
where $\g$ denote a allowed lattice animal (whence the condition
$|\g|\ge 4$) with activity
$$
\xi(\g)~=~ [\tanh(\b J)]^{|\g|}\Eq(actis)
$$
Thus the partition function of the Ising   model \equ(1is) can be
written, modulo a constant (the factor $[\cosh(\b
J)]^{2L(L-1)}2^{L^2}$), as the partition function of a hard core
polymer gas, i.e the term \equ(is2). In this case polymers
are  lattice animals, i.e. elements of $\LL$ with the incompatibility relation
$\g\nsim\g'$ if and only if $V_{\g}\cap V_{\g'}\neq \0$.

\\Let us apply to this polymer gas the convergence  criterion \equ(FPcrit) of Theorem \ref{coro:1}  to the
polymers system with partition function $\Xi_\L(\b)$ given by \equ(is2) above. Namely we need to find numbers $\m(\g)\in (0,+\infty)$ such that
$$
\x(\g)\le {\m(\g)\over \Xi_{\LL_\g}(\bm \m)}\Eq(is3)
$$
with $\LL_\g=\{\g'\in \LL: \g'\nsim \g\}$.

\\As we did previoulsy we set $\m(\g)=\x(\g)e^{a{|\g|}}$ so that condition becomes
$$
 \Xi_{\LL_\g}(\bm \m)\le e^{a{|\g|}}\Eq(what?)
$$
where

$$
\Xi_{\LL_\g}(\bm \m)=1+ \sum_{n= 1}^{|V_\g|}{1\over n!}
\sum_{(\g_1,\dots,\g_n)\in \LL^n_\g\atop \g_i\sim\g_j} \prod_{i=1}^n|\x(\g_i)| e^{a|\g_i|}
$$
Consider now the factor
$$
F_n(\g,a)\doteq \sum_{(\g_1,\dots,\g_n)\in \LL^n_\g\atop\g_i\sim\g_j} \prod_{i=1}^n|\x(\g_i)| e^{a|\g_i|}\Eq(factorb)
$$
We have thus to choose  $n$ lattice animals   $\g_1,\dots \g_n$ all incompatible with a given lattice animal $\g$ and all pairwise compatible.
We recall that  two lattice animals are incompatible of they share a vertex os $\Z^2$.
The factor \equ(factorb) is zero whenever $n>|V_\g|$, $\g_i\nsim\g$ since
$\g$ has  $|V_\g|$ vertices and  thus we can arrange at most $|V_\g|$ lattice animals pairwise compatible each one sharing a different vertex of $V_\g$.
Therefore, when the factor above is not zero, i.e. for $n\le |V_\g|$, it can be bounded at least by (again a very rough bound)

$$
\begn
F_n(\g,a)
& \le
|V_\g|(|V_\g|-1)\cdots (|V_\g|-n+1)\Bigg[\sup_{x\in \Z^2}\,\sum_{\g\in  \LL\atop x\in \g}|{\x(\g)}|e^{a|\g|}\Bigg]^n\\
&=
{|V_\g|\choose n}n! \Bigg[\sup_{x\in \Z^2}\,\sum_{\g\in  \LL\atop x\in \g}|{\x(\g)}|e^{a|\g|}\Bigg]^n
\egn
$$
Thus

$$
\begn
\Xi_{\LL_\g}(\bm \m)& \le 1+ \sum_{n=1}^{|V_\g|}
{|V_\g|\choose n}\Bigg[\sup_{x\in \Z^2}\,\sum_{\g\in  \LL\atop x\in \g}|{\r(\g)}|e^{a|\g|}\Bigg]^n\\
&=
\Bigg[1+\sup_{x\in \Z^2}\,\sum_{\g\in  \LL\atop x\in \g}|{\x(\g)}|e^{a|\g|}\Bigg]^{|V_\g|}
\egn
$$
Thus \equ(what?) can be written as
$$
\Bigg[1+\sup_{x\in \Z^2}\,\sum_{\g\in  \LL\atop x\in \g}|{\x(\g)}|e^{a|\g|}\Bigg]^{|V_\g|}\le e^{a|\g|}\Eq(vgg)
$$
and since, $|V_\g|\le |\g|$ for any $\g\in \LL$  (the equality holding only if $\g$ is a cycle) we have that \equ(vgg) is surely satisfied if
$$
\sup_{x\in \Z^2}\,\sum_{\g\in  \LL\atop x\in \g}|{\x(\g)}|e^{a|\g|}\le e^{a}-1\Eq(is4)
$$
Observe finally that, due to the symmetry of the problem the function
$$
f(x)=  \sum_{\g\in  \LL\atop x\in \g}|{\x(\g)}|e^{a|\g|}
$$
is constant as $x$ varies in $\Z^2$. Therefore \equ(is4) is equivalent to the condition
$$
\sum_{\g\in  \LL\atop 0\in \g}|{\x(\g)}|e^{a|\g|}\le e^{a}-1\Eq(is5)
$$
where $0$ is the origin in $\Z^2$.

\\The condition \equ(is5) is  a (high temperature) convergence condition for the analyticity of the free energy of the Ising model at zero magnetic field
and free boundary conditions.
Now recalling \equ(actis) and due to the symmetry of the problem
$$
\sum_{\g\in  \LL\atop 0\in \g}[\tanh(\b J)]^{|\g|}e^{a|\g|}
=
 \sum_{n\ge 4}[\tanh(\b J)]^{n}e^{an}\sum_{\g\in  \LL\atop 0\in \g, \;|\g|=n}1= \sum_{n\ge 4}[\tanh(\b J)]^{n}e^{an}C_n
$$
with
$$
C_n= \sum_{\g\in  \LL\atop 0\in \g, \;|\g|=n}1
$$
being the number of lattice animals in $\LL$  made by $n$ nearest neighbor bonds containing the origin.
I.e. we need to count all lattice animals $\g$ with a given cardinality $|\g|=n$ that pass through the origin.
To do this just observe that the nearest neighbor bonds of  a lattice animal  form a graph with degree 2 or 4, i.e. a graph
with even degree vertices. It is long known that any  graph with even degree at its vertices admits an Eulerian cycle (i.e. a graph cycle that crosses each edge exactly once).
Therefore all lattice animals containing the origin can be formed by performing a cycle starting at the origin.
To form a cycle starting at the origin in $\Z^2$ we can take each time 3 directions (also at the beginning since the cycle can be traveled in
two direction)
This immediately implies that
$$
C_n= \sum_{\g\in  \LL\atop 0\in \g, \;|\g|=n}1\le 3^n
$$
Therefore condition \equ(is5) is surely satisfied if
$$
\sum_{n\ge 4}[3\tanh(\b J)]^{n}e^{an}\le e^a-1
$$
i.e. posing $x= |3\tanh(\b J)|$  as soon as
$$
e^{4a} [3\tanh(\b J)]^4+(e^{2a}-e^a) [3\tan(\b J)]- (e^a-1)\le 0
$$
which yields  (for $a=0.21$)
$$
\tanh(\b J)\le 0.1525
$$
Hence convergence of the free energy, uniformly in the volume $\L$ occurs as soon as
$$
\b \le \b_0 \doteq {1\over J} \tanh^{-1} (0.1525) \approx {0.151\over J}\Eq(beta0)
$$

\subsection {Low temperature expansion}
We consider now the zero magnetic  field Ising model with $+$ boundary conditions in a box
$\L$ which is assumed to be a $L\times L$ square with $L^2$ sites.
Thus, given a configuration $\s_\L$ of the spins inside $\L$,  the
Hamiltonian is
$$
H^+_\L(\ss_\L)~=~ -J\sum_{\{x,y\}\subset \L \atop |x-y|~=~1}\s_x\s_y -
J\sum_{y\in \partial \L}\sum_{x\in \L, \atop |x-y|~=~1} \s_x
$$
Another way to write the Hamiltonian of the zero field + boundary
condition Ising model is as follows is
$$
H^+_\L(\ss_\L)~=~ -J\sum_{\{x,y\}\cap \L \neq \emptyset \atop
|x-y|~=~1}\s_x\s_y
$$
recalling that $\s_y=+1$ whenever $y\in \dL$.

We now rewrite the partition function $Z^+_{\L}(\b)$  via a
contour gas in the following way. For a fixed configuration
$\s_\L$ draw a unit segment perpendicular to the center of each
bond $b$ of nearest neighbors having opposite spins at its
extremes (in three dimensions this segment becomes a unite square
surface). Such unit segments have extremities which are in a
lattice shifted respect to the original lattice by a factor $1/2$
along both $x$ and $y$ axis. This lattice is called the dual lattice
and denoted by $\mathbb{Z^*}^2$. Hence this set of unit segments
forms a graph of nearest neighbor links in $\L^*\subset
\mathbb{Z^*}^2$ where $\L^*$ is a $(L+1)\times (L+1)$ square in
the dual lattice. With the particular choice of + boundary
conditions such graphs are exactly the same of Figure 11, with the
only difference that now they live in $\L^*$, i.e. in a square
with $(L+1)\times(L+1)$ sites in the dual lattice. See Figure 13.

\begin{figure}
\begin{center}
\includegraphics[width=7cm,height=7cm]{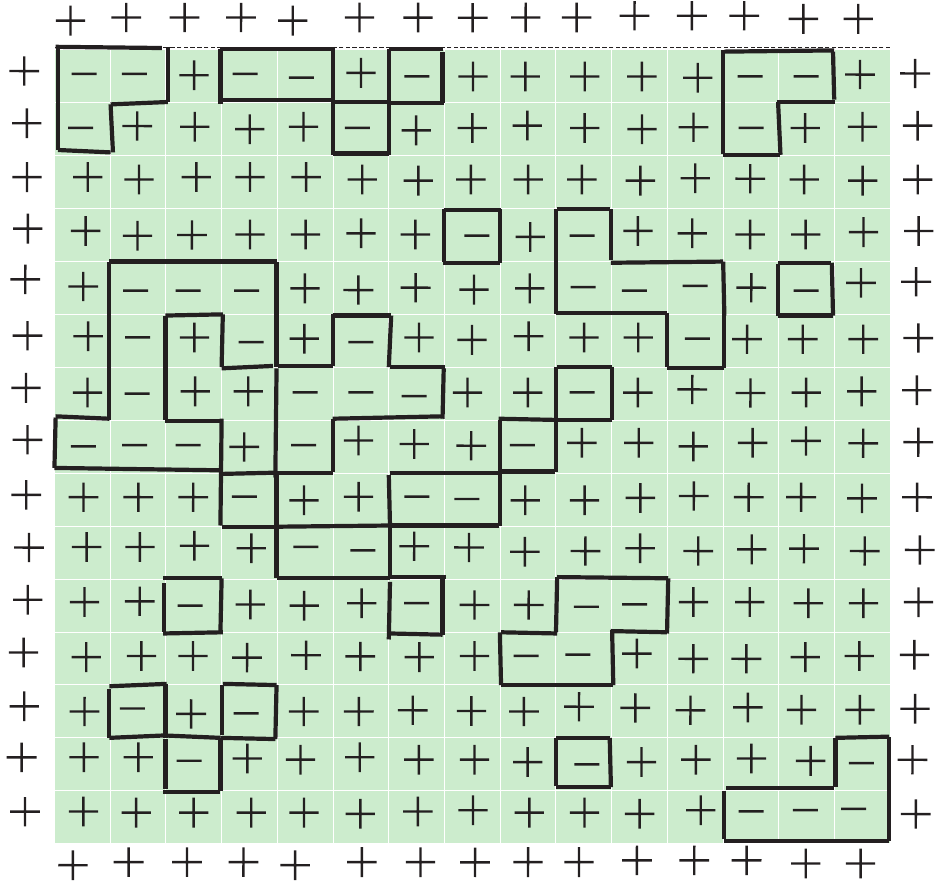}
\end{center}
\begin{center}
Figure 13.  The grey square is $\L^*$. The outer spins are those
fixed by the boundary conditions.
\end{center}
\end{figure}
To any configuration $\s_\L$ in $\L$  we can thus  associate a
graph $\G$ in the dual lattice made by nearest neighbor links with
incidence number equal either 0, 2 or 4. The graph $\G$ splits into its
connected components in the usual manner. Hence $\G=\{\g_1,\g_2,
\dots, \g_k\}$ with $\g_i\cap\g_j=\emptyset$. In this case (low
temperature expansion), such connected components $\g$  are called
{\it contours} and $|\g|$ is now called perimeter of the contour.
Note that now a contour configuration correspond to a spin
configuration. I.e., given the spin configuration $\s_\L$ in $\L$
we can recover the contour configuration $\G=\{\g_1,\g_2, \dots,
\g_k\}$ and conversely, given the contour configuration
$\G=\{\g_1,\g_2, \dots, \g_k\}$  we can recover the spin
configuration $\s_\L$ which originated it.

\\I.e. there is a one to one correspondence $\s_\L\to \{\g_1, \dots ,\g_k\}$.
\\We can express the Hamiltonian, which is a function of $\s_\L$, as a function of
the contour configuration $\G=\{\g_1,\g_2, \dots, \g_k\}$ uniquely
determined by $\s_\L$. This is done very easily, just observing
that, for a fixed configuration $\s_\L$ we have
$$
H^+_\L(\ss_\L)~=~ -J\sum_{\{x,y\}\cap \L \neq \emptyset \atop
|x-y|~=~1}\s_x\s_y ~=~ -J\left[B^+ - B^-\right]
$$
where
$$
B^+~=~\sharp ~{\rm of\; nearest\; neighbor\; pairs}~ \s_x\s_y ~{\rm
with}~ \s_x~=~\s_y
$$
$$
B^-~=~\sharp ~{\rm of\; nearest\; neighbor\; pairs}~ \s_x\s_y ~{\rm
with}~ \s_x~=~-\s_y
$$
If now  $ \{\g_1, \dots ,\g_k\}$ is the contour configurations
corresponding to $\s_\L$ we have that the number of nearest
neighbor pairs with opposite signal is by construction equal to
$\sum_{i=1}^k |\g_i|$, where recall that $|\g_i|$ is the perimeter
of $\g$
$$
B^-~=~ \sum_{i=1}^k |\g_i|
$$
\\Denoting with
$$
\tilde B_\L~=~2L(L-1)+ 4L~=~2L(L+1)
$$
the number of nearest neighbor pairs with non void intersection
with $\L$, we also have
$$
B^+ ~=~ \tilde B_\L-B^-
$$
hence
$$
H^+_\L(\ss_\L)~=~H_{\L^*}(\g_1, \dots ,\g_k)~=~-J \tilde B_\L+ 2J
\sum_{i=1}^k |\g_i|
$$
Hence the partition function of the Ising model with $h=0$ and
with $+$-boundary conditions can be written as

$$
Z^+_{\L}(\b)~=~e^{ +\b J \tilde B_\L}\left[ 1+ \sum_{n\ge
1}\sum_{\{\g_1,\dots , \g_n\}\subset \LL_{\L^*}\atop |\g_i|\ge 4, ~\g_i\sim \g_j}
e^{-2\b J\sum_{i=1}^n|\g_i|}\right]\Eq(is10)
$$
the factor 1 inside the bracket correspond to the situation in
which no contour is present or equivalently when all spin are +1.
As usual $\g_i\sim\g_j$ means that
$V_{\g_i}\cap V_{\g_j}~=~\emptyset$ where recall that $V_{\g_i}$ and $V_{\g_j}$
are subsets of  $\L^*$.

\\The expression in square bracket in \equ(is10) it is the grand canonical partition
function of a hard core lattice polymer gas in which the  polymers
are as before contours $\g_i$ but now  with activity $\exp[-2\b J |\g_i|]$, i.e.
$$
\Xi_{\L^*}(\b)~=~ 1+ \sum_{n\ge 1}\sum_{\{\g_1,\dots , \g_n\}\subset \LL_{\L^*}\atop |\g_i|\ge 4, ~\g_i\sim \g_j}e^{-2\b
J\sum_{i=1}^n|\g_i|}~=~
$$
$$~~~~~~~~~~~~~~
1+ \sum_{n\ge 1}{1\over n!} \sum_{(\g_1,\dots , \g_n)\in \LL^n_{\L^*}\atop |\g_i|\ge 4, ~\g_i\sim \g_j} \r(\g_1)\dots \r(\g_n)
$$
where now the activity of contours is
$$
\r(\g)~=~ e^{-2\b J|\g|}\Eq(is11)
$$

\\Note that the activity of a contour
goes to zero if the perimeter $|\g|$ of the contour increases, so
that big contours are depressed, but also the
activity of a contour goes to zero when $\b$ goes to infinity, i.e
all contours tend to be depressed when the temperature is lower
and lower. Hence at very low temperature one is expected to see a
sea of $+$ with very small and very few contours which surround islands
of -. This means that the in Ising model at zero magnetic field
and very low temperature spins are predominantly oriented +, and
this fact occurs independently on the size $\L$ where the system
is confined. This give a quite clear a detailed picture of the low
temperature phase of the Ising model with + boundary condition and
$h~=~0$.

\\We can now formulate a condition for the analyticity of the free energy of the
Ising model at $h~=~0$ with +boundary conditions in a completely
analogous way as we did for the case of high temperature. The only
thing that changes is that now the activity of contours (which in
the high temperature case where called lattice animals, but they
are the same objects) is defined by \equ(is11). Again we have to
check a formula identical to \equ(is5) where now in place of
$\xi(\g)$ we have to put $\r(\g)$ defined in \equ(is11). Hence we
reduce ourselves to check for which $\b$ the following inequality holds
$$ \sum_{n\ge 4}
[ 3e^a\, e^{-2\b J}]^{n}\le e^a-1
$$
which is satisfied, for $a=0.21$,  if
$$
 3e^{-2\b J}\le 0.4577
$$
i.e.
$$
\b \ge \b_1 ~\doteq~{1\over 2J} \ln\left[3\over 0.4577\right]\approx {0.94\over J}\Eq(beta1)
$$
we have that the finite volume free energy $f_\L(\b)$ is analytic
in $\b$ for all $\b\ge \b_1={0.94\over J}$ uniformly in $\L$. Note that in this
case $f_\L(\b)$ is calculated with + boundary condition, but the
limit $f(\b)~=~\lim_{\L\to\i}f_\L(\b)$ which is also analytic in
$\b$ for all $\b>\b_1$, does not depend on boundary conditions.

\\In conclusion we have proved the following theorem
\vskip.5cm
\\{\bf Theorem}.
The free energy of the Ising model at zero magnetic field
$$
f(\b, h=0)~=~\lim_{\L\to \i}{1\over |\L|}\ln Z_{\L}(\b,h=0)
$$
is analytic in $\b$ for all $\b\in(0,\b_0]\cup[\b_1, +\i)$ where
$\b_0$ is defined in \equ(beta0) and $\b_1$ is defined in \equ(beta1).
We now show that there is at least a point in the interval $(\b_0,
\b_1)$ where $f(\b, h=0)$ is non analytic.

\subsection{Existence of phase transitions}
First recall the  definition of  the magnetization of the system as
\index{phase transition}
$$
M_\L(\b)~=~{1\over |\L|}\sum_{x\in \L}\<\s_x\>_\L \Eq(magn)
$$
The quantity $M_\L(\b)$ measures the mean orientation of spins of
the finite system confined in $\L$. If $M_\L(\b)>0$ then it means
that spin + are predominant while if $M_\L(\b)<0$ this means that
spin - are predominant. It is just a simple calculus to show the
identity
$$
M_\L(\b)~=~\b^{-1}\left.{\partial f_\L(\b, h)\over\partial
h}\right|_{h~=~0}
$$
which tells us that the (finite volume) magnetization is, modulo a
factor $\b^{-1}$, the first derivative respect to $h$ of the
finite volume free energy of the system.

By a Theorem called ``Lee Yang theorem'' is possible to show that
$f(\b, h)$ is analytic for all  $\b>0$ when $h\neq 0$ hence the only
singularity of this function can occur when $h=0$.

Hence in order to show existence of transition phase, our strategy
will be to show that, depending on boundary condition the infinite
volume magnetization $M(\b)~=~\lim_{\L\to\i} M_\L(\b)$ is not stable
for variations of the boundary conditions.

Thus we will show first that  the magnetization $M^+_\L(\b)$ of
the Ising model with + boundary conditions (the superscript  ``+''
remember us that we are using + boundary conditions) is a number
arbitrarily near 1 for $\b$ sufficiently big. The result will also
immediately imply that the magnetization $M^-_\L(\b)$ of the Ising
model with - boundary conditions  is a number arbitrarily near -1
for $\b$ sufficiently big.

Let us denote $\<\cdot \>_\L^+$ the mean values when the Hamiltonian
is taken with $h=0$ and with + boundary condition. It is easy to
see that, for any $x\in \L$
$$
\<\s_x\>_\L^+ ~=~ 1-2\Big[ {\rm
Prob}_\L^+(\s_x=-1)\Big]\Eq(is13)
$$
where ${\rm Prob}_\L^+(\s_x=-1)$ denotes the probability that the
spin $\s_x$ at the site $x$ is equal to $-1$ for the Ising model
with zero magnetic field and + boundary conditions. As a matter of
fact, by definition

$$
\begn
\<\s_x\>_\L^+ &=~
(+1) ~\Big[{\rm Prob}_\L^+(\s_x=+1)\Big]~~ +  ~~(-1)~\Big[ {\rm
Prob}_\L^+(\s_x=-1)\Big]\\
&=~1\;-\, 2\Big[{\rm Prob}_{\L}^+(\s_x=-1)\Big]
\egn\Eq(proba)
$$

\\Thus in order to evaluate $\<\s_x\>_\L^+$ it is sufficient to evaluate
${\rm Prob}_\L^+(\s_x=-1)$.

\\We have seen before that the Ising model with + boundary condition and zero magnetic
field can be mapped in a contour gas, and that there is a one to
one correspondence between spin configurations $\s_\L$ and contour
configurations $\g_1, \dots \g_k$. Now  a $\s_\L$
such that the spin at $x$ is $-1$ is such that the site $x$ is the
``interior'' of at least one  contour associated to $\s_\L$, i.e. $x$ is
surrounded at least by a contour (actually by an odd number of
contours). Hence, if we denote ${\rm Prob}_\L^+(\exists \g\odot
x)$ the probability that at least one contour surrounds $x$, we
have surely
$$
{\rm Prob}_\L^+(\s_x=-1)\le {\rm Prob}_\L^+(\exists \g\odot x)
$$
But now
$$
{\rm Prob}_\L^+(\exists \g\odot x)~=~{ \sum_{\{\g_1,\dots , \g_n\}:
\exists \g_i\odot x\atop |\g_i|\ge 4, ~\g_i\sim \g_j}
e^{-2\b J\sum_{i=1}^n|\g_i|}\over \sum_{\{\g_1,\dots , \g_n\}\atop
|\g_i|\ge 4, ~\g_i\sim \g_j } e^{-2\b
J\sum_{i=1}^n|\g_i|}}
$$
where the denominator is the grand canonical partition function of
the contour gas and in the sum in the denominator is also included
the empty graph which contribute with the factor 1. Now we can
write
$$
\begin{aligned}
{\rm Prob}_\L^+(\exists \g\odot x)& =~{ \sum_{\{\g_1,\dots , \g_n\}:
\exists \g_i\odot x\atop |\g_i|\ge 4, ~\g_i\sim \g_j}
e^{-2\b J\sum_{i=1}^n|\g_i|}\over \sum_{\{\g_1,\dots , \g_n\}\atop
|\g_i|\ge 4, ~\g_i\sim \g_j } e^{-2\b
J\sum_{i=1}^n|\g_i|}}\\\\
& =~{\sum_{\g\odot x}e^{-2\b J|\g|} \sum_{\{\g_1,\dots , \g_n\}:
\g_i\sim \g\atop |\g_i|\ge 4, ~\g_i\sim \g_j}
e^{-2\b J\sum_{i=1}^n|\g_i|}\over \sum_{\{\g_1,\dots , \g_n\}\atop
|\g_i|\ge 4, ~\g_i\sim \g_j} e^{-2\b
J\sum_{i=1}^n|\g_i|}}\\\\
&\le\sum_{\g\odot x}e^{-2\b J|\g|}
\end{aligned}
$$
I.e. in conclusion we get
$$
{\rm Prob}_\L^+(\exists \g\odot x) \le\sum_{\g\odot x}e^{-2\b
J|\g|}\le \sum_{n\ge 4}e^{-2\b J n}\sum_{\g\odot x}1
$$
It is now easy to bound $\sum_{\g\odot x}$ with $n3^n$. As a
matter of fact, let us denote with $x'$ the point of intersection
of a contour $\g$ (such that $|\g|~=~n) $which surrounds $x$ with
the horizontal axis passing through $x$. Let us ask ourselves how
many possible $x'$ we can get. Obviously not more than $n$. Now
the possible contours with fixed perimeter $|\g|~=~n$ which pass
through a fixed $x'$ are at most $3^n$, so  $\sum_{\g\odot x} 1\le
n 3^n$.
Hence
$$
{\rm Prob}_\L^+(\exists \g\odot x) \le \sum_{n\ge 4}e^{-2\b J n} n
3^n
$$
Setting $$x_\b=3e^{-2\b J},$$the sum in l.h.s. converges  to if $x_\b<1$ and in this case we have
$$
{\rm Prob}_\L^+(\exists \g\odot x) \le\sum_{n\ge 4}n x_\b^n= {x_\b^4(4-3x_\b)\over (1-x_\b))^2}~\doteq~ g(\b)\Eq(gibeta)
$$
Recalling \equ(proba), in order to guarantee that $\<\s_x\>^+_\L>0$ uniformly in $\L$ we need to impose that
$$
 g(\b)<{1\over 2}
$$
which holds if
$$
x_\b<0.4795
$$
i.e. if
$$
\b> \b'_{1}\doteq ~ {0.917\over J}
$$
Note that $g(\b)$ defined in \equ(gibeta) above goes to zero as $\b\to\i$.
Thus when $\b>\b'_{1}$ we have that
$$
\<\s_x\>_\L^+ ~\ge ~ 1-2g(\b)
$$
and recalling \equ(magn)
$$
M_\L(\b,h=0)\ge
1-2g(\b)
$$
and therefore
$$
\lim_{\L\to\i}M_\L(\b, h=0)\ge
1-2g(\b)
$$
Note that $\<\s_x\>_\L^+ $ tends to 1 as $\b\to\i$ and this estimate
is uniform in $\L$.

\\On the other side, if we bound $\<\s_x\>_\L^-$ (i.e. the mean value of the spin at
a site $x$ with - boundary conditions) we have
$$
\<\s_x\>_\L^-~=~ (+1) ~\Big[{\rm Prob}_\L^-(\s_x=+1)\Big]~~ +  ~~(-1)~\Big[ {\rm
Prob}_\L^-(\s_x=-1)\Big]~=~
$$
$$
=~ 2~\Big[{\rm Prob}_{\L}^-(\s_x=+1)\Big]~-~1
$$
but now we have obviously that ${\rm Prob}_\L^-(\s_x=+1)~=~{\rm
Prob}_\L^+(\s_x=-1)$ and hence
$$
\<\s_x\>_\L^-~\le ~ 2g(\b)-1
$$
In this case $\<\s_x\>_\L^- $ tends to -1 as $\b\to\i$.

\\In conclusion the system show spontaneous magnetization for $\b$
sufficiently high uniformly in the volume $\L$. I.e. we have
shown that e.g.
$$
\lim_{\L\to\i}\<\s_x\>_\L^+\neq 0 ~~~~~~~~~{\rm if}  ~~ \b> {0.917\over J}
$$
In other
 words the system is not stable even at the infinite volume limit
to boundary conditions.

\\As a last computation we show that when $\b$ is small there is no
such instability. As a matter of fact, we can express
$\<\s_x\>_\L^+$ in term  of high temperature lattice animals. Recall that the
Hamiltonian on the Ising model with + boundary conditions and zero
magnetic field is
$$
H^+_\L(\ss_\L)~=~ -J\sum_{\{x,y\}\cap \L \neq \emptyset \atop
|x-y|~=~1}\s_x\s_y
$$
with $\s_y~=~+1$ whenever $y\in \dL$. And the partition function is
$$
Z^+_\L(\b)~=~\sum_{\s_\L}e^{ +\b J \sum_{\{x,y\}\cap \L \neq
\emptyset \atop |x-y|~=~1}\s_x\s_y } ~=~
\sum_{\s_\L}\prod_{\{x,y\}\cap \L \neq \emptyset \atop |x-y|~=~1}
e^{ +\b J \s_x\s_y }~=~ $$
$$
=~\sum_{\s_\L}\prod_{b \atop b\cap \L \neq
\emptyset } e^{ +\b J \tilde\s_b}
$$
where now the nearest neighbor pair $b$ are those strictly
contained in $\L$ plus the nearest neighbor  pairs $\{x,y\}$ for
which $x\in \L$ and $y\in \dL$. In this last case, since we are
using + boundary conditions, $\tilde \s_b = \s_x\s_y=+\s_x$. As
before we can write

$$
Z^+_\L(\b)~=~\sum_{\s_\L}\prod_{b \atop b\cap \L \neq \emptyset }
\cosh \b J \left [1 + \tilde\s_b \tanh\b J\right]
$$
and supposing as usual that $\L$ is a square of size $L$ we have
that the number of $b$ such that $b\cap \L \neq \emptyset$ is
$2L(L-1)+4L~=~2L(L+1)$, hence

$$
Z^+_\L(\b)~=~[\cosh \b J]^{2L(L+1)}\sum_{\s_\L}\prod_{b \atop b\cap
\L \neq \emptyset } \left [1 + \tilde\s_b \tanh\b J\right]
$$
As before the development of the product
$$
\prod_{b \atop b\cap \L \neq \emptyset } \left [1 + \tilde\s_b
\tanh\b J\right]\Eq(prodo)
$$
gives rise to terms of the form
$$
(\tanh \b J)^k \tilde\s_1\dots \tilde\s_k
$$
which can be associated to graphs in $\L\cup\dL$. It is important
to stress that when $b\subset \L$ then we call $b$ internal bond,
i.e.  $b~=~\{x,y\}$ with $x\in \L$ and $y\in \L$ and
$\tilde\s_b~=~\s_x\s_y$ where {\it both $\s_x$ and $\s_y$} are
summed in the summation $\sum_{\s_\L}$ over spin configurations.
On the other hand when
$b\cap \dL\neq \emptyset$, then  we  call $b$ boundary bond i.e.
$b~=~\{x,y\}$ with $x\in \L$ and $y\in \dL$ and
$\tilde\s_b~=~\s_x\s_y$ where only $\s_x$ is  summed in the
summation $\sum_{\s_\L}$, while  $\s_y~=~1$

So this time graphs which will not vanish after the sum over spin
configuration are not only closed polygons type lattice animals as before. There are also new type of lattice animals. New
non vanishing  lattice animals  are also those which start in a site
$y\in\dL$ and end in another site $y'\in\dL$. See Figure 14 below.
\begin{figure}
\begin{center}
\includegraphics[width=7cm,height=7cm]{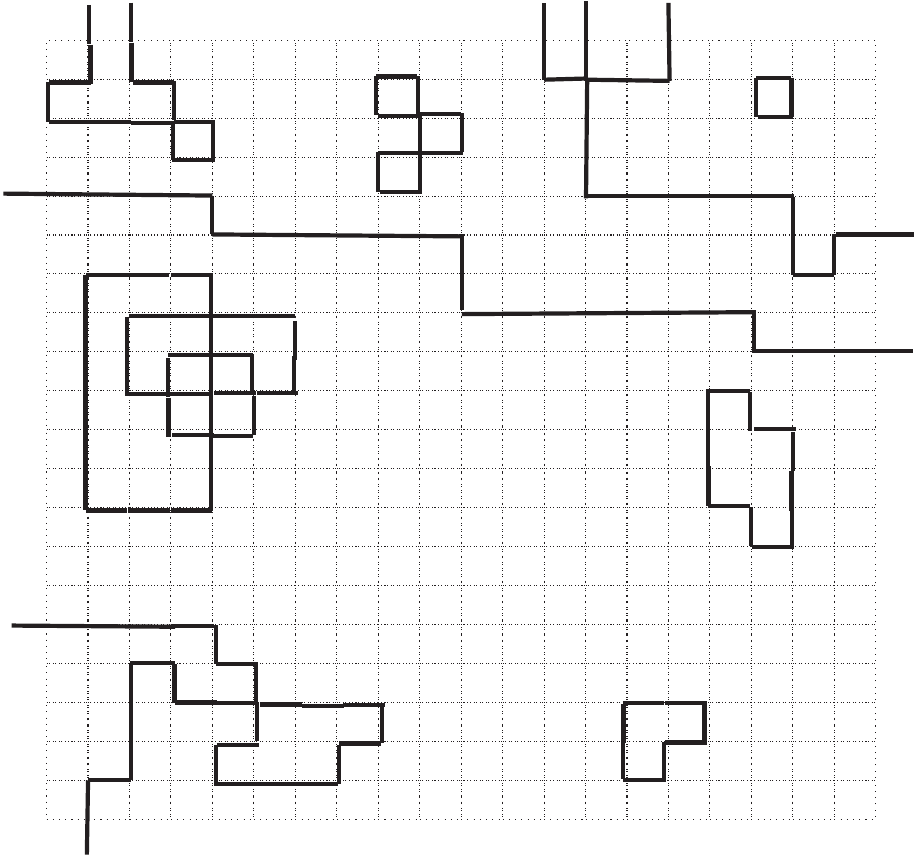}
\end{center}
\begin{center}
Figure 14.  A non vanishing configuration of lattice animals in
$Z_\L^+(\b)$.
\end{center}
\end{figure}
Any non vanish factor $(\tanh\b)^k J\tilde\s_1\dots \tilde\s_k$,
independently of the nature of the bonds $b$ (i.e. internal bonds
or boundary bonds)  becomes, once the sum over configuration $\s_\L$ is preformed,  $2^{L^2}(\tanh\b J)^k $. Note that  all non vanishing factors  $(\tanh\b)^k J\tilde\s_1\dots \tilde\s_k$
are all positive even
if we were imposing $-$ boundary conditions (so that now for a
boundary bond $b=\{x, y\}$ with $y\in \dL$ we should have
$\s_y=-1$), because any  new type (connected) lattice animals has always an even number of paths ending at the boundary so that
it has
only an {\it even} number of such points where $\s_y=-1$. So that negative values
of spins at the boundary cancels. In conclusion we have again

$$
Z^+_{\L}(\b) ~=~   [\cosh(\b
J)]^{2L(L-1)}2^{L^2}\;\Xi^+_{\L}(\b)\Eq(1iss)
$$
where
$$
\Xi^+_\L(\b)~=~
 1+
\sum_{n\ge 1}\sum_{\{\g_1, \dots \g_n\}: \;|\g_i|\ge 4\atop
\g_i\sim \g_j}\xi(\g_1)\dots \xi(\g_n)\Eq(iss2)
$$
where again $\g$ denote a allowed lattice animal (just recall that
there are in this case  more new lattice animals allowed respect
to the case of free boundary conditions) with activity
$$
\xi(\g)~=~ [\tanh(\b J)]^{|\g|}
$$
Let us now express the mean value $\<\s_x\>^+_\L$ (where $x\in \L$)
in terms of lattice animals. By definition we have
$$
\<\s_x\>^+_\L~=~{ \sum_{\s_\L}\s_xe^{ +\b J \sum_{\{x,y\}\cap \L \neq
\emptyset \atop |x-y|~=~1}\s_x\s_y }\over Z^+_{\L}(\b)}\Eq(sigm)
$$
The numerator of the expression above can be easily rewritten in
term of lattice animals as
$$
\sum_{\s_\L}\s_xe^{ +\b J \sum_{\{x,y\}\cap \L \neq \emptyset
\atop |x-y|~=~1}\s_x\s_y } ~=~
$$
$$=~[\cosh(\b J)]^{2L(L-1)}2^{L^2}
\sum_{n\ge 1}\sum_{\{\g_1, \dots \g_n\}: \atop
\g_i\sim\g_j, \;\exists \g_x}\xi(\g_1)\dots \xi(\g_n)\Eq(numer)
$$
where now  in $\sum$ the notation $\exists \g_x$ means that among lattice animals
$\g_1, \dots \g_n$  at least one of them
starts at $x$ and end at a boundary site, see Figure  15 for such
kind of lattice animals (actually are also allowed lattice animals
which have incidence number $3$ in $x$). Let as indicate with $\LL_x$ the set os all this lattice animals

\begin{figure}
\begin{center}
\includegraphics[width=7cm,height=7cm]{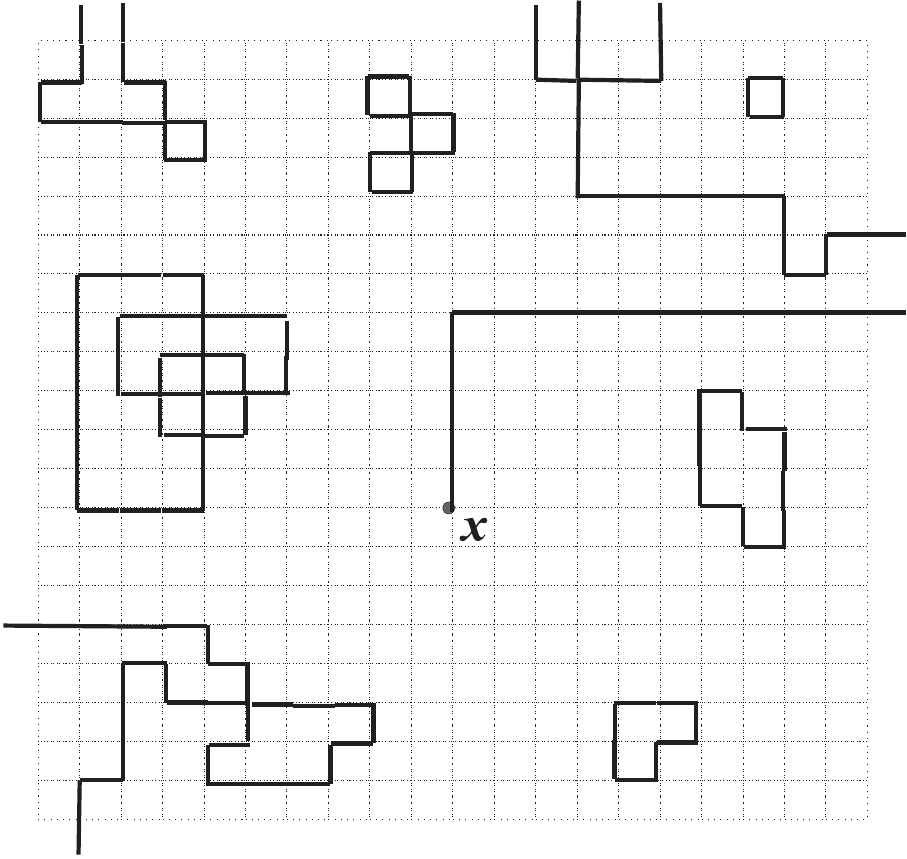}
\end{center}
\begin{center}
Figure 15.  A non vanishing configuration of lattice animals in
the numerator of $\<\s_x\>^+_\L$.
\end{center}
\end{figure}

\\Note also that the factor $1$ (i.e that corresponding to the empty
lattice animal) is no more present in the numerator \equ(numer).

\\Thus we can write

$$
\begn
\<\s_x\>^+_\L&=~{ \sum_{\{\g_1, \dots \g_n\}: \atop
\g_i\sim \g_j,~\exists \g_x}\xi(\g_1)\dots \xi(\g_n) \over 1+
\sum_{\{\g_1, \dots \g_n\}: \;|\g_i|\ge 4\atop \g_i\sim
\g_j}\xi(\g_1)\dots \xi(\g_n)}\\\\
& \le~ {\sum\limits_{\g\in \LL_x
\atop V_\g\cap\dL\neq \emptyset}\xi(\g) \sum_{\{\g_1, \dots \g_n\}:
\;|\g_i|\ge 4\atop \g_i\sim \g_j,\;{ \g_i\sim \g }}\xi(\g_1)\dots \xi(\g_n) \over 1+ \sum_{\{\g_1,
\dots \g_n\}: \;|\g_i|\ge 4\atop \g_i\sim
\g_j}\xi(\g_1)\dots \xi(\g_n)}\\\\
&\le ~\sum_{\g\in \LL_x \atop V_\g\cap\dL\neq
\emptyset}\xi(\g)
\egn
$$
In conclusion we have that
$$
\<\s_x\>^+_\L~\le~\sum_{\g\in \LL_x\atop V_\g\cap\dL\neq
\emptyset}\xi(\g)
\Eq(sigma)
$$
Observe now that if $\LL^1_x$ denotes the set of paths starting at $x$ (i.e. lattice animals having degree 1 at $x$) we can reorganize the sum above as follows
$$
\sum_{\g\in \LL_x\atop V_\g\cap\dL\neq\emptyset}\xi(\g)\le\sum_{\g\in \LL^1_x\atop V_\g\cap\dL\neq \emptyset}\xi(\g)\left[1+ \left(\sum_{\gt\in \LL^1_x\atop V_\gt\cap\dL\neq \emptyset}\xi(\gt)\right)^2\right]
$$

If we denote by $d(x,\dL)$ the minimum distance (in nearest
neighbor bonds) between $x$ and the boundary $\partial\L$, we have clearly
that a $\g$ in the sum must have at leat $|\g|~=~ d(x,\dL)$ bonds. So
$$
\sum_{\g\in \LL^1_x\atop V_\g\cap\dL\neq \emptyset}\xi(\g) ~~\le~~ \sum_{n\ge d(x,\dL)}(\tanh\b J)^n \sum_{\g\in \LL^1_x:\;|\g|=n\atop V_\g\cap\dL\neq \emptyset}1 ~\le ~ {4\over 3}\sum_{n\ge d(x,\dL)}(3\tanh\b J)^n
$$
where we have estimated
$$
\sum_{\g\in \LL^1_x:\;|\g|=n\atop V_\g\cap\dL\neq \emptyset}\le 4\cdot3^{n-1}
$$
The series $\sum_{n\ge d(x,\dL)}(3\tanh\b J)^n$  converges for
$
3\tanh\b J<1
$
i.e for
$$
\b<\b_0'\doteq {1\over J}\tanh^{-1}(1/3)\approx {0.34\over J}\Eq(corj)
$$
and in conclusion
$$
\sum_{\g\in \LL^1_x\atop V_\g\cap\dL\neq \emptyset}\xi(\g)~\le ~ {4\over 3}{(3\tanh\b J)^{d(x,\dL)}\over 1-3\tanh\b J}
$$
Therfore
we get finally, for $\b$ as small as  in  condition \equ(corj)
$$
\<\s_x\>^+_\L~~\le~~ {4\over 3}{(3\tanh\b J)^{d(x,\dL)}\over 1-3\tanh\b J}\left[1+\left( {4\over 3}{(3\tanh\b J)^{d(x,\dL)}\over 1-3\tanh\b J}\right)^2\right]~\le
$$
$$
\le ~{100\over 3}{(3\tanh\b J)^{d(x,\dL)}\over (1-3\tanh\b J)^3}
$$
where in the las line we have used that $3\tanh\b J<1$.
Hence the magnetization at zero magnetic field and at finite volume is bounded by
$$
\begn
M^+_\L(\b,h=0) &=~{1\over |\L|}\sum_{x\in \L}\<\s_x\>^+_\L\\
&\le~ {100\over 3(1-3\tanh\b J)^3}{1\over |\L|}\sum_{x\in \L}[3\tanh\b J]^{d(x,\dL)}\\
& \le
{100\over 3(1-3\tanh\b J)^3}{1\over |\L|}\sum_{n=1}^{L/2} [3\tanh\b J]^n\sum_{x\in
\L\atop d(x,\dL)=n}1\\
& \le
{300\tanh\b J\over 3(1-3\tanh\b J)^4}{|\partial \L|\over |\L|}
\egn
$$
hence we get, for $\b<\b_0'$
$$
\lim_{\L\to\i}M^+_\L(\b,h=0)~=~ ~0
$$
Note also, for  any $\L$
$$
\lim_{\b\to 0}M^+_\L(\b, h=0)~=~ ~0
$$
Thus when the temperature is sufficiently high the magnetization
of the system, even with + boundary conditions, tends to be zero
in the thermodynamic limit. This in contrast with the result that
we have shown for low temperature, where $M^+_\L(\b, h=0)$ is definitely
away from zero and is near to one.

\\This result say to us that the Ising model does not present the phenomenon of the
spontaneous magnetization when the temperature is sufficiently
high, or in other words the bulk system is not sensible to change
of boundary conditions in the thermodynamic limit. On the
contrary, when the temperature is very low, the system shows
indeed spontaneous magnetization and its bulk  becomes sensible to
boundary conditions even in the thermodynamic limit.

\\The interpretation of this fact is that the Ising model has a phase transition
at zero magnetic field and at some critical value $\b_c$ of the
inverse temperature. Below $\b_c$ the system behaves like in the
high temperature regime and above $\b_c$ the system behave like
the low temperature regime.

\\Note also that $\lim_{\b\to 0}M^+_\L(\b, h=0)$  and  $\lim_{\b\to 0}M^-_\L(\b, h=0)$  has different values, the first near 1 and the second near -1, and therefore
the thermodynamic limit $M(\b, h=0)$ of the magnetization (i.e. the derivative of the free energy $f(\b,h)$) does not have a definite value. This
is an evidence that $f(\b,h)$ has discontinous derivative respect to $h$ at $h=0$ when the inverse temperature is greater than $\b>\b_1$.
On the other hand we have seen that $\lim_{\b\to 0}M^+_\L(\b, h=0)=\lim_{\b\to 0}M^-_\L(\b, h=0)=0$ when $\b<\b'_0$, therefore there is a value $\b_c$ such that
for $\b<\b_c$ the  infinite volume magnetization is zero and for $\b>\b_c$ the magnetization is different from zero. It can be shown that
the free energy is non analytic in  $\b=\b_c$ as a function of $\b$.

\subsection{The critical temperature}
As a very last exercise we show  that if there is a unique non
analytic point of the free energy  $f(\b)$ at $h~=~0$ in the interval
$(0, \i)$ then this point can stay only on a well defined $\b_c$.

Consider the partition function  of the Ising model at zero magnetic field
assuming  without loss of generality that $J=1$ to simplify things.
The low temperature expansion in terms of closed closed contours of this partition function
 in a
box $\L$ which  is a square of size $L-1$ where $L$
with  + boundary conditions is such that the contours live in the square $\L^*$ of size $L$ and the
partition function is
$$
Z^+_{(L-1)\times (L-1)}(\b)~=~e^{ +\b  2L(L-1)} \sum_{\g_1,\dots ,
\g_n:~\g_i\sim \g_j\atop \g_i\in L\times L}\prod_{i=1}^n e^{-2\b|\g_i|} \Eq(isi1)$$

\\On the other hand, via high temperature expansion, the partition
function of the zero magnetic field, free boundary conditions
Ising model in a box of size $L$ can be written as
$$
Z^{\rm open}_{L\times L}(\b)~=~ \cosh(\b)^{2L(L-1)}2^{L^2}
\sum_{\g_1 ,\dots ,\g_n: ~\g_i\sim\g_j\atop\g\in L\times L}
\prod_{i=1}^n[\tanh{\b}]^{|\g_i|}\Eq(isi2)
$$
Compare now \equ(isi1) and \equ(isi2) and note that the sums on
closed polymers in both equations are identical.

Let  $\f(\b)$ be  the function of $\b$ defined by
$$
\f(\b)~=~- {\ln[\tanh{\b }]\over 2}\Eq(bstar)
$$
so that
$$
e^{ -2\f(\b)}~=~ \tanh{\b}
$$

Then we can write, by \equ(isi1) and \equ(isi2)

$$
{Z^+_{(L-1)\times (L-1)}(\f(\b))\over e^{ +2\f(\b)  (L-1)(L-2)}}~=~
{Z^{\rm open}_{L\times L}(\b)\over  \cosh(\b J)^{2L(L-1)}2^{L^2}}
$$
and taking the logarithm on both sides
$$
\ln Z^+_{(L-1)\times (L-1)}(\f(\b)){ -2\f(\b)  (L-1)(L-2)}~=~
$$
$$
=~\ln
Z^{\rm open}_{L\times L}(\b)-{2L(L-1)}\ln \cosh(\b)-L^2\ln 2
$$
Dividing by $L^2$ and taking the limit $L\to \i$
$$
\lim_{L\to\i}\Bigg\{{(L-1)^2\over L^2}{1\over (L-1)^2)}\ln
Z^+_{(L-1)\times(L-1)}(\b^*)~
 - ~2\b^* J {(L-1)(L-2)\over
L^2}\Bigg\}~=~
$$
$$
~=~\lim_{L\to\i}\left\{ {1\over L^2} \ln Z^{\rm open}_{L\times
L}(\b)- {2L(L-1)\over L^2}\ln \cosh(\b J) -\ln 2\right\}
$$
i.e., since free energy $f(\b)~=~\lim_{\L\to \i}\ln Z^{\t}_\L(\b)$
does not depend on boundary conditions
$$
f(\f(\b)) - 2\f(\b)~=~  f(\b)- \ln[2\cosh^2(\b)]
$$
hence we get
$$
f(\b) ~=~ f(\f(\b)) + \ln  [2\cosh^2(\b )] - 2\f(\b)
$$
i.e.
$$
f(\b) ~=~ f(\f(\b)) + \ln [ 2\cosh^2(\b)] + \ln[\tanh{\b}] \Eq(key)
$$
We now suppose that $f(\b)$ can have at most one singularity, and
we conclude via \equ(key) that this singularity, if exists, can
occur only at $\b_c$ where $\b_c$ is the solution of the equation $\b=\f(\b)$.

\\Note that $\f(\b)$ (see \equ(bstar)) is analytic as a
function of $\b$ for all $\b\in (0,\i)$. Now, by \equ(key),
$f(\b)-f(\f(\b))$ is analytic for all $\b\in (0,\i)$. Let now
$\b'\in (0,\i)$ such that $\b'\neq \b_c$, then by \equ(key)
$f(\b)$ is analytic in $\b~=~\b'$. In fact suppose, by absurd, that
$f(\b)$ is non analytic  in $\b~=~\b'$, then $f(\f(\b'))$ must
also be non analytic, since only in this way $f(\b') - f(\f(\b'))= \ln  2\cosh(\b ') +2 \ln[\tanh{\b' }] $ can be non
singular. But if $\b'\neq \b_c$  then $f(\f(\b'))=~f(\b'')$
with $\b''\neq \b'$. Hence $f(\b)$ would have two singularities
which is in contradiction with the hypothesis. This equation is
saying that if $f(\b)$ has a singularity point in the domain
$\b\in (0,\i)$,  this point can occur only when $\b=\b_c$, i.e.
where $\b_c$ is the solution of
$$
e^{ -2\b }~=~ \tanh{\b }
$$
which  gives $\b_c={1\over 2}\ln(1+\sqrt{2})$ which
is the correct value calculated via the Onsager solution.

\def\xx{{\bf x}}
\def\vv{{\bf v}}

\newpage

\printindex


\begin{thebibliography}{99}























\bibitem{BBR} R. Barequet, G. Barequet, and G. Rote (2010):{\it  Formulae and growth rates of high-dimensional polycubes}.
Combinatorica, {\bf 30}, 257--275.


\bibitem{Ba1}  A. G. Basuev (1978) :  {\it A theorem on minimal specific energy for classical systems}. Teoret. Mat. Fiz.
{\bf 37}, no. 1, 130--134.

\bibitem{Ba2}   A. G. Basuev (1979): {\it  Representation for the Ursell functions, and cluster estimates}. Teoret. Mat. Fiz. {\bf 39}, no. 1, 94-105.


\bibitem {BF} Battle,  G. A.; Federbush, P.: A note on cluster
expansion, tree graph identity, extra $1/N!$ factors!!! {\it Lett.
Math. Phys.} {\bf 8} (1984), no. 1, 55-57.

\bibitem{Bo} Borgs, C. : {\it  Absence of Zeros for the Chromatic Polynomial on Bounded Degree Graphs}
Preprint (2005), to appear in Combinatorics,  Probability and Computing.
\bibitem{Br} D. Brydges:  {\it A Short Course on Cluster Expansion},
Les Houches 1984, K. Osterwalder, R. Stora eds., North Holland Press
(1986)

\bibitem {B}Brydges, D.:  {\it A short course on cluster expansion}. Les
Houches 1984, K. Osterwalder, R. Stora eds., North Holland Press

\bibitem{Bu} J. C. Butcher: {\it The Numerical Analysis of Ordinary Differential Equations.
Runge-Kutta and General Linear Methods}, Wiley-Interscience Publication (1987)

\bibitem {C} Cammarota, C.: {\it Decay of Correlations for Infinite
range Interactions in unbounded Spin Systems}, {\it Comm. Math
Phys.} {\bf 85}, 517-528 (1982) (1986).

\bibitem{Cay} A. Cayley: {\it A Theorem on trees}.  Quarterly Journal of Pure and Applied Mathematics, {\bf 23}, 376--378  (1889).

\bibitem{D} Dobrushin, R.L.  Estimates of semi-invariants for the Ising model at
low temperatures. In Topics in Statistical and Theoretical Physics, American
Mathematical Society Translations, Ser. 2, 177 59-81 (1996).

\bibitem{FP}  R. Fern\'andez and A. Procacci: {\it Cluster expansion for abstract polymer models.New bounds from an old approach}, Commun. Math Phys, {\bf 274}, 123--140 (2007).

\bibitem{FPS}  R. Fern\'andez; A. Procacci; B. Scoppola: {\it The analyticity region of the hard sphere gas. Improved bounds}.
J. Stat. Phys., {\bf 128},  n.5, 1139--1143 (2007).

\bibitem{FR} M. E. Fisher and D. Ruelle: {\it The Stability of Many-Particle Systems}, J. Math. Phys. {\bf 7}, 260--270  (1966).

\bibitem{Ga} Gallavotti G.: Statistical Mechanics. A short
treatise, Springer Verlag, Berlin (1999)

\bibitem{GMR} Gallavotti G.;  Miracle-Sol\'e,S.; Robinson, D.W. :{\it  Analyticity properties of a lattice gas},
{\bf 25}, Issue 7,  493-494 ( 1967).


\bibitem{GW} Gill Williamson, S. {\it Combinatorics for computer science }, computer science press, Rockville, MD. (1985).

\bibitem{GJ} Glimm, J. and Jaffe, A.: Quantum Physics, A functional
integral point of view, Spinger, New York (1987)

\bibitem{gro62}
  J. Groeneveld: {\it Two theorems on classical many-particle
  systems}. {Phys. Lett.}, {\bf 3}, 50--51 (1962).

\bibitem {GK} C. Gruber and H. Kunz, {\it General properties of
polymer systems}, {\it Comm. Math Phys.}, {\bf 22}, 133-161 (1971).

\bibitem{H} K. Huang: {\it Statisitical Mechanics}, John Wiley and Sons, New York (1987).

\bibitem{JI} J. E. Jones; A. E. Ingham (1925): {\it On the calculation of certain crystal potential constants, and on the cubic
crystal of least potential energy}. Proc. Roy. Soc. Lond. A {\bf 107}, 636--653.

\bibitem {KKS} Kuna, Tobias; Kondratiev, Yuri G.; Da Silva, Jose
L.: Marked Gibbs measures via cluster expansion. {\it Methods
Funct. Anal. Topology 4} , no. 4, 50--81 (1998).

\bibitem {KP}  R. Koteck\'y and D. Preiss, { Cluster expansion for
abstract Polymer models}, {\it Comm. Math Phys.}, {\bf 103},
491-498 (1986).




    .
\bibitem{LPe}  J. L. Lebowitz and O. Penrose: {\it Convergence of virial Expansions}, J. Math. Phys., {\bf 7}, 841-847 (1964).

\bibitem{llp} P. C Lima, R. Lopes de Jesus and A. Procacci: {\it Absolute convergence of the free energy of the BEG model in the disordered region for all temperatures},   J. Stat. Mech. 063202  (2020)

\bibitem{May37}  J. E. Mayer: {\it The Statistical Mechanics of Condensing Systems. I},
 J. Chem. Phys. {\bf 5}, 67--73  (1937).

 \bibitem{May42} J. E. Mayer : {\it Contribution to Statistical Mechanics}, J. Chem. Phys., {\bf 10}, 629--643  (1942).

\bibitem{MM} J. E. Mayer and M. G. Mayer: {\it Statistical Mechanics}, John Wiley \& Sons, Inc.
London: Chapman \& Hall, Limited (1940).

\bibitem{Mc}   B. M. McCoy: {\it  Advanced Statistical Mechanics}. International Series of Monographs
on Physics {\bf 146}. Oxford Univ. Press, Oxford (2010).

\bibitem{M} S. Miracle-Sol\'e, {\it On the convergence of cluster expansions}, Physica A {\bf 279}, 244-9 (2000).

\bibitem{MPS} T. Morais, A. Procacci and B. Scoppola: {\it On Lennard-Jones type potentials and hard-core potentials with an attractive tail}, J. Stat. Phys., {\bf 157} , p. 17-39 (2014).

\bibitem{PB} R. K. Pathria and P. D. Beale (2011): {\it Statistical mechanics, Third edition}, Elsevier, Amsterdam.

\bibitem{Pen63} O. Penrose (1963): {\it Convergence of Fugacity Expansions for Fluids and Lattice Gases}, J. Math. Phys. {\bf 4},  1312 (9 pages).

\bibitem{Pen67} O. Penrose,
{\it Convergence of fugacity expansions for classical systems}. In ``Statistical mechanics: foundations and applications", A. Bak (ed.), Benjamin, New York (1967).

\bibitem{PGM}
 D. Ya. Petrina; V. I. Gerasimenk; P. V.  Malyshev: {\it  Mathematical foundations of classical statistical mechanics. Continuous systems}.
 Translated from the Russian by P. V. Malyshev and D. V. Malyshev. Second edition. Advanced Studies in Contemporary Mathematics, 8. Taylor \& Francis, London  (2002).

\bibitem{Pf} Ch.E. Pfister, {\it Large deviation and phase separation in the two-dimensional Ising model},
Helv. Phys. Acta {\bf 64}, 953-1054 (1991).

\bibitem{Pr} A. Procacci: {\it Erratum and Addendum: ``Abstract Polymer Models with General Pair Interactions"}, Journal of Statistical Physics
{\bf 135}, Issue 4, 779-786 (2009)

\bibitem{PS} A. Procacci and B. Scoppola: {\it Polymer gas approach to
  $N$-body lattice systems}.  { J. Statist. Phys.} {\bf 96}, 49--68 (1999)

\bibitem {PdLS}  Procacci, A.; de Lima, B. N. B.; Scoppola, B.: A
Remark on high temperature polymer expansion for lattice systems
with infinite range pair interactions. {\it Lett. Math. Phys. }
{\bf 45} (1998), no. 4, 303-322.

\bibitem{PSG} Procacci, A, Scoppola, B. and Gerasimov, V.: {\it Potts model on infinite graphs and the limit
of chromatic polynomials}. Commun. Math. Phys. {\bf 235} (2003) 215-231.

\bibitem{PY} A. Procacci,  S. A. Yuhjtman: {\it Convergence of Mayer and Virial expansions and the Penrose tree-graph identity}, Preprint  arxiv.org/abs/1508.07379v1.


\bibitem{Rota64}  G. Rota (1964): {\it  On the foundations of combinatorial theory. I. Theory of M\"obius functions}. Z. Wahrscheinlichkeitstheorie und Verw. Gebiete, {\bf 2},
340--368.

\bibitem{Ru} Ruelle, D.: Statistical Mechanics, Rigorous results,
W.A. Benjamin, Inc., New York (1969).

\bibitem{Ru63} D. Ruelle (1963): {\it Cluster Property of the Correlation Functions
of Classical Gases}, Rev. Mod. Phys., {\bf 36}, 580--584.

\bibitem{Ru70} D. Ruelle: {\it Superstable Interactions in Classical Statistical Mechanics}, Commun. Math.
Phys., {\bf 18} (1970), 127-159.

\bibitem{S2}  Scott, A.; Sokal, A.:
{\it The repulsive lattice gas, the independent-set polynomial, and the Lov\'asz local lemma}.
 J. Stat. Phys.  {\bf 118}  (2005),  no. 5-6, 1151--1261.

\bibitem{S1} Sokal, A.: {\it Bounds on the complex zeros of (di)chromatic polynomials and Potts-model
partition functions}. Combin. Probab. Comput. {\bf 10} (2001) 41-77

\bibitem{Sh} J. B. Shearer, {\it On a problem of Spencer}, Combinatorica {\bf 5} (1985), 241-245.

\bibitem{S} Simon B.: The Statistical Mechanics of Lattice Gases,
Volume 1, Princeton University Press (1993).

\bibitem{TJ} S. Torquato; Y. Jiao: {\it  Effect of dimensionality on the percolation thresholds of various d-dimensional lattices}
Phys.  Rev.  E {\bf 87}, 032149 (2013).

\bibitem{Wi}  S. G. Williamson (1985): {\it Combinatorics for computer science}. Computers and Math Series. Computer Science Press, Rockville, MD, 1985.

\bibitem{Y} S. A. Yuhjtman (2015): {\it A sensible estimate for the stability constant of the Lennard-Jones potential},
J. Stat. Phys. {\bf 160}, no. 6, 1684--1695.




















\end{thebibliography}
\end{document}